\documentclass[11pt]{book}

\usepackage{amsmath}
\usepackage{amsthm}
\usepackage{endnotes}
\usepackage{amssymb}
\usepackage{url}
\usepackage[colorlinks=true,linkcolor=blue,citecolor=blue]{hyperref}

\usepackage{algorithm}
\usepackage{xspace}
\usepackage{color}
\usepackage{mathtools}
\usepackage{enumerate}
\usepackage{algorithmicx}
\usepackage{algpseudocode}
\usepackage{natbib}
\usepackage{fancyhdr}
\usepackage{cleveref}

\newtheorem{theorem}{Theorem}[chapter]
\newtheorem{definition}[theorem]{Definition}
\newtheorem{example}[theorem]{Example}

\newtheorem{corollary}[theorem]{Corollary}
\newtheorem{lemma}[theorem]{Lemma}

\newtheorem{fact}[theorem]{Fact}
\newtheorem{assumption}[theorem]{Assumption}

\usepackage{booktabs}

\usepackage{listings}

\lstset{
    language=Python,             
    basicstyle=\ttfamily\small,  
    keywordstyle=\color{blue},   
    commentstyle=\color{gray},   
    stringstyle=\color{red},     
    numbers=left,                
    numberstyle=\tiny\color{gray}, 
    stepnumber=1,                
    breaklines=true,             
    frame=single                 
}

\usepackage[framemethod=tikz]{mdframed}
\usetikzlibrary{shadows}
\newmdenv[shadow=true,shadowcolor=black,font=\sffamily,rightmargin=8pt]{shadedbox}
\newmdenv[shadow=true,shadowcolor=blue,font=\sffamily,rightmargin=8pt]{shadedbox_definition_concepts}

\usepackage{fancyhdr}
\usepackage{mathtools}
\usepackage{bm}
\usepackage{braket}
\usepackage{lineno}

\usepackage[most]{tcolorbox}
\usepackage{mleftright}
\mleftright

\usepackage{multirow}
\usepackage{makecell}
\usepackage{longtable}

\newcommand{\ie}{\emph{i.e.}}

\def\Gsoft{G^{\mathrm{soft}}}

\def\Ures{U_{\mathrm{res}}}
\def\Uln{U_{\mathrm{LN}}}
\def\relu{\mathrm{ReLU}}
\def\Sinput{(\alpha_s,a_s)}
\def\Qinput{(\alpha_q,a_q)}
\def\Kinput{(\alpha_k,a_k)}
\def\Vinput{(\alpha_v,a_v)}

\def\Winput{(\alpha_w,a_w)}
\def\QKinput{(\alpha_{0},a_{0})}

\DeclareMathOperator*{\argmax}{\arg\max}
\DeclareMathOperator{\diag}{diag}
\DeclareMathOperator{\ERM}{ERM}
\DeclareMathOperator{\Gene}{Gene}
\DeclareMathOperator{\Var}{Var}
\DeclareMathOperator{\Circ}{circ}
\DeclareMathOperator{\Tanh}{tanh}

\def\L{{\mathcal L}}

\def\cO{{\mathcal{O}}}

\def\norm#1{\mathopen\| #1 \mathclose\|} 
\def\Gsoft{G^{\mathrm{soft}}}

\def\Ures{U_{\mathrm{res}}}
\def\Uln{U_{\mathrm{LN}}}
\def\relu{\mathrm{ReLU}}

\newcommand{\XGate}{\mathop{\text{X}}}
\newcommand{\YGate}{\mathop{\text{Y}}}
\newcommand{\ZGate}{\mathop{\text{Z}}}
\newcommand{\CZ}{\mathop{\text{CZ}}}
\newcommand{\SWAP}{\mathop{\text{SWAP}}}
\newcommand{\Tr}{\mathop{\text{Tr}}}
\newcommand{\Ber}{\mathop{\text{Ber}}}
\newcommand{\Pro}{\mathop{\text{Pr}}}
\newcommand{\RX}{\mathop{\text{RX}}}
\newcommand{\RY}{\mathop{\text{RY}}}
\newcommand{\RZ}{\mathop{\text{RZ}}}
\newcommand{\CNOT}{\mathop{\text{CNOT}}}
\newcommand{\Hada}{\mathop{\text{H}}}
\def\>{\rangle}
\def\<{\langle}

\DeclareMathOperator{\PAC}{\mathsf{PAC}}

\DeclareMathOperator{\BQP}{\mathsf{BQP}}
\DeclareMathOperator{\PPP}{\mathsf{P}}

\let\footnote=\endnote

\title{\textbf{Quantum Machine Learning}\\ A Hands-on Tutorial for Machine Learning Practitioners and Researchers  } 
\date{}
\author{
    Yuxuan Du$^{1,*,\dagger}$ \and 
    Xinbiao Wang$^{1,*}$ \and 
    Naixu Guo$^{2,*}$\and
    Zhan Yu$^{2,*}$\and
    Yang Qian$^{1,*}$\and
    Kaining Zhang$^{1,*}$\and
    Min-Hsiu Hsieh$^{3,\upharpoonleft}$\and
    Patrick Rebentrost$^{2,\perp}$\and
    Dacheng Tao$^{1,\ddagger}$
}

\date{
    $^{1}$ \small{College of Computing and Data Science, Nanyang Technological University, 639798, Singapore} \\
    $^{2}$ \small{Centre for Quantum Technologies, National University of Singapore, 117543, Singapore} \\
    $^{3}$ \small{Hon Hai Research Institute, Taipei, 114, Taiwan} \\
    \bigskip
    $^{\dagger}$ \small{duyuxuan123@gmail.com}\\
    $^{\upharpoonleft}$  \small{min-hsiu.hsieh@foxconn.com}\\
    $^{\perp}$ \small{cqtfpr@nus.edu.sg}\\
    $^{\ddagger}$ \small{dacheng.tao@ntu.edu.sg}\\
    \bigskip
    $^{*}$ \small{Equal contributions}\\
}

\begin{document}

\frontmatter  
\maketitle

\chapter*{Abstract}
This tutorial intends to introduce readers with a background in AI to quantum machine learning (QML)---a rapidly evolving field that seeks to leverage the power of quantum computers to reshape the landscape of machine learning. For self-consistency, this tutorial covers foundational principles, representative QML algorithms, their potential applications, and critical aspects such as trainability, generalization, and computational complexity. In addition, practical code demonstrations are provided in \url{https://qml-tutorial.github.io/} to illustrate real-world implementations and facilitate hands-on learning. Together, these elements offer readers a comprehensive overview of the latest advancements in QML. By bridging the gap between classical machine learning and quantum computing, this tutorial serves as a valuable resource for those looking to engage with QML and explore the forefront of AI in the quantum era.

\tableofcontents

\chapter*{Preface}
     \addcontentsline{toc}{chapter}{Preface}
     \markboth{\sffamily\slshape Preface}
       {\sffamily\slshape Preface}

Quantum computers, as next-generation computational devices, harness the quantum principles of superposition and entanglement to process information in ways fundamentally different from classical computers. These unique properties enable quantum computers to address many practical problems that are intractable for classical computers. Although quantum computing is still in its early stages, we have entered an era since 2019 where quantum supremacy has been experimentally demonstrated by several research groups and industrial organizations, underscoring the immense potential of quantum technologies to transform various aspects of everyday life.

Machine learning (ML) is widely regarded as one of the most promising and impactful applications of quantum computing. The ability of quantum computing to accelerate advancements in foundational models, such as generative pre-trained transformers (GPTs), and even pave the way toward artificial general intelligence (AGI), is particularly compelling. Recent progress in both theories and experiments has exhibited the power of quantum machine learning (QML), where the integration of quantum computing with machine learning may lead to novel approaches that outperform classical algorithms by offering faster runtimes, better performance, and reduced data requirements in areas such as computer vision, natural language processing, drug discovery, finance, and fundamental science.

As an interdisciplinary field, the development of QML requires close collaboration between leading scientists and engineers in both quantum computing and artificial intelligence (AI). At the same time, as QML advances alongside the continuous progress of quantum hardware, there is a growing need for expertise from the AI community to drive this emerging field forward. However, the distinct conceptual frameworks and terminologies of quantum and classical computing present significant barriers for researchers and practitioners with a classical machine learning background in understanding the mechanisms behind QML algorithms and the benefits they may offer. Reducing this barrier to entry remains a major challenge within the community. 

To overcome this challenge, we have written this tutorial to deliver a comprehensive introduction to the latest developments in QML, specifically designed for readers with expertise in machine learning. Whether you are an AI researcher, a machine learning practitioner, or a computer science student, this resource will equip you with a solid foundation in the principles and techniques of QML. By bridging the gap between classical ML and quantum computing, this tutorial could serve as a useful resource for those looking to engage with quantum machine learning and explore the forefront of AI in the quantum era.

\bigskip
\noindent The Authors\\
Feb, 2025

\mainmatter

\chapter{Introduction} \label{c-intro}

The advancement of computational power has always been a driving force behind every major industrial revolution. The invention of the modern computer, followed by the central processing unit (CPU), led to the ``digital revolution'', transforming industries through process automation and the rise of information technology. More recently, the development of graphical processing units (GPUs) has powered the era of artificial intelligence (AI) and big data, enabling breakthroughs in areas such as intelligent transportation systems, autonomous driving, scientific simulations, and complex data analysis. However, as we approach the physical and practical limits of Moore's law—the principle that the number of transistors on a chip doubles approximately every two years—traditional computing devices like CPUs and GPUs are nearing the end of their scaling potential. The ever-growing demand for computational power, driven by the exponential increase in data and the complexity of modern applications, necessitates new computing paradigms. Among the leading candidates to meet these challenges are \textbf{quantum computers} \citep{feynman2017quantum}, which leverage the unique principles of quantum mechanics such as superposition and entanglement to process information in ways that classical systems cannot, with the potential to revolutionize diverse aspects of daily life.

One of the most concrete and direct ways to understand the potential of quantum computers is through the framework of complexity theory \citep{watrous2008quantum}. Theoretical computer scientists have demonstrated that quantum computers can efficiently solve problems within the $\BQP$ (Bounded-Error Quantum Polynomial Time) complexity class, meaning these problems can be solved in polynomial time by a quantum computer. In contrast, classical computers are limited to efficiently solving problems within the $\PPP$ (Polynomial Time) complexity class. While it is widely believed, though not proven, that $\PPP \subseteq \BQP$, this suggests that quantum computers can provide exponential speedups for certain problems in $\BQP$ that are intractable for classical machines.

A prominent example of such a problem is large-number factorization, which forms the basis of RSA cryptography. Shor’s algorithm \citep{shor1999polynomial}, a quantum algorithm, can factor large numbers in polynomial time, while the most efficient known classical factoring algorithm requires super-polynomial time. For instance, breaking an RSA-2048 bit encryption key would take a classical computer approximately 300 trillion years, whereas an \textit{ideal} quantum computer could complete the task in around 10 seconds. However, constructing  `ideal' quantum computers remains a significant challenge. As will be discussed in later chapters, based on current fabrication techniques, this task could potentially be completed in approximately 8 hours using a noisy quantum computer with a sufficient number of \textbf{qubits}—the fundamental units of quantum computation \citep{gidney2021factor}.
 
The convergence of the computational power offered by quantum machines and the limitations faced by AI models has led to the rapid emergence of the field: \textbf{quantum machine learning} (QML) \citep{biamonte2017quantum}. In particular, the challenges in modern AI stem from the neural scaling law \citep{kaplan2020scaling}, which posits that ``bigger is often better.'' Since 2020, this principle has driven the development of increasingly colossal models, featuring more complex architectures and an ever-growing number of parameters. However, this progress comes at an immense cost. For instance, training a model like ChatGPT on a single GPU would take approximately 355 years, while the cloud computing costs for training such large models can reach tens of thousands of dollars.  

These staggering costs present a critical barrier to the future growth of AI. Quantum computing, celebrated for its extraordinary computational capabilities, holds the potential to overcome these limitations. It offers the possibility of advancing models like generative pre-trained transformers (GPTs) and accelerating progress toward artificial general intelligence (AGI). Quantum computing, and more specifically QML,  represents a paradigm shift, moving from the classical ``it from bit'' framework to the quantum ``it from qubit'' era, with the potential to reshape the landscape of AI and computational science.

\section{A First Glimpse of Quantum Machine Learning}

So, what exactly is quantum machine learning (QML)? In its \textit{simplest} terms, the focus of this tutorial on QML can be summarized as follows (see Chapter~\ref{chapter1:explored-task-QML} for the systematic overview).
\begin{tcolorbox}[colback=blue!5!white,colframe=blue!75!black,title=Quantum machine learning (informal)]
 QML explores learning algorithms that can be executed on \underline{quantum computers} to accomplish \underline{specified tasks} with \underline{potential advantages} over classical implementations.
\end{tcolorbox}

The three key elements in the above interpretation are: \textit{quantum processors}, \textit{specified tasks}, and \textit{advantages}. In what follows, let us elucidate the specific meaning of each of these terms, providing the necessary foundation for a deeper understanding of the mechanisms and potential of QML.

\subsection{Quantum computers} \label{chapt1-sec1-1-1}

The origins of quantum computing can be traced back to 1980 when Paul Benioff introduced the \textit{quantum Turing machine} \citep{benioff1982quantum}, a quantum analog of the classical Turing machine designed to describe computational models through the lens of quantum theory. Since then, several models of quantum computation have emerged, including \textit{circuit-based quantum computation} \citep{nielsen2010quantum}, \textit{one-way quantum computation} \citep{raussendorf2001one}, \textit{adiabatic quantum computation} \citep{albash2018adiabatic}, and \textit{topological quantum computation} \citep{kitaev2003fault}. All of them have been shown to be computationally equivalent to the quantum Turing machine, meaning that a perfect implementation of any one of these models can simulate the others with no more than polynomial overhead. Given the prominence of \textbf{circuit-based quantum computers} in both the research and industrial communities and their rapid advancement, this tutorial will focus primarily on this model of quantum computing.

Quantum computing gained further momentum in the early 1980s when physicists faced an exponential increase in computational overhead while simulating quantum dynamics, particularly as the number of particles in a system grew. This ``curse of dimensionality'' prompted Yuri Manin and Richard Feynman to independently propose leveraging quantum phenomena to build quantum computers, arguing that such devices would be far more efficient for simulating quantum systems than classical computers. 

However, as a universal computing device, the potential of quantum computers extends well beyond quantum simulations. In the 1990s, \citet{shor1999polynomial} developed a groundbreaking quantum algorithm for large-number factorization, posing a serious threat to widely used encryption protocols such as RSA and Diffie–Hellman. In 1996, Grover's algorithm demonstrated a quadratic speedup for unstructured search problems \citep{grover1996fast}, a task with broad applications. Since then, the influence of quantum computing has expanded into a wide range of fields, with new quantum algorithms being developed to achieve runtime speedups in areas such as finance \citep{herman2023quantum}, drug design \citep{santagati2024drug}, optimization \citep{abbas2024challenges}, and, most relevant to this tutorial, machine learning.

\begin{figure}[t!]
\centering
\includegraphics[width=0.9\textwidth]{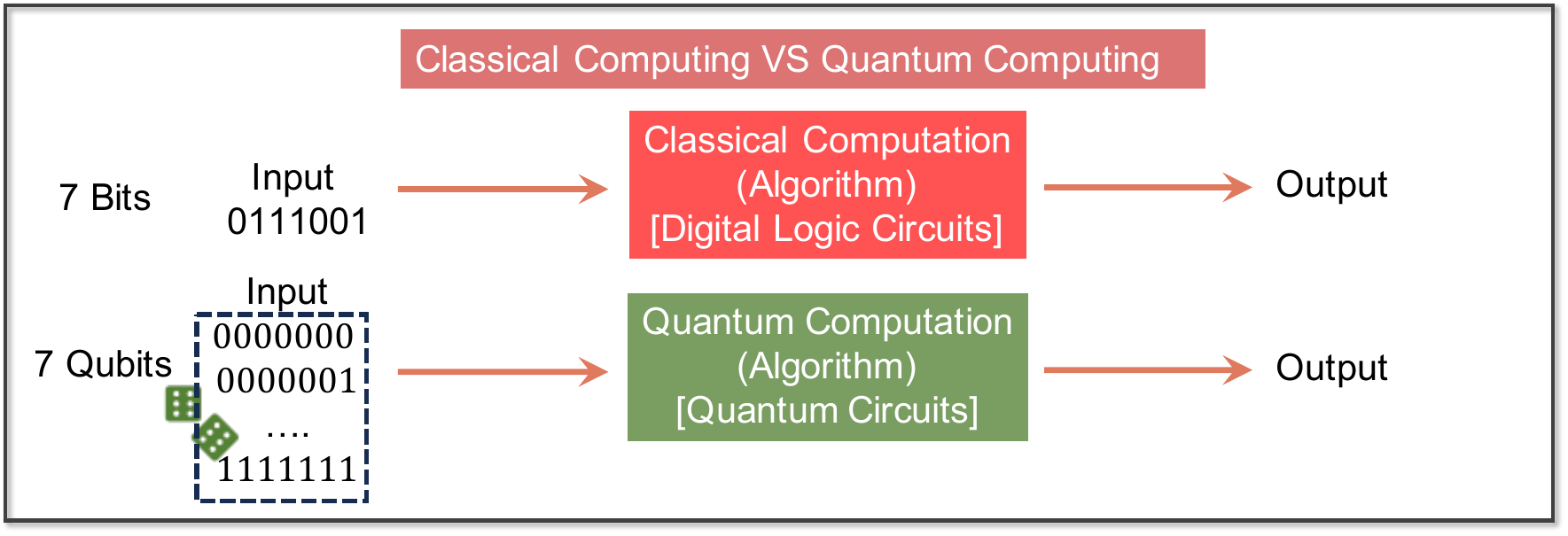}
  \caption{\textbf{The paradigm between classical and quantum computing.} The mechanisms between classical and quantum computing are very similar, where both of them involve input, computation, and output. In classical computing, the input refers to a bit-string, the computation part refers to the digital logic circuits, and the output also refers to a bit-string. In quantum computing, the input is a single- or multi-qubit state.  The computation involves quantum circuits. And the output of quantum computers requires quantum measurement, which aims to extract information from the quantum world to the classical world.  
}
  \label{fig:chap1:comparison-clc-quantum}
\end{figure}

An intuitive way to understand why quantum computers can outperform classical computers is by comparing their fundamental components.  As illustrated in Figure~\ref{fig:chap1:comparison-clc-quantum}, both types of computers consist of three fundamental components: input, the computational process, and the output. The implementation of these three components in classical and quantum computing is summarized in Table \ref{tab:my-table}.

\begin{table}[h!]
\centering 
\caption{\textbf{Comparison between classical and quantum computing.}}
\label{tab:my-table}
\footnotesize
\begin{tabular}{l|l|l}
\toprule 
\multicolumn{1}{c}{} & \multicolumn{1}{|c|}{Classical} & \multicolumn{1}{c}{Quantum} \\ \midrule
Input       & Binary bits & Quantum bits \\  
Computation & Digital logical circuits                 &  Quantum circuits \\  
Output      & Retrieve solution                         &  Quantum measurements \\ 
\end{tabular}
\end{table}

The advantages of quantum computers stem primarily from the key distinctions between classical bits and quantum bits (\textbf{qubits}), as well as between digital logic circuits and \textbf{quantum circuits}, as outlined below:
\begin{itemize}
  \item \textit{Bits versus Qubits}. A classical bit is a binary unit that takes on a value of either $0$ or $1$. In contrast, a quantum bit, or qubit, can exist in a superposition of both $0$ and $1$ simultaneously, represented by a \textit{two-dimensional vector} where the entries correspond to the \textit{probabilities} of the qubit being in each state. 
    
    Furthermore, while classical bits follow the Cartesian product rule, qubits adhere to the tensor product rule. This distinction implies that an $N$-qubit system is described by a $2^N$-dimensional vector, allowing quantum systems to encode information exponentially with $N$---far surpassing the capacity of classical bits. Table~\ref{tab:chapt1-math-single-multi-qubits} summarizes the mathematical expressions of classical and quantum bits. 
  \begin{table}[h!]
  \centering 
\caption{\textbf{Mathematical representations of $N$-(quantum) bits in classical and quantum computers}. Here the symbols `$\dagger$' and $\mathbb{C}$ denote the transpose conjugation and complex space, respectively.}
\label{tab:chapt1-math-single-multi-qubits}
\footnotesize
\begin{tabular}{l|c|c}
\toprule 
\multicolumn{1}{c|}{} & Classical           & Quantum                                                                                                           \\ 
\midrule  
Single bit ($N=1$)               & $\bm{x}\in\{0,1\}$  & \begin{tabular}[c]{@{}c@{}}$[a_1, a_2]^{\dagger}\in \mathbb{C}^2$\\ s.t. $a_1^2 + a_2^2=1$\end{tabular}                                \\  \midrule 
Multiple bits ($N>1$)           & $\bm{x}\in \{0,1\}^N$ & \begin{tabular}[c]{@{}c@{}} s.t.$[a_1, a_2, ... ,a_{2^N}]^{\dagger}\in \mathbb{C}^{2^N}$\\ s.t. $a_1^2 + a_2^2+...+a_{2^N}^2=1$\end{tabular} \\ \hline
\end{tabular}
\end{table}

  \item \textit{Digital logic circuits versus quantum circuits}. Classical computers rely on digital logic circuits composed of logic gates that perform operations on bits in a deterministic manner, as illustrated in Figure~\ref{fig:chap1:comparison-clc-quantum}. In contrast, quantum circuits consist of \textbf{quantum gates}, which act on single or multiple qubits to modify their states—the probability amplitudes $a_1,..., a_{2^N}$, as shown in Table~\ref{tab:chapt1-math-single-multi-qubits}. Owing to the universality of quantum gates, for any given input qubit state, there always exists a specific quantum circuit capable of transforming the input state into one corresponding to the target solution—a particular probability distribution. For certain probability distributions, a quantum computer can use a polynomial number of quantum gates relative to the qubit count $N$ to generate the distribution, whereas classical computers require an exponential number of gates with $N$ to achieve the same result. This difference underpins the quantum advantage.
  \item The readout process in quantum computing differs fundamentally from that in classical computing, as it involves \textbf{quantum measurements}, which extract information from a quantum system and translate it into a form that can be interpreted by classical systems. For problems in quantum physics and chemistry, quantum measurements can reveal far more useful information than classical simulations of the same systems, enabling significant runtime speedups in obtaining the desired physical properties.
\end{itemize}

The formal definitions of quantum computing are presented in Chapter~\ref{Chapter2:basics-QC}. As we will see, the power of quantum computers is primarily determined by two factors: the number of qubits and the quantum gates, as well as their respective qualities. The term ``qualities'' refers to the fact that fabricating quantum computers is highly challenging, as both qubits and quantum gates are prone to errors. These qualities are measured using various physical metrics. One commonly used metric is \textit{quantum volume} $V_Q$ \citep{cross2019validating}, which quantifies a quantum computer's capabilities by accounting for both its error rates and overall performance. Mathematically, the quantum volume represents the maximum size of square quantum circuits that the computer can successfully implement to achieve the \textit{heavy output generation problem}, i.e.,
\begin{equation}
  \log_2(V_Q) = \arg\max_{m} \min(m, d(m)),
\end{equation}
where $m\leq N$ is a number of qubits selected from the given $N$-qubit quantum computer, and $d(m)$ is the number of qubits in the largest square circuits for which we can reliably sample heavy outputs with probability greater than $2/3$. The heavy output generation problem discussed here stems from proposals aimed at demonstrating quantum advantage. That is, if a quantum computer is of sufficiently high quality, we should expect to observe heavy outputs frequently across a range of random quantum circuit families. For illustration, Table~\ref{chapter1-tab:quantum_progress} summarizes the progress of quantum computers as of 2024.

\begin{table}[]
\caption{\textbf{Progress of quantum computers up to December 2024.}}
\label{chapter1-tab:quantum_progress}
\centering
\footnotesize
\begin{tabular}{l|c|c|c|c}
\toprule 
 
\multicolumn{1}{c|}{\textbf{Date}} & \textbf{$\log_2(V_Q)$} & \textbf{$N$} & \textbf{Manufacturer} & \textbf{System Name} \\ 
\midrule
Dec, 2024    & -  & 105 & Google     & Willow         \\  
Aug, 2024    & 21 & 56  & Quantinuum & H2-1      \\ 
Jul, 2024      & 9  & 156 & IBM        & Heron     \\  
Jun, 2023      & 19 & 20  & Quantinuum & H1-1      \\  
Sep, 2022 & 13 & 20  & Quantinuum & H1-1      \\  
Apr, 2022     & 12 & 12  & Quantinuum & H1-2      \\  
Jul, 2021      & 10 & 10  & Honeywell  & H1        \\  
Nov, 2020  & 7  & 10  & Honeywell  & H1        \\  
Aug, 2020    & 6  & 27  & IBM        & Falcon r4 \\  
\end{tabular}
\end{table}

\begingroup
\allowdisplaybreaks
\begin{tcolorbox}[enhanced, 
    breakable,colback=gray!5!white,colframe=gray!75!black,title=Remark]
 Note that quantum volume is not the unique metric for evaluating the performance of quantum computers. There are several other metrics that assess the power of quantum processors from different perspectives. For instance, \textit{Circuit Layer Operations Per Second} (CLOPS) \citep{wack2021quality} measures the computing speed of quantum computers, reflecting the feasibility of running practical calculations that involve a large number of quantum circuits. Additionally, \textit{effective quantum volume} \citep{kechedzhi2024effective} provides a more nuanced comparison between noisy quantum processors and classical computers, considering factors such as error rates and noise levels. These metrics, among others, offer a more comprehensive understanding of the strengths and limitations of quantum computers across various applications.  
\end{tcolorbox}
\endgroup

\subsection{Different measures of quantum advantages}

\label{chapt1:subsec:diff_quantum_adv}

What do we mean when we refer to quantum advantage? Broadly, quantum advantage is demonstrated when quantum computers can solve a problem \textit{more efficiently} than classical computers. However, the notion of ``efficiency'' in this context is not uniquely defined.

The most common measure of efficiency is \textit{runtime complexity}. By harnessing quantum effects, certain computations can be accelerated significantly—sometimes even exponentially—enabling tasks that are otherwise infeasible for classical computers. A prominent example is Shor’s algorithm, which achieves an exponential speedup in large-number factorization relative to the best classical algorithms. In terms of runtime complexity, the quantum advantage is achieved when the upper bound of a quantum algorithm’s runtime for a given task is \textit{lower than} the theoretical lower bound of all possible classical algorithms for the same task.

In quantum learning theory \citep{arunachalam2017guest}, efficiency can also be measured by \textit{sample complexity}, particularly within the Probably Approximately Correct ($\PAC$) learning framework, which is central to this tutorial. In this context, sample complexity is defined as the number of interactions (e.g., quires of target quantum systems or measurements) required for a learner to achieve a desired prediction accuracy below a specified threshold. Here, the quantum advantage is realized when the upper bound on the sample complexity of a quantum learning algorithm for a given task is  \textit{lower than} the lower bound of all classical learning algorithms. While low sample complexity is a necessary condition for efficient learning, it does not guarantee practical efficiency alone; for example, identifying useful training examples within a small sample size may still require substantial computational time.

\begin{tcolorbox}[colback=gray!5!white,colframe=gray!75!black,title=Remark]
(Difference of sample complexity in classical and quantum ML). In classical ML, sample complexity typically refers to the number of training examples required for a model to generalize effectively, such as the number of labeled images needed to train an image classifier. In quantum ML, however, sample complexity can take on varied meanings depending on the context, as shown below.  
\begin{itemize}
  \item Quantum state tomography (see Chapter~\ref{chapter2:Sec-2.3.2-readout}). Here the sample complexity refers to the number of measurements required to accurately reconstruct the quantum state of a system. 
  \item Evaluation of the generalization ability of quantum neural networks (see Chapter~\ref{chapt5:sec:qnn_theory}). Here the sample complexity refers to the number of input-output pairs needed to train the network to approximate a target function, similar to classical ML.
  \item Quantum system learning. Here the sample complexity often refers to the number of queries to interact with the target quantum system, such as the number of times a system must be probed to learn its Hamiltonian dynamics. 
\end{itemize}
\end{tcolorbox}

In addition to sample complexity, another commonly used measure in quantum learning theory is \textit{quantum query complexity}, particularly within the frameworks of quantum statistical learning and quantum exact learning. As these frameworks are not the primary focus of this tutorial, interested readers are referred to \citep{anshu2024survey} for a more detailed discussion.

Quantum advantage can be pursued through \textit{two main approaches}. The first involves identifying problems with quantum circuits that demonstrate provable advantages over classical counterparts in the aforementioned measures \citep{harrow2017quantum}. Such findings deepen our understanding of quantum computing's potential and expand its range of applications. However, these quantum circuits often require substantial quantum resources, which are currently beyond the reach of near-term quantum computers. Additionally, for many tasks, analytically determining the upper bound of classical algorithm complexities is challenging. 

These challenges have motivated a second approach: demonstrating that current quantum devices can perform accurate computations on a scale that exceeds brute-force classical simulations—a milestone known as ``quantum utility.'' Quantum utility refers to quantum computations that yield reliable, accurate solutions to problems beyond the reach of brute-force classical methods and otherwise accessible only through classical approximation techniques \citep{kim2023evidence}. This approach represents a step toward practical computational advantage with noise-limited quantum circuits. Reaching the era of quantum utility signifies that quantum computers have attained a level of scale and reliability enabling researchers to use them as effective tools for scientific exploration, potentially leading to groundbreaking new insights.

\subsection{Explored tasks in quantum machine learning}\label{chapter1:explored-task-QML}

\begin{figure}
\centering
\includegraphics[width=0.9\textwidth]{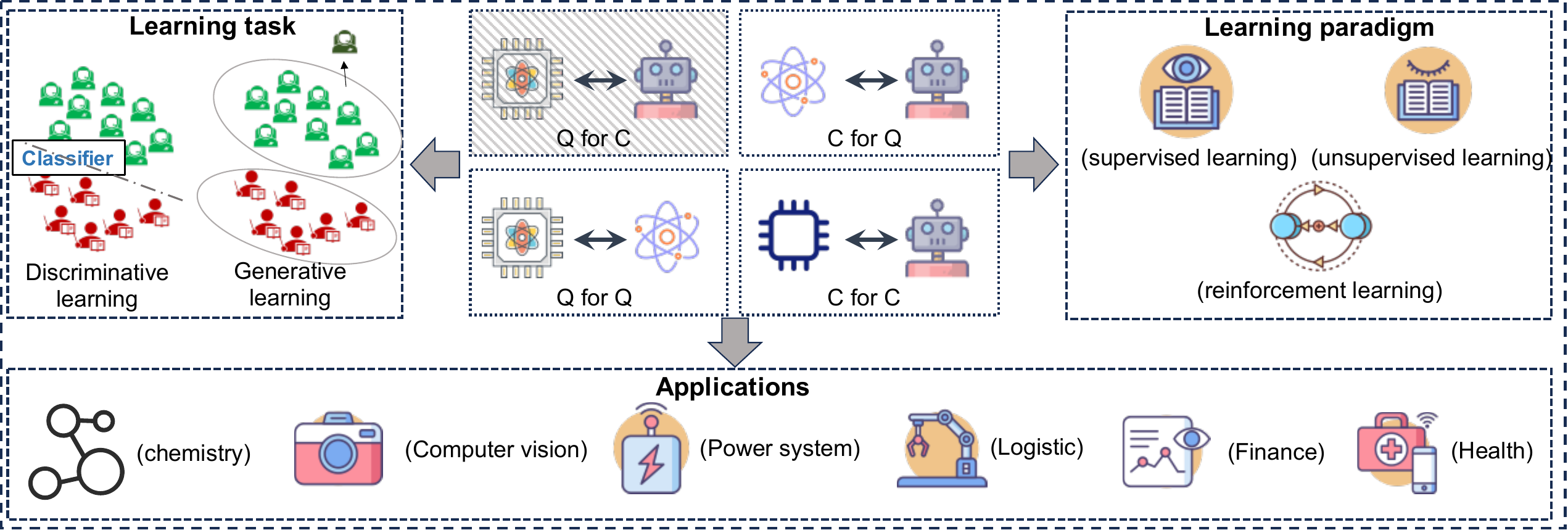}
  \caption{\textbf{Different research directions in QML}. QML can be categorized into four types based on the interplay of quantum (\textbf{\textsf{Q}}) and classical (\textbf{\textsf{C}}) systems: \textbf{\textsf{Q}} for \textbf{\textsf{Q}} (quantum algorithms for quantum data), \textbf{\textsf{Q}} for \textbf{\textsf{C}} (quantum algorithms for classical data), \textbf{\textsf{C}} for \textbf{\textsf{Q}} (classical algorithms for quantum data), and \textbf{\textsf{C}} for \textbf{\textsf{C}} (classical algorithms for classical data). This tutorial primarily focuses on the \textbf{\textsf{Q}} for \textbf{\textsf{C}} category. Beyond the role of the learner and system, QML can also be classified by learning tasks (discriminative and generative learning), learning paradigms (supervised, unsupervised, and reinforcement learning), and diverse applications such as chemistry, computer vision, power systems, logistics, finance, and healthcare.}
  \label{fig:chapt1-tasks-QML}
\end{figure}

What are the main areas of focus in QML? QML research is extensive and can be broadly divided into four primary sectors, each defined by the nature of the computing resources (whether the computing device is quantum (\textbf{\textsf{Q}}) or classical (\textbf{\textsf{C}})) and the type of data involved (whether generated by a quantum (\textbf{\textsf{Q}}) or classical (\textbf{\textsf{C}}) system). The explanations of these four sectors are as follows.

\begin{itemize}
  \item[] \textbf{\textsf{CC} Sector}. The \textbf{\textsf{CC}} sector refers to classical data processed on classical systems, representing traditional machine learning. Here, classical ML algorithms run on classical processors (e.g., CPUs and GPUs) and are applied to classical datasets. A typical example is using neural networks to classify images of cats and dogs.
  \item[] \textbf{\textsf{CQ} Sector}: The \textbf{\textsf{CQ}} sector involves using classical ML algorithms on classical processors to analyze quantum data collected from quantum systems. Typical examples include applying classical neural networks to classify quantum states, estimating properties of quantum systems from measurement data, and employing classical regression models to predict outcomes of quantum experiments.
  \item[] \textbf{\textsf{QC} Sector}. The \textbf{\textsf{QC}} sector involves developing QML algorithms that run on quantum processors (QPUs) to process classical data. In this context, quantum computing resources are leveraged to enhance or accelerate the analysis of classical datasets. Typical examples include applying QML algorithms, such as quantum neural networks and quantum kernels, to improve pattern recognition in image analysis.
  \item[] \textbf{\textsf{QQ} Sector}. The \textbf{\textsf{QQ}} sector involves developing QML algorithms executed on QPUs to process quantum data. In this context, quantum computing resources are leveraged to reduce the computational cost of analyzing and understanding complex quantum systems. Typical examples include using quantum neural networks for quantum state classification and applying quantum-enhanced algorithms to simulate quantum many-body systems.
\end{itemize}

The classification above is not exhaustive. As illustrated in Figure~\ref{fig:chapt1-tasks-QML}, each sector can be further subdivided based on various learning paradigms, such as discriminative vs. generative learning or supervised, unsupervised, and semi-supervised learning. Additionally, each sector can be further categorized according to different application domains, such as finance, healthcare, and logistics.

\begin{tcolorbox}[colback=gray!5!white,colframe=gray!75!black,title=Remark]
The primary focus of this tutorial is on the \textbf{\textsf{QC}} and \textbf{\textsf{QQ}} sectors. For more details on \textbf{\textsf{CQ}}, interested readers can refer to \citep{schuld2015introduction,dunjko2018machine,carleo2019machine}.
 \end{tcolorbox}

\section{Progress of Quantum Machine Learning} 

Huge efforts have been made to the \textbf{\textsf{QC}} and \textbf{\textsf{QQ}} sectors to determine which tasks and conditions allow QML to offer computational advantages over classical machine learning. In this regard, to provide a clearer understanding of QML's progress, it is essential to first review recent advancements in quantum computers, the foundational substrate for quantum algorithms.

\subsection{Progress of quantum computers}

The novelty and inherent challenges of utilizing quantum physics for computation have driven the development of various computational architectures, giving rise to the formalized concept of circuit-based quantum computers, as discussed in Chapter~\ref{chapt1-sec1-1-1}. In pursuit of this goal, numerous companies and organizations are striving to establish their architecture as the leading approach and to be the first to demonstrate practical utility or quantum advantage on a large-scale quantum device. 

Common architectures currently include superconducting qubits (employed by IBM and Google), ion-trap systems (pioneered by IonQ), and Rydberg atom systems (developed by QuEra), each offering distinct advantages \citep{cheng2023noisy}. Specifically, superconducting qubits excel in scalability and fast gate operations \citep{huang2020superconducting}, while ion-trap systems are known for their high coherence times, precise control over individual qubits, and full connectivity of all qubits \citep{bruzewicz2019trapped}.  Moreover, Rydberg atom systems enable flexible qubit connectivity through highly controllable interactions \citep{morgado2021quantum}. Besides these architectures, integrated photonic quantum computers are emerging as promising alternatives for robust and scalable quantum computation.

\begin{figure}
\centering
\includegraphics[width=0.98\textwidth]{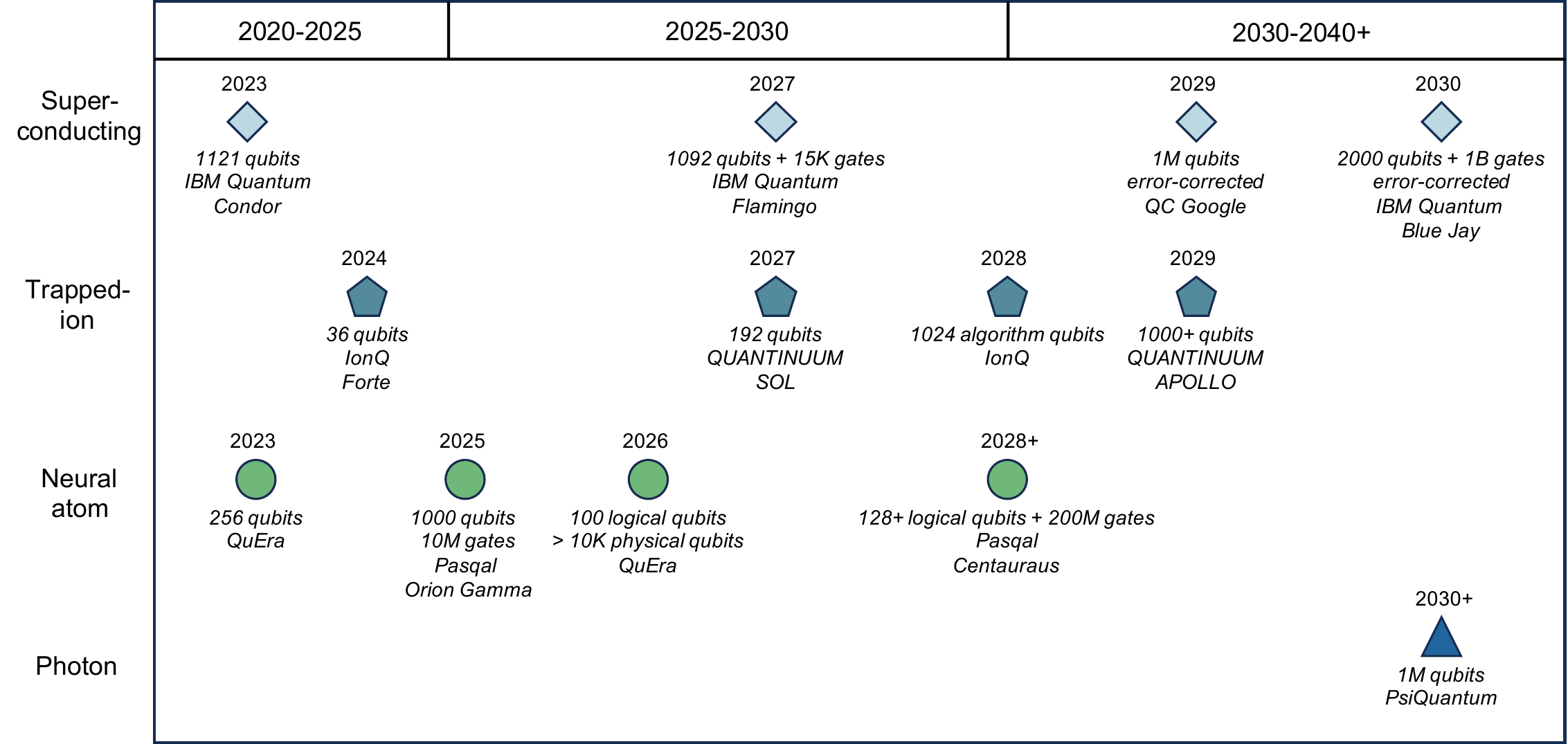}
  \caption{\textbf{Common quantum architectures and roadmaps from different quantum companies.}}
\end{figure}

Despite recent advances, today’s quantum computers remain highly sensitive to environmental noise and prone to quantum decoherence, lacking the stability needed for fault-tolerant operation. This results in qubits, quantum gates, and quantum measurements that are inherently imperfect, introducing errors that can lead to incorrect outputs. To capture this stage in quantum computing, John Preskill coined the term ``noisy intermediate-scale quantum'' (\textsf{NISQ}) era \citep{preskill2018quantum}, which describes the current generation of quantum processors. These processors feature up to thousands of qubits, but their capabilities are restricted with error-prone gates and limited coherence times.

In the \textsf{NISQ} era, notable achievements have been made alongside new challenges. Industrial and academic teams, such as those at Google and USTC, have demonstrated quantum advantages on specific sampling tasks, where the noisy quantum computers they fabricated outperform classical computers in computational efficiency \citep{arute2019quantum,wu2021strong}. However, most quantum algorithms that theoretically offer substantial runtime speedups depend on fault-tolerant, error-free quantum systems—capabilities that remain beyond the reach of current technology.

At this pivotal stage, the path forward in quantum computing calls for progress on both hardware and algorithmic fronts. 

On the hardware side, it is essential to continuously improve qubit count, coherence times, gate fidelities, and the accuracy of quantum measurements across various quantum architectures. Once the number and quality of qubits surpass certain thresholds, \textit{quantum error correction} codes can be implemented \citep{nielsen2010quantum}, paving the way for fault-tolerant quantum computing (\textsf{FTQC}). Broadly, quantum error correction uses redundancy and entanglement to detect and correct errors without directly measuring the quantum state, thus preserving coherence. Advancements in quantum processors will enable a progression from the \textsf{NISQ} era to the early \textsf{FTQC} era, ultimately reaching the fully \textsf{FTQC} era.

On the algorithmic side, two key questions must be addressed: 
\begin{itemize}
  \item [(Q1)] How can \textsf{NISQ} devices be utilized to perform meaningful computations with practical utility?
  \item [(Q2)] What types of quantum algorithms can be executed on early fault-tolerant and fully fault-tolerant quantum computers to realize the potential of quantum computing in real-world applications? 
\end{itemize}
Progress on either question could have broad implications. A positive answer to (Q1) would suggest that \textsf{NISQ} quantum computers have immediate practical applicability, while advancements in (Q2) would expand the scope and impact of quantum computing as more robust, fault-tolerant systems become feasible. In the following two sections, we examine recent progress in QML concerning these two questions.

\subsection{Progress of quantum machine learning under FTQC}\label{chapt1-sec-progress-FTQC}

A key milestone in \textsf{FTQC}-based QML algorithms is the quantum linear equations solver introduced by \citet{harrow2009quantum}. Many machine learning models rely on solving linear equations, a computationally intensive task that often dominates the overall runtime due to the polynomial scaling of complexity with matrix size. The HHL algorithm provides a breakthrough by reducing runtime complexity to poly-logarithmic scaling with matrix size, given that the matrix is well-conditioned and sparse. This advancement is highly significant for AI, where datasets frequently reach sizes in the millions or even billions.

The exponential runtime speedup achieved by the HHL algorithm has garnered significant attention from the research community, highlighting the potential of quantum computing in AI. Following this milestone, a body of work has emerged that employs the quantum matrix inversion techniques developed in HHL (or its variants) as subroutines in the design of various \textsf{FTQC}-based QML algorithms, offering runtime speedups over their classical counterparts \citep{montanaro2016quantum,dalzell2023quantum}. Notable examples include quantum principal component analysis \citep{lloyd2014quantum} and quantum support vector machines \citep{rebentrost2014quantum}.

Another milestone in \textsf{FTQC}-based QML algorithms is the quantum singular value transformation (QSVT), proposed by \citet{gilyenquantum2019}. QSVT enables polynomial transformations of the singular values of a linear operator embedded within a unitary matrix, offering a unifying framework for various quantum algorithms. It has connected and enhanced a broad range of quantum techniques, including amplitude amplification, quantum linear system solvers, and quantum simulation methods. Compared to the HHL algorithm for solving linear equations, QSVT provides improved scaling factors, making it a more efficient tool for addressing these problems in the context of QML.

In addition to advancements in linear equation solving, another promising line of research in \textsf{FTQC}-based QML focuses on leveraging quantum computing to enhance deep neural networks (DNNs) rather than traditional machine learning models. This research track has two main areas of focus. The first is the acceleration of DNN optimization, with notable examples including the development of efficient quantum algorithms for dissipative differential equations to expedite (stochastic) gradient descent, as well as Quantum Langevin dynamics for optimization \citep{chen2023quantum,liu2024towards}. The second area centers on advancing Transformers using quantum computing. In Chapter~\ref{Chapter5:Transformer}, we will discuss in detail how quantum computing can be employed to accelerate Transformers during the inference stage.

\begin{tcolorbox}[enhanced, 
    breakable,colback=gray!5!white,colframe=gray!75!black,title=Remark]
However, there are several critical caveats of the HHL-based QML algorithms. First, the assumption of efficiently preparing the quantum states corresponding to classical data runtime is very strong and may be impractical in the dense setting. Second, the obtained result $\bm{x}$ is still in the quantum form $\ket{\bm{x}}$. Note that extracting one entry of $\ket{\bm{x}}$ into the classical form requires $O(\sqrt{N})$ runtime, which collapses the claimed exponential speedups. The above two issues amount to the read-in and read-out bottlenecks in QML \citep{aaronson2015read}. The last caveat is that the employed strong quantum input model such as quantum random access memory (QRAM) \citep{giovannetti2008quantum} leads to an inconclusive comparison. Through exploiting a classical analog of QRAM as the input model, there exist efficient classical algorithms to solve recommendation systems in poly-logarithmic time in the size of input data.  
\end{tcolorbox}

\subsection{Progress of quantum machine learning under NISQ}\label{chapt1-sec-progress-NISQ}

The work conducted by \citet{havlivcek2019supervised} marked a pivotal moment for QML in the \textsf{NISQ} era. This study demonstrated the implementation of quantum kernel methods and quantum neural networks (QNNs) on a 5-qubit superconducting quantum computer, highlighting potential quantum advantages from the perspective of complexity theory. Unlike the aforementioned \textsf{FTQC} algorithms, quantum kernel methods and QNNs are flexible and can be effectively adapted to the limited quantum resources available in the \textsf{NISQ} era. These demonstrations, along with advancements in quantum hardware, sparked significant interest in exploring QML applications using \textsf{NISQ} quantum devices. We will delve into quantum kernel methods and QNNs in Chapter~\ref{Chapter3:kernel} and Chapter~\ref{cha5:qnn}, respectively.

\begin{tcolorbox}[enhanced, breakable,colback=blue!5!white,colframe=blue!75!black,title=Quantum neural networks (informal)]
A quantum neural network (QNN) is a hybrid model that leverages quantum computers to implement trainable models similar to classical neural networks, while using classical optimizers to complete the training process.
\end{tcolorbox}

As shown in Figure~\ref{fig:chapt1-DNN-vs-QNN}, the mechanisms of QNNs and deep neural networks (DNNs) are almost the same, whereas the only difference is the way of implementing the trainable model. This difference gives the potential of quantum learning models to solve complex problems beyond the reach of classical neural networks, opening new frontiers in many fields. Roughly speaking, research in QNNs and quantum kernel methods has primarily focused on three key areas: (I) \textit{quantum learning models and applications}, (II) \textit{the adaptation of advanced AI topics to QML}, and (III) \textit{theoretical foundations of quantum learning models}. A brief overview of each category is provided below.

\begin{figure}
\centering
\includegraphics[width=0.99\textwidth]{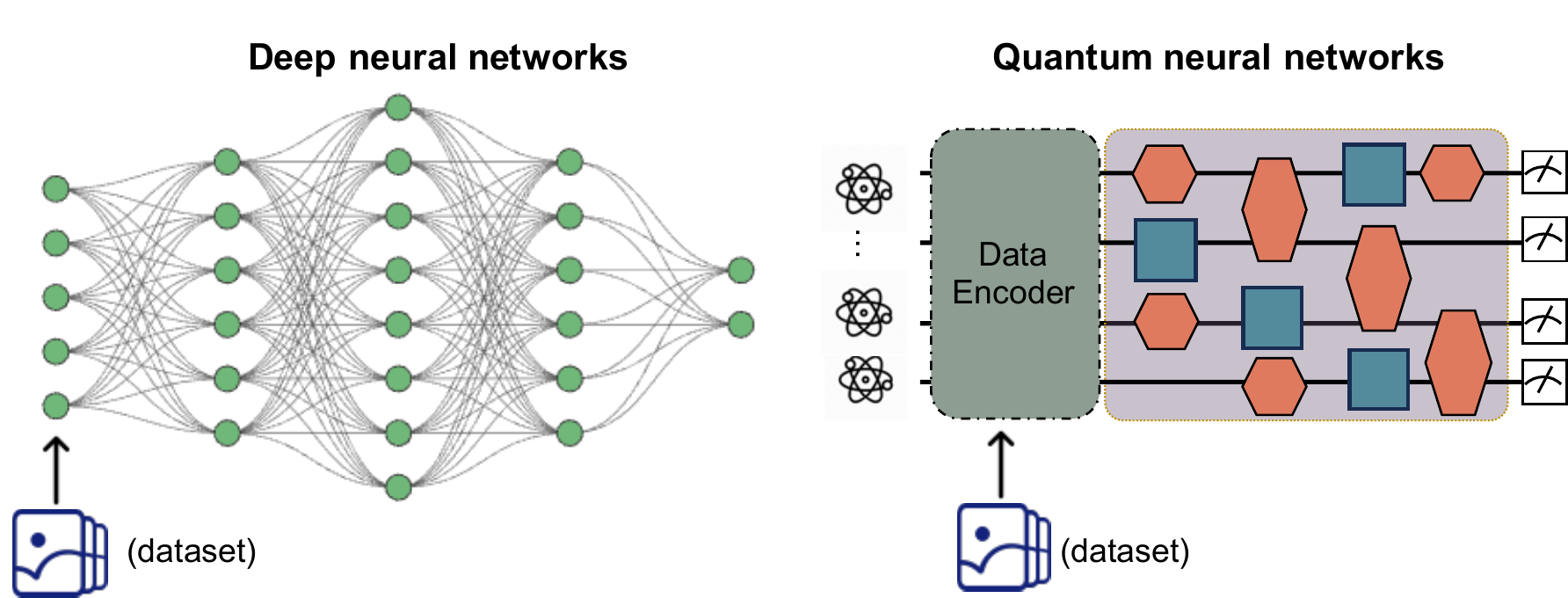}
\caption{\textbf{Mechanisms of DNNs and QNNs}. Both DNNs and QNNs follow an iterative approach. At each iteration, they take input data, process it through multiple layers, and produce an output prediction. The key difference between DNNs and QNNs is the way of implementing their learning models. }
  \label{fig:chapt1-DNN-vs-QNN}
\end{figure}

\smallskip
(I) \textsc{Quantum learning models and applications}. This category focuses on implementing various DNNs on \textsf{NISQ} quantum computers to tackle a wide range of tasks.
 
From a model architecture perspective, quantum analogs of popular classical machine learning models have been developed, including quantum versions of multilayer perceptrons (MLPs), autoencoders, convolutional neural networks (CNNs), recurrent neural networks (RNNs), extreme learning machines, generative adversarial networks (GANs), diffusion models, and Transformers. Some of these QNN structures have even been validated on real quantum platforms, demonstrating the feasibility of applying quantum algorithms to tasks traditionally dominated by classical deep learning \citep{cerezo2021variational,li2022recent,tian2023recent}.

From an application perspective, QML models implemented on \textsf{NISQ} devices have been explored across diverse fields, including fundamental science, image classification, image generation, financial time series prediction, combinatorial optimization, healthcare, logistics, and recommendation systems. These applications demonstrate the broad potential of QML in the \textsf{NISQ} era, though achieving full quantum advantage in these areas remains an ongoing challenge \citep{bharti2022noisy,cerezo2022challenges}.

\smallskip
(II) \textsc{Adaptation of advanced AI topics to QML}. Beyond model design, advanced topics from AI have been extended to QML, aiming to enhance the performance and robustness of different QML models. Examples include quantum architecture search \citep{du2022quantum} (the quantum equivalent of neural architecture search), advanced optimization techniques \citep{stokes2020quantum}, and pruning methods to reduce the complexity of quantum models \citep{sim2021adaptive,wang2023symmetric}. Other areas of active research include adversarial learning \citep{Lu2020Quantum}, continual learning \citep{Jiang_2022Quantum}, differential privacy \citep{du2021quantum,Watkins2023Quantum}, distributed learning \citep{du2022distributed}, federated learning \citep{ren2023towards}, and interpretability within the context of QML \citep{pira2024interpretability}. These techniques have the potential to significantly improve the efficiency and effectiveness of QML models, addressing some of the current limitations of \textsf{NISQ} devices.

\smallskip
(III) \textsc{Theoretical foundations}.  Quantum learning theory \citep{banchi2023statistical} has garnered increasing attention, aiming to compare the capabilities of different QML models and to identify the theoretical advantages of QML over classical machine learning models. As shown in Figure~\ref{fig:chapter1-QLT-overview}, the learnability of QML models can be evaluated across three key dimensions: expressivity, trainability, and generalization capabilities. Below, we provide a brief overview of each measure.

\begin{figure}
\centering
\includegraphics[width=0.8\textwidth]{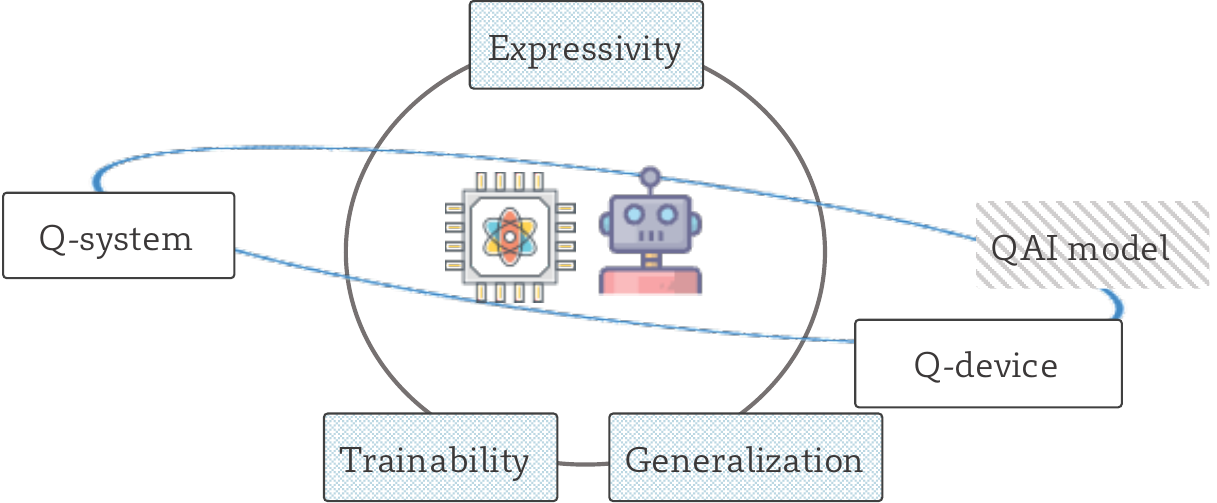}
  \caption{\textbf{The learnability of quantum machine learning models}.}
  \label{fig:chapter1-QLT-overview}
\end{figure}

\begin{itemize}
\item Trainability. This area examines how the design of QNNs influences their convergence properties, including the impact of system noise and measurement errors on the ability to converge to local or global minima.
  \item Expressivity. Researchers investigate how the number of parameters and the structure of QNNs affect the size of the hypothesis space they can represent. A central question is whether QNNs and quantum kernels can efficiently represent functions or patterns that classical neural networks cannot, thereby offering potential quantum advantage.
  \item Generalization. This focuses on understanding how the gap between training and test error evolves with the size of the dataset, the structure of QNNs or quantum kernels, and the number of parameters. The goal is to determine whether QML models can generalize more effectively than classical models, particularly in the presence of noisy data or when training data is limited.
\end{itemize}

The combination of advancements in model design, application domains, and theoretical understanding is driving the progress of QML in the NISQ era. Although the field is still in its early stages, the progress achieved thus far provides promising insights into the potential of quantum computing to enhance conventional AI. As quantum hardware continues to evolve, further breakthroughs are expected, potentially unlocking new possibilities for practical QML applications.

\begin{tcolorbox}[enhanced, 
    breakable,colback=gray!5!white,colframe=gray!75!black,title=Remark]
It is important to note that QNNs and quantum kernel methods can also be considered \textsf{FTQC} algorithms when executed on fully fault-tolerant quantum computers. The reason these algorithms are discussed in the context of \textsf{NISQ} devices is their flexibility and robustness, making them well-suited to the limitations of current quantum hardware.
\end{tcolorbox}

\subsection{A brief review of quantum machine learning}
Unlike quantum hardware, where the number of qubits has rapidly scaled from zero to thousands, the development of QML algorithms—and quantum algorithms more broadly—has taken an inverse trajectory, transitioning from \textsf{FTQC} to \textsf{NISQ} devices. This shift reflects the move from idealized theoretical frameworks to practical implementations. The convergence of quantum hardware and QML algorithms, where the quantum resources required by these algorithms become attainable on real quantum computers, enables researchers to experimentally evaluate the power and limitations of various quantum algorithms.

Based on the minimum quantum resources required to complete learning tasks, we distinguish between \textsf{FTQC} algorithms, discussed in Chapter~\ref{chapt1-sec-progress-FTQC}, and \textsf{NISQ} algorithms, including QNNs and quantum kernel methods, in Chapter~\ref{chapt1-sec-progress-NISQ}. \textsf{FTQC}-based QML algorithms necessitate error-corrected quantum computers with tens of billions of qubits—an achievement that remains far from realization. In contrast, QNNs and quantum kernels are more flexible and can be executed on both \textsf{NISQ} and \textsf{FTQC} devices, depending on the available resources.

As quantum hardware continues to progress, the development of QML algorithms must evolve in tandem. A promising direction is to integrate \textsf{FTQC} algorithms with QNNs and quantum kernel methods, creating new QML algorithms that can be run on current quantum processors while offering enhanced quantum advantages across various tasks.

\section{Organization of This Tutorial}

To encourage and enable computer scientists to engage with the rapidly growing field of quantum AI, we provide this hands-on tutorial that revisits QML algorithms from a \textit{computer science perspective}. With this aim, the tutorial is designed to balance theory, practical implementations, and applications, making it suitable for both researchers and practitioners with a background in classical machine learning. The tutorial is divided into the following chapters:

Chapter \ref{Chapter2:basics-QC}: \textsc{Basics of Quantum Computing}. Before delving into QML, this chapter lays the groundwork by introducing the fundamental concepts of quantum computing. It covers the transition from classical bits to quantum bits, explains quantum circuit models, illustrates how quantum systems interface with classical systems through quantum read-in and read-out mechanisms, and presents some fundamental concepts of quantum linear algebra. By the end of this chapter, you will understand that a solid grasp of linear algebra is all you need to comprehend the basics of quantum computing.

 Chapters \ref{Chapter3:kernel}, \ref{cha5:qnn} and \ref{Chapter5:Transformer}: \textsc{Classical ML Models Extended to Quantum Frameworks}. Each of these chapters follows a consistent structure, starting with a review of the classical model and progressing to its quantum extension—Quantum kernel methods in Chapter~\ref{Chapter3:kernel},  Quantum neural networks in Chapter~\ref{cha5:qnn}, and Quantum Transformers in Chapter~\ref{Chapter5:Transformer}. This unified structure enables readers to clearly understand how classical machine learning models can be translated into quantum implementations and how quantum computers may offer computational advantages.

Appendix. The Appendix serves as a supplementary resource, providing a summary of notations and essential mathematical tools that are omitted from the main text for brevity. In particular, it includes basic introduction of  concentration inequalities, the Haar measure, and other foundational concepts relevant to the tutorial. 

To provide a clear and comprehensive learning experience, each chapter is composed of the following parts:
\begin{enumerate}
  \item Classical foundations and quantum model construction. Each chapter begins with a review of the classical version of the model, ensuring that readers are well-acquainted with the foundational concepts before exploring their quantum adaptations. After this review, we introduce quantum versions of the models, focusing on implementations based on \textsf{NISQ}, \textsf{FTQC}, or both.
  \item  Theoretical analysis. There is nothing more practical than a good theory. In this tutorial, each chapter provides a theoretical analysis of the learnability of QML models, focusing on key aspects such as expressivity, trainability, and generalization capabilities. 

To ensure a balance between depth and self-consistency, this tutorial provides proof for the most significant theoretical results, as highlighted by  \textbf{Theorems and Lemmas}. For results that are less central to the main content of this tutorial, we present them as \textbf{Facts} and include appropriate references, allowing readers to easily locate the complete proofs if desired.

  \item Code implementation. ``Talk is cheap, show me the code.'' To provide a practical, hands-on learning experience, each chapter includes code implementations using real-world datasets (to be specified). This section walks readers through the process of implementing quantum models on simulated or real quantum hardware. All numerical examples illustrated in this tutorial are available in \url{https://qml-tutorial.github.io/}, accompanied by Jupyter Notebooks. 
    
    Instead of building everything from scratch, we leverage the well-established PennyLane library for implementation \citep{bergholm2018pennylane}. This choice does not imply any specific preference. Other quantum computing libraries, such as Qiskit \citep{javadi2024quantum}, Cirq \citep{cirq_developers_2024_11398048}, and TensorFlow Quantum \citep{broughton2020tensorflow}, can also be used, offering similar capabilities and flexibility.

\item Frontier topics and future directions. Each chapter concludes with an exploration of cutting-edge topics and emerging challenges in the field. This part highlights open research problems, ongoing developments, and potential future directions for the quantum versions of each model, providing insights into where the field may be headed.
\end{enumerate}

\chapter{Basics of Quantum Computing}
\label{Chapter2:basics-QC}

In this chapter, we introduce the fundamental concepts of quantum computation, such as quantum states, quantum circuits, and quantum measurements, along with key topics in quantum machine learning, including quantum read-in, quantum read-out, and quantum linear algebra. These foundational elements are essential for understanding quantum machine learning algorithms and will be repeatedly referenced throughout the subsequent chapters. 

This chapter is organized as follows: Chapter~\ref{cha3:sec:qubit} introduces quantum bits and their mathematical representations; Chapter~\ref{cha3:sec:circuit} covers quantum circuits, including quantum gates, quantum channels, and quantum measurements; Chapter~\ref{cha3:sec:in-out} discusses how to encode classical data into quantum systems and extract classical information from quantum states; Chapter~\ref{chap2:preliminary-sec:linearAlgebra} delves into quantum linear algebra; Chapter~\ref{Chapter2:preliminary-code} provides practical coding exercises to reinforce these concepts; and finally, Chapter~\ref{chap-preliminary-sec:Remark} presents recent advancements in efficient quantum read-in and read-out techniques for further exploration.

\section{From Classical Bits to Quantum Bits}\label{cha3:sec:qubit}

In this section, we define quantum bits (qubits) and present the mathematical tools used to describe quantum states. We begin by discussing classical bits and then transition to their quantum counterparts. We recommend interested readers consult the textbook \citep{nielsen2010quantum}  for the detailed explanations. 
   
\subsection{Classical bits} 
In classical computing, a bit is the basic unit of information, which can exist in one of two distinct states: $0$ or $1$. Each bit holds a definite value at any given time. When multiple classical bits are used together, they can represent more complex information. For instance, a set of three bits can represent $2^3=8$ distinct states, ranging from $000$ to $111$.

\subsection{Quantum bits (Qubits)}\label{subsec:qubit}

Analogous to the role of `bit' in classical computation, the basic element in quantum computation is the quantum bit (\textit{qubit}). We start by introducing the representation of single-qubit states and then extend this to two-qubit and multi-qubit states.

\noindent \textbf{Single-qubit state}. A single-qubit state can be represented by a two-dimensional vector with unit length. Mathematically, a qubit state can be written as 
\begin{equation}
\bm{a} = 
\begin{bmatrix}
\bm{a}_1 \\
\bm{a}_2
\end{bmatrix}\in \mathbb{C}^2~,
\end{equation} 
where $|\bm{a}_1|^2+|\bm{a}_2|^2=1$ satisfies the \textit{normalization constraint}. Following conventions in quantum theory, we use Dirac notation to represent vectors \citep{nielsen2010quantum}, i.e., $\bm{a}$ is denoted by $\ket{\bm{a}}$ (named `\textit{ket}') with
\begin{equation}\label{eqn:sec-intro-1}
  \ket{\bm{a}} =\bm{a}_1\ket{0}+\bm{a}_2\ket{1}~,
\end{equation}
where $\ket{0}\equiv \bm{e}_0\equiv \begin{bmatrix}
  1 \\ 0
\end{bmatrix}$ and $\ket{1}\equiv \bm{e}_1\equiv \begin{bmatrix}
  0 \\ 1
\end{bmatrix}$ are two computational (unit) basis states. In this representation, the coefficients $\bm{a}_1$ and $\bm{a}_2$ are referred to as \textit{amplitudes}. The probabilities of obtaining the outcomes $0$ or $1$ upon measurement of the qubit are given by $|\bm{a}_1|^2$ and $|\bm{a}_2|^2$, respectively. The normalization constraint ensures that these probabilities always sum to one, as required by the probabilistic nature of quantum mechanics. In addition, the conjugated transpose of $\bm{a}$, i.e. $\bm{a}^{\dagger}$, is denoted by $\bra{\bm{a}}$ (named `\textit{bra}') with
\begin{equation}
  \bra{\bm{a}} = \bm{a}_1^{*}\bra{0}+\bm{a}_2^{*}\bra{1} \in \mathbb{C}^2~,
\end{equation}
where $\bra{0}\equiv \bm{e}_0^\top \equiv [1, 0]$, $\bra{1}\equiv \bm{e}_1^\top\equiv [0, 1]$, and the symbol `$\top$' denotes the transpose operation.

The physical interpretation of coefficients $\{\bm{a}_i\}$ is \textit{probability amplitudes}. Namely, when we intend to extract information from the qubit state $\ket{\bm{a}}$ into the classical form, quantum measurements are applied to this state, where the probability of sampling the basis $\ket{0}$ ($\ket{1}$) is $|\bm{a}_1|^2$ $(|\bm{a}_2|^2)$. Recall that the classical bit only permits the deterministic status with `$0$' or `$1$', while the qubit state in Eqn.~(\ref{eqn:sec-intro-1}) is the \textit{superposition} of the two status `$\ket{0}$' and `$\ket{1}$'.
    
\begin{tcolorbox}[enhanced, 
    breakable,colback=gray!5!white,colframe=gray!75!black,title=Remark]
The \textit{quantum superposition} leads to a distinct power between quantum and classical computation, where the former can accomplish certain tasks with provable advantages.
\end{tcolorbox}

\noindent \textbf{Two-qubit state}. The two qubits obey the tensor product rule, i.e.,
\begin{equation}
    \left[\begin{matrix}
   \bm{x}_1 \\ \bm{x}_2
\end{matrix} \right] \otimes \left[\begin{matrix}
   \bm{y}_1 \\ \bm{y}_2
\end{matrix} \right] 
= \left[\begin{matrix}
   \bm{x}_1 \left[\begin{matrix}
   \bm{y}_1 \\ \bm{y}_2
\end{matrix} \right] \\ \bm{x}_2 \left[\begin{matrix}
   \bm{y}_1 \\ \bm{y}_2
\end{matrix} \right]
\end{matrix} \right]    
= \left[\begin{matrix}
  \bm{x}_1 \bm{y}_1  \\ \bm{x}_1 \bm{y}_2  \\ \bm{x}_2 \bm{y}_1  \\ \bm{y}_2 \bm{y}_2 
\end{matrix} \right],
\end{equation}
which differs from the classical bits yielding the Cartesian product rule. 

For instance, let the first qubit follow Eqn.~(\ref{eqn:sec-intro-1}) and the second qubit state be $\ket{\bm{b}}=\bm{b}_1\ket{0}+\bm{b}_2\ket{1}$ with $|\bm{b}_1|^2+|\bm{b}_2|^2=1$. 
The two-qubit state formed by $\ket{\bm{a}}$ and  $\ket{\bm{b}}$  is defined as 
\begin{equation}\label{eqn:sec-intro-2}
  \ket{\bm{a}}\otimes \ket{\bm{b}} = \bm{a}_1\bm{b}_1 \ket{0}\otimes\ket{0} + \bm{a}_1\bm{b}_2 \ket{0}\otimes\ket{1}  + \bm{a}_2\bm{b}_1 \ket{1}\otimes\ket{0}  + \bm{a}_2\bm{b}_2 \ket{1}\otimes\ket{1} \in \mathbb{C}^4~,
\end{equation}
where the computational basis follows $\ket{0}\otimes\ket{0}\equiv \left[\begin{smallmatrix}
  1 \\ 0 \\ 0 \\ 0
\end{smallmatrix} \right]$, $\ket{0}\otimes\ket{1}\equiv \left[\begin{smallmatrix}
  0 \\ 1 \\ 0 \\ 0
\end{smallmatrix} \right]$ , $\ket{1}\otimes\ket{0}\equiv \left[\begin{smallmatrix}
  0 \\ 0 \\ 1 \\ 0
\end{smallmatrix} \right]$ , $\ket{1}\otimes\ket{1}\equiv \left[\begin{smallmatrix}
  0 \\ 0 \\ 0 \\ 1
\end{smallmatrix} \right]$, and the coefficients satisfy $\sum_{i=1}^2\sum_{j=1}^2 |\bm{a}_i\bm{b}_j|^2 = 1$. 

\begin{tcolorbox}[enhanced, 
    breakable,colback=gray!5!white,colframe=gray!75!black,title=Remark]
For ease of notations, the state $\ket{\bm{a}}\otimes \ket{\bm{b}}$ can be simplified as $\ket{\bm{a}\bm{b}}$, $\ket{\bm{a},\bm{b}}$, or $\ket{\bm{a}}\ket{\bm{b}}$. We will \textit{interchangeably} use these notations throughout the tutorial.
\end{tcolorbox}

\begin{shadedbox}
\begin{example}\label{example:chapt2-bell-state}
A typical example of a two-qubit state is the \textit{Bell state}, which represents a maximally entangled quantum state of two qubits. There are four types of Bell states, expressed as:
\begin{align}
  \ket{\phi^+} &= \frac{1}{\sqrt{2}} \left( \ket{00} + \ket{11} \right), \nonumber \\
  \ket{\phi^-} &= \frac{1}{\sqrt{2}} \left( \ket{00} - \ket{11} \right), \nonumber \\
  \ket{\psi^+} &= \frac{1}{\sqrt{2}} \left( \ket{01} + \ket{10} \right), \nonumber \\
  \ket{\psi^-} &= \frac{1}{\sqrt{2}} \left( \ket{01} - \ket{10} \right).
\end{align}
Each Bell state is a superposition of two computational basis states in the four-dimensional Hilbert space. 
\end{example}
\end{shadedbox}

\medskip
\noindent \textbf{Multi-qubit state}. We now generalize the above two-qubit case to the $N$-qubit case with $N>2$.  In particular, an $N$-qubit state $\ket{\psi}$ is a $2^N$-dimensional vector with
\begin{equation}\label{eqn:sec-intro-3}
  \ket{\psi} = \sum_{i=1}^{2^N} \bm{c}_i \ket{i}\in \mathbb{C}^{2^N} ~,
\end{equation} 
where the coefficients satisfy the normalization constraint $\sum_{i=1}^{2^N} |\bm{c}_i|^2 = 1$ and the symbol `$i$' of the computational basis $\ket{i}$ refers to a bit-string with $i\in \{0, 1\}^{N}$. As with the single-qubit case, the physical interpretation of coefficients $\{\bm{c}_i\}$ is probability amplitudes, where the probability to sample the bit-string `$i$' is $|\bm{c}_i|^2$. When the number of nonzero entries in $\bm{c}=[\bm{c}_1,...,\bm{c}_i,...,\bm{c}_{2^N}]^\top$ is larger than one, which implies that different bit-strings are coexisting coherently, the state $\ket{\psi}$ is called \textit{in superposition}.

\begin{tcolorbox}[enhanced, 
    breakable,colback=gray!5!white,colframe=gray!75!black,title=Remark]
 In quantum computing, a basis state $\ket{i}$ refers to a computational basis state in the Hilbert space of a quantum system. For an $N$-qubit system, the computational basis states are represented as $\ket{i}\in \{\ket{0\cdots 0},\ket{0\cdots 1},...,\ket{1\cdots 1}\}$, where $i$ is the binary representation of the state index. These states form an orthonormal basis of the $2^N$-dimensional Hilbert space, satisfying
    \begin{equation}
        \braket{i|j}=\delta_{ij}, \forall i,j\in [2^N].
    \end{equation}
    These basis states are fundamental for representing and analyzing quantum states, as any arbitrary quantum state can be expressed as a linear combination of these basis states.

Moreover, the size of $\bm{c}$ exponentially scales with the number of qubits $N$, attributed to the tensor product rule. This exponential dependence is an indispensable factor to achieve quantum supremacy \citep{arute2019quantum}, since it is extremely expensive and even intractable to record all information of $\bm{c}$ by classical devices for the modest number of qubits, e.g., $N>100$.
\end{tcolorbox}

\noindent \textbf{Entangled multi-qubit state}. A fundamental phenomenon in multi-qubit quantum systems is \textit{entanglement}, which represents a non-classical correlation between quantum systems that cannot be explained by classical physics. As proved by \citet{jozsa2003role}, quantum entanglement is an indispensable component to offer an exponential speed-up over classical computation. A representative example is Shor's algorithm, which utilizes entanglement to attain an exponential speed-up over any classical factoring algorithm. In an entangled quantum state, the state of one qubit cannot be fully described independently of the other qubits, even if they are spatially separated. The formal definition of entanglement for states in Dirac notation is as follows:

\begin{definition}(Entanglement for States in Dirac Notation)\label{def:entanglement-pure}
    An $N$-qubit state $\ket{\psi}\in \mathbb{C}^{2^N}$ is \textit{entangled} if it cannot be expressed as the tensor product of states of its subsystems $A$ and $B$:
    \begin{equation}
        \ket{\psi} \neq \ket{\psi_a} \otimes \ket{\psi_b}, \quad \forall \ket{\psi_a} \in \mathbb{C}^{2^{N_A}}, \ket{\psi_b} \in \mathbb{C}^{2^{N_B}}, N_A+N_B=N.
    \end{equation}
    If the state can be expressed in this form, it is referred to as \textit{seperable}.
\end{definition}

\begin{shadedbox}
\begin{example}\label{example:ghz} (GHZ state). 
    A typical example of an entangled $N$-qubit state is the Greenberger-Horne-Zeilinger (GHZ) state \citep{greenberger1989going}, which is a generalization of the two-qubit Bell state (see Example~\ref{example:chapt2-bell-state}) to a maximally entangled $N$-qubit state. The general form of an \(N\)-qubit GHZ state is:
    \begin{align}
        \ket{\text{GHZ}_N} = \frac{1}{\sqrt{2}} \left( \ket{0}^{\otimes N} + \ket{1}^{\otimes N} \right).
    \end{align}
    For $N = 3$, the GHZ state is:
    \begin{equation}
        \ket{\text{GHZ}_3} = \frac{1}{\sqrt{2}} \left( \ket{000} + \ket{111} \right).
    \end{equation}
\end{example}
\end{shadedbox}
 A key property of the entangled states (e.g., Bell states and GHZ states) is that measuring one qubit determines the outcome of measuring the other qubit, reflecting their strong quantum correlation.

\subsection{Density matrix}\label{cha2:density_mat}
Another description of quantum states is through \textit{density matrix} or \textit{density operators}. The reason for establishing density operators instead of Dirac notations arises from the imperfection of physical systems. Specifically, Dirac notations introduced in Chapter~\ref{subsec:qubit} are used to describe `ideal' quantum states (i.e., \textit{pure states}), where the operated qubits are isolated from the environment. Alternatively, when the operated qubits interact with the environment unavoidably, the density operators are employed to describe the behavior of quantum states living in this open system. As such, density operators describe more general quantum states.  

Mathematically, an $N$-qubit density operator, denoted by $\rho\in\mathbb{C}^{2^N\times 2^N}$,  presents a mixture of $m$ quantum pure states $\ket{\psi_i}\in\mathbb{C}^{2^N}$ with probability $p_i\in [0,1]$ and $\sum_{i=1}^m p_i =1$, i.e.,
\begin{equation}
  \rho = \sum_{i=1}^m p_i\rho_i~,
\end{equation}
where $\rho_i =\ket{\psi_i}\bra{\psi_i}\in\mathbb{C}^{2^N\times 2^N}$ is the outer product of the pure state $\ket{\psi_i}$. The outer product of two vectors $\ket{u},\ket{v}\in\mathbb{C}^n$ is expressed as
\begin{equation} 
    \ket{u}\bra{v} = \begin{bmatrix} 
        u_1 \\
        u_2 \\
        \vdots \\ 
        u_n \end{bmatrix} \begin{bmatrix} v_1^* & v_2^* & \cdots & v_n^* \end{bmatrix} = \begin{bmatrix} u_1 v_1^* & u_1 v_2^* & \cdots & u_1 v_n^* \\
        u_2 v_1^* & u_2 v_2^* & \cdots & u_2 v_n^* \\
        \vdots & \vdots & \ddots & \vdots \\
        u_n v_1^* & u_n v_2^* & \cdots & u_n v_n^* \end{bmatrix},
\end{equation}
where $u_i$ and $v_i^*$ are the element of $\ket{u}$ and the conjugate transpose $\bra{v}$, respectively.

From the perspective of computer science, the density operator $\rho$ is just a \textit{positive semi-definite matrix} with trace-preserving, i.e., $\bm{0}\preceq \rho$ and $\Tr(\rho)=1$. 
\begin{definition}(Positive semi-definite matrix)\label{def:psd}
    A matrix $A \in \mathbb{C}^{n \times n}$ is positive semi-definite (PSD) if it satisfies the following conditions:
    \begin{enumerate}
        \item $A$ is Hermitian: $A = A^\dagger$
        \item For any nonzero vector $\ket{v} \in \mathbb{C}^n$, $\bra{v}A\ket{v} \geq 0$, where $\bra{v}A\ket{v}$ represents the quadratic form of $A$ with respect to $\ket{v}$.
    \end{enumerate}
\end{definition}

When $m=1$, the density operator $\rho$ amounts to a pure state with $\rho = \ket{\psi_1}\bra{\psi_1}$. When $m>1$, the density operator $\rho$ describes a `mixed' quantum state, where the rank of $\rho$ is larger than $1$. A simple criterion to discriminate the pure states with the mixed states is as follows: the pure state with $m=1$ yields $\Tr(\rho^n)=\Tr(\rho)=1$ for any $n\in \mathbb{N}_+$; the mixed state with $m>1$ satisfies $\Tr(\rho^n)<\Tr(\rho)=1$ for any $n\in \mathbb{N}_+\setminus \{1\}$. Similar to the Definition~\ref{def:entanglement-pure} for entanglement of pure states, we can define the entanglement of mixed states.

\begin{definition}(Entanglement for Mixed States)\label{def:entanglement-mixed}
    Let $\rho$ be a density operator acting on a composite Hilbert space $\mathcal{H}_A \otimes \mathcal{H}_B$. The state $\rho$ is said to be \textit{entangled} if it cannot be expressed as:
    \begin{equation}
    \rho = \sum_{i} p_i \, \rho_A^{(i)} \otimes \rho_B^{(i)},
    \end{equation}
    where $p_i \geq 0$, $\sum_{i} p_i = 1$, and $\rho_A^{(i)}$ and $\rho_B^{(i)}$ are density operators on $\mathcal{H}_A$ and $\mathcal{H}_B$, respectively. If $\rho$ can be written in this form, it is called \textit{separable}.
\end{definition}

\begin{shadedbox}
\begin{example}
(Density matrix representations).\\
(i). Consider the single-qubit pure state \(\ket{\psi} = \frac{1}{\sqrt{2}}(\ket{0} + \ket{1})\). The corresponding density operator is:
        \[
        \rho = \ket{\psi}\bra{\psi} = \frac{1}{2} \begin{bmatrix} 
        1 & 1 \\ 
        1 & 1 
        \end{bmatrix}.
        \]
        Here, \(\Tr(\rho^2) = \Tr(\rho) = 1\), confirming that it is a pure state.\\
(ii).  Consider the classical probabilistic mixture of \(\ket{0}\) and \(\ket{1}\), each with equal probability \(p = 0.5\). The density operator is:
        \[
        \rho = 0.5 \ket{0}\bra{0} + 0.5 \ket{1}\bra{1} = \frac{1}{2} \begin{bmatrix} 
        1 & 0 \\ 
        0 & 1 
        \end{bmatrix}.
        \]
        In this case, \(\Tr(\rho^2) = 0.5 < \Tr(\rho) = 1\), indicating it is a mixed state.
\end{example}
\end{shadedbox}

\section{From Digital Logical Circuit to Quantum Circuit Model}\label{cha3:sec:circuit}

To process quantum states, we need to introduce quantum computation, a fundamental model of which is the quantum circuit model. In this section, we will begin with classical computation in Chapter~\ref{cha2:sec2:classical} and transit to details about the quantum circuit model in Chapter~\ref{cha2:sec:qcir}, including quantum gates, quantum channel, and quantum measurements.
       
\subsection{Classical digital logical circuit}\label{cha2:sec2:classical}

Digital logic circuits are the foundational building blocks of classical computing systems. They process classical bits by performing logical operations through logic gates. In this subsection, we introduce the essential components of digital logic circuits and their functionality, followed by a discussion of how these classical circuits relate to quantum circuits.

\subsubsection{Logic gates}

Logic gates are the basic components of a digital circuit. They take binary inputs, represented as \(0\) or \(1\), and produce a binary output based on a predefined logical operation. The most common logic gates include:

\begin{enumerate}
    \item \textbf{NOT Gate}: This gate inverts the input bit, i.e., it produces \(1\) if the input is \(0\), and vice versa. Its truth table is shown in Table~\ref{tab:not};
    
    \begin{table}[h!]
        \centering 
        \caption{\textbf{ Input-output mapping of the NOT gate.}}
        \label{tab:not}
        \footnotesize
        \begin{tabular}{c|c}
        \toprule 
        \multicolumn{1}{c}{Input (A)} & \multicolumn{1}{|c}{Output (NOT A)} \\ \midrule
        $0$       & $1$ \\  
        $1$ & $0$ \\  
        \end{tabular}
    \end{table}

    \item \textbf{AND Gate}: Produces an output of \(1\) only if both input bits are \(1\); otherwise, it outputs \(0\). The truth table is shown in Table~\ref{tab:and};
    \begin{table}[h!]
        \centering 
        \caption{\textbf{Input-output mapping of the AND gate.}}
        \label{tab:and}
        \footnotesize
        \begin{tabular}{c|c|c}
        \toprule 
        \multicolumn{1}{c}{Input (A)} & \multicolumn{1}{|c}{Input (B)} & Output (A AND B) \\ \midrule
        $0$ & $0$ & $0$ \\  
        $0$ & $1$ & $0$ \\
        $1$ & $0$ & $0$ \\
        $1$ & $1$ & $1$ \\
        \end{tabular}
    \end{table}

    \item \textbf{OR Gate}: Outputs \(1\) if at least one input is \(1\). The truth table is shown in Table~\ref{tab:or};
    \begin{table}[h!]
        \centering 
         \caption{\textbf{Input-output mapping of the OR gate.}}
        \label{tab:or}
        \footnotesize
        \begin{tabular}{c|c|c}
        \toprule 
        \multicolumn{1}{c}{Input (A)} & \multicolumn{1}{|c}{Input (B)} & Output (A OR B) \\ \midrule
        $0$ & $0$ & $0$ \\  
        $0$ & $1$ & $1$ \\
        $1$ & $0$ & $1$ \\
        $1$ & $1$ & $1$ \\
        \end{tabular}
    \end{table}

    \item \textbf{XOR Gate}: Produces an output of \(1\) if the inputs are different, and \(0\) otherwise. The truth table is shown in Table~\ref{tab:xor}.
    \begin{table}[h!]
        \centering 
         \caption{\textbf{Input-output mapping of the XOR gate.}}
        \label{tab:xor}
        \footnotesize
        \begin{tabular}{c|c|c}
        \toprule 
        \multicolumn{1}{c}{Input (A)} & \multicolumn{1}{|c}{Input (B)} & Output (A XOR B) \\ \midrule
        $0$ & $0$ & $0$ \\  
        $0$ & $1$ & $1$ \\
        $1$ & $0$ & $1$ \\
        $1$ & $1$ & $0$ \\
        \end{tabular}
    \end{table}
\end{enumerate}

These logic gates can be combined in various configurations to build more complex circuits capable of performing arbitrary arithmetic operations.

\subsubsection{Circuit design and universality}

A classical digital logic circuit is composed of interconnected gates designed to perform specific tasks, such as addition or multiplication. A key property of these circuits is \textbf{universality}, meaning any logical function can be implemented using a finite set of gates. For example, the \textbf{NAND Gate} (NOT AND) and \textbf{NOR Gate} (NOT OR) are universal gates. Any other logical operation can be constructed using only NAND or NOR gates \citep{leach1994digital}.

\subsection{Quantum circuit}\label{cha2:sec:qcir}
 
Classical digital logical circuits provide the essential framework for understanding computation. While classical circuits operate on bits and perform deterministic operations, quantum circuits manipulate qubits and involve probabilistic behavior. The concepts of logic gates, circuit design, and universality lay the groundwork for transitioning to quantum circuits introduced in this subsection.

\subsubsection{Quantum gate}\label{cha2:quantum-gate}       
Recall that the computational toolkit for classical computers is logic gates, e.g., NOT, AND, OR, and XOR,  which are applied to the single bit or multiple bits to accomplish computation. Similarly, the computational toolkit for quantum computers (or quantum circuits) is \textbf{quantum gate}, which \textit{operates on qubits} introduced in Chapter~\ref{subsec:qubit} to complete the computation. In the following, we will introduce both single-qubit and multi-qubit gates.

\medskip

\begin{figure*}
  \centering
  \includegraphics[width=0.9\textwidth]{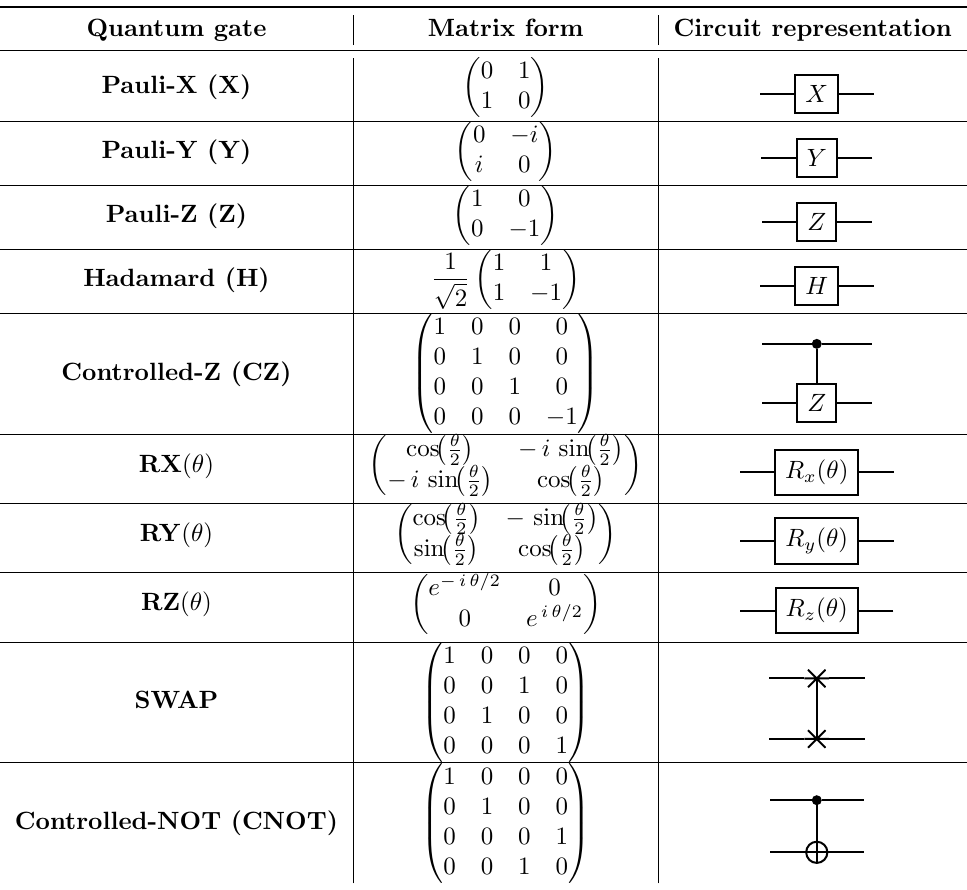}
  \caption{\textbf{The summarization of quantum gates.} The table contains the abbreviation, the mathematical form, and the graph representation of a set of universal quantum gates. $i$ represents the imaginary unit.}
\label{tab:Q-gates}
\end{figure*}

\noindent \textbf{Single-qubit gates}. Single-qubit gates control the evolution of the single-qubit state $\ket{\bm{a}}$. Due to the law of quantum mechanics, the evolved state should satisfy the normalization constraint. The implication of this constraint is that the evolution must be a unitary operation. Concretely, denoted $U\in \mathbb{C}^{2\times 2}$ as a linear operator and the evolved state as 
\begin{equation}\label{eqn:sec-intro-4}
  \ket{\hat{\bm{a}}}:=U\ket{\bm{a}}=\hat{\bm{a}}_1\ket{0}+\hat{\bm{a}}_2\ket{1}\in \mathbb{C}^{2}~,
\end{equation}
the summation of coefficients $|\hat{\bm{a}}_1|^2+|\hat{\bm{a}}_2|^2=\braket{\hat{\bm{a}}|\hat{\bm{a}}} = \braket{ \bm{a}| U^{\dagger} U |\bm{a} }$ is equal to $1$ if and only if $U$ is unitary with $U^{\dagger} U =  U U^{\dagger} = \mathbb{I}_2$. The symbol `$\dagger$' denotes the conjugate transpose operation. Under the density operator representation, the evolution of $\ket{\bm{a}}$  yields 
\begin{equation}\label{eqn:sec-intro-5}
  \hat{\rho} = U\rho U^{\dagger}~,
\end{equation}  
where $\hat{\rho} = \ket{\hat{\bm{a}}}  \bra{\hat{\bm{a}}}$ and  $\rho=\ket{\bm{a}}\bra{\bm{a}}$. 

Several common single-qubit gates, including Pauli-X, Pauli-Y, Pauli-Z, Hadamard, and rotational single-qubit gates about the X, Y, and Z axes ($\RX, \RY, \RZ$), are illustrated in Figure~\ref{tab:Q-gates}. According to Theorem~4.1 in \citep{nielsen2010quantum}, any unitary operation on a single qubit can be decomposed into a sequence of rotations as:
\begin{equation} U = \RZ(\alpha)\RY(\beta)\RZ(\gamma), \end{equation} where $\alpha, \beta, \gamma \in [0, 2\pi)$, up to a global phase shift.

\begin{figure}[h!]
\centering
\includegraphics[width=0.98\textwidth]{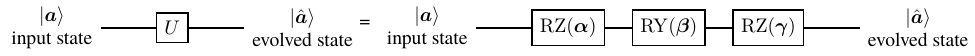}
\caption{\textbf{The evolution of the single-qubit state decomposed into the quantum gates.}}
\label{fig:single-qubit-evolve} 
\end{figure}

The evolution from $\ket{\bm{a}}$ to $\ket{\hat{\bm{a}}}$ can be visualized using a quantum circuit diagram, as illustrated in Figure \ref{fig:single-qubit-evolve}. The wire in the circuit represents a qubit, which evolves from the initial state $\ket{\bm{a}}$ on the left to the final state $\ket{\hat{\bm{a}}}$ on the right. Gates are applied sequentially from left to right along the wire.

\begin{tcolorbox}[enhanced, 
    breakable,colback=gray!5!white,colframe=gray!75!black,title=Remark] The circuit model serves as a foundational framework for describing quantum computation due to its \textit{intuitive} and \textit{modular} nature, making it accessible for researchers and practitioners transitioning from classical to quantum computing. First, the circuit model provides a standardized graphical language to represent complex quantum algorithms, enabling clear visualization of the computational flow and interactions among qubits. Second, the modularity of the circuit model allows quantum operations to be easily decomposed into a pre-defined gate set, ensuring compatibility across different quantum hardware architectures.
\end{tcolorbox}

\noindent \textbf{Multi-qubit gates}. The evolution of the $N$-qubit quantum state can be effectively generalized by the single-qubit case. That is, the unitary operator $U\in \mathbb{C}^{2^N\times 2^N}$ evolves an $N$-qubit state $\ket{\psi}$ in Eqn.~(\ref{eqn:sec-intro-3}) as
\begin{equation}\label{eqn:sec-intro-6}
\ket{\widehat{\psi}} = U\ket{\psi}\in \mathbb{C}^{2^N} ~. 
\end{equation} 
The evolution of $\ket{\psi}$ under the density operator representation is denoted by $\hat{\rho} = U\rho U^{\dagger}$, where $\hat{\rho} =\ket{\widehat{\psi}}\bra{\widehat{\psi}}$ and $\rho=\ket{\psi}\bra{\psi}$. 
 
\begin{tcolorbox}[enhanced, 
    breakable,colback=gray!5!white,colframe=gray!75!black,title=Remark] In the view of computer science, the quantum (logic) gates in Figure \ref{tab:Q-gates} are well-designed matrices with the following properties. First, all quantum gates are unitary (e.g., $\XGate  \XGate^{\dagger} = \mathbb{I}_2$). Second, $\XGate$, $\YGate$, $\ZGate$, $\Hada$ gates have the fixed form with size $2\times 2$; $\CNOT$, $\CZ$, and $\SWAP$ gates have the fixed form with size $4\times 4$. Third, $\RX(\theta)$, $\RY(\theta)$, $\RZ(\theta)$ gates are matrices controlled by a single variable $\theta$.       
\end{tcolorbox}

Figure~\ref{tab:Q-gates} includes two significant multi-qubit gates: the controlled-Z (CZ) gate and the controlled-NOT (CNOT) gate. For instance, the CNOT gate operates on two qubits: a \textit{control} qubit (top line) and a \textit{target} qubit (bottom line). If the control qubit is $0$, the target qubit remains unchanged; if the control qubit is $1$, the target qubit is flipped.

\begin{shadedbox}
    \begin{example}
        (State evolved by multi-qubit gates). Figure~\ref{fig:two-qubit-evolve} illustrates the evolution of a 3-qubit state $\ket{\psi}$ under a multi-qubit circuit consisting of multi-qubit gates. Each wire represents a qubit, and the evolution occurs from left to right. Starting with the initial state $\ket{\psi} = \ket{000}$, a Hadamard gate is applied to the first qubit, followed by two CNOT gates: one acting on the first and second qubits, and the other acting on the second and third qubits. The final evolved state, shown on the right, is the GHZ state introduced in Example~\ref{example:ghz}, i.e., $\ket{\widehat{\psi}} = U\ket{\psi} = \frac{1}{\sqrt{2}}(\ket{000} + \ket{111})$. The entire unitary operation can be represented as:
        \begin{equation}\label{eqn:sec-intro-7}
        U = (\Hada \otimes \mathbb{I}_4) (\CNOT \otimes \mathbb{I}_2) (\mathbb{I}_2 \otimes \CNOT).
        \end{equation}
    \end{example}    
\end{shadedbox}

\begin{figure}[h!]
    \centering
    \includegraphics[width=0.8\textwidth]{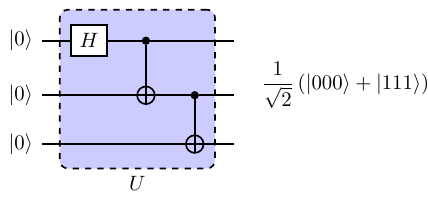}
    \caption{\textbf{The decomposition of the multi-qubit circuit $U$ in the case of $N=3$.}}
    \label{fig:two-qubit-evolve}  
\end{figure}

\begin{tcolorbox}[enhanced, 
    breakable,colback=gray!5!white,colframe=gray!75!black,title=Remark] The CNOT gate plays a pivotal role in quantum computing due to its unique ability to generate entangled states, such as the Bell states and GHZ states presented in Examples~\ref{example:chapt2-bell-state}\&\ref{example:ghz}. Besides, the CNOT gate is one of the most commonly implemented gates on quantum hardware. Its design and optimization directly impact the fidelity and scalability of quantum systems.
\end{tcolorbox}

\noindent\textbf{A universal quantum gate set}. While many single and multi-qubit gates exist, it is sufficient to use a \textit{universal set of gates} to construct any unitary operation. As proved in Chapter~4.5.2 of Ref.~\citep{nielsen2010quantum}, any unitary operator $U$ in Eqn.~(\ref{eqn:sec-intro-6}) can be decomposed into the single-qubit and two-qubit gates with a certain arrangement. 
\begin{fact}[Solovay-Kitaev theorem, \citep{dawson2005solovay}]
    Suppose we are given a fixed universal gate set $\mathcal{G}$, which generates a dense group ${\rm SU}(d)$. Then any unitary operator $U\in {\rm SU}(d)$ can be approximated to an arbitrary precision $\epsilon>0$ by a finite sequence of gates from $\mathcal{G}$. Formally, there exists a decomposition such that
    \begin{equation}
        \left\|U-\prod_{l=1}^{L} G_l\right\|_{op} \leq \epsilon, \quad G_l \in \mathcal{G}, \quad L \in \mathbb{N},
    \end{equation}
    where $\left\|\cdot\right\|_{op}$ is the operator norm which is the largest singular value of a matrix, and $L$ is the required number of gates that scales as:
    \begin{equation}
        L = O\left(\log^c(1/\epsilon)\right),
    \end{equation}
    with $ c \approx 4 $.
\end{fact}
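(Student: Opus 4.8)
The plan is to prove the statement by an iterative \emph{precision-refinement} construction in which the approximation accuracy improves super-exponentially at each stage while the word length grows only by a constant factor per stage. Throughout I assume (as is standard) that $\mathcal{G}$ is closed under inversion, so that reversing a word of length $\ell$ represents the inverse unitary with no length overhead. Concretely, I would build, for each $k\in\mathbb{N}$, a finite set $S_k$ of words over $\mathcal{G}$ such that $S_k$ is an $\epsilon_k$-net of ${\rm SU}(d)$ with respect to $\|\cdot\|_{op}$, where $\epsilon_k$ obeys a recursion of the form $\epsilon_k\le C\,\epsilon_{k-1}^{3/2}$ (equivalently $\epsilon_k\approx\epsilon_0^{(3/2)^k}$ once $\epsilon_0$ is fixed below a universal threshold so that the iteration contracts), and the maximal word length $\ell_k$ occurring in $S_k$ satisfies $\ell_k\le 5\,\ell_{k-1}$, hence $\ell_k\le 5^k\ell_0$.

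\emph{Base case.} Density of the group generated by $\mathcal{G}$ together with compactness of ${\rm SU}(d)$ guarantees that for any chosen $\epsilon_0>0$ there is a finite set $S_0$ of words, all of some bounded length $\ell_0=\ell_0(\epsilon_0)$, forming an $\epsilon_0$-net of ${\rm SU}(d)$. I fix $\epsilon_0$ once and for all small enough that the inductive contraction below holds.

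\emph{Inductive step (the core).} Given $U\in{\rm SU}(d)$, choose $U_{k-1}\in S_{k-1}$ with $\|U-U_{k-1}\|_{op}\le\epsilon_{k-1}$ and set $\Delta:=U U_{k-1}^\dagger$, so $\|\Delta-\mathbb{I}\|_{op}\le\epsilon_{k-1}$. The key lemma I would establish is a \emph{balanced group-commutator decomposition}: any $\Delta\in{\rm SU}(d)$ with $\|\Delta-\mathbb{I}\|_{op}\le\delta^2$ can be written as $\Delta=VWV^\dagger W^\dagger$ with $\|V-\mathbb{I}\|_{op},\|W-\mathbb{I}\|_{op}\le c_0\delta$. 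Applying it with $\delta\asymp\sqrt{\epsilon_{k-1}}$, I approximate $V$ and $W$ each to within $\epsilon_{k-1}$ by words $\tilde V,\tilde W\in S_{k-1}$ and form the word $w$ representing $U_{k-1}\,\tilde V\,\tilde W\,\tilde V^\dagger\,\tilde W^\dagger$. Using the near-identity estimate that perturbing one factor of a group commutator of $O(\delta)$-close-to-identity elements by $\xi$ changes the commutator by only $O(\delta\xi)$, one finds that the unitary represented by $w$ lies within $C\,\epsilon_{k-1}^{3/2}=:\epsilon_k$ of $U$. Ranging over the finitely many choices of $U_{k-1},\tilde V,\tilde W$ produces $S_k$, and the bookkeeping $\ell_k\le \ell_{k-1}+4\ell_{k-1}=5\ell_{k-1}$ closes the induction.

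\emph{Conclusion and main obstacle.} To reach accuracy $\epsilon$ it suffices to take $k$ with $\epsilon_k\le\epsilon$, i.e.\ $(3/2)^k\gtrsim\log(1/\epsilon)/\log(1/\epsilon_0)$, so $k=O(\log\log(1/\epsilon))$ and
\[
L\le\ell_k\le 5^k\ell_0=O\!\left(\big(\log(1/\epsilon)\big)^{\log 5/\log(3/2)}\right)=O\!\left(\log^c(1/\epsilon)\right),\qquad c=\frac{\log 5}{\log(3/2)}\approx 3.97,
\]
which is the asserted $c\approx 4$. I expect the \textbf{balanced group-commutator decomposition lemma} to be the main obstacle. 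Its proof rests on the semisimplicity of $\mathfrak{su}(d)$: writing $\Delta=\exp(iH)$ with $\|H\|$ of order $\delta^2$, one seeks traceless Hermitian $A,B$ of norm $O(\delta)$ with $[A,B]=-iH$ to leading order---such $A,B$ exist because $\mathfrak{su}(d)$ is spanned by Lie brackets---and then the Baker--Campbell--Hausdorff expansion gives $e^{iA}e^{iB}e^{-iA}e^{-iB}=\exp(-[A,B]+O(\delta^3))=\exp(iH+O(\delta^3))$; the residual $O(\delta^3)$ discrepancy must be absorbed by a further perturbative correction to $A,B$ (or by recursing the commutator construction once more) while keeping all norms $O(\delta)$. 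Carefully tracking the constants in this step, and fixing $\epsilon_0$ small enough that every stage stays inside the region where the BCH series converges and the corrections contract, is where the real work lies.
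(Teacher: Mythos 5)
Your proposal is correct: it is precisely the standard Dawson--Nielsen argument for the Solovay--Kitaev theorem (an initial $\epsilon_0$-net from density and compactness, followed by the balanced group-commutator refinement giving $\epsilon_k \le C\,\epsilon_{k-1}^{3/2}$ and $\ell_k \le 5\,\ell_{k-1}$, hence $c=\log 5/\log(3/2)\approx 4$), and you correctly identify the commutator decomposition with norm control as the technical core. Note that the paper states this result as a Fact without proof, deferring to \citet{dawson2005solovay}, so your reconstruction coincides with the approach of the cited reference rather than with any argument in the text itself.
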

A commonly used universal gate set includes single-qubit rotations \(\RX(\theta)\), \(\RY(\theta)\), \(\RZ(\theta)\), and two-qubit gates such as the \(\CNOT\) gate. As illustrated in Figure~\ref{fig:chap1:comparison-clc-quantum}, any ideal quantum computation can be represented by a unitary operator. This universal gate set provides a practical and foundational toolkit for implementing arbitrary quantum algorithms.

\subsubsection{Quantum channels}
Analogous to the unitary operation describing the evolution of quantum states in the closed system, the quantum channel formalizes the evolution of quantum states in the open system. Refer to the textbook \citep{wilde2011classical} for more details.

Mathematically, every quantum channel $\mathcal{N}(\cdot)$ can be treated as a linear, completely positive, and trace-preserving map (CPTP map). 
\begin{definition}[CPTP map]
  Denote $\mathcal{L}(\mathcal{H})$ as the space of square linear operators acting on the Hilbert space $\mathcal{H}$. We say $\mathcal{N}(\cdot)$ is a CPTP map if the following conditions are satisfied:
\begin{itemize}
  \item  The `linearity' requires for any $X_A, Y_A \in \mathcal{L}(\mathcal{H}_A)$ and $a,b\in\mathbb{C}$, $\mathcal{N}(a X_A + b Y_A)=a\mathcal{N}(X_A) + b \mathcal{N}(Y_A)$.
  \item The definition of completely positive is as follows.  A linear map $\mathcal{N}:\mathcal{L}(\mathcal{H}_A)\rightarrow \mathcal{L}(\mathcal{H}_B)$ is a positive map if $\mathcal{N}(X_A)$ is positive semi-definite for all positive semi-definite operators $X_A \in \mathcal{L}(\mathcal{H}_A)$. Moreover, a linear map $\mathcal{N}:\mathcal{L}(\mathcal{H}_A)\rightarrow \mathcal{L}(\mathcal{H}_B)$ is completely positive if $\mathbb{I}_R\otimes \mathcal{N}$ is a positive map for any size of $R$.
  \item  The trace preservation requires $\Tr(\mathcal{N}(X_A))=\Tr(X_A)$ for any $X_A  \in \mathcal{L}(\mathcal{H}_A)$.  
\end{itemize}  
\end{definition}

A quantum channel can be represented by the Choi-Kraus decomposition \citep{nielsen2010quantum}. Mathematically, let $\mathcal{L}(\mathcal{H}_A, \mathcal{H}_B)$ denote the space of linear operators taking $\mathcal{H}_A$ to $\mathcal{H}_B$. The Choi–Kraus decomposition of the quantum channel $\mathcal{N}(\cdot): \mathcal{L}(\mathcal{H}_A)\rightarrow \mathcal{L}(\mathcal{H}_B)$ is
\begin{equation}\label{eqn:def-kraus}
  \mathcal{N}(X_A)= \sum_{a=1}^d \mathbf{M}_a X_A \mathbf{M}_a^{\dagger}
\end{equation} 
where $X_A\in \mathcal{L}(\mathcal{H}_A)$, $M_a\in \mathcal{L}(\mathcal{H}_A, \mathcal{H}_B)$,$\sum_{a=1}^d \mathbf{M}_a^{\dagger}\mathbf{M}_a=\mathbb{I}_{\text{dim}(\mathcal{H}_A)}$, and $d\leq \text{dim}(\mathcal{H}_A)\text{dim}(\mathcal{H}_B)$. Here $\text{dim}(\mathcal{H}_*)$ refers to the dimension of the space $\mathcal{H}_*$.

We next introduce two common types of quantum channels, which will be broadly employed in the following context to simulate the noise of quantum devices. 

The first type is the \textit{depolarizing channel}, which considers the scenario such that the information of the input state can be entirely lost with some probability. 
\begin{definition}[Depolarization channel]\label{def:depolar-channel}
  Given an $N$-qubit quantum state $\rho\in\mathbb{C}^{2^N\times 2^N}$, the depolarization channel $\mathcal{N}_p$ acts on a $2^N$-dimensional Hilbert space follows 
  \begin{equation}
    \mathcal{N}_p(\rho) = (1-p)\rho + p\frac{\mathbb{I}_{2^N}}{2^N}~,
  \end{equation}
   where $\mathbb{I}_{2^N}/{2^N}$ refers to the maximally mixed state and $p$ is a scalar representing the depolarization rate.
\end{definition}

\begin{shadedbox}
\begin{example}(Single-qubit state with depolarization channel). 
    Consider a single-qubit pure state $\rho=\ket{0}\bra{0}$ with the density matrix
\begin{equation}
        \rho = \ket{0}\bra{0} = \begin{bmatrix}
            1 & 0 \\
            0 & 0
            \end{bmatrix}.
    \end{equation}
    When the depolarizing channel $\mathcal{N}_p$ acts on this state, the output is given by   \begin{equation}
        \mathcal{N}_p(\rho) = (1-p) \begin{bmatrix}
            1 & 0 \\
            0 & 0
            \end{bmatrix}
            + \frac{p}{2} \begin{bmatrix}
            1 & 0 \\
            0 & 1
            \end{bmatrix} = \begin{bmatrix}
                1 - \frac{p}{2} & 0 \\
                0 & \frac{p}{2}
                \end{bmatrix}.                        
    \end{equation}
    Therefore, the purity is inferred as:
    \begin{equation}
        \Tr(\mathcal{N}_p^2(\rho)) = 1 - p + \frac{p^2}{2}.
    \end{equation}
    When $p=0$, the state remains pure and unchanged. When $0<p\leq 1$, the state becomes a mixture of states $\ket{0}$ and $\ket{1}$ with $\Tr(\mathcal{N}_p^2(\rho))<1$. When $p=1$, the state evolves into the maximally mixed state.
\end{example}
\end{shadedbox}

The second type is the \textit{Pauli channel}, which serves as a dominant noise source in many computing architectures and as a practical model for analyzing error correction \citep{flammia2020efficient}.  
\begin{definition}[Single-qubit Pauli channel]\label{def:pauli-channel}
  Given a quantum state $\rho\in\mathbb{C}^{2\times 2}$, the single-qubit Pauli channel $\mathcal{N}_{\vec{p}}$ acts on this state follows 
  \begin{equation}
    \mathcal{N}_{\vec{p}}(\rho) = p_I \rho + p_X \XGate \rho \XGate +  p_Y \YGate \rho \YGate  + p_Z \ZGate \rho \ZGate ~,
  \end{equation}
   where $\vec{p}= (p_I, p_X, p_Y, p_Z)$ and  $p_I+p_X+p_Y+p_Z=1$.
\end{definition}
Note that for a single-qubit system, the depolarization channel $\mathcal{N}_p$ is a special Pauli channel with setting $p_X=p_Y=p_Z=p$. 

\begin{shadedbox}
\begin{example} (Single-qubit state with Pauli channel). 
    Consider a single-qubit pure state $\rho=\ket{0}\bra{0}$ with the density matrix:
    \begin{equation}
        \rho = \ket{0}\bra{0} = \begin{bmatrix}
            1 & 0 \\
            0 & 0
            \end{bmatrix}.
    \end{equation}
    When the Pauli channel $\mathcal{N}_{\vec{p}}$ acts on this state, the output is given by
    \begin{align}
        \mathcal{N}_{\vec{p}}(\rho) &= p_I \begin{bmatrix}
            1 & 0 \\
            0 & 0
        \end{bmatrix}
        + p_X \begin{bmatrix}
            0 & 0 \\
            0 & 1
        \end{bmatrix}
        + p_Y \begin{bmatrix}
            0 & 0 \\
            0 & 1
        \end{bmatrix}
        + p_Z \begin{bmatrix}
            1 & 0 \\
            0 & 0
        \end{bmatrix} \\
        &= \begin{bmatrix}
            p_I + p_Z & 0 \\
            0 & p_X + p_Y
        \end{bmatrix}.
    \end{align}
    Let us analyze three special cases for the probability vector $\vec{p} = (p_I, p_X, p_Y, p_Z)$:
    \begin{itemize}
        \item \textbf{Case 1}: If $p_X = p_Y = p_Z = p$, the Pauli channel reduces to the depolarization channel and the prepared state becomes:
        \begin{equation}
            \mathcal{N}_{\vec{p}}(\rho) = \begin{bmatrix}
                1 - 2p & 0 \\
                0 & 2p
            \end{bmatrix}.
        \end{equation}
        \item \textbf{Case 2}: If $p_Y = p_Z = 0$, the prepared state becomes:
        \begin{equation}
            \mathcal{N}_{\vec{p}}(\rho) = \begin{bmatrix}
                1 - p_X & 0 \\
                0 & p_X
            \end{bmatrix}.
        \end{equation}
        In this scenario, the Pauli channel reduces to the \textit{bit-flip channel}.
        \item \textbf{Case 3}: For other values of $\vec{p}$, the effect of the Pauli channel on the pure state $\ket{0}$ can be interpreted as a combination of the depolarizing channel and the bit-flip channel.
    \end{itemize}
\end{example}
\end{shadedbox}

To generalize the single-qubit Pauli channel to a multi-qubit Pauli channel, we extend the definition to account for the action of Pauli operators on multiple qubits.
\begin{definition}[Multi-qubit Pauli channel]\label{def:multi-qubit-pauli-channel}
  Given a quantum state $\rho\in\mathbb{C}^{2^N\times 2^N}$ for an $N$-qubit system, the multi-qubit Pauli channel $\mathcal{N}_{\vec{p}}$ acts as 
  \begin{equation}
    \mathcal{N}_{\vec{p}}(\rho) = \sum_{P\in \mathcal{P}_N}p_PP\rho P^\dagger,
  \end{equation}
   where $ \mathcal{P}_N=\{I,X,Y,Z\}^{\otimes N}$ denotes the set of all tensor products of the $N$ single-qubit Pauli operators, and $p_P$ is the probability of applying the Pauli operator $P$ with $\sum_{P\in \mathcal{P}_N}p_P=1$.
\end{definition}

\begin{tcolorbox}[enhanced, 
    breakable,colback=gray!5!white,colframe=gray!75!black,title=Remark] The multi-qubit Pauli channel considers the existence of correlated Pauli noise on different qubits. If each qubit only experiences independent single-qubit Pauli noise, the multi-qubit channel can be written as the tensor product of sing-qubit Pauli channels:
    \begin{equation}
        \mathcal{N}_{\vec{p}}(\rho) = \otimes_{i=1}^N \mathcal{N}_{\vec{p}_i}(\rho),
    \end{equation}
    where $\mathcal{N}_{\vec{p}_i}$ is the single-qubit Pauli channel acting on the $i$-th qubit with probabilities $\vec{p}_i=(p_I, p_X, p_Y, p_Z)$.
\end{tcolorbox}

Having acknowledged the motivation and definition of quantum channels, it is natural to ask \textit{what is the relation between quantum channels and quantum gates?} A straightforward observation is that a quantum gate is a special case of a quantum channel. Conversely, the evolution of the quantum state can be built from the unitary operation via the isometric extension \citep{wilde2011classical}. The following theorem shows that any quantum channel arises from a unitary evolution on a larger Hilbert space.

\begin{tcolorbox}[enhanced, 
    breakable,colback=gray!5!white,colframe=gray!75!black,title=Remark] According to the Choi-Kraus decomposition, the unitary operator is a special case of a quantum channel. Specifically, when $d=1$, the quantum channel reduces to:
    \begin{equation}
        \mathcal{N}(X_A)= \mathbf{M}_1 X_A \mathbf{M}_1^{\dagger},
    \end{equation}
    where $\mathbf{M}_1$ is a unitary operator satisfying $\mathbf{M}_1^{\dagger}\mathbf{M}_1=\mathbb{I}$. This highlights that all unitary operators are quantum channels, but not all quantum channels are unitary.
\end{tcolorbox}

\begin{theorem}\citep{wilde2011classical}\label{thm:isometry}
  Let $\mathcal{N}(\cdot):\mathcal{L}(\mathcal{H}_A)\rightarrow \mathcal{L}(\mathcal{H}_B)$ be a quantum channel defined in Eqn.~(\ref{eqn:def-kraus}). Let $\mathcal{H}_E$ be the Hilbert space of an auxiliary system. Denote the input state as $\rho$ (i.e., a density operator $\rho \in \mathbb{C}^{\text{dim}(\mathcal{H}_A)\times \text{dim}(\mathcal{H}_A)}$). Then there exists a unitary $U: \mathcal{L}(\mathcal{H}_A \otimes \mathcal{H}_E)\rightarrow \mathcal{L}(\mathcal{H}_B\otimes \mathcal{H}_E)$ and a normalized vector (i.e., a pure state) $\ket{\varphi} \in \mathbb{C}^{\text{dim}(\mathcal{H}_E)}$ such that 
  \begin{equation}\label{eq:isometry}
  \mathcal{N}(\mathcal{\rho}) = \text{Tr}_E\left(U(\rho \otimes \ket{\varphi}\bra{\varphi} )U^\dagger \right)~,
  \end{equation}
  where $\text{Tr}_E(\cdot)$ denotes the partial trace over the ancillary Hilbert space $\mathcal{H}_E$, and the dimension of $\mathcal{H}_E$ depends on the rank of the Kraus representation of $\mathcal{N}$.
\end{theorem}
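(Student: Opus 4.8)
The plan is to build the dilation explicitly from the Kraus operators $\{\mathbf{M}_a\}_{a=1}^d$ of Eqn.~(\ref{eqn:def-kraus}) and then enlarge an isometry to a unitary by a dimension-counting argument. First I would fix an orthonormal basis $\{\ket{a}\}_{a=1}^d$ of a $d$-dimensional auxiliary space $\mathcal{H}_E$ and define the linear map $V:\mathcal{H}_A\to\mathcal{H}_B\otimes\mathcal{H}_E$ by $V\ket{\psi}=\sum_{a=1}^d (\mathbf{M}_a\ket{\psi})\otimes\ket{a}$. The completeness relation $\sum_a \mathbf{M}_a^\dagger \mathbf{M}_a=\mathbb{I}$ gives $\bra{\psi}V^\dagger V\ket{\psi}=\sum_a\bra{\psi}\mathbf{M}_a^\dagger \mathbf{M}_a\ket{\psi}=\braket{\psi|\psi}$, so $V$ is an isometry. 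A direct computation of the partial trace, using $\Tr_E(\ket{a}\bra{b})=\delta_{ab}$, yields $\Tr_E(V\rho V^\dagger)=\sum_{a,b}\mathbf{M}_a\rho \mathbf{M}_b^\dagger\,\delta_{ab}=\sum_a \mathbf{M}_a\rho \mathbf{M}_a^\dagger=\mathcal{N}(\rho)$, which already realizes the channel through an isometry (the Stinespring form).

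Second I would convert the isometry into a unitary acting on $\mathcal{H}_A\otimes\mathcal{H}_E$. Picking a fixed unit vector $\ket{\varphi}\in\mathcal{H}_E$, define $U$ on the subspace $\mathcal{H}_A\otimes\ket{\varphi}$ by $U(\ket{\psi}\otimes\ket{\varphi})=V\ket{\psi}$. Because $V$ is an isometry, this is an inner-product–preserving map from a $\dim(\mathcal{H}_A)$-dimensional subspace of $\mathcal{H}_A\otimes\mathcal{H}_E$ into $\mathcal{H}_B\otimes\mathcal{H}_E$; provided the input and output spaces of $U$ have equal dimension (which holds when $\dim\mathcal{H}_A=\dim\mathcal{H}_B$, e.g.\ for $N$-qubit channels, and otherwise can be arranged by enlarging $\mathcal{H}_E$), I would extend $U$ to the orthogonal complement by mapping any orthonormal basis of the complement of $\mathcal{H}_A\otimes\ket{\varphi}$ to any orthonormal basis of the complement of $V(\mathcal{H}_A)$, which is possible precisely because these two complements have the same dimension. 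The resulting $U$ is unitary by construction, and restricting to the block $\mathcal{H}_A\otimes\ket{\varphi}$ gives $U(\rho\otimes\ket{\varphi}\bra{\varphi})U^\dagger=V\rho V^\dagger$, hence $\Tr_E(U(\rho\otimes\ket{\varphi}\bra{\varphi})U^\dagger)=\Tr_E(V\rho V^\dagger)=\mathcal{N}(\rho)$, which is Eqn.~(\ref{eq:isometry}).

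The routine steps are the isometry identity and the partial-trace calculation; the step that needs care — and the main obstacle — is the unitary extension. There one must track the dimension bookkeeping (the Kraus rank satisfies $d\le\dim(\mathcal{H}_A)\dim(\mathcal{H}_B)$, which pins down $\dim\mathcal{H}_E$ for the isometry, but $\dim\mathcal{H}_E$ may have to be taken larger to make $U$ square and match the input/output spaces) and invoke the standard fact that any linear isometry between finite-dimensional Hilbert spaces of equal dimension extends to a unitary by Gram–Schmidt completion of orthonormal bases. I would close with a remark that the choice of $\ket{\varphi}$ and of the extension is not unique, so the dilation is only unique up to an isometry on $\mathcal{H}_E$, and that the construction does not depend on which Kraus representation of $\mathcal{N}$ one starts from, since two Kraus families of the same channel are related by a partial isometry acting on the $\mathcal{H}_E$ index.
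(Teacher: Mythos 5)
Your proposal is correct and follows essentially the same route as the paper's proof sketch — the Stinespring isometric extension followed by embedding the isometry into a unitary acting on $\mathcal{H}_A\otimes\mathcal{H}_E$ with a fixed ancilla state $\ket{\varphi}$ — the only difference being that you construct the isometry $V\ket{\psi}=\sum_a \mathbf{M}_a\ket{\psi}\otimes\ket{a}$ explicitly from the Kraus operators where the paper simply cites it from \citet{wilde2011classical}. One minor imprecision: when $\dim\mathcal{H}_A\neq\dim\mathcal{H}_B$, enlarging $\mathcal{H}_E$ alone does not equalize the dimensions of $\mathcal{H}_A\otimes\mathcal{H}_E$ and $\mathcal{H}_B\otimes\mathcal{H}_E$ (one must instead pad $\mathcal{H}_A$ or $\mathcal{H}_B$, or allow different environments on the two sides), but the paper's own proof makes the same simplification by assuming $\mathcal{H}_A=\mathcal{H}_B$, so this does not put you below the paper's level of rigor.
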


\begin{proof}[Proof sketch of Theorem~\ref{thm:isometry}]
We extend the system to include an ancillary Hilbert space $\mathcal{H}_E$, representing the environment. The combined space $\mathcal{H}_A \otimes \mathcal{H}_E$ forms a closed physical system, where the evolution of the quantum state can be described by a unitary operator $U$ acting on $\mathcal{H}_B \otimes \mathcal{H}_E$.

To find a feasible unitary $U$, we express the quantum channel $\mathcal{N}$ using its isometric extension \citep{wilde2011classical}, i.e.,
\begin{equation}\label{eq:isometric-ext}
    \mathcal{N}(\rho) = \text{Tr}_E\left(V \rho V^\dagger\right),
\end{equation}
where $V: \mathcal{H}_A \to \mathcal{H}_B \otimes \mathcal{H}_E$ is an isometry operator embedding the input state into the larger Hilbert space. For simplicity, assume $\mathcal{H}_A = \mathcal{H}_B$. The isometry operator $V$ can always be embedded into a unitary operator $U$ acting on $\mathcal{H}_B \otimes \mathcal{H}_E$, ensuring that $U$ captures the reversible evolution of the extended system.

Next, we augment the input state $\rho$ by introducing an ancillary state $\ket{\varphi} \in \mathcal{H}_E$, yielding the combined state $\rho \otimes \ket{\varphi}\bra{\varphi}$. Substituting this augmented state and the unitary operator $U$ into the isometric extension in Eqn.~(\ref{eq:isometric-ext}) gives Eqn.~(\ref{eq:isometry}). Theorem~\ref{thm:isometry} is thereby proven.
\end{proof}

The translation between the unitary operation and the quantum channels described by Theorem~\ref{thm:isometry} can be visually explained, as shown in Figure~\ref{fig:isometry}. In this diagram, the first wire corresponds to the original input state $\rho$, while the second wire represents the initial state $\ket{\varphi}$ of the environment. To determine the output of the quantum channel $\mathcal{N}$ applied to $\rho$, an $\mathcal{N}$-induced unitary operation $U$ is performed on the combined system, followed by a partial trace over the environment to discard its information.

\begin{figure}[h!]
  \centering
  \includegraphics[width = 0.6\textwidth]{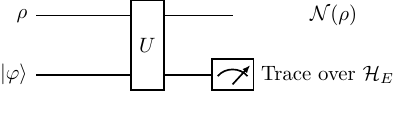}
  \caption{\textbf{The evolution of quantum states based on Theorem \ref{thm:isometry}}. }
  \label{fig:isometry}
\end{figure} 
 
\subsubsection{Quantum measurements}
In addition to quantum gates and quantum channels that manipulate quantum states, another special operation in quantum circuits is measurement. The aim of quantum measurements is to extract quantum information of the evolved state into the classical form. The quantum circuit diagram, which describes applying a unitary $U$ followed by the quantum measurement to a single-qubit state $\ket{\bm{a}}$, is shown in Figure \ref{fig:single-qubit-measure}. In particular, both types of measurements are depicted by the  `meter' symbol.

\begin{figure}[h!]
\centering
\includegraphics[width=0.48\textwidth]{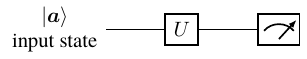}
\caption{\textbf{The quantum circuit diagram with measurement.}}
\label{fig:single-qubit-measure}  
\end{figure}  

The quantum measurements can be categorized into two types, i.e., \textit{projective measurements} and \textit{positive operator-valued measures} \citep{preskill1999lecture,nielsen2010quantum}. 

The projective measurement, which is also called the von Neumann measurement, is formally described by the Hermitian operator $A=\sum_i\lambda_i \ket{v_i}\bra{v_i}$, where $\{\lambda_i\}$ and $\{\ket{v_i}\}$ refer to the eigenvalues and eigenvectors of $A$, respectively. Supported by the Born rule \citep{nielsen2010quantum}, when the measurement operator $A\in\mathbb{C}^{2^N\times 2^N}$ is applied to an $N$-qubit state $\ket{\Phi}\in \mathbb{C}^{2^N}$, the probability of measuring any one of the eigenvalues in $\{\lambda_i\}$ is 
\begin{equation}
  \Pr(\lambda_i)= |\braket{v_i| \Phi}|^2~.
\end{equation}  
In the density operator representation, suppose that the state to be measured is $\rho\in\mathbb{C}^{2^N\times 2^N}$, the probability of measuring any one of the eigenvalues in $\{\lambda_i\}$ is 
\begin{equation}
  \Pr(\lambda_i)= \Tr(\rho \ket{v_i}\bra{v_i}).
\end{equation} 
Define $\Pi_i=\ket{v_i}\bra{v_i}$ as the $i$-th projective operator.  The complete set of projective operators $\{\Pi_i\}$ has the following properties 
\begin{equation}
    \text{1)} \Pi_i\Pi_j=\delta_{ij};~ \text{2)} \Pi^{\dagger}_i=\Pi;~ \text{3)} \Pi^2_i=\Pi;~ \text{4)} \sum_i \Pi_i = \mathbb{I}_{2^N}.
\end{equation}
A special set of projectors is defined as $\Pi_i=\ket{i}\bra{i}$ for $\forall i\in [2^N]$, which measures the probability corresponding to the basis state $\ket{i}$. For example, given the single-qubit state $\ket{\bm{\alpha}}$ in Eqn.~(\ref{eqn:sec-intro-1}), the probability to measure the computational basis state $\ket{i}$ is  
\begin{equation}\label{eqn:sec-intro-9}
  \Pr(i) = |\braket{v_i|\bm{\alpha}}|^2=|\alpha_i|^2~.
\end{equation}

The second type of quantum measurement is the \textit{positive operator-valued measures (POVM)}. A POVM is described by a collection of positive operators $0\preceq E_i$ satisfying $\sum_i E_i=\mathbb{I}$. Each positive operator $E_i$ is associated with an outcome of measurement. Specifically, applying the measurement $\{E_m\}$ to the state $\ket{\psi}$, the probability of outcome $i$ is given by
\begin{equation}
  \Pr(i) =  |\braket{\psi|E_i|\psi}|^2.
\end{equation}
In the density operator representation, suppose that the state to be measured is $\rho\in\mathbb{C}^{2^N\times 2^N}$,  the probability of outcome $i$ is given by 
\begin{equation}
  \Pr(\lambda_i)= \Tr(\rho E_i).
\end{equation}  
  We remark that the main difference between projective measurements and POVM elements is that the POVM elements do not have to be orthogonal. Due to this reason, the projective measurement is a special case of the generalized measurement (i.e., with setting $E_i=\Pi_i^{\dagger}\Pi_i$).

\begin{tcolorbox}[enhanced, 
    breakable,colback=gray!5!white,colframe=gray!75!black,title=Remark]
    
Here we address the accessible information through quantum measurements in the ideal and practical settings. For ease of discussion, suppose that the computation result corresponds to the probability amplitude $\bm{a}_1$ in the single-qubit state $\ket{\bm{a}}=\bm{a}_1\ket{0}+\bm{a}_2\ket{1}$ in Eqn.~(\ref{eqn:sec-intro-1}). To extract $\bm{a}_1$ from the quantum state into the classical form, we apply the projective operator $\Pi_1=\ket{0}\bra{0}$ to this state. Due to the law of quantum mechanics, after each measurement, the state is collapsed and the measured outcome $V_i$ can be viewed as a binary random variable with the Bernoulli distribution $\Ber(\bm{a}_i)$, i.e., $ \Pro(V_i=1)= \bm{a}_1$ and $\Pro(V_i=0)=1-\bm{a}_1$. Through applying the measurement $\Pi_i$ to $K$ copies of the state $\ket{\bm{a}}$, the obtained statistics, i.e., the sample mean, is denoted by $\bar{\bm{a}}_1=\sum_{i=1}^K V_i/K$. The large number theorem indicates $\bar{\bm{a}}_1=\bm{a}_1$ when $K\rightarrow \infty$. However,  only the finite number of measurements $K$ is allowed in practice and thus results in an estimation error.  
\end{tcolorbox}

\section{Quantum Read-in and Read-out protocols}\label{cha3:sec:in-out}

The terms \textit{quantum read-in} and \textit{read-out} refer to the processes of transferring information between classical systems and quantum systems. These are fundamental steps in the workflow of quantum machine learning shown in Figure~\ref{fig:chap1:comparison-clc-quantum}, responsible for loading data and extracting results. 

Quantum read-in and read-out pose significant bottlenecks in leveraging quantum computing to address classical computational tasks. As emphasized in \citep{aaronson2015read}, while quantum algorithms can offer exponential speed-ups in specific problem domains, these advantages can be negated if the processes of loading classical data into quantum systems (read-in) or extracting results from quantum systems (read-out) are inefficient. Specifically, the high-dimensional nature of quantum states and the constraints on measurement precision often lead to overheads that scale poorly with problem size. These challenges underscore the importance of optimizing quantum read-in and read-out protocols to realize the full potential of quantum computing. Below is a detailed introduction to quantum read-int and read-out protocols, including the basic concept and several typical algorithms.

\subsection{Quantum read-in protocols}\label{cha3:subsec:q-read-in}

Quantum read-in refers to the process of encoding classical information into quantum systems that can be manipulated by a quantum computer, which can be regarded as the \textit{classical-to-quantum mapping}. It acts as a bridge to utilize quantum algorithms to solve classical problems in quantum computing. Here, we will introduce several typical encoding methods, including basis encoding, amplitude encoding, angle encoding, and quantum random access memory. Some easy-to-use demonstrations are provided in Chapter~\ref{Chapter2:preliminary-code}.

\subsubsection{Basis encoding}\label{cha2:sec3:basis_encode}

Basis encoding is a basic method for processing classical data that can be represented in binary form. Given a classical binary vector \( \bm{x} = (\bm{x}_0, \ldots, \bm{x}_i, \ldots, \bm{x}_{N-1}) \in \{0, 1\}^N \), this encoding technique maps the vector directly into a quantum computational basis state as follows:
\begin{equation}
  \ket{\psi} = \ket{\bm{x}_0, \ldots, \bm{x}_{N-1}}.
\end{equation}
In this process, \( N \) qubits are required to represent a binary vector of length \( N \). To prepare the corresponding quantum state \( \ket{\psi} \), an \( X \) gate is applied to each qubit where the corresponding bit value is 1. The overall quantum state preparation can be expressed as:
\[
\ket{\psi} = \bigotimes_{i=0}^{N-1} X^{\bm{x}_i} \ket{0}^{\otimes N},
\]
where \( \ket{0}^{\otimes N} \) represents an initial state of all qubits set to \( |0\rangle \), and \( X^{\bm{x}_i} \) means applying the \( X \) gate to the $i$-th qubit only if \( \bm{x}_i = 1 \).

\begin{shadedbox}
\begin{example}
  (Basis encoding). Consider encoding the integer \( 6 \), which has the binary representation \( \bm{x} = (1, 1, 0) \). The corresponding quantum state is \( \ket{110} \). This state can be implemented by applying \( X \) gates to the first and second qubits, as shown in Figure~\ref{fig:basis-encode}.
\end{example}
\end{shadedbox}

\begin{figure}[h!]
  \centering
  \includegraphics[width=0.3\textwidth]{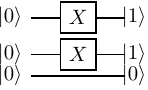}
  \caption{\textbf{Example of basis encoding for the integer $6$.}}
  \label{fig:basis-encode}
\end{figure}

\subsubsection{Amplitude encoding}\label{cha2:sec3:amplitude_encode}

Amplitude encoding is a technique that maps classical data into the amplitudes of a quantum state. Given a vector \( \bm{x} = (\bm{x}_0, \ldots, \bm{x}_i, \ldots, \bm{x}_{2^N-1}) \in \mathbb{C}^{2^N} \) containing complex values, we first apply \( L_2 \) normalization to obtain a normalized vector
\begin{equation}\label{eqn:amplitude_encode}
    \hat{\bm{x}} = \frac{\bm{x}}{\|\bm{x}\|_2},
\end{equation}
where \( \|\bm{x}\|_2 \) is the Euclidean norm. This ensures that the normalized vector \( \hat{\bm{x}} \) satisfies \( \sum_{i=0}^{2^N-1} |\hat{\bm{x}}_i|^2 = 1 \). The corresponding quantum state is then expressed as
\begin{equation}
  \ket{\psi} = \sum_{i=0}^{2^N-1} \hat{\bm{x}}_i \ket{i}
\end{equation}
with \( \ket{i} \) representing the \( N \)-qubit computational basis states.

\begin{shadedbox}
\begin{example}
(Amplitude encoding). Consider encoding a normalized vector \( \bm{x} = (\bm{x}_0, \bm{x}_1) \in \mathbb{C}^2 \) into the quantum state \( \ket{\psi} = \bm{x}_0 \ket{0} + \bm{x}_1 \ket{1} \). This can be achieved by applying a rotation gate \( U=R_Y(\theta) \) to the initial state \( \ket{0} \), where \( \theta = 2 \arccos(\bm{x}_0) \).
\end{example}
\end{shadedbox}

Amplitude encoding is highly efficient because it allows an exponentially large vector of length \( 2^N \) to be represented using only \( N \) qubits. However, preparing this quantum state requires constructing a unitary transformation \( U \) such that \( \ket{\psi} = U \ket{0}^{\otimes N} \). Efficiently finding such transformations can be challenging and is an active area of research (see Chapter~\ref{chap-preliminary-sec:Remark} for the discussions).

\subsubsection{Angle encoding}\label{cha2:sec3:angle_encode}

Basis encoding and amplitude encoding are fundamental techniques for mapping classical data to quantum states, but each comes with distinct resource costs. Basis encoding requires a number of qubits equal to the dimensionality of the binary representation of classical data and necessitates minimal gate operations for state preparation. In contrast, amplitude encoding is highly compact in terms of qubits, using only the logarithmic of the data dimensionality, but it involves a significant gate complexity.  

To address this limitation, an alternative is \textit{angle encoding}. The core idea of angle encoding is to embed classical data into a quantum state through rotation angles.

Given a real-valued vector \( \bm{x} = (\bm{x}_0, \ldots, \bm{x}_i, \ldots, \bm{x}_{N-1}) \in \mathbb{R}^{N} \), the encoded quantum state can be represented as:
\begin{equation}\label{eq:angle_encode}
  \ket{\psi} = \bigotimes_{i=0}^{N-1} R_{\sigma}(\bm{x}_i) \ket{0}^{\otimes N} = \bigotimes_{i=0}^{N-1} \exp\left(-i \frac{\bm{x}_i}{2}\sigma\right) \ket{0}^{\otimes N},
\end{equation}
where \( \sigma \in \{X, Y, Z\} \) denotes a Pauli operator, as defined in Figure~\ref{tab:Q-gates}. Since Pauli rotation gates are \( 2\pi \)-periodic, it is essential to scale each element \( \bm{x}_i \) into the range \([0, \pi)\) to ensure that different values are encoded into distinct quantum states. 

A key advantage of angle encoding is its ability to introduce nonlinearity. By mapping classical data into the parameters of quantum rotation gates, angle encoding leverages trigonometric functions to naturally capture non-linear relationships. This property is particularly important in quantum machine learning, where nonlinearity is essential for models to learn complex patterns in data, such as non-linearly separable decision boundaries.

\subsubsection{Quantum Random Access Memory (QRAM)}

Basis encoding, amplitude encoding, and angle encoding are generally designed to encode a single item of data at one time, which makes it challenging to process complicated classical datasets. The QRAM \citep{giovannetti2008quantum}, analogous to classical RAM, aims to simultaneously store, address, and access multiple quantum states.    

QRAM consists of two types of qubits: data qubits for storing classical data and address qubits for addressing. Given a classical dataset $\mathcal{D}=\{\bm{x}^{(j)}\}_{j=0}^{M-1}$ with $M$ training examples, assume we separately encode each data item into a quantum state $\ket{\bm{x}^{(j)}}_d$ using one of the encoding methods above. The QRAM can be constructed as follows: (1) Prepare an $N_a$-qubit address register where $N_a=\lceil \log_2(M) \rceil$; (2) Associate each data state $\ket{\bm{x}^{(j)}}_d$ with corresponding address state $\ket{j}_a$. The whole dataset is therefore encoded into a quantum state of the form
\begin{equation}
  \ket{\mathcal{D}}=\sum_{j=0}^{M-1}\frac{1}{\sqrt{M}}\ket{j}_a\ket{\bm{x}^{(j)}}_d.
\end{equation}

\begin{tcolorbox}[enhanced, 
    breakable,colback=gray!5!white,colframe=gray!75!black,title=Remark] The subscript $d$ in $\ket{\bm{x}^{(j)}}_d$ indicates that this quantum state resides in the data register, differentiating it from address qubits, which are denoted with the subscript $a$ (e.g., $\ket{j}_a$). This convention helps to distinguish between the roles of data and address qubits in QRAM operations.
\end{tcolorbox}

\begin{shadedbox}
\begin{example}
(QRAM Encoding). Consider a dataset \( \mathcal{D} = \{2, 3\} \). Using basis encoding, each sample is first converted into a two-qubit quantum state: \( \{\ket{10}_d, \ket{11}_d\} \). Each data state is then assigned an address state, \( \ket{0}_a \) for the first state \( \ket{10}_d \) and \( \ket{1}_a \) for the second state \( \ket{11}_d \). The resulting QRAM-encoded state is:
\begin{equation}
    \ket{\mathcal{D}} = \frac{1}{\sqrt{2}} \left( \ket{0}_a \ket{10}_d + \ket{1}_a \ket{11}_d \right).
\end{equation}
  
The corresponding quantum circuit for implementing this state is shown in Fig.~\ref{fig:qram-encode}.
\end{example}
\end{shadedbox}

\begin{figure}[h!]
  \centering
  \includegraphics[width=0.5\textwidth]{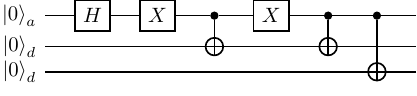}
\caption{\textbf{Example of QRAM encoding for the dataset $\mathcal{D}=\{2, 3\}$.}}
  \label{fig:qram-encode}
    \end{figure}
    
QRAM allows the dataset \( \mathcal{D} \) to be stored in a coherent quantum superposition, enabling simultaneous access to all data items through the entanglement of address and data qubits. While QRAM is theoretically powerful, its practical implementation remains a significant challenge due to the need for a large number of qubits and quantum operations (see Chapter~\ref{chap-preliminary-sec:Remark} for the discussion).

\subsection{Quantum read-out protocols}\label{chapter2:Sec-2.3.2-readout}

Quantum read-out refers to the process of translating the quantum state resulting from a quantum computation into classical data, enabling further processing, interpretation, or optimization in classical systems. This process can be considered the inverse operation of quantum read-in, representing a quantum-to-classical mapping. 

Based on the completeness of the information obtained during the read-out process, quantum read-out protocols can be broadly categorized into two types, i.e., \textit{full information and partial information read-out protocols}.  These protocols enable tailored read-out processes that match the requirements of different quantum applications, ranging from tomography to optimization and machine learning tasks.

\subsubsection{Full information read-out protocol}

The full information read-out protocol is used to completely reconstruct the quantum state, which is used to fully understand the quantum system's behavior. The most general approach to implementing this protocol is through quantum state tomography (QST) \citep{vogel1989determination}. 

QST involves performing quantum measurements, gathering measurement statistics, and using classical post-processing to reconstruct the quantum state. In what follows, we introduce two reconstruction techniques broadly used in QST, i.e., linear inversion \citep{qi2013quantum} and maximum likelihood estimation (MLE) \citep{hradil1997quantum}.

\medskip
\noindent\textit{QST with linear inversion.} Linear inversion is a straightforward method to reconstruct the quantum state from measurement data by directly solving linear systems of equations. Let $\rho$ be the explored quantum state and $\{E_i\}$ be a set of measurements. According to the Born rule, the probability of measurement outcome $i$ is given by
\begin{equation}
   \Pr(E_i|\rho) = \Tr(\rho E_i).
\end{equation}
In practice, $\Pr(E_i|\rho)$ is not directly accessible but is approximated by the frequency $p_i$ of measurement outcome $i$ over multiple measurements. By the law of large numbers, as the number of measurements increases, $p_i$ converges to the true probability $\Pr(E_i|\rho)$. Collecting measurements across all bases, we obtain a linear system
\begin{equation}
    \begin{bmatrix}
        \Tr(\rho E_0) \\
        \Tr(\rho E_1) \\
        \vdots
    \end{bmatrix} =
    \begin{bmatrix}
        \vec{E}_0^\dagger \cdot \vec{\rho} \\
        \vec{E}_1^\dagger \cdot \vec{\rho} \\
        \vdots
    \end{bmatrix} = A\vec{\rho} \approx \bm{p} = \begin{bmatrix}
        p_0 \\
        p_1 \\
        \vdots
    \end{bmatrix},
\end{equation}
where $\vec{E}$ and $\vec{\rho}$ refer to the vector representations of matrices $E_i$ and $\rho$, respectively. The vector representation of a matrix is obtained by stacking its columns into a single-column vector. For example, the vector representation of a $2\times 2$ identity matrix is $\vec{\mathbb{I}}_2=[1,0,0,1]^T$. The matrix $A$ is constructed such that each row corresponds to the vector representation of the measurement operator, i.e., $A=[\vec{E}_0^\dagger;\vec{E}_1^\dagger;...]$. The vector $\bm{p}$ contains the measured frequencies $p_i$.

Assuming the measurements are tomographically complete, i.e., $\{E_i\}$ forms a basis for the system's Hilbert space, we can reconstruct $\rho$ by solving the following linear systems of equations, i.e., 
\begin{equation}
    \vec{\rho} = (A^T A)^{-1} A^T \bm{p}.
\end{equation}

A common strategy is to use Pauli operators as measurement bases $\{E_i\}$. The density matrix $\rho$ of an $N$-qubit system can be expanded in terms of the Pauli basis as:
\begin{equation}
    \rho = \frac{1}{2^N} \sum_{i=0}^{4^N - 1} c_i P_i, \quad c_i \in \mathbb{R}, \quad P_i \in \{I, X, Y, Z\}^{\otimes N},
\end{equation}
where the coefficients $c_i$ represent projections of $\rho$ onto the Pauli basis, calculated as:
\begin{equation}
    c_i = \Tr(\rho P_i).
\end{equation}
To fully reconstruct $\rho$, the quantum state must theoretically be measured in all $4^N - 1$ Pauli bases to estimate each $c_i$.

\begin{tcolorbox}[enhanced, 
    breakable,colback=gray!5!white,colframe=gray!75!black,title=Remark] The Pauli basis consists of four Hermitian matrics $I$, $X$, $Y$ and $Z$ introduced in Figure~\ref{tab:Q-gates}. These operators form a complete basis for the space of $2\times 2$ complex matrices. For $N$-qubit systems, the tensor products of these single-qubit operators span the space of $2^N \times 2^N$ complex matrix. This makes the Pauli basis essential for representing quantum states, observables, and their transformations.
\end{tcolorbox}

A key limitation of linear inversion is that it does not guarantee a valid density matrix, as the estimated quantum state may lack properties such as positive semi-definiteness in Definition~\ref{def:psd}, especially with limited measurements. 

\medskip
\noindent\textit{Maximum Likelihood Estimation (MLE).} To ensure physical constraints on the quantum state during reconstruction, MLE is introduced. MLE reconstructs $\rho$ by maximizing the likelihood of observing the measurement outcomes, subject to the constraints that $\rho$ is Hermitian, positive semi-definite, and trace one. The likelihood function is given by
\begin{equation}
    L(\rho) = \prod_i \Tr(\rho E_i)^{p_i}.
\end{equation}
Reconstructing $\rho$ then reduces to solving the following optimization problem
\begin{equation}
    \argmax_{\rho'} L(\rho'), \quad \text{s.t.} \quad \rho' \succeq 0, \quad \rho' = \rho'^\dagger, \quad \Tr(\rho') = 1.
\end{equation}
Solving this typically requires iterative numerical optimization, which can be computationally intensive.

\begin{tcolorbox}[enhanced, 
    breakable,colback=gray!5!white,colframe=gray!75!black,title=Remark] A common challenge across all quantum state tomography (QST) methods, including linear inversion and MLE, is the \textit{exponential computational cost} with respect to the number of qubits. Specifically, the number of parameters to reconstruct grows exponentially with the system size, making QST methods feasible only for small-qubit systems in practice. This limitation underscores the need for scalable approaches to quantum state characterization in larger quantum systems.
\end{tcolorbox} 

\subsubsection{Partial information read-out protocol}

The partial information read-out protocol focuses on extracting specific, task-relevant information from a quantum state without reconstructing the entire density matrix. This protocol is efficient and can be applied to comprehend large-qubit systems. According to the type of collected information, current partial read-out protocols can be categorized into three classes, i.e., sampling, expectation value estimation, and shadow tomography.

\medskip
\noindent\textit{Sampling.} Sampling involves repeatedly measuring the quantum state in the computational basis to estimate the probability distribution over bit-strings. Given a state $\ket{\psi}$, the probability of observing a specific computational basis \( \ket{i} \) is given by
\begin{equation}
  \Pr(i) = \left| \braket{\psi|i}\right|^2.
\end{equation}
The empirical frequency of each outcome from repeated measurements provides an estimate of $\Pr(i)$. Sampling is particularly useful in the following applications
\begin{itemize}
    \item {Sampling over complicated distributions.} Quantum states can represent complex probability distributions that are difficult to sample classically. Quantum sampling allows efficient exploration of these distributions for specific applications, such as probabilistic modeling and Markov chain Monte Carlo \citep{layden2023quantum}.
    \item {Optimization problems.} Sampling can identify high-probability bitstrings in quantum algorithms, such as the Quantum Approximate Optimization Algorithm \citep{farhi2014quantum} and Grover search \citep{grover1996fast}, where these bitstrings often correspond to optimal or near-optimal solutions.
    \item {Verification.} Sampling facilitates the comparison of a quantum circuit's output with theoretical expectations or desired distributions, helping to verify the quantum systems \citep{boixo2018characterizing,bouland2019complexity,arute2019quantum}.
\end{itemize}

\medskip
\noindent\textit{Expectation value estimation.} For a wide class of quantum computation problems, such as quantum chemistry and quantum many-body physics, the computation outcome refers to the estimation of the expectation values of certain observables on the evolved quantum state \citep{kandala2017hardware,tilly2022variational}.

An observable $O\in \mathbb{C}^{2^N \times 2^N}$ mentioned here is a Hermitian operator that represents a measurable physical quantity. For an $N$-qubit system, $O$ can be expressed in terms of a Pauli basis expansion, i.e.,
\begin{equation}
    O = \sum_{i=1}^{4^N} \alpha_i P_i, \quad P_i \in \{\mathbb{I}_2, X, Y, Z\}^{\otimes N}, \quad \alpha_i \in \mathbb{R}.
\end{equation}
where $P_i$ denotes the $i$-th $N$-qubit Pauli string.

The expectation value of an observable $O$ with respect to an $N$-qubit state $\rho$ is 
\begin{equation}\label{eq:exp}
    \braket{O} = \Tr(\rho O).
\end{equation}

Substituting the Pauli expansion of $O$, the expectation value is expressed as the weighted sum of the expectation values of each Pauli basis term due to the linearity of the trace operation, i.e.,
\begin{equation}
    \braket{O} = \sum_{i=1}^{4^N} \alpha_i \Tr(\rho P_i) \equiv \sum_{i=1}^{4^N} \alpha_i \braket{P_i}.
\end{equation}

To estimate the expectation value of each individual Pauli term \( P_i \), the quantum state \( \rho \) must be measured on the basis of the eigenstates of \( P_i \). The measurement outcome is then associated with the corresponding eigenvalue of \( P_i \). Notably, the eigenstates and eigenvalues of \( P_i \) can be derived from the eigenstates and eigenvalues of its constituent single-qubit Pauli operators \( P_{ij} \):
\begin{itemize}
    \item Eigenvalues. The eigenvalues of \( P_i \) are the product of the eigenvalues of each single-qubit Pauli operator \( P_{ij} \), i.e., $P_i = \otimes_{j=1}^N P_{ij}$. For example, if the eigenvalues of \( P_{ij} \) are \( \pm 1 \), then the eigenvalues of \( P_i \) are products of these individual eigenvalues and remain in \( \{ \pm 1 \} \).
    \item Eigenstates. The eigenstates of \( P_i \) are the tensor products of the eigenstates of the single-qubit Pauli operators \( P_{ij} \). If \( \ket{\lambda_{ijk}} \) is one of the eigenstate of \( P_{ij} \), then the corresponding eigenstate of \( P_i \) is \( \bigotimes_{j=1}^N \ket{\lambda_{ijk}} \).
\end{itemize}
This structure allows \( P_i \) to be analyzed in terms of its simpler single-qubit components, significantly simplifying the process of determining the measurement basis for expectation value estimation. By repeating the measurements $M$ times and obtaining the corresponding measurement results $\{r_j\}_{j=1}^{M}$, the statistical value of $\braket{P_i}$ can be estimated by

\begin{equation}
    \braket{\hat{P}_i} = \frac{1}{M} \sum_{j=1}^{M} r_j.
\end{equation}
The expectation value of the observable $O$ is therefore statistically estimated by $\braket{\hat{O}}=\sum_{i=0}^{K-1}\alpha_i \braket{\hat{P}_i}$.

\begin{tcolorbox}[enhanced, 
    breakable,colback=gray!5!white,colframe=gray!75!black,title=Remark]
A key step in the process is to measure the quantum system in the basis of the eigenstates of $P_i$. If $P_i$ is diagonal in the computational basis (e.g., a tensor product of Pauli-Z operators), we can directly measure the state without additional operations. Otherwise (e.g., for Pauli-X or Pauli-Y operators), we need to apply a unitary transformation to rotate the quantum state into the desired basis. Specifically, when measuring in the Pauli-X basis (i.e., $\ket{+}$ and $\ket{-}$), a Hadamard gate $H$ is applied to the state $\rho$, i.e.,
\begin{equation}
    \rho' = H \rho H.
\end{equation}

When measuring in the Pauli-Y basis (i.e., $\frac{\ket{0} + i\ket{1}}{\sqrt{2}}$ and $\frac{\ket{0} - i\ket{1}}{\sqrt{2}}$), a phase gate $S = \sqrt{Z}$ followed by a Hadamard gate $H$ is applied, i.e.,
\begin{equation}
    \rho' = S^\dagger H \rho H S.
\end{equation}
Measuring the state $\rho'$ in the computational basis is equivalent to measuring the state $\rho$ in the corresponding Pauli basis.
\end{tcolorbox}

\medskip
\noindent\textit{Shadow tomography.} Performing full QST requires an exponential number of copies of the quantum state, making it impractical for systems beyond a small number of qubits. Instead of reconstructing the complete density matrix, shadow tomography \citet{aaronson2018shadow} focuses on efficiently obtaining specific properties of a quantum state, such as the expectation values of many observables.

\begin{definition}[Shadow tomography, \cite{aaronson2018shadow}]\label{def:shadow-tomo}
    Given an unknown $D$-dimensional quantum state $\rho$, as well as $M$ observables $O_1,...,O_M$, output real numbers $b_1,...,b_M$ such that $\left|b_i-\Tr(O_i\rho)\right|\leq\epsilon$ for all $i$, with success probability at least $1-\delta$. Do this via a measurement of $\rho^{\otimes k}$, where $k=k(D,M,\epsilon,\delta)$ is as small as possible.
\end{definition}

\citet{aaronson2018shadow} proved that the shadow tomography problem can be solved using a \textit{polylogarithmic} number of copies of states in terms of the dimension $D$ and number $M$ of observables. This result demonstrates that it is possible to estimate the expectation values of exponentially many observables for a quantum state of exponential dimension using only a polynomial number of measurements.

The central idea of shadow tomography is to create a compact measurement classical representation, or ``shadow'', of a quantum state that encodes sufficient information to estimate many properties of the state. Building on this concept, \citet{huang2020predicting} proposed a more practical and efficient approach, termed \textit{classical shadow}, which uses randomized measurements to construct this classical representation. The classical shadow approach consists of the following steps:

\begin{enumerate}
    \item {Randomized measurements}. Perform random unitary transformations on the quantum state and measure the transformed state in the computational basis. These random transformations can be drawn from specific ensembles, such as Clifford gates or local random rotations, which ensure that the measurement outcomes capture the essential properties of the quantum state.
    
    \item {Classical shadow construction.} Using the measurement results, construct a classical shadow of the quantum state. This compact representation encodes the quantum state in a way that allows for the efficient estimation of properties.

    \item {Property estimation.} Use the classical shadow to compute the desired properties of the quantum state, such as expectation values of specific observables, subsystem entropies, or fidelities with known states.
\end{enumerate}

Shadow tomography requires exponentially fewer measurements compared to full quantum state tomography, making it a practical solution for large-scale quantum systems. Moreover, the shadow of a quantum state serves as a versatile representation, enabling the efficient estimation of various properties such as expectation values, entanglement measures, and subsystem correlations.
 
\section{Quantum Linear Algebra}\label{chap2:preliminary-sec:linearAlgebra}

We next introduce quantum linear algebra, a potent toolbox for designing various \textsf{FTQC}-based algorithms introduced in Chapter~\ref{chapt1-sec-progress-FTQC}. For clarity, we start with the definition of block encoding in Chapter~\ref{chapt-3-sec-blockencoding}, which is about how to implement a matrix on the quantum computer. Based on this, we introduce some basic arithmetic rules for block encodings in Chapter~\ref{chapt-3-sect-arithmeticblockencoding}, like the multiplication, linear combination, and the Hadamard product.
Finally, in Chapter~\ref{chapt-3-sec-qsvt}, we introduce the quantum singular value transformation method, which enables one to implement functions onto singular values of block-encoded matrices.

\subsection{Block encoding\label{chapt-3-sec-blockencoding}}

For many computational problems, such as solving linear equations, we need to deal with a non-unitary matrix $A$.
However, remember that quantum gates as discussed in Chapter~\ref{cha3:sec:circuit} are unitaries. Therefore, if we want to solve these problems on quantum computers, it is essential to consider how to encode the matrix $A$ into a unitary.
This challenge can be addressed by the block encoding technique.

\begin{definition}[\label{blkencod}Block encoding, \cite{gilyenquantum2019}]
    Suppose that $A$ is an $N$-qubit operator, $\alpha, \varepsilon\geq 0$ and $a\in \mathbb{N}$. Then we say that the $(a+ N)$-qubit unitary $U$ is an $(\alpha,a,\varepsilon)$-block-encoding of $A$ if
    \begin{align}
        \|A-\alpha(\bra{0}^{\otimes a}&\otimes \mathbb{I}_{2^N})U(\ket{0}^{\otimes a}\otimes \mathbb{I}_{2^N})\|\leq \varepsilon.
    \end{align}
    Here, $\|\cdot\|$ represents the spectral norm, i.e., the largest singular value of the matrix.
\end{definition}

\begin{figure}
    \centering
\includegraphics[width=0.4\textwidth]{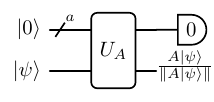}
    \caption{\textbf{Quantum circuit for block encoding.}}
    \label{fig:blockencoding}
\end{figure}

The circuit implementation of the block encoding is illustrated in Figure~\ref{fig:blockencoding}. The scaled matrix $A/\alpha$ interacts with the state $\ket{\psi}$ if the first qubit registers are measured as $\ket{0}$. By definition, we have $\alpha\geq \|A\|$ and any unitary $U$ is an $(1,0,0)$-block encoding of itself.

\begin{fact}\label{blockencoding.linearcombination}
    (Block encoding via the linear combination of unitaries (LCU) method, \cite{gilyenquantum2019}). Suppose that $A$ can be written in the form 
\begin{align}
    A=\sum_{k} \alpha_k U_k,
\end{align}
where $\{\alpha_k\}$ are real numbers and $U_k$ are some easily prepared unitaries such as Pauli strings.
Then, the LCU method allows us to have the access to two unitaries, i.e.,
    \begin{align}
        U_{\mathrm{SEL}}=&\sum_{k} |k\rangle\langle k|\otimes U_k,\\
        U_{\mathrm{PREP}}:&\ket{0}\rightarrow \frac{1}{\sqrt{\|\vec{\alpha}\|_1}}\sum_k \sqrt{\alpha_k}\ket{k},
    \end{align}
    where $\vec{\alpha}=(\alpha_1,\alpha_2,\dots)$.
    
After simple mathematical analysis, one can obtain $U=(U^\dagger_{\mathrm{PREP}}\otimes \mathbb{I}_{2^N})U_{\mathrm{SEL}}(U_{\mathrm{PREP}}\otimes \mathbb{I}_{2^N})$ is a $(\|\vec{\alpha}\|_1, m,0)$-block-encoding of $A$.
Here, $\mathbb{I}_{2^N}$ is the identity operator of $N$-qubit size and $\|\cdot\|_1$ denotes the $\ell_1$ norm of a given vector.
\end{fact}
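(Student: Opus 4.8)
The plan is to verify Definition~\ref{blkencod} directly, by computing the top-left block of $U$ and showing it equals \emph{exactly} $A/\|\vec{\alpha}\|_1$, so that the approximation parameter is $\varepsilon=0$. Unitarity of $U$ is immediate, since it is a product of the three unitaries $U^\dagger_{\mathrm{PREP}}\otimes\mathbb{I}_{2^N}$, $U_{\mathrm{SEL}}$, and $U_{\mathrm{PREP}}\otimes\mathbb{I}_{2^N}$; here one tacitly completes $U_{\mathrm{SEL}}$ to act as the identity on those ancilla basis states $\ket{k}$ that do not index a term of the sum, and completes $U_{\mathrm{PREP}}$ to a full unitary on the $m$-qubit ancilla register (with $2^m$ no smaller than the number of terms). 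Neither completion affects the block computed below, because the relevant amplitudes live only on the indexed subspace.

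The key step is to follow the action of $U$ on a generic input $\ket{0}^{\otimes m}\otimes\ket{\psi}$, where $\ket{\psi}$ is an arbitrary $N$-qubit state. Applying $U_{\mathrm{PREP}}\otimes\mathbb{I}_{2^N}$ gives $\tfrac{1}{\sqrt{\|\vec{\alpha}\|_1}}\sum_k\sqrt{\alpha_k}\,\ket{k}\otimes\ket{\psi}$; then $U_{\mathrm{SEL}}$ maps the $k$-th branch to $\ket{k}\otimes U_k\ket{\psi}$, yielding $\tfrac{1}{\sqrt{\|\vec{\alpha}\|_1}}\sum_k\sqrt{\alpha_k}\,\ket{k}\otimes U_k\ket{\psi}$; and finally $U^\dagger_{\mathrm{PREP}}\otimes\mathbb{I}_{2^N}$ produces $\tfrac{1}{\sqrt{\|\vec{\alpha}\|_1}}\sum_k\sqrt{\alpha_k}\,(U^\dagger_{\mathrm{PREP}}\ket{k})\otimes U_k\ket{\psi}$. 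Projecting the ancilla onto $\bra{0}^{\otimes m}$ and using $\bra{0}^{\otimes m}U^\dagger_{\mathrm{PREP}}\ket{k}=\big(U_{\mathrm{PREP}}\ket{0}^{\otimes m}\big)^\dagger\ket{k}=\tfrac{1}{\sqrt{\|\vec{\alpha}\|_1}}\sum_j\sqrt{\alpha_j}\,\braket{j|k}=\sqrt{\alpha_k}/\sqrt{\|\vec{\alpha}\|_1}$ (the $\alpha_k$ being real and nonnegative makes $\sqrt{\alpha_k}$ its own conjugate), the whole expression collapses to $\tfrac{1}{\|\vec{\alpha}\|_1}\sum_k\alpha_k U_k\ket{\psi}=\tfrac{1}{\|\vec{\alpha}\|_1}A\ket{\psi}$. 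Since $\ket{\psi}$ was arbitrary, $(\bra{0}^{\otimes m}\otimes\mathbb{I}_{2^N})\,U\,(\ket{0}^{\otimes m}\otimes\mathbb{I}_{2^N})=A/\|\vec{\alpha}\|_1$, hence $\big\|A-\|\vec{\alpha}\|_1(\bra{0}^{\otimes m}\otimes\mathbb{I}_{2^N})U(\ket{0}^{\otimes m}\otimes\mathbb{I}_{2^N})\big\|=0$, which is precisely the asserted $(\|\vec{\alpha}\|_1,m,0)$-block encoding.

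The computation itself is a two-line manipulation, so the only real obstacles are bookkeeping: (i) if some $\alpha_k$ are negative or complex one must first absorb the sign or phase into $U_k$ so that $\sqrt{\alpha_k}$ is well defined and $U_{\mathrm{PREP}}$ prepares a genuine quantum state; (ii) one should confirm that the ``padded'' $U_{\mathrm{SEL}}$ and $U_{\mathrm{PREP}}$ are honest unitaries on the full $2^m$-dimensional ancilla space; and (iii) one must check that the advertised normalization $\|\vec{\alpha}\|_1$ and ancilla count $m$ match what the construction produces. None of these touch the core identity above, so the proof is essentially self-contained once the circuit is unwound.
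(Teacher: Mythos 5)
Your proof is correct and is exactly the direct-computation verification the paper alludes to with the phrase ``after simple mathematical analysis'' (the result is stated as a Fact with the details deferred to \citet{gilyenquantum2019}): unwinding $U_{\mathrm{PREP}}$, $U_{\mathrm{SEL}}$, $U_{\mathrm{PREP}}^\dagger$ on $\ket{0}^{\otimes m}\otimes\ket{\psi}$ and projecting the ancilla onto $\bra{0}^{\otimes m}$ gives $(\bra{0}^{\otimes m}\otimes\mathbb{I}_{2^N})U(\ket{0}^{\otimes m}\otimes\mathbb{I}_{2^N})=A/\|\vec{\alpha}\|_1$ with zero error, which is the claimed $(\|\vec{\alpha}\|_1,m,0)$-block-encoding. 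Your bookkeeping remarks---absorbing signs or phases of the $\alpha_k$ into the $U_k$ so that $U_{\mathrm{PREP}}$ is a genuine state preparation, and padding $U_{\mathrm{SEL}}$ and $U_{\mathrm{PREP}}$ to honest unitaries on the full $2^m$-dimensional ancilla register---correctly address the only subtleties the informal statement glosses over.
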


Similar to the definition of block encoding, we can also define the state preparation encoding.
\begin{definition}[State preparation encoding\label{def.stateencoding} \cite{guo2024quantumlinear2024}]
We say a unitary $U_{\psi}$ is an $(\alpha,a,\epsilon)$-state-encoding of an $N$-qubit quantum state $\ket{\psi}$ if
\begin{align}
    \norm{\ket{\psi}-\alpha (\bra{0^{a}}\otimes I)U_{\psi}\ket{0^{a+N}}}_{\infty} \leq \epsilon,
\end{align} 
where $\|\cdot\|_\infty$ denotes the infinity norm of the given vector. 
\end{definition}

More straightforwardly, the $(\alpha,a,\epsilon)$-state-encoding $U_\psi$ prepares the state 
\begin{align}
    U_{\psi}\ket{0}\ket{0}= \frac{1}{\alpha}\ket{0}\ket{\psi'}+\sqrt{1-\alpha^2}\ket{1}\ket{\mathrm{bad}},\notag
\end{align}
where $\norm{\ket{\psi'}-\ket{\psi}}_{\infty}\leq \epsilon$ and $\ket{\mathrm{bad}}$ is an arbitrary quantum state.
One can further prepare the state $\ket{\psi'}$ by using $\mathcal{O}(\alpha)$ times of amplitude amplification \citep{brassard2002quantum}.
The state preparation encoding can be understood as a specific case of the block encoding, i.e., it is the block encoding of a $\mathbb{C}^{2^N \times 1}$ matrix.

\subsection{Basic arithmetic for block encodings\label{chapt-3-sect-arithmeticblockencoding}}
Now we introduce some arithmetic rules for block encoding unitaries. The following two facts describe the product and linear combination rules of block encoding unitaries, respectively.

\begin{fact}[Product of block encoding,  \cite{gilyenquantum2019}]\label{blockencoding.product}
    If $U$ is an $(\alpha,a,\delta)$-block encoding of an $N$-qubit operator $A$, and $V$ is a $(\beta,b,\varepsilon)$-block encoding of an $N$-qubit operator $B$, then $(\mathbb{I}_{2^b}\otimes U)(\mathbb{I}_{2^a}\otimes V)$ is an $(\alpha\beta,a+b,\alpha\varepsilon+\beta\delta)$-block-encoding of $AB$.
    Here, $\mathbb{I}_{2^a}$ is the identity operator of $a$-qubit size.
\end{fact}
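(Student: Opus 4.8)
The plan is to reduce the statement to the two spectral-norm estimates that come packaged with the given block encodings, together with submultiplicativity of the spectral norm. First I would name the ``decoded'' operators: set
\[
\tilde A := \alpha\big(\bra{0}^{\otimes a}\otimes \mathbb{I}_{2^N}\big)U\big(\ket{0}^{\otimes a}\otimes \mathbb{I}_{2^N}\big),
\qquad
\tilde B := \beta\big(\bra{0}^{\otimes b}\otimes \mathbb{I}_{2^N}\big)V\big(\ket{0}^{\otimes b}\otimes \mathbb{I}_{2^N}\big),
\]
so that Definition~\ref{blkencod} gives $\|A-\tilde A\|\le \delta$ and $\|B-\tilde B\|\le \varepsilon$, while $\|\tilde A\|\le\alpha$ and $\|\tilde B\|\le\beta$ because each is $\alpha$ (resp.\ $\beta$) times a compression of a unitary between an isometry and a coisometry. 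Also recall from the discussion after Definition~\ref{blkencod} that $\|A\|\le\alpha$ and $\|B\|\le\beta$.

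The first substantive step is the algebraic identity
\[
\big(\bra{0}^{\otimes(a+b)}\otimes\mathbb{I}_{2^N}\big)(\mathbb{I}_{2^b}\otimes U)(\mathbb{I}_{2^a}\otimes V)\big(\ket{0}^{\otimes(a+b)}\otimes\mathbb{I}_{2^N}\big)=\tfrac{1}{\alpha\beta}\,\tilde A\,\tilde B .
\]
To establish it I would keep the two ancilla registers disjoint (register $a$ belonging to $U$, register $b$ belonging to $V$, with the $N$ system qubits shared) and exploit that $\mathbb{I}_{2^a}\otimes V$ acts as the identity on register $a$ and $\mathbb{I}_{2^b}\otimes U$ acts as the identity on register $b$. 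Concretely, $(\mathbb{I}_{2^a}\otimes V)\big(\ket{0}^{\otimes a}\ket{0}^{\otimes b}\otimes\mathbb{I}\big)=\ket{0}^{\otimes a}\otimes\big(V(\ket{0}^{\otimes b}\otimes\mathbb{I})\big)$, and dually $\bra{0}^{\otimes b}$ slides through $\mathbb{I}_{2^b}\otimes U$; since the surviving factors $\ket{0}^{\otimes a}\bra{0}^{\otimes a}$ and $\ket{0}^{\otimes b}\bra{0}^{\otimes b}$ live on different registers they commute, and the whole expression collapses to $\big(\tfrac1\alpha\tilde A\big)\big(\tfrac1\beta\tilde B\big)$.

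It then remains to bound $\|AB-\tilde A\tilde B\|$. Inserting the telescoping term $A\tilde B$ and using the triangle inequality and submultiplicativity,
\[
\|AB-\tilde A\tilde B\|=\big\|A(B-\tilde B)+(A-\tilde A)\tilde B\big\|\le \|A\|\,\|B-\tilde B\|+\|A-\tilde A\|\,\|\tilde B\|\le \alpha\varepsilon+\delta\beta .
\]
Multiplying the displayed identity by $\alpha\beta$, this is exactly the statement that $(\mathbb{I}_{2^b}\otimes U)(\mathbb{I}_{2^a}\otimes V)$ — which is unitary, being a product of tensor products of unitaries with identities — obeys the defining inequality of an $(\alpha\beta,\,a+b,\,\alpha\varepsilon+\beta\delta)$-block-encoding of $AB$.

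The only delicate point is the register bookkeeping in the identity step: one must be scrupulous that the ancilla registers of $U$ and $V$ are treated as separate blocks (so the two zero-projectors genuinely commute and factor through), and that the shorthand $(\mathbb{I}_{2^b}\otimes U)(\mathbb{I}_{2^a}\otimes V)$ is read with the implicit reordering that places both ancilla registers alongside the shared $N$-qubit system. Once that is pinned down, the remainder is the two-line norm estimate above.
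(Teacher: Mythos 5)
Your proof is correct and is precisely the ``direct computation'' the paper alludes to: the paper states this fact without proof (citing \citet{gilyenquantum2019}), and your argument --- the register-factoring identity showing that the top-left block of $(\mathbb{I}_{2^b}\otimes U)(\mathbb{I}_{2^a}\otimes V)$ is $\tilde A\tilde B/(\alpha\beta)$, followed by the telescoping bound $\|AB-\tilde A\tilde B\|\le \|A\|\,\|B-\tilde B\|+\|A-\tilde A\|\,\|\tilde B\|\le\alpha\varepsilon+\beta\delta$ --- is exactly the standard one. The only implicit point is that $\|A\|\le\alpha$ holds exactly only when $\delta=0$ (in general one gets $\|A\|\le\alpha+\delta$), but the paper itself adopts the convention $\alpha\ge\|A\|$ right after the block-encoding definition, so your use of it is consistent with the text.
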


\begin{fact}[Linear combination of block encoding,  \cite{gilyenquantum2019}]\label{LCU.blockencoding}
    Let $A=\sum_k x_k A_k$ be an $s$-qubit operator with $\beta\geq \|\vec{x}\|_1$ and $\varepsilon_1>0$, where $\vec{x}$ is the vector of coefficients.
    Suppose we have access to 
    \begin{align}
        P_L\ket{0}&=\sum_k c_k\ket{k},\\
        P_R\ket{0}&=\sum_k d_k\ket{k},\\
        W&=\sum_k |k\rangle\langle k|\otimes U_k+\left(\bigl(\mathcal{I}_s-\sum_k |k\rangle\langle k|\bigl)\otimes \mathcal{I}_a\otimes \mathcal{I}_b\right),
    \end{align}
    where $\sum_k |\beta c_k^*d_k-x_k|\leq \varepsilon_1$ and $U_k$ is an $(\alpha,a,\varepsilon_2)$-block-encoding of $A_k$.
    Then we can implement an $(\alpha\beta, a+b,\alpha\varepsilon_1+\beta\varepsilon_2)$-block-encoding of $A$ by using one time of $W, P_L$, and $P_R$.
\end{fact}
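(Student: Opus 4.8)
The plan is to write down the unitary explicitly, read off its top-left block, and then control the error with a single two-term triangle inequality — the same strategy as in the proof of Fact~\ref{blockencoding.linearcombination}, but now tracking two distinct preparation unitaries and the inexactness $\varepsilon_2$ of the inner encodings, in order to verify the block-encoding inequality of Definition~\ref{blkencod}.

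\emph{Setup and candidate unitary.} I arrange three registers: a $b$-qubit index register holding $\ket{k}$, an $a$-qubit ancilla register shared by the inner block encodings $U_k$, and the $s$-qubit system register. I take
\[
U \;:=\; (P_L^{\dagger}\otimes \mathbb{I}_{2^a}\otimes \mathbb{I}_{2^s})\,W\,(P_R\otimes \mathbb{I}_{2^a}\otimes \mathbb{I}_{2^s}).
\]
This $U$ is unitary: $W$ is block diagonal over the index register, acting as the unitary $U_k$ on the ancilla$+$system registers on each $\ket{k}$-subspace and as the identity on the orthogonal complement, hence $W$ is unitary; tensoring with identities and composing with the unitaries $P_L^{\dagger}$ and $P_R$ preserves unitarity. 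The construction uses $W$, $P_L$, $P_R$ once each, as required.

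\emph{Reading off the top-left block.} Applying $U$ to $\ket{0}^{\otimes(a+b)}\otimes\mathbb{I}_{2^s}$: the unitary $P_R$ turns the index register into $\sum_k d_k\ket{k}$; then $W$ produces $\sum_k d_k\ket{k}\otimes U_k(\ket{0}^{\otimes a}\otimes\mathbb{I}_{2^s})$; and contracting on the left with $\bra{0}^{\otimes a}$ on the ancilla register and with $(P_L\ket{0})^{\dagger}=\sum_k c_k^{*}\bra{k}$ on the index register gives
\[
(\bra{0}^{\otimes(a+b)}\otimes\mathbb{I}_{2^s})\,U\,(\ket{0}^{\otimes(a+b)}\otimes\mathbb{I}_{2^s})
=\sum_k c_k^{*}d_k\,(\bra{0}^{\otimes a}\otimes\mathbb{I}_{2^s})\,U_k\,(\ket{0}^{\otimes a}\otimes\mathbb{I}_{2^s}).
\]
Write $\widetilde A_k:=\alpha(\bra{0}^{\otimes a}\otimes\mathbb{I}_{2^s})U_k(\ket{0}^{\otimes a}\otimes\mathbb{I}_{2^s})$ for the subnormalized matrix that $U_k$ encodes; then $\alpha\beta$ times the block above equals $\beta\sum_k c_k^{*}d_k\widetilde A_k$. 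I will use that $\|A_k-\widetilde A_k\|\le\varepsilon_2$ (the block-encoding hypothesis on $U_k$) and that $\|\widetilde A_k\|\le\alpha$, the latter because $(\bra{0}^{\otimes a}\otimes\mathbb{I}_{2^s})U_k(\ket{0}^{\otimes a}\otimes\mathbb{I}_{2^s})$ is a submatrix of a unitary and hence has spectral norm at most $1$.

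\emph{Error bound.} It remains to estimate $\|A-\beta\sum_k c_k^{*}d_k\widetilde A_k\|$. I split it as
\[
A-\beta\sum_k c_k^{*}d_k\widetilde A_k=\sum_k x_k(A_k-\widetilde A_k)+\sum_k (x_k-\beta c_k^{*}d_k)\widetilde A_k,
\]
grouping so that the second sum carries $\widetilde A_k$ (norm $\le\alpha$) rather than $A_k$. By the triangle inequality and submultiplicativity, the first sum is at most $\sum_k|x_k|\,\varepsilon_2=\|\vec x\|_1\varepsilon_2\le\beta\varepsilon_2$ (using $\beta\ge\|\vec x\|_1$), and the second is at most $\alpha\sum_k|x_k-\beta c_k^{*}d_k|\le\alpha\varepsilon_1$ (using the hypothesis $\sum_k|\beta c_k^{*}d_k-x_k|\le\varepsilon_1$). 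Adding, $\|A-\alpha\beta(\bra{0}^{\otimes(a+b)}\otimes\mathbb{I}_{2^s})U(\ket{0}^{\otimes(a+b)}\otimes\mathbb{I}_{2^s})\|\le\alpha\varepsilon_1+\beta\varepsilon_2$, so $U$ is an $(\alpha\beta,a+b,\alpha\varepsilon_1+\beta\varepsilon_2)$-block-encoding of $A$.

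\emph{Main obstacle.} The computation itself is short; the real care is in the register bookkeeping — ensuring the index register on which $P_L,P_R$ act is exactly the one that controls the $U_k$'s inside $W$, while the $a$ inner-encoding ancillas stay disjoint from it — and in checking that the identity-on-the-complement term in $W$ genuinely makes $W$ unitary when $\{\ket{k}\}$ does not span the whole $2^b$-dimensional index register. The one non-mechanical choice is the grouping in the triangle inequality: the naive split $\sum_k(x_k-\beta c_k^{*}d_k)A_k+\beta\sum_k c_k^{*}d_k(A_k-\widetilde A_k)$ forces the weaker bound $\|A_k\|\le\alpha+\varepsilon_2$ and yields precision $(\alpha+\varepsilon_2)\varepsilon_1+\beta\varepsilon_2$ instead of the stated $\alpha\varepsilon_1+\beta\varepsilon_2$; the grouping above avoids this.
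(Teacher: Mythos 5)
Your proof is correct, and it is exactly the direct computation that the paper leaves implicit: the paper states this result as a Fact, remarking only that it ``can be verified via direct computation'' with a citation to Gily\'en et al. Your construction $U=(P_L^{\dagger}\otimes \mathbb{I}_{2^a}\otimes \mathbb{I}_{2^s})\,W\,(P_R\otimes \mathbb{I}_{2^a}\otimes \mathbb{I}_{2^s})$, the identification of the top-left block as $\sum_k c_k^{*}d_k(\bra{0}^{\otimes a}\otimes\mathbb{I}_{2^s})U_k(\ket{0}^{\otimes a}\otimes\mathbb{I}_{2^s})$, and the triangle-inequality bound with the grouping that pairs $\widetilde A_k$ with the coefficient error (yielding precisely $\alpha\varepsilon_1+\beta\varepsilon_2$) is the standard argument the paper intends.
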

These results can be verified via direct computation.
Another arithmetic rule broadly employed in quantum machine learning is the Hadamard product, a.k.a, the element-wise product. The following lemma exhibits how to achieve this operation via the block encoding framework.

\begin{lemma}[Hadamard product of the block encoding unitaries,  \cite{guo2024quantumlinear2024}]\label{Hadamard.blockencoding}
With $N \in \mathbb{N}$, consider two matrices $A, B\in \mathbb{C}^{2^N\times 2^N}$, and assume that we have an $(\alpha,a,\delta)$-encoding $U_{A}$ of matrix $A$ and $(\beta,b,\epsilon)$-encoding $U_{B}$ of matrix $B$, then
we can construct an $(\alpha\beta,a+b+N,\alpha\epsilon+\beta\delta)$-encoding of matrix $A \circ B$ corresponding to the Hadamard product of $A$ and $B$.    
\end{lemma}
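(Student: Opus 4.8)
The plan is to realize the Hadamard product of two block-encoded matrices as an ordinary matrix product on a suitably enlarged Hilbert space, using the elementary linear-algebra identity that relates the Hadamard product to a compression of the tensor (Kronecker) product. Concretely, for $A,B\in\mathbb{C}^{2^N\times 2^N}$ one has $A\circ B = \widetilde{\mathbb{I}}^{\dagger}\,(A\otimes B)\,\widetilde{\mathbb{I}}$, where $\widetilde{\mathbb{I}}\colon\mathbb{C}^{2^N}\to\mathbb{C}^{2^N}\otimes\mathbb{C}^{2^N}$ is the isometry defined by $\ket{i}\mapsto\ket{i}\ket{i}$ (the ``copy'' map on computational basis states). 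So the first step is to verify this identity by a direct index computation: $\bra{i}\widetilde{\mathbb{I}}^{\dagger}(A\otimes B)\widetilde{\mathbb{I}}\ket{j} = (\bra{i}\bra{i})(A\otimes B)(\ket{j}\ket{j}) = A_{ij}B_{ij}$.

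Next I would assemble the block encoding of $A\otimes B$. Given the $(\alpha,a,\delta)$-encoding $U_A$ of $A$ and the $(\beta,b,\epsilon)$-encoding $U_B$ of $B$, one checks that an appropriate rearrangement of $U_A\otimes U_B$ (permuting the ancilla registers to the front and the two system registers to the back) is an $(\alpha\beta,a+b,\alpha\epsilon+\beta\delta)$-block-encoding of $A\otimes B$ on $2N$ system qubits; the error bound follows from the product-rule estimate of Fact~\ref{blockencoding.product} together with submultiplicativity of the spectral norm, i.e. $\|A\otimes B - \alpha\beta(\ldots)\| \le \|A\|\|B - \beta(\ldots)\| + \beta\|(\ldots)\|\,\|A - \alpha(\ldots)\| \le \alpha\epsilon + \beta\delta$ after using $\|A\|\le\alpha$ and $\|\beta(\ldots)\|\le\beta$. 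Then I compose this with the isometry $\widetilde{\mathbb{I}}$ on both sides. The map $\widetilde{\mathbb{I}}$ can be implemented with $N$ CNOT gates: starting from $\ket{0}^{\otimes N}$ in the second system register and the input in the first, the CNOTs copy each basis bit, producing $\ket{i}\ket{i}$; its adjoint (uncopy followed by checking the second register is $\ket{0}^{\otimes N}$) contributes $N$ extra ancilla qubits that must be postselected on $\ket{0}$. Collecting the ancilla bookkeeping gives $a+b+N$ flag qubits, the scale factor stays $\alpha\beta$ (since $\widetilde{\mathbb{I}}$ is an isometry, it does not inflate norms), and the error is unchanged at $\alpha\epsilon+\beta\delta$ because conjugating by an isometry is norm-nonincreasing: $\|\widetilde{\mathbb{I}}^{\dagger}(A\otimes B - \alpha\beta M)\widetilde{\mathbb{I}}\| \le \|A\otimes B - \alpha\beta M\|$.

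The main obstacle is the careful ancilla/register accounting: one must track which qubits play the role of block-encoding flags, which are ``system'' qubits, and which are the auxiliary copy qubits introduced by $\widetilde{\mathbb{I}}^{\dagger}$, and confirm that measuring all $a+b+N$ of them in $\ket{0}$ exactly projects onto the Hadamard-product block. A clean way to organize this is to write the candidate unitary explicitly as $U = (\widetilde{\mathbb{I}}^{\dagger}\text{-circuit})\cdot(\text{SWAP of ancilla/system registers})\cdot(U_A\otimes U_B)\cdot(\text{inverse SWAP})\cdot(\widetilde{\mathbb{I}}\text{-circuit})$, and then compute $(\bra{0}^{\otimes(a+b+N)}\otimes\mathbb{I}_{2^N})\,U\,(\ket{0}^{\otimes(a+b+N)}\otimes\mathbb{I}_{2^N})$ directly, pushing the $\ket{0}$ projections through the CNOT layers and invoking the block-encoding property of $U_A\otimes U_B$ in the middle. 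Once the $\varepsilon=\delta=0$ case is settled this way, the general error bound is just the isometry-conjugation estimate above applied to the $A\otimes B$ error term, which completes the proof.
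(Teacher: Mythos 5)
Your proposal is correct and essentially the same as the paper's proof: your copy isometry $\widetilde{\mathbb{I}}:\ket{i}\mapsto\ket{i}\ket{i}$ is exactly the paper's CNOT layer $P=\sum_{i,j}\ket{i}\bra{i}\otimes\ket{i\oplus j}\bra{j}$ applied to a $\ket{0^N}$-initialized second register, and both arguments first form the $(\alpha\beta,a+b)$-encoding of $A\otimes B$ from $U_A\otimes U_B$ and then compress it with these $N$ CNOTs, postselecting the extra $N$-qubit register on $\ket{0^N}$ to get the $a+b+N$ ancilla count. Your explicit error bookkeeping (product rule for $A\otimes B$ plus norm-nonincrease under isometry conjugation) is a fine completion of the error case the paper's sketch omits.
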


\begin{proof}[Proof sketch of Lemma~\ref{Hadamard.blockencoding}]
For simplicity, we only consider the perfect case, i.e., no errors. Refer to Ref.~\citep{guo2024quantumlinear2024} for the proof details under the more general cases. 

The intuition for achieving the Hadamard product is that  all the needed elements can be found in the tensor product, i.e.,
\begin{align}
    & (\bra{0^{a+b}}\otimes \mathbb{I}_{2^{2N}})(\mathbb{I}_{2^b}\otimes U_{A}\otimes \mathbb{I}_{2^N})(\mathbb{I}_{2^a}\otimes U_{B}\otimes \mathbb{I}_{2^N})(\ket{0^{a+b}}\otimes \mathbb{I}_{2^{2N}}) \\ \nonumber
    = & \frac{A\otimes B}{\alpha \beta}.
\end{align}

To this end, the question is reduced to finding proper permutation unitaries that can shift the required elements to the correct positions to achieve the Hadamard product.
Denote $P'=\sum_{i=0}^{d-1} |i\rangle\langle i|\otimes |0\rangle\langle i|$.
As proved  by \citet{zhao2021compiling}, the tensor product of $A$ and $B$ can be reformulated to the Hadamard product via $P$, i.e.,
$$P'(A\otimes B)P'^\dagger= (A \circ B)\otimes |0\rangle \langle 0|.$$
However, $P'$ is not a unitary.
Instead, we consider $P=\sum_{i,j=0}^{d-1} |i\rangle\langle i|\otimes |i\oplus j\rangle\langle j| $, which can be easily constructed by using $N$ CNOT gates, i.e., one CNOT gate between each pair of qubits consisting of one qubit from the first register and the corresponding qubit from the second register.
By direct computation, we have 
\begin{align}
    (\mathbb{I}_{2^N}\otimes \bra{0^N}) P(A\otimes B)P^\dagger (\mathbb{I}_{2^N}\otimes \ket{0^N} )=A\circ B.
\end{align}
Therefore, by direct computation, one can verify that $(
P\otimes \mathbb{I}_{2^{a+b}}) (\mathbb{I}_{2^b}\otimes U_{A}\otimes \mathbb{I}_{2^N})(\mathbb{I}_{2^a}\otimes U_{B}\otimes \mathbb{I}_{2^N})(P^\dagger\otimes \mathbb{I}_{2^{a+b}})$ is the desired block encoding. 
\end{proof}

\subsection{Quantum singular value transformation\label{chapt-3-sec-qsvt}}

Now we understand how to implement matrices on the quantum computer and some arithmetic methods among these matrices, here we further introduce how one can implement matrix functions on the quantum computer.
The method is called the quantum singular value transformation (QSVT), which is a powerful framework that can unify most known quantum algorithms.

For machine learning applications, we mostly deal with the real matrices. The computational cost of QSVT for the real matrix case is summarized in the following theorem.
For a matrix $A$, consider its singular value decomposition $A=\sum_{i}\sigma_i |\psi_i\rangle\langle \phi_i|$.
Given a function $P(x)$, we use $P^{(SV)}(A)$ to represent $P^{(SV)}(A)=\sum_{i} P(\sigma_i) |\psi_i\rangle\langle \phi_i|$.

\begin{fact}[Quantum singular value transformation - Real matrix case, \cite{gilyenquantum2019}]\label{QSVT}
        Suppose that $U_A$ is an $(\alpha, a, \varepsilon)$-block-encoding of a real matrix $A$.
        If $\delta \geq 0$ and $P:\mathbb{R}\rightarrow \mathbb{C}$ is a d-degree polynomial satisfying that
        \begin{align}
            \mathrm{for\ all}\ x \in[-1,1]:\left|P(x)\right| \leq \frac{1}{4},
        \end{align}
       then there is a quantum circuit $\tilde{U}$, which is an $(1, a+3, 4d \sqrt{\varepsilon / \alpha}+\delta)$-block-encoding of $P^{(SV)}(A / \alpha)$, and consists of d applications of $U_A$ and $U^{\dagger}_A$ gates.
        Further, the description of such a circuit can be computed classically in time $\mathcal{O}(\operatorname{poly}(d, \log (1 / \delta)))$.
\end{fact}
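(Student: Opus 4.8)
The plan is to build the circuit $\tilde{U}$ in layers on top of the alternating phase-modulation sequence that underlies quantum signal processing (QSP), with three structural generalizations each costing one ancilla qubit (hence $a+3$ in total). First I would invoke the scalar QSP characterization: for a real polynomial $P$ of degree $d$, parity $d \bmod 2$, and $|P(x)|\le 1$ on $[-1,1]$, there exists a complementary polynomial and a phase vector $\Phi=(\phi_1,\dots,\phi_d)$ so that the product of single-qubit unitaries $\prod_{k=1}^{d} e^{i\phi_k Z}\,W(x)$, with $W(x)$ the $x$-dependent $SU(2)$ rotation, has top-left entry $P(x)$. The core step is then the qubitization lemma: writing the singular value decomposition $A/\alpha=\sum_i \sigma_i\ket{\psi_i}\bra{\phi_i}$ and letting $\Pi=\ket{0}\bra{0}^{\otimes a}\otimes\mathbb{I}$ (with its analogue on the output register), one checks that $U_A$ preserves a direct sum of two-dimensional subspaces and acts on the one attached to $\sigma_i$ exactly as $W(\sigma_i)$; interleaving $U_A$, $U_A^{\dagger}$, and the projector-controlled phase gates $e^{i\phi(2\Pi-\mathbb{I})}$ (each realized with one extra qubit via controlled operations) therefore block-encodes $P^{(SV)}(A/\alpha)$ using $d$ calls to $U_A$ and $U_A^{\dagger}$.

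Second I would lift the two restrictions that $P$ be parity-definite and real. Splitting $P=P_{\mathrm{even}}+P_{\mathrm{odd}}$, each part is parity-definite with magnitude at most $1/4$ on $[-1,1]$, so it falls under the qubitization lemma; the requirement that the QSP-realized polynomial be real is handled by implementing the complex polynomial furnished by QSP and then taking its real part via a linear combination of unitaries (Fact~\ref{blockencoding.linearcombination}) of the QSP circuit and its conjugate, costing one ancilla. A second LCU, costing one more ancilla, recombines the even and odd components, and one arranges that both parity sub-sequences share the same $d$ applications of $U_A/U_A^{\dagger}$ with only the interleaved phase gates controlled on the combining ancilla. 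Together with the single QSP ancilla this yields $a+3$ ancilla qubits; the bound $|P|\le 1/4$ is precisely the slack needed to keep every intermediate object a valid block-encoding through the two LCU steps, so the final circuit is a genuine $(1,a+3,\,\cdot\,)$-block-encoding.

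Finally, I would control the two error contributions. The phase vector $\Phi$ is computed classically by (i) completing $P$ to a pair $(P,Q)$ of polynomials obeying the QSP normalization $|P(x)|^2+(1-x^2)|Q(x)|^2=1$ on $[-1,1]$, via root-finding applied to $1-|P(x)|^2$, and (ii) recursively peeling off the angles $\phi_k$; carrying this out with $O(\log(1/\delta))$-bit precision runs in time $\mathcal{O}(\mathrm{poly}(d,\log(1/\delta)))$ and introduces the additive error $\delta$, since a perturbation of each of the $d$ phase gates propagates additively in operator norm. For the term $4d\sqrt{\varepsilon/\alpha}$, I would invoke the robustness of QSVT under an imperfect block encoding: replacing the $\varepsilon=0$ block encoding by an $\varepsilon$-approximate one amounts to perturbing $A/\alpha$ by $\varepsilon/\alpha$ in spectral norm, which perturbs the effective rotation on each invariant subspace by $O(\sqrt{\varepsilon/\alpha})$ and accumulates over the $d$ layers with a constant of $4$. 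The main obstacle is the qubitization lemma itself: one must exhibit the invariant two-dimensional subspaces explicitly, treat the degenerate singular values $\sigma_i\in\{0,1\}$ and the relevant kernels separately, and check that the single-qubit reduction holds simultaneously on every block — once that is established, the LCU wrapping (via the Facts above) and the error bookkeeping are routine.
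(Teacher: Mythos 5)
The paper does not actually prove this statement: it is presented as a \textbf{Fact} with a pointer to \citet{gilyenquantum2019}, following the tutorial's convention that Facts are cited rather than proved. Your sketch is therefore compared against the original GSLW argument, and it reconstructs that argument along the standard lines: scalar QSP, the qubitization/invariant two-dimensional subspace decomposition induced by the two projectors, parity splitting and real-part extraction combined by linear combinations of unitaries, the $4d\sqrt{\varepsilon/\alpha}$ robustness bound for an imperfect block encoding, and classical angle-finding to precision $\delta$ in time $\mathcal{O}(\mathrm{poly}(d,\log(1/\delta)))$. In that sense the approach is the right one.

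Two points in your write-up deserve correction. First, your opening claim that for any real $P$ of definite parity with $|P|\le 1$ there is a phase vector making $P$ itself the top-left entry of the QSP product is too strong: the exact QSP characterization constrains the achievable top-left entries (they must admit a complementary $Q$ with $|P(x)|^2+(1-x^2)|Q(x)|^2=1$, and the achieved entry is complex in general), and what one can guarantee for an arbitrary bounded real parity-definite target is only that it equals the \emph{real part} of an achievable entry. Your second paragraph implicitly repairs this with the real-part LCU, but the statement of the first step should be weakened accordingly, since otherwise the real-part ancilla would appear unnecessary. Second, the Fact allows $P:\mathbb{R}\rightarrow\mathbb{C}$, whereas your construction only treats real targets: for complex $P$ you must decompose into four real parity-definite pieces (real/imaginary $\times$ even/odd), each bounded by $1/4$, and recombine them by LCU; this is routine but it changes the bookkeeping of both the subnormalization (the $1/4$ slack is what absorbs the four-term combination) and the ancilla count that you attribute to ``QSP ancilla $+$ two LCU ancillas.'' With these two adjustments your outline matches the proof in the cited reference.
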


Notice that if the block-encoded matrix $A$ is Hermitian, the singular value transformation is equivalent to the eigenvalue transformation.
In this case, we can directly implement the matrix function via QSVT.

In the following, we introduce some applications of QSVT method.
There are several important applications, however, in this tutorial, we focus on introducing some applications that can be applied to machine-learning related tasks.
The first is matrix inversion, which has been widely used in traditional machine learning methods such as principal component analysis. Note that for a general matrix, we actually mean to implement the Moore-Penrose pseudoinverse, i.e., the inverse of all singular values.

\begin{lemma}[Matrix inversion, simplified \cite{gilyenquantum2019}]\label{chapt2:lemma:matrix-inversion}
    Let $U_A$ be a $(1,a,0)$-block encoding of matrix $A$.
    Further, for simplicity, assume the nonzero singular values of $A$ are lower bounded by $\delta>0$.
    Let $0\leq \epsilon\leq \delta \leq \frac{1}{2}$.
    One can construct a $(1/\delta,a+2,\epsilon)$-block encoding of $A^{-1}$ by using $\mathcal{\tilde{O}}(\frac{1}{\delta}\log(\frac{1}{\epsilon}))$ times of $U_A$ and $U_A^\dagger$.
\end{lemma}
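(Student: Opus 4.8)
The plan is to obtain $A^{-1}$ --- more precisely the Moore--Penrose pseudoinverse $A^{+}$, since the hypothesis only lower-bounds the \emph{nonzero} singular values --- by feeding a carefully chosen polynomial into the quantum singular value transformation (Fact~\ref{QSVT}). The target function is $x\mapsto 1/x$, and it needs to be approximated only on the set where the nonzero singular values of $A$ can lie; since $U_A$ is a $(1,a,0)$-block encoding we have $\|A\|\le 1$, so that set is $[\delta,1]$ (together with $[-1,-\delta]$ if we insist on an odd polynomial), while near $0$, where $1/x$ diverges, we only need the approximating polynomial to stay bounded.

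\textbf{Step 1 (polynomial approximation).} I would invoke the standard approximation-theory construction underlying QSVT-based matrix inversion (see \cite{gilyenquantum2019} and the references therein): for $\delta\in(0,1/2]$ and any precision $\epsilon'\in(0,1)$ there is an \emph{odd} polynomial $P$ with
\begin{equation}
\deg P=\mathcal{O}\!\Big(\tfrac{1}{\delta}\log\tfrac{1}{\epsilon'}\Big),\quad |P(x)|\le 1\ \ (x\in[-1,1]),\quad \Big|P(x)-\tfrac{\delta}{x}\Big|\le\epsilon'\ \ (x\in[\delta,1]\cup[-1,-\delta]).
\end{equation}
Such a $P$ is obtained by truncating a Chebyshev-type expansion of a regularized version of $1/x$ and multiplying by a low-degree even ``cutoff'' polynomial that is $\approx 1$ on $[\delta,1]$ and $\approx 0$ in a neighbourhood of $0$; the cutoff is what keeps $|P|$ bounded on all of $[-1,1]$ without blowing up the degree beyond $\tilde{\mathcal{O}}(1/\delta)$. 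I would take $\epsilon'=\Theta(\delta\epsilon)$; the hypothesis $\epsilon\le\delta$ then gives $\log(1/\epsilon')=\mathcal{O}(\log(1/\epsilon))$, so $\deg P=\mathcal{O}(\tfrac{1}{\delta}\log\tfrac1\epsilon)$, matching the claimed number of queries. (If one uses Fact~\ref{QSVT} verbatim, whose precondition is $|P|\le 1/4$ rather than $|P|\le 1$, take $P\approx\delta/(4x)$ instead and remove the spurious factor of $4$ afterwards with $\mathcal{O}(1)$ rounds of uniform singular-value amplification; this changes only constants.)

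\textbf{Step 2 (apply QSVT, then rescale).} Because $U_A$ is exact ($\varepsilon=0$, $\alpha=1$), the $4d\sqrt{\varepsilon/\alpha}$ term in Fact~\ref{QSVT} vanishes, so QSVT applied to $U_A^{\dagger}$ --- which is a $(1,a,0)$-block encoding of $A^{\dagger}=\sum_i\sigma_i\ket{\phi_i}\bra{\psi_i}$ --- with the polynomial $P$ yields a circuit $\tilde U$ that is a $(1,a+\mathcal{O}(1),\delta_0)$-block encoding of $P^{(SV)}(A^{\dagger})=\sum_i P(\sigma_i)\ket{\phi_i}\bra{\psi_i}$, built from $d$ uses of $U_A$ and $U_A^{\dagger}$, with the internal precision $\delta_0$ freely adjustable at the cost of $\mathrm{poly}(d,\log(1/\delta_0))$ classical preprocessing; I would set $\delta_0\le\delta\epsilon$. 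Since every singular value lies in $\{0\}\cup[\delta,1]$ and $P$ is odd (hence $P(0)=0$, the correct pseudoinverse value at a zero singular value), the uniform bound from Step~1 gives $\norm{P^{(SV)}(A^{\dagger})-\delta A^{+}}\le\epsilon'$. Hence the top-left block of $\tilde U$ is within $\delta_0+\epsilon'=\mathcal{O}(\delta\epsilon)$ of $\delta A^{+}$; dividing through by the new normalization $\delta$ shows $\tilde U$ is a $(1/\delta,\ a+\mathcal{O}(1),\ \mathcal{O}(\epsilon))$-block encoding of $A^{+}$, and re-running with target $\epsilon/C$ for the implied constant $C$ (which only rescales $\deg P$ by $\mathcal{O}(1)$) brings the error down to $\epsilon$. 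The eigenvalue/real-matrix refinement of QSVT in \cite{gilyenquantum2019} is what shaves the ancilla overhead from $a+\mathcal{O}(1)$ down to exactly $a+2$.

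The main obstacle is Step~1: exhibiting a polynomial of degree only $\tilde{\mathcal{O}}(\tfrac1\delta\log\tfrac1\epsilon)$ that simultaneously (i) approximates $1/x$ uniformly all the way up to the endpoint $x=\delta$ --- which by itself already forces $\deg P=\Omega(1/\delta)$ --- and (ii) stays bounded on the whole of $[-1,1]$, including the region near $0$ where $1/x$ is unbounded. Requirement (i) alone is classical Chebyshev approximation on $[\delta,1]$; reconciling it with (ii) is where the cutoff-polynomial trick enters, and one must check that the cutoff neither spoils the accuracy on $[\delta,1]$ nor raises the degree past $\tilde{\mathcal{O}}(1/\delta)$. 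Everything else --- the invocation of Fact~\ref{QSVT}, starting from $U_A^{\dagger}$ so that the output is $A^{+}$ directly rather than $(A^{\dagger})^{+}$, reinterpreting the block-encoding normalization, and adding the two error contributions to get $\epsilon$ --- is routine.
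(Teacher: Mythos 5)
Your proposal is correct and takes essentially the same route as the paper, whose proof is only a brief sketch: approximate $1/x$ by a polynomial on $[-1,-\delta]\cup[\delta,1]$ (it cannot be done on all of $[-1,1]$) and feed it into the quantum singular value transformation of Fact~\ref{QSVT}. Your write-up simply fills in the standard details from \citet{gilyenquantum2019} — the odd bounded approximant of $\delta/x$, applying QSVT to $U_A^{\dagger}$ so the output is the pseudoinverse of $A$, and the rescaling/error bookkeeping that yields the $(1/\delta,a+2,\epsilon)$ parameters with $\mathcal{\tilde{O}}(\tfrac{1}{\delta}\log\tfrac{1}{\epsilon})$ queries.
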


\begin{proof}[Proof sketch of Lemma~\ref{chapt2:lemma:matrix-inversion}.]
    This can be achieved by finding a good polynomial approximation for the function $1/x$. One can not find such a polynomial on the whole interval $[-1,1]$, however, such a polynomial exists on the interval $[-1,-\delta]\cup [\delta,1]$ for some $\delta>0$.
\end{proof}

The second application of QSVT is the nonlinear amplitude transformation.
As mentioned in Chapter~\ref{cha3:subsec:q-read-in}, to deal with classical data, there are several ways to encode them into quantum. Here, we focus on the amplitude encoding case described in Chapter~\ref{cha2:sec3:amplitude_encode}, especially for the real amplitudes. The nonlinear transformation is achieved by combining the diagonal block encoding and QSVT.

\begin{fact}[Diagonal block encoding of amplitudes,  \cite{guononlinear2021, rattewnonlinear2023}\label{block encoding.amplitudes}]
    Given a state preparation unitary $U_{\psi}$ of an $N$-qubit state $\ket{\psi}=\sum_{j=1}^{2^N} \psi_j \ket{j}$, where $\{\psi_j\}$ are real, $\norm{\psi}_2=1$,
    one can construct an $(1,N+2,\epsilon)$-encoding of the diagonal matrix $A=\mathrm{diag}(\psi_1, \dots, \psi_{d})$ with $\mathcal{O}(N)$ circuit depth and $\mathcal{O}(1)$ times of controlled-$U$ and controlled-$U^\dagger$.
\end{fact}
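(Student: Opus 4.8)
The plan is to realize the diagonal block encoding by a \emph{copy-and-compare} construction: prepare a fresh copy of $\ket{\psi}$ in an ancillary register, then apply a transversal bank of CNOTs comparing that copy against the input register, so that the amplitude surviving on the all-zero flag is precisely the amplitude $\psi_j$ that $\ket{\psi}$ attaches to the input basis state $\ket{j}$. Concretely, write $d=2^N$, adjoin an $N$-qubit ancilla register $a$ initialized to $\ket{0^N}$ next to the $N$-qubit system register $s$, and set
\[
W \;:=\; C_{a\leftarrow s}\,\bigl(U_\psi\otimes \mathbb{I}_{2^N}\bigr),\qquad C_{a\leftarrow s}:=\sum_{k,j}\ket{k\oplus j}\bra{k}_{a}\otimes\ket{j}\bra{j}_{s},
\]
where $C_{a\leftarrow s}$ is the fan of $N$ CNOTs (one per aligned qubit pair, hence $O(N)$ gates and $O(1)$ depth) and $U_\psi$ is applied once.

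The first step is the verification identity: for every computational basis state $\ket{j}_{s}$,
\begin{align}
W\,\bigl(\ket{0^N}_{a}\otimes\ket{j}_{s}\bigr) &= C_{a\leftarrow s}\Bigl(\sum_k \psi_k\ket{k}_{a}\Bigr)\ket{j}_{s} = \sum_k \psi_k\,\ket{k\oplus j}_{a}\,\ket{j}_{s},
\end{align}
so only the $k=j$ term reaches $\ket{0^N}_{a}$ and $(\bra{0^N}_{a}\otimes\mathbb{I}_{2^N})\,W\,(\ket{0^N}_{a}\otimes\mathbb{I}_{2^N})\,\ket{j}_{s}=\psi_j\ket{j}_{s}$; extending by linearity over $j$ yields $(\bra{0^N}_{a}\otimes\mathbb{I}_{2^N})\,W\,(\ket{0^N}_{a}\otimes\mathbb{I}_{2^N})=\mathrm{diag}(\psi_1,\dots,\psi_d)=A$. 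Since $\|A\|=\max_j|\psi_j|\le\norm{\psi}_2=1$, the subnormalization $\alpha=1$ is admissible, so $W$ is already an exact $(1,N,0)$-block encoding of $A$ built from a single (uncontrolled) use of $U_\psi$ and $O(N)$ CNOTs.

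What remains is the bookkeeping that matches the stated parameters $(1,N+2,\epsilon)$. If $U_\psi$ is only an approximate state preparation — a state-encoding in the sense of Definition~\ref{def.stateencoding} that outputs $\widetilde{\psi}$ on its flagged good branch with $\norm{\widetilde{\psi}-\psi}_\infty\le\delta$ — then the same circuit, carrying that flag qubit along so that the block-encoding projector also projects it onto $\ket{0}$, block-encodes $\mathrm{diag}(\widetilde{\psi}_1,\dots,\widetilde{\psi}_d)$; and since the spectral norm of a diagonal matrix equals the maximum modulus of its entries, $\|A-\mathrm{diag}(\widetilde{\psi})\|=\norm{\psi-\widetilde{\psi}}_\infty\le\delta$, giving $\epsilon=O(\delta)$, after at most $O(1)$ rounds of amplitude amplification to restore $\alpha=1$ when the state-encoding subnormalization is a constant $>1$. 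That error-and-normalization tracking is the only genuinely nontrivial estimate; the ancillas beyond the $N$ comparison qubits, and the appearance of \emph{controlled} $U_\psi,U_\psi^\dagger$ in the cited references, are standard overhead (one hosts the state-encoding flag so $W$ stays a fixed unitary, the other is slack for feeding this primitive into amplitude amplification and the QSVT step of Fact~\ref{QSVT}), and conforming to that precise count is routine. Hence the main obstacle is not the construction, which is short and exact, but propagating the state-preparation accuracy through to the spectral-norm block-encoding accuracy $\epsilon$ while keeping $\alpha=1$ and the controlled-$U$ budget at $O(1)$.
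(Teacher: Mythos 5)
The paper does not actually prove this statement: it is given as a Fact and deferred to the cited works of Guo--Mitarai--Fujii and Rattew--Rebentrost, so there is no in-paper proof to compare against. Judged on its own, your core argument is correct and in fact establishes something slightly stronger than what is claimed. The copy-and-compare circuit $W=C_{a\leftarrow s}(U_\psi\otimes\mathbb{I}_{2^N})$ really does satisfy $(\bra{0^N}_a\otimes\mathbb{I}_{2^N})\,W\,(\ket{0^N}_a\otimes\mathbb{I}_{2^N})=\mathrm{diag}(\psi_1,\dots,\psi_d)$ exactly (your basis-state computation is right), and since $\max_j|\psi_j|\le\norm{\psi}_2=1$ the normalization $\alpha=1$ is admissible; so you get an exact $(1,N,0)$-encoding from a single uncontrolled query to $U_\psi$ plus a depth-one layer of $N$ CNOTs, which trivially implies the stated $(1,N+2,\epsilon)$-encoding within the stated resource budget (append two idle ancillas, note $0\le\epsilon$). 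This is a genuinely different route from the constructions in the references, which — as the resource count in the statement itself suggests — are built from controlled-$U_\psi$ and controlled-$U_\psi^\dagger$ in Hadamard-test/LCU-style circuits; the two extra ancillas, the controlled queries, and the error $\epsilon$ there are the price of the more general inputs treated in those works (complex amplitudes, and state-preparation \emph{encodings} in the sense of Definition~\ref{def.stateencoding} rather than exact preparation unitaries). For the Fact as written (real amplitudes, exact $U_\psi$), your direct construction is shorter and cleaner, and correctly exploits that the block-encoding error of a diagonal matrix is the sup-norm error of its entries.

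One caution about your second paragraph: the assertion that, when $U_\psi$ is only a state-encoding with constant subnormalization $\alpha>1$, you can "restore $\alpha=1$" with $O(1)$ rounds of amplitude amplification is not right as stated. Amplitude amplification boosts the amplitude of a flagged \emph{state}; it does not raise the subnormalization of a block encoding of a general non-unitary matrix. To do the latter you would need uniform singular-value amplification via QSVT (approximate, with additional query cost), or you must first amplitude-amplify the state preparation itself, which is exact in $O(1)$ rounds only if $\alpha$ is known exactly and otherwise contributes its own error and requires controlled queries and reflections. Your sup-norm error propagation $\|\mathrm{diag}(\psi)-\mathrm{diag}(\widetilde\psi)\|=\norm{\psi-\widetilde\psi}_\infty$ is correct, and none of this affects the statement you were asked to prove, since the Fact assumes an exact state-preparation unitary; but as written that step would not survive scrutiny in the generalized setting the paper's subsequent remark alludes to.
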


As a straightforward generalization, one can replace the state preparation unitary with the general state preparation encoding, mentioned in Definition~\ref{def.stateencoding}.
By constructing the block encoding of amplitudes, one can implement many functions onto these amplitudes via QSVT. A direct application is performing the neural network on the quantum computer, as will be detailed in Chapter~\ref{chapt6:sec:quantum_transformer}.

\section{Code Demonstration}\label{Chapter2:preliminary-code}

This section provides code implementations for key techniques introduced earlier, including quantum read-in strategies and block encoding, to give readers the opportunity to practice and deepen their understanding. 

\subsection{Read-in implementations}

This subsection demonstrates toy examples of implementing data encoding methods in quantum computing, as discussed in earlier sections. Specifically, we cover basis encoding, amplitude encoding, and angle encoding from Chapter~\ref{cha2:sec3:angle_encode}. These examples aim to provide readers with hands-on experience in applying quantum data encoding techniques.

\subsubsection{Basis encoding}

PennyLane provides built-in support for basis encoding through its `BasisEmbedding' function. Below is the Python code demonstrating the basis encoding for the integer 6.
\begin{lstlisting}
import pennylane as qml

dev = qml.device("default.qubit", range(3))
@qml.qnode(dev)
def circuit(x):
  qml.BasisEmbedding(x, range(3))
  return qml.state()

# Call the function
circuit(6)
\end{lstlisting}

\subsubsection{Amplitude encoding}

PennyLane offers built-in support for amplitude encoding via the `AmplitudeEmbedding' function. Below is a Python example demonstrating amplitude encoding for a randomly generated complex vector.

\begin{lstlisting}[language=Python]
import pennylane as qml
import numpy as np

# Number of qubits
n_qubits = 8

# Define a quantum device with 8 qubits
dev = qml.device("default.qubit", wires=n_qubits)

@qml.qnode(dev)
def circuit(x):
    qml.AmplitudeEmbedding(features=x, wires=range(n_qubits), normalize=True, pad_with=0.)
    return qml.state()

# Generate a random complex vector of length 2^n_qubits
x_real = np.random.normal(loc=0, scale=1.0, size=2**n_qubits)
x_imag = np.random.normal(loc=0, scale=1.0, size=2**n_qubits)
x = x_real + 1j * x_imag

# Execute the circuit to encode the vector as a quantum state
circuit(x)
\end{lstlisting}

\subsubsection{Angle encoding}
 
PennyLane provides built-in support for angle encoding via the `AngleEmbedding' function. Below is a Python example demonstrating angle encoding for a randomly generated real vector.

\begin{lstlisting}[language=Python]
import pennylane as qml
import numpy as np

# Number of qubits
n_qubits = 8

# Define a quantum device with 8 qubits
dev = qml.device("default.qubit", wires=n_qubits)

@qml.qnode(dev)
def circuit(x):
    qml.AngleEmbedding(features=x, wires=range(n_qubits), rotation="X")
    return qml.state()

# Generate a random real vector of length n_qubits
x = np.random.uniform(0, np.pi, (n_qubits))

# Execute the circuit to encode the vector as a quantum state
circuit(x)
\end{lstlisting}

\subsection{Block encoding}
Here, we provide an example of how we may construct a block encoding.
We construct the block encoding via the linear combination, as \cref{blockencoding.linearcombination}.
We use PennyLane to keep consistency, yet there are many other platforms that are available as well.
Please note that it is time-consuming to do the Pauli decomposition (for an $N$-qubit matrix, it takes time $\mathcal{O}(N 4^N)$), so we suggest not trying a large matrix with this method.

\begin{lstlisting}[language=Python]
import numpy as np
import pennylane as qml
import matplotlib.pyplot as plt

a = 0.36
b = 0.64

# matrix to be decomposed
A = np.array(
    [[a,  0, 0,  b],
     [0, -a, b,  0],
     [0,  b, a,  0],
     [b,  0, 0, -a]]
)

# decompose the matrix into sum of Pauli strings
LCU = qml.pauli_decompose(A)
LCU_coeffs, LCU_ops = LCU.terms()

# normalized square roots of coefficients
alphas = (np.sqrt(LCU_coeffs) / np.linalg.norm(np.sqrt(LCU_coeffs)))

dev = qml.device("default.qubit", wires=3)

# unitaries
ops = LCU_ops
# relabeling wires: 0 --> 1, and 1 --> 2
unitaries = [qml.map_wires(op, {0: 1, 1: 2}) for op in ops]

@qml.qnode(dev)
def lcu_circuit():  # block_encode
    # PREP
    qml.StatePrep(alphas, wires=0)

    # SEL
    qml.Select(unitaries, control=0)

    # PREP_dagger
    qml.adjoint(qml.StatePrep(alphas, wires=0))
    return qml.state()

print(np.real(np.round(output_matrix,2)))
\end{lstlisting}

\section{Bibliographic Remarks}\label{chap-preliminary-sec:Remark}

We end this chapter by discussing the recent advancements in efficiently implementing fundamental components of quantum computing. For clarity, we begin with a brief discussion of advanced quantum read-in and read-out protocols, which are crucial for efficiently loading and extracting classical data in the pipeline of quantum machine learning. Next, we review the latest progress in quantum linear algebra.

\subsection{Advanced quantum read-in protocols}
Although conventional read-in protocols offer feasible solutions for encoding classical data into quantum computers, they typically face two key challenges that limit their broad applicability for solving practical learning problems. To address these limitations, initial efforts have been made to develop more advanced quantum read-in protocols.

\smallskip

\noindent\textit{Challenge I: high demand for quantum resources}. Encoding methods like amplitude encoding and basis encoding presented in Chapter~\ref{chapter2:Sec-2.3.2-readout} generally suffer from high quantum resource requirements. While amplitude encoding is highly compact in terms of qubit requirements, the trade-off is the requirement of an exponential number of quantum gates with the data size to prepare an exact amplitude-encoded state. In contrast, while basis encoding can be implemented with a small number of quantum gates, it requires a large number of qubits proportional to the input size. The high demand for either quantum gates or qubit counts makes these basic encoding strategies infeasible for practical use.

\noindent\textit{Challenge II: insufficient nonlinearity}. While quantum mechanics is inherently linear, most practical machine learning models require nonlinearity to capture complex data patterns effectively. Conventional encoding methods like angle encoding introduce some degree of nonlinearity; however, the representational power remains limited due to the linear nature of quantum operations and limited circuit depth.

\smallskip

For Challenge I, a practical alternative is the approximate amplitude encoding (AAE) \citep{nakaji2022approximate}. Instead of implementing exact amplitude encoding, AAE trains a parameterized quantum circuit with a constrained depth to approximate the desired quantum state with high fidelity. The training process optimizes the fidelity between the target state and the approximate state, ensuring that the representation error remains within a small bound. 

\smallskip
For Challenge II, techniques like \textit{data re-uploading} \citep{perez2020data} have been developed. \textit{Data re-uploading} involves feeding the same classical data into the quantum circuit multiple times, interspersed with trainable quantum operations. By alternating data encoding with trainable transformations, this approach allows the quantum model to capture non-linear relationships more effectively without requiring additional qubits. Additionally, neural quantum embedding \citep{hur2024neural} has been proposed, which leverages classical deep learning techniques to learn optimal quantum embeddings, effectively separating non-linearly separable classes of data.

\smallskip
To address both Challenges I \& II, hybrid encoding strategies have been introduced to leverage the respective advantages of each encoding method. For instance, basis-amplitude encoding combines basis encoding for discrete random variables with amplitude encoding for high-precision probabilities, effectively encoding both categorical and continuous features without requiring additional qubits \citep{schuld2018information}. Another widely used strategy involves classical preprocessing methods for high-dimensional data, such as principal component analysis (PCA) \citep{abdi2010principal}, to reduce input dimensionality before applying quantum encoding. This preprocessing step reduces the overall quantum resource requirements while preserving relevant information.

In addition to fixed encoding strategies, learning-based approaches have emerged to dynamically adjust data encoding for specific tasks. For example, \citet{lloyd2020quantum} achieves task-specific quantum embeddings by incorporating learnable parameters into the encoding layers, which are optimized to maximize class separability in Hilbert space. This technique is analogous to classical metric learning. Following this routine, a quantum few-shot embedding framework \citep{liu2022embedding} has been proposed to encode classical data into quantum states, which can be generalized to the downstream quantum machine learning tasks. These methods enable quantum circuits to adapt their encodings dynamically, improving efficiency and performance.

\subsection{Advanced quantum read-out protocols}
Conventional quantum read-out protocols often face significant challenges, including high computational overhead and resource inefficiencies. Below, we discuss the primary challenges and discuss solutions in two quantum read-out protocols: QST and observable estimation.

\smallskip

\noindent\textit{Challenge I: High computational overhead of QST.} QST aims to reconstruct the density matrix of a quantum state, but this becomes computationally infeasible as the system size increases. This is because the required number of measurements and the classical memory grows exponentially with the number of qubits. 

\noindent\textit{Challenge II: Resource inefficiency in observable estimation.} The required number of measurements for observable estimation grows linearly with the number of Pauli terms in the observable. For observables where the number of Pauli terms substantially increases with the system size, the measurement cost becomes prohibitive.

\smallskip

For Challenge I, the key idea is to focus on representing only a subspace of the quantum space, effectively capturing task-relevant properties while reducing the computational cost. 
For example, in many QML algorithms, such as the HHL algorithm for solving linear systems~\citep{harrow2009quantum} and quantum singular value decomposition~\citep{rebentrost2018quantum}, the solution state exists within the row or column space of the input matrix. When the input matrix is low-rank, state tomography can be obtained efficiently~\citep{zhang2021quantum} as the linear combination of a complete basis chosen from the input matrix.
Besides, an effective technique is matrix product state (MPS) tomography \citep{lanyon2017efficient,orus2019tensor}, which leverages the fact that many practical quantum states, such as those in Ising models or low-entanglement systems, can be efficiently represented with a reduced number of parameters. By focusing on states with limited entanglement, MPS tomography reconstructs the state using only a polynomial number of measurements with the qubit counts. 

Another promising approach is the use of neural networks to parameterize quantum states. Neural quantum states allow for the efficient representation and reconstruction of density matrices, particularly for complex or high-dimensional quantum systems. For instance, Restricted Boltzmann Machines \citep{fischer2012introduction} and Transformer \citep{vaswani2017attention} have been applied to approximate the probability of measurement outcome and density matrices \citep{torlai2018neuralnetwork, schmale2022efficient, wang2022predicting, zhao2023provable}. These approaches are particularly effective for systems that are difficult to capture using traditional methods.

\smallskip

For Challenge II, a measurement reduction technique can be applied by exploiting the commutativity of Pauli operators. When multiple Pauli terms commute, they can be measured simultaneously within the same measurement basis, significantly reducing the total measurement cost \citep{kandalaHardwareefficientVariationalQuantum2017a,verteletskyi2020measurement}. This approach has been widely adopted in hybrid quantum-classical algorithms, such as variational quantum Eigensolvers (VQE) \citep{cerezo2021variational}, where Hamiltonians are decomposed into sums of Pauli terms. Grouping commuting terms into clusters allows for efficient measurement strategies while preserving accuracy.

In addition to measurement grouping, adaptive measurement strategies further improve resource allocation during expectation value estimation. The key observation is that not all Pauli terms contribute equally to the total observable—terms with higher variance require more measurement shots for reliable estimation, while low-variance terms can be measured with fewer shots. Building on this insight, adaptive shot allocation techniques \citep{rubin2018application, arrasmith2020operator,qian2024shuffle} dynamically distribute measurement resources across Pauli terms based on their statistical properties and achieve more accurate estimations with a finite measurement budget.

\subsection{Advanced quantum linear algebra}

Quantum linear algebra, based on the block encoding and quantum singular value transformation framework, has proven its power for the design of quantum algorithms.
Compared to the traditional subroutines like quantum phase estimation and quantum arithmetic \citep{kitaev1995quantum, perez2017quantumarithmetic}, quantum linear algebra can exponentially improve the dependency on precision \citep{gilyenquantum2019}.
However, a major drawback is that it can only deal with the singular values of block-encoded matrices.

A natural consideration is to generalize the singular value transformation to the eigenvalue transformation.
One strong motivation from the application aspect for this is to solve the differential equations on the quantum computer \citep{liu2021efficient, childs2021highprecision, an2021quantumaccelerated, jin2022partialdifferential, shang2024design}.
This remains an active research field.
Quantum eigenvalue processing, proposed by \citep{low2024quantumeigen}, focuses on matrices with real spectra and Jordan forms, in which they prepare the Faber history state to achieve efficient eigenvalue transformation over the complex plane.
\citep{an2023linearcombination, an2024laplacetransform} shows that simulating a general class of non-unitary dynamics can be achieved by the linear
combination of Hamiltonian simulation (LCHS).

Another approach is to broaden the range of functions that can be implemented by quantum linear algebra. Quantum phase processing, proposed by \citep{wang2023phaseprocessing}, can directly apply arbitrary trigonometric transformations to eigenphases of a unitary operator. Similar results have been independently obtained by \citet{motlagh2024generalized}.
In addition, \citet{rossi2022multivariable} investigates how to implement multivariate functions.
For the application, a representative example is the multivariate state preparation achieved by  \citep{mori2024efficient}, enabling the amplitude encoding of classical multivariate data.

In Chapter~\ref{chap2:preliminary-sec:linearAlgebra}, we introduce the concept of diagonal block encoding, which can convert a state preparation unitary into a block encoding. As the efficient construction of block encodings is a prerequisite for achieving end-to-end quantum advantage, an important research direction is to investigate which types of matrices can be efficiently prepared. By leveraging state-of-the-art techniques in quantum state preparation \citep{zhang2022quantum,sun2023asymptotically} and the linear combination of unitaries \citep{childs2012hamiltonian}, it is possible to efficiently construct block encodings for certain classes of matrices \citep{guseynov2024efficient, guseynov2024explicitgate}. Additionally, explicit constructions have been explored for specific types of sparse matrices \citep{camps2023explicit}.

\chapter{Quantum Kernel Methods}
\label{Chapter3:kernel}

The fundamental goal of machine learning (ML) algorithms is to learn the underlying feature representations embedded in the training data, allowing data points to be effectively modeled using simple models like linear classifiers. \textbf{Kernel methods} are a powerful approach to achieving this by enabling non-linear patterns to be captured in a computationally efficient manner. In kernel methods, a kernel function is defined as the inner product between the high-dimensional feature representations of data points. These feature representations are generated by a hidden feature map that transforms the original data into a higher-dimensional space where complex patterns become easier to identify and model. The kernel function, therefore, serves as a measure of similarity between data points in this transformed space.

The effectiveness of kernel methods heavily depends on the hidden feature map's ability to capture relevant patterns in the data. The more effectively this feature map can reveal the underlying structure, the better the kernel method's performance in learning and generalizing from the data. However, classical kernel methods are inherently limited by the types of patterns they can recognize, as these are constrained by classical computational frameworks. Essentially, classical models excel at detecting patterns they are specifically designed to recognize but may struggle with patterns that deviate from this framework.

In contrast, quantum mechanics is known for generating complex, non-intuitive patterns that are often beyond the reach of classical algorithms. Quantum systems can produce statistical correlations that are computationally challenging—or even impossible—for classical computers to replicate. This suggests that employing quantum circuits as hidden feature maps could enable the detection of patterns that are difficult or impractical for classical models to capture. By leveraging quantum circuits, we can potentially access new regions of the feature space, leading to improved pattern recognition capabilities and, consequently, better learning performance.

These insights motivate the development of \textbf{quantum kernel methods}, where both the hidden feature map and the kernel function are implemented on a quantum computer. By harnessing the unique properties of quantum mechanics, such as superposition and entanglement, quantum kernel methods have the potential to surpass their classical counterparts in specific machine learning tasks, particularly those involving highly complex or subtle patterns. This could result in more powerful models with enhanced generalization capabilities.

In this chapter, we provide a step-by-step explanation of the transition from classical kernel machines to quantum kernel machines in Chapter~\ref{sec:classical_kernel} and Chapter~\ref{sec:quantum_kernel}. Moreover, we discuss the theoretical foundation of quantum kernel machines in Chapter~\ref{chapt3:sec:theo_foundation_QK} from the aspects of expressivity and generalization of quantum kernel machines. Finally, we demonstrate simple yet illustrative code implementations on MNIST dataset.

\section{Classical Kernel Machines}\label{sec:classical_kernel}

\subsection{Motivation of kernel methods}
\begin{figure}[h!]
    \centering
    \includegraphics[width=0.98\textwidth]{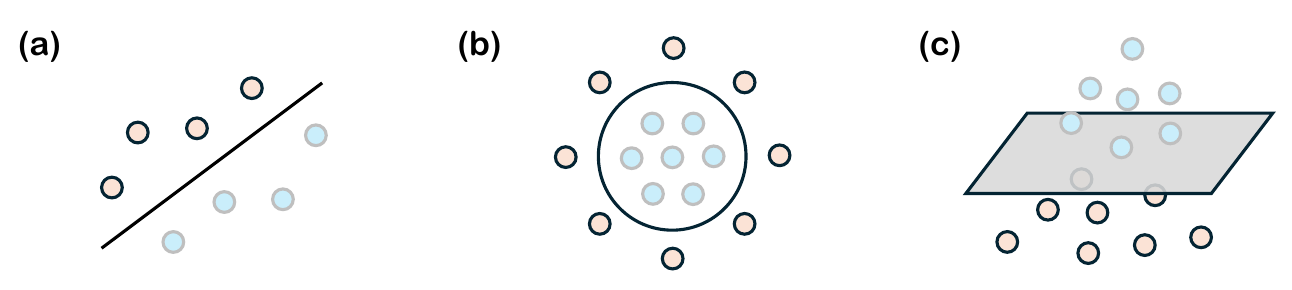}
    \caption{{\textbf{Various distributions of data points.} The left and middle panels show the cases where data points can and cannot be separated by a straight line. The right panel shows that the kernel function could map the linearly inseparable data points into the high dimensional linearly separable data points. } }
    \label{fig:hyperplane}  
    \end{figure}

Before delving into kernel machines, it is essential to first understand the motivation behind kernel methods. In many machine learning tasks, particularly in classification, the goal is to find a decision boundary that best separates different classes of data. When the data is linearly separable, this boundary can be represented as a straight line (in 2D), a plane (in 3D), or a hyperplane (in higher dimensions), as illustrated in Figure~\ref{fig:hyperplane}(a). Mathematically, given an input space $\mathcal{X}\subset \mathbb{R}^d$ with $d\ge 1$ and a target or output space $\mathcal{Y}=\{+1,-1\}$, we consider a training dataset $\mathcal{D}=\{(\bm{x}^{(i)},y^{(i)})\}_{i=1}^n \in (\mathcal{X} \times \mathcal{Y})^n$ where each data point $\bm{x}^{(i)}  \in \mathcal{X} $ is associated with a label $y^{(i)}  \in \mathcal{Y}$. For the dataset to be linearly separable, there must exist a vector $\bm{w} \in \mathbb{R}^{d}$ and a bias term $b\in \mathbb{R}$ such that
\begin{equation}
    \forall i\in [n], \quad y^{(i)}(\bm{w}^{\top} \bm{x}^{(i)}+b)\ge 0,
\end{equation}
where $\bm{w}^{\top}  \bm{x}^{(i)}$ represents the inner product of vectors $\bm{w}$ and $\bm{x}^{(i)}$.
This means that a hyperplane defined by $(\bm{w},b)$ can perfectly separate the two classes.

However, in real-world scenarios, data is often not linearly separable, as shown in Figure~\ref{fig:hyperplane}(b). The decision boundary required to separate classes may be curved or highly complex. Traditional linear models struggle with such non-linear data because they are inherently limited to creating only linear decision boundaries. This limitation highlights the need for more flexible approaches.

To address the challenge of non-linear data, one effective strategy is to transform the input data into a higher-dimensional space where the data may become linearly separable. This transformation is known as \textit{feature mapping}, denoted by
\begin{equation}\label{chapt3:eqn:feature-map}
  \phi:\bm{x} \to \phi(\bm{x})\in \mathbb{R}^D, 
\end{equation}
where the original input space $\mathcal{X}$ is mapped to a higher-dimensional feature space $\mathbb{R}^D$ with $D\ge d$. The idea is that, in this higher-dimensional space, complex patterns in the original data can be more easily identified using linear models.

However, explicitly computing the feature map $\phi(\bm{x})$ in Eqn.~(\ref{chapt3:eqn:feature-map}) can be computationally expensive, especially if the feature space is high-dimensional or even infinite-dimensional. Fortunately, many machine learning algorithms for tasks like classification or regression depend primarily on the inner product between data points, which will be explained in Chapter~\ref{chapt3:subsec:dual_rep}. In the feature space, this inner product is given by $\braket{{\phi}(\bm{x}^{(i)}),{\phi}(\bm{x}^{(j)})}$.

\begingroup
\allowdisplaybreaks
\begin{tcolorbox}[enhanced, 
  breakable,colback=gray!5!white,colframe=gray!75!black,title=Remark]
Throughout the whole tutorial, we interchangeably use $\bm{a}^{\top} \bm{b}$, $\bm{a}\cdot \bm{b}$,  $\langle \bm{a}, \bm{b} \rangle$,  and $\braket{\bm{a}|\bm{b}}$ to denote the inner production of two vectors $\bm{a}$ and $\bm{b}$.
\end{tcolorbox}
\endgroup

To circumvent the computational cost of explicitly calculating the feature map, we can use a \textit{kernel function}. A kernel function $k(\bm{x}^{(i)}, \bm{x}^{(j)})$ is defined as
\begin{equation}\label{eq:kernel_function}
  k(\bm{x}^{(i)}, \bm{x}^{(j)})=\braket{{\phi}(\bm{x}^{(i)}),{\phi}(\bm{x}^{(j)})}.
\end{equation} 
This allows us to compute the inner product in the higher-dimensional feature space indirectly, without ever having to compute ${\phi}(\bm{x})$ explicitly. This approach is commonly known as the \textit{kernel trick}. 

By using the kernel function directly within algorithms, we avoid the computational overhead of working in a high-dimensional space. The collection of kernel values for a dataset forms the kernel matrix (or Gram matrix), where each entry is given by $k(\bm{x}^{(i)}, \bm{x}^{(j)})$. This matrix is central to many kernel-based algorithms, as it captures the pairwise similarities between all training data points. 

To illustrate the kernel trick, let’s consider a simple example in a $d$-dimensional input space, where $\bm{x}=(\bm{x}_1,\cdots,\bm{x}_d)^{\top}$. Suppose we use the polynomial kernel of degree $2$, defined as
\begingroup
\allowdisplaybreaks
\begin{align}
  k(\bm{x},\bm{z})
  = & (\bm{x}^{\top}  \bm{z})^2 
  \nonumber \\
  = & (\bm{x}_1\bm{z}_1+ \cdots +\bm{x}_d\bm{z}_d)^2
  \nonumber \\
  = & \sum_{i=1}^d \sum_{j=1}^d \bm{x}_i\bm{z}_i \bm{x}_j\bm{z}_j
  \nonumber \\
  = & [\bm{x}_1^2, \cdots, \bm{x}_d^2, \sqrt{2}\bm{x}_1\bm{x}_2, \cdots, \sqrt{2}\bm{x}_d\bm{x}_{d-1}]
  \nonumber \\
  &  [\bm{z}_1^2, \cdots, \bm{z}_d^2, \sqrt{2}\bm{z}_1\bm{z}_2, \cdots, \sqrt{2}\bm{z}_d\bm{z}_{d-1}]^{\top}
  \nonumber \\
  = & {\phi}(\bm{x})^{\top}  {\phi}(\bm{z}).
\end{align}
\endgroup
Here, we see that the feature mapping, which comprises all second-order terms, takes the form as 
\begin{equation}
  {\phi}(\bm{x}) =  [\bm{x}_1^2, \cdots, \bm{x}_d^2, \sqrt{2}\bm{x}_1\bm{x}_2, \cdots, \sqrt{2}\bm{x}_{d}\bm{x}_{d-1}]^{\top}.
\end{equation}
Notably, directly computing the kernel function $(\bm{x}^{\top}  \bm{z})^2$ for a large $d$ is much more efficient than explicitly calculating the feature map ${\phi}(\bm{x})$ and then taking the inner product ${\phi}(\bm{x})^{\top} {\phi}(\bm{z})$. Specifically, using the kernel function only requires $\mathcal{O}(d)$ time, since it involves computing the dot product in the original input space $\mathbb{R}^d$. In contrast, if we were to explicitly compute the transformed feature vectors 
${\phi}(\bm{x})$ and their inner product, the time complexity could increase to $\mathcal{O}(D)$, where 
$D$ is the dimensionality of the feature space after mapping. For this example of a polynomial kernel with degree $2$, $D$ can grow to $\mathcal{O}(d^2)$. This demonstrates the computational efficiency of using the kernel trick.

\begingroup
\allowdisplaybreaks
\begin{tcolorbox}[enhanced, 
  breakable,colback=gray!5!white,colframe=gray!75!black,title=Remark]
  Throughout this manuscript, we use the notations $\mathcal{O}$ and $\Omega$ to represent the asymptotic upper and lower bounds, respectively, on the growth rate of a term, ignoring constant factors and lower-order terms.
\end{tcolorbox}
\endgroup

\subsection{Dual representation}\label{chapt3:subsec:dual_rep}
To understand why many machine learning algorithms rely primarily on the inner products between data points, we need to introduce the concept of the \textit{dual representation}. In essence, many linear parametric models used for regression or classification can be re-cast into an equivalent dual form, where the kernel function evaluated on the training data emerges naturally. 

Let’s start with a linear regression model with the training dataset $\{(\bm{x}^{(i)},y^{(i)})\}_{i=1}^n$, where the parameters are determined by minimizing a regularized sum-of-squares error function
\begin{equation}\label{chapter3:eq:Re_MSE}
    \mathcal{L}(\bm{w}) = \frac{1}{2} \sum_{i=1}^n \left(\bm{w}^{\top}  {\phi}(\bm{x}^{(i)})-y^{(i)} \right)^2 + \frac{\lambda}{2} \bm{w}^{\top}  \bm{w},
\end{equation}
where $\bm{w}^{\top}$ refers to the transpose of model parameters $\bm{w}$, ${\phi}(\bm{x}^{(i)})$ represents the feature mapping of the input $\bm{x}^{(i)}$, and  $\lambda \ge 0$  is the regularization factor that helps prevent overfitting.  

To find the optimal $\bm{w}$, we set the gradient of $\mathcal{L}(\bm{w})$ with respect to $\bm{w}$ to zero, i.e.,
\begin{equation}
    \frac{\partial \mathcal{L}(\bm{w})}{\partial \bm{w}}= \sum_{i=1}^n \left(\bm{w}^{\top}  {\phi}(\bm{x}^{(i)})-y^{(i)} \right) {\phi}(\bm{x}^{(i)}) + \lambda \bm{w} = 0,
\end{equation}
From this, we see that the solution for $\bm{w}$ can be expressed as a linear combination of the training data's feature vectors
\begin{equation}\label{chapter3:eqn:opt-para}
  \bm{w} = -\frac{1}{\lambda} \sum_{i=1}^n (\bm{w}^{\top}  {\phi}(\bm{x}^{(i)})-y^{(i)})   {\phi}(\bm{x}^{(i)}) = \sum_{i=1}^n \bm{a}^{(i)} {\phi}(\bm{x}^{(i)}) := \bm{\Phi}^{\top}   \bm{a},
\end{equation}
where $\bm{\Phi}=[{\phi}(\bm{x}^{(1)}), \cdots, {\phi}(\bm{x}^{(n)})]^{\top}$ is the design matrix, whose $i$-th row is given by ${\phi}(\bm{x}^{(i)})^{\top}$. Here, the coefficients $\bm{a}^{(i)}$ are functions of $\bm{w}$, defined as
\begin{equation}
  \bm{a}^{(i)} = - \frac{1}{\lambda} (\bm{w}^{\top} {\phi}(\bm{x}^{(i)})-y^{(i)}).
\end{equation}
Thus, instead of directly optimizing $\bm{w}$, we can reformulate the problem in terms of the parameter vector $\bm{a}$, giving rise to a dual representation. By substituting $\bm{w} = \bm{\Phi}^{\top} \bm{a}$ into the original objective function $\mathcal{L}(\bm{w})$ in Eqn.~(\ref{chapter3:eq:Re_MSE}), we obtain
\begin{equation}\label{eq:real_dual_rep}
    \mathcal{L}(\bm{a}) = \frac{1}{2} \bm{a}^{\top} \bm{\Phi} \bm{\Phi}^{\top} \bm{\Phi} \bm{\Phi}^{\top} \bm{a} - \bm{a}^{\top} \bm{\Phi} \bm{\Phi}^{\top} \bm{y} + \frac{1}{2}\bm{y}^{\top} \bm{y} + \frac{\lambda}{2} \bm{a}^{\top} \bm{\Phi} \bm{\Phi}^{\top} \bm{y},
\end{equation}
where $\bm{y}=(y^{(1)}, \cdots, y^{(n)})^{\top}$ denotes the vector representation of $n$ training labels. We define the kernel matrix ${K} = \bm{\Phi} \bm{\Phi}^{\top}$, where each element is given by
\begin{equation}
    {K}_{ij} = {\phi}(\bm{x}^{(i)})^{\top} {\phi}(\bm{x}^{(j)}) = k(\bm{x}^{(i)}, \bm{x}^{(j)}),
\end{equation}
using kernel function $k(\bm{x}, \bm{x}')$ defined by Eqn.~\eqref{eq:kernel_function}. The objective function in terms of $\bm{a}$ simplifies to
\begin{equation}
    \mathcal{L}(\bm{a}) = \frac{1}{2} \bm{a}^{\top} {K}^2 \bm{a} - \bm{a}^{\top} {K} \bm{y} + \frac{1}{2}y^{\top} \bm{y} + \frac{\lambda}{2} \bm{a}^{\top} {K} \bm{y},
\end{equation}
Setting the gradient of $\mathcal{L}(\bm{a})$ with respect to $\bm{a}$ to zero give us
\begin{equation}
    \bm{a} = ({K}+\lambda \mathbb{I}_n)^{-1} \bm{y},
\end{equation}
where $\mathbb{I}_n$ is the identity matrix of size $n\times n$.

Now, using this dual formulation, we can derive the prediction for a new input $\bm{x}$. Substituting $\bm{w}=\bm{\Phi}^{\top} \bm{a}$ in Eqn.~(\ref{chapter3:eqn:opt-para}), the prediction of $\bm{x}$ is given by
\begin{equation}\label{chapt3:eq:dual_rep_out}
    y(\bm{x}) = \bm{w}^{\top}   {\phi}(\bm{x}) = \langle \bm{\Phi}^{\top} \bm{a},  {\phi}(\bm{x}) \rangle = \bm{k}(\bm{x})^{\top}   ({K}+\lambda \mathbb{I}_n)^{-1} \bm{y},
\end{equation}
where $\bm{k}(\bm{x})\in \mathbb{R}^{n}$ is a vector with elements $\bm{k}_i(\bm{x}) = k(\bm{x}^{(i)}, \bm{x})={\phi}(\bm{x}^{(i)})^{\top}  {\phi}(\bm{x})$. This shows that the dual formulation allows us to express the solution entirely in terms of the kernel function $k(\bm{x},\bm{x}')$, rather than explicitly working with the feature map ${\phi}(\bm{x})$. This is particularly advantageous because it enables us to work in high-dimensional or even infinite-dimensional feature spaces implicitly.

In the dual formulation, we determine the parameter vector $\bm{a}$ by inverting an $n \times n$ matrix, compared to the original parameter space formulation which requires inverting a $d \times d$ matrix to determine $\bm{w}$.  Although this may not seem advantageous when $n>d$, the true benefit of the dual formulation lies in its ability to leverage the kernel trick. By expressing the solution in terms of the kernel function, we avoid the explicit computation of the feature vectors ${\phi}(\bm{x})$. This allows us to implicitly utilize feature spaces of very high, or even infinite, dimensionality, enabling the model to capture complex, non-linear relationships in the data without the associated computational cost.

\begingroup
\allowdisplaybreaks
\begin{tcolorbox}[enhanced, 
  breakable,colback=gray!5!white,colframe=gray!75!black,title=Remark]
  We standardize the notation used throughout this chapter to help readers follow the content more easily. The kernel function is represented by the lowercase letter $k$, or with subscripts $k_Q$ and $k_C$. The kernel matrix is denoted by the capital letter $K$, or with subscripts $K_Q$ and $K_C$. Additionally, we use the bold lowercase letter $\bm{k}(\bm{x})$ to represent the vector of kernel values, where each element is given by $\bm{k}_j(\bm{x}) = k(\bm{x}^{(j)}, \bm{x})$, corresponding to the training points $\bm{x}^{(j)} \in \{\bm{x}^{(1)}, \dots, \bm{x}^{(n)}\}$.
\end{tcolorbox}
\endgroup

\subsection{Kernel construction}
To utilize the kernel trick in machine learning algorithms, it is essential to construct valid kernel functions. One approach is to start with a feature mapping ${\phi}(\bm{x})$ and then derive the corresponding kernel.
For a one-dimensional input space, the kernel function is defined as
\begin{equation}
    k(\bm{x}, \bm{x}') = {\phi}(\bm{x})^{\top}{\phi}(\bm{x}') = \sum_{i=1}^D \langle {\phi}_i(\bm{x}),{\phi}_i(\bm{x}') \rangle,
\end{equation}
where ${\phi}_i(\bm{x})$ are the basis functions of the feature map.

Alternatively, kernels can be constructed directly without explicitly defining a feature map. In this case, we must ensure that the chosen function is a valid kernel, meaning it corresponds to an inner product in some (possibly infinite-dimensional) feature space. The validity of a kernel function is guaranteed by Mercer's condition.

\begin{fact}[Mercer's condition]
  \label{chap3:thm:mercer}
    Let $\mathcal{X} \subset \mathbb{R}^d$ be a compact set and let $k: \mathcal{X} \times \mathcal{X} \to \mathbb{R} $ be a continuous and symmetric function. Then, $k$ admits a uniformly convergent expansion of the form
    \begin{equation}\label{chap3:eq:kernel_unif_expan}
        k(\bm{x},\bm{x}') = \sum_{i=0}^{\infty} a_i \langle {\phi}_i(\bm{x}),{\phi}_i(\bm{x}') \rangle
    \end{equation}
    with $a_i > 0$ if and only if for any square-integrable function $c$, the following condition holds:
    \begin{equation}\label{chap3:eq:mercer}
        \int_{\mathcal{X}}\int_{\mathcal{X}} c(\bm{x}) c(\bm{x}') k(\bm{x}, \bm{x}') \mathrm{d} \bm{x} \mathrm{d} \bm{x}' \ge 0.
    \end{equation}
\end{fact}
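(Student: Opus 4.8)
The plan is to prove both directions through the spectral theory of the integral operator attached to $k$. Define $T_k\colon L^2(\mathcal{X})\to L^2(\mathcal{X})$ by $(T_k f)(\bm{x}) = \int_{\mathcal{X}} k(\bm{x},\bm{x}')\,f(\bm{x}')\,\mathrm{d}\bm{x}'$. Since $\mathcal{X}$ is compact and $k$ is continuous, $k$ is bounded and hence $k\in L^2(\mathcal{X}\times\mathcal{X})$, so $T_k$ is Hilbert--Schmidt, in particular compact; symmetry of $k$ makes $T_k$ self-adjoint. The key reformulation is that the condition in Eqn.~(\ref{chap3:eq:mercer}) says exactly that $T_k$ is positive semidefinite, i.e.\ $\langle T_k c, c\rangle_{L^2}\ge 0$ for every square-integrable $c$, which one sees by writing $\langle T_k c,c\rangle_{L^2}$ as the double integral on the left-hand side of Eqn.~(\ref{chap3:eq:mercer}).

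The easy (``only if'') direction: if $k(\bm{x},\bm{x}')=\sum_{i} a_i\langle{\phi}_i(\bm{x}),{\phi}_i(\bm{x}')\rangle$ converges uniformly with $a_i>0$, then for any square-integrable $c$, uniform convergence justifies interchanging the sum with the double integral, giving
\begin{equation}
\int_{\mathcal{X}}\int_{\mathcal{X}} c(\bm{x})c(\bm{x}')\,k(\bm{x},\bm{x}')\,\mathrm{d}\bm{x}\,\mathrm{d}\bm{x}' = \sum_i a_i\,\Bigl\|\int_{\mathcal{X}} c(\bm{x}){\phi}_i(\bm{x})\,\mathrm{d}\bm{x}\Bigr\|^2\ge 0,
\end{equation}
which is precisely Eqn.~(\ref{chap3:eq:mercer}).

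The substantive (``if'') direction: apply the spectral theorem for compact self-adjoint operators to obtain an orthonormal basis $\{\psi_i\}$ of $L^2(\mathcal{X})$ consisting of eigenfunctions of $T_k$ with real eigenvalues $\lambda_i\to 0$; positive semidefiniteness of $T_k$ forces $\lambda_i\ge 0$, and I discard the zero eigenvalues. Each remaining $\psi_i$ is continuous because $\psi_i=\lambda_i^{-1}T_k\psi_i$ and $T_k$ maps $L^2(\mathcal{X})$ into $C(\mathcal{X})$ by continuity of $k$ on the compact set $\mathcal{X}$. Expanding $k$ in the orthonormal basis $\{(\bm{x},\bm{x}')\mapsto\psi_i(\bm{x})\psi_j(\bm{x}')\}$ of $L^2(\mathcal{X}\times\mathcal{X})$ and computing the coefficients via $\langle T_k\psi_j,\psi_i\rangle_{L^2}=\lambda_j\delta_{ij}$ yields the identity $k(\bm{x},\bm{x}')=\sum_i\lambda_i\psi_i(\bm{x})\psi_i(\bm{x}')$ in $L^2$; setting $a_i=\lambda_i>0$ and $\phi_i=\psi_i$ (scalar-valued, so $\langle\phi_i(\bm{x}),\phi_i(\bm{x}')\rangle=\psi_i(\bm{x})\psi_i(\bm{x}')$) recovers the form claimed in Eqn.~(\ref{chap3:eq:kernel_unif_expan}).

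The main obstacle is promoting this $L^2$ identity to the \emph{uniform} convergence asserted in the statement. The standard route is to show that for each $n$ the remainder kernel $k(\bm{x},\bm{x}')-\sum_{i=1}^{n}\lambda_i\psi_i(\bm{x})\psi_i(\bm{x}')$ is again a continuous positive semidefinite kernel, so its diagonal values are nonnegative; consequently the partial sums $s_n(\bm{x}):=\sum_{i=1}^{n}\lambda_i\psi_i(\bm{x})^2$ form a nondecreasing sequence of continuous functions converging pointwise to the continuous function $k(\bm{x},\bm{x})$, whence by Dini's theorem the convergence is uniform on $\mathcal{X}$. Then the Cauchy--Schwarz estimate $\bigl|\sum_{i=m}^{n}\lambda_i\psi_i(\bm{x})\psi_i(\bm{x}')\bigr|\le\bigl(\sum_{i=m}^{n}\lambda_i\psi_i(\bm{x})^2\bigr)^{1/2}\bigl(\sum_{i=m}^{n}\lambda_i\psi_i(\bm{x}')^2\bigr)^{1/2}=\bigl(s_n(\bm{x})-s_{m-1}(\bm{x})\bigr)^{1/2}\bigl(s_n(\bm{x}')-s_{m-1}(\bm{x}')\bigr)^{1/2}$ transfers uniform Cauchy-ness to the full series on $\mathcal{X}\times\mathcal{X}$, and the uniform limit must coincide with the $L^2$ limit $k$. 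I would present the spectral theorem and Dini's theorem as black boxes and concentrate the write-up on these two reductions.
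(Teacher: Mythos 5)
The paper does not actually prove this statement: it is labeled a \textbf{Fact}, and the tutorial's stated convention is that Facts are quoted with references rather than proved, so there is no in-paper argument to compare yours against. Judged on its own, your proof is the classical spectral-theoretic proof of Mercer's theorem, and its architecture is sound: the reformulation of Eqn.~(\ref{chap3:eq:mercer}) as positivity of the Hilbert--Schmidt operator $T_k$, the easy direction by interchanging the uniformly convergent sum with the double integral (note that for vector-valued $\phi_i$ you implicitly need each $\phi_i$ to be measurable and integrable against $c$ so that $\int c\,\phi_i$ is defined), the spectral decomposition with continuous eigenfunctions for the positive eigenvalues, the $L^2$ identity $k=\sum_i\lambda_i\psi_i\otimes\psi_i$, and the Dini-plus-Cauchy--Schwarz upgrade to uniform convergence. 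Taking $\phi_i=\sqrt{\lambda_i}\,\psi_i$ or $a_i=\lambda_i$ with scalar $\phi_i$ indeed matches the form of Eqn.~(\ref{chap3:eq:kernel_unif_expan}).

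The one place where your sketch papers over the genuinely delicate step is the claim that $s_n(\bm{x})=\sum_{i\le n}\lambda_i\psi_i(\bm{x})^2$ converges pointwise \emph{to} $k(\bm{x},\bm{x})$. Nonnegativity of the diagonal of the remainder kernel only gives $s_n(\bm{x})\le k(\bm{x},\bm{x})$, hence a monotone bounded sequence with some limit $g(\bm{x})\le k(\bm{x},\bm{x})$; identifying $g$ with $k(\bm{x},\bm{x})$ is exactly the content that makes Mercer's theorem nontrivial. The standard repair is: for fixed $\bm{x}$, boundedness of $s_n(\bm{x})$ plus Cauchy--Schwarz shows $\sum_i\lambda_i\psi_i(\bm{x})\psi_i(\cdot)$ converges uniformly in the second argument to a continuous function; its $L^2$ Fourier coefficients against the $\psi_j$ coincide with those of $k(\bm{x},\cdot)$ (since $\langle k(\bm{x},\cdot),\psi_j\rangle=\lambda_j\psi_j(\bm{x})$), and the difference is a continuous kernel inducing the zero (positive) operator, which forces it to vanish identically; evaluating at $\bm{x}'=\bm{x}$ then yields $s_n(\bm{x})\to k(\bm{x},\bm{x})$, after which Dini and your Cauchy--Schwarz estimate complete the argument as you describe. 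With that step supplied, your proof is correct and is the standard one.
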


Mercer's condition is crucial because it ensures that the optimization problem for algorithms like support vector machines (SVM) remains convex \citep{mohri2018foundations}, guaranteeing convergence to a global minimum. A condition equivalent to Mercer's condition (under the assumptions of the theorem) is that the kernel $k(\cdot, \cdot)$ be positive definite symmetric. This property is more general since it does not require any assumption about $\mathcal{X}$.
\begin{definition}[Positive definite symmetric kernels]
    A kernel $k: \mathcal{X} \times \mathcal{X} \to \mathbb{R} $  is said to be positive definite symmetric (PDS) if for any $\{\bm{x}^{(1)}, \cdots, \bm{x}^{(n)}\} \subset \mathcal{X}$, the matrix ${K} = [k(\bm{x}^{(i)}, \bm{x}^{(j)})]_{ij} \in \mathbb{R}^{n\times n} $ is symmetric positive semidefinite (SPSD).
\end{definition}
In other words, a kernel matrix $K$ associated with a PDS kernel function will always be SPSD, ensuring that the corresponding optimization problem remains well-behaved.

Below, we present several commonly used positive definite symmetric kernels.

\begin{shadedbox}
\begin{example}[Polynomial kernels]
    A polynomial kernel of degree  $m \in \mathbb{N}$ with a constant $c > 0$ is defined as
    \begin{equation}
        k(\bm{x},\bm{x}')=(\bm{x}\cdot \bm{x}' + c)^m,\quad \forall \bm{x},\bm{x}'\in \mathbb{R}^d.
    \end{equation}
    This kernel maps the input space to a higher-dimensional space of dimension $\binom{d+m}{m}$. For instance,  in a two-dimensional input space ($d = 2$) and with $m = 2$, the kernel expands as follows
    \begin{equation}\label{eq:poly_kernels}
        k(\bm{x},\bm{x}')=(\bm{x}_1\bm{x}_1' + \bm{x}_2\bm{x}_2' +c)^2 = \left[ \begin{array}{c} \bm{x}_1^2 \\ \bm{x}_2^2 \\ \sqrt{2}~\bm{x}_1\bm{x}_2 \\ \sqrt{2c}~\bm{x}_1 \\ \sqrt{2c}~\bm{x}_2 \\ c \end{array} \right] \cdot \left[ \begin{array}{c} \bm{x}_1^{'2} \\ \bm{x}_2^{'2} \\ \sqrt{2}~\bm{x}_1'\bm{x}_2' \\ \sqrt{2c}~\bm{x}_1' \\ \sqrt{2c}~\bm{x}_2' \\ c \end{array} \right]
    \end{equation}
In other words, this kernel corresponds to an inner product in a higher-dimensional space of dimension 6.
\end{example}
\end{shadedbox}

\smallskip

Thus, the features corresponding to a second-degree polynomial are the original features ($\bm{x}_1$ and $\bm{x}_2$), as well as products of these features, and the constant feature.
More generally, the features associated with a polynomial kernel of degree $m$ are all the monomials of degree at most $m$ based on the original features.

\begin{shadedbox}
\begin{example}[Gaussian kernels]
    The Gaussian kernel (or Radial Basis Function, RBF) is one of the most widely used kernels, defined as
    \begin{equation}
        \forall \bm{x},\bm{x}'\in \mathbb{R}^d, ~~~k(\bm{x},\bm{x}')=\exp\left(-\frac{\|\bm{x}-\bm{x}'\|}{2\sigma^2}\right),
    \end{equation}
    where $\sigma>0$ controls the width of the Gaussian function. 
\end{example}
\end{shadedbox}

\smallskip

Gaussian kernels are particularly effective in capturing complex non-linear patterns due to their infinite-dimensional feature space.

\begin{shadedbox}
\begin{example}[Sigmoid kernels]
  The sigmoid kernel over $\mathbb{R}^d$ is defined as:
    \begin{equation}
        k(\bm{x},\bm{x}')=\Tanh(a\cdot (\bm{x} \cdot \bm{x}')+b),\quad \forall \bm{x},\bm{x}'\in \mathbb{R}^d,
    \end{equation}
    where $a,b > 0$ are constants, and $\Tanh(c)=\frac{e^c-e^{-c}}{e^c+e^{-c}}$ is the hyperbolic tangent function, which squashes an arbitrary constant $c\in \mathbb{R}$ to a value between $-1$ and $1$.
\end{example}
\end{shadedbox}

\smallskip

This kernel is related to neural networks, as it resembles the activation function commonly used in multi-layer perceptrons as will be introduced in the next chapter. Using the sigmoid kernel with support vector machines results in a model similar to a simple neural network.
 
\begingroup
\allowdisplaybreaks
\begin{tcolorbox}[enhanced, 
  breakable,colback=gray!5!white,colframe=gray!75!black,title=Remark]
  Support Vector Machines (SVMs) are a well-known algorithm that heavily relies on kernel methods and are primarily used for classification tasks. The objective of an SVM is to identify the optimal hyperplane that separates data points from different classes with the \textit{maximum margin}. The margin is defined as the distance between the hyperplane and the closest data points from each class, known as support vectors. SVMs can be applied to both linear and non-linear classification problems. For non-linear cases, kernel functions are employed to map the data into higher-dimensional spaces, enabling the separation of data that is not linearly separable in the original space. For a detailed introduction to SVMs, please refer to \citet{mohri2018foundations}.
\end{tcolorbox}
\endgroup

\section{Quantum Kernel Machines}\label{sec:quantum_kernel}

\subsection{Motivations for quantum kernel machines}
To effectively introduce quantum kernel machines, it is essential to recognize the limitations of classical kernel machines. As discussed in Chapter~\ref{sec:classical_kernel}, classical kernel machines rely on manually tailored feature mappings, such as polynomials or radial basis functions.  However, these mappings may fail to capture the complex patterns behind the dataset. Quantum kernel machines emerge as a promising alternative, as they perform feature mapping using quantum circuits, enabling them to explore exponentially larger feature spaces that are otherwise infeasible for classical computation.

\begin{figure*}[h!]
  \centering 
  \includegraphics[width=0.96\textwidth]{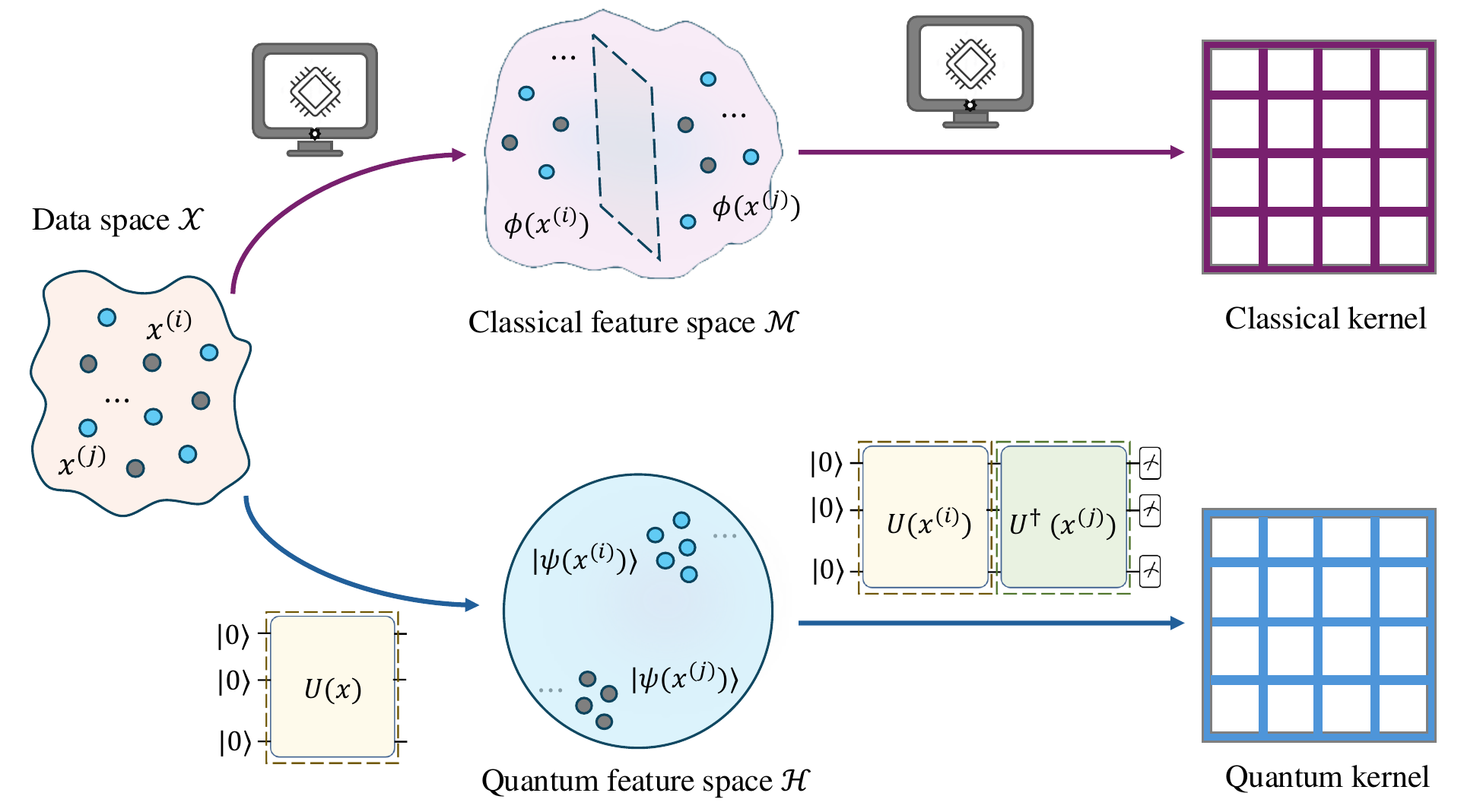}
  \caption{{\textbf{The paradigm of classical and quantum kernels.} Both of the classical and quantum kernels embed the data points from data space $\mathcal{X}$ into high-dimensional space, and then compute the kernel as the inner product of feature maps. The quantum kernel leverages quantum circuits to achieve this goal, as indicated by the blue color. }}
  \label{chap3:fig:QK_circuit}
\end{figure*}

Another crucial characteristic of quantum kernels is that they can be effectively implemented on near-term quantum devices, making them a practical tool for exploring the utility of near-term quantum technologies.

\subsection{Quantum feature maps and quantum kernel machines}
The key difference between quantum kernel machines and classical kernel machines lies in how the feature mapping is performed. In the quantum context, a feature map refers to the injective encoding of classical data  $\bm{x} \in \mathbb{R}^d$ into a quantum state $\ket{{\phi}(\bm{x})}=U(\bm{x})\ket{\phi}$ on an $N$-qubit quantum register, where $U(\bm{x})$ refers to the physical operation or quantum circuit that depends on the data $\bm{x}$. This feature map is implemented on a quantum computer and produces quantum states, which are referred to as quantum feature maps.

\begin{definition}
  [Quantum feature map]
  \label{cha3:def:quan_feat_map}
  Given an $N$-qubit quantum system initialized in state $\ket{\psi}$, let $\bm{x}\in \mathcal{X} \subset \mathbb{R}^d$ be classical data. The quantum feature map is defined as the mapping
  \begin{align}
    & \phi:\mathcal{X} \to \mathcal{F},
    \nonumber \\
    \phi(\bm{x}) = & \ket{{\phi}(\bm{x})}\bra{{\phi}(\bm{x})} = \rho(\bm{x}),
  \end{align}
  where $\mathcal{F}$ is the space of complex-valued $2^N \times 2^N$ matrices equipped with the Hilbert-Schmidt inner product $\braket{\rho, \sigma} = \Tr(\rho \sigma)$ for $\rho, \sigma \in \mathcal{F}$. In addition, the state $\ket{{\phi}(\bm{x})}$ can be implemented by applying a data-encoding quantum circuit $U(\bm{x})$ introduced in Chapter~\ref{cha3:subsec:q-read-in} on an initial state $\ket{\psi}$, leading to the expression of $\ket{{\phi}(\bm{x})}=U(\bm{x})\ket{\psi}$.
\end{definition}

Recall that one way of constructing kernels is adopting the inner product of the defined feature mappings. Using the Hilbert-Schmidt inner product from Definition~\ref{cha3:def:quan_feat_map}, the quantum kernel is defined as follows.

\begin{definition}
  [Quantum Kernel]
  \label{cha3:def:quan_kernel}
  Let $\phi$ be a quantum feature map over the domain $\mathcal{X}$. The quantum kernel $k_Q$ is the inner product between two quantum feature maps $\rho(\bm{x})$ and $ \rho(\bm{x}')$ for data points $\bm{x}, \bm{x}' \in \mathcal{X}$,
  \begin{align}
    & k_Q: \mathcal{X} \times \mathcal{X} \to \mathbb{R},
    \nonumber \\
    k_Q(\bm{x}, \bm{x}') = & \Tr(\rho(\bm{x}) \rho(\bm{x}')) = \left|\braket{{\phi}(\bm{x})|{\phi}(\bm{x}')}\right|^2.
  \end{align}
\end{definition}

To justify the term `kernel', we need to show that the quantum kernel is indeed a \textit{positive definite function}. 
A quantum kernel can be expressed as the product of a complex-valued kernel $\hat{k}_Q(\bm{x},\bm{x}')=\braket{{\phi}(\bm{x})|{\phi}(\bm{x}')} \in \mathbb{C}$ and its complex conjugate $\hat{k}_Q(\bm{x},\bm{x}')^*=\braket{{\phi}(\bm{x})|{\phi}(\bm{x}')}^*=\braket{{\phi}(\bm{x}')|{\phi}(\bm{x})}$. Since the product of two kernels is known to be a valid kernel, it suffices to show that $\hat{k}_Q(\bm{x},\bm{x}')$ is a valid complex-valued kernel and satisfies positive definiteness. For any $\bm{x}^{(i)} \in \mathcal{X}$, $i = 1,\cdots,n$, and any coefficients $c_i\in \mathbb{C}$, we have 
\begin{align}
  \sum_{i,j} c_i c_j^* \left(\hat{k}_Q(\bm{x}^{(i)},\bm{x}^{(j)}) \right) = & \sum_{i,j} c_i c_j^*\braket{{\phi}(\bm{x}^{(i)})|{\phi}(\bm{x}^{(j)})}
  \nonumber \\
  = & \left( \sum_i c_i \bra{{\phi}(\bm{x}^{(i)})} \right) \left( \sum_j c_j^* \ket{{\phi}(\bm{x}^{(j)})} \right)
  \nonumber \\
  = & \left\| \sum_i c_i^*  \ket{{\phi}(\bm{x}^{(i)})} \right\|^2 \ge 0.
\end{align}
This inequality confirms that $\hat{k}_Q(\bm{x},\bm{x}')$ satisfies Mercer's condition to be a valid kernel as illustrated in Eqn.~\eqref{chap3:eq:mercer}. Therefore, the quantum kernel $k_Q(\bm{x},\bm{x}')$ is also a valid kernel.

\begin{figure}[h!]
    \centering
    \includegraphics[width=0.9\textwidth]{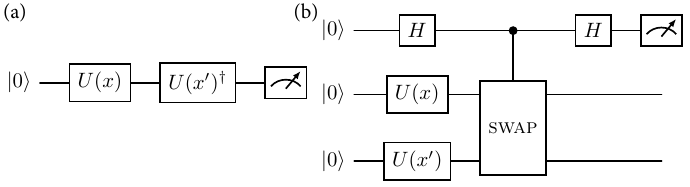}
    \caption{{\textbf{Two methods for computing the inner product of the kernel.} (a) Loschmidt echo test. (b) Swap test.  } }
    \label{fig:swap_test} 
    \end{figure}
  
The inner product between quantum states can be efficiently estimated on quantum computers using techniques such as Loschmidt echo test \citep{kusumoto2021experimental} and SWAP test \citep{blank2020quantum}. Both methods correspond to distinct quantum circuit architectures, as illustrated in Figure~\ref{fig:swap_test}.    

One key merit of quantum kernels is that their derivation does not require the explicit representation of the quantum feature maps. Instead, it relies only on the construction of the associated quantum circuits. This aligns with the essence of kernel methods: while feature mappings can be computationally complex, the kernel function itself must remain efficient to evaluate.

Below, we outline the core steps for constructing a quantum kernel.
\begingroup
\allowdisplaybreaks
\begin{tcolorbox}[colback=blue!5!white,colframe=blue!75!black,title=General construction rules of quantum kernels]
There are three steps to construct a quantum kernel:
\begin{itemize}
    \item[1.] {Quantum feature map construction}. Design a data-dependent quantum circuit $U(\bm{x})$ to encode classical input data $\bm{x}$ into the amplitudes or parameters of a quantum state $\ket{{\phi}(\bm{x})}=U(\bm{x})\ket{{\psi}}$ where the initial state $\ket{{\psi}}$ is typically $\ket{0}^{\otimes N}$. 
    \item[2.] {Kernel evaluation}. The quantum kernel is typically defined as the inner product of quantum states corresponding to two data points. Mathematically, this can be expressed as $k_Q(\bm{x},\bm{x}')= \left| \braket{{\phi}(\bm{x})|{\phi}(\bm{x}')}\right|^2$.
    \item[3.] {Post-processing}. After executing the quantum circuit for different input pairs, measure the output and calculate the kernel matrix. This matrix will then be used in machine learning models, such as SVM.
  \end{itemize}
\end{tcolorbox}
\endgroup

Below is a simple example of a quantum kernel with an angle encoding feature map introduced in Chapter~\ref{cha3:subsec:q-read-in}.
\begin{shadedbox}
\begin{example}[Single-qubit kernel]
  Consider an embedding that encodes a scalar input $x \in \mathbb{R}$ into the quantum state of a single qubit. The embedding is implemented by the Pauli-X rotation gate $\RX(x)=e^{-ix\sigma_x/2}$, where $\sigma_x$ is the Pauli-X operator. The quantum feature map is then given by $\phi:x \to \rho(x)=\ket{\phi(x)}\bra{\psi(x)}$ with
  \begin{align}
    \ket{\phi(x)}= & e^{-ix\sigma_x/2}\ket{0}
    \nonumber \\
    = & \left( \cos(x) \mathbb{I} - i\sin(x)\sigma_x  \right) \ket{0}
    \nonumber \\
    = &  \cos(x) \ket{0} - i\sin(x) \ket{1},
  \end{align}
  and hence the quantum kernel yields
  \begin{equation}
    k(x,x')= \left| \cos\left(\frac{x}{2}\right)\cos\left(\frac{x'}{2}\right) + \sin\left(\frac{x}{2}\right)\sin\left(\frac{x'}{2}\right) \right|^2 = \cos\left(\frac{x-x'}{2}\right)^2,
    \nonumber 
  \end{equation}
  which is a translation invariant squared cosine kernel. 
\end{example}
\end{shadedbox}

\subsection{Relation between quantum and  classical kernel machines}
An intuitive way to understand the connections and differences between classical and quantum kernel machines is by comparing their fundamental components.  As illustrated in Figure~\ref{chap3:fig:QK_circuit}, both types of kernel machines involve four fundamental components: input, feature mapping, kernel matrix, and the computation process. Table~\ref{cha3:tab:Q-C-Kernel} summarizes how these components are implemented in classical and quantum kernel machines.

\begin{table}[h!]
  \centering 
  \caption{Comparison between classical and quantum kernel machines.}
  \label{cha3:tab:Q-C-Kernel}
  \footnotesize
  \begin{tabular}{l|l|l}
    \toprule 
    \multicolumn{1}{c}{} & \multicolumn{1}{|c|}{Classical Kernel} & \multicolumn{1}{c}{Quantum Kernel} \\ \midrule
    Input   &  classical data $ \{\bm{x}^{(i)}\}_{i=1}^n \in \mathbb{R}^d $  &  classical data $ \{\bm{x}^{(i)}\}_{i=1}^n \in \mathbb{R}^d$ \\  
    Feature  &  Real vector ${\phi}(\bm{x}) \in \mathbb{R}^{D}$  &  Complex vector $\ket{{\phi}(\bm{x})} \in \mathbb{C}^{2^N}$                  \\  
    Kernel  & $n$-dimensional real matrix $K_C$                         &  $n$-dimensional real matrix $K_Q$ \\ 
    Computation & Digital logical circuits ${\phi}(\bm{x})$                &  Quantum circuits $U(\bm{x})\ket{{\psi}}$ \\
    \bottomrule
  \end{tabular}
\end{table}

The main distinctions between classical and quantum kernel machines lie in the computation processes for feature mapping and kernel matrix construction, as outlined below.
\begin{itemize}
  \item \textit{Classical versus quantum feature maps}. Quantum feature maps encode data into quantum states, resulting in exponentially large complex-valued vectors $\ket{{\phi}(\bm{x})} \in \mathbb{C}^{2^N}$, whereas classical feature maps operate in finite-dimensional real-valued spaces ${\phi}(\bm{x}) \in \mathbb{R}^{D}$. Although quantum feature maps can theoretically be simulated on classical computers by separating real and imaginary parts, this simulation becomes computationally infeasible as the number of qubits grows. Even feature maps generated by shallow quantum circuits are hard to simulate efficiently on classical hardware, demonstrating the inherent computational complexity of quantum feature maps.
  \item \textit{Classical versus quantum kernels}. The kernel function is determined by the feature mapping, but its computational properties differ significantly between classical and quantum methods. A key merit of kernel methods is that they allow the use of complex feature maps while maintaining efficient kernel evaluations.
  Quantum kernels leverage quantum circuits to compute the inner product of quantum states, enabling the recognition of intricate patterns that classical kernels fail to capture. If a quantum kernel is computationally hard to evaluate classically, it offers a significant advantage by exploiting quantum computing’s ability to process complex data representations efficiently.
\end{itemize}

\begingroup
\allowdisplaybreaks
\begin{tcolorbox}[enhanced, 
  breakable,colback=gray!5!white,colframe=gray!75!black,title=Remark]
  The efficiency discussed here refers to the computational time within the respective classical or quantum frameworks. Specifically,
  \begin{itemize}
    \item \textit{Classical Efficiency}: Determined by the depth of digital logical circuits used for feature mapping and kernel computation.
    \item \textit{Quantum Efficiency}: Determined by the depth of quantum circuits required to achieve the same tasks.
  \end{itemize}
  A computation process is considered efficient if it can be completed in polynomial time relative to the problem size in its corresponding framework (classical or quantum). 
\end{tcolorbox}
\endgroup

\subsection{Concrete examples of quantum kernels}

To better understand the concept of a quantum kernel, let’s examine the kernels associated with common information encoding strategies used in quantum machine learning. It is important to note that some kernels cannot be efficiently computed on classical computers \citep{liu2021rigorous}. While such results are significant, the question of which quantum kernels are practically useful for real-world problems remains an open challenge.

In the following examples, we will first review the various encoding strategies introduced in Chapter~\ref{cha3:subsec:q-read-in}, and then present the corresponding quantum kernels.

\begin{shadedbox}
  \begin{example}
    [Quantum kernel with basis encoding]
    Given a classical binary vector $\bm{x}=(\bm{x}_1, \cdots, \bm{x}_d) \in \{0,1\}^d$, the quantum feature mapping related to basis encoding refers to
    \begin{equation}
      \ket{{\phi}(\bm{x})}=\ket{\bm{x}_{1}, \cdots, \bm{x}_{d}},
    \end{equation} 
    and the induced quantum kernel yields
    \begin{equation}
      k(\bm{x},\bm{x}') = |\braket{{\phi}(\bm{x})|{\phi}(\bm{x}')}|^2 = \delta_{\bm{x}\bm{x}'},
    \end{equation}
    where $\delta_{\bm{x}\bm{x}'} = 1 $ if $\bm{x}=\bm{x}'$ and otherwise $0$.
  \end{example}
\end{shadedbox}

The basis encoding requires $N=d$ qubits. This kernel function is a very strict similarity measure on input space, and arguably not the best choice of data encoding for quantum machine learning tasks.

\begin{shadedbox}
  \begin{example}
    [Quantum kernel with amplitude encoding]
    Given a vector $\bm{x}=(\bm{x}_{1}, \cdots, \bm{x}_{d}) \in \mathbb{R}^{d}$, the quantum feature mapping related to amplitude encoding refers to
    \begin{equation}
      \ket{{\phi}(\bm{x})}=\sum_{i=1}^{d} \frac{\bm{x}_{i}}{\|\bm{x}\|_2} \ket{i},
    \end{equation} 
    where $\|\bm{x}\|_2$ is the Euclidean norm. The related quantum kernel is given by
    \begin{equation}
      k(\bm{x},\bm{x}') = |\braket{\bm{x}|\bm{x}'}|^2 = |\langle \bm{x},\bm{x}'\rangle|^2.
    \end{equation}
  \end{example}
\end{shadedbox}
The amplitude encoding requires $N=\lceil \log2(d) \rceil$ qubits. This encoding strategy leads to an identity feature map, which can be implemented by a non-trivial quantum circuit (for obvious reasons also known as “arbitrary state preparation”), which takes time $\mathcal{O}(d)$ in the worst case. Besides, this quantum kernel does not add much power to a linear model in the original feature space, and it is more of interest for theoretical investigations that want to eliminate the effect of the feature map.

\begin{shadedbox}
  \begin{example}
    [Quantum kernel with angle encoding]
    Given a vector $\bm{x}=(\bm{x}_{1}, \cdots, \bm{x}_{d}) \in \mathbb{R}^{d}$, the quantum feature mapping related to angle encoding refers to
    \begin{equation}\label{chap3:eq:angle_encode_feat}
      \ket{{\phi}(\bm{x})}= W_d e^{-i \bm{x}_{d} G_{d}} W_{d} \cdots W_2 e^{-i \bm{x}_{1} G_1}W_1 \ket{0}^{\otimes d},
    \end{equation} 
    where $W_0,\cdots,W_d$ are arbitrary unitary evolutions, and $G_i$ is $d_i \le d$-dimensional Hermitian operator called the generating Hamiltonian. 
        
    For a special case in which $W_i=\mathbb{I}$ and $G_i$ refers to the Pauli-X operators $\sigma_x$ acting on the $i$-th qubit, the quantum feature mapping refers to
    \begin{equation}
      \ket{{\phi}(\bm{x})}= \bigotimes_{i=1}^{d} \exp\left(-i\frac{\bm{x}_{i}}{2}\sigma_x \right) \ket{0}^{\otimes d},
    \end{equation}
    and the related quantum kernel is given by
    \begin{align}
      k(\bm{x},\bm{x}') = & \prod_{i=1}^d \left| \sin(\bm{x}_{i})\sin(\bm{x}_i^{\prime}) + \cos(\bm{x}^{(i)})\cos(\bm{x}_i^{\prime}) \right|^2 
      \nonumber \\
      = & \prod_{i=1}^d \left| \cos(\bm{x}_{i}-\bm{x}_{i}^{\prime}) \right|^2.
    \end{align}
  \end{example}
\end{shadedbox}

The angle encoding requires $d$-qubit, mapping the classical data to $2^d$-dimensional Hilbert space. 
One merit of angle encoding is introducing non-linearity, which is crucial for transforming low-dimensional, non-linearly separable data into higher-dimensional, linearly separable representations—a property essential for effective kernel-based machine learning. Additionally, angle encoding is well-suited for implementation on modern devices featuring limited qubits and circuit depth, making it practical for exploring the practical utility of near-term quantum computers.

The quantum kernels related to different data encoding strategies have a resemblance to kernels from the classical machine learning literature. This means that sometimes up to an absolute square value, they can be identified with standard kernels such as the polynomial or Gaussian kernel. For the special case of angle encoding, the resemblance to classical kernels is because the employed quantum circuit does not employ any entangled quantum gates such that it can be simulated classically. We now discuss the general form of the quantum kernels induced by quantum feature maps from angle encoding in Eqn.~\eqref{chap3:eq:angle_encode_feat}. We focus on the simplified case that each input $\bm{x}^{(i)}$ is only encoded once and that all the encoding Hamiltonians are the same, i.e., $G_1=\cdots=G_{d}=G$.

\begin{theorem}
  [Fourier representation of the quantum kernel]\label{chapt:kernel:thm-fourier} 
  Let $\mathcal{X}=\mathbb{R}^d$ and $U(\bm{x})$ be a quantum circuit that encodes the data inputs $\bm{x}=(\bm{x}_{1},\cdots,\bm{x}_{d})\in \mathcal{X}$ into a $d$-qubit quantum state $\ket{\phi(\bm{x})}$ via gates of the form $e^{-i\bm{x}_{i}G}$ for $i=1,\cdots, d$.  Without loss of generality, $G$ is assumed to be a $m \le 2^d$-dimensional diagonal operator with spectrum $\lambda_1,\cdots,\lambda_m$. Between such data-encoding gates, and before and after the entire encoding circuit, arbitrary unitary evolutions $W_{1},\cdots, W_{d+1}$ can be applied, so that
  \begin{equation}
    U(\bm{x}) = W_{d+1} e^{-i \bm{x}_{d} G_{d}} W_{d} \cdots W_2 e^{-i \bm{x}_{1} G_1}W_1.
  \end{equation}
  The quantum kernel $k_Q(\bm{x},\bm{x}')$ can be written as
  \begin{equation}
    k_Q(\bm{x},\bm{x}') = \sum_{\bm{s},\bm{t}\in \Omega} e^{i\bm{s}\bm{x}}e^{i\bm{t}\bm{x}'}c_{\bm{s}\bm{t}},
  \end{equation}
  where $\Omega \subset \mathbb{R}^d$, and $c_{\bm{s}\bm{t}} \in \mathbb{C}$. For every $\bm{s},\bm{t}\in \Omega$, we have $-\bm{s},-\bm{t}\in \Omega$ and $c_{\bm{s}\bm{t}}=c_{-\bm{s}-\bm{t}}^*$, which guarantees that the quantum kernel is real-valued.
\end{theorem}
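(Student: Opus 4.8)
The plan is to follow the Fourier-series analysis of data-encoding quantum circuits. First, since $G$ is diagonal with eigenvalues $\lambda_1,\dots,\lambda_m$, each encoding gate factorizes as $e^{-i\bm{x}_i G}=\sum_{j=1}^m e^{-i\bm{x}_i\lambda_j}\,\Pi_j$, where $\Pi_j$ is the spectral projector onto the $\lambda_j$-eigenspace (working with projectors rather than a fixed orthonormal basis sidesteps any difficulty with eigenvalue multiplicities). Substituting these expansions into $U(\bm{x})=W_{d+1}e^{-i\bm{x}_d G}W_d\cdots W_2 e^{-i\bm{x}_1 G}W_1$ and distributing the sums over a multi-index $\bm{j}=(j_1,\dots,j_d)\in[m]^d$, I would obtain
\begin{equation}
  \ket{\phi(\bm{x})}=U(\bm{x})\ket{0}^{\otimes d}=\sum_{\bm{j}\in[m]^d} e^{-i\,\bm{\lambda}(\bm{j})\cdot\bm{x}}\,\ket{v_{\bm{j}}},\qquad \bm{\lambda}(\bm{j}):=(\lambda_{j_1},\dots,\lambda_{j_d}),
\end{equation}
where $\ket{v_{\bm{j}}}:=W_{d+1}\Pi_{j_d}W_d\cdots\Pi_{j_1}W_1\ket{0}^{\otimes d}$ is a fixed, data-independent vector. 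This step is pure bookkeeping; the essential point is that the whole $\bm{x}$-dependence is carried by the scalar phases $e^{-i\bm{\lambda}(\bm{j})\cdot\bm{x}}$.

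Next I would insert this into the kernel. From the expansion,
\begin{equation}
  \braket{\phi(\bm{x})|\phi(\bm{x}')}=\sum_{\bm{j},\bm{k}\in[m]^d} e^{i\bm{\lambda}(\bm{j})\cdot\bm{x}}\,e^{-i\bm{\lambda}(\bm{k})\cdot\bm{x}'}\,\braket{v_{\bm{j}}|v_{\bm{k}}},
\end{equation}
and since $k_Q(\bm{x},\bm{x}')=\braket{\phi(\bm{x})|\phi(\bm{x}')}\,\overline{\braket{\phi(\bm{x})|\phi(\bm{x}')}}$, multiplying the two sums (indexing the conjugated factor by $\bm{j}',\bm{k}'$) gives
\begin{equation}
  k_Q(\bm{x},\bm{x}')=\sum_{\bm{j},\bm{k},\bm{j}',\bm{k}'} \braket{v_{\bm{j}}|v_{\bm{k}}}\,\overline{\braket{v_{\bm{j}'}|v_{\bm{k}'}}}\; e^{i(\bm{\lambda}(\bm{j})-\bm{\lambda}(\bm{j}'))\cdot\bm{x}}\; e^{i(\bm{\lambda}(\bm{k}')-\bm{\lambda}(\bm{k}))\cdot\bm{x}'}.
\end{equation}
Defining the frequency set $\Omega:=\{\bm{\lambda}(\bm{j})-\bm{\lambda}(\bm{k}):\bm{j},\bm{k}\in[m]^d\}\subset\mathbb{R}^d$ and collecting, for each pair $(\bm{s},\bm{t})\in\Omega\times\Omega$, all quadruples with $\bm{\lambda}(\bm{j})-\bm{\lambda}(\bm{j}')=\bm{s}$ and $\bm{\lambda}(\bm{k}')-\bm{\lambda}(\bm{k})=\bm{t}$ into a single coefficient $c_{\bm{s}\bm{t}}$ yields the stated form $k_Q(\bm{x},\bm{x}')=\sum_{\bm{s},\bm{t}\in\Omega}c_{\bm{s}\bm{t}}\,e^{i\bm{s}\cdot\bm{x}}e^{i\bm{t}\cdot\bm{x}'}$.

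For the symmetry statement, $\Omega=-\Omega$ is immediate, since negating $\bm{\lambda}(\bm{j})-\bm{\lambda}(\bm{k})$ only swaps $\bm{j}$ and $\bm{k}$. For the relation on the coefficients I would use that the functions $(\bm{x},\bm{x}')\mapsto e^{i(\bm{s}\cdot\bm{x}+\bm{t}\cdot\bm{x}')}$ are linearly independent over $\mathbb{C}$ for distinct $(\bm{s},\bm{t})\in\mathbb{R}^{2d}$, so the coefficients $c_{\bm{s}\bm{t}}$ obtained after collecting terms are uniquely determined by $k_Q$. Since $k_Q(\bm{x},\bm{x}')=|\braket{\phi(\bm{x})|\phi(\bm{x}')}|^2$ is real, $k_Q=\overline{k_Q}$; writing $\overline{k_Q}(\bm{x},\bm{x}')=\sum_{\bm{s},\bm{t}\in\Omega}\overline{c_{\bm{s}\bm{t}}}\,e^{-i\bm{s}\cdot\bm{x}}e^{-i\bm{t}\cdot\bm{x}'}=\sum_{\bm{s},\bm{t}\in\Omega}\overline{c_{-\bm{s},-\bm{t}}}\,e^{i\bm{s}\cdot\bm{x}}e^{i\bm{t}\cdot\bm{x}'}$ (re-indexing $\bm{s}\mapsto-\bm{s}$, $\bm{t}\mapsto-\bm{t}$ using $\Omega=-\Omega$) and matching coefficients gives $c_{\bm{s}\bm{t}}=\overline{c_{-\bm{s},-\bm{t}}}$, which is exactly the claimed relation guaranteeing that $k_Q$ is real-valued.

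The main obstacle I anticipate is organizational rather than conceptual: carrying the four independent multi-indices $\bm{j},\bm{k},\bm{j}',\bm{k}'$ through the product of the overlap with its conjugate and being precise about which difference of eigenvalue-tuples plays the role of the $\bm{x}$-frequency $\bm{s}$ versus the $\bm{x}'$-frequency $\bm{t}$, since a sign or swap error there propagates into the final symmetry relation. One should also note that $\Omega$ may be exponentially large in $d$ and that the $c_{\bm{s}\bm{t}}$ need not be simple products, but the theorem only asserts existence of the representation, so this causes no problem; degenerate spectra are handled cleanly by using spectral projectors from the outset.
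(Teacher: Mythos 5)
Your proof is correct and follows essentially the same route as the paper: expand each encoding gate over the spectrum of $G$ (the paper works with matrix elements of the interleaving unitaries where you use spectral projectors), write $\braket{\phi(\bm{x})|\phi(\bm{x}')}$ as a sum of phases over multi-indices, take the modulus squared, and collect terms by differences of eigenvalue tuples to define $\Omega$ and $c_{\bm{s}\bm{t}}$. Your closing verification of $c_{\bm{s}\bm{t}}=c_{-\bm{s}-\bm{t}}^*$ via realness of $k_Q$ and linear independence of the exponentials is a small addition beyond the paper's sketch, which asserts but does not explicitly prove that symmetry.
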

\begin{proof}[Proof sketch of Theorem~\ref{chapt:kernel:thm-fourier}]
  The assumption that the generator $G$ is diagonal could be made without loss of generality because one can diagonalize Hermitian operators as $G=Ve^{-i\bm{x}_{i}\Sigma} V^{\dagger}$ with
  \begin{equation}
    e^{-i\bm{x}_{i}\Sigma} = 
    \left(\begin{matrix} 
      e^{-i\bm{x}_{i}\lambda_1} & 0 & \cdots & 0 \\
      0 & e^{-i\bm{x}_{i}\lambda_2} & \cdots & 0 \\
      \cdots & \cdots & ~ & ~ \\
      0 & \cdots & 0 & e^{-i\bm{x}_{i}\lambda_m} \\
    \end{matrix}
    \right),
  \end{equation}
  where $V^{\dagger}$ refers to the conjugate transpose of the matrix $V$, $\lambda_1,\cdots,\lambda_m$ are the eigenvalues of $G$. Formally, $V,V^{\dagger}$ can be absorbed into the arbitrary circuits $W_{i+1}$ and $W_i$ before and after the encoding gate.  In this regard, the quantum kernel can be written down as the inner product between the  feature state of the specific forms in Eqn.~\eqref{chap3:eq:angle_encode_feat}, i.e.,
  \begin{align}
    & k(\bm{x},\bm{x}')
    \nonumber \\
    = & \left|\braket{{\phi}(\bm{x}') | {\phi}(\bm{x})} \right|
    \nonumber \\
    = & \Big| \bra{\bm{0}}W_{1}^{\dagger} (e^{-i\bm{x}_{1}^{\prime}\Sigma})^{\dagger} \cdots (e^{-i\bm{x}_{d}^{\prime}\Sigma})^{\dagger}W_{d+1}^{\dagger} W_{d+1}
    e^{-i\bm{x}_{d}\Sigma} \cdots e^{-i\bm{x}_{1}\Sigma} W_{1} \ket{\bm{0}}\Big|^2
    \nonumber \\
    = & \left| \bra{\bm{0}}W_{1}^{\dagger} (e^{-i\bm{x}_{1}^{\prime}\Sigma})^{\dagger} \cdots (e^{-i\bm{x}_{d}^{\prime}\Sigma})^{\dagger} e^{-i\bm{x}_{d}\Sigma} \cdots e^{-i\bm{x}_{1}\Sigma} W_{1} \ket{\bm{0}}\right|^2
    \nonumber \\
    = & \Bigg|\sum_{j_1,\cdots,j_d=1}^m \sum_{k_1,\cdots,k_d=1}^m e^{-i(\lambda_{j_1}\bm{x}_{1}-\lambda_{k_1}\bm{x}_{1}^{\prime}+\cdots + \lambda_{j_d}\bm{x}_{d}-\lambda_{k_d}\bm{x}_{d}^{\prime})}  
    \nonumber \\
    & \times \left(W^{(1k_1)}_{1}  \cdots W^{(k_{d-1}k_d)}_{d}\right)^*  W^{(j_d j_{d-1})}_{d}  \cdots W^{(j_{1}1)}_{1} \Bigg|^2
    \nonumber \\
    = & \left|\sum_{\bm{j}} \sum_{\bm{k}} e^{-i(\Lambda_{\bm{j}}\bm{x}-\Lambda_{\bm{k}}\bm{x}')} (\omega_{\bm{k}})^* \omega_{\bm{j}} \right|^2
    \nonumber \\
    = & \sum_{\bm{j}} \sum_{\bm{k}} \sum_{\bm{h}} \sum_{\bm{l}}  e^{-i(\Lambda_{\bm{j}}-\Lambda_{\bm{l}})\bm{x}} e^{i(\Lambda_{\bm{k}}-\Lambda_{\bm{h}})\bm{x}'} (\omega_{\bm{k}} \omega_{\bm{h}} )^* \omega_{\bm{j}} \omega_{\bm{l}},
  \end{align}
  Here, the scalars $W^{(ab)}_{i}$ refer to the element $\bra{a} W_{i}\ket{b}$ of the unitary operator $W_{i}$, the bold multi-index $\bm{j}$ summarizes the set $(j_1,\cdots,j_d)$ and $\Lambda_j$ is a vector containing the eigenvalues selected by the multi-index (and similarly for $\bm{k},\bm{h},\bm{l}$ ).
  
  We can now summarize all terms where $\Lambda_{\bm{j}} - \Lambda_{\bm{l}} = \bm{s}$ and $\Lambda_{\bm{k}} - \Lambda_{\bm{h}} = \bm{t}$,  in other words where the differences of eigenvalues amount to the same vectors $\bm{s}, \bm{t}$. Then
  \begin{align}
    k(\bm{x},\bm{x}') & =  \sum_{\bm{s},\bm{t}\in \Omega} e^{-i\bm{s}\bm{x}} e^{i\bm{t}\bm{x}'} \sum_{\bm{j},\bm{l}:\Lambda_{\bm{j}} - \Lambda_{\bm{l}} = \bm{s}} \sum_{\bm{k},\bm{h}:\Lambda_{\bm{k}} - \Lambda_{\bm{h}} = \bm{t}} \omega_{\bm{j}} \omega_{\bm{l}} (\omega_{\bm{k}} \omega_{\bm{h}} )^*
    \nonumber \\
    & = \sum_{\bm{s},\bm{t}\in \Omega} e^{-i\bm{s}\bm{x}} e^{i\bm{t}\bm{x}'} c_{\bm{s}\bm{t}}.
  \end{align}
  The frequency set $\Omega$ contains all vectors $\{\Lambda_{\bm{j}} - \Lambda_{\bm{l}}\}$ with $\Lambda_{\bm{j}}=(\lambda_{j_1},\cdots,\lambda_{j_d})$ and $\lambda_{j_1},\cdots,\lambda_{j_d}\in [1,\cdots,m]$.
\end{proof}

We summarize the various strategies for the construction of quantum feature mappings and quantum kernels in Table~\ref{cha3:tab:Q-Kernel-sum}.

\begin{table}[h!]
  \centering 
  \caption{Overview of typical data encoding strategies and their quantum kernels. The input domain is assumed to be the $\bm{x}=(\bm{x}_{1}, \cdots, \bm{x}_{d})\in\mathcal{X} \subset \mathbb{R}^d$.}
  \label{cha3:tab:Q-Kernel-sum}
  \footnotesize
  \begin{tabular}{l|l|l|l}
    \toprule 
    \multicolumn{1}{c}{Encoding} & \multicolumn{1}{|c|}{Qubits} & \multicolumn{1}{|c|}{Dimension} & \multicolumn{1}{c}{Quantum Kernel $k(\bm{x},\bm{x}')$} \\ \midrule
    Basis encoding   &  $d$ & $ 2^d $  &  $\delta_{\bm{x},\bm{x}'}$ \\  
    Amplitude encoding  &  $\lceil \log_2(d) \rceil$ & $d$ &   $|\bm{x}^{\dagger}\bm{x}'|^2$                  \\  
    Angel encoding  & $d$   & $ 2^d $      &  $\prod_{k=1}^d|\cos(\bm{x}_{k}-\bm{x}_k^{\prime})|^2$ \\ 
    General angle encoding & $d$   & $ 2^d $  &  $\sum_{\bm{s},\bm{t}\in \Omega} e^{-i\bm{s}\bm{x}} e^{i\bm{t}\bm{x}'} c_{\bm{s}\bm{t}}$ \\
    \bottomrule
  \end{tabular}
\end{table}

\begingroup
\allowdisplaybreaks
\begin{tcolorbox}[enhanced, 
  breakable,colback=gray!5!white,colframe=gray!75!black,title=Remark]
After obtaining the quantum kernel matrix $K_Q$ for a given training dataset $\{(\bm{x}^{(i)},y^{(i)})\}_{i=1}^{n}$, we can use it to perform regression or classification tasks in a manner similar to the classical kernel methods. In particular, as discussed in Chapter~\ref{chapt3:subsec:dual_rep}, consider the linear regression model given by
\begin{equation}
  \mathcal{L}(\bm{w}) = \frac{1}{2} \sum_{i=1}^n \left(\bm{w}^{\top}\cdot {\phi}_Q(\bm{x}^{(i)})-y^{(i)} \right)^2 + \frac{\lambda}{2} \bm{w}^{\top} \cdot \bm{w},
\end{equation}
where ${\phi}_Q(\bm{x}^{(i)})$ denotes the quantum feature mapping related to the quantum kernel $k_{Q}$,   $\bm{w}$ denotes the model parameters, and  $\lambda \ge 0$  is the regularization factor.
Using the quantum kernel matrix, we can express the dual representation of the linear model given in Eqn.~\eqref{chapt3:eq:dual_rep_out} to predict the output for a new input as
\begin{equation}
    y(\bm{x}) = \bm{k}_Q(\bm{x})^{\top} \cdot ({K}_Q+\lambda \mathbb{I}_n)^{-1} \bm{y},
\end{equation}
where $\bm{y}=(y^{(1)},\cdots, y^{(n)})$ refers to the label vector, $\bm{k}_Q(\bm{x})$ is a vector with elements $\bm{k}_Q^{(i)}(\bm{x}) = k_Q(\bm{x}^{(i)}, \bm{x})$. 
\end{tcolorbox}
\endgroup

\section{Theoretical Foundations of Quantum Kernel Machines}\label{chapt3:sec:theo_foundation_QK}
In this section, we take a step further to explore the theoretical foundations of quantum kernels. Specifically, we focus on two crucial aspects: the \textit{expressivity} and \textit{generalization} properties of quantum kernel machines. As shown in Figure~\ref{fig:schem}, these two aspects are essential for understanding the potential advantages of quantum kernels over classical learning approaches and their inherent limitations. For ease of understanding, this section emphasizes the fundamental concepts necessary for evaluating the power and limitation of quantum kernels instead of exhaustively reviewing all theoretical results.

\begin{figure*} 
  \centering \includegraphics[width=0.9\textwidth]{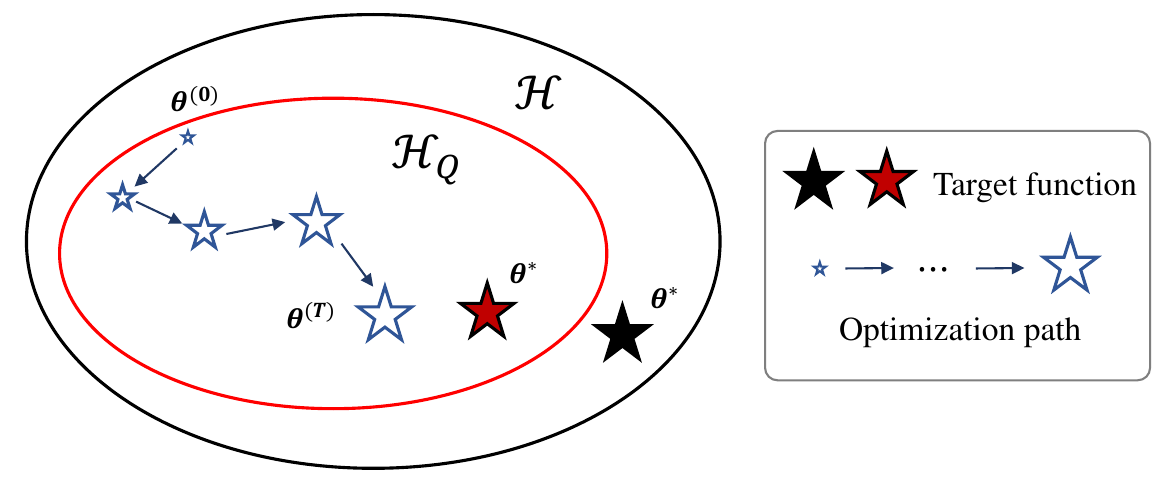}
  \caption{{\textbf{The expressivity and generalization ability of quantum kernels}. Expressivity concerns the size of the hypothesis space $\mathcal{H}_Q$ represented by quantum kernels, where $\mathcal{H}$ refers to the whole hypothesis space. Generalization ability considers the learned hypothesis that could predict the unseen data accurately features a small distance with the target concept.  }}
  \label{fig:schem}
\end{figure*}

The outline of this chapter is as follows. In Chapter~\ref{subsec:expressivity_QK}, we will discuss the expressivity of quantum kernels, which refers to the diversity of feature spaces that quantum kernels can represent. The achieved insights will help identify tasks that are particularly well-suited for quantum kernels. Then, in Chapter~\ref{subsec:gene_QK}, we will examine the potential advantage of quantum kernels in terms of generalization error compared to all classical kernel machines. This analysis highlights their ability to accurately predict labels or values for unseen data.

\subsection{Expressivity of quantum kernel machines}\label{subsec:expressivity_QK}

Quantum kernels, as discussed in Chapter~\ref{sec:quantum_kernel}, are constructed by explicitly defining quantum feature mappings. In this context, the expressivity of quantum kernel machines refers to the types of functions that quantum feature mappings can approximate and the kinds of correlations that quantum kernels can effectively model.

Following the conventions of \citet{gil2024expressivity}, we demonstrate that \textit{any kernel function can be approximated using finitely deep quantum circuits} by showing that the associated feature mapping can also be approximated using quantum circuits. This conclusion rests on two key theoretical foundations: Mercer’s feature space construction and the universality of quantum circuits. Together, these principles establish the theoretical feasibility of realizing any kernel function as a quantum kernel.

It is important to note that if exact mathematical equality were required, Mercer’s construction would demand an infinite-dimensional Hilbert space, which in turn would require quantum computers with infinitely many qubits—an impractical scenario. However, in practical applications, we are more interested in approximating functions to a certain level of precision rather than achieving exact evaluations. This perspective makes it feasible to implement the corresponding quantum feature mappings using a finite number of qubits. The following theorem confirms that any kernel function can be approximated as a quantum kernel to arbitrary precision with finite computational resources (We defer the proof details at the end of this subsection).

\begin{theorem}[Approximate universality of finite-dimensional
  quantum feature maps]
  Let $k:\mathcal{X} \times \mathcal{X} \to \mathbb{R}$ be a kernel function. Then, for any $\varepsilon \ge 0$ there exists $N \in \mathbb{N}$ and a quantum feature mapping $\rho_N$ onto the Hilbert space of quantum states of $N$ qubits such that
  \begin{equation}\label{chap3:eq:approx_univ}
    |k(\bm{x},\bm{x}')-2^N \Tr(\rho_N(\bm{x})\rho_N(\bm{x})') + 1| < \varepsilon
  \end{equation}
  for almost all $\bm{x},\bm{x}'\in \mathcal{X}$.
  \label{chap3:thm:universal_QFM}
\end{theorem}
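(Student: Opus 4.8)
The plan is to prove this in two conceptual moves: (i) realize an arbitrary kernel $k$ as a genuine inner product via Mercer's theorem, and (ii) truncate and renormalize that feature map so it fits into the Hilbert space of $N$ qubits, absorbing the truncation error into $\varepsilon$. For step (i), I would invoke Mercer's condition (Fact~\ref{chap3:thm:mercer}), which gives a uniformly convergent expansion $k(\bm{x},\bm{x}') = \sum_{i=0}^{\infty} a_i \langle \phi_i(\bm{x}),\phi_i(\bm{x}')\rangle$ with $a_i>0$. After rescaling the eigenfunctions, this means there is a (possibly infinite-dimensional) feature vector $\Phi(\bm{x})$ with $k(\bm{x},\bm{x}') = \langle \Phi(\bm{x}),\Phi(\bm{x}')\rangle$ and, by continuity on the compact domain $\mathcal{X}$, a uniform bound $\|\Phi(\bm{x})\|^2 \le C$. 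By uniform convergence, the truncation $\Phi^{(M)}(\bm{x})$ to the first $M$ coordinates satisfies $|k(\bm{x},\bm{x}') - \langle \Phi^{(M)}(\bm{x}),\Phi^{(M)}(\bm{x}')\rangle| < \varepsilon/2$ for all $\bm{x},\bm{x}'$ once $M$ is large enough.

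For step (ii), the task is to express the finite-dimensional real vector $\Phi^{(M)}(\bm{x}) \in \mathbb{R}^M$ through a quantum state of $N$ qubits. Pick $N$ with $2^N \ge M+1$ and embed $\Phi^{(M)}(\bm{x})$ into $\mathbb{R}^{2^N}$ by zero-padding, then form the \emph{normalized} vector $\ket{\psi(\bm{x})} = \frac{1}{\sqrt{2^N}}\big(\,\text{something}\,\big)$; the precise construction must arrange that $2^N \braket{\psi(\bm{x})|\psi(\bm{x}')} - 1$ reproduces $\langle \Phi^{(M)}(\bm{x}),\Phi^{(M)}(\bm{x}')\rangle$. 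Concretely one sets $\ket{\psi(\bm{x})} \propto \big(1,\, \Phi^{(M)}(\bm{x})/\sqrt{\text{(scale)}}\big)$ after rescaling so that $\|\Phi^{(M)}\|^2 \le 1$; then $\braket{\psi(\bm{x})|\psi(\bm{x}')} = \frac{1+ \langle \Phi^{(M)}(\bm{x}),\Phi^{(M)}(\bm{x}')\rangle + (\text{higher padding})}{N_{\bm{x}}N_{\bm{x}'}}$, and the extra normalization factors are tamed by adding further padding coordinates that encode $\sqrt{1-\|\Phi^{(M)}(\bm{x})\|^2}$, exactly as in the standard trick for embedding affine data into a sphere. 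This makes the normalization constants equal to a common value, so $2^N\Tr(\rho_N(\bm{x})\rho_N(\bm{x}')) = 1 + \langle \Phi^{(M)}(\bm{x}),\Phi^{(M)}(\bm{x}')\rangle$ exactly, where $\rho_N(\bm{x}) = \ket{\psi(\bm{x})}\bra{\psi(\bm{x})}$. Combining with step (i) yields Eqn.~(\ref{chap3:eq:approx_univ}). Finally, I would note that the map $\bm{x}\mapsto \ket{\psi(\bm{x})}$ is a legitimate quantum feature map: by the universality of quantum gates (Solovay--Kitaev, Fact in Chapter~\ref{cha2:quantum-gate}), for each $\bm{x}$ there exists a circuit $U(\bm{x})$ with $\ket{\psi(\bm{x})} = U(\bm{x})\ket{0}^{\otimes N}$ up to arbitrarily small additional error, which can be folded into $\varepsilon$.

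The main obstacle I anticipate is the bookkeeping around \emph{normalization}: a quantum kernel is $|\braket{\phi(\bm{x})|\phi(\bm{x}')}|^2$ with unit-norm states, whereas a Mercer feature map need not be normalized and its kernel appears linearly, not squared. The statement sidesteps the squaring by using $\Tr(\rho_N(\bm{x})\rho_N(\bm{x}'))$ directly (which is already $|\braket{\cdot|\cdot}|^2$ only when the states are pure and real — here I will keep the amplitudes real, so $|\braket{\psi(\bm{x})|\psi(\bm{x}')}|^2$ versus $\braket{\psi(\bm{x})|\psi(\bm{x}')}^2$ is not an issue, but the value is a \emph{square} of the padded inner product, not the inner product itself). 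This is why the ``$+1$'' and the factor $2^N$ appear in Eqn.~(\ref{chap3:eq:approx_univ}): one must choose the embedding so that the dominant term is an affine function of the target inner product and the quadratic cross-terms are either absorbed into the next round of padding or made negligible. I would handle this by making the ``constant'' coordinate dominate (amplitude close to $1/\sqrt{2^N}\cdot\sqrt{2^N}=$ large relative to the data coordinates, achieved by rescaling $\Phi^{(M)}$ by a small factor $\eta$ and then absorbing the resulting $\eta^2$ rescaling of $k$ into the definition of the feature map), so that the $O(\eta^4)$ cross-terms fall below $\varepsilon/2$. The rest is routine: choose $M$ for the Mercer tail, choose $\eta$ and $N$ for the embedding error, and invoke circuit universality for realizability.

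Altogether the proof skeleton is: apply Mercer to get a uniformly convergent real feature expansion on compact $\mathcal{X}$; truncate to finite dimension $M$ with error $\varepsilon/2$; rescale and affinely embed into the unit sphere of $\mathbb{R}^{2^N}$ with a dominant constant coordinate so that $2^N\Tr(\rho_N(\bm{x})\rho_N(\bm{x}')) - 1$ matches the (rescaled) truncated kernel up to $\varepsilon/2$; and conclude that $\rho_N$ is a bona fide quantum feature map by Solovay--Kitaev. The ``for almost all $\bm{x},\bm{x}'$'' qualifier in the statement is inherited from the Mercer expansion's almost-everywhere convergence and requires no extra work.
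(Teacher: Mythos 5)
Your step (i) — Mercer's theorem plus truncation to a finite-dimensional feature map $\Phi^{(M)}$ with uniform error $\varepsilon/2$ — is exactly how the paper begins. The gap is in step (ii). You correctly identify the squaring problem for pure states, but your proposed fix does not close it. If $\rho_N(\bm{x})=\ket{\psi(\bm{x})}\bra{\psi(\bm{x})}$ is pure, then $\Tr(\rho_N(\bm{x})\rho_N(\bm{x}'))=|\braket{\psi(\bm{x})|\psi(\bm{x}')}|^2$, and with your affine embedding the overlap has the form $c_0+\eta^2 t$ with $t=\braket{\Phi^{(M)}(\bm{x}),\Phi^{(M)}(\bm{x}')}$ (constant-norm padding makes $c_0$ constant). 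Matching the fixed constants in Eqn.~(\ref{chap3:eq:approx_univ}) forces $2^N c_0^2=1$ and $2^{N+1}c_0\eta^2=1$, and these two conditions \emph{force} the quadratic coefficient to be $2^N\eta^4=\tfrac{1}{4}$. The resulting error term $t^2/4$ is order one for generic pairs, so it cannot be driven below $\varepsilon/2$ by shrinking $\eta$; and "absorbing the $\eta^2$ rescaling into the definition of $k$" is not available, because the theorem fixes the constants $2^N$ and $+1$ relative to $k$ itself. Salvaging the pure-state route would require unit vectors whose overlap equals $\sqrt{(1+t)/2^N}$, i.e.\ a feature map for a square-root kernel, which is not a routine construction.

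The paper's proof avoids this entirely by using a \emph{mixed} state: the truncated feature vector (zero-padded and normalized in the $1$-norm) is encoded in the Pauli coefficients, $\rho_{\bm{r}}=\bigl(\mathbb{I}+\sum_i \bm{r}_i P_i\bigr)/2^N$ on $N=\lceil\log_4(m+1)\rceil$ qubits (Algorithm~\ref{chap3:alg:C2QE}, Lemma~\ref{chap3:lem:qk_express_correct}). Because the Hilbert--Schmidt inner product is bilinear in the density matrices and the Pauli basis is trace-orthogonal, one gets the \emph{exact} identity $\braket{\bm{r},\bm{r}'}=2^N\Tr(\rho_{\bm{r}}\rho_{\bm{r}'})-1$ (Lemma~\ref{chap3:lem:EI_prod}) — no squaring, no extra approximation — and the only error left is the Mercer truncation. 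The $1$-norm normalization is what makes $\rho_{\bm{r}}$ positive semi-definite and preparable as a classical mixture of Pauli eigenstates, so no Solovay--Kitaev argument is needed (and your appeal to it would anyway only produce pure states, which is precisely the case that fails). This Pauli-coefficient (mixed-state) encoding is the missing key idea in your proposal.
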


Theorem~\ref{chap3:thm:universal_QFM}, instead of discussing the $\varepsilon$-approximation of quantum kernels in the form $|k(\bm{x},\bm{x}')- \Tr(\rho_N(\bm{x})\rho_N(\bm{x})')| < \varepsilon$, introduces additional multiplicative and additive factors, expressed as $|k(\bm{x},\bm{x}')-2^N \Tr(\rho_N(\bm{x})\rho_N(\bm{x}')) + 1| < \varepsilon$. These additional factors, explained below, do not impede the universality of the theorem. 

Moreover, the statement that Eqn.~\eqref{chap3:eq:approx_univ} holds for almost all  $\bm{x},\bm{x}'\in \mathcal{X}$ stems from measure theory. It signifies that the inequality is valid “except on sets of measure zero,” or equivalently “with probability 1.” In other words, while adversarial instances of  $\bm{x},\bm{x}'\in \mathcal{X}$ may exist for which the inequality does not hold, such instances are so sparse that the probability of encountering them when sampling from the relevant probability distribution is zero.

Last, Theorem~\ref{chap3:thm:universal_QFM} establishes that any kernel function can be approximated as a quantum kernel up to a multiplicative and an additive factor using a finite number of qubits. 

Before presenting the proof of this theorem, let us first introduce Algorithm~\ref{chap3:alg:C2QE}, which maps classical vectors to quantum states. These quantum states can then be used to evaluate Euclidean inner products as quantum kernels. Then, we demonstrate Lemma~\ref{chap3:lem:qk_express_correct} and Lemma~\ref{chap3:lem:EI_prod}, which separately formalize the correctness and runtime complexity of Algorithm~\ref{chap3:alg:C2QE}, as well as establish the relationship between the Euclidean inner product of encoded real vectors and the Hilbert-Schmidt inner product of the corresponding quantum states. 

\begin{algorithm}[H]
  \caption{Classical to quantum embedding (C2QE)}
  \textbf{Input:} a unit vector with  1-norm $\bm{r} \in \ell^d_1$. \\
  \textbf{Output:} Quantum state $\rho_{\bm{r}} \propto \mathbb{I} + \sum_{i=1}^d \bm{r}_i P_i$. \hfill $\triangleright$ See Lemma~\ref{chap3:lem:qk_express_correct}.
  \begin{algorithmic}[1]
    \State Set $N = \lceil \log_4 (d+1) \rceil$.
    \State Pad $\bm{r}$ with zeros until its length is $4^N - 1$.
    \State Draw $i \in \{1, \dots, 4^N - 1\}$ with probability $|\bm{r}_i|$.
    \State Prepare $\rho_i = \frac{1}{2^N} \left( \mathbb{I} + \text{sign}(\bm{r}_i) P_i \right)$.
    \State \textbf{return} $\rho_i$.
  \end{algorithmic}
  \label{chap3:alg:C2QE}
\end{algorithm}

The output of Algorithm~\ref{chap3:alg:C2QE}, $\frac{1}{2^N} ( \mathbb{I} \pm P)$, is a single
(pure) eigenstate of a Pauli operator $P$ with eigenvalue $\pm 1$. However, since Line $3$ involves sampling an index $i \in \{1, \cdots, 4^N - 1\}$,  
Algorithm~\ref{chap3:alg:C2QE} is inherently random, and the resulting quantum state is a classical mixture of pure states.

\begin{lemma}[Correctness and runtime of Algorithm~\ref{chap3:alg:C2QE}]
  \label{chap3:lem:qk_express_correct}
  Let $\bm{r} \in \mathcal{\ell}_1^d \subset \mathbb{R}^d$ be a unit vector with respect to the $1$-norm, i.e., $\|\bm{r}\|_1=1$. Take $N=\lceil \log_4(d+1) \rceil$ and pad $\bm{r}$ with zeros until its length is $4^N-1$. Let $(P_i)_{i=1}^{4^N-1}$ be the set of all Pauli matrices on $N$ qubits, excluding the identity. Then Algorithm~\ref{chap3:alg:C2QE} prepares the following state as a classical mixture
  \begin{align}
    \rho(\cdot): & \mathcal{\ell}_1^d \to \text{Herm}(2^N),
    \nonumber \\ 
    &\bm{r} \mapsto \rho_{\bm{r}}=   \frac{ \mathbb{I} + \sum_{i=1}^{4^N-1} \bm{r}_i P_i}{2^N}.
  \end{align}
  The total runtime complexity $t$ of Algorithm~\ref{chap3:alg:C2QE} fulfills $t\le \mathcal{O}(\text{poly}(d))$.  
\end{lemma}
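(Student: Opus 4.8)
The plan is to verify the two claims of Lemma~\ref{chap3:lem:qk_express_correct} separately: first that Algorithm~\ref{chap3:alg:C2QE} produces the stated mixed state $\rho_{\bm{r}} = (\mathbb{I} + \sum_{i=1}^{4^N-1} \bm{r}_i P_i)/2^N$, and second that the runtime is polynomial in $d$. For the correctness claim, I would unwind the randomness in the algorithm: Line~3 draws an index $i$ with probability $|\bm{r}_i|$ (which is a valid probability distribution precisely because $\|\bm{r}\|_1 = 1$ after zero-padding), and Line~4 prepares the pure eigenstate $\rho_i = \frac{1}{2^N}(\mathbb{I} + \mathrm{sign}(\bm{r}_i) P_i)$. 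The state actually produced by the algorithm, in the sense of the density matrix describing the classical mixture over the random choice of $i$, is therefore the expectation
\begin{equation}
  \mathbb{E}_i[\rho_i] = \sum_{i=1}^{4^N-1} |\bm{r}_i| \cdot \frac{1}{2^N}\left(\mathbb{I} + \mathrm{sign}(\bm{r}_i) P_i\right) = \frac{1}{2^N}\left(\sum_{i=1}^{4^N-1} |\bm{r}_i|\,\mathbb{I} + \sum_{i=1}^{4^N-1} |\bm{r}_i|\,\mathrm{sign}(\bm{r}_i) P_i\right).
\end{equation}
Using $\sum_i |\bm{r}_i| = 1$ and $|\bm{r}_i|\,\mathrm{sign}(\bm{r}_i) = \bm{r}_i$, this collapses to $\frac{1}{2^N}(\mathbb{I} + \sum_i \bm{r}_i P_i) = \rho_{\bm{r}}$, as claimed.

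Two small sanity checks belong here. First, I should confirm $\rho_i$ is a legitimate (pure) quantum state: since $P_i$ is a nonidentity Pauli on $N$ qubits, it is Hermitian with eigenvalues $\pm 1$, each with multiplicity $2^{N-1}$, so $\frac{1}{2^N}(\mathbb{I} \pm P_i)$ is the normalized projector onto the $(\pm1)$-eigenspace — positive semidefinite with trace~$1$; for $N=1$ it is a rank-one pure state, and for larger $N$ one may instead take any pure eigenstate within that eigenspace, which does not affect the Pauli-expansion identity above. Second, the choice $N = \lceil \log_4(d+1)\rceil$ guarantees $4^N - 1 \ge d$, so there are enough nonidentity Pauli operators to index the $d$ entries of $\bm{r}$ (after padding the rest with zeros, which contribute nothing to either sum).

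For the runtime bound, I would tally the cost of each line. Setting $N$ and zero-padding takes $\mathcal{O}(4^N) = \mathcal{O}(d)$ elementary steps since $4^N \le 4(d+1)$. Sampling $i$ from the distribution $(|\bm{r}_i|)_i$ over $4^N - 1 = \mathcal{O}(d)$ outcomes costs $\mathcal{O}(d)$ (e.g.\ building a cumulative distribution and doing one comparison sweep, or $\mathcal{O}(\log d)$ with binary search after an $\mathcal{O}(d)$ preprocessing). Preparing the eigenstate $\rho_i$ of the Pauli string $P_i$ requires at most $N = \mathcal{O}(\log d)$ single-qubit gates (a tensor product of $\ket{0},\ket{+}$, or $\ket{+i}$ eigenstates of the local Paulis, with a sign flip handled by one extra single-qubit gate). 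Summing these, the total runtime is $t = \mathcal{O}(d) = \mathcal{O}(\mathrm{poly}(d))$.

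The main obstacle — really the only subtle point — is being careful about what ``the state prepared by the algorithm'' means: the algorithm is genuinely randomized and outputs a \emph{sample} $\rho_i$, not $\rho_{\bm{r}}$ itself, so the lemma's assertion must be read as a statement about the density matrix of the induced classical ensemble, and the proof hinges entirely on the linearity of the Pauli expansion under this averaging. Everything else is bookkeeping; no deep estimate is needed.
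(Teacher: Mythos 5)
Your proof is correct and takes essentially the same route as the paper's: the paper expands $\rho_{\bm{r}}$ into the convex combination $\frac{1}{2^N}\sum_i |\bm{r}_i|\left(\mathbb{I}+\mathrm{sign}(\bm{r}_i)P_i\right)$ using $\|\bm{r}\|_1=1$, whereas you average the algorithm's sampled output $\rho_i$ over the index distribution $|\bm{r}_i|$ — the same algebra read in the opposite direction — and you additionally give a more explicit line-by-line runtime tally than the paper's brief "polynomially many terms can be mixed efficiently" remark. One small correction to your aside: for $N>1$ you cannot replace $\frac{1}{2^N}(\mathbb{I}\pm P_i)$ by "any pure eigenstate within that eigenspace" without affecting the Pauli-expansion identity, since an individual eigenstate generally carries extra Pauli components (e.g., for $P_i=Z\otimes Z$ the eigenstate $\ket{00}$ has density matrix $\frac{1}{4}(\mathbb{I}+Z\otimes\mathbb{I}+\mathbb{I}\otimes Z+Z\otimes Z)$), so the normalized projector that the algorithm actually prepares (or a uniform mixture over an orthonormal eigenbasis) is what makes the averaging collapse to $\rho_{\bm{r}}$.
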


\begin{proof}[Proof of Lemma~\ref{chap3:lem:qk_express_correct}] 
  The proof begins by expanding the state as follows
  \begin{equation}
    \frac{ \mathbb{I} + \sum_{i=1}^{4^N-1} \bm{r}_i P_i}{2^N} = \frac{1}{2^N} \left(\sum_{i=1}^{4^N-1} |\bm{r}_i| \mathbb{I} + \sum_{i=1}^{4^N-1}\bm{r}_iP_i \right),
  \end{equation}
where the first equality follows that $\|\bm{r}\|_1=1$ and $\bm{r}\in \mathbb{R}^{4^N-1}$.
  Rewriting the above equation using $\mbox{sign}(\bm{r}_i)$ yields
  \begin{align}
    \frac{ \mathbb{I} + \sum_{i=1}^{4^N-1} \bm{r}_i P_i}{2^N} = &\frac{1}{2^N} \left(\sum_{i=1}^{4^N-1} |\bm{r}_i| \mathbb{I} + \sum_{i=1}^{4^N-1} |\bm{r}_i| \mbox{sign}(\bm{r}_i)P_i \right)
    \nonumber \\
    = & \frac{1}{2^N} \sum_{i=1}^{4^N-1} |\bm{r}_i| \left(\mathbb{I} +  \mbox{sign}(\bm{r}_i)P_i \right) \succeq 0.
  \end{align}
  Here, it is used that $\sum_i |\bm{r}_i| = \|\bm{r}\|_1=1$ and $\mathbb{I} \pm P_i \ge 0$ for all Pauli operators $P_i$.  Notice that efficiently preparing $\mathbb{I} + P_i$ can be achieved by rotating each qubit's $\ket{0}$ basis state to the corresponding Pauli basis and flipping the necessary qubits individually. Since this state is a convex combination of quantum states, it can be efficiently prepared by mixing, when the number of terms is polynomial.
\end{proof}

\begin{lemma}[Euclidean inner products]
  \label{chap3:lem:EI_prod}
  Let $\bm{r},  \bm{r}' \in \mathbb{R}^d$ be unit vectors with respect to the $1$-norm, i.e., $\|\bm{r}\|_1=\|\bm{r}'\|_1=1$. 
  For the states $\rho_{\bm{r}},\rho_{\bm{r}}'$ produced in Algorithm~\ref{chap3:alg:C2QE}, the following identity holds
  \begin{align}
    \braket{\bm{r},\bm{r}'}=2^N \Tr(\rho_{\bm{r}}\rho_{\bm{r}}') -1.
  \end{align}
\end{lemma}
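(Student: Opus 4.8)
The plan is to compute $\Tr(\rho_{\bm r}\rho_{\bm r'})$ directly from the explicit form of the states established in Lemma~\ref{chap3:lem:qk_express_correct}, namely $\rho_{\bm r} = 2^{-N}\bigl(\mathbb{I} + \sum_{i=1}^{4^N-1}\bm r_i P_i\bigr)$ and similarly for $\rho_{\bm r'}$. First I would expand the product
\[
\rho_{\bm r}\rho_{\bm r'} = \frac{1}{4^N}\left(\mathbb{I} + \sum_i \bm r_i P_i + \sum_j \bm r'_j P_j + \sum_{i,j}\bm r_i \bm r'_j P_i P_j\right),
\]
and then apply the trace. The key facts I would invoke are the standard trace orthogonality relations for $N$-qubit Pauli strings: $\Tr(\mathbb{I}) = 2^N$, $\Tr(P_i) = 0$ for every non-identity Pauli $P_i$, and $\Tr(P_i P_j) = 2^N \delta_{ij}$ (which follows since $P_i P_j$ is, up to a phase, another Pauli string, and is the identity only when $i=j$). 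These kill the two linear terms and collapse the double sum to its diagonal.

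Carrying this out, the trace of the identity term gives $2^N$, the two single-sum terms vanish, and the double sum contributes $\sum_{i,j}\bm r_i \bm r'_j \cdot 2^N\delta_{ij} = 2^N\sum_i \bm r_i \bm r'_i = 2^N\braket{\bm r,\bm r'}$. Altogether
\[
\Tr(\rho_{\bm r}\rho_{\bm r'}) = \frac{1}{4^N}\bigl(2^N + 2^N\braket{\bm r,\bm r'}\bigr) = \frac{1 + \braket{\bm r,\bm r'}}{2^N}.
\]
Rearranging yields $2^N\Tr(\rho_{\bm r}\rho_{\bm r'}) - 1 = \braket{\bm r,\bm r'}$, which is exactly the claimed identity. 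One subtlety worth a sentence in the writeup: $\rho_{\bm r}$ and $\rho_{\bm r'}$ are classical mixtures (Algorithm~\ref{chap3:alg:C2QE} samples an index), so $\Tr(\rho_{\bm r}\rho_{\bm r'})$ should be read as the Hilbert--Schmidt inner product of the two \emph{average} density matrices, and by linearity of trace and expectation the computation above is valid for these averaged states.

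I do not anticipate a genuine obstacle here — the result is essentially a one-line consequence of Pauli trace orthogonality once the states are in hand. The only thing to be careful about is bookkeeping: ensuring the padding of $\bm r$ to length $4^N-1$ is consistent between the two vectors (both padded to the same length, which is fine since $\|\bm r\|_1 = \|\bm r'\|_1 = 1$ need not force equal original dimension but the statement assumes $\bm r,\bm r'\in\mathbb{R}^d$), and being explicit that $\Tr(P_iP_j)=2^N\delta_{ij}$ uses the fact that the product of two distinct non-identity Pauli strings is a non-identity (possibly phased) Pauli string and hence traceless. If one wants to be fully rigorous about the mixture interpretation, I would note that Algorithm~\ref{chap3:alg:C2QE} outputs $\rho_i$ with probability $|\bm r_i|$, so the effective state is $\sum_i |\bm r_i|\,\tfrac{1}{2^N}(\mathbb{I}+\operatorname{sign}(\bm r_i)P_i) = \tfrac{1}{2^N}(\mathbb{I}+\sum_i \bm r_i P_i)$, matching Lemma~\ref{chap3:lem:qk_express_correct}, and the trace identity then follows as above.
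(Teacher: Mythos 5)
Your proof is correct and follows essentially the same route as the paper: expand $\rho_{\bm r}\rho_{\bm r'}$ using the explicit Pauli decomposition from Lemma~\ref{chap3:lem:qk_express_correct}, kill all terms except the identity and the diagonal of the double sum via Pauli trace orthogonality, and rearrange $\Tr(\rho_{\bm r}\rho_{\bm r'})=\frac{1+\braket{\bm r,\bm r'}}{2^N}$. Your extra remark that the trace identity applies to the averaged (mixed) states by linearity is a reasonable clarification but not a departure from the paper's argument.
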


\begin{proof}[Proof of Lemma~\ref{chap3:lem:EI_prod}]
  The proof utilizes the following principles: (1) The trace is linear, and the trace of a tensor product equals the product of traces. (2) All Pauli words are traceless except for the identity, and each Pauli operator is its own inverse. Hence, the product of distinct Pauli operators is also traceless.
  
  Expanding the trace of $\rho_{\bm{r}} \rho_{\bm{r}'}$, we have
  \begin{align}
    \Tr(\rho_{\bm{r}} \rho_{\bm{r}'}) = & \Tr \left(\frac{1}{4^N} \left(\mathbb{I}+\sum_{j=1}^{4^N-1}\bm{r}_jP_j \right) \left(\mathbb{I}+\sum_{k=1}^{4^N-1}\bm{r}_k'P_k \right) \right),
  \end{align}
  which could be simplified as
  \begin{equation}
    \Tr(\rho_{\bm{r}} \rho_{\bm{r}'}) = \frac{1}{4^N} \Tr \left(\mathbb{I}+\sum_{j=1}^{4^N-1}\bm{r}_j\bm{r}_j'P_j^2 + \sum_{k\ne j}^{4^N-1}\bm{r}_j\bm{r}_k'P_jP_k \right).
  \end{equation}
  Using the properties of Pauli operators, the trace becomes
  \begin{equation}
    \Tr(\rho_{\bm{r}} \rho_{\bm{r}'}) = \frac{1}{4^N} \left( \Tr \left(\mathbb{I}\right)+\Tr \left(\sum_{j=1}^{4^N-1}\bm{r}_j\bm{r}_j^{\prime}\mathbb{I} \right) \right) =  \frac{1+\braket{\bm{r},\bm{r}'}}{2^N}.
  \end{equation}
  This completes the proof.
\end{proof}

Lemma~\ref{chap3:lem:EI_prod} clarifies the origin of the extra factors in Theorem~\ref{chap3:thm:universal_QFM}. In particular, the $2^N$ multiplicative factor is unproblematic, as $N\le \mathcal{O}(\log(d))$ and the methods are designed to scale polynomially with $d$. Moreover, the quantum state $\rho_{\bm{r}}$ is generally mixed but can be efficiently prepared. The mapping is injective but not surjective.

With these results in place, we now present the proof of Theorem~\ref{chap3:thm:universal_QFM}.

\begin{proof}[Proof of Theorem~\ref{chap3:thm:universal_QFM}]
  The proof follows from a corollary of Mercer’s theorem and the universality of quantum computing. First, by a direct corollary of the Mercer's Theorem (i.e., Fact~\ref{chap3:thm:mercer}) which states that an arbitrary kernel $k$ admits a uniformly convergent expansion of the form in Eqn.~\eqref{chap3:eq:kernel_unif_expan}, it is ensured that there exists a finite-dimensional feature map 
  $\Phi_m:\mathcal{X}\to \mathbb{R}^m$ such that
  \begin{equation}
    \left|k(\bm{x},\bm{x}')-\braket{\Phi_m(\bm{x}),\Phi_m(\bm{x}')} \right| < \varepsilon.
  \end{equation}
  Without loss of generality, it is assumed that $\|\Phi_m(\bm{x})\|=1$ for all $x\in\mathcal{X}$. The quantum state $\rho_{\Phi_m}$ can then be prepared, which requires $\lceil \log_4(m+1)\rceil$ qubits. By preparing two such states---one for $\Phi_m(\bm{x})$ and one for $\Phi_m(\bm{x}')$---their inner product can be computed as the Hilbert-Schmidt inner product of the quantum states, as shown in Lemma~\ref{chap3:lem:EI_prod}. This leads to
  \begin{equation}
    \braket{\Phi_m(\bm{x}),\Phi_m(\bm{x}')} = 2^N \Tr\left( \rho_{\Phi_m(\bm{x})} \rho_{\Phi_m(\bm{x}')} \right) - 1.
\end{equation}

For reference, it is noted that $\Tr\left( \rho_{\Phi_m(\bm{x})} \rho_{\Phi_m(\bm{x}')} \right)$ can be computed using the SWAP test to an additive precision determined by the number of measurement shots. This allows us to approximate the result efficiently to any desired polynomial additive precision. Consequently, it follows that
  \begin{equation}
    \left| k(\bm{x},\bm{x}') - 2^N \Tr\left( \rho_{\Phi_m(\bm{x})} \rho_{\Phi_m(\bm{x}')} \right) + 1 \right| < \varepsilon,
  \end{equation}
  for almost all $\bm{x},\bm{x}'\in \mathcal{X}$. This completes the proof.
\end{proof}

We remark that Theorem~\ref{chap3:thm:universal_QFM} does not aim to demonstrate any quantum advantage but rather establishes the ultimate expressivity of quantum kernels.  The theorem guarantees the existence of a quantum kernel using a finite number of qubits but does not address how quickly the number of required qubits grows with increasing computational complexity of the kernel function $k$ or with decreasing approximation error $\varepsilon > 0$.  The number of qubits $N$ will depend on certain properties of the kernel $k$ and the approximation error $\varepsilon$. For instance, if the required number of qubits scales exponentially with these parameters, Theorem~\ref{chap3:thm:universal_QFM} would have limited practical utility. Similarly, the time required to find such a quantum kernel approximation-independent of the memory and runtime requirements for preparing the feature vectors and computing their inner product—must also be considered.

\begingroup
\allowdisplaybreaks
\begin{tcolorbox}[enhanced, 
  breakable,colback=gray!5!white,colframe=gray!75!black,title=Remark]
Although Theorem~\ref{chap3:thm:universal_QFM} establishes that all kernel functions can be realized as quantum kernels, there may still exist kernel functions that cannot be \textit{realized efficiently} as quantum kernels. This observation requires us to identify quantum kernels that can be computed efficiently on quantum computers, i.e., in polynomial time.
\end{tcolorbox}
\endgroup

\subsection{Generalization of quantum kernel machines}\label{subsec:gene_QK}
Generalization, which quantifies the ability of learning models (both classical and quantum) to predict unseen data, is a critical metric for evaluating the quality of a learning model. Due to its importance, here we analyze the potential advantages of quantum kernels in terms of the generalization ability. 

For comprehensive, in this section, we first elucidate the generalization error bounds for general kernel machines, establishing a unified framework for a fair comparison between quantum kernels and classical kernels. Subsequently, we introduce a geometry metric to assess the potential quantum advantage of quantum kernels with respect to generalization error for a fixed amount of training data.

\subsubsection{Generalization error bound for kernel machines}
We begin by reviewing the optimal learning models based on the specified kernel machines which could be either classical or quantum, as discussed in Chapter~\ref{chapt3:subsec:dual_rep}. Suppose we have obtained $n$ training examples $\{(\bm{x}^{(i)},y^{(i)})\}_{i=1}^n$ with $\bm{x}^{(i)}\in \mathbb{R}^d$ and $y^{(i)}=f(\bm{x}^{(i)}) \in \mathbb{R}$, where $f$ is the target function. After training on this data, there exists a machine learning algorithm that outputs $h(\bm{x})=\bm{w}^{\dagger}  \phi(\bm{x})$, where $\phi(\bm{x}) \in \mathbb{C}^D$ refers to the hidden feature map corresponding the classical/quantum kernel function $k(\bm{x}^{(i)},\bm{x}^{(j)})= {K}_{ij}=\phi(\bm{x}^{(i)}) \cdot \phi(\bm{x}^{(j)})$. More precisely, considering the mean square error as the loss function for such a task, we have
\begin{equation}\label{eq:B2}
  \mathcal{L}(\bm{w}, \bm{x})=  \lambda \bm{w}^{\dagger} \bm{w} + \sum_{i=1}^n \left( \bm{w}^{\dagger}    \phi(\bm{x}^{(i)}) - y^{(i)}\right)^2,
\end{equation}
where $\lambda \ge 0$ is the regularization parameters for avoiding over-fitting. 

Then the optimal parameters for optimizing this loss function refer to
\begin{equation}\label{eqn:opt-omega}
  \bm{w}^* =  \arg\min_{\bm{w}\in \Theta}  \mathcal{L}(\bm{w}, \bm{x}).
\end{equation} 
As discussed in Chapter~\ref{chapt3:subsec:dual_rep}, the optimal solution $\bm{w}^*$ in Eqn.~\eqref{eqn:opt-omega} has the explicit form of  
\begin{equation}\label{eq:explicit_opt_omega}
  \bm{w}^* = \bm{\Phi}^{\dagger}  ({K}+\lambda \mathbb{I}_n)^{-1}  \bm{y} = \sum_{i=1}^n\sum_{j=1}^n \phi(\bm{x}^{(i)}) (({K}+\lambda \mathbb{I}_n)^{-1})_{ij} y^{(j)},
\end{equation}
where $\bm{y}=[y^{(1)}, ..., y^{(n)}]^{\top}$ refers to the vector of labels and $K\in \mathbb{R}^{n\times n}$ is the kernel matrix, and the second equality follows that $\bm{\Phi} = [\phi(\bm{x}^{(1)}),\cdots, \phi(\bm{x}^{(n)})]^{\dagger}$. Moreover, the norm of the optimal parameters has a simple form for the case of $\lambda \to 0$, i.e.,
\begin{equation}
  \|\bm{w}^*\|_2^2= \bm{y}^{\top} K^{-1} \bm{y}.
\end{equation}

We now expose the prediction error of these learning models, i.e.,
\begin{equation}\label{eq:learning_model}
  \epsilon_{\bm{w}^*}(\bm{x}) = \left| f(\bm{x}) - (\bm{w}^*)^{\dagger}  {\phi}(\bm{x}) \right|,
\end{equation} 
which is uniquely determined by the kernel matrix $K$ and the hyper-parameter $\lambda$ as shown in Eqn.~\eqref{eq:explicit_opt_omega}. In particular, we will focus on discussing the upper bound on the expected prediction error, which is the sum of training error and generalization error.

\begin{tcolorbox}[enhanced, 
  breakable,colback=blue!5!white,colframe=blue!75!black,title={Prediction, training, and generalization error}]
In the context of learning theory, the upper bound of the expected prediction error defined in Eqn.~\eqref{eq:learning_model} (a.k.a, expected risk) is achieved by separately analyzing the upper bounds of the training error (a.k.a, empirical risk) and the generalization error, i.e.,
\begin{equation}
    \mathbb{E}_{\bm{x} \sim \mathcal{D}} \epsilon_{\bm{w}^*}(\bm{x})  = \underbrace{\frac{1}{n} \sum_{i=1}^n \epsilon_{\bm{w}^*}(\bm{x}^{(i)})}_{\mbox{Training error}} + \underbrace{\mathbb{E}_{\bm{x} \sim \mathcal{D}} \epsilon_{\bm{w}^*}(\bm{x}) - \frac{1}{n} \sum_{i=1}^n \epsilon_{\bm{w}^*}(\bm{x}^{(i)})}_{\mbox{Generalization error}}.
  \end{equation}
This decomposition stems from the fact that data distribution $\mathcal{D}$ is inaccessible in most scenarios.  
\end{tcolorbox}

We now will separately give a rough derivation of the upper bound of training error and generalization error, which present the necessary steps for the derivation for a clear exposition and omit the specific details that could be found in \citet{huang2021power}.

\noindent\textit{$\bullet$ Training error}. Employing the convexity of function  and Jensen's inequality, the training error yields
\begin{align}
  \frac{1}{n} \sum_{i=1}^n \epsilon_{\bm{w}^*}(\bm{x}^{(i)}) \le \sqrt{\frac{1}{n} \sum_{i=1}^n \left( (\bm{w}^*)^{\dagger}  \phi(\bm{x}^{(i)}) - y^{(i)} \right)^2 }.
\end{align}
Moreover, combining with the expression for the optimal $\bm{w}^*$ given in Eqn.~\eqref{eq:explicit_opt_omega}, we can obtain the upper bound of training error in terms of the kernel matrix $K$ and hyper-parameter $\lambda$, i.e.,
\begin{equation}
  \frac{1}{n} \sum_{i=1}^n \epsilon_{\bm{w}^*}(\bm{x}^{(i)}) \le \sqrt{\frac{\lambda^2 \bm{y}^{\top} (K+\lambda \mathbb{I}_n)^{-2}\bm{y}}{n} }.
\end{equation}
We can see that when $\lambda=0$ and $K$ are invertible, the training error is zero. However, the hyper-parameter is usually set as $\lambda > 0$ in practice.

\noindent\textit{$\bullet$ Generalization error}. The derivation of generalization error is more complicated than training error, which involves a basic theorem in statistic and learning theory as presented below.
\begin{fact}[Theorem 3.3,  \citet{mohri2018foundations}]\label{fact:gene_rademacher}
  Let $\mathcal{G}$ be a family of function mappings from a set $\mathcal{Z}$ to $[0,1]$. Then for any $\delta>0$, with probability at least $1-\delta$ over identical and independent draw of $n$ samples from $\mathcal{Z}:\bm{z}^{(1)},\cdots, \bm{z}^{(n)}$, we have  for all $g \in \mathcal{G}$,
  \begin{equation}
    \mathbb{E}_{\bm{z}} g(\bm{z}) \le \frac{1}{n} \sum_{i=1}^n g(\bm{z}^{(i)}) + 2\mathbb{E}_{\sigma} \left[\sup_{g\in \mathcal{G}} \frac{1}{n} \sum_{i=1}^n \sigma_i g(\bm{z}^{(i)}) \right]+ 3\sqrt{\frac{\log(2/\delta)}{2n}},
  \end{equation}
  where $\sigma_1,\cdots,\sigma_n$ are in independent and uniform random variables over $\{1,-1\}$.
\end{fact}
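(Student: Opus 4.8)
The plan is to derive this uniform-convergence bound, which is Theorem~3.3 of \citet{mohri2018foundations}, from two classical ingredients: McDiarmid's bounded-differences inequality and a symmetrization (ghost-sample) argument. First I would introduce the one-sided uniform deviation
\begin{equation}
  \Phi(S) := \sup_{g \in \mathcal{G}} \left( \mathbb{E}_{\bm{z}}[g(\bm{z})] - \frac{1}{n}\sum_{i=1}^n g(\bm{z}^{(i)}) \right),
\end{equation}
where $S=(\bm{z}^{(1)},\dots,\bm{z}^{(n)})$. Since every $g \in \mathcal{G}$ maps into $[0,1]$, replacing any single coordinate $\bm{z}^{(i)}$ of $S$ perturbs each empirical average by at most $1/n$ and hence perturbs $\Phi(S)$ by at most $1/n$; McDiarmid's inequality then yields, with probability at least $1-\delta/2$,
\begin{equation}
  \Phi(S) \le \mathbb{E}_S[\Phi(S)] + \sqrt{\frac{\log(2/\delta)}{2n}}.
\end{equation}

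Next I would bound $\mathbb{E}_S[\Phi(S)]$ by $2\,\mathfrak{R}_n(\mathcal{G})$, where $\mathfrak{R}_n(\mathcal{G}) = \mathbb{E}_{S,\sigma}\!\left[\sup_{g}\frac{1}{n}\sum_i \sigma_i g(\bm{z}^{(i)})\right]$ is the expected Rademacher complexity. The steps are: (i) write $\mathbb{E}_{\bm{z}}[g(\bm{z})] = \mathbb{E}_{S'}\!\left[\frac1n\sum_i g(\bm{z}'^{(i)})\right]$ for an independent ghost sample $S'$; (ii) move $\mathbb{E}_{S'}$ out past the supremum using Jensen's inequality; (iii) observe that for i.i.d.\ $S,S'$ the joint law is invariant under swapping $\bm{z}^{(i)}\leftrightarrow\bm{z}'^{(i)}$ on any subset of indices, so averaging over all such swaps introduces i.i.d.\ signs $\sigma_i$; and (iv) split the supremum of the signed difference $\frac1n\sum_i\sigma_i\big(g(\bm{z}'^{(i)})-g(\bm{z}^{(i)})\big)$ into two terms, each equal in expectation to $\mathfrak{R}_n(\mathcal{G})$. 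Finally, to replace the expected Rademacher complexity by the empirical one appearing in the statement, I would apply McDiarmid a second time to $\widehat{\mathfrak{R}}_S(\mathcal{G}) = \mathbb{E}_\sigma\!\left[\sup_g \frac1n\sum_i\sigma_i g(\bm{z}^{(i)})\right]$ viewed as a function of $S$ --- again with bounded differences $1/n$ --- giving $\mathfrak{R}_n(\mathcal{G}) \le \widehat{\mathfrak{R}}_S(\mathcal{G}) + \sqrt{\log(2/\delta)/(2n)}$ with probability at least $1-\delta/2$.

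Union-bounding the two failure events of probability $\delta/2$ and chaining the three estimates gives, with probability at least $1-\delta$,
\begin{equation}
  \mathbb{E}_{\bm{z}}[g(\bm{z})] - \frac{1}{n}\sum_{i=1}^n g(\bm{z}^{(i)}) \le \Phi(S) \le 2\,\widehat{\mathfrak{R}}_S(\mathcal{G}) + 3\sqrt{\frac{\log(2/\delta)}{2n}} \quad \forall\, g \in \mathcal{G},
\end{equation}
where the constant $3$ is $1$ (from the first McDiarmid step) plus $2$ (from the second step, amplified by the factor $2$ in front of the Rademacher term). I expect the symmetrization step to be the only delicate point: one must keep the ghost sample genuinely independent and identically distributed, apply Jensen in the correct direction when pushing $\mathbb{E}_{S'}$ through the supremum, and justify the sign-swap symmetry --- which is exactly where the i.i.d.\ assumption is used. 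The two McDiarmid applications and the final union bound are routine by comparison, and the boundedness of $\mathcal{G}$ into $[0,1]$ is precisely what makes all the bounded-difference constants equal to $1/n$ and thus produces the $\sqrt{\log(2/\delta)/(2n)}$ tail terms.
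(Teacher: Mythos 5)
Your proposal is correct: the paper itself states this result as a \textbf{Fact} and does not prove it, deferring instead to Theorem~3.3 of \citet{mohri2018foundations}, and your argument is essentially a faithful reconstruction of the proof given in that reference --- McDiarmid applied to the uniform deviation $\Phi(S)$ with bounded differences $1/n$, symmetrization via a ghost sample and Jensen's inequality to bound $\mathbb{E}_S[\Phi(S)]$ by twice the Rademacher complexity, a second McDiarmid step to pass from the expected to the empirical Rademacher complexity, and a union bound over the two $\delta/2$ events, which correctly accounts for the constant $3$ (one tail term plus two from the factor-of-two amplification). There is no gap; the only caution is the one you already flag, namely that the sign-swap symmetry in the symmetrization step relies on the i.i.d.\ assumption on the sample.
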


For kernel functions defined in Eqn.~\eqref{eq:learning_model}, the set $\mathcal{Z}$ refers to the space of input vector with $\bm{z}^{(i)} = \bm{x}^{(i)}$ drawn from some input distribution. Each function $g$ would be equal to $\epsilon_{\bm{w}}/\alpha$ for some $\bm{w}$, where $\epsilon_{\bm{w}}$ is defined in Eqn.~\eqref{eq:learning_model} and $\alpha$ is a normalization factor such that the range of $\epsilon_{\bm{w}}/\alpha$ is $[0,1]$. Without loss of generality, we assume that $\alpha=1$. For any specific parameter $\bm{w}$, consider the special case of $\mathcal{G}$ with setting $\mathcal{G}_{\bm{w}}=\{\epsilon_{\bm{v}}~| ~\forall ~ \|\bm{v}\|\le \|\bm{w}\|\}$. Then we have the upper bound of generalization error for the optimal parameter,
\begin{align}\label{eq:gene_1}
  & \mathbb{E}_{\bm{x}} \epsilon_{\bm{w}^*}(\bm{x}) - \frac{1}{n} \sum_{i=1}^n \epsilon_{\bm{w}^*}(\bm{x}^{(i)}) 
  \nonumber \\
  \le & 2\mathbb{E}_{\sigma} \left[\sup_{\|\bm{v}\|\le \|\bm{w}^*\|} \frac{1}{n} \sum_{i=1}^n \sigma_i \epsilon_{\bm{v}}(\bm{x}^{(i)}) \right] + 3\sqrt{\frac{\log(2\|\bm{w}^*\|/\delta)}{2n}}.
\end{align}
Moreover, applying Talagrand's contraction lemma \citep{mohri2018foundations} to the first term on the right-hand side, we have
\begin{align}\label{eq:gene_1_term_1}
  \mathbb{E}_{\sigma} \left[\sup_{\|\bm{v}\|\le \|\bm{w}^*\|} \frac{1}{n} \sum_{i=1}^n \sigma_i \epsilon_{\bm{v}}(\bm{x}^{(i)}) \right] \le & \mathbb{E}_{\sigma} \left[\sup_{\|\bm{v}\|\le \|\bm{w}^*\|} \frac{1}{n} \sum_{i=1}^n \sigma_i (\bm{w}^*)^{\dagger} {\phi}(\bm{x}^{(i)}) \right]
  \nonumber \\
  \le & \sqrt{\frac{\|\bm{w}^*\|^2}{n}},
\end{align}
where the first inequality follows that $\epsilon_{\bm{v}}(\bm{x}^{(i)})$ is Lipschitz continuous with respect to $(\bm{w}^*)^{\dagger}\cdot{\phi}(\bm{x}^{(i)})$ with Lipschitz constant $1$, the second inequality follows direct algebra operation. For the detailed simplification processes, refer to Lemma~1 of \citet{huang2021power}.

In conjunction with Eqn.~\eqref{eq:gene_1}, Eqn.~\eqref{eq:gene_1_term_1}, and the expression of the optimal parameter $\bm{w}^*$ given in Eqn.~\eqref{eq:explicit_opt_omega}, we can reach the final upper bound of generalization error in terms of the kernel matrix, i.e.,
\begin{align}
  & \mathbb{E}_{\bm{x}} \epsilon_{\bm{w}^*}(\bm{x}) - \frac{1}{n} \sum_{i=1}^n \epsilon_{\bm{w}^*}(\bm{x}^{(i)}) 
  \nonumber \\
  \le & 5\cdot \frac{\bm{y}^{\top} (K+\lambda \mathbb{I}_n)^{-1} K  (K+\lambda \mathbb{I}_n)^{-1}\bm{y}}{n} + 3\sqrt{\frac{\log(2/\delta)}{2n}}.
\end{align}
For the case of $\lambda=0$, the first term in the generalization error bound has a simple form of $5\cdot \bm{y}^{\top}  K^{-1}  \bm{y}/n$.

By obtaining the upper bound of training and generalization error, we can directly get the prediction error of the learning model for a specific kernel matrix. We summarize these three errors below.

\begingroup
\allowdisplaybreaks
\begin{tcolorbox}[enhanced, 
  breakable,colback=gray!5!white,colframe=gray!75!black,title=Remark]
  The upper bound of the prediction error for kernel methods defined in Eqn.~\eqref{eq:learning_model} refers to
  \begin{align}\label{eq:pred_err_lambda}
    & \mathbb{E}_{\bm{x} \sim \mathcal{D}} \epsilon_{\bm{w}^*}(\bm{x})  \le \mathcal{O}\Bigg(\underbrace{\sqrt{\frac{\lambda^2 \bm{y}^{\top}  (K+\lambda \mathbb{I}_n)^{-2}\bm{y}}{n}} }_{\mbox{Training error}} + 
    \nonumber \\
    &\underbrace{\sqrt{\frac{\bm{y}^{\top} (K+\lambda \mathbb{I}_n)^{-1} K  (K+\lambda \mathbb{I}_n)^{-1}\bm{y}}{n}}+ \sqrt{\frac{\log(1/\delta)}{n}} }_{\mbox{Generalization error}} \Bigg),
  \end{align}
  where $K$ is a specific kernel related to the learning models, $\bm{y}=[y^{(1)},\cdots, y^{(n)}]$ refers to the label vector of $n$ training data. For the special case of $\lambda=0$, the training error is zero, and the prediction error reduce to the generalization error with a simple form
  \begin{align}\label{eq:pred_err_0}
    \mathbb{E}_{\bm{x} \sim \mathcal{D}} \epsilon_{\bm{w}^*}(\bm{x})  \le \mathcal{O}\Bigg(\sqrt{\frac{\bm{y}^{\top}  K^{-1}\bm{y}}{n}}+ \sqrt{\frac{\log(1/\delta)}{n}} \Bigg).
  \end{align}
\end{tcolorbox}
\endgroup

We remark that the derived upper bound of the prediction error applies to both classical and quantum kernels, as we have not imposed any restrictions on the kernel matrix $K$ during the derivation.

\subsubsection{Quantum kernels with prediction advantages}
Using the above theoretical results of generalization error for general kernel machines, we now elucidate how to access the potential quantum advantage of quantum kernels. For a clear understanding, we focus on the case of $\lambda=0$ in which the prediction error bound has a simple form of $\mathcal{O}(\sqrt{\bm{y}^{\top}  K^{-1}\bm{y}/n}+ \sqrt{\log(1/\delta)/n} )$ as shown in Eqn.~\eqref{eq:pred_err_0}. In particular, this bound has a key dependence on two quantities, namely (1) the size of training data $n$; (2) the kernel-dependent term $\bm{y}^{\top}  K^{-1}\bm{y}$, which we denote as 
\begin{equation}\label{eqn:chapt3-kernel-geomtric}
s_K(\bm{y})=\bm{y}^{\top}  K^{-1}\bm{y},
\end{equation}
in the following discussion for simplification.

The dependence on $n$ reflects the role of data to improve prediction performance. On the other hand, the quantity $s_K(\bm{y})$ is equal to the model complexity of the trained function $h(\bm{x}) = (\bm{w}^*)^{\dagger} \cdot \phi(\bm{x})$, where $s_K(\bm{y}) = \| \bm{w}^*\|^2 = (\bm{w}^*)^{\dagger} \cdot \bm{w}^*$ after training. A smaller value of $s_K(\bm{y})$ implies better generalization to new data $\bm{x}$ sampled from the distribution $\mathcal{D}$. Intuitively, $s_K(\bm{y})$ measures whether the closeness between $\bm{x}^{(i)}$ and $\bm{x}^{(j)}$ defined by the kernel function $k(\bm{x}^{(i)}, \bm{x}^{(j)})$ matches well with the closeness of the labels $y^{(i)}$ and $y^{(j)}$, recalling that a larger kernel value indicates two points are closer.

Based on the above discussion, we are now in the position to analyze the potential advantage of quantum kernel machines. Given a set of training data $\{(\bm{x}^{(i)},y^{(i)})\}_{i=1}^n$, let $\mathcal{Q}$ and $\mathcal{C}$ be the class of \textbf{Q}uantum and \textbf{C}lassical kernels respectively that can be efficiently evaluated on quantum and classical computers for any given $\bm{x}$. In order to formally evaluate the potential for quantum
prediction advantage generally, one must take the quantum kernel $K_Q\in \mathcal{Q}$ to satisfy the following two conditions:
\begin{itemize}
  \item $K_Q$ is hard to compute classically for any given $\bm{x}$.
  \item According to Eqn.~(\ref{eqn:chapt3-kernel-geomtric}), the quantity $s_{Q}(\bm{y})$ related to the quantum kernel $K_Q$ must be the minimal over all efficient classical models, namely $s_{Q}(\bm{y}) \le s_{C}(\bm{y}) $ for any $K_C \in \mathcal{C}$ with $s_{C}(\bm{y})$ being the $K_C$ related quantity.
\end{itemize}

\begin{figure*}
  \centering \includegraphics[width=0.99\textwidth]{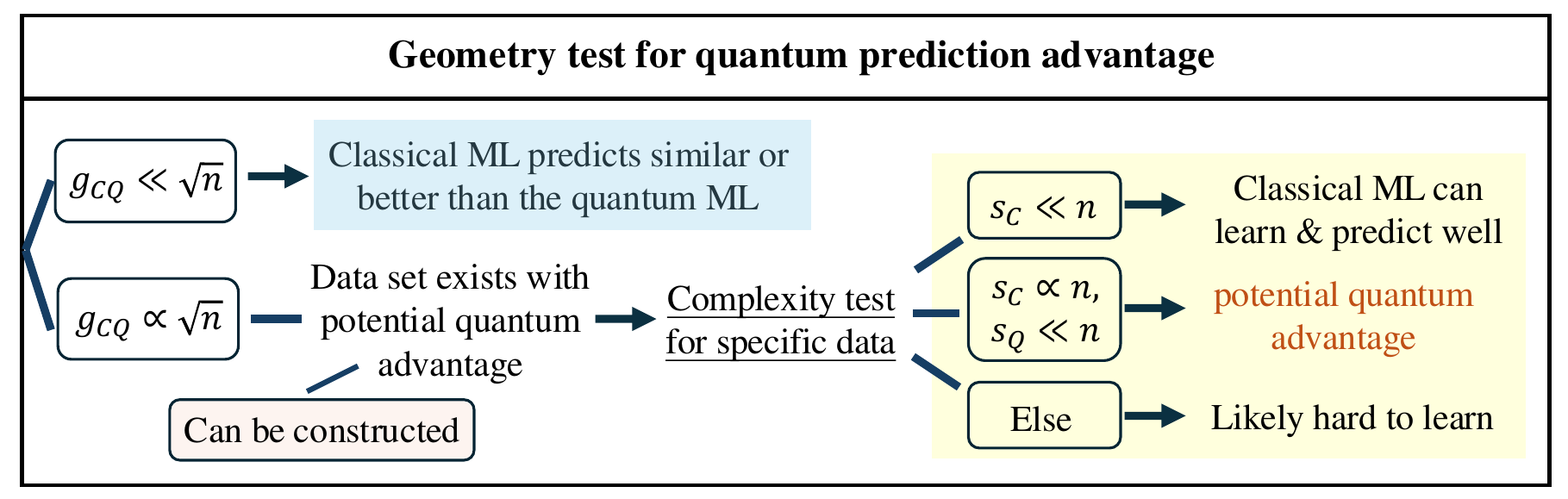}
  \caption{{\textbf{A flowchart for understanding the potential for quantum prediction
  advantage (Adapted from \citet{huang2021power})}.  }}
  \label{fig:qk_test_flowchart}
\end{figure*}

From the second condition, we can see that the potential advantage for the quantum kernel $K_Q$ to predict better than a classical kernel $K_C$ depends on the largest possible separation between $s_{Q}(\bm{y})$ and $s_{C}(\bm{y})$ for a dataset. \citet{huang2021power} define a geometry metric, namely \textbf{asymmetric geometric difference}, to characterize this separation for a fixed training dataset, which is given by
\begin{equation}
  g_{CQ} = g(K_C||K_Q)=\sqrt{\left\|\sqrt{K_Q}(K_C^{-1})\sqrt{K_Q}\right\|_{\infty}},
\end{equation}
where $\|\cdot\|_{\infty}$ is the spectral norm of the resulting matrix and we assume $\Tr(K_Q)=\Tr(K_C)=n$. The geometric difference $g(K_C||K_Q)$ can be computed on a classical computer by performing a singular value decomposition of the $n\times n$ matrices $K_C$ and $K_Q$ in time at most order $n^3$. 

Figure~\ref{fig:qk_test_flowchart} presents a detailed flowchart for evaluating the potential quantum prediction advantage using the defined geometric difference $g_{CQ}$ in a machine learning task. The input consists of $n$ data samples, along with both quantum and classical methods, each associated with its respective kernel. The tests are conducted as a function of $n$ to highlight the role of data size in determining the potential for a prediction advantage.

First, the geometric quantity $g_{CQ}$ is evaluated, which quantifies the potential for a separation between quantum and classical predictions, without yet considering the actual function to be learned. Specifically, a large value of $g_{CQ} \propto \sqrt{n}$ suggests the possibility of a quantum prediction advantage. If the test is passed, an adversarial dataset that saturates this limit can be constructed. In particular, there exists a dataset with $s_C = g_{CQ}^2 s_Q$, where the quantum model exhibits superior prediction performance, as will be described in the subsequent context.

Subsequently, to incorporate the provided data, a label-specific test can be performed using the model complexities $s_C$ and $s_Q$. For quantum kernels and classical learning models, when $s_Q\ll n$ and $s_C\propto n$, a prediction advantage for quantum models is possible, as supported by the generalization bound in Eqn.~\eqref{eq:pred_err_0}. In contrast, if $g_{CQ}$ is small such as $g_{CQ} \ll \sqrt{n}$, the classical learning model will likely have a similar or better model complexity  $s_C(\bm{y})$ compared to the quantum model. In this case, the classical model's prediction performance will be competitive or superior, and the classical model would likely be preferred.

\subsubsection{Construction of dataset with maximal quantum advantage}
We now elucidate how to explicitly construct such a  data set to enable the maximal separation between the model complexity of quantum kernels and classical kernels, as indicated by the geometry test in Figure~\ref{fig:qk_test_flowchart}. To separate between quantum and classical models related to kernel matrix $K_Q$ and $K_C$, we consider that the ratio between $s_C$ and $s_Q$ is as large as possible for a particular choice of targets $y^{(1)}, \cdots, y^{(n)}$. This could be achieved by solving the optimization problem
\begin{equation}
  \min_{\bm{y}\in \mathbb{R}^n} \frac{s_C}{s_Q} = \min_{\bm{y}\in \mathbb{R}^n} \frac{\bm{y}^{\top}  K_C^{-1}\bm{y}}{\bm{y}^{\top}  K_Q^{-1}\bm{y}},
\end{equation}
which has an exact solution given by a generalized eigenvalue problem. The solution is given by $\bm{y}=\sqrt{K_Q}\bm{v}$, where $\bm{v}$ is the eigenvector of $\sqrt{K_Q}K_C^{-1}\sqrt{K_Q}$ corresponding to the eigenvalue $g^2=\|\sqrt{K_Q}K_C^{-1}\sqrt{K_Q}\|_{\infty}$. This guarantees that $s_C=g^2s_Q$, and note that by definition of $g,s_C\le g^2 s_Q$. Hence this dataset fully utilized the geometric difference between the quantum and classical space. Finally, we can turn this dataset, which maps input $\bm{x}$ to a real value ${y}_{Q}$, into a classification task by replacing $y_Q$ with $+1$ if $y_{Q} > \mbox{median}(y^{(1)}, \cdots, y^{(n)})$ and $-1$ if $y_{Q} \le \mbox{median}(y^{(1)}, \cdots, y^{(n)})$. The constructed dataset will yield the largest separation between quantum and classical models from a learning theoretic sense, as the model complexity fully saturates the geometric difference. If there is no quantum advantage in this dataset, there will likely be none.

\section{Code Demonstration}

In this section, we explore the practical implementation of a quantum kernel. Before diving into concrete code examples, we discuss an efficient strategy for estimating the quantum kernel in practice.

As explained in Chapter.~\ref{sec:quantum_kernel}, one method for estimating the quantum kernel is the SWAP test, which is resource-intensive. Alternatively, we can encode the classical data vector $\bm{x}$ using a unitary operation $U(\bm{x})$ and apply the inverse embedding of $\bm{x}'$ using $U(\bm{x}')^\dagger$ on the same qubits. The quantum kernel $k_Q(\bm{x}, \bm{x}')$ is then estimated by measuring the expectation of the projector $O = (\ket{0} \bra{0})^{\otimes N}$ on the zero state.

The complete quantum circuit architecture for this process is illustrated in Figure~\ref{chap3:fig:QK_circuit}. Mathematically, the process is expressed as:
\begin{align}
  &\braket{0^{\otimes N} | U(\bm{x}') U(\bm{x})^\dagger O U(\bm{x}')^\dagger U(\bm{x}) | 0^{\otimes N}} \nonumber \\
  =& \braket{0^{\otimes N} | U(\bm{x}') U(\bm{x})^\dagger \ket{0}^{\otimes N} \bra{0}^{\otimes N} U(\bm{x}')^\dagger U(\bm{x}) | 0^{\otimes N}} \nonumber \\
  =& \left|\braket{0^{\otimes N} | U(\bm{x}')^\dagger U(\bm{x}) | 0^{\otimes N}}\right|^2 \nonumber \\
  =& \left|\braket{{\phi}(\bm{x}) | {\phi}(\bm{x}')}\right|^2 \nonumber \\
  =& k_Q(\bm{x}, \bm{x}').
\end{align}
This approach allows the quantum kernel estimation to use the same number of qubits required for the quantum feature mapping of the classical vector $\bm{x}$.

Next, we provide an example demonstrating the workflow of applying quantum kernels for classification tasks on the MNIST dataset, with step-by-step code implementation.

\subsection{Classification on MNIST dataset}

We train a Support Vector Machine (SVM) classifier associated with a quantum kernel on the MNIST dataset. The pipeline involves the following steps.
\begin{enumerate}
    \item [Step 1] Load and preprocess the dataset.
    \item [Step 2]  Define the quantum feature mapping.
    \item [Step 3] Construct the quantum kernel.
    \item [Step 4]  Train and evaluate the SVM classifier.
\end{enumerate}

We begin by importing the required libraries.

\begin{lstlisting}[language=Python]
import pennylane as qml
from sklearn.datasets import fetch_openml
from sklearn.decomposition import PCA
from sklearn.model_selection import train_test_split
from sklearn.preprocessing import StandardScaler
from sklearn.svm import SVC
from sklearn.metrics import accuracy_score
import numpy as np
\end{lstlisting}

\noindent \textbf{Step 1: Dataset preparation.} We focus on the digits 3 and 6 in the MNIST dataset, forming a binary classification problem. Principal Component Analysis (PCA) \citep{abdi2010principal} is applied to reduce the feature dimension of the images, minimizing the number of required qubits for encoding. The compressed features are normalized to align with the periodicity of the quantum feature mapping.

\begin{lstlisting}[language=Python]
def load_mnist(n_qubit):
    # Load MNIST dataset from OpenML
    mnist = fetch_openml('mnist_784', version=1)
    X, y = mnist.data, mnist.target
    
    # Filter out the digits 3 and 6
    mask = (y == '3') | (y == '6')
    X_filtered = X[mask]
    y_filtered = y[mask]
    
    # Convert labels to binary (0 for digit 3 and 1 for digit 6)
    y_filtered = np.where(y_filtered == '3', 0, 1)
    
    # Apply PCA to reduce feature dimension
    pca = PCA(n_components=n_qubit)
    X_reduced = pca.fit_transform(X_filtered)
    
    # Normalize the input features
    scaler = StandardScaler().fit(X_reduced)
    X_scaled = scaler.transform(X_reduced)
    
    # Split into training and testing sets
    X_train, X_test, y_train, y_test = train_test_split(X_scaled, y_filtered, test_size=0.2, random_state=42)
    return X_train, X_test, y_train, y_test
    
n_qubit = 8
X_train, X_test, y_train, y_test = load_mnist(n_qubit)
\end{lstlisting}

To better understand the structure of the dataset, we visualize the training data using t-distributed Stochastic Neighbor Embedding (t-SNE) \citep{van2008visualizing}. The following code generates the visualization:
\begin{lstlisting}[language=Python]
def visualize_dataset(X, labels):
    import matplotlib.pyplot as plt
    from sklearn.manifold import TSNE

    tsne = TSNE(n_components=2, random_state=42, perplexity=30)
    label2name = {
        0: '3',
        1: '6'
    }
    mnist_tsne = tsne.fit_transform(X)
    for label in np.unique(labels):
        indices = labels == label
        plt.scatter(mnist_tsne[indices, 0], mnist_tsne[indices, 1], cmap='coolwarm', s=20, label=f'Number {label2name[label]}')

    # Add labels and legend
    plt.title("t-SNE Visualization of Two Classes (3 and 6)")
    plt.xlabel("t-SNE Dimension 1")
    plt.ylabel("t-SNE Dimension 2")
    plt.legend()

    plt.tight_layout()
    plt.show()

visualize_dataset(X_train, y_train)
\end{lstlisting}

The resulting t-SNE visualization is shown in Figure~\ref{fig:mnist36}.

\begin{figure}[H]
\centering
\includegraphics[width=0.98\textwidth]{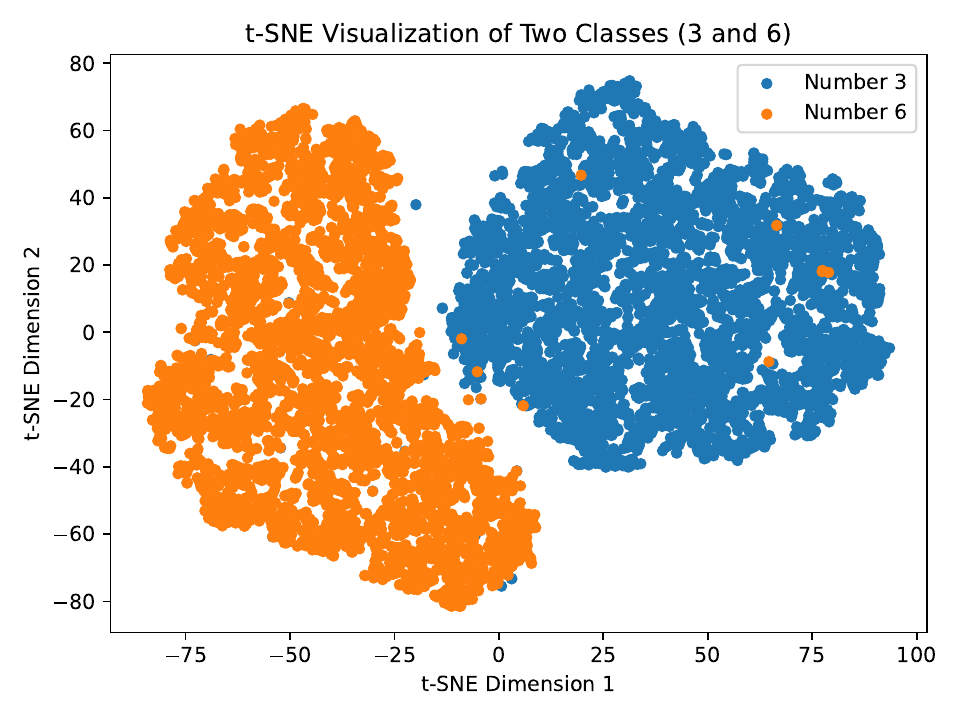}
\caption{{\textbf{T-SNE visualization of MNIST dataset of two classes `3' and `6'.} } }
\label{fig:mnist36} 
\end{figure}

\noindent \noindent \textbf{Steps 2\&3: Define quantum feature mapping and building quantum kernel.} We use angle embedding as the quantum feature mapping method. The quantum kernel is implemented as follows.

\begin{lstlisting}[language=Python]
dev = qml.device('default.qubit', wires=n_qubit)

@qml.qnode(dev)
def kernel(x1, x2, n_qubit):
    qml.AngleEmbedding(x1, wires=range(n_qubit))
    qml.adjoint(qml.AngleEmbedding)(x2, wires=range(n_qubit))
    return qml.expval(qml.Projector([0]*n_qubit, wires=range(n_qubit)))
\end{lstlisting}

Using the quantum kernel, we construct the kernel matrix by computing the kernel values for all pairs of samples:

\begin{lstlisting}[language=Python]
def kernel_mat(A, B):
    mat = []
    for a in A:
    row = []
    for b in B:
    row.append(kernel(a, b, n_qubit))
    mat.append(row)
    return np.array(mat)
\end{lstlisting}

Next, we visualize the quantum kernel matrix to gain insight into its structure.
\begin{lstlisting}[language=Python]
def visualize_kernel(X, y, n_sample):
    X_vis = []
    for label in np.unique(y):
        index = y == label
        X_vis.append(X[index][:n_sample])

    X_vis = np.concatenate(X_vis, axis=0)
    n_sample_per_class = len(X_vis) // 2

    sim_mat = kernel_mat(X_vis, X_vis)
    np.save('code/chapter_4_kernel/sim_mat.npy', sim_mat)

    import matplotlib.pyplot as plt
    plt.imshow(sim_mat, cmap='viridis', interpolation='nearest')

    # Add color bar to show the scale
    plt.colorbar(label='Similarity')

    plt.axhline(n_sample_per_class - 0.5, color='red', linewidth=1.5)  # Horizontal line
    plt.axvline(n_sample_per_class - 0.5, color='red', linewidth=1.5)  # Vertical line

    xticks = yticks = np.arange(0, len(X_vis))
    xtick_labels = [f"3-{i+1}" if i < n_sample_per_class else f"6-{i+1-n_sample_per_class}" for i in range(len(X_vis))]
    ytick_labels = xtick_labels

    plt.xticks(xticks, labels=xtick_labels, rotation=90, fontsize=8)
    plt.yticks(yticks, labels=ytick_labels, fontsize=8)

    # Title and axis labels
    plt.title("Quantum Kernel Matrix")
    plt.xlabel("Sample Index")
    plt.ylabel("Sample Index")


    plt.tight_layout()
    plt.show()

visualize_kernel(X_train, y_train, 10)
\end{lstlisting}

The resulting kernel matrix is shown in Figure~\ref{cha3:fig:kernel_mat}.

\begin{figure}[H]
\centering
\includegraphics[width=0.68\textwidth]{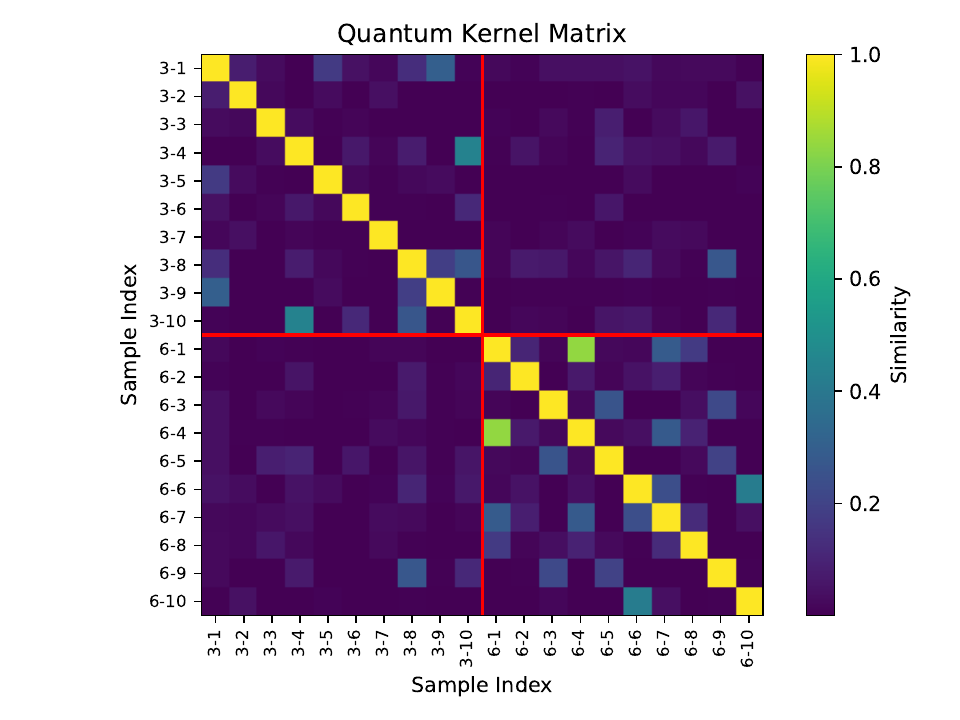}
\caption{{\textbf{Visualization of quantum kernel matrix on $20$ samples, equally drawn from two classes `3' and `6' in MNIST dataset.} } }
\label{cha3:fig:kernel_mat} 
\end{figure}

From the visualization, we observe a clear block structure:

\begin{itemize}
    \item Most of the elements in the top-left and bottom-right blocks, where samples belong to the same class, show higher similarity values.
    \item Most of the elements in the top-right and bottom-left blocks, where samples belong to different classes, exhibit lower similarity values.
\end{itemize}
This indicates that it may be possible to distinguish the two classes by setting a similarity threshold.

\noindent \textbf{Step 4: Training SVM.} To achieve higher classification accuracy, we build, train, and evaluate the SVM classifier with the quantum kernel matrix:

\begin{lstlisting}[language=Python]
svm = SVC(kernel=kernel_mat)
svm.fit(X_train, y_train)
pred = svm.predict(X_test)
print("Accuracy:", accuracy_score(y_test, pred))
\end{lstlisting}

To further analyze how the performance of the SVM with a quantum kernel depends on the size of the training dataset, we vary the number of training samples from $10$ to $100$ in increments of $10$. For each configuration, we record the corresponding classification accuracy on the test data.

\begin{lstlisting}[language=Python]
svm = SVC(kernel='precomputed')
n_sample_max = 100
X_train_sample = []
y_train_sample = []
for label in np.unique(y_train):
    index = y_train == label
    X_train_sample.append(X_train[index][:n_sample_max])
    y_train_sample.append(y_train[index][:n_sample_max])
X_train_sample = np.concatenate(X_train_sample, axis=0)
y_train_sample = np.concatenate(y_train_sample, axis=0)
kernel_mat_train = kernel_mat(X_train_sample, X_train_sample)
kernel_mat_test = kernel_mat(X_test, X_train_sample)

accuracy = []
n_samples = []
for n_sample in range(10, n_sample_max+10, 10):
    class1_indices = np.arange(n_sample)
    class2_indices = np.arange(n_sample_max, n_sample_max+n_sample)
    selected_indices = np.concatenate([class1_indices, class2_indices])

    svm.fit(kernel_mat_train[np.ix_(selected_indices, selected_indices)], np.concatenate([y_train_sample[:n_sample], y_train_sample[n_sample_max:n_sample_max+n_sample]]))
    pred = svm.predict(np.concatenate([kernel_mat_test[:, :n_sample], kernel_mat_test[:, n_sample_max:n_sample_max+n_sample]], axis=1))
    accuracy.append(accuracy_score(y_test, pred))
    n_samples.append(n_sample)

plt.plot(n_sample, accuracy, marker='o')
plt.title('Classification Accuracy vs. #Training Samples')
plt.xlabel('#Training Samples')
plt.xticks(n_sample, n_sample)
plt.ylabel('Accuracy')
plt.grid()
plt.tight_layout()
plt.show()
\end{lstlisting}

\begin{figure}[H]
    \centering
    \includegraphics[width=0.68\textwidth]{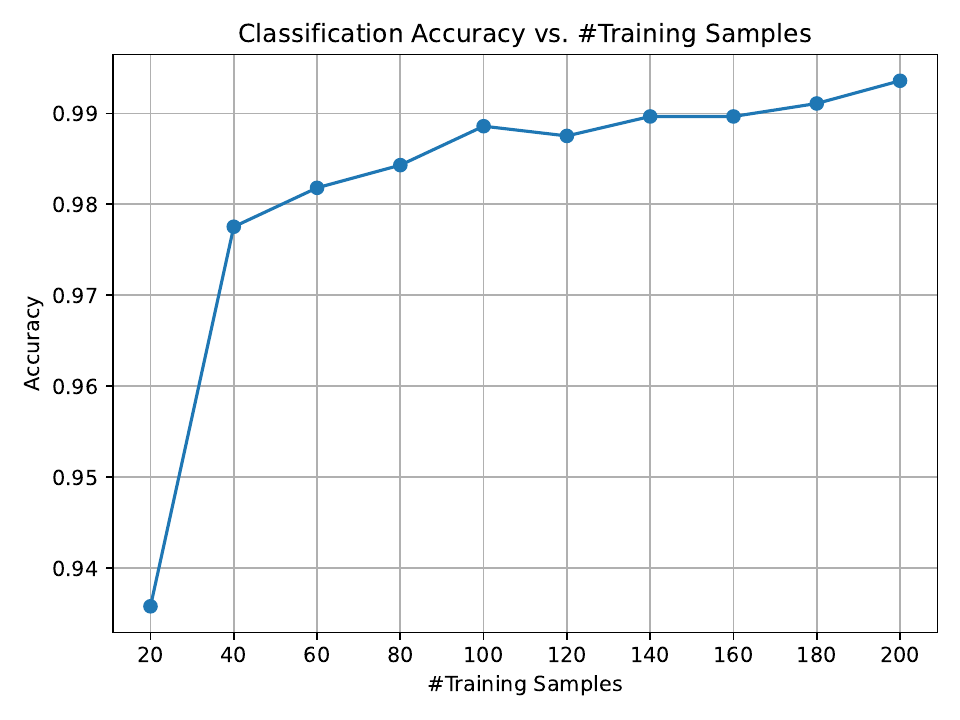}
    \caption{{\textbf{The classification accuracy on test data as a function of the number of training samples.} } }
    \label{fig:kernel_acc}  
\end{figure}

\section{Bibliographic Remarks}
The foundational concept of using quantum computers to evaluate kernel functions, namely the concept of quantum kernels, was first explored by \citet{schuld2017implementing}, who highlighted the fundamental differences between quantum kernels and quantum support vector machines. Building on this, \citet{havlivcek2019supervised} and \citet{schuld2019quantum} established a connection between quantum kernels and parameterized quantum circuits (PQCs), demonstrating their practical implementation. These works emphasized the parallels between quantum feature maps and the classical kernel trick. Since then, a large number of studies delved  into figuring out the potential of quantum kernels for solving practical real-world problems.

The recent advancements in quantum kernel machines can be roughly categorized into three key areas: \textit{kernel design}, \textit{theoretical findings}, and \textit{applications}. Specifically, the advances in kernel design focus on addressing challenges such as vanishing similarity and kernel concentration by exploring innovative frameworks.  Theoretical studies delve into the limitations and capabilities of quantum kernels, examining factors such as generalization error bounds, noise resilience, and their capacity to demonstrate quantum advantage. Finally, applications of quantum kernels showcase their potential across diverse domains. In the rest of this section, we separately review the existing developments within each of these three branches.

\subsection{Quantum kernel design}\label{chapt3:subsec:qk_design}

A crucial research line in this field focuses on constructing \textit{trainable quantum kernels} to maximize performance for specific datasets and problem domains.  In particular, traditional quantum kernels, with fixed data embedding schemes, are limited to specific feature representation spaces and often fail to capture the complex and diverse patterns inherent in real-world data. To address this limitation, \citet{lloyd2020quantum} explored the construction of task-specific quantum feature maps using measurement theory. Building on this idea, \citet{hubregtsen2022training} introduced a method to optimize quantum feature maps variationally within quantum kernels, linking this approach to the concept of data re-uploading techniques  \citep{perez2020data,schuld2021effect}.  Additionally, \citet{vedaie2020quantum} and \citet{lei2024neural} proposed leveraging multiple kernel learning and architecture search to construct quantum kernels, respectively. Last, \citet{glick2024covariant} proposed the covariant quantum kernels for efficiently solving the problems with group structure.

An orthogonal research direction is addressing the issue of exponential kernel concentration, also known as vanishing similarity, in quantum kernels \citep{thanasilp2022exponential}. Specifically, quantum kernels, which are defined as the inner product between quantum feature mappings, often suffer from the phenomenon of vanishing similarity. This issue was first highlighted by \citet{huang2021power}, who found that quantum feature mappings are typically ``far'' from one another in high-dimensional feature spaces, leading to vanishing similarity and, consequently, poor generalization performance.

To mitigate this issue, \citet{huang2021power} introduced projected quantum kernels, which store feature vectors in classical memory and evaluate a Gaussian kernel. This approach replaces the reliance on the inner product of quantum states, as seen in traditional quantum embedding kernels. Moreover, \citet{suzuki2022quantum} proposed the anti-symmetric logarithmic derivative quantum Fisher kernel, which avoids the exponential kernel concentration problem by encoding geometric information of the input data.
Beyond developing new types of quantum kernels, 
\citet{shaydulin2022importance} and \citet{canatar2022bandwidth} explored strategies to mitigate the exponential concentration issue for quantum embedding kernels by scaling input data with carefully chosen hyperparameters. This approach clusters the data-encoded quantum states closer together in feature space, reducing the risk of vanishing similarity at the cost of slightly lowering expressivity. 

\subsection{Theoretical studies of quantum kernels}
The theoretical studies of quantum kernels aim to rigorously understand their performance potential and limitations under realistic conditions, enabling the design of more effective, robust, and generalizable quantum kernel methods. Prior literature in this context focuses on exploring two aspects of quantum kernels, namely, expressivity and generalization ability.

\subsubsection{Expressivity of quantum kernels}
The expressivity of quantum kernels refers to their capacity to capture complex data relationships and represent intricate patterns in the feature space. A common approach to studying this is through the analysis of the reproducing kernel Hilbert space (RKHS), which provides insights into the underlying feature representations of quantum kernels.

\citet{schuld2021supervised} rigorously analyzed the RKHS of embedding-based quantum kernels and established the universality approximation theorem, demonstrating that quantum kernels can approximate a wide class of functions. Building on this, \citet{jerbi2023quantum} extended the analysis by investigating parameterized quantum embedding kernels, introducing a data-reuploading structure and proving a corresponding universality approximation theorem. These results underscore the expressive power of quantum kernels in representing complex data structures.

Despite these advances, the studies on expressive power and universality approximation often overlook the efficiency of constructing quantum kernels. Specifically, if the computational cost of constructing a universal quantum kernel is comparable to that of classical methods, the practical advantages of quantum kernels become questionable. 

To narrow this gap, \citet{gil2024expressivity} examined the expressive power of efficient quantum kernels that can be implemented on quantum computers within polynomial time. Their work provides a detailed analysis of the types of kernels that are achievable with a polynomial number of qubits and within polynomial time, offering insights into the feasibility and practical utility of quantum kernels in real-world scenarios.

However, alongside the exploration of expressive power, a significant challenge known as exponential kernel concentration has been identified. \citet{thanasilp2022exponential} identified four key factors contributing to this issue: high expressivity of data embeddings, global measurements, entanglement, and noise. To address this limitation, substantial research has focused on designing advanced quantum kernels to mitigate exponential kernel concentration, as discussed in Chapter~\ref{chapt3:subsec:qk_design}.

\subsubsection{Generalization of quantum kernels}
The generalization ability of a learning model—its capacity to perform well on unseen data—is a critical factor in evaluating its effectiveness. In this context, a considerable body of research has investigated the generalization ability of quantum kernels. \citet{huang2021power} established a data-dependent generalization error bound for quantum kernels and demonstrated that, for certain types of data (such as those generated by quantum circuits), quantum kernels can achieve a generalization advantage over classical learning models.

In addition, \citet{wang2021towards} explored the generalization performance of quantum kernels in the noisy scenario, where practical limitations such as finite measurements and quantum noise are taken into account. Their work rigorously showed that the generalization performance of quantum kernels could be significantly degraded in scenarios involving large training datasets, limited measurement repetitions, or high levels of system noise. To address these challenges, they proposed an effective method based on indefinite kernel learning to help preserve generalization performance under such constraints.

Beyond quantum data, \citet{liu2021rigorous} examined the generalization error of quantum kernels using artificial classical datasets, such as those based on the discrete logarithm problem. Their results demonstrated that quantum kernels could achieve accurate predictions in polynomial time for such problems, whereas classical learning models require exponential time, highlighting the potential computational advantages of quantum kernels.

Despite these promising results, \citet{kubler2021inductive} studied the generalization ability of quantum kernels from the perspective of inductive bias. They argued that quantum kernels, lacking inductive bias, often fail to outperform classical models in practical scenarios. This underscores the importance of carefully designing embeddings and aligning kernels to achieve meaningful and practical quantum advantages.

\subsubsection{Provable advantages of quantum kernels}
The potential for quantum kernels to demonstrate quantum advantage has been a central focus of research. For instance, \citet{huang2021power} provided evidence of generalization advantages for quantum kernels on quantum data. Similarly, \citet{liu2021rigorous} presented a rigorous framework showing that quantum kernels can efficiently solve problems like the discrete logarithm problem, which is believed to be intractable for classical computers under standard cryptographic assumptions. Moreover, \citet{sweke2021quantum} demonstrated quantum advantage in distribution learning tasks, offering some of the earliest theoretical evidence of quantum advantage in machine learning.

However, many of these tasks are artificial, designed specifically to showcase quantum advantages. This raises the question of how these theoretical benefits can be translated to real-world applications. In this regard, the next significant challenge is to demonstrate that quantum models can consistently outperform classical models in solving practical, real-world problems.

\subsection{Applications of quantum kernels}

Motivated by the potential of quantum kernels to recognize complex data patterns, numerous studies have explored their practical applications across diverse fields, including classification, drug discovery, anomaly detection, and financial modeling.

For instance, \citet{beaulieu2022quantum} investigate the use of quantum kernels for image classification, specifically in identifying real-world manufacturing defects. Similarly, \citet{rodriguez2024satellite} apply quantum kernels to satellite image classification, a task of particular importance in the earth observation industry. In the field of quantum physics, \citet{sancho2022quantum} and \citet{wu2023quantum} leverage quantum kernels to recognize phases of quantum matter, where quantum kernels outperform classical learning models in solving certain problems. In drug discovery, \citet{batra2021quantum} explore the potential of quantum kernels to accelerate and improve the identification of promising compounds.

Quantum kernels have also been explored in anomaly detection. \citet{liu2018quantum} demonstrate their superior performance over classical methods in detecting anomalies within quantum data.  Furthermore, \citet{grossi2022mixed} employ quantum kernel methods for fraud classification tasks, showing improvements when benchmarked against classical methods. \citet{miyabe2023quantum} expand their application to the financial domain by proposing a quantum multiple-kernel learning methodology. This approach broadens the scope of quantum kernels to include credit scoring and directional forecasting of asset price movements, highlighting their potential utility in financial services.

Despite the promise shown in these applications, the realization of quantum advantage in practical tasks remains an ongoing area of research, with current efforts directed toward identifying real-world problems where quantum kernels outperform classical alternatives.

\chapter{Quantum Neural Networks}\label{cha5:qnn}
 
Classical neural networks~\citep{lecun2015deep} are the foundation of modern artificial intelligence technologies and have achieved widespread success in fields such as computer vision~\citep{voulodimos2018deep} and natural language processing~\citep{otter2020survey}. However, despite these achievements, classical neural networks face significant challenges, including excessively large model sizes and the corresponding high computational costs~\citep{hoffmann2022training}, especially in terms of energy consumption~\citep{de2023growing}. These limitations result from their dependence on classical computational resources, which are becoming increasingly unsustainable as models grow in complexity. 

Quantum neural networks (QNNs)~\citep{jeswal2019recent} offer a promising solution by enhancing neural networks with the computational potential of quantum circuits~\citep{liu2024towards}. In QNNs, classical input data is encoded into quantum states, and quantum gates with trainable parameters process these states in ways that classical systems cannot easily replicate. This computational regime leverages quantum mechanics to explore new forms of pattern recognition and problem-solving that go beyond classical methods. Thus, QNNs have the potential to outperform classical neural networks in specific learning tasks~\citep{huang2022quantum}, where the advantages in processing and learning can be explored.  

Despite these promising features, there are challenges in realizing the full potential of QNNs, such as quantum noise~\citep{peters2021machine} and the requirement for scalable quantum hardware~\citep{acharya2024quantum}. Nonetheless, ongoing advancements in quantum hardware and algorithm design promise QNNs to address the inefficiencies of classical models, especially in areas such as quantum many-body physics~\citep{gardas2018quantum} and quantum chemistry~\citep{cao2019quantum}.

In this chapter, to provide a systematic overview, we begin by outlining the structure and function of classical neural networks in Chapter~\ref{chapt5:sec:classical_nn}, before transitioning to fault-tolerant and near-term quantum neural networks in Chapters~\ref{chapt5:sec:fault_quantum_perceptron} and \ref{chapt5:sec:qnn}, respectively. We also discuss the theoretical foundations of QNNs in Chapter~\ref{chapt5:sec:qnn_theory}, focusing on trainability, expressivity, and generalization capabilities. Finally, we provide illustrative code implementations of QNNs using the wine~\citep{Dua2019} and MNIST datasets~\citep{lecun1998gradient} in Chapter~\ref{chapt5:sec:qnn_code}.

\section{Classical Neural Networks}
\label{chapt5:sec:classical_nn}

Neural networks~\citep{mcculloch1943logical,gardner1998artificial,vaswani2017attention} are computer models inspired by the structure of the human brain, designed to process and analyze complex patterns in data. Originally developed from the concept of neurons connected by weighted pathways, neural networks have become one of the most powerful tools in artificial intelligence~\citep{lecun2015deep}. Each neuron processes its inputs by applying weights and non-linear activations, producing an output that feeds into the next layer of neurons. This structure enables neural networks to learn complex functions during training~\citep{hornik1993some}. For example, given a dataset of images and their labels, a neural network can learn to classify categories, such as distinguishing between cats and dogs, by adjusting its parameters during training. Guided by optimization algorithms such as gradient descent~\citep{amari1993backpropagation}, the learning process allows the network to gradually reduce the error between the predicted and actual outputs, allowing it to learn the best parameters for the given task. 

After nearly a century of exploration, neural networks have undergone remarkable advancements in both architectures and capabilities. The simplest model, the perceptron~\citep{mcculloch1943logical}, laid the foundation by showing how neural networks could learn to separate linearly classifiable categories. Building on this, deeper and more complex networks—such as multilayer perceptrons (MLPs)~\citep{gardner1998artificial} and  transformers~\citep{vaswani2017attention}—have enabled breakthroughs in tasks such as autonomous driving and content generation.

\subsection{Perceptron}
\label{chapt5:sec:classical_perceptron}

The perceptron model, first introduced by \citep{mcculloch1943logical}, is widely regarded as a foundational structure in artificial neural networks, inspiring architectures ranging from convolutional neural networks (CNNs)~\citep{lecun1989handwritten} and residual neural networks (ResNets)~\citep{he2016deep} to transformers \citep{vaswani2017attention}. Due to its fundamental role, we next introduce the mechanism of single-layer perceptrons. 

A single-layer perceptron comprises three fundamental components: \textit{input neurons, a weighted layer, and an output neuron} as illustrated in Figure~\ref{chap5_figure_perceptron}. Given a $d$-dimensional input vector $\bm{x} \in \mathbb{R}^d$, the input layer consists of $d$ neurons, each representing the feature $\bm{x}_{i}$ for $\forall i \in [d]$. This input is processed through a weighted summation, i.e.,
\begin{equation}\label{eq_ch5_perceptron_z}
z = \bm{w}^{\top} \bm{x},
\end{equation}
where $\bm{w}^{\top}$ is the transpose of the weight vector and $z$ is the output of the weighted layer. 
A non-linear activation function is then applied to produce the output neuron $\hat{y}$.
For the standard perceptron model, the sign function is typically used as the activation function:
\begin{equation}\label{eq_ch5_perceptron_yhat}
\hat{y} = f(z) = \left\{
\begin{aligned}
1, &\quad{} \text{if} \quad z \geq 0 , \\
-1, &\quad{} \text{if} \quad z < 0 .
\end{aligned}
\right. 
\end{equation}

\begin{figure}[t]
  \centering 
  \includegraphics[width=0.6\textwidth]{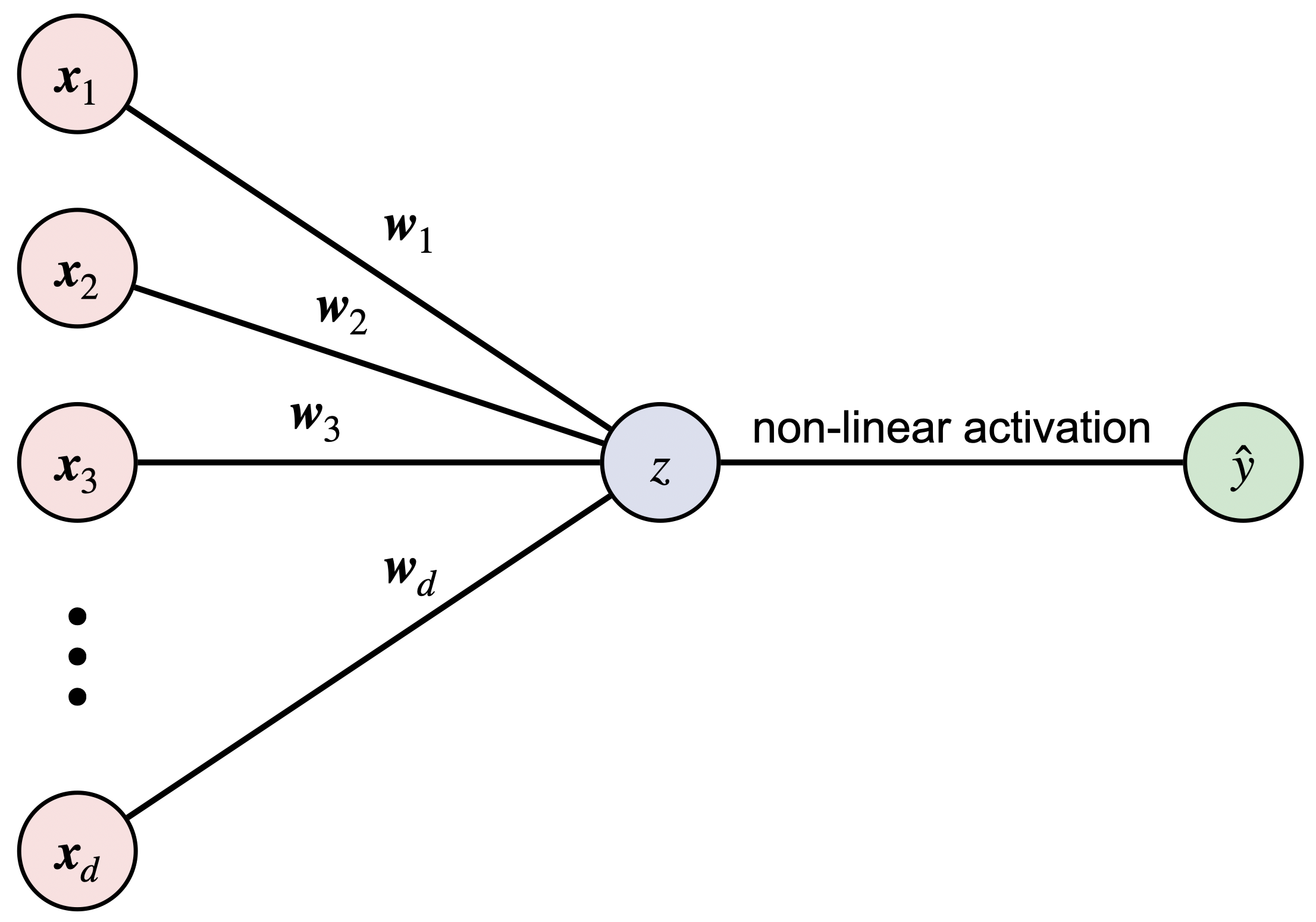}
  \caption{{\textbf{Illustration of the perceptron}. Inputs are processed with a weighted linear combination followed by a non-linear activation to produce the output. }}
  \label{chap5_figure_perceptron}
\end{figure}

The perceptron learns from input data by iteratively adjusting its trainable parameters $\bm{w}$. In particular, let $\mathcal{D}=\{(\bm{x}^{(a)}, y^{(a)})\}_{a=1}^n$ be the training dataset, where $\bm{x}^{(a)}$ represents the input features of the $a$-th example, and $y^{(a)} \in \{-1,1\}$ denotes the corresponding label. When the perceptron outputs a  prediction $\hat{y}^{(s)}$, the parameters are updated accordingly, i.e.,
\begin{align}
\bm{w} \leftarrow{}& \bm{w} + (y^{(s)} - \hat{y}^{(s)}) \bm{x}^{(s)}. \label{eq_ch5_perceptron_w_updation} 
\end{align}
The training process is repeated iteratively until the error reaches a predefined threshold. 

Perceptrons can perfectly classify linearly separable data with a finite number of mistakes, as stated in Theorem~\ref{theo_ch5_converge_perceptron}.

\begin{theorem}[Convergence of perceptrons~\citep{novikoff1962convergence}]\label{theo_ch5_converge_perceptron}
Suppose the training data consists of unit vectors separated by a margin of $\gamma$ with labels $y^{(i)} \in \{-1,1\}$. Then there exists a perceptron training algorithm that achieves zero error with at most $\mathcal{O}(\frac{1}{\gamma^2})$ mistakes. 
\end{theorem}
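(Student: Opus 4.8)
The plan is to prove the classical perceptron convergence bound (Theorem~\ref{theo_ch5_converge_perceptron}) by tracking two quantities across the sequence of updates: the inner product $\bm{w}_t^\top \bm{w}^*$ with an (unknown) separating unit vector $\bm{w}^*$, and the squared norm $\|\bm{w}_t\|^2$. Here $\bm{w}_t$ denotes the weight vector after $t$ mistake-driven updates, starting from $\bm{w}_0 = \bm{0}$. The key structural fact is that the update rule only fires on a misclassified example, and by the margin hypothesis there exists a unit vector $\bm{w}^*$ with $y^{(i)}(\bm{w}^{*\top}\bm{x}^{(i)}) \ge \gamma$ for every training point; the inputs $\bm{x}^{(i)}$ are unit vectors. (Strictly, the update rule in Eqn.~\eqref{eq_ch5_perceptron_w_updation} introduces a factor $y^{(s)}-\hat y^{(s)} \in \{-2,2\}$ on a mistake; this constant just rescales and is absorbed into the bookkeeping, so I will treat the update as $\bm{w}_{t+1} = \bm{w}_t + y^{(i)}\bm{x}^{(i)}$ up to that constant.)

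First I would establish the lower bound on progress toward $\bm{w}^*$. On the $t$-th mistake at example $(\bm{x}^{(i)}, y^{(i)})$,
\begin{equation}
\bm{w}_{t+1}^\top \bm{w}^* = \bm{w}_t^\top \bm{w}^* + y^{(i)}(\bm{x}^{(i)\top}\bm{w}^*) \ge \bm{w}_t^\top \bm{w}^* + \gamma,
\end{equation}
so by induction $\bm{w}_T^\top \bm{w}^* \ge T\gamma$ after $T$ mistakes. Second I would establish the upper bound on the norm growth. Since a mistake means $y^{(i)}(\bm{w}_t^\top \bm{x}^{(i)}) \le 0$,
\begin{equation}
\|\bm{w}_{t+1}\|^2 = \|\bm{w}_t\|^2 + 2 y^{(i)}(\bm{w}_t^\top \bm{x}^{(i)}) + \|\bm{x}^{(i)}\|^2 \le \|\bm{w}_t\|^2 + 1,
\end{equation}
using $\|\bm{x}^{(i)}\|=1$; hence $\|\bm{w}_T\|^2 \le T$. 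Third I would combine these with Cauchy--Schwarz: $T\gamma \le \bm{w}_T^\top \bm{w}^* \le \|\bm{w}_T\|\,\|\bm{w}^*\| = \|\bm{w}_T\| \le \sqrt{T}$, which rearranges to $T \le 1/\gamma^2$, giving the claimed $\mathcal{O}(1/\gamma^2)$ bound on the total number of mistakes. Since the algorithm only updates on mistakes, once no more mistakes occur the perceptron classifies all training data correctly, i.e., achieves zero training error.

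The main obstacle is not any single calculation — each of the three steps is elementary — but rather being careful about the precise hypotheses and the update constant. Specifically, one must make explicit the assumption that a margin-$\gamma$ separator $\bm{w}^*$ exists as a \emph{unit} vector (the theorem statement says the data are unit vectors "separated by a margin of $\gamma$," which I am interpreting as exactly this), handle the factor of $2$ (or more generally $y^{(s)}-\hat y^{(s)}$) coming from Eqn.~\eqref{eq_ch5_perceptron_w_updation} without changing the order of the bound, and note that the bound counts \emph{mistakes} (updates), which may be revisited across epochs, rather than distinct data points or iterations. A secondary subtlety worth a remark is that the bound is independent of the dimension $d$ and of the number of samples $n$, which is the real content of the theorem.
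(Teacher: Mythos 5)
Your proposal is correct and follows essentially the same argument as the paper's proof: both track the growth of $\bm{w}_t^\top \bm{w}^*$ (at least $\gamma$ per mistake, up to the update constant) and the growth of $\|\bm{w}_t\|^2$ (at most a constant per mistake, using the misclassification condition and unit-norm inputs), then combine the two bounds to conclude $T \le 1/\gamma^2$. The only cosmetic difference is that the paper carries the factor $y^{(s)}-\hat{y}^{(s)} = \pm 2$ explicitly through both bounds, whereas you absorb it up front, which does not affect the result.
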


\begin{proof}[Proof of Theorem~\ref{theo_ch5_converge_perceptron}]

Consider the initial parameter of the perceptron $\bm{w}=\bm{0}$.
Since the training dataset is linearly separable by a margin of $\gamma$, there exists a unit vector $\bm{w}^*$ such that $y^{(i)} {\bm{w}^*}^{\top} \bm{x}^{(i)} \geq{} \gamma$ for all samples $i \in [n]$. Let $\bm{x}^{(s,t)}$ be the sample that is misclassified in the $t$-th step, which is then used for adjusting the parameter. Let $\bm{w}(t)$ be the parameter after the $t$-th step. Using Eqn.~(\ref{eq_ch5_perceptron_w_updation}), it can be shown that 
\begin{align}
{}& 
{\bm{w}^*}^{\top}   \bm{w}(t) - {\bm{w}^*}^{\top}  \bm{w}(t-1) \notag \\
={}& \left( y^{(s,t)}-\hat{y}^{(s,t)} \right) 
{\bm{w}^*}^{\top}  \bm{x}^{(s,t)} \notag \\
={}& 2 y^{(s,t)} {\bm{w}^*}^{\top}   \bm{x}^{(s,t)} \geq{} 2\gamma , \label{theo_ch5_converge_perceptron_1_3}
\end{align}
where Eqn.~(\ref{theo_ch5_converge_perceptron_1_3}) is derived by noticing the sample $(\bm{x}^{(s,t)}, y^{(s,t)})$ is misclassified with $\hat{y}^{(s,t)} \neq y^{(s,t)}$ and 
${y}^{(s,t)}, \hat{y}^{(s,t)} \in \{-1,1\}$. By considering the initialization $\bm{w}(0)=\bm{0}$, the norm of the parameter after the $t$-th step can be bounded by
\begin{align}
\left\| \bm{w}(t) \right\| \geq{}& \left| {\bm{w}^{*}}^{\top} \bm{w}(t) \right| \label{theo_ch5_converge_perceptron_2_1} \\
={}& \left| {\bm{w}^{*}}^{\top} \sum_{t'=1}^{t} \left( \bm{w}(t') - \bm{w}(t'-1) \right) \right| \label{theo_ch5_converge_perceptron_2_2} \\
\geq{}& 2\gamma t , \label{eq_ch5_perceptron_wb_norm_lowerbound} 
\end{align}
where Eqn.~(\ref{theo_ch5_converge_perceptron_2_1}) is follows from the condition $\|\bm{w}^*\|=1$. Eqn.~(\ref{eq_ch5_perceptron_wb_norm_lowerbound}) is derived by using the result in Eqn.~(\ref{theo_ch5_converge_perceptron_1_3}).
On the other hand, 
\begin{align}
{}& \left\| \bm{w}(t) \right\|^2  - \left\| \bm{w}(t-1) \right\|^2 \notag \\
={}& \left\| \bm{w}(t-1) + \left( y^{(s,t)}-\hat{y}^{(s,t)} \right) \bm{x}^{(s,t)} \right\|^2 - \left\| \bm{w}(t-1) \right\|^2 \label{theo_ch5_converge_perceptron_3_2} \\
={}& \left\| \bm{w}(t-1) + 2 y^{(s,t)} \bm{x}^{(s,t)} \right\|^2 - \left\| \bm{w}(t-1) \right\|^2 \label{theo_ch5_converge_perceptron_3_3} \\
={}& 4 \|\bm{x}^{(s,t)} \|^2 + 4 \bm{w}(t-1)^{\top}  y^{(s,t)} \bm{x}^{(s,t)} \notag \\
\leq{}& 4 + 4 \bm{w}(t-1)^{\top} y^{(s,t)} \bm{x}^{(s,t)} \label{theo_ch5_converge_perceptron_3_5} \\
\leq{}& 4 , \label{theo_ch5_converge_perceptron_3_6}
\end{align}
where Eqn.~(\ref{theo_ch5_converge_perceptron_3_2}) follows from the weight update rule in Eqn.~(\ref{eq_ch5_perceptron_w_updation}). Eqn.~(\ref{theo_ch5_converge_perceptron_3_3}) is derived by noticing that $y^{(s,t)} \neq \hat{y}^{(s,t)}$ and $y^{(s,t)}, \hat{y}^{(s,t)} \in \{-1,1\}$. Eqn.~(\ref{theo_ch5_converge_perceptron_3_5}) follows from the condition $\|\bm{x}^{(i)}\|=1$ for all samples. Eqn.~(\ref{theo_ch5_converge_perceptron_3_6}) is derived by noticing that the sample $(\bm{x}^{(s,t)}, y^{(s,t)})$ is misclassified by the perceptron with the parameter $\bm{w}(t-1)$, \ie 
$$y^{(s,t)} \bm{w}(t-1)^{\top} \bm{x}^{(s,t)} \leq 0. $$
Thus, after $t$ steps, the parameter is bounded by
\begin{align}
\|\bm{w}(t)\| \leq 2 \sqrt{t} . \label{eq_ch5_perceptron_wb_norm_upperbound}
\end{align}
Combining Eqn.~(\ref{eq_ch5_perceptron_wb_norm_lowerbound}) and Eqn.~(\ref{eq_ch5_perceptron_wb_norm_upperbound}), it can be shown that 
\begin{align}
t \leq{}& \frac{1}{\gamma^2} .
\end{align}

\end{proof}

Since the parameters are used in an inner product operation, as shown in Eqn.~\eqref{eq_ch5_perceptron_z}, the single-layer perceptron can be considered as a basic kernel method employing the identity feature mapping. Consequently, the single-layer perceptron can only classify linearly separable data and is inadequate for handling more complex tasks, such as the XOR problem~\citep{rosenblatt1958perceptron}. This limitation has driven the development of advanced neural networks, such as multilayer perceptrons (MLPs)~\citep{gardner1998artificial}, which can capture non-linear relationships by incorporating non-linear activation functions and multi-layer structures.

\subsection{Multilayer perceptron}
\label{chapt5:sec:classical_mlp}

\begin{figure}[t]
  \centering 
  \includegraphics[width=0.76\textwidth]{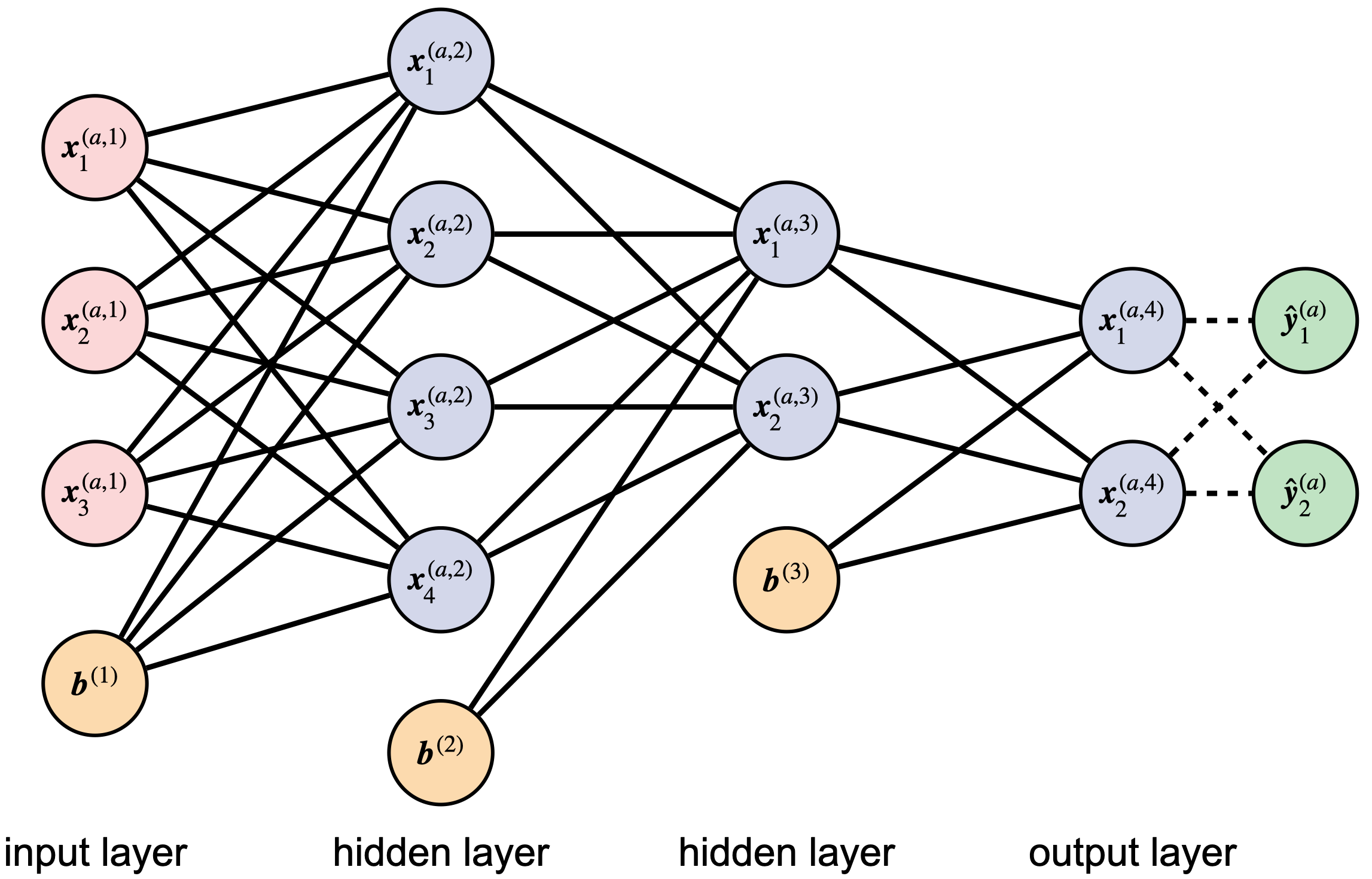}
  \caption{{\textbf{Illustration of a multilayer perceptron with two hidden layers}. Dashed lines denote softmax operations. }}
  \label{chap5_figure_mlp}
\end{figure}

The multilayer perceptron (MLP) is a fully connected neural network architecture consisting of three components: the \textit{input layer, hidden layers, and output layer,} as illustrated in Figure~\ref{chap5_figure_mlp}. Similar to the single-layer perceptron introduced in Chapter~\ref{chapt5:sec:classical_perceptron}, the neurons in the MLP are connected through weighted sums, followed by non-linear activation functions. 

The mathematical expression of MLP is as follows. Let $\bm{x}^{(a,1)}$ be the $a$-th input data and $\ell=1$ denote the input layer. Define $L$ as the number of total layers. 
The forward propagation at the $(\ell+1)$-th layer $\forall \ell \in \{1, 2, ..., L-2\}$ yields
\begin{align*}
\bm{z}^{(a, \ell+1)}  ={}& {W}^{(\ell)} \bm{x}^{(a, \ell)} + \bm{b}^{(\ell)}, \\
\bm{x}^{(a, \ell+1)} ={}& \sigma(\bm{z}^{(a, \ell+1)}),
\end{align*}
where $\sigma$ represents the non-linear activation function, and $W^{(\ell)}$ and $\bm{b}^{(\ell)}$ denotes trainable weight and the bias term, respectively. Similar to the notation $z$ in the perceptron in Chapter~\ref{chapt5:sec:classical_perceptron}, $\bm{z}^{(a,\ell+1)}$ denotes the output of the linear sum in the $\ell+1$-th layer, which is expressed in a more generalized vector form. Therefore, the parameter for the weighted linear sum is represented in matrix form as $W^{(\ell)}$. Various methodologies have been proposed for implementing non-linear activations, with some common approaches summarized in Table~\ref{chapter5-tab:nonlinear-activation}. 

After passing through $L-2$ hidden layers, the output of MLP given by the equation below serves as the prediction to approximate the target label $\bm{y}^{(a)}$, i.e.,  
\begin{equation*}
\hat{\bm{y}}^{(a)}={\rm softmax}(\bm{x}^{(a,L)}) := \frac{\left( \exp(\bm{x}^{(a,L)}_1) , \cdots, \exp(\bm{x}^{(a,L)}_{p})  \right)^{\top}}{\sum_{i=1}^{p} \exp(\bm{x}^{(a,L)}_i) },
\end{equation*} 
with $p$ here denotes the dimension of $\bm{x}^{(a,L)}$.

\begin{table}[]
\caption{Common non-linear activation functions.}
\label{chapter5-tab:nonlinear-activation}
\centering
\footnotesize
\begin{tabular}{c|c}
\toprule 
Name & Formulation \\ 
\midrule
sigmoid function & $\sigma(x)=1/(1+\exp(-x))$   \\ 
hyperbolic tangent function & $\sigma(x)=\tanh(x)$ \\
rectified linear unit (ReLU) function & $\sigma(x)=\max(0,x)$ \\
\bottomrule 
\end{tabular}
\end{table}

Next, we provide a toy example of binary classification to explain the MLP learning procedure. Let $\{(\bm{x}^{(a)}, \bm{y}^{(a)})\}_{a \in \mathcal{D}}$ be the training dataset $\mathcal{D}$, where $\bm{x}^{(a)}$ is the feature vector and $\bm{y}^{(a)} \in \{ (1,0)^{\top} , (0,1)^{\top} \}$ is the label for two categories. Consider the MLP with one hidden layer. The prediction can be expressed as follows: 
\begin{align*}
\hat{\bm{y}}^{(a)} ={}& {\rm softmax}(\bm{x}^{(a,3)})= {\rm softmax} \circ \sigma \left(\bm{z}^{(a,3)} \right) \\
={}& {\rm softmax} \circ \sigma \left({W}^{(2)} \bm{x}^{(a,2)} + \bm{b}^{(2)} \right) \\
={}& {\rm softmax} \circ \sigma \left({W}^{(2)} \sigma \left(\bm{z}^{(a,2)} \right) + \bm{b}^{(2)} \right) \\
={}& {\rm softmax} \circ \sigma \left({W}^{(2)} \sigma \left( W^{(1)} \bm{x}^{(a,1)} + \bm{b}^{(1)} \right) + \bm{b}^{(2)} \right) ,
\end{align*}
where $\circ$ denotes the function composition. We use $\sigma(x)=1/(1+\exp(-x))$ as the non-linear activation function.

\begin{figure}[t]
  \centering 
  \includegraphics[width=0.5\textwidth]{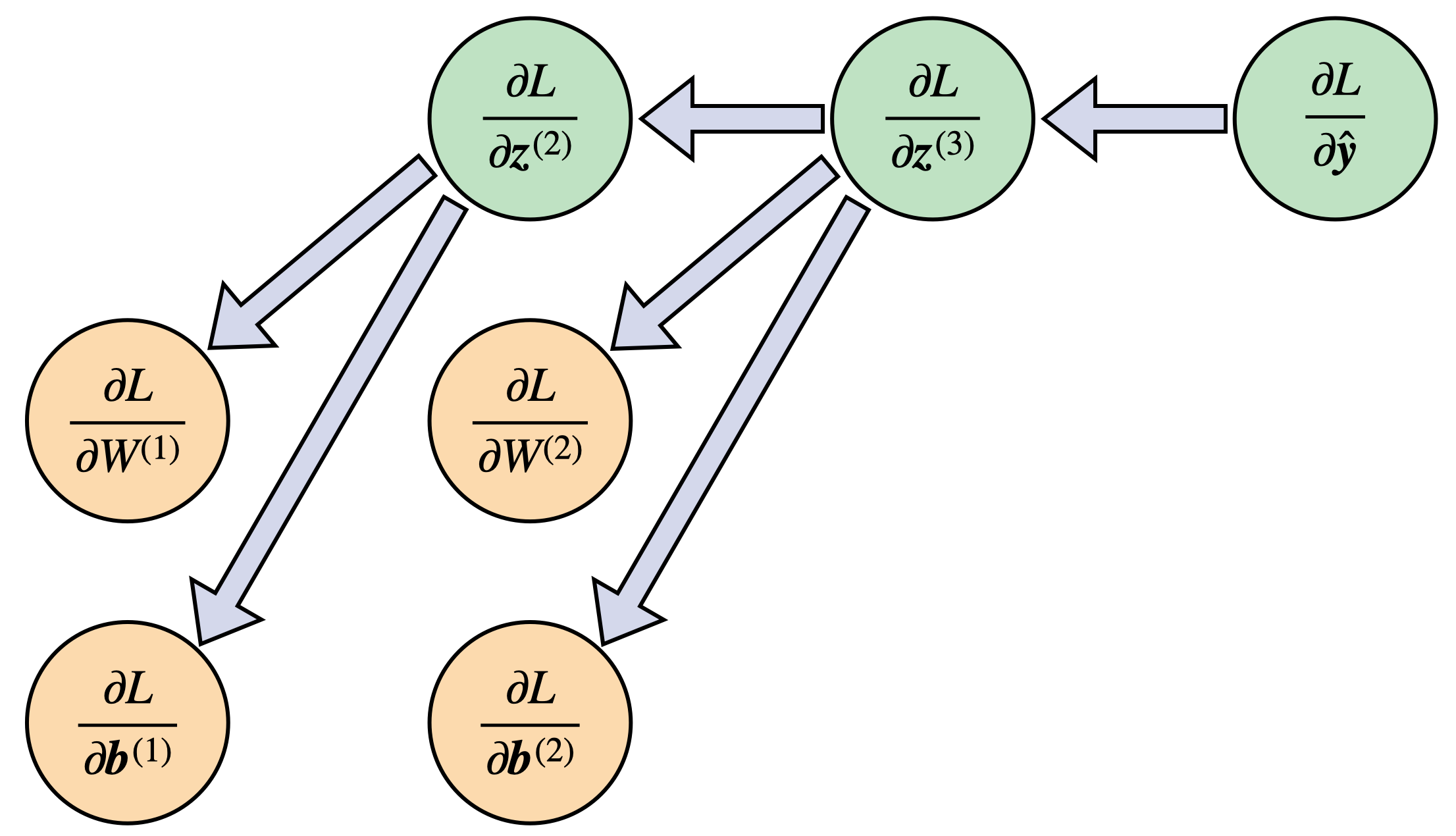}
  \caption{{\textbf{Illustration of backpropagation when calculating the gradient of an MLP with one hidden layer}. The index of sample $a$ is omitted for simplicity. }}
  \label{chap5_figure_back_propagation}
\end{figure}

MLP learns from the given dataset by minimizing the loss function with respect to the parameters $\bm{\theta}=({W}^{(1)}, {W}^{(2)}, \bm{b}^{(1)}, \bm{b}^{(2)})$, which is defined as the $\ell_2$ norm distance between the prediction and the label,
\begin{align}
\L(\bm{\theta} ) ={}& \frac{1}{|\mathcal{D}|} \sum_{a \in \mathcal{D}} \L^{(a)}(\bm{\theta} ) = \frac{1}{2|\mathcal{D}|} \sum_{a \in \mathcal{D}} \left\| \hat{\bm{y}}^{(a)}(\bm{\theta}) - \bm{y}^{(a)} \right\|^2 .
\end{align}
We use gradient descent with learning rate $\eta$ to optimize the parameters: 
\begin{equation*}
\bm{\theta}(t+1)=\bm{\theta}(t)-\eta \nabla_{\bm{\theta}} \L(\bm{\theta}(t)).
\end{equation*}
As illustrated in Figure~\ref{chap5_figure_back_propagation}, the gradient is computed using backpropagation~\citep{lecun1988theoretical} as follows. First, the gradient with respect to the output layer is given by
\begin{align*}
\frac{\partial \L^{(a)}}{\partial \hat{\bm{y}}^{(a)}} ={}&  \hat{\bm{y}}^{(a)} - \bm{y}^{(a)} , \\
\frac{\partial \L^{(a)}}{\partial {\bm{x}}^{(a,3)}} ={}& \frac{\partial \hat{\bm{y}}^{(a)}}{\partial {\bm{x}}^{(a,3)}}\frac{\partial \L^{(a)}}{\partial \hat{\bm{y}}^{(a)}}  ={} \left[ {\rm diag} \left( \hat{\bm{y}}^{(a)} \right) - \hat{\bm{y}}^{(a)} {\hat{\bm{y}}}^{(a)\top} \right] \frac{\partial \L^{(a)}}{\partial \hat{\bm{y}}^{(a)}} , \\
\frac{\partial \L^{(a)}}{\partial {\bm{z}}^{(a,3)}} ={}& \frac{\partial {\bm{x}}^{(a,3)}}{\partial {\bm{z}}^{(a,3)}} \frac{\partial \L^{(a)}}{\partial {\bm{x}}^{(a,3)}}  ={} {\rm diag} \left[ \left( \bm{1} - \bm{z}^{(a,3)} \right) \odot \bm{z}^{(a,3)} \right] \frac{\partial \L^{(a)}}{\partial {\bm{x}}^{(a,3)}} ,
\end{align*}
where $\bm{1}$ denotes the vector $(1,1,\cdots,1)^{\top}$, and $\odot$ denotes the element-wise multiplication (Hadamard product). For convenience, we omit the dimension of $\bm{1}$ here, which has the same dimension with $\bm{z}^{(a,3)}$. Next, the gradient with respect to the hidden layer can be obtained using the chain rule:
\begin{align*}
\frac{\partial \L^{(a)}}{\partial {\bm{x}}^{(a,2)}} ={}&  \frac{\partial {\bm{z}}^{(a,3)}}{\partial {\bm{x}}^{(a,2)}}\frac{\partial \L^{(a)}}{\partial {\bm{z}}^{(a,3)}} ={} W^{(2)} \frac{\partial \L^{(a)}}{\partial {\bm{z}}^{(a,3)}} , \\
\frac{\partial \L^{(a)}}{\partial {W}^{(2)}} ={}& \frac{\partial {\bm{z}}^{(a,3)}}{\partial {W}^{(2)}} \frac{\partial \L^{(a)}}{\partial {\bm{z}}^{(a,3)}}  ={} \frac{\partial \L^{(a)}}{\partial {\bm{z}}^{(a,3)}} {\bm{x}^{(a,2)\top}} , \\
\frac{\partial \L^{(a)}}{\partial \bm{b}^{(2)}} ={}& \frac{\partial {\bm{z}}^{(a,3)}}{\partial \bm{b}^{(2)}} \frac{\partial \L^{(a)}}{\partial {\bm{z}}^{(a,3)}}  ={} \frac{\partial \L^{(a)}}{\partial {\bm{z}}^{(a,3)}} , \\
\frac{\partial \L^{(a)}}{\partial \bm{z}^{(a,2)}} ={}&  \frac{\partial {\bm{x}}^{(a,2)}}{\partial \bm{z}^{(a,2)}} \frac{\partial \L^{(a)}}{\partial {\bm{x}}^{(a,2)}} ={} {\rm diag} \left[ \left( \bm{1} - \bm{z}^{(a,2)} \right) \odot \bm{z}^{(a,2)} \right] \frac{\partial \L^{(a)}}{\partial {\bm{x}}^{(a,2)}}  .
\end{align*}
The gradient with respect to the parameters for the input layer is derived similarly:
\begin{align*}
\frac{\partial \L^{(a)}}{\partial {W}^{(1)}} ={}& \frac{\partial {\bm{z}}^{(a,2)}}{\partial {W}^{(1)}} \frac{\partial \L^{(a)}}{\partial {\bm{z}}^{(a,2)}}  ={} \frac{\partial \L^{(a)}}{\partial {\bm{z}}^{(a,2)}} {\bm{x}^{(a,1)\top}} , \\
\frac{\partial \L^{(a)}}{\partial \bm{b}^{(1)}} ={}& \frac{\partial {\bm{z}}^{(a,2)}}{\partial \bm{b}^{(1)}} \frac{\partial \L^{(a)}}{\partial {\bm{z}}^{(a,2)}}  ={} \frac{\partial \L^{(a)}}{\partial {\bm{z}}^{(a,2)}} .
\end{align*}
After multiple training epochs, the loss function converges to a value below a predefined threshold, which leads to a small classification error.

Compared to single-layer perceptrons, MLP can model non-linear relationships by employing hidden layers and activation functions. This enables them to learn abstract representations by capturing the complex patterns inherent in the data. Mathematically, the power of MLPs is guaranteed by the universal approximation theorem, as stated in Theorem~\ref{ch5_fact_mlp_universal}, which asserts that a single hidden layer is sufficient to approximate any arbitrary continuous function.

\begin{fact}[Universal Approximation Theorem, informal version adapted from \citet{hornik1989multilayer}]\label{ch5_fact_mlp_universal}
Let $\mathcal{C}(\mathcal{X},\mathbb{R}^m)$ denote the set of continuous functions from a subset $\mathcal{X}$ of a Euclidean space $\mathbb{R}^n$ to a Euclidean space $\mathbb{R}^m$. Denote by $\sigma$ a function that is not polynomial. Then for every $n,m \in \mathbb{N}$, compact set $\mathcal{K} \subseteq \mathbb{R}^n$, $f \in \mathcal{C}(\mathcal{K},\mathbb{R}^m)$, and $\epsilon > 0$, there exist $k \in \mathbb{N}$, $A \in \mathbb{R}^{k \times n}$, $\bm{b} \in \mathbb{R}^k$, and $C \in \mathbb{R}^{m \times k}$ such that
\begin{equation*}
\mathop{\sup}_{\bm{x} \in \mathcal{X}} \left\| f(\bm{x}) - g(\bm{x}) \right\| < \epsilon ,
\end{equation*}
where $g(\bm{x})=C  \sigma(A \bm{x}+\bm{b})$.
\end{fact}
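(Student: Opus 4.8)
The plan is to follow the classical strategy of Leshno, Lin, Pinkus and Schocken (as presented in Pinkus's survey), which reduces Fact~\ref{ch5_fact_mlp_universal} to a one-dimensional density statement about ridge functions. Write $\mathcal{M}_n(\sigma) = \mathrm{span}\{\, \bm{x} \mapsto \sigma(\bm{w}^{\top} \bm{x} + b) : \bm{w} \in \mathbb{R}^n,\ b \in \mathbb{R}\,\}$; the conclusion is precisely that $\mathcal{M}_n(\sigma)$ is dense in $\mathcal{C}(\mathcal{K},\mathbb{R})$ for the uniform norm, since the $m$-dimensional output case then follows by approximating each coordinate function of $f$ separately and stacking the resulting coefficients into the matrix $C$ and vector $\bm{b}$. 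So the first move is to fix $m=1$ and reformulate everything as a density question for $\mathcal{M}_n(\sigma)$.

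Next I would reduce from general $n$ to $n=1$. The key classical input is that every polynomial on $\mathbb{R}^n$ is a finite linear combination of ridge monomials $\bm{x} \mapsto (\bm{a}^{\top}\bm{x})^k$ (homogeneous degree-$k$ polynomials are spanned by $k$-th powers of linear forms, by polarization or a Vandermonde argument). Combined with Stone--Weierstrass, which makes polynomials dense in $\mathcal{C}(\mathcal{K},\mathbb{R})$, it suffices to approximate, uniformly on the compact interval $\{\bm{a}^{\top}\bm{x} : \bm{x} \in \mathcal{K}\}$, each univariate polynomial $t \mapsto p(t)$ by an element of $\mathcal{M}_1(\sigma)$: substituting $t = \bm{a}^{\top}\bm{x}$ turns such an approximant into an element of $\mathcal{M}_n(\sigma)$ approximating the ridge polynomial uniformly on $\mathcal{K}$. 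Thus the whole theorem collapses to the statement that, for $\sigma$ continuous and not a polynomial, $\mathcal{M}_1(\sigma)$ is dense in $\mathcal{C}([\alpha,\beta])$ for every interval.

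For the one-dimensional statement I would argue in two stages. If $\sigma \in C^{\infty}$, then the difference quotients $\frac{1}{h}\bigl(\sigma((w+h)t+b) - \sigma(wt+b)\bigr)$ converge uniformly on $[\alpha,\beta]$ to $t\,\sigma'(wt+b)$, and iterating shows $t^k \sigma^{(k)}(b) \in \overline{\mathcal{M}_1(\sigma)}$ for all $k$ (take $w=0$). Since $\sigma$ is not a polynomial, for each $k$ there is $b_k$ with $\sigma^{(k)}(b_k) \neq 0$, so every monomial, hence every polynomial, lies in $\overline{\mathcal{M}_1(\sigma)}$, and Weierstrass gives density. To drop smoothness I would mollify: for a smooth bump $\varphi_\varepsilon$, the function $\sigma_\varepsilon = \sigma * \varphi_\varepsilon$ is $C^{\infty}$, and $\sigma_\varepsilon(wt+b)$ is a uniform limit on $[\alpha,\beta]$ of Riemann sums of translates $\sigma(wt+b')$, hence lies in $\overline{\mathcal{M}_1(\sigma)}$; one checks $\sigma_\varepsilon$ is not a polynomial for small $\varepsilon$ (otherwise a fixed finite-difference operator annihilating bounded-degree polynomials would, in the limit, annihilate $\sigma$), so the smooth case applied to $\sigma_\varepsilon$ places all polynomials in $\overline{\mathcal{M}_1(\sigma)}$.

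The main obstacle I expect is the multivariate-to-univariate reduction: one must set up the ridge-monomial spanning fact cleanly and, more delicately, keep track that every approximation is uniform on the single fixed compact set $\mathcal{K}$ while the intermediate univariate approximations live on varying intervals $\{\bm{a}^{\top}\bm{x}:\bm{x}\in\mathcal{K}\}$. A secondary technical nuisance is verifying that mollification never accidentally yields a polynomial. Since the statement is quoted here as a known fact, I would present the above as a sketch and defer the routine estimates to Pinkus's survey and \citet{hornik1989multilayer}.
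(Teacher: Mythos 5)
The paper does not actually prove this statement: it is quoted as a \textbf{Fact}, with the proof deferred to the cited reference \citet{hornik1989multilayer} (and, for the non-polynomial-activation form stated here, really to Leshno, Lin, Pinkus and Schocken and Pinkus's survey). So there is no in-paper proof to compare against; what matters is whether your sketch would stand on its own. It essentially does: the reduction from $\mathbb{R}^m$-valued to scalar outputs, the reduction from $n$ variables to one via ridge monomials $(\bm{a}^{\top}\bm{x})^k$ plus Stone--Weierstrass, the difference-quotient argument giving $t^k\sigma^{(k)}(b)\in\overline{\mathcal{M}_1(\sigma)}$ for smooth $\sigma$, and the mollification step for general $\sigma$ are exactly the standard route, and you correctly identify where the uniformity bookkeeping lives (all approximations pulled back to the fixed compact $\mathcal{K}$ through the intervals $\{\bm{a}^{\top}\bm{x}:\bm{x}\in\mathcal{K}\}$).

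Two caveats. First, the parenthetical justification that $\sigma_\varepsilon=\sigma*\varphi_\varepsilon$ is not a polynomial is too quick as written: a \emph{fixed} finite-difference operator only annihilates polynomials of bounded degree, so your limiting argument fails if the putative polynomials $\sigma_{\varepsilon_j}$ have degrees tending to infinity. The standard repair is the Baire-category argument over mollifiers (as in Pinkus's survey): if $\sigma*\varphi$ were a polynomial for \emph{every} $\varphi\in C_c^\infty$, decompose $C_c^\infty$ into the closed sets on which the degree is at most $k$, extract a uniform degree bound, and conclude $\sigma$ itself is a polynomial, contradicting the hypothesis; this yields at least one mollification that is smooth and non-polynomial, which is all your argument needs. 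Second, the statement as quoted imposes no regularity on $\sigma$, but your proof (uniform convergence of Riemann sums of translates, and the limit step above) uses continuity of $\sigma$; this is the intended reading of the informal fact, but it is worth saying explicitly, since for genuinely irregular non-polynomial $\sigma$ the density claim requires extra hypotheses. With those two points fixed or flagged, your sketch is a faithful account of the known proof and is appropriate for a result the paper itself only cites.
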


\begin{tcolorbox}[enhanced, breakable,colback=gray!5!white,colframe=gray!75!black,title=Remark]
MLPs involve a large number of parameters due to their fully connected multilayer architecture. This high parameter count enables MLPs to possess considerable representational power, allowing them to model complex data distributions. However, the excessive capacity to fit the training data often leads to overfitting~\citep{caruana2000overfitting}, where the MLP captures noise and irrelevant patterns instead of generalizable features. As a result, MLPs tend to perform poorly on unseen data, especially when the training set is limited or noisy. To mitigate this issue, advanced techniques such as dropout~\citep{srivastava2014dropout}, weight decay~\citep{krogh1991simple}, and attention mechanisms~\citep{vaswani2017attention} have been proposed to reduce overfitting in MLPs while maintaining sufficient expressivity. 
\end{tcolorbox}

\section{Fault-tolerant Quantum Perceptron}
\label{chapt5:sec:fault_quantum_perceptron}

The primary aim of advancing quantum machine learning is to harness the computational advantages of quantum mechanics to enhance performance across various learning tasks. As outlined in Chapter~\ref{chapt1:subsec:diff_quantum_adv}, these advantages manifest in several ways, including reduced runtime, lower query complexity, and improved sample efficiency compared to classical models. A notable example of this is the quantum perceptron model~\citep{kapoor2016quantum}. As a \textsf{FTQC}-based QML algorithm grounded in the Grover search, quantum perceptron offers a quadratic improvement in the query complexity during the training over its classical counterpart. For comprehensiveness, we first introduce the Grover search algorithm, followed by a detailed explanation of the quantum perceptron model.  

\subsection{Grover search}
\label{chapt5:sec:grover_search}

Grover search~\citep{grover1996fast} provides runtime speedups for unstructured search problems, which have broad applications in cryptography, quantum machine learning, and constraint satisfaction problems. Unlike classical search methods that require $\mathcal{O}(d)$ queries for a dataset with $d$ entries, Grover's algorithm can identify the target element with high probability using only $\mathcal{O}(\sqrt{d})$ queries to a quantum oracle. Consequently, quantum algorithms incorporating Grover search have the potential to achieve a quadratic speedup over classical approaches.

In general, a search task can be abstracted as a function $f({x})$ such that $f({x})=1$ if ${x}$ belongs to the solution set of the search problem, and $f({x})=0$ otherwise. We consider a dataset consisting of $d=2^N$ elements, where each element is represented by the quantum state $|{x}\>$ with $x=0,1,\cdots,d-1$. In this process, two key quantum oracles are introduced. The first oracle, $U_0 = 2(|0\>\<0|)^{\otimes N} -\mathbb{I}_{d}$, applies a phase shift of $e^{i\pi}=-1$ to all quantum states except $|0\>^{\otimes N}$, which remains unchanged. The second oracle, $U_f$, operates in a similar manner: it applies a phase shift of $-1$ to quantum states that belong to the solution set while leaving all other states unaffected. The procedure for Grover search is described in Algorithm~\ref{chapt5:sec:grover_search_alg}.

\begin{algorithm}
\caption{Grover search}\label{chapt5:sec:grover_search_alg}
\begin{algorithmic}[1]  
\Require Quantum oracles $U_f$ and $U_0$. The size of the dataset and the solution set, denoted by $d=2^N$ and $M$, respectively. 
\Ensure An index corresponds to one of the solution states with high probability. 
\State Initialize a register of $N$ qubits with the state of uniform superposition:
\begin{equation*}
|\phi_0\> = \frac{1}{\sqrt{d}} \sum_{x=0}^{d-1} |x\> = \bigotimes_{n=1}^{N} \frac{|0\>+|1\>}{\sqrt{2}} = \left( \bigotimes_{n=1}^{N} \Hada \right) |0\>.
\end{equation*}
\State Let $m= \lfloor \frac{\pi}{4} \sqrt{\frac{d}{M}}  -\frac{1}{2} \rfloor$. Apply the following operation:
\begin{equation*}
|\phi_m\> = \left[ {\Hada}^{\otimes N} U_0 {\Hada}^{\otimes N} U_f \right]^{m} |\phi_0\>.
\end{equation*}
\State Measure the state $|\phi_m\>$ to generate an index.
\end{algorithmic}
\end{algorithm}

\begin{theorem}[Time complexity of Grover search]\label{theo_ch5_time_com_grover}
Grover search finds a solution to the unstructured search problem with high probability in time $\mathcal{O}(\sqrt{\frac{d}{M}} (\log d + T_f))$, where $d$ is the size of the dataset, $M$ is the size of the solution set, and $T_f$ denotes the time complexity of implementing the oracle $U_f$.
\end{theorem}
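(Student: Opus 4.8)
The plan is to separate the claim into a correctness part — that the state $|\phi_m\rangle$ produced by Algorithm~\ref{chapt5:sec:grover_search_alg} yields a solution upon measurement with probability bounded below by a constant — and a gate-counting part for the running time. For correctness, the first step is to restrict attention to the two-dimensional real subspace $\mathcal{S} = \mathrm{span}\{|\alpha\rangle, |\beta\rangle\}$, where $|\alpha\rangle$ is the normalized uniform superposition over the $M$ solution indices and $|\beta\rangle$ the normalized uniform superposition over the $d-M$ non-solutions. Writing $|\phi_0\rangle = \sin\theta\,|\alpha\rangle + \cos\theta\,|\beta\rangle$ with $\sin\theta = \sqrt{M/d}$, I would check that $\mathcal{S}$ is invariant under both $U_f$ and $\Hada^{\otimes N} U_0 \Hada^{\otimes N}$: the former acts on $\mathcal{S}$ as the reflection about $|\beta\rangle$ (it flips the sign of the $|\alpha\rangle$ component), while the latter, since $\Hada^{\otimes N}|0\rangle^{\otimes N} = |\phi_0\rangle$ and $U_0 = 2(|0\rangle\langle 0|)^{\otimes N} - \mathbb{I}_d$, equals $2|\phi_0\rangle\langle\phi_0| - \mathbb{I}_d$, i.e.\ the reflection about $|\phi_0\rangle$.

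Second, I would invoke the elementary fact that the composition of two reflections in a plane is a rotation by twice the angle between their axes; here the Grover iterate $G = \Hada^{\otimes N} U_0 \Hada^{\otimes N} U_f$ is a rotation of $\mathcal{S}$ by angle $2\theta$ toward $|\alpha\rangle$, so that $G^m |\phi_0\rangle = \sin\big((2m+1)\theta\big)|\alpha\rangle + \cos\big((2m+1)\theta\big)|\beta\rangle$. With the choice $m = \lfloor \frac{\pi}{4}\sqrt{d/M} - \frac12 \rfloor = \lfloor \frac{\pi}{4\theta} - \frac12 \rfloor$ (using $\theta \ge \sin\theta = \sqrt{M/d}$ and, for a matching lower bound on $\theta$, $\theta \le \frac{\pi}{2}\sqrt{M/d}$ when $M \le d/2$), one obtains $\big|(2m+1)\theta - \tfrac{\pi}{2}\big| \le \theta$, hence the probability of measuring a solution is $\sin^2\big((2m+1)\theta\big) \ge \cos^2\theta = 1 - M/d \ge \tfrac12$. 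I would then note that the residual failure probability can be driven below any $\delta$ by $O(\log(1/\delta))$ independent repetitions, and dispatch the easy corner case $M > d/2$ separately (for instance by a single uniform sample, or by padding the search space so that the rotation picture is non-degenerate).

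Finally, for the running time I would account for the per-iteration cost: the two Hadamard layers use $2N = O(\log d)$ single-qubit gates, $U_0$ is implementable with $O(N)$ elementary gates (a multi-controlled phase), and one invocation of $U_f$ costs $T_f$ by definition, so each application of $G$ costs $O(\log d + T_f)$; multiplying by $m = O(\sqrt{d/M})$ iterations, plus the initial state preparation (another $O(\log d)$) and the final measurement, gives the stated bound $O\big(\sqrt{d/M}\,(\log d + T_f)\big)$.

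I expect the main obstacle to be the bookkeeping around the floor in the definition of $m$ and the attendant angle estimates — in particular making the two-sided bound on $\theta$ clean enough that $(2m+1)\theta$ provably lands within $\theta$ of $\pi/2$ — together with cleanly handling the regime where $M$ is comparable to $d$, where the naive choice of $m$ can collapse to zero and the rotation argument degenerates.
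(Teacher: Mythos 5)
Your proposal is correct, and at its core it is the same analysis as the paper's: both arguments show that after $k$ Grover iterations the amplitude on the solution subspace is $\sin\bigl((2k+1)\theta\bigr)$ with $\sin\theta=\sqrt{M/d}$, so that $m=\mathcal{O}(\sqrt{d/M})$ iterations suffice for constant success probability. The difference is only in how that closed form is obtained: you use the invariant two-dimensional subspace and the ``two reflections compose to a rotation by $2\theta$'' picture, whereas the paper writes $|\phi_k\rangle=\alpha_k|\mathrm{target}\rangle+\beta_k|\mathrm{other}\rangle$, derives the explicit linear recurrence $\alpha_k=(1-2\alpha_0^2)\alpha_{k-1}+2\alpha_0\beta_0\beta_{k-1}$, $\beta_k=(2\beta_0^2-1)\beta_{k-1}-2\alpha_0\beta_0\alpha_{k-1}$, and verifies the sinusoidal solution by induction. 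The geometric route is arguably cleaner and makes the invariance of the plane explicit; the recurrence route avoids appealing to the reflection-composition fact. Beyond this, your write-up supplies details the paper leaves implicit: the per-iteration gate accounting ($2N$ Hadamards, an $\mathcal{O}(N)$-gate implementation of $U_0$, one call to $U_f$ of cost $T_f$) that actually justifies the $\log d+T_f$ factor in the stated runtime, the bookkeeping around the floor in $m$, and the degenerate regime $M$ comparable to $d$ — none of which creates a gap, and all of which strengthen the argument rather than diverge from it.
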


\begin{tcolorbox}[enhanced, 
    breakable,colback=gray!5!white,colframe=gray!75!black,title=Remark]
The Grover search achieves a quadratic speed-up in the query complexity of the oracle $U_f$. It provides a quantum advantage in the runtime only if the time complexity of the oracle $U_f$, denoted as $T_f$, is less than $\sqrt{d}$. 
\end{tcolorbox}

\begin{proof}[Proof of Theorem~\ref{theo_ch5_time_com_grover}]  
Let the superposition of the solution state be 
\begin{align*}
|{\rm target} \> ={}& \frac{1}{\sqrt{M}} \sum_{x:f(x)=1} |x\>.  
\end{align*}
Similarly, the superposition of the states outside the solution set is given by
\begin{align*}
|{\rm other} \> ={}& \frac{1}{\sqrt{2^N-M}} \sum_{x|f(x)=0} |x\>. 
\end{align*}
Thus, the initial uniform superposition state can be expressed as
\begin{align*}
|\phi_0\> ={}& \sqrt{\frac{M}{2^N}} |{\rm target} \> + \sqrt{\frac{2^N-M}{2^N}} |{\rm other} \> := \alpha_0 |{\rm target} \> + \beta_0 |{\rm other} \> .
\end{align*}

In principle, the coefficient associated with the target state is expected to increase during the quantum state evolution, such that the solution could be obtained through quantum measurement with high probability. The dynamics of these coefficients can be described as follows:
\begin{align*}
\alpha_k ={}& \< {\rm target} | \phi_k \> \\
={}& \<{\rm target} | H^{\otimes N} U_0 H^{\otimes N} U_f | \phi_{k-1} \> \\
={}& \<{\rm target} | \left( 2 |\phi_0\> \< \phi_0 | - \mathbb{I} \right) U_f \left( \alpha_{k-1} |{\rm target} \> + \beta_{k-1} |{\rm other} \> \right) \\
={}& \<{\rm target} | \left( 2 |\phi_0\> \< \phi_0 | - \mathbb{I} \right) \left( - \alpha_{k-1} |{\rm target} \> + \beta_{k-1} |{\rm other} \> \right) \\
={}& \left(1-2\alpha_0^2 \right) \alpha_{k-1} + 2 \alpha_0 \beta_0 \beta_{k-1} , \\
\beta_{k} ={}& \< {\rm other} | \phi_k \> \\
={}& \<{\rm other} | \left( 2 |\phi_0\> \< \phi_0 | - \mathbb{I} \right) \left( - \alpha_{k-1} |{\rm target} \> + \beta_{k-1} |{\rm other} \> \right) \\
={}& \left( 2 \beta_0^2 - 1 \right) \beta_{k-1} - 2\alpha_0 \beta_0 \alpha_{k-1} .
\end{align*}
Let the angle $\theta = \arccos \sqrt{\frac{2^N-M}{2^N}}$, then by induction, it can be shown that 
\begin{equation*}
\alpha_k = \sin [(2k+1)\theta] , \quad \beta_k = \cos [(2k+1)\theta] .
\end{equation*}
To ensure that the coefficient $\alpha_m = \mathcal{O}(1)$, there is a condition $(2m+1)\theta \approx \pi/2$. Therefore, $m= \mathcal{O}(1/\theta)=\mathcal{O}(\sqrt{\frac{2^N}{M}})$ suffices to obtain the solution with high probability. 

\end{proof}

\subsection{Online quantum perceptron with quadratic speedups}
\label{chapt5:sec:online_quantum_perceptron_model}

As stated in Theorem~\ref{theo_ch5_converge_perceptron}, for a linearly separable dataset with a margin $\gamma$, a perceptron model can achieve perfect classification after making $\mathcal{O}(1/\gamma^2)$ mistakes during training. In classical approaches, identifying a sample that is misclassified by the current model may require up to $\mathcal{O}(d)$ queries, where $d$ denotes the size of the training dataset. In contrast, the quantum perceptron model~\citep{kapoor2016quantum} can identify misclassified samples more efficiently by employing the Grover search algorithm, achieving a quadratic speed-up in the query complexity. 

To begin, we introduce the setup of the input data.
We consider the classification of a dataset $\{z^{(i)}\}_{i=1}^{d} = \{(\bm{x}^{(i)}, y^{(i)})\}_{i=1}^{d}$, where the label $y^{(i)} \in \{-1,1\}$. For convenience, we assume that the number of samples is a power of $2$, i.e., 
$d=2^N$. Each data vector $\bm{x}^{(i)}$ is assumed to be represented by using $B$ bits. The information of each sample $z^{(i)}$ is stored in the quantum state $|z^{(i)}\>$ by using $B+1$ qubits. 

\begin{shadedbox}
\begin{example}
For the sample $(\bm{x}^{(i)},y^{(i)})=([0,0,1,0],1)$, the corresponding quantum state is $|z^{(i)}\>=|00101\>$, where the last qubit encodes the label (with ``$0$'' for the label ``$-1$''), and the remaining qubits represent the data vector. When $\bm{x}^{(i)}$ is a float vector, a similar bit sequence can be obtained by concatenating the binary representations of the elements in $\bm{x}^{(i)}$. 
\end{example}
\end{shadedbox}

Next, we introduce the oracle models. We assume the existence of a quantum oracle $U$ for encoding training data as the corresponding quantum state, i.e.,
\begin{align}
U |i\>|0\> ={}& |i\>   |z^{(i)}\> \ , \ U^\dag |i\>  |z^{(i)}\> = |i\>|0\> . \label{ch5_eq_quantum_perceptron_oracle_U}
\end{align}
Due to the linearity of unitary,
\begin{align}
U \sum_{i=0}^{d-1} \frac{1}{\sqrt{d}}|i\>|0\> ={}& \sum_{i=0}^{d-1} \frac{1}{\sqrt{d}}|i\> |z^{(i)}\> .
\end{align}
In addition to the input oracle $U$ described in Eqn.~(\ref{ch5_eq_quantum_perceptron_oracle_U}), the quantum perceptron model employs another oracle to distinguish between correctly classified and misclassified quantum states. Specifically, the oracle ${F}_{\bm{w}}'$ satisfies 
\begin{align}
{F}_{\bm{w}}' |z^{(i)}\> ={}& (-1)^{f(\bm{w},z^{(i)})} |z^{(i)}\> ,
\end{align}
where $f:(\bm{w},z^{(i)}) \rightarrow \{0,1\}$. 
The function outputs $1$ if the current perceptron model with weight $\bm{w}$ misclassifies the training sample $z^{(j)}$; otherwise, it outputs $0$. Furthermore, we define 
\begin{align}
F_{\bm{w}} ={}& U^\dag (\mathbb{I} \otimes {F}_{\bm{w}}') U , 
\end{align}
which is used as the oracle $U_f$ in the Grover search. The online quantum perceptron procedure is given in Algorithm~\ref{chapt5:sec:online_quantum_perceptron_alg}. The query complexity of the online quantum perceptron is provided in Theorem~\ref{theo_ch5_online_quantum_perceptron}.

\begin{algorithm}
\caption{Online quantum perceptron}\label{chapt5:sec:online_quantum_perceptron_alg}
\begin{algorithmic}[1] 
\Require Linearly separable dataset $\{z^{(i)}\}_{i=1}^{d} = \{(\bm{x}^{(i)}, y^{(i)})\}_{i=1}^{d}$, where $d=2^N$. Margin threshold $\gamma$. Constants $\epsilon \in (0,1)$ and $c \in (1,2)$.
\Ensure Weight $\bm{w}$ for a perceptron that correctly classifies the dataset with a margin $\gamma$ with probability at least $1-\epsilon$.
\State Initialize the weight $\bm{w}=\bm{0}$.
\For{$h=1,\cdots,\lceil \frac{1}{\gamma^2} \rceil $} \label{chapt5:sec:online_quantum_perceptron_alg_2}
\For{$k=1,\cdots,\lceil \log_{3/4} \gamma^2 \epsilon \rceil $} \label{chapt5:sec:online_quantum_perceptron_alg_3}
\For{$j=1,\cdots,\lceil \log_{c} \frac{1}{\sin (2\sin^{-1} (1/\sqrt{d}))} \rceil $} \label{chapt5:sec:online_quantum_perceptron_alg_4}
\State Draw $m$ uniformly from $\{0,\cdots, \lceil c^j \rceil -1 \}$. \label{chapt5:sec:online_quantum_perceptron_alg_5}
\State Prepare the quantum state
\begin{equation*}
|\phi_0\> = \frac{1}{\sqrt{d}} \sum_{i=0}^{d-1} |i\> .
\end{equation*} \label{chapt5:sec:online_quantum_perceptron_alg_6}
\State Generate the state 
\begin{equation*}
|\phi_1\> = \left\{ \left[ (2|\phi_0\>\<\phi_0|-\mathbb{I}_{d}) \otimes \mathbb{I}_d \right] F_{\bm{w}} \right\}^m |\phi_0\> |0\>^{\otimes N} .
\end{equation*} \label{chapt5:sec:online_quantum_perceptron_alg_7}
\State Measure the first register of the state $|\phi_1\>$ to obtain an outcome $q$. \label{chapt5:sec:online_quantum_perceptron_alg_8} 
\If{$f(\bm{w},z^{(q)})=1$}   
\State Update $\bm{w} \leftarrow \bm{w}+y^{(q)} \bm{x}^{(q)}$. \label{chapt5:sec:online_quantum_perceptron_alg_10}
\EndIf  
\EndFor \label{chapt5:sec:online_quantum_perceptron_alg_12}
\EndFor \label{chapt5:sec:online_quantum_perceptron_alg_13}
\EndFor \label{chapt5:sec:online_quantum_perceptron_alg_14}
\State Output $\bm{w}$. \label{chapt5:sec:online_quantum_perceptron_alg_15}
\end{algorithmic}
\end{algorithm}

\begin{theorem}[Online quantum perceptron~\citep{kapoor2016quantum}]\label{theo_ch5_online_quantum_perceptron}
Consider a training dataset that consists of unit vectors $\{\bm{x}^{(1)}, \cdots, \bm{x}^{(d)}\}$ and labels $\{{y}^{(1)}, \cdots, {y}^{(d)}\}$ with a margin $\gamma$, Denote by $n_{\rm quant}$ the number of queries to $F_{\bm{w}}$ needed to learn the weight $\bm{w}$, such that the training dataset is perfectly classified with probability at least $1-\epsilon$, then
\begin{equation*}
n_{\rm quant} \in \mathcal{O} \left( \frac{\sqrt{d}}{\gamma^2} \log \frac{1}{\gamma^2 \epsilon} \right).
\end{equation*}
For the classical case where the training vectors are uniformly sampled from the training dataset, the number of queries to $f_{\bm{w}}$ is bounded by
\begin{equation*}
\Omega({d}) \ni n_{\rm class} \in \mathcal{O} \left( \frac{{d}}{\gamma^2} \log \frac{1}{\gamma^2 \epsilon} \right).
\end{equation*}
\end{theorem}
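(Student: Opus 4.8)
\textbf{Proof proposal for Theorem~\ref{theo_ch5_online_quantum_perceptron}.}

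The plan is to combine the mistake bound for the classical perceptron (Theorem~\ref{theo_ch5_converge_perceptron}) with the search cost needed to locate a misclassified sample at each update step. Recall that the perceptron makes at most $\lceil 1/\gamma^2 \rceil$ updates before achieving perfect classification; this accounts for the outer loop over $h$ in Algorithm~\ref{chapt5:sec:online_quantum_perceptron_alg}. At each update, the task reduces to finding some index $q$ with $f(\bm{w},z^{(q)})=1$, i.e., searching for a marked element in a database of $d=2^N$ entries. Classically, if misclassified samples are rare (say, only one remains), uniformly sampling indices requires $\Theta(d)$ queries in expectation, which gives the stated $n_{\rm class}\in \Omega(d)\cap \mathcal{O}\left(\frac{d}{\gamma^2}\log\frac{1}{\gamma^2\epsilon}\right)$ — the logarithmic factor coming from boosting the success probability of each search stage so that all $\mathcal{O}(1/\gamma^2)$ stages succeed jointly with probability $\ge 1-\epsilon$.

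For the quantum bound, the key point is that the number $M$ of misclassified samples is \emph{unknown} (and varies between updates), so one cannot directly invoke Theorem~\ref{theo_ch5_time_com_grover} with a fixed $m$. Instead I would use the exponential-search / amplitude-amplification variant: the loop over $j$ in Algorithm~\ref{chapt5:sec:online_quantum_perceptron_alg} tries Grover iteration counts drawn uniformly from $\{0,\dots,\lceil c^j\rceil-1\}$ for geometrically increasing $c^j$, which is the standard Boyer–Brassard–Høyer–Tapp strategy for search with an unknown number of solutions. One shows that once $c^j$ exceeds roughly $\sqrt{d/M}$, a single round finds a marked index with constant probability; summing the geometric series of query counts up to that threshold gives $\mathcal{O}(\sqrt{d/M})\le \mathcal{O}(\sqrt{d})$ queries to $F_{\bm{w}}$ per update attempt. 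Wrapping this in the loop over $k$ (which repeats $\mathcal{O}(\log(1/(\gamma^2\epsilon)))$ times to amplify the per-update success probability) and the loop over $h$ (the $\lceil 1/\gamma^2\rceil$ mistake bound) yields
\begin{equation*}
n_{\rm quant}\in \mathcal{O}\!\left(\frac{\sqrt{d}}{\gamma^2}\log\frac{1}{\gamma^2\epsilon}\right).
\end{equation*}

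The main steps in order: (i) invoke Theorem~\ref{theo_ch5_converge_perceptron} to cap the number of weight updates at $\lceil 1/\gamma^2\rceil$; (ii) observe that each update needs one marked index from the oracle $F_{\bm{w}}=U^\dagger(\mathbb{I}\otimes F'_{\bm{w}})U$, and that two calls to the data oracle $U$ realize one call to $F_{\bm{w}}$; (iii) analyze the inner $j$-loop as exponential Grover search with unknown $M$, bounding its expected query cost by $\mathcal{O}(\sqrt{d})$; (iv) analyze the $k$-loop as probability amplification, showing $\mathcal{O}(\log(1/(\gamma^2\epsilon)))$ repetitions make each update succeed except with probability $\lesssim \gamma^2\epsilon$, hence all updates succeed with probability $\ge 1-\epsilon$ by a union bound over the $\lceil 1/\gamma^2\rceil$ updates; (v) multiply the three factors; (vi) repeat the counting classically with uniform sampling in place of Grover search to get the $n_{\rm class}$ bound, noting the $\Omega(d)$ lower bound follows from the worst case of a single misclassified sample. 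The hard part will be step (iii): carefully handling the case $M=0$ (no misclassified sample — the algorithm should terminate that stage) versus small $M$, and showing the geometric schedule over $j$ reaches the regime $c^j\gtrsim \sqrt{d/M}$ within the prescribed range $j\le \lceil \log_c \frac{1}{\sin(2\sin^{-1}(1/\sqrt{d}))}\rceil$ while keeping the summed query cost $\mathcal{O}(\sqrt{d})$ rather than $\mathcal{O}(\sqrt{d/M})$ with a hidden dependence that could blow up; this is exactly where the analysis of \citet{kapoor2016quantum} does the real work, and I would follow their argument closely.
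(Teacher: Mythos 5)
Your proposal matches the paper's proof in both structure and substance: the $\lceil 1/\gamma^2\rceil$ mistake bound from Theorem~\ref{theo_ch5_converge_perceptron}, the BBHT-style exponential-expansion Grover search over the $j$-loop to handle the unknown number of misclassified samples (the paper shows a constant per-round success probability of at least $1/4$ and bounds $\sum_j c^j = \mathcal{O}(\sqrt{d})$), the $k$-loop amplification to per-update failure probability $\epsilon\gamma^2$ with a union bound over updates, and the classical counterpart via uniform sampling with the worst-case $\Omega(d)$ lower bound. This is essentially the paper's argument (its Lemmas~\ref{lemma_ch5_query_perceptron_classical} and \ref{lemma_ch5_query_perceptron_quantum}), so no further comparison is needed.
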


\begin{proof}[Proof of Theorem~\ref{theo_ch5_online_quantum_perceptron}]

The main idea of the quantum perceptron model in Algorithm~\ref{chapt5:sec:online_quantum_perceptron_alg} is to replace the procedure of finding the misclassified sample in classical perceptrons with the Grover search. Due to convergence result for perceptrons in Theorem~\ref{theo_ch5_converge_perceptron}, $h=1,\cdots,\lceil \frac{1}{\gamma^2} \rceil$ iterations of Steps (\ref{chapt5:sec:online_quantum_perceptron_alg_3}-\ref{chapt5:sec:online_quantum_perceptron_alg_13}) suffice to update the weight $\bm{w}$ towards the case of perfect classification. Therefore, Theorem~\ref{theo_ch5_online_quantum_perceptron} is the direct consequence of the following lemmas and Theorem~\ref{theo_ch5_converge_perceptron}. The query complexity of classical perceptrons has the lower bound $\Omega(d)$, since the model needs to go through the entire dataset in the worst case.

\end{proof}

\begin{lemma}\label{lemma_ch5_query_perceptron_classical}
Given only uniform sampling access to the training dataset, there exists a classical perceptron that either finds a misclassified sample to update the weight $\bm{w}$ or concludes that no such example exists with probability $1-\epsilon \gamma^2$, using $\mathcal{O}(d\log(1/\epsilon\gamma^2))$ queries to $f_{\bm{w}}$.
\end{lemma}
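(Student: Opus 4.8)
The plan is to analyze the obvious strategy of repeatedly drawing samples uniformly at random from the training set and querying $f_{\bm{w}}$ on each, stopping as soon as a misclassified example is found. Concretely, fix the current weight vector $\bm{w}$, let $\mathcal{M} \subseteq \{z^{(1)},\dots,z^{(d)}\}$ be the unknown set of samples misclassified by $\bm{w}$, i.e.\ those with $f_{\bm{w}}(z^{(i)})=1$, and set $T = \lceil d \ln(1/(\epsilon\gamma^2)) \rceil$. The procedure draws indices $i_1,\dots,i_T$ i.i.d.\ uniformly from $[d]$, queries $f_{\bm{w}}(z^{(i_t)})$ for each, and then: if some $f_{\bm{w}}(z^{(i_t)})=1$, it returns that misclassified sample (to be fed into the perceptron update rule of Eqn.~(\ref{eq_ch5_perceptron_w_updation})); otherwise it reports that $\bm{w}$ classifies the entire dataset correctly.

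First I would observe that this procedure has \emph{one-sided} error. If $\mathcal{M}=\emptyset$, every query returns $0$, so the algorithm always correctly concludes that no misclassified example exists — there is no failure in this branch. Hence the only way to err is to have $\mathcal{M}\neq\emptyset$ yet draw no element of $\mathcal{M}$ in all $T$ trials. Since $|\mathcal{M}|\ge 1$, a single draw misses $\mathcal{M}$ with probability $1 - |\mathcal{M}|/d \le 1 - 1/d$, and by independence all $T$ draws miss with probability at most $(1-1/d)^T \le e^{-T/d} \le e^{-\ln(1/(\epsilon\gamma^2))} = \epsilon\gamma^2$. Therefore, with probability at least $1-\epsilon\gamma^2$, the algorithm either produces a genuinely misclassified sample or correctly certifies that none exists. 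For the query count, the procedure makes at most $T = \lceil d\ln(1/(\epsilon\gamma^2))\rceil = \mathcal{O}(d\log(1/(\epsilon\gamma^2)))$ evaluations of $f_{\bm{w}}$; drawing a uniform index and forming $z^{(i)}$ costs no $f_{\bm{w}}$-queries and only $\mathcal{O}(\log d)$ classical overhead, which does not affect the bound.

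The argument is essentially a one-sided Monte Carlo / coupon-collector estimate, so there is no serious technical obstacle; the main points that require care are (i) choosing the constant in $T$ so that the failure probability is genuinely $\le \epsilon\gamma^2$ rather than a constant multiple of it, and (ii) being explicit that the error is one-sided — the worst case is $|\mathcal{M}|=1$, and when $|\mathcal{M}|$ is larger the hitting probability only improves, so $(1-1/d)^T$ is the correct worst-case bound.

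This lemma then feeds directly into the proof of Theorem~\ref{theo_ch5_online_quantum_perceptron}: multiplying the per-update cost $\mathcal{O}(d\log(1/(\epsilon\gamma^2)))$ by the $\mathcal{O}(1/\gamma^2)$ updates guaranteed by Theorem~\ref{theo_ch5_converge_perceptron} — with $\epsilon$ rescaled so that a union bound over all updates preserves overall success probability $1-\epsilon$ — recovers the stated upper bound $n_{\rm class}\in\mathcal{O}(d\gamma^{-2}\log(1/(\gamma^2\epsilon)))$, while the $\Omega(d)$ lower bound follows because any classical algorithm restricted to uniform sampling access may, in the worst case where exactly one sample is misclassified, need to inspect essentially the entire dataset to locate it.
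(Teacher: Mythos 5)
Your proposal is correct and follows essentially the same route as the paper: draw $\mathcal{O}(d\log(1/\epsilon\gamma^2))$ uniform samples, query $f_{\bm{w}}$ on each, and bound the one-sided failure probability by $(1-1/d)^{T}\le e^{-T/d}\le\epsilon\gamma^2$. Your write-up is in fact slightly cleaner than the paper's (it makes the one-sided error and the worst case $|\mathcal{M}|=1$ explicit), but the underlying argument is identical.
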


\begin{lemma}\label{lemma_ch5_query_perceptron_quantum}
The procedure of Steps \ref{chapt5:sec:online_quantum_perceptron_alg_3}-\ref{chapt5:sec:online_quantum_perceptron_alg_13} in Algorithm~\ref{chapt5:sec:online_quantum_perceptron_alg} either finds a misclassified sample to update the weight $\bm{w}$ or concludes that no such example exists with probability $1-\epsilon \gamma^2$, using $\mathcal{O}(\sqrt{d}\log(1/\epsilon\gamma^2))$ queries to $F_{\bm{w}}$.
\end{lemma}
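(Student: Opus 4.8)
The plan is to recognize Steps~\ref{chapt5:sec:online_quantum_perceptron_alg_3}--\ref{chapt5:sec:online_quantum_perceptron_alg_13} as Grover search (Theorem~\ref{theo_ch5_time_com_grover}) applied to an oracle whose number of marked items is not known in advance, and then to run the standard exponential-search analysis for that setting. Write $t$ for the number of indices $i$ with $f(\bm{w},z^{(i)})=1$, i.e.\ the number of currently misclassified samples. Since $F_{\bm{w}}=U^\dag(\mathbb{I}\otimes F_{\bm{w}}')U$ places a $(-1)$ phase on exactly those $t$ basis states of the index register, the composite operation $[(2|\phi_0\rangle\langle\phi_0|-\mathbb{I}_d)\otimes\mathbb{I}_d]F_{\bm{w}}$ in Step~\ref{chapt5:sec:online_quantum_perceptron_alg_7} is precisely a Grover iterate over $d=2^N$ elements with $t$ solutions. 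First I would dispatch the trivial case $t=0$: then $F_{\bm{w}}$ is the identity on the index register, no rotation occurs, Step~\ref{chapt5:sec:online_quantum_perceptron_alg_8} returns a uniformly random $q$ with $f(\bm{w},z^{(q)})=0$, and after the loops end the procedure correctly reports ``no misclassified sample exists'', with certainty and with $\mathcal{O}(\sqrt d)$ queries.

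Next I would analyze a single pass of the $j$-loop (Steps~\ref{chapt5:sec:online_quantum_perceptron_alg_4}--\ref{chapt5:sec:online_quantum_perceptron_alg_12}) assuming $t\ge 1$. Define $\theta_t\in(0,\tfrac{\pi}{2}]$ by $\sin^2\theta_t=t/d$; applying $m$ Grover iterates to $|\phi_0\rangle$ and measuring yields a marked index with probability $\sin^2\bigl((2m+1)\theta_t\bigr)$. Drawing $m$ uniformly from $\{0,\dots,\lceil c^j\rceil-1\}$ as in Step~\ref{chapt5:sec:online_quantum_perceptron_alg_5} and averaging via
\begin{equation*}
\frac{1}{M}\sum_{m=0}^{M-1}\sin^2\bigl((2m+1)\theta_t\bigr)=\frac{1}{2}-\frac{\sin(4M\theta_t)}{4M\sin(2\theta_t)},\qquad M=\lceil c^j\rceil,
\end{equation*}
gives a per-iteration success probability at least $\tfrac12-\tfrac{1}{4M\sin(2\theta_t)}$, which is at least $\tfrac14$ once $M\ge 1/\sin(2\theta_t)$. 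Because $\sin(2\theta_t)=2\sqrt{t(d-t)}/d$ is minimized over $t\in\{1,\dots,d-1\}$ at the endpoints, we have $\sin(2\theta_t)\ge\sin\bigl(2\arcsin(1/\sqrt d)\bigr)$ for all such $t$, so the terminal loop bound $J=\bigl\lceil\log_c\tfrac{1}{\sin(2\arcsin(1/\sqrt d))}\bigr\rceil$ guarantees $\lceil c^J\rceil\ge 1/\sin(2\theta_t)$ (the remaining value $t=d$ being handled at $j=1$, where $m=0$ already returns a marked index with probability $1$). Hence one full pass of the $j$-loop finds a misclassified sample with probability at least $\tfrac14$, and its query cost to $F_{\bm{w}}$ is $\sum_{j=1}^{J}\mathcal{O}(c^j)=\mathcal{O}(c^J)=\mathcal{O}\bigl(1/\sin(2\arcsin(1/\sqrt d))\bigr)=\mathcal{O}(\sqrt d)$, the geometric sum being dominated by its last term.

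Finally I would amplify with the $k$-loop (Step~\ref{chapt5:sec:online_quantum_perceptron_alg_3}): it repeats the pass $K=\lceil\log_{3/4}(\gamma^2\epsilon)\rceil$ times with fresh randomness, so for $t\ge 1$ the probability that all $K$ passes fail is at most $(3/4)^K\le\gamma^2\epsilon$. Combining with the $t=0$ case, Steps~\ref{chapt5:sec:online_quantum_perceptron_alg_3}--\ref{chapt5:sec:online_quantum_perceptron_alg_13} either output a misclassified sample or correctly certify that none exists with probability at least $1-\epsilon\gamma^2$, using $K\cdot\mathcal{O}(\sqrt d)=\mathcal{O}\bigl(\sqrt d\,\log(1/\epsilon\gamma^2)\bigr)$ queries to $F_{\bm{w}}$; feeding this into the outer $h$-loop together with the perceptron mistake bound of Theorem~\ref{theo_ch5_converge_perceptron} then recovers Theorem~\ref{theo_ch5_online_quantum_perceptron}. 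The step I expect to be the main obstacle is the uniform-in-$t$ control in the second paragraph --- establishing the averaged-probability identity and lower bound and verifying that the single loop length $J$ is simultaneously long enough to reach the regime $M\ge 1/\sin(2\theta_t)$ for the smallest nonzero angle ($t=1$) while no intermediate or near-$d$ value of $t$ escapes detection through rotational overshoot. This is precisely the quantum-counting-free exponential search argument underlying Grover-type speedups \citep{grover1996fast,brassard2002quantum,kapoor2016quantum}, here specialized to the perceptron oracle $F_{\bm{w}}$.
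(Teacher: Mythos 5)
Your proposal is correct and follows essentially the same route as the paper's proof: the same averaged-success identity $\frac{1}{M}\sum_{m}\sin^2((2m+1)\theta)=\frac{1}{2}-\frac{\sin(4M\theta)}{4M\sin(2\theta)}$ with the last loop index $j=J$ ensuring $\lceil c^J\rceil\ge 1/\sin(2\theta_t)$ so each pass succeeds with probability at least $1/4$, followed by the $\lceil\log_{3/4}(\epsilon\gamma^2)\rceil$-fold repetition and the geometric query sum $\mathcal{O}(\sqrt{d}\log(1/\epsilon\gamma^2))$. Your explicit treatment of the $t=0$ and $t=d$ edge cases and of the minimization of $\sin(2\theta_t)$ over $t$ only makes explicit what the paper asserts implicitly.
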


\begin{proof}[Proof of Lemma~\ref{lemma_ch5_query_perceptron_classical}]
First, let $m_c=d\lceil \log (1/\epsilon\gamma^2)\rceil$ be the number of samples drawn from the dataset uniformly in each iteration of training. Suppose these samples are classified correctly, then the probability that the entire dataset is classified correctly is
\begin{align*}
\Pr(\text{Correct classification}) \geq {}& 1- \left( 1 - \frac{1}{d} \right)^{m_c} \geq 1 - \exp \left( - \frac{m_c}{d} \right) \geq 1 - \epsilon \gamma^2 .
\end{align*}
 
\end{proof}

\begin{proof}[Proof of Lemma~\ref{lemma_ch5_query_perceptron_quantum}]

For convenience, denote $\theta_a := \arccos \sqrt{\frac{d-d_{0}}{d}}$, where $d_{0}$ the number of misclassified samples in the dataset according to the current model. Let $d_1:=\lceil \log_{c} \frac{1}{\sin (2\sin^{-1} (1/\sqrt{d}))} \rceil$. 
Here, an exponential expansion strategy is used in Steps \ref{chapt5:sec:online_quantum_perceptron_alg_4}-\ref{chapt5:sec:online_quantum_perceptron_alg_12} to handle the scenario of unknown $d_0$. Namely, quantum operations in the Grover search are repeated for $m$ times, where $m$ is drawn from an exponentially expanded set ${0,\cdots, \lceil c^j \rceil -1}$ uniformly for a predefined $c \in (1,2)$ and $j = 1,\cdots,d_1$. It can be shown that this strategy can find a misclassified sample before the convergence of Algorithm~\ref{chapt5:sec:online_quantum_perceptron_alg} with an average probability at least $1/4$:
\begin{align*}
 \Pr \left( f(\bm{w}, z^{(q)})=1 \right) ={}& \sum_{j=1}^{d_1} \frac{1}{\lceil c^j \rceil }  \sum_{m=0}^{\lceil c^j \rceil -1 } \sin^2 ((2m+1) \theta_a) \\
\geq{}& \frac{1}{\lceil c^{d_1} \rceil }  \sum_{m=0}^{\lceil c^{d_1} \rceil -1 } \sin^2 ((2m+1) \theta_a) \\
={}&  \frac{1}{2} \left[ 1 - \frac{\sin (4 \lceil c^{d_1} \rceil \theta_a)}{2 \lceil c^{d_1} \rceil \sin (2 \theta_a)} \right] \\
\geq{}& \frac{1}{4} .
\end{align*}

The procedure of Steps \ref{chapt5:sec:online_quantum_perceptron_alg_4}-\ref{chapt5:sec:online_quantum_perceptron_alg_12} is repeated for $k=1,\cdots,\lceil \log_{3/4} \gamma^2 \epsilon \rceil $ iterations to accumulate the success probability. The probability of finding a misclassified sample in Steps \ref{chapt5:sec:online_quantum_perceptron_alg_3}-\ref{chapt5:sec:online_quantum_perceptron_alg_13} before the convergence of Algorithm~\ref{chapt5:sec:online_quantum_perceptron_alg} is at least
\begin{align}
1 - \left( 1 - \frac{1}{4} \right)^{\lceil \log_{3/4} \epsilon \gamma^2 \rceil}  \geq{}& 1 - \epsilon \gamma^2 . 
\end{align}

Finally, the query complexity $Q$ of Steps \ref{chapt5:sec:online_quantum_perceptron_alg_3}-\ref{chapt5:sec:online_quantum_perceptron_alg_13} in Algorithm~\ref{chapt5:sec:online_quantum_perceptron_alg} can be upper bounded as follows:
\begin{align*}
Q \leq{}& \sum_{k=1}^{\lceil \log_{3/4} \gamma^2 \epsilon \rceil} \sum_{j=1}^{d_1} c^j  \\
\leq{}& \left( 1 + \log_{3/4} \gamma^2 \epsilon \right) \frac{c}{1-c} \left[ 1 - c^{d_1} \right] \\
\leq{}& \left( 1 + \log_{3/4} \gamma^2 \epsilon \right) \frac{c^2}{c-1} \left[ \frac{1}{\sin (2\sin^{-1} (1/\sqrt{d}))} -1 \right] \\
={}& \mathcal{O}(\sqrt{d} \log \frac{1}{\epsilon \gamma^2}) .
\end{align*}

\end{proof}

\section{Near-term Quantum Neural Networks}
\label{chapt5:sec:qnn}

Following recent experimental breakthroughs in superconducting quantum hardware architectures~\citep{arute2019quantum,acharya2024quantum,abughanem2024ibm,gao2024establishing}, researchers have devoted considerable effort to developing and implementing quantum machine learning algorithms optimized for current and near-term quantum devices~\citep{wang2024comprehensive}. Compared to fault-tolerant quantum computers, these devices face three primary limitations: quantum noise, limited coherence time, and circuit connectivity constraints. Regarding quantum noise, state-of-the-art devices have single-qubit gate error rates of $10^{-4} \sim 10^{-3}$ and two-qubit gate error rates of approximately $10^{-3} \sim 10^{-2}$~\citep{abughanem2024ibm,gao2024establishing}. The coherence time is around $10^2 \mu s$~\citep{acharya2024quantum,abughanem2024ibm,gao2024establishing}, primarily limited by decoherence in noisy quantum channels. Regarding circuit connectivity, most superconducting quantum processors employ architectures that exhibit two-dimensional connectivity patterns and their variants~\citep{acharya2024quantum,abughanem2024ibm,gao2024establishing}. Gate operations between non-adjacent qubits must be executed through intermediate relay operations, leading to additional error accumulation. To address these inherent limitations, the quantum neural network (QNN) framework has been proposed. Specifically, these QNNs are designed to perform meaningful computations on near-term quantum devices.

\subsection{General framework}
\label{chapt5:sec:qnn_imple}

\begin{figure}
\centering
\includegraphics[width=0.8\textwidth]{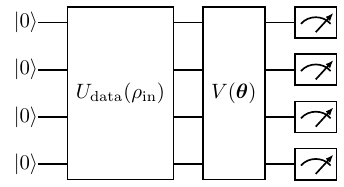}
\caption{\textbf{Illustration of a QNN}. The input state $\rho_{\textrm{in}}$ is prepared using the operation $U_{\rm data}$, followed by a variational quantum circuit (VQC) $V(\bm{\theta})$ and the measurement operation. }
\label{chap5_figure_vqc}
\end{figure}

In this section, we introduce the basic architecture of QNNs. As illustrated in Figure~\ref{chap5_figure_vqc}, a basic QNN consists of three components: the input, the model circuit, and the measurement.

\noindent\textbf{Input}. The QNN uses quantum states $\rho_{\rm in}$ as input data. As shown in Table~\ref{ch5_tab_classical_quantum_data}, QNNs can process both classical and quantum data. Specifically, the input states $\rho_{\rm in}$ may be introduced from physical processes such as quantum Hamiltonian evolutions or be constructed to encode classical vectors using encoding protocols introduced in Chapter~\ref{cha3:subsec:q-read-in}, such as angle encoding and amplitude encoding.

\noindent\textbf{Model circuit}. QNNs employ variational quantum circuits (VQCs), a.k.a, ansatzes, to extract and learn features from input data. A typical VQC, denoted as $V(\bm{\theta})$, adopts a layered architecture that consists of both parameterized and fixed quantum gates, with the former being trainable. For problem-agnostic implementations, an effective parameterization strategy is to use the parameters $\bm{\theta}$ as the phases of single-qubit rotation gates $\RX, \RY, \RZ$, while quantum entanglement is introduced through fixed two-qubit gates, such as $CX$ and $CZ$. Standard circuit architectures include the hardware-efficient circuit (HEC)~\citep{kandala2017hardware}, shown in Figure~\ref{chap5_figure_hec}, and the quantum convolutional neural network (QCNN)~\citep{cong2019quantum}, shown in Figure~\ref{chap5_figure_qcnn}. For problem-specific applications, such as finding the ground states of molecular Hamiltonians, specialized circuits like the unitary coupled cluster \textit{ansatz}~\citep{peruzzo2014variational} are employed.

\begin{figure}
\centering
\includegraphics[width=0.8\textwidth]{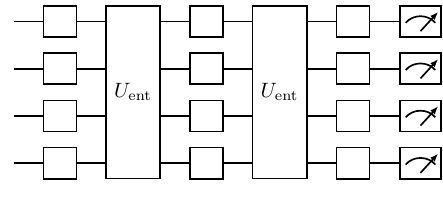}
\caption{\textbf{Illustration of a hardware-efficient circuit with two entanglement layers}. }
\label{chap5_figure_hec}
\end{figure}

\begin{figure}
\centering
\includegraphics[width=0.8\textwidth]{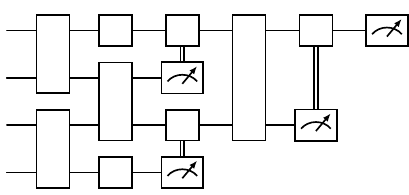}
\caption{\textbf{Illustration of a quantum convolutional neural network}. }
\label{chap5_figure_qcnn}
\end{figure}

\begin{table}
\centering  
\caption{Examples of classical and quantum data employed in QNNs, where $H$ denotes the system Hamiltonian, $k_B$ is the Boltzmann constant, and $|\phi_0\>$ is a predefined initial state.}
\label{ch5_tab_classical_quantum_data}
\footnotesize
\begin{tabular}{c|c|c}
\toprule 
 & Example & Input state formulation \\ \midrule
\multirow{2}{*}{Classical data}  & Angle encoding & $\bigotimes_{n=1}^{N} \left[ \RY(\bm{x}_n) |0\> \right]$ \\   
  & Amplitude encoding & $ \sum_{i=0}^{d-1} \bm{x}_i/\|\bm{x}\|_2 |i\>$ \\  
  \hline
\multirow{2}{*}{Quantum data}  & Gibbs state & $ \frac{\exp(-H/k_B T)}{{\rm Tr} \left[\exp(-H/k_B T) \right]}$ \\  
  & Hamiltonian evolution & $ \exp(-iHt)|\phi_0\>$ \\  
\bottomrule
\end{tabular}
\end{table}

\begin{shadedbox}
\begin{example}
Hardware-efficient circuits incorporate several widely adopted \textit{ansatzes}. Single-qubit rotations $\{\RX, \RY, \RZ\}$ are used to construct parameterized single-qubit unitaries. The entangled unitary layer can be implemented using two-qubit gates such as
\begin{align*}
U_{\rm ent} ={}& \bigotimes_{n=1}^{\lceil \frac{N}{2} \rceil} U(2n-1+k\%2,2n+k\%2)   \quad {\text{for the $k$-th layer}}, \\
U_{\rm ent} ={}& \bigotimes_{n=1}^{\lceil \frac{N}{2} \rceil} U(2n-1,2n) \bigotimes_{n=1}^{\lceil \frac{N-1}{2} \rceil} U(2n,2n+1) ,
\end{align*}
where $U \in \{ CX, CZ\}$.
\end{example}
\end{shadedbox}

\noindent\textbf{Measurement}. After implementing the model circuit, the quantum state is measured using specific observables, denoted as $O$, to extract classical information. 
The choice of observables depends on the experimental objectives. In the case of a variational quantum eigensolver, where the goal is to find the ground state and energy of a given Hamiltonian, the observable is chosen to be the target Hamiltonian itself. In quantum machine learning applications involving classical data, the measurement outcomes are used to approximate label information, which typically lacks direct physical significance. As a result, the observable can, in principle, be any Hermitian operator. However, for practical experimental considerations, a linear combination of Pauli-Z operators is commonly used as the observable:
\begin{equation}\label{ch5_qnn_eq_observable_cnzn}
O = \sum_{j=1}^{N} \bm{c}_j \mathbb{I}^{\otimes (j-1)} \otimes Z_{j} \otimes \mathbb{I}^{\otimes (N-j)} ,
\end{equation}
where $\bm{c} \in \mathbb{R}^{N}$ is a weight vector.
The measurement outcome of QNN can be expressed as a function of $\bm{\theta}$, i.e.,
\begin{align}
f(\bm{\theta};\rho_{\rm in},V,O) ={}& \Tr \left[ O V(\bm{\theta}) \rho_{\rm in}  V(\bm{\theta})^\dag \right] . \label{ch5_qnn_eq_ftheta}
\end{align}

\noindent\textbf{Training of QNNs}. As a QML framework, the optimization of QNNs amounts to updating parameters $\bm{\theta}$ using gradient-based methods. Due to the linearity of quantum mechanics and the unitary evolution constraint, for certain cases, the acquisition of gradients can be elegantly performed using the parameter-shift rule.

\begin{theorem}[Parameter-shift rule~\citep{crooks2019gradients}]\label{theo_ch5_parameter_shift}
Suppose the gate $G_j (\bm{\theta}_j)$ in a VQC $V(\bm{\theta})$ has a unitary Hamiltonian $H_j$, then the corresponding gradient could be obtained as
\begin{equation*}
\frac{\partial f}{\partial \bm{\theta}_j} (\bm{\theta}) = \frac{1}{2} \left[ f \left(\bm{\theta} + \frac{\pi}{2} \bm{e}^{(j)} \right) - f \left(\bm{\theta} - \frac{\pi}{2} \bm{e}^{(j)} \right) \right] ,
\end{equation*}
where the function $f$ follows the Eqn.~(\ref{ch5_qnn_eq_ftheta}) and the one-hot vector $\bm{e}^{(j)}$ has the same dimension with $\bm{\theta}$ with the $j$-th element being $1$.
\end{theorem}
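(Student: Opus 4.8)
The plan is to reduce the problem to a single gate by splitting the circuit as $V(\bm{\theta}) = V_{+}\, G_j(\bm{\theta}_j)\, V_{-}$, where $V_{-}$ collects the gates applied before $G_j$ and $V_{+}$ those after. Absorbing these unitaries into the state and observable via $\rho := V_{-}\rho_{\rm in}V_{-}^{\dag}$ and $\widetilde{O} := V_{+}^{\dag} O V_{+}$, the output becomes $f(\bm{\theta}) = \Tr[\widetilde{O}\, G_j(\bm{\theta}_j)\, \rho\, G_j(\bm{\theta}_j)^{\dag}]$, so that every other parameter is frozen and only the scalar $\bm{\theta}_j$ varies. Since $V_{\pm}$ are unitary, $\widetilde{O}$ is still Hermitian and $\rho$ still a valid state, so all the manipulations below are legitimate and $f$ is real-valued.

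The key structural input is that the generator is a Hermitian involution, i.e. $H_j = H_j^{\dag}$ and $H_j^2 = \mathbb{I}$ (this is what ``unitary Hamiltonian'' means; a Pauli string is the canonical example). Writing $G_j(\bm{\theta}_j) = e^{-i\bm{\theta}_j H_j/2}$, the involution property collapses the exponential into the finite expansion $G_j(\bm{\theta}_j) = \cos(\bm{\theta}_j/2)\,\mathbb{I} - i\sin(\bm{\theta}_j/2)\,H_j$, and in particular $G_j(\pm\tfrac{\pi}{2}) = \tfrac{1}{\sqrt{2}}(\mathbb{I}\mp i H_j)$. Because $H_j$ commutes with $G_j(\bm{\theta}_j)$, one has $\partial_{\bm{\theta}_j} G_j(\bm{\theta}_j) = -\tfrac{i}{2} H_j G_j(\bm{\theta}_j)$, and applying the product rule to $f$ gives
\begin{equation*}
\frac{\partial f}{\partial \bm{\theta}_j}(\bm{\theta}) = -\frac{i}{2}\,\Tr\!\big[\widetilde{O}\, G_j(\bm{\theta}_j)\,[H_j,\rho]\, G_j(\bm{\theta}_j)^{\dag}\big].
\end{equation*}

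Next I would evaluate the two shifted points directly. Using $G_j(\bm{\theta}_j \pm \tfrac{\pi}{2}) = G_j(\bm{\theta}_j)\,G_j(\pm\tfrac{\pi}{2}) = \tfrac{1}{\sqrt{2}}\,G_j(\bm{\theta}_j)(\mathbb{I}\mp i H_j)$ and the expansion $(\mathbb{I}\mp i H_j)\rho(\mathbb{I}\pm i H_j) = \rho + H_j\rho H_j \mp i[H_j,\rho]$, the symmetric pieces $\rho + H_j\rho H_j$ cancel upon subtraction and one is left with
\begin{equation*}
f\!\left(\bm{\theta} + \tfrac{\pi}{2}\bm{e}^{(j)}\right) - f\!\left(\bm{\theta} - \tfrac{\pi}{2}\bm{e}^{(j)}\right) = -i\,\Tr\!\big[\widetilde{O}\, G_j(\bm{\theta}_j)\,[H_j,\rho]\, G_j(\bm{\theta}_j)^{\dag}\big],
\end{equation*}
which is exactly twice the derivative obtained above, giving the claimed identity.

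The main obstacle — indeed essentially the only subtle point — is the dependence on $H_j^2 = \mathbb{I}$: it is precisely this that turns $G_j(\bm{\theta}_j)$ into a degree-one trigonometric polynomial in $\bm{\theta}_j/2$ and makes a \emph{two}-evaluation shift exact. For a generator with more than two distinct eigenvalues the same computation yields a higher-degree trigonometric polynomial, and one needs either several shifts at distinct angles or a spectral-decomposition argument; I would flag the involution hypothesis explicitly at the start and, at most, remark on the generalization rather than carry it out.
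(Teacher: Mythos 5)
Your proposal is correct and follows essentially the same route as the paper's proof: reduce to a single gate by absorbing the preceding and following unitaries into the state and observable, exploit the involution $H_j^2=\mathbb{I}$ implied by a Hermitian unitary generator so that $e^{-i\bm{\theta}_j H_j/2}=\cos(\bm{\theta}_j/2)\mathbb{I}-i\sin(\bm{\theta}_j/2)H_j$, and compare the derivative with the two shifted evaluations. The only difference is bookkeeping — you factor the shift as $G_j(\bm{\theta}_j)G_j(\pm\tfrac{\pi}{2})$ and cancel the symmetric terms via the commutator identity, whereas the paper expands $f$, $f(\bm{\theta}\pm\tfrac{\pi}{2}\bm{e}^{(j)})$, and $\partial_{\bm{\theta}_j}f$ explicitly as trigonometric polynomials and matches them term by term; both hinge on the same involution hypothesis you correctly flag as the crux.
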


\begin{proof}[Proof of Theorem~\ref{theo_ch5_parameter_shift}]

For convenience, we denote the detailed structure of VQC as 
\begin{equation*}
V(\bm{\theta}) = \prod_{i=L}^{1} G_i(\bm{\theta}_i) W_i,  
\end{equation*}
where $L$ is the number of parameters in VQC, $G_i$ is the parameterized gate, and $W_i$ is the fixed gate. By assumption, the gate takes the form as
\begin{equation*}
G_j(\bm{\theta}_j) =\exp(-i H_j \bm{\theta}_j /2),  
\end{equation*}
where the Hamiltonian $H_j$ is a unitary. For convenience, unnecessary parameterized and fixed gates can be merged into the state $\rho_{\rm in}$ and the observable $O$, i.e., 
\begin{align*}
\rho_{\rm in}' ={}&   W_j \left( \prod_{i=j-1}^{1} G_i(\bm{\theta}_i) W_i \right) \rho_{\rm in} \left( \prod_{i=1}^{j-1} W_i^\dag G_i(\bm{\theta}_i)^\dag \right) W_j^\dag , \\
O' ={}& \left( \prod_{i=j}^{L} W_i^\dag G_i(\bm{\theta}_i)^\dag \right) O \left( \prod_{i=L}^{j} G_i(\bm{\theta}_i) W_i \right) .
\end{align*}
It can be shown that
\begin{align}
f(\bm{\theta})={}& \Tr \left[ O V(\bm{\theta}) \rho_{\rm in}  V(\bm{\theta})^\dag \right] \notag \\
={}& \Tr \left[ O' G_j(\bm{\theta}_j) \rho_{\rm in}'  G_j(\bm{\theta}_j)^\dag \right] \notag \\
={}& \Tr \left[ O' \exp(-i H_j \bm{\theta}_j /2) \rho_{\rm in}'  \exp(i H_j \bm{\theta}_j /2) \right] \notag \\
={}& \cos^2 \frac{\bm{\theta}_j}{2} \Tr \left[ O' \rho_{\rm in}' \right] + \frac{i}{2} \sin  \bm{\theta}_j \left[ [H_j , O']  \rho_{\rm in}' \right] + \sin^2 \frac{\bm{\theta}_j}{2}  \Tr \left[ H_j O' H_j \rho_{\rm in}' \right], \label{ch5_qnn_eq_ftheta_prime}
\end{align}
where $[A,B]:=AB-BA$ denotes the commutator.

After some calculations from Eqn.~(\ref{ch5_qnn_eq_ftheta_prime}), it can be shown that
\begin{align*}
f \left(\bm{\theta} + \frac{\pi}{2} \bm{e}^{(j)} \right) ={}& \frac{1-\sin \bm{\theta}_j}{2} \Tr \left[ O' \rho_{\rm in}' \right] + \frac{i}{2} \cos  \bm{\theta}_j \left[ [H_j , O']  \rho_{\rm in}' \right] \\
+{}& \frac{1+\sin \bm{\theta}_j}{2} \Tr \left[ H_j O' H_j \rho_{\rm in}' \right] \\
f \left(\bm{\theta} - \frac{\pi}{2} \bm{e}^{(j)} \right) ={}& \frac{1+\sin \bm{\theta}_j}{2} \Tr \left[ O' \rho_{\rm in}' \right] - \frac{i}{2} \cos  \bm{\theta}_j \left[ [H_j , O']  \rho_{\rm in}' \right] \\
+{}& \frac{1-\sin \bm{\theta}_j}{2} \Tr \left[ H_j O' H_j \rho_{\rm in}' \right] \\
\frac{\partial f}{\partial \theta_j} (\bm{\theta}) ={}& - \frac{1}{2} \sin \bm{\theta}_j \Tr \left[ O' \rho_{\rm in}' \right] + \frac{i}{2} \cos \bm{\theta}_j \left[ [H_j , O']  \rho_{\rm in}' \right] \\
+{}& \frac{1}{2} \sin \bm{\theta}_j \Tr \left[ H_j O' H_j \rho_{\rm in}' \right].
\end{align*}

Comparing the above equations, Theorem~\ref{theo_ch5_parameter_shift} is proved.

\end{proof}

\subsection{Discriminative learning with QNNs}
\label{chapt5:sec:qnn_dis}

In this section, we present an example in which a QNN is employed for discriminative learning. Specifically, we focus on binary classification, where the label $y^{(i)} = \pm 1$ corresponds to the input state $\rho^{(i)}$. In the case of classical data, the state $\rho^{(i)}=|\psi(\bm{x}^{(i)})\>\<\psi(\bm{x}^{(i)})|$ can be generated from the classical vector $\bm{x}^{(i)}$ using a read-in approach 
\begin{equation}
|\psi(\bm{x}^{(i)})\> = U_\phi (\bm{x}^{(i)}) |0\> ,  
\end{equation}
where a simple feature map can be constructed via angle encoding, as introduced in Chapter~\ref{cha3:subsec:q-read-in},
\begin{equation}
U_\phi (\bm{x}^{(i)}) = \bigotimes_{n=1}^{N} \RY (\bm{x}_n^{(i)}) = \bigotimes_{n=1}^{N} \exp(-iY \bm{x}_n^{(i)} /2).
\end{equation}
Denote by $O$ and $V(\bm{\theta})$ the quantum observable and the VQC, respectively. The prediction function of the QNN is given by
\begin{equation}
\hat{y}^{(i)}(\bm{\theta}) = \Tr [ O V(\bm{\theta}) \rho^{(i)} V(\bm{\theta})^\dag ] .  
\end{equation}

In the binary classification task, the QNN learns by training the parameter $\bm{\theta}$ to minimize the distance between the label $y^{(i)}$ and the prediction $\hat{y}^{(i)}(\bm{\theta})$. Specifically, the mean square error (MSE) is used as the loss function:
\begin{equation}\label{theo_ch5_qnn_discriminative_lossL}
\bm{\theta}^* = {\rm argmin} \mathcal{L}(\bm{\theta}) , \ \text{where } \mathcal{L}(\bm{\theta}) = \sum_{i=1}^{n} \ell (\bm{\theta}, \bm{x}^{(i)}, y^{(i)}) = \frac{1}{2}   \sum_{i=1}^{n} \left( \hat{y}^{(i)} (\bm{\theta}) - y^{(i)} \right)^2 .
\end{equation}
The gradient of the loss in Eqn.~(\ref{theo_ch5_qnn_discriminative_lossL}) can be calculated via the chain rule, i.e.,
\begin{equation}
\nabla_{\bm{\theta}} \mathcal{L}(\bm{\theta}) = \sum_{i=1}^{n} \left( \hat{y}^{(i)} (\bm{\theta}) - y^{(i)} \right) \nabla_{\bm{\theta}} \hat{y}^{(i)} (\bm{\theta}) , 
\end{equation}
where the gradient of the prediction $\hat{y}^{(i)}$ can be obtained by using the parameter-shift rule in Theorem~\ref{theo_ch5_parameter_shift}. Consequently, a variety of gradient-based optimization algorithms, such as stochastic gradient descent~\citep{amari1993backpropagation}, Adagrad~\citep{duchi2011adaptive}, and Adam~\citep{kingma2014adam}, can be employed to train QNNs.

\begin{tcolorbox}[enhanced, 
breakable,colback=gray!5!white,colframe=gray!75!black,title=Remark]
The QNN binary classification framework can be naturally extended to multi-label classification using the \textbf{one-vs-all} strategy. Specifically, we train $k$ QNN binary classifiers for $k$ classes, with each classifier distinguishing a specific class from the others.
\end{tcolorbox}

\begin{tcolorbox}[enhanced, breakable,colback=gray!5!white,colframe=gray!75!black,title=Remark]
The QNN classification framework presented in this section can be extended to quantum regression learning by incorporating continuous labels.
\end{tcolorbox}

\subsection{Generative learning with QNNs}
\label{chapt5:sec:qnn_gen}

In this section, we introduce a quantum generative model implemented by QNNs, namely quantum generative adversarial network (QGAN)~\citep{lloyd2018quantum}. Similar to its classical counterparts, QGAN learns to generate samples by employing a discriminator and a generator, which are engaged in a two-player minimax game. Specifically, both the discriminator $D$ and the generator $G$ can be implemented using QNNs. By leveraging the expressive power of QNNs, QGAN has the potential to exhibit quantum advantages in certain tasks~\citep{bravyi2018quantum,zhu2022generative}.

To illustrate the training and sampling processes of QGAN, we present two examples based on the quantum patch and batch GANs proposed by \citet{huang2021experimental}. Let $N$ denote the number of qubits and $M$ the number of training samples. The patch and batch strategies are designed for the cases where $N < \lceil \log M \rceil$ and $N > \lceil \log M \rceil$, respectively. Specifically, the patch strategy enables the generation of high-dimensional images with limited quantum resources, while the batch strategy facilitates parallel training when sufficient quantum resources are available.

\begin{figure}[t]
  \centering 
  \includegraphics[width=.8\textwidth]{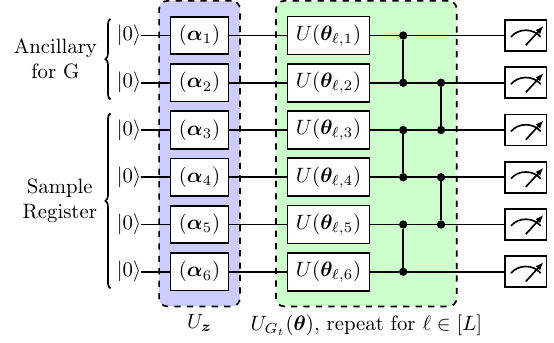}
  \caption{{\textbf{The quantum generator used in the quantum patch GAN}, where each $U(\bm{\theta}_{\ell,n}) \in \mathcal{U}(2)$ is a trainable single-qubit unitary. }}
  \label{chap5_figure_batchqgan}
\end{figure}

\subsubsection{Quantum patch GAN}

We begin by introducing the quantum patch GAN, which consists of a quantum generator, as illustrated in Figure~\ref{chap5_figure_batchqgan}, a classical discriminator, and a classical optimizer. Both the learning and sampling processes of an image are performed in patches, involving $T$ sub-generators. For the $t$-th sub-generator, the model takes a latent state $\bm{z}$ as input and generates a sample $G_t(\bm{z})$.
Specifically, the latent state is prepared from the initial state $|0\>^{\otimes N}$ using a single-qubit rotation layer, where the parameters $\{\bm{\alpha}_n\}_{n=1}^{N}$ are sampled from the uniform distribution over $[0,2\pi)$. The latent state is then processed through an $N$-qubit hardware-efficient circuit $U_{G_t} (\bm{\theta})$, which leads to the state
\begin{equation}
|\psi_t (\bm{z}) \> = U_{G_t} (\bm{\theta}) |\bm{z}\> .
\end{equation}

To perform non-linear operations, partial measurements are conducted, and a subsystem $\mathcal{A}$ (ancillary qubits) is traced out from the state $|\psi_t(\bm{z})\>$. 
The resulting mixed state is
\begin{equation}
\rho_t(\bm{z}) = \frac{\Tr_{\mathcal{A}} \left[ \Pi \otimes \mathbb{I} |\psi_t (\bm{z}) \> \< \psi_t (\bm{z}) | \right] }{\Tr \left[ \Pi \otimes \mathbb{I} |\psi_t (\bm{z}) \> \< \psi_t (\bm{z}) | \right]} ,
\end{equation}
where $\Pi$ is the projective operator acting on the subsystem $\mathcal{A}$.
Subsequently, the mixed state $\rho_t(\bm{z})$ is measured in the computational basis to obtain the sample $G_t(\bm{z})$. Specifically, let $\Pr(J=j):=\Tr[|j\>\<j|\rho_t(\bm{z})]$, where the probabilities of the outcomes can be estimated by the measurement. The sample $G_t(\bm{z})$ is then defined as
\begin{align}
G_t(\bm{z}) ={}& [ \Pr(J=0), \cdots, \Pr(J=j), \cdots, \Pr(J=2^{N-N_{\mathcal{A}}}-1) ],
\end{align}
where $N_{\mathcal{A}}$ is the number of qubits in $\mathcal{A}$.
Finally, the complete image is reconstructed by aggregating these samples from all sub-generators as follows:
\begin{equation}
G(\bm{z}) = [G_1(\bm{z}), \cdots, G_T (\bm{z})] .
\end{equation}

In principle, the discriminator $D$ in a quantum patch GAN can be any classical neural network that takes the training data $\bm{x}$ or the generated sample $G(\bm{z})$ as input, with the output
\begin{equation}
D(\bm{x}), \ D(G(\bm{z})) \in [0,1] . 
\end{equation}
Let $\bm{\gamma}$ and $\bm{\theta}$ denote the parameters of the discriminator $D$ and the generator $G$, respectively. The optimization problem for the quantum patch GAN can be formulated as:
\begin{align}
\min_{\bm{\theta}} \max_{\bm{\gamma}} \mathcal{L} ( D_{\bm{\gamma}} (G_{\bm{\theta}}(\bm{z})) , D_{\bm{\gamma}}(\bm{x})):= \mathop{\mathbb{E}}_{\bm{x}} \left[ \log D_{\bm{\gamma}}(\bm{x}) \right] + \mathop{\mathbb{E}}_{\bm{z}} \left[ \log (1-D_{\bm{\gamma}}(G_{\bm{\theta}}(\bm{z}))) \right] . \label{ch5_eq_gan_loss}
\end{align}
Similar to quantum discriminative learning, the quantum patch GAN can be trained using gradient-based optimization algorithms.

\subsubsection{Quantum batch GAN}

\begin{figure}[t]
  \centering 
  \includegraphics[width=0.8\textwidth]{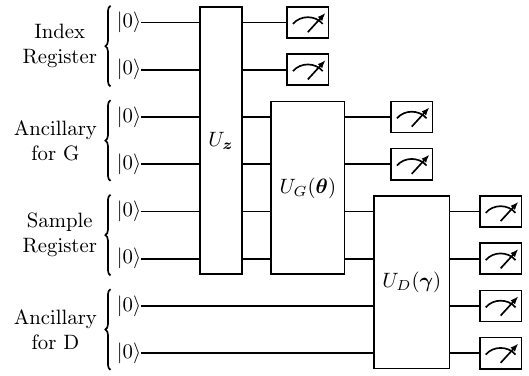}
  \caption{{\textbf{The main structure of the
quantum batch GAN. } The figure illustrates the process of generation and training using fake data. The oracle $U_{\bm{z}}$ for encoding latent vectors, the quantum generator $U_G(\bm{\theta})$, and the quantum discriminator $U_D(\bm{\gamma})$ are applied sequentially to the initial state $|0\>^{\otimes N}$. Both $U_G$ and $U_D$ share the same hardware-efficient structure as shown in Figure~\ref{chap5_figure_batchqgan}. In the case of real data, the operations $U_{\bm{z}}$ and $U_G(\bm{\theta})$ are replaced by the oracle $U_{\bm{x}}$.}}
  \label{chap5_figure_batchqgan2}
\end{figure}

As illustrated in Figure~\ref{chap5_figure_batchqgan2}, the quantum batch GAN differs from the quantum patch GAN by employing a quantum discriminator. In a quantum batch GAN, all qubits are divided into two registers: the index register, consisting of $N_I$ qubits, and the feature register, consisting of $N_F$ qubits. The qubits in the feature register are further partitioned into three parts: $N_D$ qubits for generating quantum samples, $N_{A_G}$ qubits for implementing non-linear operations in the generator $G_{\bm{\theta}}$, and $N_{A_D}$ qubits for implementing non-linear operations in the discriminator $D_{\bm{\gamma}}$. For a batch with size $|B_k|=2^{N_I}$, two oracles are used to encode the information of latent vectors and training samples:
\begin{align}
|0\>_{I} \otimes |0\>_{F} \xrightarrow{U_{\bm{z}}}{}& \frac{1}{2^{N_I}} \sum_i |i\>_{I} \otimes |\bm{z}^{(i)}\>_{F} , \\  
|0\>_{I} \otimes |0\>_{F} \xrightarrow{U_{\bm{x}}}{}& \frac{1}{2^{N_I}} \sum_i |i\>_{I} \otimes |\bm{x}^{(i)}\>_{F} .
\end{align}

\begin{tcolorbox}[enhanced, 
breakable,colback=gray!5!white,colframe=gray!75!black,title=Remark]
For data with $M$ features, state preparation for amplitude encoding in $U_{\bm{x}}$ requires $\tilde{\mathcal{O}}(2^{N_I}M)$ multi-controlled quantum gates, which is infeasible for current quantum devices. This challenge can be addressed by employing pre-trained shallow circuit approximations of the given oracle~\citep{benedetti2019generative}.
\end{tcolorbox}

After the encoding stage, a PQC $U_G(\bm{\theta})$ and the corresponding partial measurement are employed as the quantum generator. Thus, the generated state corresponding to $|B_k|$ fake samples is obtained as follows:
\begin{align*}
{}& \frac{1}{2^{N_I}} \sum_i |i\>_{I} \otimes |\bm{z}^{(i)}\>_{F} \\
\xrightarrow{U_G(\bm{\theta})}{}& \frac{1}{2^{N_I}} \sum_i |i\>_{I} \otimes \left( U_G (\bm{\theta}) \otimes \mathbb{I}_{2^{N_{A_D}}} |\bm{z}^{(i)}\>_{F} \right) := |\psi (\bm{z}) \> \\
\xrightarrow{\Pi_{A_G}}{}& \frac{ \mathbb{I}_{2^{N_I}} \otimes \Pi_{A_G} \otimes \mathbb{I}_{2^{N_D+N_{A_D}}} |\psi(\bm{z})\> }{ \Tr \left[ \mathbb{I}_{2^{N_I}} \otimes \Pi_{A_G} \otimes \mathbb{I}_{2^{N_D+N_{A_D}}} |\psi(\bm{z})\> \< \psi(\bm{z})| \right] } := | G_{\bm{\theta}}(\bm{z})\> ,
\end{align*}
where the partial measurement $\Pi_{A_G}=(|0\>\<0|)^{\otimes N_{A_G}}$ serves as the non-linear operation. In the sampling stage, the reconstructed image is generated similarly to the quantum patch GAN. Specifically, the $i$-th image $G_{\bm{\theta}}(\bm{z}^{(i)})$ in the batch is 
\begin{align}
G_{\bm{\theta}}(\bm{z}^{(i)}) ={}& \left[ \Pr(J=0|I=i), \cdots, \Pr(J=2^{N_D}-1|I=i) \right],\label{chap5_eq_Gzi}
\end{align}
where 
\begin{align}
\Pr(J=j|I=i) ={}& \Tr \left[ |i\>_I |j\>_F \<i|_I \<j|_F |G(\bm{z})\>\<G(\bm{z})| \right] .
\end{align}

Finally, we introduce the training stage. A quantum discriminator is applied to either the fake generated state $| G_{\bm{\theta}}(\bm{z})\>$ or the real data state $|\bm{x}\>$. Similar to the quantum generator, the quantum discriminator $D_{\bm{\gamma}}$ consists of a PQC $U_D(\bm{\gamma})$, followed by the corresponding partial measurement. In the case of the real state, the state evolution proceeds as follows:
\begin{align*}
{}& \frac{1}{2^{N_I}} \sum_i |i\>_{I} \otimes |\bm{x}^{(i)}\>_{F} \\
\xrightarrow{U_D(\bm{\gamma})}{}& \frac{1}{2^{N_I}} \sum_i |i\>_{I} \otimes \left( \mathbb{I}_{2^{N_{A_G}}} \otimes U_D (\bm{\gamma}) |\bm{x}^{(i)}\>_{F} \right) := |\psi (\bm{x}) \> \\
\xrightarrow{\Pi_{A_D}}{}& \frac{ \mathbb{I}_{2^{N-N_{A_D}}} \otimes \Pi_{A_G} |\psi(\bm{x})\> }{ \Tr \left[ \mathbb{I}_{2^{N-N_{A_D}}} \otimes \Pi_{A_G} |\psi(\bm{x})\> \< \psi(\bm{x})| \right] } := | D_{\bm{\gamma}}(\bm{x})\> ,
\end{align*}
where the partial measurement is $\Pi_{A_G}=(|0\>\<0|)^{\otimes N_{A_D}}$. The classical description $D_{\bm{\gamma}}(\bm{x})$ is generated similarly to Eqn.~(\ref{chap5_eq_Gzi}). The generated state $G_{\bm{\theta}}|\bm{z}\>$ undergoes the same procedure to obtain the description $D_{\bm{\gamma}}(G_{\bm{\theta}}(\bm{z}))$. These classical vectors are then used in the loss function in Eqn.~(\ref{ch5_eq_gan_loss}) to train parameters $\bm{\theta}$ and $\bm{\gamma}$.

\section{Theoretical Foundations of Quantum Neural Networks}
\label{chapt5:sec:qnn_theory}

The primary goal of QNNs is to make accurate predictions on unseen data. Achieving this goal depends on three key factors: expressivity, generalization ability, and trainability, as illustrated in Figure~\ref{chap5:fig:qnn_learnability}. A thorough analysis of these factors is crucial for understanding the potential advantages and limitations of QNNs compared to classical counterparts. Instead of providing an exhaustive review of all theoretical results, this section focuses on emphasizing key conceptual insights of QNNs.

\begin{figure}[h!]
  \centering
  \includegraphics[width=0.96\textwidth]{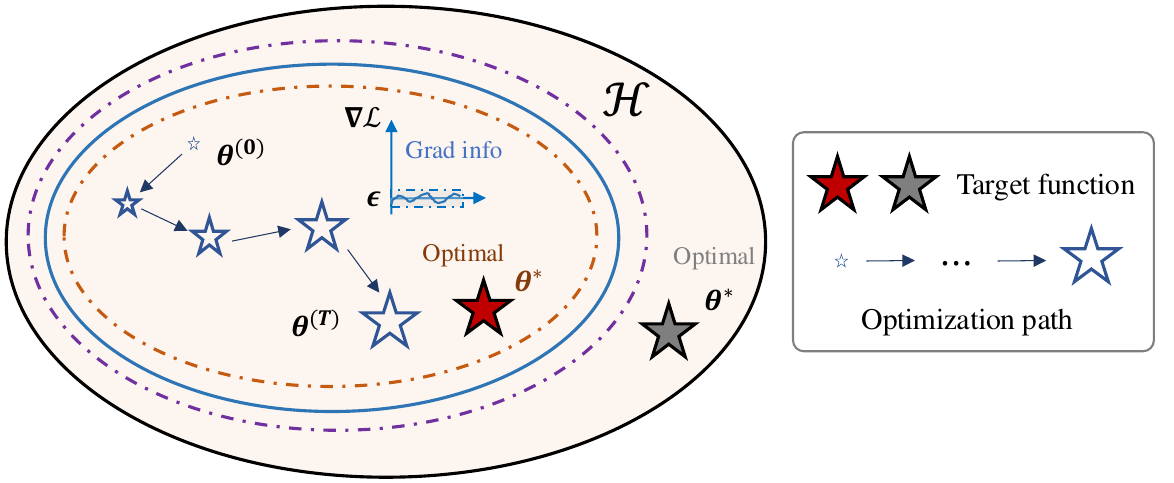}
    \caption{{\textbf{Overview of the expressivity, generalization ability, and trainability of QNNs.} The expressivity of the employed QNNs determines its hypothesis space $\mathcal{H}$ (solid blue ellipse). When $\mathcal{H}$ has a moderate size and encompasses the target concept (solid red star), QNNs can achieve good performance. Conversely, if $\mathcal{H}$ fails to cover the target concept (solid gray star) due to limited expressivity, the performance of QNNs deteriorates. During QNN optimization, a significant challenge arises from the vanishing gradient problem, commonly referred to as the barren plateau. This issue prohibits a good estimation near the target parameters $\bm{\theta}^*$. }}
  \label{chap5:fig:qnn_learnability}  
\end{figure}

As explained in Chapter~\ref{chapt3:sec:theo_foundation_QK}, expressivity refers to a model's ability to represent a wide range of functions, determining the smallest achievable training error. In Chapter~\ref{chapt5:sec:expr_gene}, we will characterize the expressivity of QNNs using the \textit{covering number}, an advanced tool from statistical learning theory. This analysis will reveal the relationship between the expressivity of QNNs and their structural factors, such as the size of the quantum system and the number of exploited quantum gates. Understanding this connection helps clarify how QNNs' expressivity scales with their architecture.

Generalization ability evaluates the discrepancy between a model's performance on the training data and on unseen test data. In Chapter~\ref{chapt5:sec:expr_gene}, we will further explore the relationship between the generalization ability and expressivity of QNNs by deriving a generalization error bound in terms of the covering number. This bound provides insights into how the expressivity of QNNs--specifically their structural factors--may impact their ability to generalize and offers a framework to assess their potential advantages over classical ML models.

While expressivity and generalization ability are crucial for both quantum kernels and QNNs, trainability emerges as an additional consideration for QNNs due to the introduction of trainable parameters in quantum circuits. This leads to fundamentally different optimization challenges, where many existing results from classical ML models no longer apply. Specifically, trainability refers to a model's ability to efficiently converge to a good solution during training, directly influencing the computational cost of training. In Chapter~\ref{chapt5:sec:trainability}, we will introduce a well-known challenge in training QNNs, referred to as the barren plateau problem, where gradients vanish exponentially as the system size increases, making optimization intractable. Additionally, we will discuss various strategies to address this issue, offering practical insights into enhancing the trainability of QNNs.

\subsection{Expressivity and generalization of quantum neural networks}
\label{chapt5:sec:expr_gene}
The expressivity and generalization 
are deeply interconnected within the framework of statistical learning theory for understanding the prediction ability of any learning model. To better understand these two terms in the context of quantum neural networks, we first review the framework of empirical risk minimization (ERM), which is a popular framework for analyzing these abilities in statistical learning theory. 

Consider the training dataset $\mathcal{D} = \{(\bm{x}^{(i)},{y}^{(i)})\}_{i=1}^{n} \in \mathcal{X} \times \mathcal{Y} $ sampled independently from an unknown distribution $\mathcal{P}$, a learning algorithm $\mathcal{A}$ aims to use the dataset $\mathcal{D}$ to infer a hypothesis $h_{\bm{\theta}^*}: \mathcal{X}\to \mathcal{Y}$ from the hypothesis space $\mathcal{H}$ that could accurately predict all labels of $\bm{x}\in \mathcal{X}$ following the distribution $\mathcal{P}$. This amounts to identifying an optimal hypothesis in $\mathcal{H}$
minimizing the expected risk
\begin{equation}\label{chapt4:eq:expct_loss}
  R(h) = \mathbb{E}_{(\bm{x},{y})\sim \mathcal{P}} \ell(h_{\bm{\theta}^*}(\bm{x}), {y}),
\end{equation}
where $\ell(\cdot,\cdot)$ refers to the per-sample loss predefined by the learner. Unfortunately, the inaccessible distribution $\mathcal{P}$ forbids us to assess the expected risk directly. In practice, $\mathcal{A}$ alternatively learns an empirical hypothesis $h_{\hat{\bm{\theta}}} \in \mathcal{H}$, as the global minimizer of the (regularized) loss function
\begin{equation}\label{chapt5:eq:loss_qnn}
  \mathcal{L}(\bm{\theta}, \mathcal{D}) = \frac{1}{n} \sum_{i=1}^{n} \ell (h_{\bm{\theta}}(\bm{x}^{(i)}), {y}^{(i)}) + \mathcal{R}(\bm{\theta}),
\end{equation}
where $\mathcal{R}(\bm{\theta})$ refers to an optional regularizer, as will be detailed in the following. Moreover, the first term on the right-hand side refers to the empirical risk
\begin{equation}
  R_{\ERM}(h_{\hat{\bm{\theta}}})= \frac{1}{n} \sum_{i=1}^{n} \ell (h_{\hat{\bm{\theta}}}(\bm{x}^{(i)}), {y}^{(i)}),
\end{equation}
which is also known as the training error.  To address the intractability of $R(h_{\hat{\bm{\theta}}})$, one can decompose it into two measurable terms,
\begin{equation}\label{chapt4:eq:expct_risk}
  R(h_{\hat{\bm{\theta}}}) = R_{\ERM}(h_{\hat{\bm{\theta}}}) + R_{\Gene}(h_{\hat{\bm{\theta}}}),
\end{equation}
where $R_{\Gene}(h_{\hat{\bm{\theta}}}) = R(h_{\hat{\bm{\theta}}}) - R_{\ERM}(h_{\hat{\bm{\theta}}})$ refers to the generalization error. In this regard, achieving a small prediction error requires the learning model to achieve both a small training error and a small generalization error.
 
\subsubsection{An overview}
Before moving to analyze the training error (ERM) and generalization error of QNNs rigorously, we first delve into better understanding the meaning of expressivity and generalization ability of the learning models with the ERM framework and try to give an intuition about the necessitates and benefits of exploring such theoretical aspects of QNNs as a special learning model. 

In particular, the expressivity could be directly understood as the size of the hypothesis space $\mathcal{H}=\{h_{\bm{\theta}}: \bm{\theta} \in \Theta\}$ related to the learning model. Intuitively, the achievable smallest empirical risk is determined by the expressivity of learning models. Specifically, a learning model with low expressivity may not fit the training data with complex patterns, e.g., the hypothesis space of linear model $\mathcal{H}=\{h_{\bm{\theta}}=\bm{\theta} \cdot \bm{x}\}$ cannot fit the nonlinear data $\{\bm{x}^{(i)},(\bm{x}^{(i)})^2\}$ perfectly. 

In general, the cardinality of the hypothesis space is infinity, as the parameters $\bm{\theta}$ are continuous. This makes it hard to compare the expressivity of different learning models. An alternative measure is model complexity, which measures the richness of the hypothesis space through the structural factors of the specific learning models, such as the number of parameters, depth, or architectural design. Remarkably, model complexity is measurable and bounded. In this tutorial, we will employ the covering number to measure the model complexity of QNNs later.

The generalization capability of learning models is directly measured by the generalization error $R_{\Gene}$ in Eqn.~(\ref{chapt4:eq:expct_risk}). A good generalization ability means that the learning model predicts well on unseen data as well as on the training data. In this regard, a small generalization error with a small training error implicates a small prediction error, as a small generalization error guarantees that the prediction performance is as well as the training performance. 

\begin{figure}[h!]
  \centering
  \includegraphics[width=0.66\textwidth]{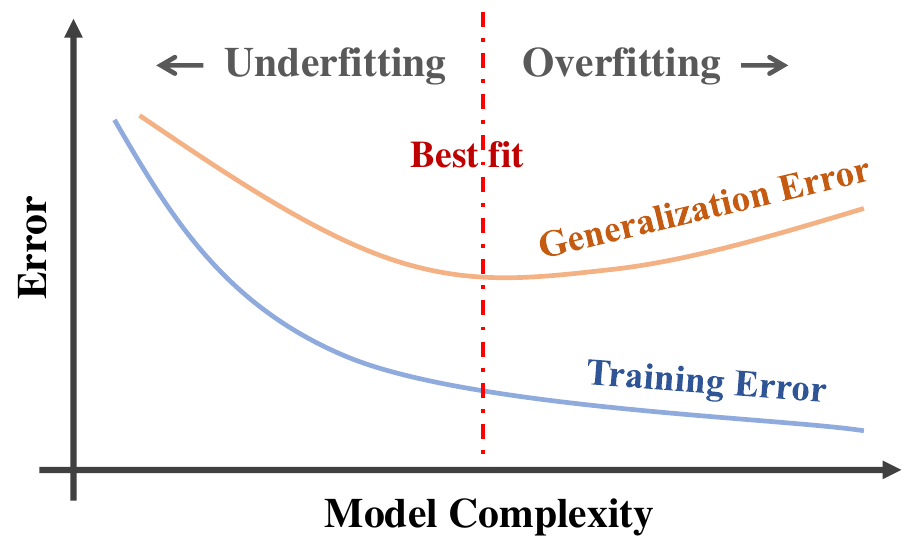}
  \caption{{\textbf{Influence of model complexity on generalization error.}} }
  \label{chap5:fig:u-curve} 
\end{figure}

In statistical learning theory, it has been well-established that a bias-variance trade-off governs the interplay between model complexity and generalization performance for any learning model, highlighting the delicate balance required for a model to generalize well to unseen data. The relationship is often depicted by a U-shaped curve, as shown in Figure~\ref{chap5:fig:u-curve}. This curve suggests that there exists an optimal level of model complexity for improving the generalization ability of any learning model. When under the point related to optimal expressivity, increasing model complexity improves performance on training data and enhances generalization. However, beyond a certain point, higher complexity leads to overfitting, resulting in poor generalization on test data. For QNNs, identifying this optimal level of complexity is crucial for achieving the best balance between training performance and generalization. 

\subsubsection{Expressivity of QNNs}
In this chapter, we analyze the generalization error of QNNs through a specific measure of model complexity: the covering number. By leveraging this measure, we aim to understand better and characterize the generalization performance of QNNs.

To elucidate the specific definition of the covering number, we first review the general structures of QNNs. Define $\rho\in \mathbb{C}^{2^N \times 2^N}$ as the $N$-qubit input quantum states, $O\in  \mathbb{C}^{2^N \times 2^N}$ as the quantum observable, $U(\bm{\theta})=\prod_{l=1}^{N_g} u_{l}(\bm{\theta}) \in \mathcal{U}(2^N)$ as the applied ansatz, where $\bm{\theta}\in \Theta$ are the trainable parameters living in the parameter space $\Theta$, $u_{l}(\bm{\theta}) \in \mathcal{U}(2^k)$ refers to the $l$-th quantum gate operated with at most $k$-qubits with $k\le N$, and $\mathcal{U}(2^N)$ stands for the unitary group in dimension $2^N$. In general, $U(\bm{\theta})$ is formed by $N_{gt}$ trainable gates and $N_g-N_{gt}$ fixed gates, e.g., $\Theta \subset [0, 2\pi)^{N_{gt}}$. Under the above definitions, the explicit form of the output of QNN under the ideal scenarios is 
\begin{equation}
  h(\bm{\theta},O,\rho):= \Tr\left( U(\bm{\theta})^{\dagger} O U(\bm{\theta}) \rho \right).
\end{equation}
Given the training data set $\mathcal{D}=\{(\rho^{(i)}, {y}^{(i)})\}_{i=1}^{n}$ and loss function $\mathcal{L}(\bm{\theta},\mathcal{D})$ defined in Eqn.~\eqref{chapt5:eq:loss_qnn}, QNN is optimized to find a good approximation $h^*(\bm{\theta},O,\rho)= \arg \min_{h(\bm{\theta},O,\rho) \in \mathcal{H}} \mathcal{L}(\bm{\theta},\mathcal{D})$ that can well approximate the target concept, where $\mathcal{H}$ refers to the hypothesis space of QNNs with
\begin{equation}\label{chap5:eqn:hypo-class}
  \mathcal{H} = \left\{ \Tr\left( U(\bm{\theta})^{\dagger} O U(\bm{\theta}) \rho \right)\Big| \bm{\theta}\in \Theta \right\}.
\end{equation}

An intuition about how the hypothesis space $\mathcal{H}$ affects the performance of QNNs is depicted in Figure~\ref{chap5:fig:qnn_learnability}.  When $\mathcal{H}$ has a modest size and covers the target concepts, the
estimated hypothesis could well approximate the target concept. By contrast, when the complexity of $\mathcal{H}$ is too low, there exists a large gap between the estimated hypothesis and the target concept. An effective measure to evaluate the complexity of $\mathcal{H}$ is covering number, an advanced tool broadly used in statistical learning theory, to bound the complexity of $\mathcal{H}$ and measure the expressivity of QNNs.
\begin{figure}[h!]
  \centering
  \includegraphics[width=0.48\textwidth]{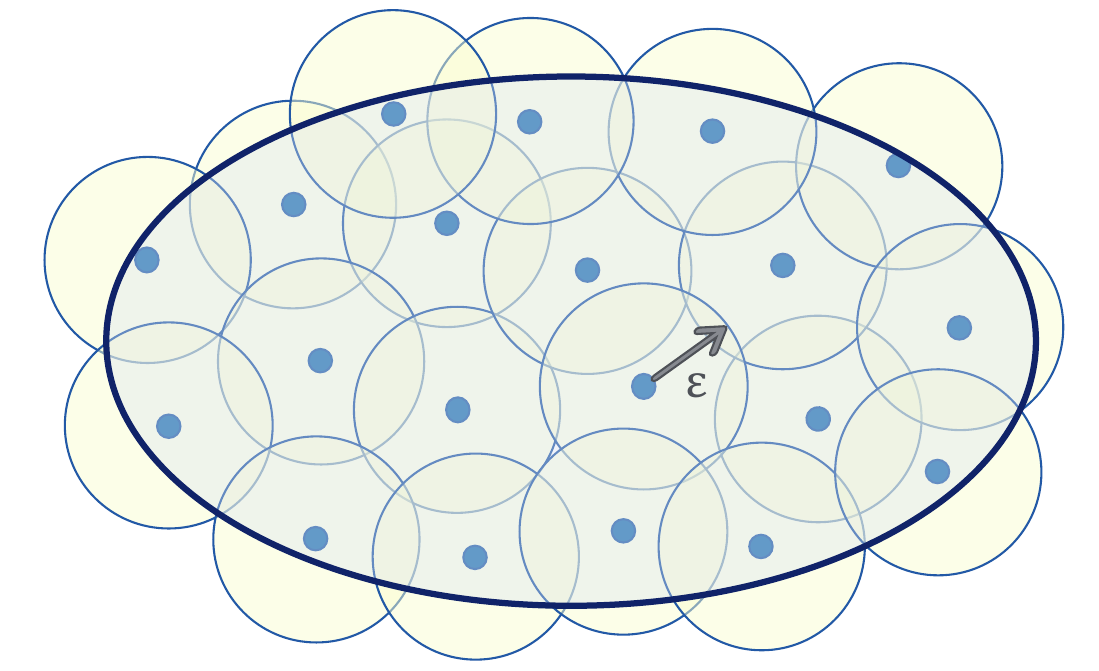}
  \caption{{\textbf{The geometric intuition of covering number.} Covering number concerns the minimum number of spherical balls with radius $\epsilon$ that occupy the whole space.} }
  \label{chap5:fig:cvr_num} 
\end{figure}

\begin{definition}
  [Covering number]
  \label{chap5:def:cvr_num}
  The covering number $\mathcal{N}(\mathcal{U}, \epsilon, \|\cdot\|)$ denotes the least cardinality of any subset $V \subset U$ that covers $U$ at scale $\epsilon$ with a norm $\|\cdot\|$, i.e., $\sup_{A \in \mathcal{U}}\min_{B\in \mathcal{V}}\|A-B\| \le \epsilon$. Here we use this notion to measure the expressivity of QNNs.
\end{definition}

The geometric interpretation of the covering number is depicted in Figure~\ref{chap5:fig:cvr_num}, which refers to the minimum number of spherical balls with radius $\epsilon$ that are required to completely cover a given space with possible overlaps. This notion has been employed to study other crucial topics in quantum physics such as Hamiltonian simulation and entangled states. Note that $\epsilon$ is a predefined hyper-parameter, i.e., a small constant with $\epsilon \in (0,1)$, and is independent of any factor. This convention has been broadly adopted in the regime of machine learning to evaluate the model capacity of various learning models.

Following the convention of \citet{du2022efficient}, we now give a step-by-step analysis of the model complexity of the hypothesis space $\mathcal{H}$ of QNNs defined in Eqn.~\eqref{chap5:eqn:hypo-class}. We will show that the covering number of QNNs is controlled by their structural factors, including the number of parameterized gates $N_{gt}$, the number of qubits $k$ the gates acting on, and the type of the quantum observable $O$. In the end, we first look at a simpler hypothesis space consisting of the 
operator group
\begin{equation}\label{chapt5:eq:H_circ}
  \mathcal{H}_{\Circ}:= \left\{ U(\bm{\theta})^{\dagger}OU(\bm{\theta}) \big| \bm{\theta} \in \Theta \right\},
\end{equation}
where we remove the factor of input states $\rho$ compared to the hypothesis space $\mathcal{H}$ related to QNNs. Actually, the covering number of $\mathcal{H}$ under the metric $d$ could be connected to the covering number of $\mathcal{H}_{\Circ}$ under the related metric $d_{\Circ}$ through employing their Lipschitz properties, which is encapsulated in the following Fact.

\begin{fact}\label{chapt5:lem:lip_cvr}
  Let $(\mathcal{H}_1,d_1)$ and $(\mathcal{H}_2,d_2)$ be two metric spaces satisfying $f: \mathcal{H}_1 \to \mathcal{H}_2$ be bi-Lipschitz such that
  \begin{equation}\label{chapt5:eq:bi_lip}
    c_l d_1(\bm{x},\bm{z}) \le d_2(f(\bm{x}),f(\bm{z})) \le c_r d_1(\bm{x},\bm{z}), ~\forall \bm{x},\bm{z} \in \mathcal{H}_1 .
  \end{equation}
  Then their covering number obey
  \begin{equation}\label{chapt5:eq:bound_lip_cvr}
    \mathcal{N}(\mathcal{H}_1,2\epsilon/c_l,d_1) \le  \mathcal{N}(\mathcal{H}_2,\epsilon,d_2) \le   \mathcal{N}(\mathcal{H}_1,\epsilon/c_r,d_1),
  \end{equation}
  where the left inequality requires $\epsilon\le c_lc_u/2$ with $c_u$ being the upper bound of the distance between any two points in $\mathcal{H}_1$, namely, $d_1(\bm{x},\bm{z})\le c_u$ for $\bm{x},\bm{z} \in \mathcal{H}_1 $.
\end{fact}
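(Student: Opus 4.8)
\textbf{Proof proposal for Fact~\ref{chapt5:lem:lip_cvr}.}

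The plan is to establish the two inequalities in Eqn.~\eqref{chapt5:eq:bound_lip_cvr} separately, each by a direct ``transport the net through $f$'' argument, using only the bi-Lipschitz property in Eqn.~\eqref{chapt5:eq:bi_lip}. Throughout I would fix $\epsilon>0$ and let $\mathcal{V}_1$ denote an optimal $\epsilon/c_r$-net of $(\mathcal{H}_1,d_1)$ and $\mathcal{V}_2$ an optimal $\epsilon$-net of $(\mathcal{H}_2,d_2)$, so that $|\mathcal{V}_1| = \mathcal{N}(\mathcal{H}_1,\epsilon/c_r,d_1)$ and $|\mathcal{V}_2| = \mathcal{N}(\mathcal{H}_2,\epsilon,d_2)$.

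For the right-hand inequality $\mathcal{N}(\mathcal{H}_2,\epsilon,d_2)\le \mathcal{N}(\mathcal{H}_1,\epsilon/c_r,d_1)$, I would push the net $\mathcal{V}_1$ forward: consider $f(\mathcal{V}_1)=\{f(\bm{v}):\bm{v}\in \mathcal{V}_1\}\subset \mathcal{H}_2$. Given an arbitrary point $\bm{y}\in \mathcal{H}_2$, surjectivity of $f$ onto the relevant range (which I would note is implicit in treating $\mathcal{H}_2$ as the image, or can be assumed) lets me write $\bm{y}=f(\bm{x})$ for some $\bm{x}\in \mathcal{H}_1$; pick $\bm{v}\in \mathcal{V}_1$ with $d_1(\bm{x},\bm{v})\le \epsilon/c_r$. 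Then the upper Lipschitz bound gives $d_2(\bm{y},f(\bm{v})) = d_2(f(\bm{x}),f(\bm{v})) \le c_r\, d_1(\bm{x},\bm{v}) \le \epsilon$, so $f(\mathcal{V}_1)$ is an $\epsilon$-net of $\mathcal{H}_2$ of cardinality at most $|\mathcal{V}_1|$, which yields the claim. For the left-hand inequality $\mathcal{N}(\mathcal{H}_1,2\epsilon/c_l,d_1)\le \mathcal{N}(\mathcal{H}_2,\epsilon,d_2)$, I would pull the net $\mathcal{V}_2$ back: for each $\bm{w}\in \mathcal{V}_2$ that lies within distance $\epsilon$ of the image $f(\mathcal{H}_1)$, choose a preimage-type point $\bm{u}_{\bm{w}}\in \mathcal{H}_1$ with $d_2(f(\bm{u}_{\bm{w}}),\bm{w})\le \epsilon$ (here the condition $\epsilon\le c_l c_u/2$ enters: it guarantees the net $\mathcal{V}_2$ is fine enough that such a point exists and the resulting balls are not vacuous). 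Given any $\bm{x}\in \mathcal{H}_1$, pick $\bm{w}\in \mathcal{V}_2$ with $d_2(f(\bm{x}),\bm{w})\le \epsilon$; then $d_2(f(\bm{x}),f(\bm{u}_{\bm{w}})) \le 2\epsilon$ by the triangle inequality, and the lower Lipschitz bound $c_l\, d_1(\bm{x},\bm{u}_{\bm{w}}) \le d_2(f(\bm{x}),f(\bm{u}_{\bm{w}}))$ gives $d_1(\bm{x},\bm{u}_{\bm{w}})\le 2\epsilon/c_l$. Hence $\{\bm{u}_{\bm{w}}\}$ is a $2\epsilon/c_l$-net of $\mathcal{H}_1$ with cardinality at most $|\mathcal{V}_2|$.

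I expect the main obstacle to be the careful bookkeeping around surjectivity and the role of the technical constraint $\epsilon\le c_l c_u/2$: the clean forward-push argument silently assumes $f$ maps onto $\mathcal{H}_2$ (or that $\mathcal{H}_2$ is \emph{defined} as $f(\mathcal{H}_1)$), and the pull-back argument needs the net points of $\mathcal{V}_2$ to actually be approximable by images of $\mathcal{H}_1$-points — which is exactly what the diameter condition $d_1(\bm{x},\bm{z})\le c_u$ combined with $\epsilon\le c_lc_u/2$ is there to ensure. The rest is routine triangle-inequality manipulation, so I would state these hypotheses precisely at the outset and then the two inequalities follow in a few lines each.
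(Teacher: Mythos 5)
The paper never proves this statement: it is presented as a \emph{Fact}, which by the paper's own convention means the proof is deferred to the cited literature (here the covering-number analysis of \citet{du2022efficient}), so there is no in-paper argument to compare yours against. Judged on its own, your proof is correct and is the standard net-transport argument. The right inequality follows from pushing an optimal $\epsilon/c_r$-net of $\mathcal{H}_1$ forward through $f$ and using the upper Lipschitz bound; the left inequality follows from pulling an optimal $\epsilon$-net of $\mathcal{H}_2$ back, using the triangle inequality to get $d_2(f(\bm{x}),f(\bm{u}_{\bm{w}}))\le 2\epsilon$ and then the lower Lipschitz bound to get $d_1(\bm{x},\bm{u}_{\bm{w}})\le 2\epsilon/c_l$. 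Both constructed covers are subsets of the spaces they cover, so they are legitimate under the internal-cover Definition~\ref{chap5:def:cvr_num}, and the surjectivity caveat you flag for the forward-push step is the right one to state; it is harmless in the paper's application (Theorem~\ref{chap5:thm:cvr_bound} only invokes the right inequality with $c_r=1$, where the target space is the image of the map in question).

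Two small inaccuracies, neither fatal. First, your explanation of the hypothesis $\epsilon\le c_lc_u/2$ is off: in your pull-back argument the point $\bm{u}_{\bm{w}}$ exists for every $\bm{w}$ that covers some $f(\bm{x})$ simply because $\bm{x}$ itself witnesses $d_2(f(\bm{x}),\bm{w})\le\epsilon$, independently of any diameter condition, so your proof of the left inequality actually needs no restriction on $\epsilon$ at all. The condition in the statement is presumably an artifact of an alternative derivation (e.g.\ via packing numbers, where one wants $2\epsilon/c_l$ below the diameter $c_u$) or is there to keep the bound non-vacuous; proving the claim under weaker hypotheses than stated is of course fine, but you should not attribute the condition to the existence of $\bm{u}_{\bm{w}}$. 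Second, "within distance $\epsilon$ of the image" is an infimum statement and does not by itself guarantee a point achieving $d_2(f(\bm{u}_{\bm{w}}),\bm{w})\le\epsilon$; the clean phrasing is to define $\bm{u}_{\bm{w}}$ only for those $\bm{w}$ admitting such a point, which is exactly the set of $\bm{w}$ you ever use. With those wording fixes the argument is complete.
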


Fact~\ref{chapt5:lem:lip_cvr} indicates that we can derive the covering number of the metric space $(\mathcal{H},d)$ by analyzing the covering number of the metric space $(\mathcal{H}_{\Circ},d_{\Circ})$ and the Lipschitz constants of the mapping between $\mathcal{H}$ and $\mathcal{H}_{\Circ}$. Intuitively, a quantum circuit consisting of a large number of multi-qubit parameterized gates leads to a complicated QNN with a large model complexity. These intuitions are formalized into Theorem~\ref{chap5:thm:cvr_bound}. Specifically, the result of the covering number of the metric space $(\mathcal{H}_{\Circ}, d_{\Circ})$ is encapsulated in Lemma~\ref{chapt5:lem:cvr_circuit}.

\begin{lemma}\label{chapt5:lem:cvr_circuit}
  Suppose that the employed $N$-qubit quantum circuit containing in total $N_g$ gates with $N_g >N$, each gate $u_i(\bm{\theta})$ acting on most $k$ qubits, and $N_{gt}\le N_g$ gates in $U(\bm{\theta})$ are trainable. The $\epsilon$-covering number for the operator group $\mathcal{H}_{\Circ}$ in Eqn.~\eqref{chapt5:eq:H_circ} with respect to the operator-norm distance obeys
  \begin{equation}
    \mathcal{N}(\mathcal{H}_{\Circ},\epsilon,\|\cdot\|) \le \left( \frac{7N_{gt}\|O\|}{\epsilon} \right)^{2^{2k}N_{gt}},
  \end{equation}
  where $\|O\|$ denotes the operator norm of $O$.
\end{lemma}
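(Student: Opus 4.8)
The plan is to bound $\mathcal{N}(\mathcal{H}_{\Circ},\epsilon,\|\cdot\|)$ by two elementary Lipschitz reductions that move the problem onto the individual trainable gates, and then to invoke the standard metric-entropy estimate for the unitary group. First I would show that the map $U\mapsto U^{\dagger}OU$ is $2\|O\|$-Lipschitz on the unitary group: for unitaries $A,B$,
\begin{equation}
  \|A^{\dagger}OA - B^{\dagger}OB\| \le \|A^{\dagger}O(A-B)\| + \|(A-B)^{\dagger}OB\| \le 2\|O\|\,\|A-B\|,
\end{equation}
using submultiplicativity of the operator norm, $\|A\|=\|B\|=1$, and $\|(A-B)^{\dagger}\|=\|A-B\|$. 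Next, writing $U(\bm{\theta})=\prod_{l=1}^{N_g}u_l(\bm{\theta})$ and telescoping over the factors (each flanking partial product is unitary and so drops out of the norm), while noting that fixed gates do not depend on $\bm{\theta}$,
\begin{equation}
  \|U(\bm{\theta}) - U(\bm{\theta}')\| \le \sum_{l=1}^{N_g}\|u_l(\bm{\theta}) - u_l(\bm{\theta}')\| = \sum_{l\in T}\|u_l(\bm{\theta}) - u_l(\bm{\theta}')\|,
\end{equation}
where $T$ is the index set of the $N_{gt}$ trainable gates.

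Then I would cover each trainable gate separately. A gate $u_l(\bm{\theta})$ acts on at most $k$ qubits, so up to a tensor factor $\mathbb{I}$ it ranges over a subset of $\mathcal{U}(2^k)$, and $\|u_l(\bm{\theta})-u_l(\bm{\theta}')\|$ equals the corresponding operator-norm distance there. Using the standard fact that the $\delta$-covering number of $\mathcal{U}(d)$ in the operator norm obeys $\mathcal{N}(\mathcal{U}(d),\delta,\|\cdot\|)\le(3/\delta)^{d^2}$ for $\delta\in(0,1]$ — obtained from a volumetric $\delta$-net of a radius-$\pi$ ball in the $d^2$-real-dimensional Lie algebra $\mathfrak{u}(d)$ pushed through the matrix exponential — each trainable gate admits a $\delta$-net of size at most $(3/\delta)^{2^{2k}}$. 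Taking the product of these nets over $l\in T$ and forming the operators $c^{\dagger}Oc$ from the resulting product unitaries $c$, the two Lipschitz bounds show this family is a $2\|O\|N_{gt}\delta$-cover of $\mathcal{H}_{\Circ}$ of cardinality at most $(3/\delta)^{2^{2k}N_{gt}}$. Choosing $\delta=\epsilon/(2\|O\|N_{gt})$ turns the cover scale into $\epsilon$ and gives
\begin{equation}
  \mathcal{N}(\mathcal{H}_{\Circ},\epsilon,\|\cdot\|) \le \left(\frac{6\|O\|N_{gt}}{\epsilon}\right)^{2^{2k}N_{gt}} \le \left(\frac{7N_{gt}\|O\|}{\epsilon}\right)^{2^{2k}N_{gt}},
\end{equation}
which is the claim; the loss from $6$ to $7$ is harmless, and the regime $\delta\le 1$ is the only one of interest since otherwise a single point covers.

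I expect the main obstacle to be the per-gate metric-entropy estimate for $\mathcal{U}(2^k)$: obtaining the exponent exactly $2^{2k}=(2^k)^2$, i.e. the real dimension of the unitary group rather than the $2(2^k)^2$ that a naive embedding of $2^k\times 2^k$ complex matrices into $\mathbb{R}^{2\cdot 4^k}$ would produce, genuinely requires the Lie-group/manifold structure (or a direct net construction via the exponential map), together with tracking the absolute constant closely enough that the final bound lands at $7$ rather than something larger. A secondary technical point is reconciling the internal nets of Definition~\ref{chap5:def:cvr_num} (where $V\subset U$) with the external product net built above; this costs at most a further factor of $2$ in scale, which can be absorbed or avoided outright by taking the trainable-gate families to be all of $\mathcal{U}(2^k)$, in which case the constructed cover already lies inside $\mathcal{H}_{\Circ}$.
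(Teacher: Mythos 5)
Your construction is essentially the paper's: replace each trainable gate by the nearest element of a net of $\mathcal{U}(2^k)$, bound the induced perturbation of $U(\bm{\theta})^{\dagger}OU(\bm{\theta})$ via the telescoping product inequality and the Lipschitz property of conjugation, take the product of the per-gate nets, and rescale $\delta$. The paper does exactly this, except that it imports the per-gate estimate $|\mathcal{S}|\le(7/\epsilon)^{2^{2k}}$ from Lemma~1 of \citep{barthel2018fundamental} rather than deriving it volumetrically, and it (silently) uses a Lipschitz constant of $\|O\|$ for $U\mapsto U^{\dagger}OU$ where you correctly use $2\|O\|$.

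The one point to flag is the constant you rely on. The bound $\mathcal{N}(\mathcal{U}(d),\delta,\|\cdot\|)\le(3/\delta)^{d^2}$ is stronger than what your own sketch delivers: covering the radius-$\pi$ operator-norm ball in the $d^2$-real-dimensional algebra $\mathfrak{u}(d)$ and pushing it through the (1-Lipschitz) exponential map gives a net of size at most $(1+2\pi/\delta)^{d^2}\approx(3\pi/\delta)^{d^2}$, not $(3/\delta)^{d^2}$; with your (correct) factor $2\|O\|$ this lands at a final constant well above $7$. So, taken literally, your argument proves the lemma with a larger absolute constant in place of $7$, and reaching exactly $7$ would need either the unproved constant-$3$ net or the paper's unjustified factor-$1$ Lipschitz step. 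This is not a structural defect: the exponent $2^{2k}N_{gt}$ and the $N_{gt}\|O\|/\epsilon$ scaling, which are all that matter downstream, are obtained correctly, and the paper's own proof has the mirror-image slack (Barthel--Lu's $7$ combined with the dropped factor of $2$ would honestly give $14$). Your handling of the internal-net subtlety in Definition~\ref{chap5:def:cvr_num} is a care the paper does not take and is fine as you describe it.
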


\begin{proof}[Proof of Lemma~\ref{chapt5:lem:cvr_circuit}]
  To measure the covering number the operator group of $\mathcal{H}_{\Circ}= \{ U(\bm{\theta})^{\dagger}OU(\bm{\theta}) \big| \bm{\theta} \in \Theta \}$, one could first consider a fixed $\epsilon$-covering $\mathcal{S}$ for the set $\mathcal{N}(U(2^k), \epsilon, \|\cdot\|)$ of all possible gates and define the set 
  \begin{equation}
    \tilde{\mathcal{S}}:=\left\{\prod_{i\in\{N_{gt}\}}{u}_i(\bm{\theta}_i)\prod_{j\in\{N_g-N_{gt}\}}{u}_j\Big| {u}_i(\bm{\theta}_i)\in \mathcal{S}\right\},
  \end{equation}
  where ${u}_i(\bm{\theta}_i)$ and ${u}_j$ specify the trainable and fixed quantum gates in the employed quantum circuit, respectively. Note that for any circuit ${U}(\bm{\theta}) = \prod_{i=1}^{N_g}{u}_i(\bm{\theta}_i)$, one can always find a ${U}_{\epsilon}(\bm{\theta})\in\tilde{\mathcal{S}}$ where each ${u}_i(\bm{\theta}_i)$ of trainable gates is replaced with the nearest element in \textit{the covering set $\mathcal{S}$}, and the discrepancy $\|{U}(\bm{\theta})^{\dagger}O{U}(\bm{\theta})-{U}_{\epsilon}(\bm{\theta})^{\dagger}O{U}_{\epsilon}(\bm{\theta})\|$ satisfies
  \begin{align}\label{eqn:append-proof-lemma2-1}
    & \|{U}(\bm{\theta})^{\dagger}O{U}(\bm{\theta})-{U}_{\epsilon}(\bm{\theta})^{\dagger}O{U}_{\epsilon}(\bm{\theta})\| \nonumber\\
    \leq &  \|{U} -{U}_{\epsilon}\|   \|O\| \nonumber\\
    \leq &    N_{gt} \|O\|  \epsilon,
  \end{align} 
   where the first inequality uses the triangle inequality, and the second inequality follows from $ \|{U} -{U}_{\epsilon}\| \leq N_{gt}\epsilon$.

   Therefore, by Definition~\ref{chap5:def:cvr_num},  $\tilde{\mathcal{S}}$ forms an $N_{gt} \|O\| \epsilon$-covering set for $\mathcal{H}_{\Circ}$. An upper bound for the group $\mathcal{S}$, as established by \citet[Lemma 1]{barthel2018fundamental}, gives $|\mathcal{S}|\leq\left(\frac{7}{\epsilon} \right)^{2^{2k}}$. Since there are $|\mathcal{S}|^{N_{gt}}$ combinations for the gates in $\tilde{\mathcal{S}}$, it follows that $|\tilde{\mathcal{S}}|\leq\left(\frac{7}{\epsilon} \right)^{2^{2k}N_{gt}}$  and the covering number for $\mathcal{H}_{\Circ}$ satisfies 
   \begin{equation}
      \mathcal{N}(\mathcal{H}_{\Circ},  N_{gt}  \|O\|  \epsilon, \|\cdot\|) \leq  \left(\frac{7 }{\epsilon} \right)^{2^{2k}N_{gt}}.   
   \end{equation}
   An equivalent representation of the above inequality is
   \begin{equation}
     \mathcal{N}(\mathcal{H}_{\Circ},   \epsilon, \|\cdot\|) \leq  \left(\frac{7 N_{gt}  \|O\| }{\epsilon} \right)^{2^{2k}N_{gt}}.  
   \end{equation}
   
\end{proof}

With the established covering number of operator group $\mathcal{H}_{\Circ}$, one could directly analyze the covering number of the hypothesis space $\mathcal{H}$ related to QNNs, which is encapsulated in the following theorem.

\begin{theorem}\label{chap5:thm:cvr_bound}
  For $0<\epsilon < 1/10$, the covering number of   the hypothesis space $\mathcal{H}$ in Eqn.~\eqref{chap5:eqn:hypo-class} yields
  \begin{equation}
    \mathcal{N}(\mathcal{H}, \epsilon, |\cdot|) \le \left( \frac{7N_{gt}\|O\|}{\epsilon}\right)^{2^{2k}N_{gt}},
  \end{equation} 
  where $\|O\| $ denotes the operator norm of $O$.
\end{theorem}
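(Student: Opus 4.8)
The plan is to reduce the covering number of $\mathcal{H}$ to that of the operator group $\mathcal{H}_{\Circ}$ already bounded in Lemma~\ref{chapt5:lem:cvr_circuit}, using the bi-Lipschitz transfer principle of Fact~\ref{chapt5:lem:lip_cvr}. The map connecting the two spaces is $f: \mathcal{H}_{\Circ} \to \mathcal{H}$ defined by $f(M) = \Tr(M\rho)$, sending the operator $U(\bm{\theta})^{\dagger}OU(\bm{\theta})$ to the scalar $h(\bm{\theta},O,\rho)$. First I would establish that this map is Lipschitz in one direction: for any two operators $M_1, M_2 \in \mathcal{H}_{\Circ}$, by H\"older's inequality (or the variational characterization of the trace norm against the operator norm together with $\Tr(\rho)=1$, $\rho\succeq 0$), $|\Tr(M_1\rho) - \Tr(M_2\rho)| = |\Tr((M_1-M_2)\rho)| \le \|M_1 - M_2\| \cdot \Tr(\rho) = \|M_1 - M_2\|$. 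This gives the right-hand Lipschitz constant $c_r = 1$ in Eqn.~\eqref{chapt5:eq:bi_lip}, which is exactly the inequality we need for the upper bound direction of Eqn.~\eqref{chapt5:eq:bound_lip_cvr}.

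Next I would invoke the right-hand inequality of Fact~\ref{chapt5:lem:lip_cvr}, namely $\mathcal{N}(\mathcal{H},\epsilon,|\cdot|) \le \mathcal{N}(\mathcal{H}_{\Circ}, \epsilon/c_r, \|\cdot\|) = \mathcal{N}(\mathcal{H}_{\Circ}, \epsilon, \|\cdot\|)$, since $c_r = 1$. Then I would substitute the bound from Lemma~\ref{chapt5:lem:cvr_circuit}, which gives $\mathcal{N}(\mathcal{H}_{\Circ},\epsilon,\|\cdot\|) \le \left( \tfrac{7 N_{gt}\|O\|}{\epsilon}\right)^{2^{2k}N_{gt}}$, yielding precisely the claimed bound
\begin{equation}
  \mathcal{N}(\mathcal{H}, \epsilon, |\cdot|) \le \left( \frac{7N_{gt}\|O\|}{\epsilon}\right)^{2^{2k}N_{gt}}.
\end{equation}
The hypothesis $0 < \epsilon < 1/10$ plays only a mild role here — it guarantees the covering set of $\mathcal{H}_{\Circ}$ is nontrivial (the base $7N_{gt}\|O\|/\epsilon > 1$) and is consistent with the regime in which Lemma~\ref{chapt5:lem:cvr_circuit}'s underlying gate-covering estimate $|\mathcal{S}| \le (7/\epsilon)^{2^{2k}}$ is meaningful; I would simply carry this assumption through.

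The main obstacle — and really the only nontrivial point — is to be careful about which direction of the bi-Lipschitz condition is actually needed and to verify that only the $c_r$ (upper) side is required, so that no lower bound on $|\Tr(M_1\rho) - \Tr(M_2\rho)|$ in terms of $\|M_1-M_2\|$ is needed (indeed no such lower bound holds in general, since $\rho$ can be low-rank and annihilate $M_1 - M_2$). Thus I would not apply Fact~\ref{chapt5:lem:lip_cvr} as a two-sided statement but only extract its right-hand inequality, which requires nothing beyond $c_r$-Lipschitzness. A secondary subtlety is matching the norms: the covering of $\mathcal{H}$ is with respect to the absolute-value metric $|\cdot|$ on scalars while $\mathcal{H}_{\Circ}$ uses the operator norm, and the Lipschitz bound above is exactly the compatibility statement bridging them. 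Once these bookkeeping points are settled, the proof is a one-line substitution.
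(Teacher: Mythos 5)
Your proposal is correct and follows essentially the same route as the paper's own proof: bound the distance $|\Tr(M_1\rho)-\Tr(M_2\rho)| \le \|M_1-M_2\|\,\Tr(\rho)$ to get the Lipschitz constant $c_r=1$, apply the right-hand inequality of Fact~\ref{chapt5:lem:lip_cvr} to transfer the covering number from $(\mathcal{H}_{\Circ},\|\cdot\|)$ to $(\mathcal{H},|\cdot|)$, and substitute Lemma~\ref{chapt5:lem:cvr_circuit}. Your explicit remark that only the one-sided ($c_r$) Lipschitz estimate is needed — and that no lower bound can hold since $\rho$ may annihilate $M_1-M_2$ — is a correct and slightly cleaner statement of the same argument.
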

\begin{proof}[Proof of Theorem~\ref{chap5:thm:cvr_bound}]
  The intuition of the proof is as follows. Recall the definition of the hypothesis space $\mathcal{H}$ in Eqn.~\eqref{chap5:eqn:hypo-class} and Lemma \ref{chapt5:lem:lip_cvr}. When $\mathcal{H}_1$ refers to the hypothesis space $\mathcal{H}$ and $\mathcal{H}_2$ refers to the unitary group $\mathcal{U}(2^N)$, the upper bound of the covering number of $\mathcal{H}$, i.e., $\mathcal{N}(\mathcal{H}_1, d_1, \epsilon )$, can be derived by first quantifying $c_r$  Eqn.~\eqref{chapt5:eq:bi_lip}, and then interacting with $\mathcal{N}(\mathcal{H}_{\Circ}, \epsilon, \|\cdot\|)$ in Lemma \ref{chapt5:lem:cvr_circuit}. Based on the above observations, the following addresses the upper bound of the covering number $\mathcal{N}(\mathcal{H},  \epsilon, |\cdot|)$.
  
  The Lipschitz constant $c_r$ in Eqn.~\eqref{chapt5:eq:bi_lip} is derived as a prerequisite for establishing the upper bound of $\mathcal{N}(\mathcal{H},  \epsilon, |\cdot |)$. Define ${U}\in \mathcal{U}(2^N)$ as the employed quantum circuit composed of $N_g$ gates, i.e., ${U}=\prod_{i=1}^{N_g} {u}_l$.  Let ${U}_{\epsilon}$ be the quantum circuit where each of the $N_g$ gates is replaced by the nearest element in the covering set. The relation between the distance $d_2(\Tr({U}_{\epsilon}^{\dagger}O{U}_{\epsilon}\rho), \Tr({U}^{\dagger}O{U} \rho))$ and the distance $d_1({U}_{\epsilon}, {U})$ yields  
  \begin{align}
    & d_2(\Tr({U}_{\epsilon}^{\dagger}O{U}_{\epsilon}\rho), \Tr({U}^{\dagger}O{U} \rho)) \nonumber\\
  =   & |\Tr({U}_{\epsilon}^{\dagger}O{U}_{\epsilon}\rho) - \Tr({U}^{\dagger}O{U} \rho) | \nonumber\\
    = & \left|\Tr\left(({U}_{\epsilon}^{\dagger}O{U}_{\epsilon}  -  {U}^{\dagger}O{U} ) \rho \right) \right| \nonumber\\
    \leq & \left\|{U}_{\epsilon}^{\dagger}O{U}_{\epsilon}  -  {U}^{\dagger}O{U} \right\|\Tr(\rho) \nonumber\\
    = & d_1({U}_{\epsilon}^{\dagger}O{U}_{\epsilon}, {U}^{\dagger}O{U}), 
  \end{align}
  where the first equality comes from the explicit form of the hypothesis, the first inequality uses the Cauchy-Schwartz inequality, and the last inequality employs $\Tr(\rho)=1$ and  
  \begin{equation}\label{eqn:dist-cov}
    \left\|{U}_{\epsilon}^{\dagger}O{U}_{\epsilon}  -  {U}^{\dagger}O{U} \right\| = d_1({U}_{\epsilon}^{\dagger}O{U}_{\epsilon}, {U}^{\dagger}O{U}).
  \end{equation}
  
  The above equation indicates $c_r=1$. Combining the above result with   Lemma \ref{chapt5:lem:lip_cvr} (i.e., Eqn.~\eqref{chapt5:eq:bi_lip}) and Lemma \ref{chapt5:lem:cvr_circuit}, we obtain
  \begin{equation}
    \mathcal{N}(\mathcal{H},  \epsilon,  |\cdot |)  \leq   \mathcal{N}(\mathcal{H}_{\Circ}, \epsilon, \|\cdot\|) \leq  \left(\frac{7 N_{gt}  \|O\| }{\epsilon} \right)^{2^{2k}N_{gt}}.
  \end{equation} 
  
  This relation ensures 
  \begin{equation}
    \mathcal{N}(\mathcal{H},  \epsilon, |\cdot|) \leq \left(\frac{7 N_{gt}  \|O\| }{\epsilon} \right)^{2^{2k}N_{gt}}.
  \end{equation}

\end{proof}

Theorem~\ref{chap5:thm:cvr_bound} indicates that the most decisive factor, which controls the complexity of $\mathcal{H}$, is the employed quantum gates in $U(\bm{\theta})$. This claim is ensured by the fact that the term $2^{2^kN_{gt}}$ exponentially scales the complexity $\mathcal{N}(\mathcal{H}, \epsilon, |\cdot|)$. Meanwhile, the qubits count $N$ and the operator norm $\|O\|$ polynomially scale the complexity of $\mathcal{N}(\mathcal{H}, \epsilon, |\cdot|)$. These observations suggest a succinct and direct way to compare the expressivity of QNNs with different quantum circuits. Moreover, the dependence of the expressivity of QNNs on the type of quantum
gates (denoted by the term $k$) demonstrated that the expressivity of QNNs depends on the structure information of ansatz such as the location of different quantum gates and the types of the employed quantum gates. The expressivity measured by the covering number could provide practical guidance for designing the circuit structure of QNNs.

\subsubsection{Generalization error of QNNs}

As the relation between generalization error and covering number is well-established in statistical learning theory, we can directly obtain the generalization error bound with the above bounds of covering number following the same conventions.

\begin{theorem}\label{chapt4:them:gene-QNN}
  Assume that the loss function $\ell$ defined in Eqn.~\eqref{chapt4:eq:expct_loss} is $L_1$-Lipschitz and upper bounded by a constant $C$, the QNN-based learning algorithm outputs a hypothesis $h_{\hat{\bm{\theta}}}$ from the training dataset $\mathcal{S}$ of size $n$. Following the notations of risk $R_{\Gene}(h_{\hat{\bm{\theta}}})=R(h_{\hat{\bm{\theta}}})-R_{\ERM}(h_{\hat{\bm{\theta}}})$ defined in Eqn.~\eqref{chapt4:eq:expct_risk}, for $0<\epsilon<1/10$, with probability at least $1-\delta$ with $\delta \in (0,1)$, we have
  \begin{equation}\label{chapt4:eq:gene_qnn}
    R_{\Gene}(h_{\hat{\bm{\theta}}}) \le \mathcal{O}\left(\frac{8L+c+24L \sqrt{N_{gt}}\cdot 2^k}{\sqrt{n}} \right).
  \end{equation}
\end{theorem}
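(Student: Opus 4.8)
The plan is to bound the generalization error $R_{\Gene}(h_{\hat{\bm{\theta}}})$ by first translating it into a statement about the empirical Rademacher complexity of the hypothesis class $\mathcal{H}$ (or, more precisely, the class of loss-composed functions $\ell \circ \mathcal{H}$), and then controlling that complexity via the covering-number bound established in Theorem~\ref{chap5:thm:cvr_bound}. Concretely, the first step would be to invoke a uniform convergence result of the same flavor as Fact~\ref{fact:gene_rademacher} (Theorem 3.3 in \citet{mohri2018foundations}): with probability at least $1-\delta$,
\begin{equation}
  R_{\Gene}(h_{\hat{\bm{\theta}}}) \le 2 \widehat{\mathfrak{R}}_n(\ell \circ \mathcal{H}) + 3C\sqrt{\frac{\log(2/\delta)}{2n}},
\end{equation}
where $\widehat{\mathfrak{R}}_n$ denotes the empirical Rademacher complexity over the $n$ training samples and $C$ is the uniform bound on $\ell$. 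Since $\ell$ is $L_1$-Lipschitz, Talagrand's contraction lemma gives $\widehat{\mathfrak{R}}_n(\ell \circ \mathcal{H}) \le L_1 \widehat{\mathfrak{R}}_n(\mathcal{H})$, so it remains to bound the Rademacher complexity of $\mathcal{H}$ itself.

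The second step is to pass from the covering number to the Rademacher complexity using Dudley's entropy integral (chaining). Here $\mathcal{H}$ is a class of real-valued functions taking values in $[-\|O\|, \|O\|]$, so $\widehat{\mathfrak{R}}_n(\mathcal{H})$ is controlled by
\begin{equation}
  \widehat{\mathfrak{R}}_n(\mathcal{H}) \le \inf_{\alpha>0}\left( 4\alpha + \frac{12}{\sqrt{n}}\int_{\alpha}^{\|O\|} \sqrt{\log \mathcal{N}(\mathcal{H}, \epsilon, |\cdot|)}\, d\epsilon \right).
\end{equation}
Plugging in $\log \mathcal{N}(\mathcal{H}, \epsilon, |\cdot|) \le 2^{2k} N_{gt} \log(7 N_{gt}\|O\| / \epsilon)$ from Theorem~\ref{chap5:thm:cvr_bound}, the integrand is $\sqrt{2^{2k} N_{gt}}\cdot\sqrt{\log(7N_{gt}\|O\|/\epsilon)} = 2^k \sqrt{N_{gt}}\cdot\sqrt{\log(\cdots)}$. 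The $\int \sqrt{\log(c/\epsilon)}\,d\epsilon$ term converges and contributes only a bounded (up to log) factor, so choosing $\alpha \sim 1/\sqrt{n}$ yields a bound of order $(2^k \sqrt{N_{gt}} + \text{const})/\sqrt{n}$. Combining the Lipschitz constant $L$, the range constant $c$, and the numerical constants from chaining and from the uniform convergence bound reproduces the claimed form
\begin{equation}
  R_{\Gene}(h_{\hat{\bm{\theta}}}) \le \mathcal{O}\!\left(\frac{8L + c + 24 L\sqrt{N_{gt}}\cdot 2^k}{\sqrt{n}}\right).
\end{equation}

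The main obstacle I anticipate is the careful bookkeeping of constants: the stated bound has explicit coefficients ($8$, $24$), which means one must track precisely the constants appearing in Talagrand's contraction lemma, in Dudley's integral, and in the optimization over $\alpha$, as well as handle the logarithmic factor from $\int \sqrt{\log(7N_{gt}\|O\|/\epsilon)}\,d\epsilon$ so that it can be absorbed into the $\mathcal{O}(\cdot)$ without affecting the leading $\sqrt{N_{gt}}\cdot 2^k$ scaling. A secondary subtlety is ensuring the restriction $0<\epsilon<1/10$ (inherited from Theorem~\ref{chap5:thm:cvr_bound}) is compatible with the lower limit $\alpha$ of the entropy integral; one would split the integral at $1/10$ and bound the (small) remaining piece trivially using the boundedness of $\mathcal{H}$. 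These are all routine but must be done carefully to match the precise constants in the statement, so I would follow \citet{du2022efficient} for the exact chaining argument rather than rederiving it from scratch.
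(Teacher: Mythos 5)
Your proposal follows essentially the same route as the paper's proof: bound $R_{\Gene}$ by the empirical Rademacher complexity (with the Lipschitz constant entering via contraction, which the paper imports directly from the \citet{kakade2008complexity} bound rather than invoking Talagrand separately), then control the Rademacher complexity through Dudley's entropy integral and plug in the covering-number bound of Theorem~\ref{chap5:thm:cvr_bound}, deferring the constant bookkeeping to \citet{du2022efficient}. This matches the paper's argument, so no further comparison is needed.
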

\begin{proof}
    [Proof sketch of Theorem~\ref{chapt4:them:gene-QNN}] 
    Recall that the bound of generalization error in terms of Rademacher complexity has been established by \citet{kakade2008complexity} as follows
    \begin{equation}\label{chapt4:eq:gene_radem_qnn}
        R_{\Gene}(h_{\hat{\bm{\theta}}}) \le 2L_1 \mathcal{R}(\mathcal{H}_{QNN}) + 3C \sqrt{\frac{\ln(2/\delta)}{2n}},
    \end{equation}
    where $\mathcal{R}(\mathcal{H}_{QNN})$ represents the empirical Rademacher complexity of the hypothesis space of QNNs.
    Furthermore,  the relationship between Rademacher complexity and covering number can be derived using the Dudley entropy integral bound \citet{dudley1967sizes}, which is given by 
    \begin{equation}\label{chapt4:eq:radem_covering}
        \mathcal{R}(\mathcal{H}) \le \inf_{\alpha>0} \left( 4\alpha + \frac{12}{\sqrt{n}} \int_{\alpha}^1 \sqrt{\ln \mathcal{N}(\mathcal{H}_{|\mathcal{S}}, \epsilon,\|\cdot\|_2) }\mathrm{d} \epsilon \right),
    \end{equation}
    where $\mathcal{H}_{|\mathcal{S}}$ denotes the set of vectors formed by the hypothesis with $n$ examples in the dataset $\mathcal{S}$. In this regard, the generalization error bound in Eqn.~\eqref{chapt4:eq:gene_qnn} could be obtained by combining the Eqn.~\eqref{chapt4:eq:gene_radem_qnn} and Eqn.~\eqref{chapt4:eq:radem_covering} with direct but tedious calculations, which is omitted here. For details of the calculations, please refer to the proof of Theorem~2 in \citet{du2022efficient}.
\end{proof}

The assumption used in this analysis is quite mild, as the loss functions in QNNs are generally Lipschitz continuous and can be bounded above by a constant $C$.
This property has been broadly employed to understand
the capability of QNNs. The results obtained have three key implications. First, the generalization bound exhibits an exponential dependence on the term $k$ and a sublinear dependence on the number of trainable quantum gates $N_{gt}$. This observation reflects the quantum version of Occam's razor \citep{haussler1987occam}, where the parsimony of the output hypothesis implies greater predictive power. 
Second, increasing the number of training examples $n$ improves the generalization bound. This suggests that incorporating more training data is essential for optimizing complex quantum circuits. Lastly, the sublinear dependence on $N_{gt}$ may limit the ability to accurately assess the generalization performance of overparameterized QNNs \citep{larocca2023theory}. Together, these implications provide valuable insights for designing more powerful QNNs.

\subsection{Trainability of quantum neural networks}
\label{chapt5:sec:trainability}

The parameters in QNNs are often trained using gradient-based optimizers. As such, the magnitude of the gradient plays a crucial role in the trainability of QNNs. Specifically, large gradients are desirable, as they allow the loss function to decrease rapidly and consistently. However, this favorable property does not hold across a wide range of problem settings. In contrast, training QNNs usually encounters the barren plateau (BP) problem~\citep{mcclean2018barren}, i.e., \textit{the variance of the gradient, on average, decreases exponentially as the number of qubits increases}. In this section, we first introduce an example demonstrating how quantum circuits that form unitary $2$-designs~\citep{dankert2009exact} lead to BP, and then discuss several techniques to avoid or mitigate this issue.

We begin by introducing some basic notations. For convenience, let $L$ denote the number of parameters in the QNN $V(\bm{\theta})$. Consider the loss function defined as the measurement outcome of an $N$-qubit quantum state $\rho$ after applying the QNN operation, i.e.,
\begin{equation}\label{ch5_eq_bp_loss} f(\bm{\theta}) = \Tr \left[ O V(\bm{\theta}) \rho V(\bm{\theta})^\dag \right], \end{equation}
Then, the mathematical formulation of the BP phenomenon is given by
\begin{equation} \mathbb{E}_{\mathcal{P}} \left[ \frac{\partial f}{\partial \theta_k} \right] = 0, \quad {\rm Var}_{\mathcal{P}} \left[ \frac{\partial f}{\partial \theta_k} \right] = \exp(-\alpha N) \cdot \beta, 
\end{equation}
where $\mathcal{P}$ represents the probability distribution of the quantum circuit, and $\alpha,\beta>0$ are constants. In the case where the circuit $V(\bm{\theta})$ has a random structure with a polynomial number of single-qubit rotations and CNOT or CZ gates in $N$, a uniform distribution over the parameter space can approximate a 2-design for the unitary $V(\bm{\theta})$~\citep{harrow2009random,haferkamp2022random}. Moreover, a unitary sampled from an exact 2-design exhibits the following statistical properties.

\begin{fact}[ \citet{cerezo2021cost}]\label{vqacr_qntk_tdesign_tr_wawb}
Let $\{W_y\}_{y \in Y} \subset \mathcal{U}(d)$ form a unitary $2$-design, and let $A,B,C,D: \mathcal{H}_w \rightarrow \mathcal{H}_w$ be arbitrary linear operator. Then
\begin{align}
{}& \frac{1}{|Y|} \sum_{y \in Y} \Tr [W_y A W_y^\dag B] =\frac{\Tr[A]\Tr[B]}{d}, \label{vqacr_qntk_tdesign_tr_wawb_eq} \\
{}& \frac{1}{|Y|} \sum_{y \in Y} \Tr [W_y A W_y^\dag B ] \Tr [ W_y C W_y^\dag D] \notag \\
={}& \frac{1}{d^2-1} \left( \Tr[AC] \Tr[BD] + \Tr[A]\Tr[B]\Tr[C]\Tr[D] \right) \notag \\
-{}& \frac{1}{d(d^2-1)} \left( \Tr[A]\Tr[C] \Tr[BD] + \Tr[AC]\Tr[B] \Tr[D] \right) , \label{vqacr_qntk_tdesign_trwawbtrwcwd_eq} \\
{}& \frac{1}{|Y|} \sum_{y \in Y} \Tr [W_y A W_y^\dag B W_y C W_y^\dag D] \notag \\
={}& \frac{1}{d^2-1} \left( \Tr[A]\Tr[C] \Tr[BD] + \Tr[AC]\Tr[B] \Tr[D] \right) \notag \\
-{}& \frac{1}{d(d^2-1)} \left( \Tr[AC] \Tr[BD] + \Tr[A]\Tr[B]\Tr[C]\Tr[D] \right) . \label{vqacr_qntk_tdesign_tr_wawbwcwd_eq}
\end{align}
\end{fact}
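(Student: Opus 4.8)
\textbf{Proof proposal for Fact~\ref{vqacr_qntk_tdesign_tr_wawb}.}

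The plan is to prove these identities by reducing everything to the standard Weingarten-type averages over the Haar measure on $\mathcal{U}(d)$, using the defining property that a unitary $2$-design reproduces Haar averages for all polynomials of degree at most $2$ in the entries of $W$ and at most $2$ in the entries of $W^\dagger$. First I would observe that each of the three left-hand sides is such a balanced polynomial: Eqn.~\eqref{vqacr_qntk_tdesign_tr_wawb_eq} is degree $(1,1)$, while Eqns.~\eqref{vqacr_qntk_tdesign_trwawbtrwcwd_eq} and \eqref{vqacr_qntk_tdesign_tr_wawbwcwd_eq} are degree $(2,2)$. Hence $\frac{1}{|Y|}\sum_{y\in Y}(\cdot)$ equals the corresponding integral $\int_{\mathcal{U}(d)}(\cdot)\,\mathrm{d}\mu(W)$ with $\mu$ the Haar measure, and it suffices to compute these three Haar integrals.

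For the first identity I would write $\Tr[WAW^\dagger B] = \sum_{i,j,k,l} W_{ij}A_{jk}W^*_{lk}B_{li}$ and use the first-order Haar formula $\int W_{ij}W^*_{lk}\,\mathrm{d}\mu = \frac{1}{d}\delta_{il}\delta_{jk}$; contracting indices immediately yields $\frac{1}{d}\Tr[A]\Tr[B]$. For the two second-order identities I would expand $\Tr[WAW^\dagger B]\Tr[WCW^\dagger D]$ and $\Tr[WAW^\dagger B W C W^\dagger D]$ into sums over eight indices with two factors $W$ and two factors $W^*$, then invoke the second-order Weingarten formula
\begin{equation}
  \int_{\mathcal{U}(d)} W_{i_1 j_1} W_{i_2 j_2} W^*_{i_1' j_1'} W^*_{i_2' j_2'}\,\mathrm{d}\mu
  = \sum_{\sigma,\tau\in S_2} \delta_{i_1 i'_{\sigma(1)}}\delta_{i_2 i'_{\sigma(2)}}\delta_{j_1 j'_{\tau(1)}}\delta_{j_2 j'_{\tau(2)}}\,\mathrm{Wg}(\sigma\tau^{-1},d),
\end{equation}
with the $S_2$ Weingarten values $\mathrm{Wg}(e,d)=\frac{1}{d^2-1}$ and $\mathrm{Wg}((12),d)=\frac{-1}{d(d^2-1)}$. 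The four $(\sigma,\tau)$ pairs then produce exactly the four trace monomials appearing on each right-hand side; matching which permutation pattern produces which contraction of $A,B,C,D$ distinguishes the ``product of traces'' identity from the ``single trace'' identity. I would carry out the index bookkeeping once carefully for \eqref{vqacr_qntk_tdesign_trwawbtrwcwd_eq} and then note that \eqref{vqacr_qntk_tdesign_tr_wawbwcwd_eq} follows from the same four terms with the roles of the $e$ and $(12)$ contributions interchanged, because closing the traces in the single-trace expression versus the product-of-traces expression swaps which index pairing is ``diagonal.''

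The main obstacle is purely organizational rather than conceptual: keeping the eight summation indices straight through the four Weingarten terms, and correctly tracking which of the pairings $\{A,B,C,D\}\to\{\Tr[AC]\Tr[BD],\ \Tr[A]\Tr[B]\Tr[C]\Tr[D],\dots\}$ arises from each $(\sigma,\tau)$. A clean way to avoid sign and index errors is to use the graphical/tensor-network contraction picture for Haar averaging, or equivalently to verify the final formulas on the special cases $A=B=C=D=\mathbb{I}$ and $A=C=\mathbb{I}$ (where both sides collapse to elementary expressions in $d$) as a consistency check. One should also remark that a $2$-design suffices precisely because no higher moments of $W$ appear, so no appeal to the full Haar measure beyond second order is needed; citing \citet{cerezo2021cost} (or a standard reference on unitary designs and Weingarten calculus) for the $2$-design--Haar equivalence and the Weingarten values completes the argument.
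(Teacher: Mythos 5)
Your proposal is correct and follows essentially the same route the paper indicates: it reduces the $2$-design averages to Haar first- and second-moment integrals (the element-wise formulas given as Facts~\ref{app_haar_design_def_ave_1_design} and \ref{app_haar_design_def_ave_2_design} in Appendix~\ref{App:Haar_design}, which are exactly the $S_2$ Weingarten expansion you invoke) and then contracts indices to obtain the three trace identities. The remaining work in both versions is the same index bookkeeping, which your consistency checks on $A=B=C=D=\mathbb{I}$ reasonably guard against.
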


Fact~\ref{vqacr_qntk_tdesign_tr_wawb} can be derived from Facts~\ref{app_haar_design_def_ave_1_design} and \ref{app_haar_design_def_ave_2_design} in Appendix~\ref{App:Haar_design}, which provides a more detailed discussion of unitary designs, potentially of independent interest. By applying Fact~\ref{vqacr_qntk_tdesign_tr_wawb}, it can be shown that QNNs with quantum circuits forming $2$-designs exhibit barren plateau loss landscapes.

\begin{theorem}[Adapted from \citet{mcclean2018barren}]\label{ch5_train_bp_original}
Consider the loss function given in Eqn.~(\ref{ch5_eq_bp_loss}), where the QNN $V(\bm{\theta})=\prod_{j=1}^{L} V_j (\bm{\theta}_j) W_j$ with fixed gate $W_j$ and variational gate $V_j(\bm{\theta}_j)=\exp(-i\bm{\theta}_j H_j /2)$. Suppose all hermitian matrices $\{H_j\}$ are traceless. For a integer $k \in [1,L]$, denote $U_- = \prod_{j=1}^{k-1} V_j (\bm{\theta}_j) W_j$ and $U_+ = \prod_{j=k+1}^{L} V_j (\bm{\theta}_j) W_j$. Then, if both $U_-$ and $U_+$ form $2$-designs, there is
\begin{align}
\mathbb{E} \left[ \frac{\partial f}{\partial \bm{\theta}_k} \right] = 0 , \quad {\rm Var} \left[ \frac{\partial f}{\partial \bm{\theta}_k} \right] \approx{}& \frac{1}{2^{3N+1}} \Tr \left[O^2 \right] \Tr \left[\rho^2 \right] \Tr \left[ H_j^2 \right]. \label{ch5_train_bp_original_eq}
\end{align}
\end{theorem}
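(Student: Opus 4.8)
\textbf{Proof proposal for Theorem~\ref{ch5_train_bp_original}.}

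The plan is to compute the mean and variance of $\partial f/\partial\bm{\theta}_k$ by first writing the derivative explicitly and then averaging over the two independent Haar-distributed (more precisely, $2$-design) blocks $U_-$ and $U_+$. Since $V_k(\bm{\theta}_k)=\exp(-iH_k\bm{\theta}_k/2)$, the parameter-shift calculation underlying Theorem~\ref{theo_ch5_parameter_shift} gives $\partial f/\partial\bm{\theta}_k = -\tfrac{i}{2}\Tr\!\left(U_+^\dagger O U_+ \,[H_k,\, U_-\rho U_-^\dagger]\right)$ (up to absorbing $W_k$ and $W_j$ factors into $U_\pm$, $\rho$, $O$, which changes nothing by invariance of the $2$-design average). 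Introduce the shorthand $\tilde O = U_+^\dagger O U_+$ and $\tilde\rho = U_-\rho U_-^\dagger$, so the derivative is a linear functional of $\tilde O$ and a linear functional of $\tilde\rho$.

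For the mean, I would average over $U_-$ first. By Eqn.~\eqref{vqacr_qntk_tdesign_tr_wawb_eq} of Fact~\ref{vqacr_qntk_tdesign_tr_wawb}, $\mathbb{E}_{U_-}[\tilde\rho] = \Tr(\rho)\,\mathbb{I}/2^N = \mathbb{I}/2^N$, and then $[H_k, \mathbb{I}] = 0$, so $\mathbb{E}[\partial f/\partial\bm{\theta}_k]=0$. This is the easy half and needs only the first identity of the Fact.

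For the variance, since the mean vanishes, ${\rm Var}[\partial f/\partial\bm{\theta}_k] = \mathbb{E}\big[(\partial f/\partial\bm{\theta}_k)^2\big]$. Squaring produces a product of two traces, each of the form $\Tr(\tilde O\, M)$ with $M$ built from $H_k$ and $\tilde\rho$; expanding $(\partial f/\partial\bm{\theta}_k)^2$ and taking $\mathbb{E}_{U_+}$ of the resulting $\Tr(\cdot)\Tr(\cdot)$ structure invokes Eqn.~\eqref{vqacr_qntk_tdesign_trwawbtrwcwd_eq}, while the $U_-$ average of the $H_k$–$\tilde\rho$ part produces another $\Tr(\cdot)\Tr(\cdot)$ expression handled again by Eqn.~\eqref{vqacr_qntk_tdesign_trwawbtrwcwd_eq} (or its companion Eqn.~\eqref{vqacr_qntk_tdesign_tr_wawbwcwd_eq} when the two $U_-$'s interleave with $H_k$ on both sides of the commutator). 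Using tracelessness of $H_k$ (so $\Tr[H_k]=0$) kills most of the cross terms, leaving the dominant contribution proportional to $\Tr[H_k^2]\,\Tr[O^2]\,\Tr[\rho^2]$ with the combinatorial prefactor collapsing to $1/2^{3N+1}$ in the large-$d$ limit $d=2^N$, where subleading $1/d$ corrections from the $1/(d(d^2-1))$ terms are dropped (hence the ``$\approx$'' in Eqn.~\eqref{ch5_train_bp_original_eq}).

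The main obstacle is the bookkeeping in the variance computation: the commutator $[H_k,\tilde\rho]$ expands into two terms, squaring gives four, and each must be pushed through \emph{two} successive $2$-design averages with careful tracking of which operators land inside which $W_y(\cdot)W_y^\dagger$ slot, so that the right case of Fact~\ref{vqacr_qntk_tdesign_tr_wawb} (two-trace versus four-operator-single-trace) is applied. Keeping the index structure straight — and verifying that every term without the surviving $\Tr[H_k^2]\Tr[O^2]\Tr[\rho^2]$ factor is either exactly zero (by $\Tr[H_k]=0$) or suppressed by an extra power of $1/d$ — is the delicate part; the rest is substitution and collecting the $2^N$ powers.
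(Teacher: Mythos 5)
Your proposal is correct and follows essentially the same route as the paper: the same derivative formula, the same two-design moment identities from Fact~\ref{vqacr_qntk_tdesign_tr_wawb} applied in two successive averages to the squared gradient, tracelessness of $H_k$ to kill cross terms, and dropping the subleading $1/2^N$ pieces to land on $\frac{1}{2^{3N+1}}\Tr[O^2]\Tr[\rho^2]\Tr[H_k^2]$; the only differences are cosmetic, namely which block you take to conjugate $O$ versus $\rho$, and proving the zero mean via $\mathbb{E}_{U_-}[\tilde\rho]=\Tr[\rho]\,\mathbb{I}/2^N$ with $[H_k,\mathbb{I}]=0$ rather than the paper's trace-of-a-commutator argument. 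The one point to make explicit is the $\Tr[O]^2$ cross term surviving the observable-side average: the paper discards it by (implicitly) taking $O$ traceless, as for the Pauli-$Z$ observables of Eqn.~(\ref{ch5_qnn_eq_observable_cnzn}), whereas you fold it into the dropped $1/(2^N(2^{2N}-1))$ corrections, which is only safe when $\Tr[O]^2\ll 2^N\Tr[O^2]$.
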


\begin{proof}[Proof of Theorem~\ref{ch5_train_bp_original}]

By using notations $U_-$ and $U_+$, the function $f(\bm{\theta})$ in Eqn.~(\ref{ch5_eq_bp_loss}) can be expressed as:
\begin{align*}
f ={}& \Tr \left[ O V \rho V^\dag \right] \\
={}& \Tr \left[ O U_- V_k W_k U_+ \rho U_+^\dag W_k^\dag V_k^\dag U_-^\dag \right] \\
={}& \Tr \left[ U_-^\dag O U_- V_k W_k U_+ \rho U_+^\dag W_k^\dag V_k^\dag \right] \\
={}& \Tr \left[ O' \exp(-i \bm{\theta}_k H_k /2) \rho' \exp(i \bm{\theta}_k H_k /2) \right],
\end{align*}
where $O':=U_-^\dag O U_-$ and $\rho':=W_k U_+ \rho U_+^\dag W_k^\dag $. Thus, the gradient could be calculated as
\begin{align*}
\frac{\partial f}{\partial \bm{\theta}_k} ={}& \frac{i}{2} \Tr \left[ O' \left[ V_k \rho' V_k^\dag , H_k \right] \right] .
\end{align*}
The expectation of the gradient is zero since
\begin{align}
\mathop{\mathbb{E}}_{U_+, U_-} \frac{\partial f}{\partial \bm{\theta}_k} ={}& \mathbb{E} \frac{i}{2} \Tr \left[ O' \left[ V_k \rho' V_k^\dag , H_k \right] \right] \notag \\
={}& \mathop{\mathbb{E}}_{U_+, U_-} \frac{i}{2} \Tr \left[ U_-^\dag O U_- \left[ V_k \rho' V_k^\dag , H_k \right] \right] \notag \\
={}& \mathop{\mathbb{E}}_{U_+} \frac{i}{2^{N+1}} \Tr \left[ O \right] \Tr \left[ \left[ V_k \rho' V_k^\dag , H_k \right] \right] \label{ch5_train_bp_original_2_3} \\
={}& 0, \label{ch5_train_bp_original_2_4}
\end{align}
where Eqn.~(\ref{ch5_train_bp_original_2_3}) follows from Eqn.~(\ref{vqacr_qntk_tdesign_tr_wawb_eq}), and Eqn.~(\ref{ch5_train_bp_original_2_4}) is derived by noticing $\Tr [[A,B]]=\Tr[AB-BA]=0$. Therefore, the variance of the gradient equals to the expectation of its square, i.e., 
\begin{align}
\mathop{{\rm Var}}_{U_+, U_-} \left[ \frac{\partial f}{\partial \bm{\theta}_k} \right] ={}& \mathop{\mathbb{E}}_{U_+, U_-} \left[ \frac{\partial f}{\partial \bm{\theta}_k} \right]^2 \notag \\
={}& -\frac{1}{4} \mathbb{E}_{U_+, U_-} \Tr \left[ U_-^\dag O U_- \left[ V_k \rho' V_k^\dag , H_k \right] \right]^2 \notag \\
={}& -\frac{1}{4\times (2^{2N}-1)} \mathop{\mathbb{E}}_{U_+} \Tr \left[ O^2 \right] \Tr \left[ \left[ V_k \rho' V_k^\dag , H_k \right]^2 \right] \notag \\
+{}& \frac{1}{2^{N+2} \left( 2^{2N} -1 \right) } \mathop{\mathbb{E}}_{U_+} \Tr \left[ O \right]^2 \Tr \left[ \left[ V_k \rho' V_k^\dag , H_k \right]^2 \right] \label{ch5_train_bp_original_3_3} \\
={}& - \frac{ 1 }{4 \times \left( 2^{2N} -1 \right) } \Tr \left[ O^2 \right] \mathop{\mathbb{E}}_{U_+} \Tr \left[ \left[ V_k \rho' V_k^\dag , H_k \right]^2 \right] \label{ch5_train_bp_original_3_4},
\end{align}
where Eqn.~(\ref{ch5_train_bp_original_3_3}) follows from Eqn.~(\ref{vqacr_qntk_tdesign_trwawbtrwcwd_eq}), and Eqn.~(\ref{ch5_train_bp_original_3_4}) follows from ${\rm Tr}[O]=0$.
Further, it can be shown that
\begin{align}
{}& \mathop{\mathbb{E}}_{U_+} \Tr \left[ \left[ V_k \rho' V_k^\dag , H_k \right]^2 \right] \notag \\
={}& 2 \mathop{\mathbb{E}}_{U_+} \Tr \left[ \left( V_k \rho' V_k^\dag H_k \right)^2 \right] - 2 \mathop{\mathbb{E}}_{U_+} \Tr \left[ \left( V_k \rho' V_k^\dag \right)^2 \left( H_k \right)^2 \right] \notag \\
={}& 2 \mathop{\mathbb{E}}_{U_+} \Tr \left[ \left( V_k W_k U_+ \rho U_+^\dag W_k^\dag V_k^\dag H_k \right)^2 \right] \notag \\
-{}& 2 \mathop{\mathbb{E}}_{U_+} \Tr \left[ \left( V_k W_k U_+ \rho U_+^\dag W_k^\dag V_k^\dag \right)^2 \left( H_k \right)^2 \right] \notag \\
={}& \frac{2}{2^{2N}-1} \left\{ \Tr [\rho]^2 \Tr [H_k^2] + \Tr \left[\rho^2 \right] \Tr[H_k]^2 \right\} \notag \\
-{}& \frac{2}{2^N (2^{2N}-1)} \left\{ \Tr \left[ \rho^2 \right] \Tr \left[ H_k^2 \right] + \Tr[\rho]^2 \Tr[H_k]^2 \right\} \notag \\
-{}& \frac{2}{2^N} \Tr \left[ H_k^2 \right] \Tr \left[ \rho^2 \right] \label{ch5_train_bp_original_4_4} \\
\approx{}& - \frac{2^{N+1}}{2^{2N}-1} \Tr \left[ H_k^2 \right] \Tr \left[ \rho^2 \right] \label{ch5_train_bp_original_4_5},
\end{align}
where Eqn.~(\ref{ch5_train_bp_original_4_4}) follows from Eqn.~(\ref{vqacr_qntk_tdesign_tr_wawb_eq}) and Eqn.~(\ref{vqacr_qntk_tdesign_tr_wawbwcwd_eq}). Eqn.~(\ref{ch5_train_bp_original_4_5}) is derived by ignoring minor terms and using ${\rm Tr} [H_k]=0$. Combining Eqn.~(\ref{ch5_train_bp_original_3_4}) and Eqn.~(\ref{ch5_train_bp_original_4_5}), it can be shown that
\begin{align*}
\mathop{{\rm Var}}_{U_+, U_-} \left[ \frac{\partial f}{\partial \bm{\theta}_k} \right] \approx{}& \frac{2^N}{2 \times \left( 2^{2N} -1 \right)^2 } \Tr \left[ O^2 \right] \Tr \left[ H_k^2 \right] \Tr \left[ \rho^2 \right] \\
\approx{}& \frac{1}{2^{3N+1}} \Tr \left[ O^2 \right] \Tr \left[ \rho^2 \right] \Tr \left[ H_k^2 \right] .
\end{align*}
Thus, Theorem~\ref{ch5_train_bp_original} is proved. 

\end{proof}

\begin{tcolorbox}[enhanced, 
    breakable,colback=gray!5!white,colframe=gray!75!black,title=Remark]
The influence of barren plateau can be categorized into three folds.
\begin{enumerate} 
\item \textbf{Training efficiency}. Exponentially small gradients imply that the training of QNNs with gradient-based optimizers may require exponential numbers of iterations to converge. 
\item \textbf{Optimization effectiveness}. The calculation of the gradient via the parameter-shift rule would introduce unbiased statistical noise due to finite shot numbers. A small gradient could be vulnerable to these measurement noises, which may induce additional barriers in optimizations.
\item \textbf{Quantum advantage}. QNNs are expected to exhibit quantum advantages by employing intermediate to large numbers of qubits. However, the barren plateau phenomenon may induce exponential training steps, which can offset potential quantum advantages.
\end{enumerate}
\end{tcolorbox}

Since the barren plateau could seriously affect the trainability of scaled QNNs and raise concerns about the utility of QNNs for achieving quantum advantages, researchers have been focused on developing techniques to address this problem. Existing efforts including specific architecture design~\citep{cerezo2021cost,pesah2021absence,zhang2021toward}, parameter initialization schemes~\citep{grant2019initialization,zhang2022escaping}, and advanced training protocols~\citep{skolik2021layerwise,haug2021optimal}. Here we briefly introduce two related results with theoretical guarantees.

\begin{fact}[Shallow hardware-efficient circuits are BP-free, informal version adapted from \citet{cerezo2021cost}]\label{ch5_fact_bpfree_shallow}
Suppose the observable has a local form in the Pauli basis decomposition. For QNNs employing $N$-qubit shallow hardware-efficient circuits with logarithmic depths, the variance of the gradient has the lower bound
\begin{equation}
{\rm Var} \left[ \frac{\partial f}{\partial \bm{\theta}_k} \right] \geq \Omega \left( \frac{1}{{\rm poly} (N)} \right). 
\end{equation}
\end{fact}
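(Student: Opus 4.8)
The plan is to establish the lower bound on the gradient variance for shallow hardware-efficient circuits by exploiting the locality of both the observable and the causal structure (light cone) of the circuit. First I would set up the framework following \citet{cerezo2021cost}: write the loss function as $f(\bm{\theta}) = \sum_i c_i \Tr[O_i V(\bm{\theta}) \rho V(\bm{\theta})^\dag]$ where each $O_i$ is a local Pauli operator acting nontrivially on only $\mathcal{O}(1)$ qubits, and $\rho$ is a product state (or at least a state with limited correlation length). The key observation is that for a hardware-efficient circuit with depth $d = \mathcal{O}(\log N)$ composed of local two-qubit gates arranged in a brick-wall pattern, the backward light cone of any local observable $O_i$ and the forward light cone of any local block of $\rho$ each touch only $\mathcal{O}(\mathrm{poly}(N))$ qubits — in fact only $2^{\mathcal{O}(d)} = \mathrm{poly}(N)$ qubits when $d = \mathcal{O}(\log N)$.

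The main technical step is to decompose the circuit around the parameter $\bm{\theta}_k$ into $V = U_+ V_k(\bm{\theta}_k) W_k U_-$ and then argue that only the gates inside the intersection of the relevant light cones contribute to the variance computation. Concretely, I would restrict attention to the subsystem $\mathcal{S}_k$ consisting of qubits in the causal cone connecting the gate $V_k$ to the support of the observable; this subsystem has size $|\mathcal{S}_k| = \mathcal{O}(\mathrm{poly}(N))$. Applying Fact~\ref{vqacr_qntk_tdesign_tr_wawb} (the $2$-design integration formulas) but now \emph{only over the gates within this small causal cone} — which is justified because the gates outside the cone cancel between $V$ and $V^\dag$ — the variance becomes
\begin{equation}
{\rm Var}\left[\frac{\partial f}{\partial \bm{\theta}_k}\right] \geq \frac{1}{\mathrm{poly}(2^{|\mathcal{S}_k|})} \cdot (\text{local traces over } \mathcal{S}_k),
\end{equation}
and since $|\mathcal{S}_k| = \mathcal{O}(\log N)$ implies $2^{|\mathcal{S}_k|} = \mathrm{poly}(N)$, one obtains the claimed $\Omega(1/\mathrm{poly}(N))$ bound, provided the local trace factors $\Tr[O_i^2]$, $\Tr[\rho_{\mathcal{S}_k}^2]$, $\Tr[H_k^2]$ are $\Omega(1)$, which holds for local Paulis and product-state inputs.

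The hard part will be making the light-cone truncation rigorous: one must carefully verify that the gates outside the causal cone genuinely factor out of \emph{both} the first-moment and second-moment integrals, which requires that these external gates, taken together with the identity on the cone, still form (or contain) a local $2$-design on their own subsystem, and that the partial-trace structure is compatible with the tensor factorization. A subtlety is that the forward cone from $\rho$ and the backward cone from $O$ must have nonempty overlap containing the gate $V_k$ for the gradient to be nonzero at all; parameters whose gates lie outside all observable cones trivially have zero gradient, so the statement should be understood as applying to the parameters that actually matter (or one includes them with the understanding that the relevant variance is that of a contributing parameter). I would also need to invoke the assumption that shallow hardware-efficient circuit blocks over $\mathcal{O}(1)$ qubits with a brick-wall arrangement approximate local $2$-designs, citing \citet{harrow2009random} or the relevant approximate-design results, and track that the approximation error does not spoil the polynomial bound. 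Beyond this, the remaining calculations are a localized repeat of the computation in the proof of Theorem~\ref{ch5_train_bp_original}, and I would simply reference that structure rather than reproduce it.
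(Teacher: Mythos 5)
First, a point of reference: the paper does not prove Fact~\ref{ch5_fact_bpfree_shallow} at all — by its stated convention, results labelled as Facts are quoted with a citation only, here to \citet{cerezo2021cost}. So there is no in-paper proof to compare against; what you have sketched is essentially a reconstruction of the argument in that reference (local cost/observable, restriction to the causal cone of the differentiated gate, blockwise local $2$-design integration in the spirit of Fact~\ref{vqacr_qntk_tdesign_tr_wawb}, then a dimension factor exponential in the cone size rather than in $N$, as in a localized version of the computation of Theorem~\ref{ch5_train_bp_original}). In that sense your strategy is the right one and matches the cited source.

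There is, however, a concrete internal inconsistency you must fix, because it sits exactly where the polynomial bound is won or lost. Early on you assert that the backward light cone of a local observable in a depth-$d$ circuit touches $2^{\mathcal{O}(d)}=\mathrm{poly}(N)$ qubits, but your final step needs $|\mathcal{S}_k|=\mathcal{O}(\log N)$ so that $2^{|\mathcal{S}_k|}=\mathrm{poly}(N)$; these two claims are incompatible, and if the cone really contained $\mathrm{poly}(N)$ qubits the factor $1/\mathrm{poly}(2^{|\mathcal{S}_k|})$ would be superpolynomially small and the statement would fail. The correct accounting, which is what \citet{cerezo2021cost} uses for the alternating-layered/hardware-efficient layout with nearest-neighbour blocks of $\mathcal{O}(1)$ qubits, is that the cone grows \emph{linearly} with depth, so at depth $\mathcal{O}(\log N)$ it contains $\mathcal{O}(\log N)$ qubits and the exponential-in-cone-size factor is $\mathrm{poly}(N)$; the $2^{\mathcal{O}(d)}$ growth you mention would only arise for expander-like connectivity, for which the result is not claimed. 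Two further cleanups: the lower bound in the reference is expressed through terms measuring how far the reduced input state on the cone is from maximally mixed and how far the observable's cone-restricted part is from a multiple of the identity, so your caveat that these are $\Omega(1)$ for product inputs and local Paulis is needed and should be stated as a hypothesis; and since the reference assumes each block is an \emph{exact} local $2$-design, the blockwise factorization of the first- and second-moment integrals is immediate and you do not need approximate-design results such as \citet{harrow2009random}, which would otherwise force you to track an additional approximation error through the bound.
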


\begin{fact}[Gaussian initializations help to escape the BP region, informal version adapted from \citet{zhang2022escaping}]\label{ch5_fact_bpfree_gaussian}
Suppose the observable is the tensor product of Pauli matrices $\sigma_{\bm{i}}=\sigma_{i_1}\otimes \cdots \otimes \sigma_{i_N}$, where the number of non-identity matrices in $\{\sigma_{i_1}, \cdots, \sigma_{i_N}\}$ is $S$. For QNNs employing $N$-qubit shallow hardware-efficient circuits with the depth $L$, the gradient norm has the lower bound
\begin{equation}\label{tqnn_cost_gaussian_main_eq}
\mathop{\mathbb{E}}\limits_{\bm{\theta}} \|\nabla_{\bm{\theta}} f\|^2 \geq \frac{L}{S^{S} (L+2)^{S+1}} {\rm Tr} \left[ \sigma_{\bm{j}} \rho_{\rm in} \right]^2,
\end{equation}
where $S$ is the number of non-zero elements in $\bm{i}$, and the index $\bm{j}=(j_1,j_2,\cdots,j_N)$ such that $j_m = 0, \forall i_m = 0$ and $j_m = 3, \forall i_m \neq 0$. The expectation is taken with the Gaussian distribution $\mathcal{N}\left(0, \frac{1}{4S(L+2)}\right)$ for the parameters $\bm{\theta}$.
\end{fact}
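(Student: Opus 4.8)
The plan is to pass to the Heisenberg picture and carry out a Pauli-path expansion of the squared gradient, then evaluate the Gaussian expectation one path at a time, following the strategy of \citet{zhang2022escaping}. Write the hardware-efficient circuit as $V(\bm{\theta})=\prod_{l=1}^{L}\big(\bigotimes_{m=1}^{N} R_Y(\theta_{l,m})\big)W_l$ with fixed entanglers $W_l$ built from $CZ$ gates, so that $f(\bm{\theta})=\Tr[\sigma_{\bm{i}} V(\bm{\theta})\rho_{\rm in} V(\bm{\theta})^\dagger]$. First I would throw away all but one ``column'' of parameters: fix a single qubit $q$ on which $\sigma_{\bm{i}}$ acts non-trivially, retain the $L$ rotation angles $\theta_{1,q},\dots,\theta_{L,q}$, and use $\|\nabla_{\bm{\theta}}f\|^2\ge\sum_{l=1}^{L}(\partial f/\partial\theta_{l,q})^2$; it then suffices to show $\mathbb{E}_{\bm{\theta}}[(\partial f/\partial\theta_{l,q})^2]\ge \frac{1}{S^{S}(L+2)^{S+1}}\Tr[\sigma_{\bm{j}}\rho_{\rm in}]^2$ for each $l$ and sum over the $L$ layers to recover the numerator $L$. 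By the parameter-shift rule (Theorem~\ref{theo_ch5_parameter_shift}) each partial derivative equals $\frac{i}{2}\Tr\big[O^{(l)}\,[Y_q,\rho^{(l)}]\big]$, where $O^{(l)}$ is $\sigma_{\bm{i}}$ conjugated back through the layers above $l$ and $\rho^{(l)}$ is $\rho_{\rm in}$ conjugated forward through the layers below; expanding both in the Pauli basis turns $(\partial f/\partial\theta_{l,q})^2$ into a sum over pairs of Pauli strings threading the circuit.

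The second step is the Gaussian average over the $R_Y$ angles, driven by two elementary facts: (i) with angle variance $s=\frac{1}{4S(L+2)}$ one has $R_Y(\theta)^\dagger Z R_Y(\theta)=\cos\theta\,Z+\sin\theta\,X$, $\mathbb{E}[\cos\theta]=e^{-s/2}$, $\mathbb{E}[\sin\theta]=0$, $\mathbb{E}[\cos^2\theta]=\tfrac{1+e^{-2s}}{2}$, $\mathbb{E}[\cos\theta\sin\theta]=0$; and (ii) $CZ$ leaves invariant (up to sign) every Pauli string supported only on $\{I,Z\}$, whereas it branches $X$ and $Y$ onto extra $Z$'s. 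Consequently the only Pauli paths surviving the expectation with a non-vanishing and manifestly non-negative weight are those whose segment near the input is the all-$Z$ ``backbone'' $\sigma_{\bm{j}}$ on the support of $\sigma_{\bm{i}}$: this string is self-consistent because it is fixed by every $CZ$ layer and, at each $R_Y$ layer, either stays $Z$ (contributing a factor $\mathbb{E}[\cos]$ or $\mathbb{E}[\cos^2]$) or branches to $X$ (annihilated in expectation by an odd $\sin$-moment). Propagating this backbone all the way to $\rho_{\rm in}$ yields a term proportional to $\Tr[\sigma_{\bm{j}}\rho_{\rm in}]^2$ times a product of at most $\mathcal{O}(S(L+2))$ decay factors, each bounded below by $e^{-s}$.

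The last step is bookkeeping: with $s=\frac{1}{4S(L+2)}$ the accumulated decay along a single path is at least $(e^{-s})^{2S(L+2)}=e^{-1/2}$, a constant independent of $N$, $L$ and $S$, and combining this with the count of light-cone sites that each derivative touches (which is what produces the $S^{S}(L+2)^{S+1}$ denominator) and summing the $L$ layer contributions gives exactly $\mathbb{E}_{\bm{\theta}}\|\nabla_{\bm{\theta}}f\|^2\ge \frac{L}{S^{S}(L+2)^{S+1}}\Tr[\sigma_{\bm{j}}\rho_{\rm in}]^2$. The hard part will be the interference analysis in the middle step: one must show that the numerous cross terms between distinct Pauli strings --- which can carry either sign --- either vanish under the Gaussian expectation because some angle along the path appears to an odd power of $\sin$, or are dominated by the backbone contribution, so that no cancellation spoils the positive lower bound; and one must make the light-cone/moment counting tight enough to land on exactly the stated polynomial factors rather than a weaker bound that still scales only polynomially in $N$. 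Carrying this through uniformly for arbitrary depth $L$, while keeping every $\cos$-type factor bounded away from zero, is the crux.
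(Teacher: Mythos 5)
The paper does not actually prove this statement: it is presented as a \textbf{Fact}, with the complete argument deferred to \citet{zhang2022escaping}, so there is no in-paper proof to compare against. Measured against the cited source, your outline does follow the right skeleton --- Gaussian trigonometric moments ($\mathbb{E}[\cos\theta]=e^{-\sigma^2/2}$, odd $\sin$-moments vanishing), the $\{I,Z\}$ sector preserved by the CZ layers, and the variance $1/(4S(L+2))$ tuned precisely so that the accumulated $\cos$-type decay is a constant $e^{-1/2}$ --- and the reduction $\|\nabla_{\bm{\theta}}f\|^2\ge\sum_l(\partial f/\partial\theta_{l,q})^2$ is a legitimate first step.

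There are, however, two genuine gaps. First, the step you yourself call ``the crux'' is the proof: ``dominated by the backbone contribution'' is not the mechanism. What must be shown is that the Gaussian average of the squared derivative decomposes into contributions with non-negative coefficients (odd $\sin$-moments annihilate the mixed terms; the survivors are squares weighted by positive moments), so the backbone term can be kept and everything else discarded --- no domination estimate, but an explicit decomposition you have not supplied. Moreover, the derivative insertion turns the backbone $Z_q$ into $X_q$ via $[Y_q,Z_q]$, and $X_q$ is \emph{not} fixed by CZ (it branches into $X\otimes Z$ strings), so your invariance argument covers only the legs without the derivative; the leg that actually carries the gradient needs its own propagation analysis. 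Second, the factor $S^{S}(L+2)^{S+1}$ is misattributed: it is not a ``light-cone site count.'' It arises because each of the $S$ support qubits (and the derivative insertion itself) must pay a variance-sized second moment $\mathbb{E}[\sin^2\theta]=\Theta(\sigma^2)=\Theta\bigl(1/(S(L+2))\bigr)$ to rotate the non-identity Paulis of $\sigma_{\bm{i}}$ onto the all-$Z$ string $\sigma_{\bm{j}}$, which produces the $(S(L+2))^{-S}$-type suppression, with one further $1/(L+2)$ and the sum over the $L$ derivative positions supplying the numerator $L$. In particular your claim that each single term $\mathbb{E}[(\partial f/\partial\theta_{l,q})^2]$ already carries the full $S^{-S}(L+2)^{-(S+1)}$ factor, justified only by an unquantified counting promise, is where the argument would fail to land on Eqn.~(\ref{tqnn_cost_gaussian_main_eq}); the bookkeeping has to be rebuilt around the sine-moment mechanism rather than a light-cone count.
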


\section{Code Demonstration}
\label{chapt5:sec:qnn_code}

This section provides hands-on demonstrations of QNNs for both discriminative and generative tasks, where we will illustrate practical implementations of quantum classifiers and quantum patch GAN. Each subsection corresponds to a specific application, offering a step-by-step explanation and code walkthrough.

\subsection{Quantum classifier}

We now demonstrate how to utilize QNN to solve discriminative tasks, specifically a binary classification problem based on the Wine dataset. Below, we outline the major steps in the pipeline:

\begin{enumerate}
    \item [Step 1] Load and preprocess the dataset.
    \item [Step 2] Implement a quantum read-in protocol to encode classical data into quantum states.
    \item [Step 3] Construct a parameterized quantum circuit model to process the input quantum states.
    \item [Step 4] Train and test the QNN to evaluate its performance.
\end{enumerate}

We first import all the necessary libraries:

\begin{lstlisting}[language=Python]
import sklearn
import sklearn.datasets
import pennylane as qml
from pennylane import numpy as np
from pennylane.optimize import AdamOptimizer
import matplotlib.pyplot as plt    
\end{lstlisting}

\noindent\textbf{Step 1: Dataset preparation.} We prepare the Wine dataset for the classification task. For simplicity, we focus on the first two classes of the Wine dataset. The dataset consists of 13 attributes per sample, each with a distinct range. We apply normalization to rescale these attributes to the interval $[0, \pi]$. Furthermore, the labels are remapped from $\{0,1\}$ to $\{-1,1\}$ to align with the output range of the quantum circuit model. The dataset is split into training and test sets to fairly evaluate the classifier.

\begin{lstlisting}[language=Python]
def load_wine(split_ratio = 0.5):
    feat, label = sklearn.datasets.load_wine(return_X_y=True)

    # normalization
    feat = np.pi * (feat - np.min(feat, axis=0, keepdims=True)) / np.ptp(feat, axis=0, keepdims=True)

    index_c0 = label == 0
    index_c1 = label == 1

    label = label * 2 - 1

    n_c0 = sum(index_c0)
    n_c1 = sum(index_c1)

    X_train = np.concatenate((feat[index_c0][:int(split_ratio*n_c0)], feat[index_c1][:int(split_ratio*n_c1)]), axis=0)
    y_train = np.concatenate((label[index_c0][:int(split_ratio*n_c0)], label[index_c1][:int(split_ratio*n_c1)]), axis=0)
    X_test = np.concatenate((feat[index_c0][int(split_ratio*n_c0):], feat[index_c1][int(split_ratio*n_c1):]), axis=0)
    y_test = np.concatenate((label[index_c0][int(split_ratio*n_c0):], label[index_c1][int(split_ratio*n_c1):]), axis=0)

    return X_train, y_train, X_test, y_test
X_train, y_train, X_test, y_test = load_wine()   
\end{lstlisting}

To better understand the dataset, we apply t-SNE to visualize its distribution. As shown in Figure~\ref{fig:wine01}, each data point is projected into a 2D space for visualization, with distinct colors representing different classes.

\begin{lstlisting}[language=Python]
def visualize_dataset(X, y):
    from sklearn.manifold import TSNE

    tsne = TSNE(n_components=2, random_state=42, perplexity=30)

    wine_tsne = tsne.fit_transform(X)
    for label in np.unique(y):
        indices = y == label
        plt.scatter(wine_tsne[indices, 0], wine_tsne[indices, 1], edgecolor='black', cmap='coolwarm', s=20, label=f'Class {label}')

    # Add labels and legend
    plt.title("t-SNE Visualization of Wine dataset (two classes)")
    plt.xlabel("t-SNE Dimension 1")
    plt.ylabel("t-SNE Dimension 2")
    plt.legend()

    plt.tight_layout()
    plt.show()
visualize_dataset(X_train, y_train)
\end{lstlisting}

\begin{figure}[H]
\centering
\includegraphics[width=0.8\textwidth]{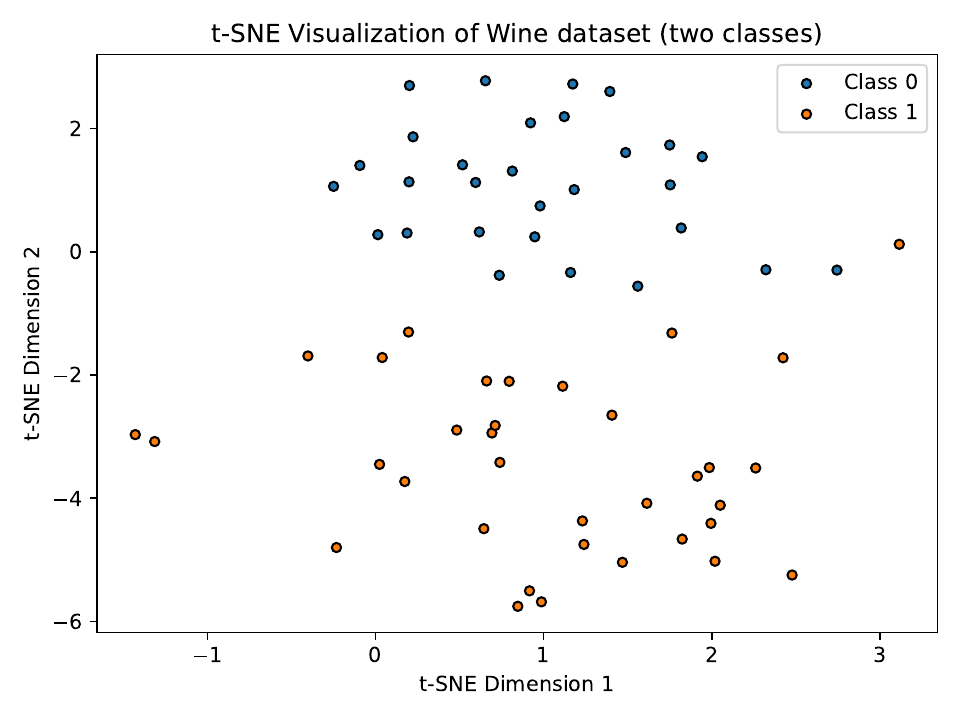}
\caption{{\textbf{T-SNE visualization of Wine dataset of the first two classes.} } }
\label{fig:wine01}  
\end{figure}

\noindent\textbf{Step 2: Data encoding.} To encode the $13$ attributes of the Wine dataset into a quantum system, we use angle encoding introduced in Section~\ref{cha2:sec3:angle_encode}, followed by a layer of CNOT gates acting on neighboring qubits to introduce entanglement.

\begin{lstlisting}[language=Python]
def data_encoding(x):
    n_qubit = len(x)
    qml.AngleEmbedding(features =x , wires = range(n_qubit) , rotation ="X")
    for i in range(n_qubit):
        if i+1 < n_qubit:
            qml.CNOT(wires=[i, i+1])
\end{lstlisting}

\noindent \textbf{Step 3: Building quantum classifier.} With the data encoding in place, we construct a quantum binary classifier. The circuit model is composed of multiple layers, where each layer includes parameterized single-qubit rotation gates with trainable angles, followed by a block of non-parametric CNOT gates to introduce entanglement among qubits. To read-out the category information of each input sample from the prepared quantum state, we compute the expectation value of the Pauli-Z operator on the first qubit.

\begin{lstlisting}[language=Python]
def classifier(param, x=None):
    data_encoding(x)
    n_layer, n_qubit = param.shape[0], param.shape[1]
    for i in range(n_layer):
        for j in range(n_qubit):
            qml.Rot(param[i, j, 0], param[i, j, 1], param[i, j, 2], wires=j)
        for j in range(n_qubit):
            if j+1 < n_qubit:
                qml.CNOT(wires=[j, j+1])
    return qml.expval(qml.PauliZ(0))

n_qubit = X_train.shape[1]
dev = qml.device('default.qubit', wires=n_qubit)
circuit = qml.QNode(classifier, dev)
\end{lstlisting}

We visualize the whole quantum circuit of 2 layers by drawing the diagram, as shown in Figure~\ref{fig:wine_classifier}.

\begin{lstlisting}[language=Python]
fig, ax = qml.draw_mpl(circuit)(np.pi * np.random.randn(2, n_qubit, 3), X_train[0])
fig.show()
\end{lstlisting}

\begin{figure}[H]
\centering
\includegraphics[width=0.8\textwidth]{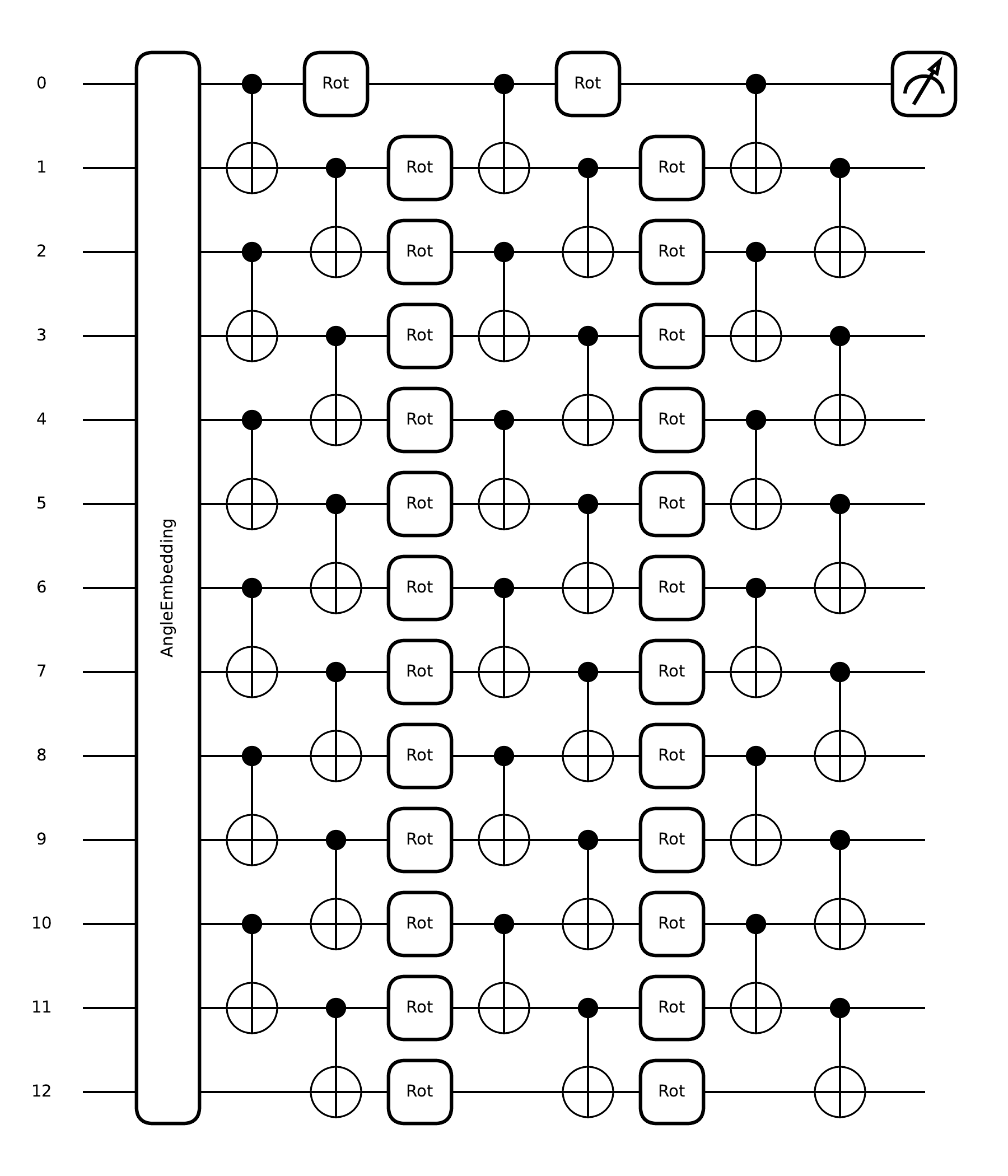}
\caption{{\textbf{The circuit diagram of the quantum classifier.} } }
\label{fig:wine_classifier} 
\end{figure}

\noindent\textbf{Step 4: Training and evaluation of quantum classifier.} With the data and circuit model ready, we now move to the optimization of the quantum classifier. The mean squared error (MSE) is used as the loss function. The goal is to minimize the difference between the predicted and actual labels over the training set.

\begin{lstlisting}[language=Python]
def mse_loss(predict, label):
    return np.mean((predict - label)**2)

def cost(param, circuit, X, y):
    exp = []
    for i in range(len(X)):
        pred = circuit(param, x=X[i])
        exp.append(pred)
    return mse_loss(np.array(exp), y)
\end{lstlisting}

To evaluate the performance of the quantum classifier, we use classification accuracy as the metric. Specifically, if the sign of the read-out result matches the corresponding label, the prediction is considered correct; otherwise, it is deemed incorrect. The accuracy is then calculated as the proportion of correctly classified samples out of the total.

\begin{lstlisting}[language=Python]
def accuracy(predicts, labels):
    assert len(predicts) == len(labels)
    return np.sum((np.sign(predicts)*labels+1)/2)/len(predicts)
\end{lstlisting}

The Adam optimizer is utilized to minimize the loss function. To ensure efficient computation, the dataset is divided into smaller batches for each training iteration. At the end of each epoch, both the training and test losses, along with the classification accuracy, are recorded to track the model's performance.

\begin{lstlisting}[language=Python]
lr = 0.01
opt = AdamOptimizer(lr)
batch_size = 4

n_epoch = 50
cost_train, cost_test, acc_train, acc_test = [], [], [], []
for i in range(n_epoch):
    index = np.random.permutation(len(X_train))
    feat_train, label_train = X_train[index], y_train[index]
    for j in range(0, len(X_train), batch_size):
        feat_train_batch = feat_train[j*batch_size:(j+1)*batch_size]
        label_train_batch = label_train[j*batch_size:(j+1)*batch_size]
        param = opt.step(lambda v: cost(v, circuit, feat_train_batch, label_train_batch), param)
    
    # compute cost
    cost_train.append(cost(param, circuit, X_train, y_train))
    cost_test.append(cost(param, circuit, X_test, y_test))

    # compute accuracy
    pred_train = []
    for j in range(len(X_train)):
        pred_train.append(circuit(param, x=X_train[j]))
    acc_train.append(accuracy(np.array(pred_train), y_train))

    pred_test = []
    for j in range(len(X_test)):
        pred_test.append(circuit(param, x=X_test[j]))
    acc_test.append(accuracy(np.array(pred_test), y_test))
\end{lstlisting}

After training the QNN, the training and test cost, as well as the accuracy over epochs, can be visualized using the following code.

\begin{lstlisting}[language=Python]
plt.figure(figsize=(12, 6))

plt.subplot(1, 2, 1)
epochs = np.arange(n_epoch) + 1
plt.plot(epochs, cost_train, label='Training Cost', marker='o')
plt.plot(epochs, cost_test, label='Test Cost', marker='o')
plt.title('Cost Over Epochs')
plt.xlabel('Epochs')
plt.ylabel('Cost')
plt.legend()
plt.grid()

# Plot training and test accuracy
plt.subplot(1, 2, 2)
plt.plot(epochs, acc_train, label='Training Accuracy', marker='o')
plt.plot(epochs, acc_test, label='Test Accuracy', marker='o')
plt.title('Accuracy Over Epochs')
plt.xlabel('Epochs')
plt.ylabel('Accuracy')
plt.legend()
plt.grid()

plt.tight_layout()
plt.show()
\end{lstlisting}

\begin{figure}[H]
\centering
\includegraphics[width=0.98\textwidth]{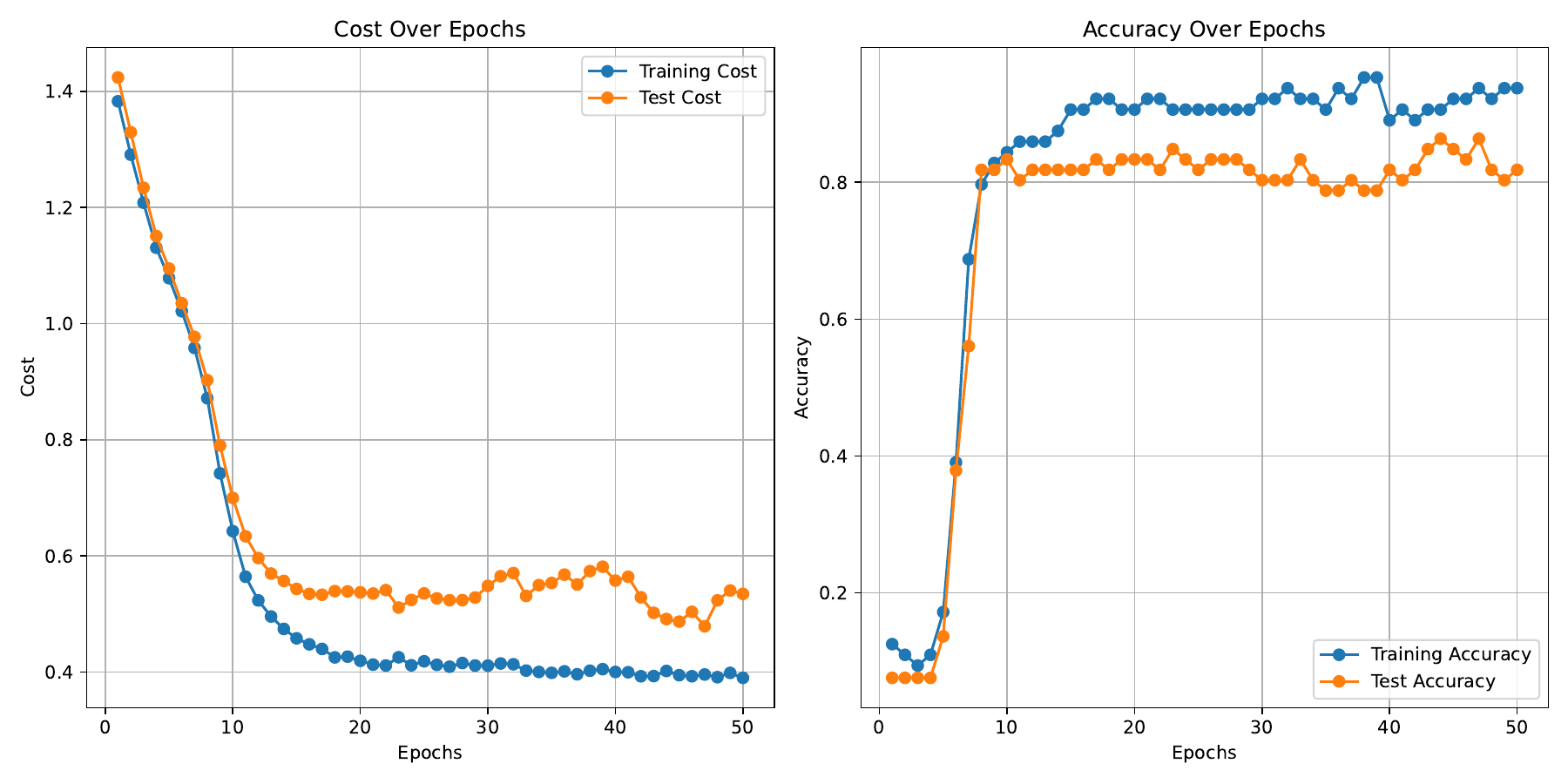}
\caption{{\textbf{The training curve of the quantum classifier.} } }
\label{fig:qnn_training_curve}
\end{figure}

As demonstrated in Figure~\ref{fig:qnn_training_curve}, this QNN achieves a test accuracy exceeding $0.8$. The performance of the QNN could potentially be further enhanced by employing more advanced read-in protocols, as discussed in Chapter~\ref{cha3:subsec:q-read-in}, which could enable more efficient and expressive representations of the input data. Additionally, optimizing the circuit design, such as adjusting the arrangement of layers or introducing more complex parameterized gates to increase the model's capacity, as highlighted in Chapter \ref{chapt5:sec:review}, could further improve the QNN's ability to capture intricate patterns in the dataset.

\subsection{Quantum patch GAN}

We next demonstrate how to implement a quantum patch GAN introduced in Chapter~\ref{chapt5:sec:qnn_gen} for the generation of hand-written digits of five. The whole pipeline includes the following steps:

\begin{enumerate}
    \item [Step 1] Load and pre-process the dataset.
    \item [Step 2] Build the classical discriminator.
    \item [Step 3] Build the quantum generator.
    \item [Step 4] Train the quantum patch GAN.
    \item [Step 5] Visualize the generated images.
\end{enumerate}

We begin by importing the required libraries:

\begin{lstlisting}[language=Python]
import torch
import torch.nn as nn
import torch.optim as optim
from torch.utils.data import Dataset, DataLoader

import numpy as np
import matplotlib.pyplot as plt
import pennylane as qml
import math
\end{lstlisting}

\noindent \textbf{Step 1: Dataset preparation.} We will use the Optical Recognition of Handwritten Digits dataset (\texttt{optdigits}), where each data point represents an $8\times 8$ grayscale image. The following code defines a custom dataset class to load and process the data.

\begin{lstlisting}[language=Python]
class OptdigitsData(Dataset):
    def __init__(self, data_path, label):
        """
        Dataset class for Optical Recognition of Handwritten Digits.
        """
        super().__init__()

        self.data = []
        with open(data_path, 'r') as f:
            for line in f.readlines():
                if int(line.strip().split(',')[-1]) == label:
                    # Normalize image pixel values from [0,16) to [0, 1)
                    image = [int(pixel)/16 for pixel in line.strip().split(',')[:-1]]
                    image = np.array(image, dtype=np.float32).reshape(8, 8)
                    self.data.append(image)
        self.label = label

    def __len__(self):
        return len(self.data)
    
    def __getitem__(self, index):
        return torch.from_numpy(self.data[index]), self.label
\end{lstlisting}

After defining the dataset, we can visualize a few examples to better understand the structure of the dataset.

\begin{lstlisting}[language=Python]
def visualize_dataset(data_path):
    """
    Visualizes the dataset by displaying examples for each digit label.
    """
    plt.figure(figsize=(10, 5))
    for i in range(10):
        plt.subplot(1, 10, i + 1)
        data = OptdigitsData(data_path, label=i)
        plt.imshow(data[0][0], cmap='gray')
        plt.title(f"Label: {i}")
        plt.axis('off')
    plt.tight_layout()
    plt.show()

visualize_dataset('code/chapter5_qnn/data/optdigits.tra')
\end{lstlisting}

\begin{figure}[H]
    \centering
    \includegraphics[width=0.98\textwidth]{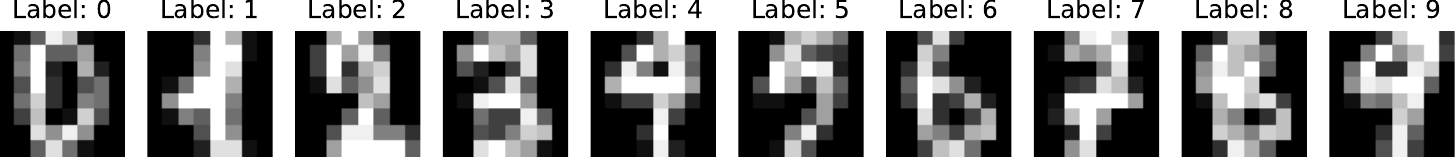}
    \caption{{\textbf{Samples in the dataset \texttt{optdigits}.} } }
    \label{fig:optdigit_sample}
\end{figure}

\noindent \textbf{Step 2: Building the classical discriminator.} The discriminator is a classical neural network responsible for distinguishing real images from fake ones. It consists of fully connected layers with ReLU activations. The final output is passed through a Sigmoid function, which scales the output to the range $(0,1)$, serving as a probabilistic indicator of whether the input image is real or fake.

\begin{lstlisting}[language=Python]
class ClassicalDiscriminator(nn.Module):
    """
    A classical discriminator for classifying real and fake images.
    """
    def __init__(self, input_shape):
        super().__init__()
        self.model = nn.Sequential(
            nn.Flatten(),
            nn.Linear(int(np.prod(input_shape)), 256),
            nn.ReLU(),
            nn.Dropout(),
            nn.Linear(256, 128),
            nn.ReLU(),
            nn.Dropout(),
            nn.Linear(128, 1),
            nn.Sigmoid()
        )

    def forward(self, img):
        return self.model(img)
\end{lstlisting}

\noindent \textbf{Step 3: Defining the quantum patch generator.}  The generator in the quantum patch GAN consists of parameterized quantum circuits (PQC). These circuits are responsible for generating patches of the target image. Specifically, the PQC follows the layout in Figure~\ref{chap5_figure_batchqgan}, which applies layers of single-qubit rotation gates and entangling gates to the latent state.

\begin{lstlisting}[language=Python]
def PQC(params):
    n_layer, n_qubit = params.shape[0], params.shape[1]
    for i in range(n_layer):
        for j in range(n_qubit):
            qml.Rot(params[i, j, 0], params[i, j, 1], params[i, j, 2], wires=j)
        # Control Z gates
        for j in range(n_qubit - 1):
            qml.CZ(wires=[j, j + 1])
\end{lstlisting}

Then, we implement the quantum generator for each patch of an image, i.e., sub-generators. The sub-generator transforms the latent variable $\bm{z}$ into a latent quantum state $\ket{\bm{z}}$, applies a PQC, performs partial measurements on the ancillary system $\mathcal{A}$, and outputs the probabilities of each computational basis state in the remaining system, which correspond to the generated pixel values.

\begin{lstlisting}[language=Python]
def QuantumGenerator(params, z=None, n_qubit_a=1):
    n_qubit = params.shape[1]

    # angle encoding of latent state z
    for i in range(n_qubit):
        qml.RY(z[i], wires=i)

    PQC(params)

    # partial measurement on the ancillary qubits
    qml.measure(wires=n_qubit-1)
    return qml.probs(wires=range(n_qubit-n_qubit_a))
\end{lstlisting}

Using the sub-generators, the quantum patch generator combines the output patches from multiple sub-generators to construct the complete image.

\begin{lstlisting}[language=Python]
class PatchQuantumGenerator(nn.Module):
    """
    Combines patches generated by quantum circuits into full images.
    """
    def __init__(self, qnode_generator, n_generator, n_qubit, n_qubit_a, n_layer):
        super().__init__()

        self.params_generator = nn.ParameterList([
            nn.Parameter(torch.rand((n_layer, n_qubit, 3)), requires_grad=True) for _ in range(n_generator)
        ])
        self.qnode_generator = qnode_generator
        self.n_qubit_a = n_qubit_a

    def forward(self, zs):
        images = []
        for z in zs:
            patches = []
            for params in self.params_generator:
                patch = self.qnode_generator(params, z=z, n_qubit_a=self.n_qubit_a).float()

                # post-processing: min-max scaling
                patch = (patch - patch.min()) / (patch.max() - patch.min() + 1e-8)

                patches.append(patch.unsqueeze(0))
            patches = torch.cat(patches, dim=0)
            images.append(patches.flatten().unsqueeze(0))
        return torch.cat(images, dim=0)
\end{lstlisting}

\noindent\textbf{Step 4: Training the quantum patch GAN.} With the dataset and models ready, we initialize the quantum generator, classical discriminator, and their optimizers.

\begin{lstlisting}[language=Python]
# Hyperparameters
torch.manual_seed(0)
image_width = 8
image_height = 8
n_generator = 4
n_qubit_d = int(np.log2((image_width * image_height) // n_generator))
n_qubit_a = 1
n_qubit = n_qubit_d + n_qubit_a
n_layer = 6

# Quantum device
dev = qml.device("lightning.qubit", wires=n_qubit)
qnode_generator = qml.QNode(QuantumGenerator, dev)

# Initialize generator and discriminator
discriminator = ClassicalDiscriminator([image_height, image_width])
discriminator.train()
generator = PatchQuantumGenerator(qnode_generator, n_generator, n_qubit, n_qubit_a, n_layer)
generator.train()

# Optimizers
lr_generator = 0.3
lr_discriminator = 1e-2
opt_discriminator = optim.SGD(discriminator.parameters(), lr=lr_discriminator)
opt_generator = optim.SGD(generator.parameters(), lr=lr_generator)

# Construct dataset and dataloader
batch_size = 4
dataset = OptdigitsData('code/chapter5_qnn/data/optdigits.tra', label=5)
dataloader = DataLoader(dataset, batch_size=batch_size, shuffle=True, drop_last=True)

# Loss function
loss_fn = nn.BCELoss()
labels_real = torch.ones(batch_size, dtype=torch.float)
labels_fake = torch.zeros(batch_size, dtype=torch.float)

# Testing setup
n_test = 10
z_test = torch.rand(n_test, n_qubit) * math.pi
\end{lstlisting}

The GAN training process involves alternating updates for the discriminator and the generator.
The discriminator is trained to distinguish between real and fake images, while the generator learns to create images that can successfully deceive the discriminator. 
To track the generator's progress during training, the generated images are saved at the end of each epoch.

\begin{lstlisting}[language=Python]
n_epoch = 10
record = {}
for i in range(n_epoch):
    for data, _ in dataloader:

        zs = torch.rand(batch_size, n_qubit) * math.pi
        image_fake = generator(zs)

        # Training the discriminator
        discriminator.zero_grad()
        pred_fake = discriminator(image_fake.detach())
        pred_real = discriminator(data)

        loss_discriminator = loss_fn(pred_fake.squeeze(), labels_fake) + loss_fn(pred_real.squeeze(), labels_real)
        loss_discriminator.backward()
        opt_discriminator.step()

        # Training the generator
        generator.zero_grad()
        pred_fake = discriminator(image_fake)
        loss_generator = loss_fn(pred_fake.squeeze(), labels_real)
        loss_generator.backward()
        opt_generator.step()

    print(f'The {i}-th epoch: discriminator loss={loss_discriminator: 0.3f}, generator loss={loss_generator: 0.3f}')

    # test
    generator.eval()
    image_generated = generator(z_test).view(n_test, image_height, image_width).detach()

    record[str(i)] = {
        'loss_discriminator': loss_discriminator.item(),
        'loss_generator': loss_generator.item(),
        'image_generated': image_generated.numpy().tolist()
    }
    generator.train()
\end{lstlisting}

\noindent \textbf{Step 5: Visualizing the generated images.}  After training, we visualize the images generated by the quantum generator to evaluate its performance. These visualizations allow us to see how well the model has learned to produce realistic image patches.

\begin{lstlisting}[language=Python]
n_epochs_to_visualize = len(record) // 2
n_images_per_epoch = 10

fig, axes = plt.subplots(n_epochs_to_visualize, n_images_per_epoch, figsize=(n_images_per_epoch, n_epochs_to_visualize))

# Iterate through the recorded epochs and visualize generated images
for epoch_idx, (epoch, data) in enumerate(record.items()):
    if epoch_idx % 2 == 1:
        continue
    images = np.array(data['image_generated'])
    
    for img_idx in range(n_images_per_epoch):
        ax = axes[epoch_idx // 2, img_idx]
        ax.imshow(images[img_idx], cmap='gray')
        ax.axis('off')
        
        # Add epoch information to the title of each row
        if img_idx == 0:
            ax.set_title(f"Epoch {epoch}", fontsize=10)

plt.tight_layout()
plt.show()
\end{lstlisting} 

\begin{figure}[H]
    \centering
    \includegraphics[width=0.98\textwidth]{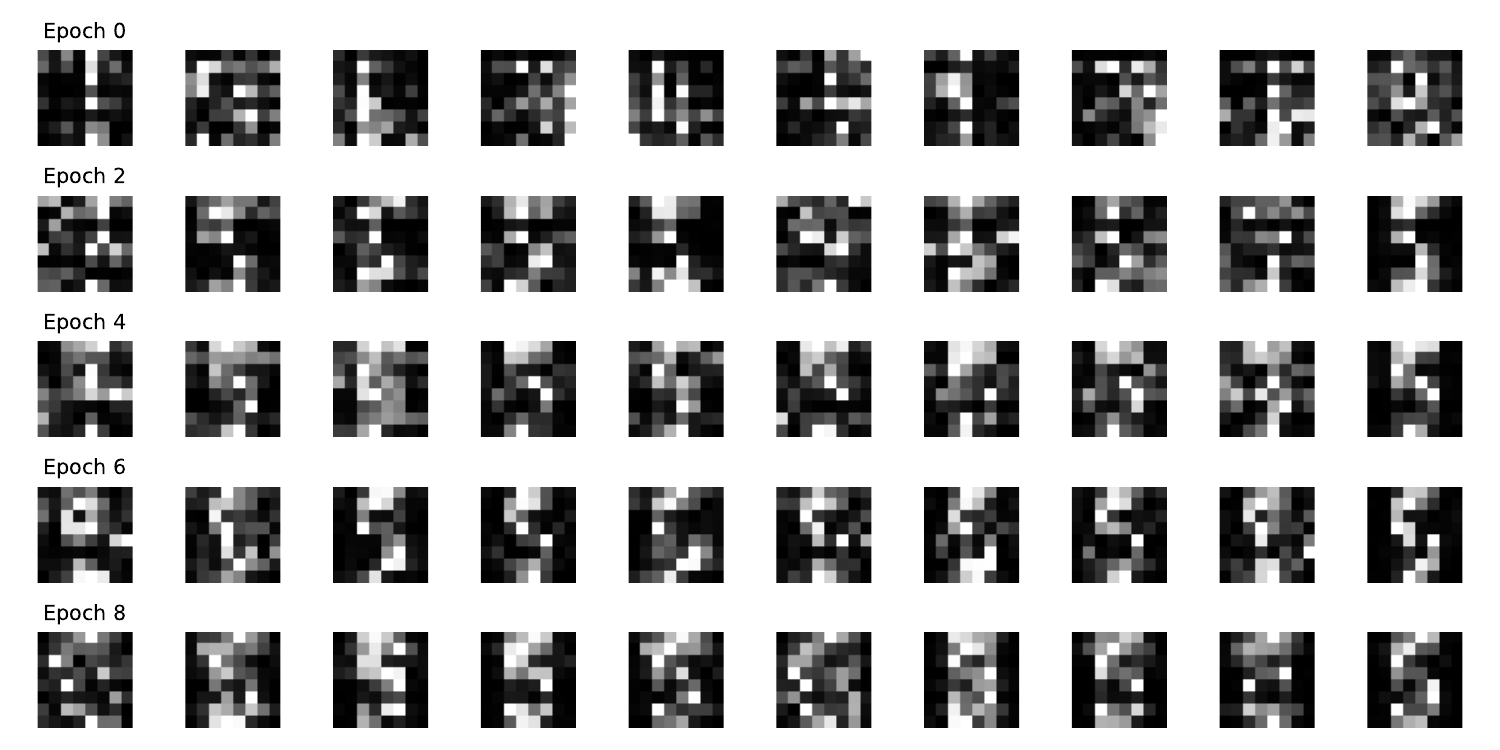}
    \caption{{\textbf{The images generated by quantum patch GAN during training.} } }
    \label{fig:qgan_training_curve}
\end{figure}

\section{Bibliographic Remarks}\label{chapt5:sec:review}

Quantum neural networks (QNNs) have emerged as a prominent paradigm in quantum machine learning, demonstrating potential in both discriminative and generative learning tasks. For discriminative tasks, QNNs utilize high-dimensional Hilbert spaces to efficiently capture and represent complex intrinsic relationships. In generative tasks, QNNs leverage variational quantum circuits to generate complex probability distributions that may exceed the capabilities of classical models. While these approaches share common learning strategies, they each introduce unique challenges and opportunities in terms of model design, theoretical exploration, and practical implementation. In the remainder of this section, we briefly review recent advances in QNNs. Interested readers can refer to \citet{ablayev2019quantum,li2022recent,massoli2022leap,tian2023recent,wang2024comprehensive} for a more comprehensive review.

\subsection{Discriminative learning with QNN}

QNNs for discriminative tasks have emerged as one of the most active research areas in quantum machine learning, demonstrating potential advantages in feature representation and processing efficiency. The quantum learning approach leverages the high-dimensional Hilbert space and quantum parallelism to potentially handle complex classification boundaries more effectively than classical neural networks. Research has shown particular promise in handling datasets with inherent quantum properties and problems where quantum entanglement can be meaningfully exploited~\citep{huang2022quantum}. 

\subsubsection{Model designs}
In the realm of quantum discriminative models, researchers have developed various quantum neural architectures. In general, variational quantum classifiers~\citep{havlivcek2019supervised,mitarai2018quantum} could employ parameterized quantum circuits for classification tasks. Subsequently, quantum convolutional neural networks~\citep{cong2019quantum} are designed for processing structured data. Hybrid quantum-classical architectures~\citep{arthur2022hybrid} are proposed to combine quantum layers with classical neural networks. Other notable works include the development of quantum versions of popular classical architectures like recurrent neural networks~\citep{bausch2020recurrent} and attention mechanisms~\citep{shi2024qsan}. Finally, \citet{perez2020data,fan2022compact} have explored quantum re-uploading strategies for encoding classical data, achieving QML models with more expressive feature maps. 

In addition to manually designed architectures, various lightening strategies have been explored to enhance the efficiency of quantum neural networks. For example, quantum architecture search methods have been developed by \citep{du2022quantum,zhang2022differentiable,linghu2024quantum} to automatically discover optimal quantum circuit designs with reduced gate complexity. \citet{sim2021adaptive,wang2022symmetric} introduced quantum pruning techniques that systematically identify and remove redundant quantum gates while preserving the performance. In the realm of knowledge distillation, researchers have demonstrated how to transfer knowledge from the teacher model given as quantum~\citep{alam2023knowledge} or classical~\citep{li2024hybrid} neural networks to more compact quantum circuit architectures that are more robust against quantum noises. These optimization approaches have collectively contributed to improving the practical performance of QNNs on real quantum devices, particularly in the NISQ era.

\subsubsection{Theoretical foundations}
To gain a deeper understanding of the potential advantages and limitations of QNNs, a crucial research topic is analyzing their learnability. More concisely, the learnability is determined by the interplay of three key aspects: expressivity, trainability, and generalization, as preliminarily introduced in Chapter~\ref{chapt5:sec:qnn_theory} with essential theoretical results. Beyond these foundational insights, an extensive body of research has conducted more comprehensive and detailed investigations into these three aspects, which will be reviewed individually in the following.

\smallskip
\noindent\textbf{Expressivity}. The expressivity of QNNs refers to their ability to represent complex functions or quantum states efficiently. Universal approximation theorems (UAT) incorporating data re-uploading strategies have been established by~\citet{perez2020data} firstly with subsequent works ~\citep{schuld2021effect,yu2022power} in various problem settings. Beyond the UAT, \citet{sim2019expressibility}, \citet{nakaji2021expressibility}, and \citet{holmes2022connecting} analyze the expressivity of QNNs by investigating how well the parameterized quantum circuits used in QNNs can approximate the Haar distribution, a critical measure of expressive capacity in quantum systems. Moreover, \citet{yu2024non} analyze the non-asymptotic error bounds of variational quantum circuits for approximating multivariate polynomials and smooth functions.

\smallskip
\noindent\textbf{Trainability}. The trainability of QNNs corresponds to two aspects during the optimization of QNNs: the \textit{gradient magnitude} and the \textit{convergence rate}. 

For the first research line, \citet{mcclean2018barren} first found the phenomenon of vanishing gradients, dubbed as the barren plateau, where the gradient magnitude scales exponentially small with the quantum system size. Since then, a series of studies explored the cause of barren plateau, including global measurements~\citep{cerezo2021cost}, highly entanglement~\citep{ortiz2021entanglement}, and quantum system noise \citep{wang2021noise}. 

Efforts to address the barren plateau problem, a major challenge in training deep quantum circuits, have yielded strategies such as proper circuit and parameter initialization techniques~\citep{grant2019initialization,zhang2022escaping}, cost function design~\citep{cerezo2021cost}, and proper circuits~\citep{pesah2021absence}. Quantum-specific regularization techniques have also been developed to mitigate these effects~\citep{larocca2022diagnosing}.   

Another research line on the trainability of QNNs focuses on the convergence of QNNs, which is not introduced in this Chapter. In particular, \citet{kiani2020learning} and \citet{wiersema2020exploring} experimentally found that overparameterized QNNs embrace a benign landscape and permit fast convergence towards near optimal local minima. Afterward, initial attempts have been made to theoretically explain the
superiority of over-parameterized QNNs. Specifically, \citet{larocca2023theory} and \citet{anschuetz2021critical} utilized tools from dynamical Lie algebra and random matrix theory, respectively, to quantify the critical points in the optimization landscape of overparameterized QNNs. Moreover, \citet{you2022convergence} extended the classical convergence results of \citet{xu2018convergence} to the quantum domain, proving that overparameterized QNNs achieve an exponential convergence rate. Additionally, \citet{liu2023analytic} and \citet{wang2023symmetric} introduced the concept of the quantum neural tangent kernel (QNTK) to further demonstrate the exponential convergence rate of overparameterized QNNs. Besides the overparameterization theory, \citep{du2021learnability,du2022distributed,qi2023theoretical,qian2024shuffle} investigated the required conditions to ensure the convergence of QNNs towards local minima.

\smallskip
\noindent\textbf{Generalization}. Research has also focused on understanding the sample complexity and generalization error bounds of quantum machine learning algorithms by using different statistical learning tools. In particular, \citet{abbas2021power}  compared the generalization power of QNNs and classical learning models based on an information geometry metric. \citet{caro2022generalization} and \citet{du2022efficient} established generalization error bounds using covering numbers, revealing the impact of circuit structural factors—such as the number of gates and types of observables—on generalization ability. Similarly, \citet{bu2022statistical} analyzed the generalization ability of QNNs from the perspective of quantum resource theory, emphasizing the role of quantum resources such as entanglement and magic in influencing generalization.

Furthermore, frameworks for demonstrating quantum advantage in specific learning scenarios have been proposed~\citep{huang2021information,huang2022quantum}, providing insights into the conditions under which quantum models outperform their classical counterparts. \citet{zhang2024curse} investigate the training-dependent generalization abilities of QNNs, while \citet{du2023problem} study problem-dependent generalization, highlighting key factors that enable QNNs to achieve strong generalization performance.

Beyond analyses focused on specific datasets and problems, the generalization ability of QNNs has also been examined through the lens of the No-Free-Lunch theorem. \citet{poland2020no} explore the average performance of QNNs across all possible datasets and problems, providing a broader perspective on their generalization potential. Extending this work, \citet{sharma2022reformulation} and \citet{wang2024transition} adapt the No-Free-Lunch theorem to scenarios involving entangled data, demonstrating the potential benefits of entanglement in certain settings. Additionally, \citet{wang2024separable} establish a No-Free-Lunch framework for various learning protocols, considering different quantum resources used in these protocols.

\noindent\textbf{Potential advantages}. A critical challenge in quantum learning theory is identifying tasks where QNNs demonstrate provable computational advantages over classical approaches when solving classical problems. While recent studies have explored unconditional quantum advantage, most focus on synthetic tasks. For instance, polynomial advantages over commonly used classical models have been demonstrated through quantum contextuality in sequential learning \citep{anschuetz2023interpretable,anschuetz2024arbitrary}. Additionally, quantum entanglement has been shown to reduce communication requirements in non-local machine-learning tasks \citep{zhao2024entanglement}. Despite these advancements, a significant issue remains: most QNNs, without careful design, can be efficiently simulated or approximated by classical surrogate models. The review \citep{cerezo2023does} presented strong evidence that commonly used models with provable absence of barren plateaus are also classically simulable. Concrete algorithms in this research line include Pauli path simulators \citep{bermejo2024quantum,angrisani2024classically,lerch2024efficient}, tensor-network simulators \citep{shin2024dequantizing}, and learning protocols \citep{landman2022classically,schreiber2023classical,du2024efficient}.

\subsubsection{Applications}

Practical applications of quantum neural networks for discriminative learning have spanned multiple domains. In computer vision, researchers have demonstrated quantum approaches to image classification~\citep{henderson2020quanvolutional} and pattern recognition~\citep{alrikabi2022face}. In quantum chemistry, QNNs have been applied to proton affinity predictions
~\citep{jin2024integrating} and catalyst developments~\citep{roh2024hybrid}. Financial applications include market trend classification~\citep{li2023quantum} and fraud detection~\citep{innan2024financial}. Medical applications encompass drug discovery~\citep{batra2021quantum} and disease diagnosis~\citep{enad2023review}.

\subsection{Generative learning with QNNs}

QNNs for generative tasks represent a promising avenue in quantum machine learning, offering new methodologies for generating complex data distributions. By leveraging variational quantum circuits, these models aim to generate potentially more complex probability distributions compared to classical counterparts, particularly in domains where high-dimensional data or quantum properties are prominent. We refer the readers to \citet{tian2023recent} for a survey of QNNs in generative learning.

\subsubsection{Model designs} Researchers have developed various QNN architectures tailored for generative tasks. Quantum circuit Born machines (QCBMs)~\citep{benedetti2019generative} are one of the pioneering models, utilizing parameterized quantum circuits to generate discrete probability distributions. Quantum generative adversarial networks~\citep{lloyd2018quantum} extend the adversarial framework to the quantum domain, where quantum generators and discriminators compete to learn complex distributions. Quantum Boltzmann machines~\citep{amin2018quantum} are another notable model, employing quantum devices to prepare Boltzmann distributions for estimating target discrete distributions. Additionally, quantum autoencoders~\citep{romero2017quantum} have been proposed for tasks like quantum state compression and reconstruction, offering potential advantages in quantum information processing. Recently, quantum diffusion models~\citep{zhang2024generative,kolle2024quantum} have been proposed for generating quantum state ensembles or classical images.  
These models showcase the versatility of QNNs in addressing diverse generative tasks.

\subsubsection{Theoretical foundations} The theoretical understanding of generative quantum neural networks has advanced in several directions. Similar to QNNs for discriminative tasks, quantum generative models like QCBMs face the barren plateau issue with additional mechanisms from the Kullbach-Leibler (KL) divergence loss function~\citep{rudolph2024trainability}. In parallel, QCBMs are more efficient in the data-limited regime than the other classical generative models~\citep{hibat2024framework}. Besides, \citet{gao2018quantum} proved the existence of quantum generative model that is more capable of representing probability distributions compared with classical generative models, which has exponential speedup in learning and inference. Similarly, \citet{gao2022enhancing} proved the separation in expressive power between a class of widely used generative models, known as Bayesian networks, and its minimal quantum-inspired extension. \citet{du2022power} analyzed the generalization bounds of QCBMs and QGANs under the maximum mean discrepancy loss.

\subsubsection{Applications} Generative QNNs have shown potential in various practical applications. In finance, they have been used to model complex financial data distributions and generate synthetic financial datasets, demonstrating better performance than classical models in certain scenarios~\citep{alcazar2020classical,zhu2022generative}. In the domain of quantum physics, quantum generative models have been applied for quantum state tomography and quantum simulation, aiding in the understanding of quantum systems~\citep{benedetti2019adversarial}. In image generation, QGANs have been employed to produce high-quality images, showcasing their capability in handling complex visual data~\citep{huang2021experimental}. Furthermore, quantum generative models have been explored for drug discovery, where they can potentially accelerate the process by efficiently exploring large chemical spaces~\citep{li2021quantum}. These applications highlight the broad potential of QNNs in generative tasks across different domains.

\chapter{Quantum Transformer}\label{Chapter5:Transformer}

Transformers, introduced by \citet{vaswani2017attention}, have become one of the most important and widely adopted deep learning architectures in modern AI. Transformers were first developed to improve previous architectures for natural language processing on the ability to handle long-range dependencies and capture intricate relationships in data. Unlike previous sequential models, such as recurrent neural networks, which process information in a step-by-step manner, transformers use a mechanism called \textit{self-attention} to capture correlations among all elements in a sequence simultaneously. This parallel processing capability significantly reduces training time and improves learning performance.

Despite its many advantages, the transformer architecture has several drawbacks, particularly the required computational resources. As discussed in previous chapters, quantum computing provides unique advantages over classical computing in certain applications by leveraging quantum phenomena such as superposition, entanglement, and interference. These capabilities have inspired researchers to explore whether integrating quantum computing with Transformers could lead to superior performance compared to their classical counterparts in specific tasks.

To figure out this question, in this chapter, we start by introducing the mechanism of Transformers in Chapter~\ref{chapt6:sec:classical_transformer}. We then illustrate how to construct a quantum Transformer on a fault-tolerant quantum computer in Chapter~\ref{chapt6:sec:quantum_transformer}. We also analyze the runtime of quantum transformers combined with numerical observations in Chapter~\ref{chapt6:sec:runtime_qtransformer},  demonstrating a quadratic speedup over the classical counterpart.

\section{Classical Transformer}\label{chapt6:sec:classical_transformer}

The transformer architecture is designed to predict the next \textit{token} (formally present in Chapter~\ref{chapter5:sec:token}) in a sequence by leveraging sophisticated neural network components. Its modular design—including residual connections, layer normalization, and feed-forward networks (FFNs) as introduced in Chapter~\ref{chapt5:sec:classical_nn}—makes it highly scalable and customizable. This versatility has enabled its successful application in large-scale foundation models across diverse domains, including natural language processing, computer vision, reinforcement learning, robotics, and beyond. 

The full architecture of Transformer is illustrated in Figure~\ref{fig:transformer_arch}. Note that while the original paper by \citet{vaswani2017attention} introduced both encoder and decoder components, contemporary large language models primarily adopt \textit{decoder-only architectures}, which have demonstrated superior practical performance. Therefore, in the remainder of this section, we focus on detailing the implementation of each building block and discussing the optimization of decoder-only Transformer architectures. To deepen the understanding, a toy example of a classical Transformer with the code demonstration is provided in Chapter~\ref{chapt5:section:code-implementation}. 

\begin{figure}[h!]
\centering
\includegraphics[width=0.9\textwidth]{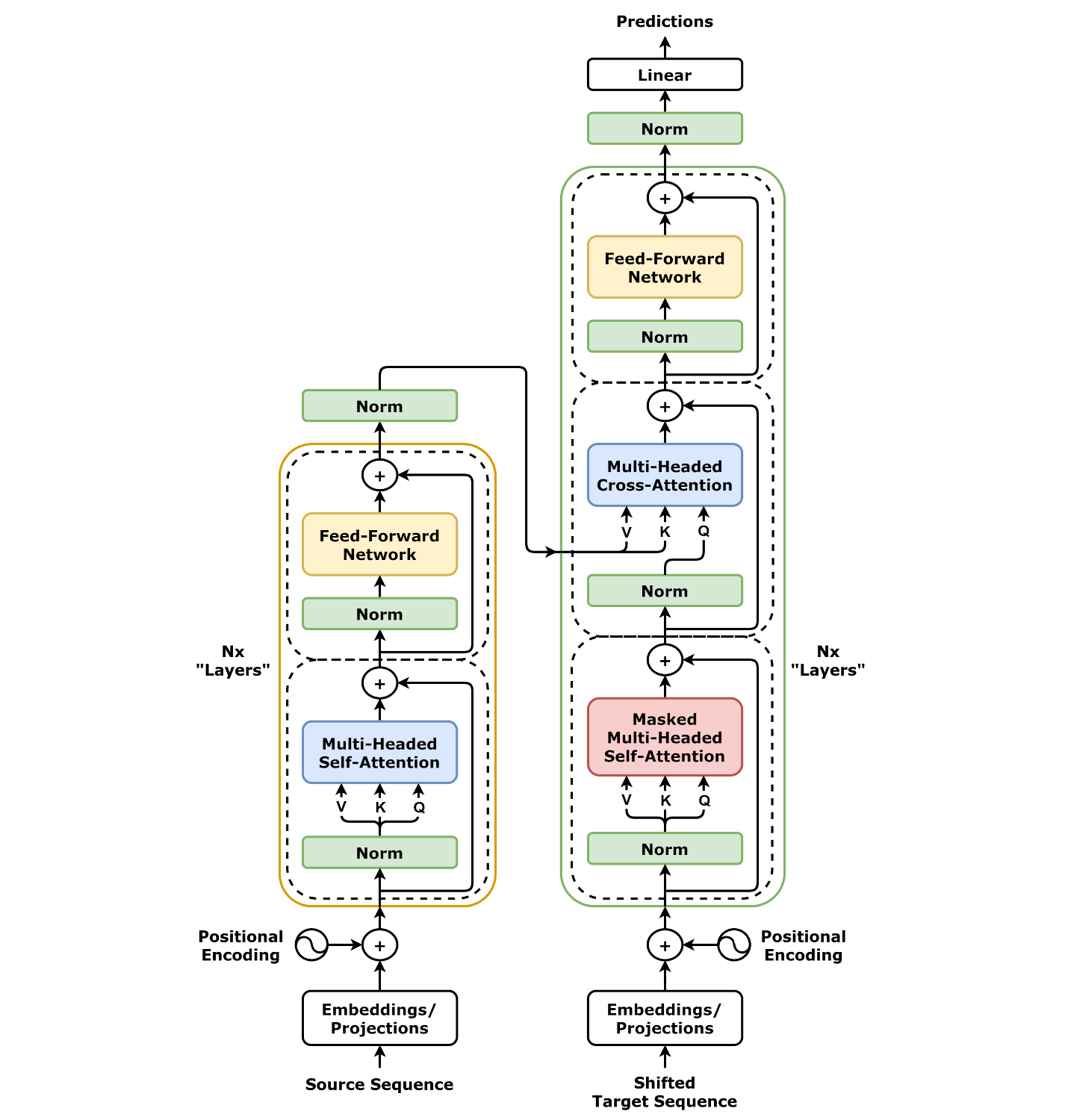}
\caption{{\textbf{A standard Transformer architecture, showing an encoder on the left, and a decoder on the right.} Image by \href{https://github.com/dvgodoy/dl-visuals}{Daniel Voigt Godoy} under \href{https://creativecommons.org/licenses/by/4.0/}{CC BY}. The encoder processes the source sequence through multiple layers of multi-headed self-attention and feed-forward networks, augmented with residual connections, layer normalization, and positional encodings. The decoder incorporates masked multi-headed self-attention to process the target sequence and multi-headed cross-attention to integrate information from the encoder. The output is passed through a feed-forward network and a final linear layer to generate predictions.}}
\label{fig:transformer_arch}
\end{figure}

\subsection{Tokenization and embedding}\label{chapter5:sec:token}
To handle sequential data, such as natural language, Transformers employ \textit{tokenization} to convert it into discrete units. This preprocessing step makes the data compatible with computational models and optimizes it for parallel processing, particularly on GPUs. More concisely, Tokenization breaks a sentence into smaller pieces called \emph{tokens}, which could be words, subwords, or even characters, depending on the tokenization strategy. For example, the sentence ``Transformers are amazing!'' might become tokens like (``Transform'', ``ers'', ``are'', ``amazing'', ``!'') if subwords are used. Modern tokenization methods~\citep{sennrich2016neural, kudo2018sentencepiece, mielke2021words} enable sophisticated mapping of complex inputs into token spaces.

For Transformer, tokens are mapped to high-dimensional real vector representations via \textit{embedding} \citep{vaswani2017attention}, as highlighted by the solid box ``Embeddings/Projections'' in Figure~\ref{fig:transformer_arch}. Let $d_{\mathrm{token}}$ denote the dictionary's token count and $d_{\mathrm{model}}$ represent the embedding vector dimension. We define the set containing all token embedding vectors in the dictionary as 
\[\mathcal{W}\coloneqq \{\mathcal{W}_j \in \mathbb{R}^{d_{\mathrm{model}}} : \mathcal{W}_j \text{ is the embedding of token } j\in [d_{\mathrm{token}}] \}.\]
An $\ell$-length sentence is represented as a sequence of vectors $\{S_j\}_{j=1}^\ell$, where $S_j\in \mathcal W$.
Mathematically, this sequence can be interpreted as a real matrix $S\in \mathbb{R}^{\ell\times d_{\mathrm{model}}}$ whose $j$-th row $S_j$ representing the $j$-th token.

\subsection{Self-attention}

\emph{Self-attention} is a core building block of the transformer architecture, which captures intrinsic correlations among tokens. By allowing each token in a sequence to attend to every other token, Transformer generates attention matrices via the inner-product operations, encoding complex inter-token relationships into a transformative vector representation, colloquially termed ``scaled dot-product attention''. The generated attention matrices highlight how relevant each part of the input is to every other part. This allows Transformers to handle contextual dependencies across a variety of data structures. 

The self-attention mechanism, as highlighted by the blue or red box in Figure~\ref{fig:transformer_arch}, involves three parameterized weight matrices: $W_q, W_k\in \mathbb{R}^{d_{\mathrm{model}}\times d_k}$ and $W_v\in \mathbb{R}^{d_{\mathrm{model}}\times d_v}$.  
\begingroup
\allowdisplaybreaks
\begin{tcolorbox}[enhanced,breakable,colback=gray!5!white,colframe=gray!75!black,title=Remark]
Following conventions~\citep{vaswani2017attention}, we use the notation $d$ to specify $d_{\mathrm{model}}$, $d_k$, and $d_v$ in the rest of this chapter, i.e., $d:=d_{\mathrm{model}}=d_k=d_v$. 
This is a widely used setting in practice.
\end{tcolorbox}
\endgroup

Given a sequence $S\in \mathbb{R}^{\ell\times d}$, we define the three new matrices after interacting it with three parameterized weight matrices $W_q, W_k, W_v$, i.e.,  
\begin{itemize}
    \item Query matrix: $Q\coloneqq S W_q$;
    \item Key matrix: $K\coloneqq SW_k$;
    \item Value matrix: $V\coloneqq SW_v$.
\end{itemize}

The attention block computes the matrix $\Gsoft \in \mathbb R^{\ell \times d}$ via 
\begin{align}  
    \mathrm{Attention}(Q,K,V)=\mathrm{softmax}(Q K^{\top}/\alpha_0)V\eqqcolon \Gsoft,\label{matrix.self-attention}
\end{align}
where $\alpha_0>0$ is a scaling factor, and $\mathrm{softmax}(\cdot)$ is a row-wise nonlinear transformation such that $\mathrm{softmax}(\bm{z})_j\coloneqq e^{\bm{z}_j}/(\sum_{k\in[\ell]} e^{\bm{z}_k})$ for $\bm{z}\in\mathbb{R}^\ell$ and $j\in [\ell]$. The row-wise softmax application ensures the controlled attention distribution. The scaling factor $\alpha_0 = \sqrt{d}$ empirically prevents excessive value amplification, particularly when input matrix rows have zero mean and unit standard deviation.

For decoder-only architectures, \textit{masked} self-attention is employed, strategically hiding tokens subsequent to the current query token, i.e.,
\begin{align}
M_{jk}=\begin{cases}
0 & \quad k\leq j, \\
-\infty & \quad k>j.
\end{cases}\label{eq.mask}
\end{align}
Conceptually, the mask $M$ is applied to the scaled dot product $QK^{\top} / \alpha_0$ in Eq.~(\ref{matrix.self-attention}) before the softmax operation. Specifically, the matrix in the softmax operation is modified as $ QK^{\top} / \alpha_0 + M$.

\smallskip
Another crucial technique in Transformer is the multi-head attention, which further extends the self-attention mechanism by computing and concatenating multiple attention matrices, enabling parallel representation learning across different subspaces. In practice, the embedding dimensions are often much larger (e.g., $d=512$ or $d=768$), with multiple attention heads working simultaneously, each capturing distinct relationships between words in the sequence. This mechanism plays a crucial role in modern AI, as it allows words to dynamically interact with one another within the context of the sequence.

While multi-head attention is a pivotal advancement in Transformer architectures, we will not delve into its details here, as \textit{the quantum Transformer implementations presented below primarily support single-head attention}. Nonetheless, these techniques remain essential in classical AI and are promising directions for future developments in quantum Transformers.

\subsection{Residual connection}

Residual connections (the arrows bypassing the main components, such as the attention and feed-forward layers in Figure~\ref{fig:transformer_arch}), paired with layer normalization (green box in the figure), provide crucial architectural flexibility and robustness. By enabling direct information flow between layers, they mitigate challenges in training deep neural networks \citep{he2015deep, ba2016layer}.

For the $j$-th token in an $\ell$-length sentence, the residual connection generates $\Gsoft_j + S_j \in \mathbb{R}^d$ for $\forall j \in [\ell]$, which is subsequently normalized to standardize the vector representation. Let $$\Bar{s}_j :=\frac{1}{d}(\sum_{k=1}^d(\Gsoft_{jk}+S_{jk}),\dots, \sum_{k=1}^d(\Gsoft_{jk}+S_{jk}))\in \mathbb{R}^{d},$$ where $\varsigma := \sqrt{\frac{1}{d}\sum_{k=1}^d ((\Gsoft_{j}+S_{j}-\Bar{s}_j)_k)^2}$. 
The complete residual connection with the \textit{layer normalization} $\mathrm{LN}(\cdot, \cdot)$ can be expressed as
\begin{align}
    \mathrm{LN}_{\gamma,\beta}(\Gsoft_j,S_j)=\gamma\frac{\Gsoft_j+S_j-\Bar{s}_j}{\varsigma}+\beta, \label{eq.Resnet}
\end{align}
where $\gamma$ and $\beta$ denote the scale and bias parameters, respectively.

\subsection{Feed-forward network}\label{chapter5:sec:FFN}

Recall the definitions of fully-connected neural networks (FFN) in Chapter~\ref{chapt5:sec:classical_nn}. Transformers employ a two-layer fully connected transformation (yellow box in Figure~\ref{fig:transformer_arch}) to proceed with the output of residual connection, i.e.,
\begin{align}
    \mathrm{FFN}(\mathrm{LN}(z_j,S_j))=\sigma(\mathrm{LN}(\Gsoft_j,S_j)M_1+b_1)M_2+b_2,
    \label{eq.neuralnetwork}
\end{align}
where $M_1\in \mathbb{R}^{d\times d'}, M_2\in \mathbb{R}^{d' \times d}$ are linear transformation matrices, and $b_1,b_2$ are vectors.
In most practical cases, $d'=4d$. Here $\sigma(x)$ is an activation function, such as $\tanh(x)$ and $\relu(x)=\max(0,x)$. Another activation function that has been widely used in Transformers is the Gaussian Error Linear Units function (GELU), i.e.,
\begin{align*}
\mathrm{GELU}(x) = x \cdot \frac{1}{2}\left(1 + \mathrm{erf}\left(\frac{x}{\sqrt{2}}\right)\right).
\end{align*}

\begingroup
\allowdisplaybreaks
\begin{tcolorbox}[enhanced, breakable,colback=blue!5!white,colframe=blue!75!black,title={Single-head and single-block Transformer}]
By integrating all ingredients introduced in Chapters~\ref{chapter5:sec:token}, we reach out the explicit form of single-head and single-block Transformer, i.e.,
\begin{align}\label{chapter5:eqn:Transformer-formula}
    \mathrm{Transformer}(S, j) := \mathrm{LN}(\mathrm{FFN}(\mathrm{LN}(\mathrm{Attention}(S,j)))).
\end{align}
For the final output, i.e., to predict the next token, one can implement the softmax function to make the vector into a distribution $\mathrm{Pr}(\cdot|S_1,\dots, S_{j-1})$, where the dimension is the size of the token dictionary $d_{\mathrm{token}}$, and sample from this distribution.
\end{tcolorbox}
\endgroup
 
In modern architectures, multiple computational blocks are applied iteratively. Similar to multi-head attention, we will not explore this in detail here, as the quantum Transformer implementations introduced below primarily support single-head attention.

\subsection{Optimization and inference}

Upon the architecture of Transformer, its \emph{optimization} involves training the model to achieve high performance on a given task. When applied to language processing tasks, a feasible loss function of the Transformer is the cross entropy between the predicted probability distribution and the correct distribution. Mathematically, given a $\ell$-length sequence $\{S_1,\dots,S_{\ell}\}$ as input, the loss function can be written as
\begin{align}
    \mathcal{L}=-\frac{1}{\ell}\sum_{j=1}^{\ell}\Pr(S_j|S_1,\dots, S_{j-1}),
\end{align}
where $\Pr(S_j|S_1,\dots, S_{j-1})$ is the predicted probability of the correct $S_j$ coming out of the softmax layer, based on the previous tokens. 

 This process of training Transformers can be achieved by using Adam optimizer~\citep{kingma2015adam}. Learning rate schedules, such as warm-up followed by decay, can provide stable and effective convergence.

After optimization, we could use the trained Transformer for \emph{inference}, which refers to making predictions on new data. Given a new initial sequence $S'=\{S'_1,\dots,S'_{j-1}\}$, we feed it to the trained Transformer and obtain the distribution $\mathrm{Pr}(S'_j|S'_1,\dots, S'_{j-1})$. Then we can use a decoding strategy (e.g., greedy decoding or sampling) to select the token based on the highest probability or a sampling approach. 

Efficient inference is crucial for deploying models in real-world applications. Speed optimization techniques, such as quantization, reduce the precision of weights and activations to accelerate inference with minimal accuracy loss~\citep{jacob2018quantization}. Pruning removes redundant weights or attention heads to reduce the model size and computational cost~\citep{han2015learning}. Batching and parallelism are also critical, with batched inference allowing the processing of multiple inputs simultaneously and GPU or TPU acceleration enabling parallel computations. Efficient attention at inference, such as caching key and value tensors, reduces redundant computations in autoregressive tasks like text generation~\citep{shoeybi2019megatron}.

The inference cost is up to $10$ times the training cost as large language models (LLMs) are trained once and applied millions of times~\citep{ mcdonald2022great,desislavov2023trends}.
For this reason, in the next section, we explore how to harness quantum computing to address this issue, which is crucial from both scientific and societal perspectives.

\section{Fault-tolerant Quantum Transformer}\label{chapt6:sec:quantum_transformer}
In this section, we move on to show an end-to-end transformer architecture implementable on a quantum device, which includes all the key building blocks introduced in Chapter~\ref{chapt6:sec:classical_transformer}, and a discussion of the potential runtime speedups of this quantum model. In particular, here we focus on the  \textit{inference process} in which a classical Transformer has already been trained and is queried to predict the single next token.

Recall that in Chapter~\ref{chapt6:sec:classical_transformer}, we suppose that the three parameterized matrices in the self-attention mechanism have the same size, i.e., $W_q, W_k, W_v \in \mathbb{R}^{d \times d}$. Besides, the input sequence $S$ and the matrix returned by the attention block $G^{\text{soft}}$ has the size $\ell \times d$. Here, we further suppose the length of the sentence and the dimension of hidden features exponentially scale with $2$, i.e., $\ell=2^N$ and $\log d\in \mathbb{N^+}$. This setting aligns with the scaling of quantum computing, making it easier to understand. For other cases, padding techniques can be applied to expand the matrix and vector dimensions to conform to this requirement.

Since the runtime speedups depend heavily on the capabilities of the available input oracles, it is essential to specify the input oracles used in the quantum Transformer before detailing the algorithms. For the classical Transformers, the memory access to the inputs such as the sentence and the query, key, and value matrices is assumed. In the quantum version, we assume the access of several matrices via \textit{block encoding techniques} introduced in Chapter~\ref{chap2:preliminary-sec:linearAlgebra}.
\begin{assumption}[Input oracles]\label{chapt5:assmpt-input}
Following the explicit form of the single-head and single-layer Transformer in Eqn.~(\ref{chapter5:eqn:Transformer-formula}), there are five parameterized weight matrices, i.e., $W_q$, $W_k$, $W_v \in \mathbb{R}^{d\times d}$ in the attention block, as well as $M_1\in \mathbb R^{d' \times d}$ and $M_2 \in \mathbb R^{d \times d'}$ in FFN.
Note that here, $M_1$ and $M_2$ are actually the transpose of parameterized matrices in the classical transformer.
Quantum Transformer assumes access to these five parameterized weight matrices, as well as the input sequence $S\in \mathbb{R}^{\ell \times d}$ via block encoding.

Mathematically, given any $A\in \{W_q, W_k, W_v, M_1, M_2, S\}$ corresponding to an $N$-qubit operator, $\alpha, \varepsilon\geq 0$ and $a\in \mathbb{N}$, there exists a $(a+ N)$-qubit unitary $U_A$ referring to the  $(\alpha,a,\varepsilon)$-block-encoding of $A$ with
    \begin{align}
        \|A-\alpha(\bra{0}^{\otimes a}&\otimes \mathbb{I}_{2^N})U_A(\ket{0}^{\otimes a}\otimes \mathbb{I}_{2^N})\|\leq \varepsilon,
    \end{align}
    where $\|\cdot\|$ represents the spectral norm.
\end{assumption}
Under this assumption, we know that the quantum Transformer can access $U_S$, $U_{W_q}$, $U_{W_k}$, and $U_{W_v}$ corresponding to the $\Sinput$-encoding of $S$ and $\Winput$-encodings of $W_q,W_k$ and $W_v$, respectively. Moreover, the quantum Transformer can access $(\alpha_{m},a_{m})$-encodings $U_{M_1}$ and $U_{M_2}$ corresponding two weight matrices $M_1\in \mathbb R^{d' \times d}$ and $M_2 \in \mathbb R^{d \times d'}$ in FFN.  
 
\begin{tcolorbox}[enhanced, 
breakable,colback=gray!5!white,colframe=gray!75!black,title=Remark]
For simplicity and clarity, in the following, we consider the \textit{perfect} block encoding of input matrices \textit{without errors}, i.e., $\varepsilon=0$. As such, we will not explicitly write the error term of the block encoding, and use $(\alpha,a)$ instead of $(\alpha, a,0)$. The output of the quantum transformer is a quantum state corresponding to the probability distribution of the next token.
The complete single-layer structure is described in Figure~\ref{fig:qtransformer_arch}.
\end{tcolorbox}

\begin{figure}[h!]
\centering
\includegraphics[width=\textwidth]{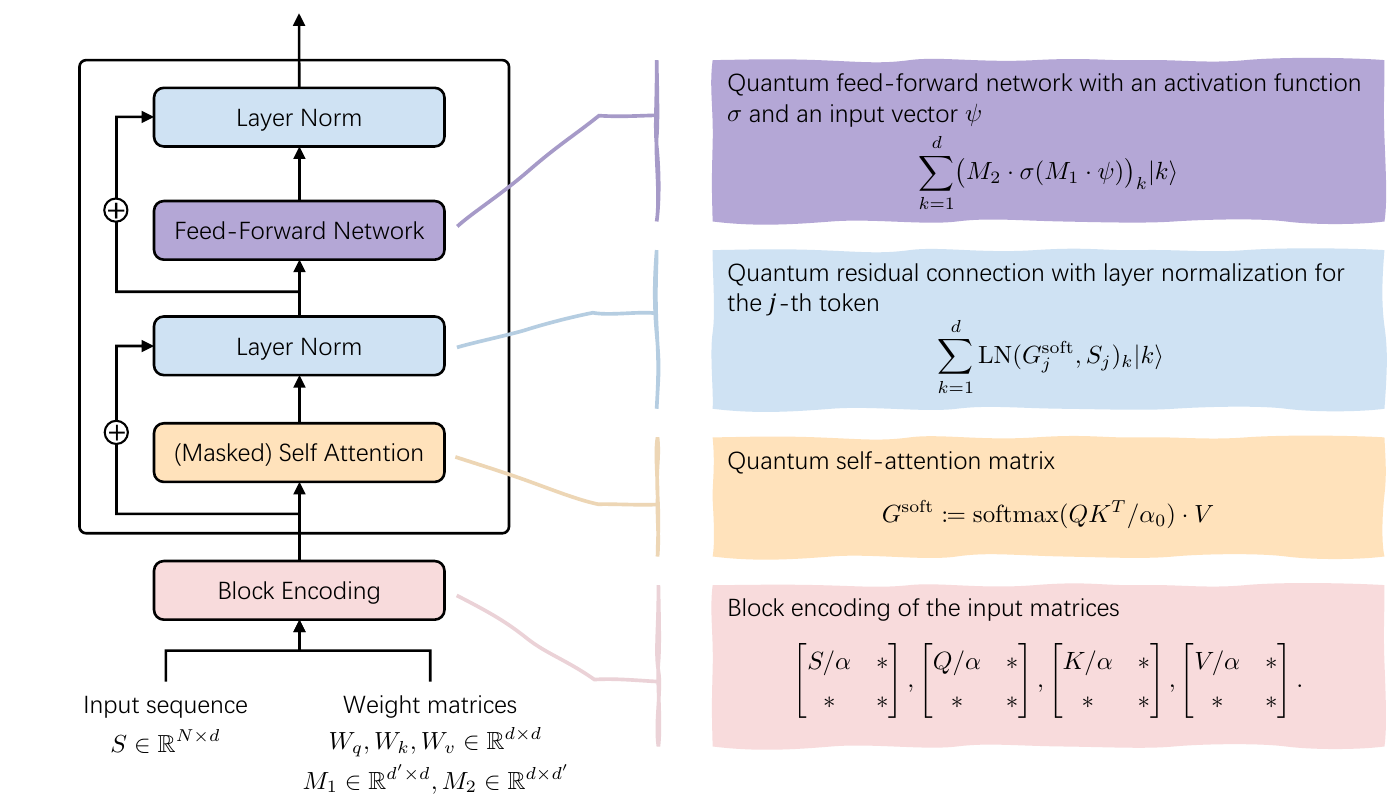}
\caption{{\textbf{Overview of the single-layer decoder-only quantum transformer.} A quantum transformer consists of a self-attention and a feed-forward network sub-layer, incorporating residual connections with layer normalization. The inputs of the quantum transformer are block encodings of matrices for the input sequence and pre-trained weights, from which the relevant matrices for the transformer are constructed (query $Q$, key $K$, and value $V$). Each of the components accepts the block encoding from the prior component as the input and prepares a block encoding of the target matrix using quantum linear algebra as the output.}}
\label{fig:qtransformer_arch}
\end{figure}

Under the above assumptions about access to the read-in protocols, the following theorem indicates how to implement a single-head and single-block transformer architecture in Eqn.~(\ref{chapter5:eqn:Transformer-formula}) on the quantum computer.

\begin{theorem}[Quantum Transformer, informal\label{thmTransformer.informal_main}]
For a single-head and single-block Transformer depicted in  Figure~\ref{fig:qtransformer_arch}, suppose its embedding dimension is $d$ and its input sequence $S$ has the length $\ell=2^N$.  Under Assumption~\ref{chapt5:assmpt-input} about the input oracles, for the index $j\in [\ell]$, one can construct an $\epsilon$-accurate quantum circuit for the quantum state proportional to
\begin{align}\label{chapter5:eqn:Q-trans-form}
\sum_{k=1}^d \mathrm{Transformer}(S,j)_k \ket{k},
\end{align}
by using ${\mathcal{\tilde{O}}}(d N^2 \alpha_s\alpha_w \log^2(1/\epsilon))$ times of the input block encodings.  
\end{theorem}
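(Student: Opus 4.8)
\textbf{Proof proposal for Theorem~\ref{thmTransformer.informal_main}.}

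The plan is to build the quantum Transformer block-by-block, following the dataflow in Figure~\ref{fig:qtransformer_arch}, where the output block encoding of each component becomes the input of the next. At each stage I would invoke the quantum linear algebra toolkit from Chapter~\ref{chap2:preliminary-sec:linearAlgebra}: block-encoding products (Fact~\ref{blockencoding.product}), linear combinations (Fact~\ref{LCU.blockencoding}), the Hadamard product (Lemma~\ref{Hadamard.blockencoding}), and QSVT-based polynomial transformations (Fact~\ref{QSVT}), tracking how the subnormalization factor $\alpha$, the ancilla count $a$, and the number of uses of the primitive block encodings accumulate. First I would construct $U_Q$, $U_K$, $U_V$ as block encodings of $Q = SW_q$, $K = SW_k$, $V = SW_v$ by composing $U_S$ with $U_{W_q}, U_{W_k}, U_{W_v}$ via Fact~\ref{blockencoding.product}; this costs $\mathcal{O}(1)$ calls each and yields subnormalization $\alpha_s\alpha_w$.

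Second, I would implement the attention matrix $\Gsoft = \mathrm{softmax}(QK^{\top}/\alpha_0 + M)V$. The key subroutine is the row-wise softmax: using the diagonal-block-encoding-of-amplitudes idea (Fact~\ref{block encoding.amplitudes}) one extracts the entries of each row, and then QSVT (Fact~\ref{QSVT}) applies a polynomial approximation of the exponential function $x \mapsto e^{x}$ on the appropriate interval, followed by a normalization step realized through amplitude estimation or a further QSVT inversion of the row sum (as in Lemma~\ref{chapt2:lemma:matrix-inversion}). The mask $M$ in Eqn.~(\ref{eq.mask}) is incorporated by restricting to the index $j$ and zeroing out the forbidden entries before exponentiation; since we only want the $j$-th output row, this amounts to preparing a state over the first $j$ coordinates. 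I would then multiply by $U_V$ to obtain a block encoding (or state encoding) of the row $\Gsoft_j$. The polynomial degrees here are $\mathcal{O}(\log(1/\epsilon))$ for the exponential approximation, and the logarithmic factors in $d$ and $\ell = 2^N$ enter through the repeated amplitude-manipulation steps, contributing the $N^2$ and $d$ factors and one power of $\log(1/\epsilon)$ per nonlinear layer — hence the $\log^2(1/\epsilon)$ overall from the two nonlinearities (softmax and the FFN activation).

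Third, I would handle the residual connection with layer normalization (Eqn.~(\ref{eq.Resnet})): form $\Gsoft_j + S_j$ by Fact~\ref{LCU.blockencoding}, subtract the mean $\bar s_j$ (again an LCU with a uniform-superposition state), and divide by the standard deviation $\varsigma$ — the latter is a scalar estimated via amplitude estimation and then applied multiplicatively, or realized via QSVT inversion. Then the FFN (Eqn.~(\ref{eq.neuralnetwork})) is a composition $\sigma(\mathrm{LN}\cdot M_1 + b_1)M_2 + b_2$: multiply by $U_{M_1}$, add the bias via LCU, apply the activation $\sigma$ (e.g.\ $\relu$ or $\mathrm{GELU}$) through a QSVT polynomial approximation, multiply by $U_{M_2}$, add $b_2$, and finally apply a second layer normalization. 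Composing all of these via the product/LCU rules, the subnormalization multiplies out to $\alpha_s\alpha_w$ up to the $\mathcal{O}(1)$ factors from $W$-matrices and the $\tilde{\mathcal{O}}$ absorbs poly-logarithmic factors in $d$ and the biases' norms; the total call count to the input block encodings is the product of the per-stage polynomial degrees, giving $\tilde{\mathcal{O}}(dN^2\alpha_s\alpha_w\log^2(1/\epsilon))$. The final state $\sum_k \mathrm{Transformer}(S,j)_k\ket{k}$ is obtained by one more application of amplitude amplification to deflate the ancillas.

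The main obstacle I anticipate is the softmax layer: unlike matrix inversion, $e^{x}$ is not bounded on a symmetric interval around zero, so the polynomial approximation must be done carefully on the support determined by the actual entries of $QK^{\top}/\alpha_0 + M$, and the normalization (dividing each row by its sum of exponentials) requires a lower bound on that sum to control the condition number of the QSVT inversion — propagating these bounds through the subnormalization factors without blowing up the $\alpha_s\alpha_w$ dependence is the delicate part. A secondary difficulty is bookkeeping the error: each QSVT stage introduces an error that is amplified by the subnormalization of the subsequent stages (cf.\ the error terms $\alpha\varepsilon + \beta\delta$ in Facts~\ref{blockencoding.product} and \ref{LCU.blockencoding}), so one must choose each stage's precision as a suitable fraction of $\epsilon$, which is what forces the $\log^2(1/\epsilon)$ rather than $\log(1/\epsilon)$.
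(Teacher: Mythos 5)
Your overall strategy—compose the attention, residual/layer-norm, and feed-forward blocks coherently, letting each block's output block (or state) encoding feed the next, and multiply the per-stage query costs—is the same as the paper's. But there is a genuine gap in how you handle the two normalizations, and it is precisely the step the paper's construction is designed around. For the softmax denominator $Z_j=\sum_k\exp\circ(QK^{\top}/\alpha_0)_{jk}$ and the layer-norm factor $\varsigma$, you propose estimating them by amplitude estimation and dividing, or performing a QSVT inversion of the row sum, and you correctly flag the condition-number/lower-bound issue as "the delicate part." The paper never estimates or inverts these quantities at all: in Theorem~\ref{attention.softmax} the $j$-th softmax row is produced as a \emph{state encoding} $\propto\sum_k\exp\circ(A/2\alpha)_{jk}\ket{k}$ via the element-wise polynomial lemma plus amplitude amplification, so the division by $\sqrt{Z_j}$ (and, after the Hadamard-product squaring step in Theorem~\ref{attention.attention}, by $Z_j$) is supplied for free by the normalization of the quantum state; the same trick yields the $1/\varsigma$ factor in Theorem~\ref{theorem.residual}. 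Your route does not recover the claimed complexity: amplitude estimation of $Z_j$ or $\varsigma$ to relative precision $\epsilon$ costs $\Theta(1/\epsilon)$ coherent queries (not $\log(1/\epsilon)$), and a QSVT "inversion" of a row sum would import a dependence on an a priori lower bound for $Z_j$ that you have no way to control in terms of $\alpha_s\alpha_w$ alone. Without the state-normalization idea, the $\tilde{\mathcal{O}}(dN^2\alpha_s\alpha_w\log^2(1/\epsilon))$ bound does not follow.

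Two smaller points. First, applying "QSVT (Fact~\ref{QSVT})" to the block encoding of $QK^{\top}$ transforms its \emph{singular values}, which is not what softmax requires; the paper applies the exponential \emph{entrywise} via Lemma~\ref{elementfunction.blockencoding} (Hadamard products of block encodings), and only uses the diagonal-block-encoding-of-amplitudes device (Fact~\ref{block encoding.amplitudes}) in the feed-forward stage, where the nonlinearity acts on a vector. Your row-extraction variant can be repaired into something element-wise, but as written the mechanism is conflated. Second, the paper explicitly notes that $\relu$ admits no efficient polynomial approximation and therefore restricts to GELU (via the error-function approximation in Fact~\ref{polyappro.error} and an importance-weighted amplitude transformation); listing $\relu$ as an admissible activation for the QSVT step would not work.
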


We show this theorem by explicitly designing the quantum circuit for each computation block of the transformer architecture in a coherent way, i.e., without intermediate measurement. In addition, a subsequent transformation of the amplitude-encoded state, followed by measurement in the computational basis, yields the index of the next predicted token based on the probabilities modeled by the Transformer architecture.

\medskip 
\noindent\textbf{Roadmap}. In the remainder of this section, we detail the implementation of quantum Transformers, proceeding from the bottom to the top as illustrated in Figure~\ref{fig:qtransformer_arch}.  Specifically, we first demonstrate how to quantize the attention block $\mathrm{Attention}(S,j)$ in Chapter~\ref{chapter5:sec:self-attention}. Next, we present the quantization of residual connections and layer normalization operations (i.e., the operation $\mathrm{LN}(\mathrm{Attention}(S,j))$ in Eqn.~(\ref{chapter5:eqn:Transformer-formula})) in Chapter~\ref{chapter5:subsec:Q-resi-ln}. Last, we exhibit the quantization of the fully connected neural network to complete the computation $\mathrm{LN}(\mathrm{FFN}(\mathrm{LN}(\mathrm{Attention}(S,j))))$ in Chapter~\ref{chapter5:subsec:QFFN}. This end-to-end approach ensures that the generated quantum state corresponds to the one described in Eqn.~(\ref{chapter5:eqn:Q-trans-form}).

\subsection{Quantum self-attention}\label{chapter5:sec:self-attention}
We now describe how to achieve the quantum self-attention block, aiming to complete the computation $$ \mathrm{Attention}(Q,K,V)=\mathrm{softmax}(Q K^{\top}/\alpha_0)V\eqqcolon \Gsoft$$ in Eqn.~(\ref{matrix.self-attention}) on quantum computers. More specifically, under Assumption~\ref{chapt5:assmpt-input}, the quantum self-attention block outputs a block encoding of a matrix $G$ whose $j$-th row is the same as the output of the classical attention block, as described in the following theorem.
\begin{theorem}[Quantum self-attention\label{attention.attention}]
    Let $\alpha_0=\alpha_s^2\alpha_w^2$.
    For the index $j\in [\ell]$, one can construct a block encoding of a matrix $G$ such that $G_{j\star}=G^{\mathrm{soft}}_{j}\coloneqq(\mathrm{softmax}\left({QK^{\top}}/{\alpha_{0}}\right)V)_{j\star}$.
\end{theorem}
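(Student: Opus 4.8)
\textbf{Proof proposal for Theorem~\ref{attention.attention} (Quantum self-attention).}

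The plan is to build the block encoding of $G$ by composing the arithmetic rules for block encodings established in Chapter~\ref{chapt-3-sect-arithmeticblockencoding} with the quantum singular value transformation from Fact~\ref{QSVT}, working row by row for the fixed index $j$. First I would construct block encodings of $Q = SW_q$, $K = SW_k$, and $V = SW_v$: since $U_S$ is an $\Sinput$-encoding and $U_{W_q}, U_{W_k}, U_{W_v}$ are $\Winput$-encodings, Fact~\ref{blockencoding.product} (product of block encodings) immediately yields $(\alpha_s\alpha_w, a_s+a_w)$-encodings of each of $Q$, $K$, and $V$. Next, I would form a block encoding of the (unnormalized) score matrix $QK^{\top}$. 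The transpose does not affect block-encodability (conjugating by the appropriate SWAP/relabelling unitary), so another application of Fact~\ref{blockencoding.product} gives an $(\alpha_s^2\alpha_w^2, 2(a_s+a_w))$-encoding of $QK^{\top}$, i.e.\ of $QK^{\top}/\alpha_0$ scaled into the unit-norm regime with $\alpha_0 = \alpha_s^2\alpha_w^2$ as stated.

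The heart of the construction is the row-wise softmax. Here I would isolate the $j$-th row by preparing the state $\ket{j}$ on an index register and using the block encoding of $QK^{\top}/\alpha_0$ to apply the $j$-th row to $\ket{j}$, obtaining (a sub-normalized version of) the vector $v^{(j)} := (QK^{\top}/\alpha_0)_{j\star}$ in amplitude form. Applying the masked variant simply adds $M$ from Eqn.~(\ref{eq.mask}) before exponentiation, which at the block-encoding level amounts to projecting onto the first $j$ coordinates. I would then use the diagonal-block-encoding trick (Fact~\ref{block encoding.amplitudes}) to turn the amplitudes of this state into a diagonal matrix $\mathrm{diag}(v^{(j)})$, and apply QSVT (Fact~\ref{QSVT}) with a polynomial approximation of the exponential function $x \mapsto e^{x}$ on the relevant interval to obtain a block encoding of $\mathrm{diag}(e^{v^{(j)}_1},\dots,e^{v^{(j)}_\ell})$, hence of the vector $e^{v^{(j)}}$. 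The normalizing constant $\sum_k e^{v^{(j)}_k}$ is precisely the norm-squared overlap $\braket{+|e^{v^{(j)}}}$ up to scaling, which can be estimated coherently (or folded into the final amplitude via amplitude amplification / the state-preparation encoding of Definition~\ref{def.stateencoding}); this produces a state-encoding of the normalized attention weights $\mathrm{softmax}(v^{(j)})$. Finally, multiplying this weight vector against the block encoding of $V$ via Fact~\ref{blockencoding.product} (treating the $1\times\ell$ weight vector as a block-encoded matrix and contracting with the $\ell\times d$ matrix $V$) yields a block encoding of a matrix $G$ whose $j$-th row equals $\Gsoft_j = (\mathrm{softmax}(QK^{\top}/\alpha_0)V)_{j\star}$, as required.

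The main obstacle I anticipate is controlling the error and the normalization through the softmax step. Unlike matrix inversion, the exponential is not uniformly bounded on $[-1,1]$, so the polynomial approximation in QSVT must be set up carefully on a sub-interval determined by bounds on the entries of $QK^{\top}/\alpha_0$ (which is exactly why the scaling $\alpha_0 = \alpha_s^2\alpha_w^2$ is chosen), and the $\le 1/4$ magnitude constraint of Fact~\ref{QSVT} forces an additional rescaling that must be tracked and later undone. Moreover, the normalization denominator $\sum_k e^{v^{(j)}_k}$ can be exponentially small relative to the largest term, so obtaining the correctly normalized softmax row may require a round of amplitude amplification whose cost scales with this denominator; bounding this factor (and showing it only contributes polynomially, consistent with the $\tilde{\mathcal{O}}(dN^2\alpha_s\alpha_w\log^2(1/\epsilon))$ target in Theorem~\ref{thmTransformer.informal_main}) is the delicate part. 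The routine parts — composing products, handling transposes, and propagating the $\epsilon$-accuracy linearly through a constant number of block-encoding operations — I would treat by direct appeal to Facts~\ref{blockencoding.product} and \ref{LCU.blockencoding} without spelling out constants.
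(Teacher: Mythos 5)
Your overall skeleton matches the paper's (Step I: products of block encodings to get $QK^{\top}$ with $\alpha_0=\alpha_s^2\alpha_w^2$ and $V$; Step II: implement the element-wise exponential; Step III: contract with $U_V$), and your middle step takes a different but in principle legitimate route: instead of the paper's element-wise polynomial machinery (Lemma~\ref{elementfunction.blockencoding}, built from iterated Hadamard products of the block encoding of $QK^{\top}/\alpha_0$ with itself plus LCU), you extract the $j$-th row as a state, convert its amplitudes to a diagonal block encoding via Fact~\ref{block encoding.amplitudes}, and exponentiate the diagonal entries by a polynomial via QSVT --- essentially the nonlinear-amplitude-transformation technique the paper reserves for the feed-forward block (Theorem~\ref{theorem.ffn}). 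That substitution is workable (modulo the even/odd parity decomposition needed to apply a non-parity function like $e^{x}$ to a diagonal matrix with signed entries, which you do not mention), and it would trade the paper's $\mathcal{O}(r^2)$ Hadamard-product overhead for the amplitude-encoding route.

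The genuine gap is in how you produce the softmax \emph{normalization}. If you exponentiate the full logits $v^{(j)}_k$ and let state normalization do the work, the resulting amplitudes are $e^{v_k}/\sqrt{\sum_{k'} e^{2v_{k'}}}$, which is proportional to, but not equal to, $\mathrm{softmax}(v^{(j)})_k = e^{v_k}/Z_j$ with $Z_j=\sum_{k'}e^{v_{k'}}$; since the theorem demands a block encoding whose $j$-th row is exactly $\Gsoft_j$ (not merely proportional with an unknown constant), this mismatch must be repaired. Your proposed fix --- estimating $Z_j$ ``coherently'' or folding it in via amplitude amplification --- is not worked out, and the identification of $Z_j$ as a ``norm-squared overlap'' $\braket{+|e^{v^{(j)}}}$ is incorrect ($Z_j$ is an overlap with the uniform state up to $\sqrt{\ell}$, not a squared norm, and amplitude amplification rescales a state's success amplitude, it does not divide block-encoded entries by $Z_j$). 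The paper's resolution, which your proposal is missing, is to implement $\exp\circ(A/2\alpha)$ --- i.e., exponentiate \emph{half} the logits --- so that the normalized state in Theorem~\ref{attention.softmax} has amplitudes $\sqrt{\mathrm{softmax}(A/\alpha)_{jk}}$ with $1/\sqrt{Z_j}$ supplied automatically by state normalization, and then to square these amplitudes by taking the Hadamard product of that state encoding with itself (Lemma~\ref{Hadamard.blockencoding}); this yields a block encoding whose relevant entries are exactly the softmax weights, ready for the final product with $U_V$. Without this half-exponent-plus-squaring step (or an explicit, error-controlled division by $Z_j$), your construction does not deliver the stated $G_{j\star}=\Gsoft_j$.
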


\begin{tcolorbox}[enhanced, 
breakable,colback=gray!5!white,colframe=gray!75!black,title=Remark]
For quantum self-attention, we make a slight change by setting the scaling factor $\alpha_0=\alpha_s^2\alpha_w^2$ for the following reasons.
The first is that the usual setting $\alpha_0=1/\sqrt{d}$ is chosen somehow in a heuristic sense, and there are already some classical works considering different scaling coefficients which may even achieve better performance \citep{yang2022tensor, ma2024era}.
The second, which is more important, is that the quantum input assumption using the block encoding format naturally contains the normalization factor $\alpha$ which plays a similar role to the scaling factor.
Therefore, for the quantum case in the context of our work, it suffices to use $\alpha$ directly.  
\end{tcolorbox}

The implementation of the quantum self-attention block can be decomposed into three steps:
\begin{enumerate}
    \item Construction of the block encoding of the matrix $QK^{\top}$ and the matrix $V$ given access to $U_S$, $U_{W_q}$, $U_{W_k}$, and $U_{W_v}$;
    \item Implementation of the quantum algorithm to compute the softmax function $\text{softmax}(QK^{\top}/\alpha_0)$ given access to $U_{KQ}$;
    \item Multiply with $V$ via the product of block encodings.
\end{enumerate}
In what follows, we iteratively detail the implementation of each step. 

\medskip
\noindent\textbf{Step I}.  The construction of the block encoding of matrix $QK^{\top}$ and $V$ builds upon the employment of Fact~\ref{blockencoding.product}. That is, given access to $(\alpha,a)$-encoding $U_{A}$ of matrix $A$ and $(\beta,b)$-encoding $U_{B}$ of matrix $B$,  an $(\alpha\beta,a+b)$-encoding can be constructed for the matrix $AB$. This result leads to the efficient construction of the block encodings of $QK^{\top}$ and $V$, i.e.,
\begin{itemize}
    \item [-] For the matrix $V=W_vS$, it is straightforward to set $A=W_v$ and $B=S$ to construct the $\Vinput$-encoding $U_V$, where $\alpha_v=\alpha_s\alpha_w$ and $a_v = a_s + a_w$.
    \item [-] For the matrix $QK^{\top}$, we first use Fact~\ref{blockencoding.product} to construct the $\Qinput$-encoding $U_Q$ and $\Kinput$-encoding $U_K$ with $Q=W_qS$ and $K=W_kS$, respectively. Then, we use Fact~\ref{blockencoding.product} again to construct the $\QKinput$-encoding $U_{QK^{\top}}$ of $QK^{\top}$, where $\alpha_{0} = \alpha_s^2\alpha_w^2$ and $a_{0}=2a_s+2a_w$.  Note that for a real matrix $M$ and its block encoding unitary $U_M$, $U_M^\dagger$ is the block encoding of $M^{\top}$. 
\end{itemize}

\medskip
\noindent\textbf{Step II}. Once the unitary $U_{QK^{\top}}$ is prepared,  we move to implement the quantum algorithm corresponding to the softmax function, i.e., $\mathrm{softmax}(Q K^{\top}/\alpha_0)$. Note that the softmax function relies on the exponential function, which is generally resource-intensive to implement on quantum computers. To circumvent this bottleneck, the quantum Transformer uses polynomial functions to approximate the softmax function, as supported by the following fact.
\begin{fact}\label{approximation.exp}
    For $x\in [-1,1]$, the function $f(x)\coloneqq  e^{x}$ can be approximated with error bound $\epsilon$ with an $\mathcal{O}(\log(1/\epsilon))$-degree polynomial function.
\end{fact}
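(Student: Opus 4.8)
\textbf{Proof proposal for Fact~\ref{approximation.exp}.}

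The plan is to invoke the standard polynomial-approximation toolbox for smooth functions on a bounded interval, specialized to the exponential, and to track the degree dependence on the target precision $\epsilon$. First I would recall that $e^x$ is entire, and on the interval $[-1,1]$ all its derivatives are bounded by $e$. The cleanest route is to use the truncated Taylor series $p_m(x)=\sum_{n=0}^{m} x^n/n!$ centered at $0$: by Taylor's theorem with Lagrange remainder, for $x\in[-1,1]$ we have $|e^x-p_m(x)| \le e\,|x|^{m+1}/(m+1)! \le e/(m+1)!$. Since $(m+1)!$ grows super-exponentially, Stirling's formula gives $(m+1)! \ge ((m+1)/e)^{m+1}$, so the error drops below $\epsilon$ once $((m+1)/e)^{m+1} \ge e/\epsilon$, i.e.\ once $(m+1)\log((m+1)/e) \gtrsim \log(1/\epsilon)$. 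Solving this implicit inequality shows that $m=\mathcal{O}\!\left(\log(1/\epsilon)/\log\log(1/\epsilon)\right)=\mathcal{O}(\log(1/\epsilon))$ suffices, which is exactly the claimed degree bound. (If one prefers a bound that is more convenient for the QSVT machinery used later, one can instead cite the Chebyshev-truncation estimate for analytic functions, e.g.\ the Bernstein ellipse bound, which yields the same $\mathcal{O}(\log(1/\epsilon))$ scaling; the Taylor argument is self-contained and already gives the result.)

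The second ingredient I would make explicit is the normalization: QSVT (Fact~\ref{QSVT}) requires the approximating polynomial $P$ to satisfy $|P(x)|\le \tfrac14$ on $[-1,1]$, whereas $e^x$ ranges up to $e$ on this interval. Thus the statement should be read as approximating a suitably rescaled exponential, $\tfrac{1}{4e}e^{x}$ (or $\tfrac{1}{C}e^{x}$ for an appropriate constant $C$), by a polynomial of the same degree; rescaling a polynomial by a constant does not change its degree, so the bound is unaffected. I would note this rescaling explicitly so that the downstream softmax construction can absorb the constant factor into the overall block-encoding normalization, consistent with the remark preceding Fact~\ref{approximation.exp} about $\alpha$ playing the role of the scaling factor.

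The main obstacle here is not the existence of the polynomial but the bookkeeping of constants: one must verify that the degree bound is genuinely $\mathcal{O}(\log(1/\epsilon))$ and not, say, $\mathcal{O}(\mathrm{poly}\log(1/\epsilon))$ or $\mathcal{O}(\log(1/\epsilon)\log\log(1/\epsilon))$, and that the argument restricting $x$ to $[-1,1]$ is compatible with the actual range of the scaled scores $QK^{\top}/\alpha_0$ appearing in the attention block — this is why $\alpha_0=\alpha_s^2\alpha_w^2$ was chosen, ensuring the entries of $QK^{\top}/\alpha_0$ lie in $[-1,1]$ by the block-encoding normalization. Once these two points are settled, the proof is a two-line invocation of Taylor's theorem plus Stirling's estimate, and I would present it at that level of detail rather than grinding through the exact constant in front of $\log(1/\epsilon)$.
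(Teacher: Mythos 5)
Your proposal is correct. The paper states this result as a Fact and, per its stated convention, gives no proof of its own, so there is nothing to deviate from: your truncated-Taylor argument with the Lagrange remainder bound $|e^x-p_m(x)|\le e/(m+1)!$ and Stirling's estimate is the standard derivation underlying the references in the QSVT literature, and it in fact yields the slightly stronger degree $\mathcal{O}\!\left(\log(1/\epsilon)/\log\log(1/\epsilon)\right)$, which you correctly note implies the stated $\mathcal{O}(\log(1/\epsilon))$ bound. Your side remarks on rescaling the polynomial to meet the $|P(x)|\le \tfrac14$ requirement of Fact~\ref{QSVT} and on the role of $\alpha_0$ in keeping the arguments in $[-1,1]$ are not part of the Fact itself, but they are consistent with how the paper uses it in Theorem~\ref{attention.softmax}.
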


The insight provided by Fact~\ref{approximation.exp} is to use polynomial functions to approximate the softmax function, i.e., we first approximate $\exp\circ(Q K^{\top}/\alpha_{0})$ using polynomial functions, then multiply with different coefficients (normalization) for each row.

\begin{tcolorbox}[enhanced, 
breakable,colback=gray!5!white,colframe=gray!75!black,title=Remark]
The notation $\exp\circ(A)$ indicates that the exponential operation is applied elementwise to each entry of the matrix $A$, rather than representing a matrix exponential.

Moreover, the element-wise functions mean that functions are implemented on each matrix element.
\end{tcolorbox}

In this context, the challenge of implementing a quantized softmax function reduces to the implementation of a quantized polynomial function.  
The key technique for achieving this lies in applying polynomial functions to each element of block-encoded matrices, as detailed in the following lemma.

\begin{lemma}[Element-wise polynomial function of block-encoded matrix\label{elementfunction.blockencoding}]
Let $N,k \in \mathbb{N}$.
Given access to an $(\alpha, a)$-encoding of a matrix $A\in \mathbb{C}^{2^N\times 2^N}$ and an $r$-degree polynomial function $f_{r}(x)=\sum_{j=1}^{r} c_j x^j$, $c_j \in \mathbb C$ for $j \in [r]$, one can construct a $(C,b)$-encoding of $f_{r}\circ(A/\alpha)$ by using $\mathcal{O}(r^2)$ times the input unitary, where $C\coloneqq \sum_{j=1}^{r} |c_j| $, $b\coloneqq r a+(r-1)N+\lceil\log (r+1)\rceil$.

Moreover, for a polynomial function $g_r(x)=\sum_{j=0}^{r} c_j x^j$ with constant term $c_0$, one can construct a $(C',b)$-encoding of $g_{r}\circ (A/\alpha)$, where $C'=r c_0+C$.
\end{lemma}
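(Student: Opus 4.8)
\textbf{Proof plan for Lemma~\ref{elementfunction.blockencoding}.}

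The plan is to build the element-wise polynomial $f_r\circ(A/\alpha)$ out of element-wise monomials $(A/\alpha)^{\circ j}$ (Hadamard powers), and then combine them by a linear combination of block encodings. The two building blocks from earlier in the excerpt are exactly what is needed: Lemma~\ref{Hadamard.blockencoding} for the Hadamard product and Fact~\ref{LCU.blockencoding} (or Fact~\ref{blockencoding.linearcombination}) for the linear combination. First I would handle the monomial $x^j$. Starting from the given $(\alpha,a)$-encoding $U_A$ of $A$, I would iterate Lemma~\ref{Hadamard.blockencoding}: one application produces an $(\alpha^2, 2a+N)$-encoding of $A^{\circ 2}$, and inductively, after $j-1$ applications, one obtains an encoding of $A^{\circ j}$ with subnormalization $\alpha^j$ and ancilla count $ja+(j-1)N$. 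Dividing through, this is a $(1, ja+(j-1)N)$-encoding of $(A/\alpha)^{\circ j}$. The cost here is $\mathcal{O}(j)$ uses of $U_A$ for the $j$-th monomial, which is where the $\mathcal{O}(r^2)$ total query count will come from after summing over $j\in[r]$ (and the perfect-error assumption $\varepsilon=0$ keeps me from tracking error propagation through the Hadamard products).

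Next I would assemble $f_r\circ(A/\alpha)=\sum_{j=1}^r c_j (A/\alpha)^{\circ j}$ via linear combination of block encodings, Fact~\ref{LCU.blockencoding}. The subtlety is that the $r$ monomial encodings have \emph{different} ancilla sizes ($ja+(j-1)N$ for the $j$-th one), so before applying LCU I would pad each of them up to the common maximum $b_0 := ra+(r-1)N$ ancilla qubits (tensoring with identity on the extra qubits preserves the block-encoding property with the same subnormalization). With all $r$ encodings now uniform, the LCU construction with the PREP unitary encoding the coefficients $\{c_j\}$ and the SELECT unitary choosing the appropriate padded monomial encoding yields a $(\sum_{j=1}^r|c_j|, b_0 + \lceil\log(r+1)\rceil)$-encoding of $f_r\circ(A/\alpha)$; setting $C := \sum_{j=1}^r|c_j|$ and $b := ra+(r-1)N+\lceil\log(r+1)\rceil$ matches the claimed parameters. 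The query count is dominated by the SELECT step, which invokes each padded monomial encoding once, so the total is $\sum_{j=1}^r \mathcal{O}(j) = \mathcal{O}(r^2)$ uses of the input unitary.

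For the second part, with a constant term $c_0$, I would simply include the identity matrix $\mathbb{I}$ (trivially block-encoded) as an extra term in the LCU with coefficient $c_0$. Since $\mathbb{I}$ contributes $c_0$ to the $\ell_1$ norm of coefficients but the element-wise identity has a constant entry repeated, one has to be slightly careful: the $\ell_1$ bound on the subnormalization becomes $c_0 \cdot \|\mathbb{I}\|_{\text{ew}} + C$ where the element-wise ``norm'' of the all-ones pattern over a $2^N$-dimensional space contributes the factor appearing in the stated $C' = r c_0 + C$ — this is the one place where I expect to have to think about the precise normalization convention (the factor $r$ in front of $c_0$ presumably arises from how the constant is distributed, or from a bound $\|\mathbb{I}\|\le$ something in the chosen encoding of a constant-entry matrix; I would reconcile this against the definition of block encoding used, following Ref.~\citep{guononlinear2021} as cited). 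The main obstacle, then, is not the high-level structure — which is a clean ``Hadamard-power + LCU'' sandwich — but bookkeeping the ancilla padding and, especially, pinning down the exact subnormalization constant for the constant-term case so that it genuinely equals $C'=rc_0+C$ rather than something off by a factor; everything else is routine composition of Lemma~\ref{Hadamard.blockencoding} and Fact~\ref{LCU.blockencoding}.
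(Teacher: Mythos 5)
Your treatment of the main claim is correct and follows essentially the same route as the paper: iterate Lemma~\ref{Hadamard.blockencoding} to obtain $(1,\,ja+(j-1)N)$-encodings of $(A/\alpha)^{\circ j}$ at cost $\mathcal{O}(j)$ queries each, regard them all as encodings on the common ancilla count $ra+(r-1)N$, and combine them with a coefficient-preparation pair and a select unitary via Fact~\ref{LCU.blockencoding} (the paper's $P_L,P_R$ and $W$), giving the $(C,b)$-encoding with $\sum_{j=1}^r \mathcal{O}(j)=\mathcal{O}(r^2)$ uses of $U_A$.

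The gap is in the constant-term case. The element-wise function $g_r\circ(A/\alpha)$ adds $c_0$ to \emph{every} entry of the matrix, so the extra term in the linear combination must be a block encoding of the all-ones matrix $\sum_{k,k'\in[2^N]}\ket{k}\bra{k'}$, not of the identity $\mathbb{I}$; adding $c_0\,\mathbb{I}$ to the LCU only shifts the diagonal and produces the wrong matrix, so your stated plan would fail as written. Moreover the all-ones matrix is neither unitary nor ``trivially block-encoded'': its spectral norm is $2^N$, and the paper has to construct its encoding explicitly, using
\begin{align}
\sum_{k,k'\in[2^N]}\ket{k}\bra{k'}=\frac{2^N}{2}\Bigl(\mathbb{I}_{2^N}-H^{\otimes N}\bigl(\mathbb{I}_{2^N}-2\ket{0^N}\bra{0^N}\bigr)H^{\otimes N}\Bigr),
\end{align}
which yields a $(2^N,1)$-encoding, and then handles the resulting $2^N$ normalization by amplifying the constant coefficient encoded in the state-preparation step (encoding $2^N c_0$ rather than $c_0$) before taking the linear combination with $U_{A^1},\dots,U_{A^r}$. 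Your hedge about the ``element-wise identity has a constant entry repeated'' and an undefined quantity $\|\mathbb{I}\|_{\text{ew}}$ gestures at this issue but supplies neither the correct target matrix, nor its block encoding, nor the normalization bookkeeping that produces the stated $C'$; this is precisely the missing piece the paper's proof provides.
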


\begin{proof}[Proof of Lemma~\ref{elementfunction.blockencoding}]
To achieve this implementation, we construct two state-preparation unitaries $P_L$ and $P_R$, which act on $\lceil \log(r+1)\rceil$ qubits such that
\begin{align}
P_L:\ \ket{0^{\lceil \log(r+1)\rceil}}&\to\frac{1}{\sqrt{C}}\sum_{j=1}^{r} \sqrt{|c_j|} \ket{j},\\
P_R:\ \ket{0^{\lceil \log(r+1)\rceil}}& \to\frac{1}{\sqrt{C}}\sum_{j=1}^{r} \sqrt{|c_j|}e^{i\theta_j} \ket{j},
\end{align}
where $C=\sum_{j=1}^{r} |c_j|$ and $|c_j| e^{i\theta_j}=c_j$. These two unitaries encode the polynomial coefficients $\{c_j\}$ into the quantum circuit, which is needed for block encoding via the linear combination of unitaries indicated in \cref{blockencoding.linearcombination}.
Note that the construction of 
$P_L$ and $P_R$ is efficient for small $r$, as the corresponding circuit is $\mathcal{O}(r)$-depth with only elementary quantum gates \citep{sun2023asymptotically, zhang2022quantum}.

For $j\in [r]$, let $U_{A^j}$ be the $(1,ja+(j-1)N)$-encoding of $$(A/\alpha)^{\circ j}:=\underbrace{(A/\alpha)\circ (A/\alpha)\circ \cdots\circ (A/\alpha)}_{j-1\ \text{times of Hadamard product}},$$ which is constructed by iteratively applying \cref{Hadamard.blockencoding}.
For simplicity, these block encodings $U_{A^j}$ can also be considered as $(1,r a+ (r-1)N)$-encoding of $(A/\alpha)^{\circ j}$.
Then, we construct a $(r a+r N+\lceil\log (r+1)\rceil)$-qubit unitary 
$W=\sum_{j=1}^{r}|j\rangle \langle j|\otimes U_{A^j} +(\mathbb{I}_{2^{\lceil \log (r+1) \rceil}}-\sum_{j=1}^{r}|j\rangle \langle j|)\otimes \mathbb{I}_{2^{(r a+r N)}}$.
Taking the linear combination of block encodings via \cref{blockencoding.linearcombination}, we can implement a $(C,r a+(r-1)N+\lceil\log (r+1)\rceil)$-encoding of $f_{r}\circ(A/\alpha)$.

To implement element-wise functions including constant terms, we also need access to the block encoding of a matrix whose elements are all $1$.
Notice that this matrix can be written as the linear combination of the identity matrix and the reflection operator, i.e.,
  \begin{align}
    \sum_{k,k' \in [2^N]}|k\rangle\langle k'| &= \frac{2^N}{2}\left(\mathbb{I}_{2^N}-(\mathbb{I}_{2^N}- \frac{2}{2^N} \sum_{k, k'\in [2^N]}\ket {k}\bra{k'})\right)\\
    &=\frac{2^N}{2}\biggl(\mathbb{I}_{2^N}-H^{\otimes N}\left(\mathbb{I}_{2^N}-2 \ket { 0^N}\bra{ 0^N} \right)H^{\otimes N}\biggl),\label{eq.sum_all_one}
  \end{align}
where $H$ is the Hadamard gate.
Define $U_{\mathrm{ref}}= |0\rangle\langle 0|\otimes \mathbb{I}_{2^N}+|1\rangle\langle 1| \otimes (H^{\otimes N} (\mathbb{I}_{2^N}-2|0^N\rangle\langle 0^N|)H^{\otimes N})$.
By direct computation, one can show that $U_0=(XH\otimes \mathbb{I}_{2^N})U_{\mathrm{ref}}(H\otimes \mathbb{I}_{2^N})$ is an $(2^N,1)$-encoding of $\sum_{k,k'}|k\rangle\langle k'|$.
One can achieve the element-wise function by following the same steps as above and taking linear combinations among $U_0,\dots, U_{A^r}$. One point to notice is that we can only construct $(2^N,1)$-encoding of the matrix whose elements are all $1$ since the spectral norm of this matrix is $2^N$. 
Therefore, we encode $2^N c_0$ into the state instead of $c_0$ to amplify the constant term. 
\end{proof}

Supported by the above lemma, we can complete Step II (i.e., the quantum softmax for self-attention), as shown in the following theorem. 

\begin{theorem}[Quantum softmax for self-attention, informal]\label{attention.softmax}
    Given an $(\alpha,a)$-encoding $U_A$ of a matrix $A\in \mathbb{R}^{\ell\times \ell}$, a positive integer $d$, and an index $j\in [\ell]$,
    one can prepare a state-encoding of
    \[
    \ket{A_j}\coloneqq \sum_{k=1}^\ell \sqrt{\mathrm{softmax}\left(A/\alpha\right)_{jk}} \ket{k}=
    \frac{1}{\sqrt{Z_j}}\sum_{k=1}^\ell \exp\circ \Big(\frac{A}{2\alpha} \Big)_{jk}\ket{k},\] 
    where $Z_j=\sum_{k=1}^\ell \exp\circ(A/\alpha)_{jk}$.
 \end{theorem}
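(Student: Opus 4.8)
\textbf{Proof proposal for Theorem~\ref{attention.softmax}.}

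The plan is to build the state-encoding of $\ket{A_j}$ by composing three ingredients already available in the excerpt: (i) the element-wise polynomial machinery of Lemma~\ref{elementfunction.blockencoding}, which lets me apply a polynomial approximation of $\exp$ entrywise to the block-encoded matrix $A/\alpha$; (ii) the exponential approximation of Fact~\ref{approximation.exp}, which guarantees an $\mathcal{O}(\log(1/\epsilon))$-degree polynomial $p_r(x)\approx e^{x/2}$ on $[-1,1]$ (I will work with $e^{x/2}$ since the target amplitudes involve $\exp\circ(A/(2\alpha))$, and $A/\alpha$ has spectral norm at most $1$ by the block-encoding normalization, so all entries lie in $[-1,1]$); and (iii) the observation that applying a block-encoding of a matrix $B$ to a fixed computational-basis state $\ket{j}$ produces (the subnormalized) state proportional to $\sum_k B_{jk}\ket{k}$, i.e.\ reading off a single row. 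Concretely, first I would invoke Lemma~\ref{elementfunction.blockencoding} with the polynomial $g_r$ that approximates $x\mapsto e^{x/2}$ (including the constant term, which is why the lemma's $g_r$ variant with the $2^N c_0$ amplification step is needed) to obtain a $(C',b)$-encoding $U_{\exp}$ of the matrix $E/C'$, where $E_{jk}$ approximates $\exp\circ(A/(2\alpha))_{jk}$ and $C'$ is the sum of absolute coefficients. Then I would prepare $\ket{j}$ on the system register, append the $b$ ancilla qubits in $\ket{0^b}$, apply $U_{\exp}$, and post-select (or rather, identify as the "good" subspace) the ancillas being $\ket{0^b}$: the resulting subnormalized state on the system register is $\frac{1}{C'}\sum_{k=1}^\ell E_{jk}\ket{k}$, whose norm is exactly $\sqrt{Z_j}/C'$ with $Z_j=\sum_k E_{jk}^2 = \sum_k \exp\circ(A/\alpha)_{jk}$ (up to the polynomial approximation error). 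This is precisely a state-encoding (Definition~\ref{def.stateencoding}) of $\ket{A_j}=\frac{1}{\sqrt{Z_j}}\sum_k \exp\circ(A/(2\alpha))_{jk}\ket{k}$, with normalization factor $\alpha_{\ket{A_j}}=C'/\sqrt{Z_j}$.

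The key steps, in order: (1) Fix the target accuracy and invoke Fact~\ref{approximation.exp} to get a degree-$r$ polynomial $p_r$, $r=\mathcal{O}(\log(1/\epsilon))$, with $|p_r(x)-e^{x/2}|\le\epsilon'$ on $[-1,1]$; (2) verify that the entries of $A/\alpha$ are real and in $[-1,1]$, which follows from $\|A\|\le\alpha$ for the $(\alpha,a)$-block-encoding and $A$ being real, so the element-wise evaluation is well-defined and accurate; (3) apply Lemma~\ref{elementfunction.blockencoding} (the $g_r$ version) to construct the $(C',b)$-encoding $U_{\exp}$ of $\frac{1}{C'}\,p_r\!\circ\!(A/\alpha)$ using $\mathcal{O}(r^2)$ calls to $U_A$, with $b=ra+(r-1)N+\lceil\log(r+1)\rceil$; (4) initialize the system register in $\ket{j}$ (via $N$ single-qubit $X$ gates on the binary encoding of $j$) and apply $U_{\exp}$ with ancillas in $\ket{0^b}$; (5) read off the good-subspace component and identify it as the state-encoding of $\ket{A_j}$, tracking how the polynomial error $\epsilon'$ propagates to an error on the normalized state; (6) if an explicit normalized copy of $\ket{A_j}$ is wanted, further apply $\mathcal{O}(\alpha_{\ket{A_j}})=\mathcal{O}(C'/\sqrt{Z_j})$ rounds of amplitude amplification (per the discussion following Definition~\ref{def.stateencoding}), though for the downstream attention construction the state-encoding form typically suffices.

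The main obstacle I anticipate is the error analysis in step (5): the polynomial $p_r$ approximates $e^{x/2}$ uniformly, but I need to control the error on the \emph{normalized} state $\ket{A_j}$, which requires lower-bounding $Z_j$ (to ensure the normalization is not pathologically small) and then using a standard perturbation bound of the form "if $\|\tilde v - v\|\le\delta$ and $\|v\|\ge c$ then $\|\tilde v/\|\tilde v\| - v/\|v\|\|\le 2\delta/c$." Since each entry $\exp\circ(A/(2\alpha))_{jk}\ge e^{-1/2}>0$, we have $Z_j\ge \ell\, e^{-1}$, which gives a clean lower bound and shows the approximation degree $r=\mathcal{O}(\log(1/\epsilon))$ indeed suffices; nonetheless, bookkeeping the interplay between the entrywise error $\epsilon'$, the $\ell^2$-aggregated error across $\ell=2^N$ entries, and the final target accuracy $\epsilon$ is the delicate part and is where I would spend most of the care. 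A secondary subtlety is confirming that Lemma~\ref{elementfunction.blockencoding}'s constant-term handling (the $2^N c_0$ amplification through the all-ones matrix $U_0$) composes correctly with the non-constant terms without blowing up the subnormalization factor $C'$ beyond $\sum_j|c_j|$; this is stated in the lemma, so I would simply apply it, but it is worth a sentence of justification in the formal write-up.
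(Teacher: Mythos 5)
Your overall route is the same as the paper's: approximate $e^{x/2}$ by a low-degree polynomial (Fact~\ref{approximation.exp}), apply it entrywise to the block-encoded $A/\alpha$ via Lemma~\ref{elementfunction.blockencoding}, then hit the resulting block encoding with the basis state $\ket{j}$ to read off one line of $\exp\circ(A/(2\alpha))$ and interpret the good-subspace component as a state-encoding of $\ket{A_j}$ (with optional amplitude amplification). Two points of divergence are worth flagging. First, a concrete slip in your step (4): applying $U_{\exp}$ to $\ket{0^b}\ket{j}$ and projecting the ancillas onto $\ket{0^b}$ yields $\frac{1}{C'}\sum_k E_{kj}\ket{k}$, i.e.\ the $j$-th \emph{column} of the block-encoded matrix, not the $j$-th row that the row-wise softmax requires; since $A$ (and in the application $QK^{\top}/\alpha_0$) is not symmetric, this is the wrong state as written. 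The paper handles this by using $U_{f\circ(A)}^{\dagger}(\mathbb{I}\otimes U_j)$, exploiting that for a real block-encoded matrix the adjoint unitary block-encodes the transpose, so $\exp\circ(A/(2\alpha))_{jk}=\exp\circ(A/(2\alpha))^{\top}_{kj}$; your argument is repaired by the same one-line change.

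Second, an efficiency refinement you miss: you invoke the $g_r$ (constant-term) variant of Lemma~\ref{elementfunction.blockencoding} directly, whose constant term is realized through the all-ones matrix with encoding factor $2^N=\ell$, and this inflates the subnormalization $C'$ and hence the number of amplitude-amplification rounds (or the state-encoding factor) by roughly $\sqrt{\ell}$ relative to the paper. Because only the $j$-th row is ever used, the paper instead takes the linear combination with the matrix whose $j$-th row is all ones and whose other rows vanish, a rank-one matrix with spectral norm $\sqrt{\ell}$, which is implementable as $\sum_{k}\ket{j}\bra{k}=\frac{\sqrt{\ell}}{2}\bigl(U_jH^{\otimes N}-U_j(\mathbb{I}_{2^N}-2\ket{0^N}\bra{0^N})H^{\otimes N}\bigr)$; this improves the $\ell$-dependence from $\ell$ to $\sqrt{\ell}$ and is what makes the final runtime claim of Theorem~\ref{thmTransformer.informal_main} come out as stated. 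Your version still proves the informal existence statement (a state-encoding is produced, and your lower bound $Z_j\ge \ell e^{-1}$ plus the normalization-perturbation bound is the right way to control the error), but it would carry a worse normalization constant into the downstream complexity analysis.
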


\begin{proof}[Proof sketch of Theorem~\ref{attention.softmax}]
    We first construct the block encoding of $\exp\circ(\frac{A}{2\alpha})$.
    Note that Taylor expansion of $\exp(x)$ contains a constant term $1$.
    This can be achieved with \cref{elementfunction.blockencoding} and \cref{approximation.exp}.
    Here, since we are only focusing on the $j$-th row, instead of taking linear combination with the matrix whose elements are all $1$, we take sum with the matrix whose $j$-th row elements are all $1$ and else are $0$.
    This enables us to have a better dependency on $\ell$, i.e., from $\ell$ to $\sqrt{\ell}$.
    For index $j\in [\ell]$, let $U_j:\ket{0}\rightarrow \ket{j}$.
    One can achieve this by changing Eqn.~(\ref{eq.sum_all_one}) to the following,
    \begin{align}
        \sum_{k}|j\rangle\langle k|=\frac{\sqrt{\ell}}{2}(U_j H^{\otimes N}-U_j\left(\mathbb{I}_{2^N} -2 \ket { 0^N}\bra{0^N} \right)H^{\otimes N}).
    \end{align}
    Following the same steps in \cref{elementfunction.blockencoding}, one can achieve the construction.
    There are two error terms in this step. Denote $U_{f\circ(A)}$ as the constructed block encoding unitary.
    By \cref{elementfunction.blockencoding} and some additional calculation, one can show that $U_{f\circ(A)}$ is a block-encoding of $\exp\circ(\frac{A}{2\alpha})$. Note that $\exp\circ(\frac{A}{2\alpha})_{jk}=\exp\circ(\frac{A}{2\alpha})^{\top}_{kj}$.    
With unitary $U_{f\circ(A)}^\dagger (I\otimes U_j)$ and amplitude amplification, one can prepare a state-encoding of the target state
\begin{align}
    \ket{A_j}\coloneqq\frac{1}{\sqrt{Z_j}}\sum_{k=1}^\ell \exp\circ \Big(\frac{A}{2\alpha} \Big)_{jk}\ket{k},
\end{align}
where $Z_j=\sum_{k=1}^{\ell} \exp\circ(A/\alpha)_{jk}$ is the normalization factor of softmax function for the $j$-th row.
\end{proof}

\medskip
\noindent\textbf{Step III}.  Finally, we implement the matrix multiplication with $V$. This can be easily achieved by using Fact~\ref{blockencoding.product}, with $U_{f(QK^{\top})}^\dagger$ and $U_V$. Consequently, we obtain an encoded quantum state  analogous to
    \begin{align}
        \sum_{k}^\ell (\mathrm{softmax}(QK^{\top}/\alpha_0)V)_{jk}\ket{k}.
    \end{align}

Combining the results of Steps I, II, and III, we are now ready to present the proof of Theorem~\ref{attention.attention}.

\begin{proof}[Proof of Theorem~\ref{attention.attention}]
    In the first step, we construct the block encoding of matrix $QK^{\top}$ and $V$.
    Note that for a real matrix $M$ and its block encoding unitary $U_M$, $U_M^\dagger$ is the block encoding of $M^{\top}$.
    By Fact~\ref{blockencoding.product}, one can construct an $\QKinput$-encoding $U_{QK^{\top}}$ of $QK^{\top}$, where $\alpha_{0}\coloneqq \alpha_s^2\alpha_w^2$ and  $a_{0}=2a_s+2a_w$. One can also construct an $\Vinput$-encoding $U_V$ of $V$, where $\alpha_v=\alpha_s\alpha_w$ and $a_v = a_s + a_w$.

    By \cref{attention.softmax}, using $U_{QK^{\top}}$ one can prepare a state-encoding of the state 
    \[\sum_{k=1}^\ell \sqrt{\mathrm{softmax}(QK^{\top}/\alpha_0)_{jk}}\ket{k},\]
    where $Z_j=\sum_{k=1}^\ell \exp\circ(QK^{\top}/\alpha_{0})_{jk}$.
    Remember that state encoding is also a block encoding.
    By \cref{Hadamard.blockencoding}, one can construct a block encoding of a matrix whose $j$-th column is \[[\mathrm{softmax}(QK^{\top}/\alpha_0)_{j1},\dots, \mathrm{softmax}(QK^{\top}/\alpha_0)_{j\ell}]\] ignoring other columns.
    Let this block-encoding unitary be $U_{f(QK^{\top})}$.

    Last, by exploiting Fact~\ref{blockencoding.product} again, with $U_{f(QK^{\top})}^\dagger$ and $U_V$, we obtain an encoded quantum state  analogous to
    \[ 
        \sum_{k}^\ell (\mathrm{softmax}(QK^{\top}/\alpha_0)V)_{jk}\ket{k}.\]
   
\end{proof}

\subsubsection{Extension to implement quantum masked self-attention}

Now we consider how to implement the \textit{masked} self-attention, which is essential for the decoder-only structure.
This can be achieved by slightly changing some steps as introduced in previous theorems.

\begin{corollary}
[Quantum masked self-attention]\label{cor:masked_attention.prehalf}
    For the index $j\in [\ell]$, one can construct a block encoding of a matrix $G^{\mathrm{mask}}$ such that $G^{\mathrm{mask}}_{j\star}=(\mathrm{softmax}(\frac{QK^{\top}}{\alpha_{0}}+M)V)_{j\star}$, where $M$ is the masked matrix as Eqn.~(\ref{eq.mask}), $Z_j=\sum_{k=1}^\ell \exp\circ(\frac{QK^{\top}}{\alpha_{0}}+M)_{jk}$.
\end{corollary}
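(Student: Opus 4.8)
\textbf{Proof proposal for Corollary~\ref{cor:masked_attention.prehalf}.}

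The plan is to mimic the three-step construction used for unmasked self-attention (Theorem~\ref{attention.attention}), modifying only the softmax step so that the additive mask $M$ from Eqn.~(\ref{eq.mask}) is incorporated \emph{before} the element-wise exponential. Steps~I and III are unchanged: from $U_S$, $U_{W_q}$, $U_{W_k}$, $U_{W_v}$ and Fact~\ref{blockencoding.product} we build the $\QKinput$-encoding $U_{QK^{\top}}$ of $QK^{\top}$ and the $\Vinput$-encoding $U_V$ of $V$, and at the end we compose with $U_V$ again via Fact~\ref{blockencoding.product}. Thus the only genuine work is to produce a state-encoding of $\sum_{k=1}^{\ell}\sqrt{\mathrm{softmax}(\tfrac{QK^{\top}}{\alpha_0}+M)_{jk}}\,\ket{k}$, after which the argument of Theorem~\ref{attention.attention} carries over verbatim.

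For the masked softmax I would proceed as in the proof sketch of Theorem~\ref{attention.softmax}, but observe that the mask acts trivially on the structure of the $j$-th row: since $M_{jk}=0$ for $k\le j$ and $M_{jk}=-\infty$ for $k>j$, we have $\exp\circ(\tfrac{QK^{\top}}{2\alpha_0}+\tfrac{M}{2})_{jk} = \exp\circ(\tfrac{QK^{\top}}{2\alpha_0})_{jk}$ for $k\le j$ and $=0$ for $k>j$. So masking is equivalent to building the same polynomial approximation of $\exp\circ(\tfrac{QK^{\top}}{2\alpha_0})$ via Lemma~\ref{elementfunction.blockencoding} and Fact~\ref{approximation.exp}, then multiplying the resulting row by the $0/1$ vector $(\mathbb{I}_{k\le j})_k$. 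Concretely, in the construction of Lemma~\ref{elementfunction.blockencoding}, where the constant term of the Taylor series of $\exp$ is injected through the ``all-ones on row $j$'' matrix $\sum_k \ket{j}\bra{k}$, I would instead inject it through the truncated matrix $\sum_{k\le j}\ket{j}\bra{k}$; this matrix is block-encodable with normalization $\sqrt{j+1}\le\sqrt{\ell}$ using a state-preparation unitary $U_{\le j}:\ket{0}\to \tfrac{1}{\sqrt{j+1}}\sum_{k\le j}\ket{k}$ in place of $H^{\otimes N}$, at $\mathcal{O}(N)$ extra depth. The higher-degree polynomial terms $(QK^{\top}/\alpha_0)^{\circ r}$ need no modification, because after extracting the $j$-th row via $U_j$ the entries with $k>j$ will simply not be selected once we also apply the truncation projector $\Pi_{\le j}=\sum_{k\le j}\ket{k}\bra{k}$ (implementable as a comparison oracle / one ancilla flag) to the output register. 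Applying $U_{f\circ(\cdot)}^{\dagger}(\mathbb{I}\otimes U_j)$ followed by $\Pi_{\le j}$ and amplitude amplification then prepares a state-encoding of $\tfrac{1}{\sqrt{Z_j}}\sum_{k=1}^{\ell}\exp\circ(\tfrac{QK^{\top}}{2\alpha_0}+\tfrac{M}{2})_{jk}\ket{k}$ with $Z_j=\sum_{k=1}^{\ell}\exp\circ(\tfrac{QK^{\top}}{\alpha_0}+M)_{jk}$, exactly the masked normalization claimed.

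From here the proof closes as in Theorem~\ref{attention.attention}: use Lemma~\ref{Hadamard.blockencoding} to promote this state-encoding to a block encoding of a matrix whose $j$-th column equals the masked softmax row, call it $U_{f(QK^{\top})}^{\mathrm{mask}}$, and compose $(U_{f(QK^{\top})}^{\mathrm{mask}})^{\dagger}$ with $U_V$ by Fact~\ref{blockencoding.product} to obtain a block encoding of $G^{\mathrm{mask}}$ with $G^{\mathrm{mask}}_{j\star}=(\mathrm{softmax}(\tfrac{QK^{\top}}{\alpha_0}+M)V)_{j\star}$. The main obstacle, and the only place where care is needed, is verifying that restricting the ``constant-term'' matrix and inserting the truncation projector does not break the amplitude-amplification success-probability bookkeeping in Theorem~\ref{attention.softmax} and does not worsen the $\sqrt{\ell}$ scaling; since $j+1\le\ell$ the normalization only improves, and the projector $\Pi_{\le j}$ commutes with the row-selection step, so the accounting should go through with the same $\mathcal{O}(\cdot)$ cost — but this is the step I would write out in full detail rather than assert.
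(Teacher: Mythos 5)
Your proposal matches the paper's own proof in essentials: both modify the constant-term matrix in the polynomial approximation of the exponential so that it only covers the prefix of row $j$ (the paper uses the first $2^{\lceil \log(j+1)\rceil}$ entries prepared with Hadamards, you use exactly $j+1$ entries via a dedicated state-preparation unitary), and both then apply a block-encoded projector $\sum_{k\le j}\ket{k}\bra{k}$ to zero out the masked columns before closing the argument exactly as in Theorems~\ref{attention.softmax} and \ref{attention.attention}. The only difference is cosmetic: the paper spells out the LCU construction of the projector where you merely assert its implementability, so your route is correct and essentially identical.
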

\begin{proof}[Proof of Corollary~\ref{cor:masked_attention.prehalf}]
    To achieve masked self-attention, we slightly change the steps mentioned in \cref{attention.softmax}.
   
    First, to approximate the exponential function, we move beyond using a matrix where all elements in the $j$-th row are set to $1$ while other rows remain $0$. Instead, we refine this approach by considering only the first  $2^{\lceil \log (j+1) \rceil}$  elements in the  $j$-th row to be $1$.  Note that this matrix can be achieved similarly to the original one.
    The encoding factor of this matrix is $2^{\lceil \log (j+1)\rceil/2}$.
    
    Second, after approximating the function,
    for index $j\in [\ell]$, we multiply the block encoding with a projector $\sum_{k,k\leq j}\ket{k}\bra{k}$ to mask the elements.
    Though the projector $\sum_{k\in \mathcal{S}} \ket{k}\bra{k}$ for $\mathcal S \subseteq [\ell]$ is not unitary in general, one can construct a block encoding of the projector by noticing that it can be written by the linear combination of two unitaries:
    \begin{align}
        \sum_{k\in \mathcal S} \ket{k}\bra{k}=\frac{1}{2}\mathbb{I}+\frac{1}{2}\Big(2\sum_{k\in \mathcal S} \ket{k}\bra{k}-\mathbb{I} \Big).
    \end{align}
    Define $U_{\rm proj}\coloneqq\ket{0}\bra{0}\otimes \mathbb{I}+|1\rangle\langle 1|\otimes (2\sum_{k\in \mathcal S} |k\rangle\langle k|- \mathbb{I})$.
    One can easily verify that $(H\otimes \mathbb{I})U_{\rm proj}(H\otimes \mathbb{I})$ is a $(1,1,0)$-encoding of $\sum_{k\in \mathcal S} |k\rangle\langle k|$, where $H$ is the Hadamard gate.
    The following steps follow the same with \cref{attention.softmax} and \cref{attention.attention}.
\end{proof}

One may further achieve the multi-head self-attention case by using the linear combination of unitaries.
 
\subsection{Quantum residual connection and layer normalization}\label{chapter5:subsec:Q-resi-ln}

In this subsection, we discuss how to implement the residual connection with layer normalization on a quantum computer. 
We continue based on the result in \cref{attention.attention} to implement the Layer Norm block shown in Figure~\ref{fig:qtransformer_arch}.

\begin{theorem}[Quantum residual connection with layer normalization]\label{theorem.residual}
Given access to the block encoding of the matrix $G$ in Theorem~\ref{attention.attention}, One can construct a quantum-encoded state 
    \[
    \sum_{k=1}^d \mathrm{LN}(G^{\mathrm{soft}}_{j},S_{j})_k\ket{k}=\frac{1}{\varsigma}\sum_{k=1}^d (G^{\mathrm{soft}}_{jk}+S_{jk}-\Bar{s}_j)\ket{k},\]
    where $\Bar{s}_j\coloneqq \frac{1}{d}\sum_{k=1}^d (G^{\mathrm{soft}}_{jk}+S_{jk})$ and $\varsigma\coloneqq \sqrt{\sum_{k=1}^d (G^{\mathrm{soft}}_{jk}+S_{jk}-\Bar{s}_j)^2}$.
\end{theorem}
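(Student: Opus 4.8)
\textbf{Proof proposal for Theorem~\ref{theorem.residual}.}

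The plan is to build the target amplitude-encoded state by a linear combination of three block encodings: the one for $G$ from Theorem~\ref{attention.attention}, the one for the input sequence $S$ (available from Assumption~\ref{chapt5:assmpt-input}), and a third one that implements the mean-subtraction term $\Bar{s}_j$. First I would extract from the block encodings of $G$ and $S$ the block encodings whose $j$-th row is respectively $G^{\mathrm{soft}}_j$ and $S_j$; this is done by composing with a unitary $U_j:\ket{0}\to\ket{j}$ on the row register, exactly as in the proof sketch of Theorem~\ref{attention.softmax}. Applying Fact~\ref{LCU.blockencoding} (linear combination of block encodings) to these two then yields a block encoding of a matrix whose $j$-th row is $G^{\mathrm{soft}}_j+S_j$, with encoding factor controlled by $\alpha_0+\alpha_s$ (or the appropriate normalization inherited from the two inputs).

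Next I would handle the centering term. The key observation is that $\Bar{s}_j$ is the uniform vector obtained by averaging the entries of $G^{\mathrm{soft}}_j+S_j$, so the matrix $\frac{1}{d}\sum_{k,k'}\ket{k}\bra{k'}$ acting on the row $(G^{\mathrm{soft}}_j+S_j)$ produces $d$ copies of $\Bar{s}_j$. A block encoding of the all-ones matrix $\sum_{k,k'}\ket{k}\bra{k'}$ is already constructed in the proof of Lemma~\ref{elementfunction.blockencoding} via Eqn.~(\ref{eq.sum_all_one}), with encoding factor $d$; composing it (Fact~\ref{blockencoding.product}) with the block encoding of the $j$-th-row vector and rescaling gives a block encoding of the matrix whose $j$-th row is $\Bar{s}_j\cdot(1,\dots,1)$. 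Taking a further linear combination via Fact~\ref{LCU.blockencoding} with a coefficient $-1$ relative to the previous term produces a block encoding of a matrix whose $j$-th row is exactly $G^{\mathrm{soft}}_j+S_j-\Bar{s}_j$.

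Finally, to obtain the normalized state I would apply the block encoding to $\ket{j}$ on the row register (i.e., read off the $j$-th row as an unnormalized amplitude-encoded vector $\sum_k (G^{\mathrm{soft}}_{jk}+S_{jk}-\Bar{s}_j)\ket{k}$ sitting in the ``good'' subspace flagged by the ancilla being $\ket{0}$), and then use amplitude amplification to renormalize. Since the $\ell_2$ norm of this vector is precisely $\varsigma$, the amplified state is $\frac{1}{\varsigma}\sum_{k=1}^d (G^{\mathrm{soft}}_{jk}+S_{jk}-\Bar{s}_j)\ket{k}$, which is the claimed layer-normalized output (note the scale/bias parameters $\gamma,\beta$ of Eqn.~(\ref{eq.Resnet}) are set to $1$ and $0$ here, or can be folded in afterwards by an additional LCU step). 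The learnable affine parameters $\gamma,\beta$, if nontrivial, are incorporated by one more linear combination against the all-ones block encoding.

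The main obstacle I expect is bookkeeping the encoding factors and the success amplitude of the final amplitude amplification: the subnormalization accumulated across the LCU steps (roughly a factor scaling with $\alpha_s\alpha_w^2\alpha_s\cdots$ from Theorem~\ref{attention.attention} together with the extra $d$ from the all-ones matrix) determines how many rounds of amplitude amplification are needed, and one must check that $\varsigma$ is bounded below by a quantity that keeps this cost polynomial — this is where an assumption on the layer-norm denominator being not too small (standard in the $\epsilon$-accurate setting) enters, and it feeds directly into the $\mathcal{\tilde O}(dN^2\alpha_s\alpha_w\log^2(1/\epsilon))$ bound quoted in Theorem~\ref{thmTransformer.informal_main}. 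The mechanics of each individual block-encoding arithmetic step are routine given Facts~\ref{blockencoding.product} and~\ref{LCU.blockencoding} and Lemma~\ref{elementfunction.blockencoding}.
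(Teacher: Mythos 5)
Your construction is correct and reaches the same final object as the paper (a state encoding of the layer-normalized row, normalized by $\varsigma$, with amplitude amplification available to boost the success amplitude), but the middle of your argument is packaged differently from the paper's proof. The paper also begins with an LCU of the block encodings of $G$ and $S$ (with the $(\sqrt{\alpha_g},\sqrt{\alpha_s})$ state-preparation pair) to get $U_{\mathrm{res}}$ encoding a matrix whose $j$-th row is $\Gsoft_j+S_j$, but it then applies $U_{\mathrm{res}}^\dagger(\mathbb{I}\otimes U_j)$ and invokes the diagonal block encoding of amplitudes (Fact~\ref{block encoding.amplitudes}) to turn that row into a \emph{diagonal} block encoding $\diag(G_{j1}+S_{j1},\dots,G_{jd}+S_{jd})$; the mean is produced by right-multiplying by $H^{\otimes \log d}$ so that $\sqrt{d}\,\bar{s}_j$ appears as a single matrix element, moving it to position $(1,1)$ and tensoring with $\mathbb{I}$ to obtain $\sqrt{d}\,\bar{s}_j\cdot\mathbb{I}$, and the subtraction is an LCU of these two diagonal encodings with weights $(1,\pm 1/\sqrt{d})$; finally $\Uln(\mathbb{I}\otimes H^{\otimes\log d})$ acting on $\ket{0}$ yields the state. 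You instead stay at the level of $\ell\times d$ matrices: you build the "mean row" $\bar{s}_j(1,\dots,1)$ by right-multiplication with the all-ones matrix $J$ of Eqn.~(\ref{eq.sum_all_one}) (whose encoding factor $d$ indeed cancels against the $1/d$ in the averaging), subtract it by a signed LCU, and only then read off row $j$ and renormalize. Both routes are sound and have comparable subnormalization; the paper's detour through diagonal encodings has the advantage that the output is produced in exactly the state-encoding form consumed by the feed-forward block (Theorem~\ref{theorem.ffn}), whereas yours avoids the amplitude-to-diagonal conversion at the cost of one extra matrix product. Two small points to tighten: extracting "the block encoding whose $j$-th row is $G^{\mathrm{soft}}_j$" by composing with $U_j$ is unnecessary (LCU of the full matrices suffices, as only row $j$ is ever used), and reading off a \emph{row} requires applying the adjoint $U_B^\dagger(\mathbb{I}\otimes U_j)$ rather than "applying the block encoding to $\ket{j}$", using that for real matrices $U_B^\dagger$ block-encodes $B^{\top}$; with that adjustment the success amplitude is $\varsigma$ divided by the accumulated encoding factor, exactly as you anticipate for the amplification cost.
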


\begin{proof}[Proof of sketch of Theorem~\ref{theorem.residual}]
    As shown in \cref{attention.attention}, we can construct a block-encoding of a matrix $G$ whose $j$-th row is the same row as that of $G^{\mathrm{soft}}$.
    By Assumption~\ref{chapt5:assmpt-input}, we are given $U_s$ which is an $\Sinput$-encoding of $S$.
    By Lemma~\ref{LCU.blockencoding} with the state preparation pair $(P,P)$ such that 
    \begin{equation}
        P\ket{0} = \frac{1}{\sqrt{\alpha_g+\alpha_s}}(\sqrt{\alpha_g}\ket{0}+\sqrt{\alpha_s}\ket{1}),
    \end{equation}
    one can construct a quantum circuit $\Ures$ which is an $(\alpha_g+\alpha_s, a_g+1)$-encoding of an $\ell \times d$ matrix whose $j$-th row is the same as that of $G^{\mathrm{soft}}+S$. 
    
    Now we consider how to create a block encoding of a diagonal matrix $\Bar{s}_j\cdot \mathbb{I}$, where $\Bar{s}_j\coloneqq \frac{1}{d}\sum_{k=1}^d (G^{\mathrm{soft}}_{jk}+S_{jk})$. Let us define a unitary $H^{\log d} \coloneqq H^{\otimes \log d}$.
    Note that $H^{\log d}$ is a $(1,0,0)$-encoding of itself, and the first column of $H^{\log d}$ is $\frac{1}{\sqrt{d}}(1,\dots,1)^{\top}$.
    By \cref{blockencoding.product}, one can multiply $\Gsoft+S$ with $H^{\log d}$ to construct a block encoding of an $\ell \times d$ matrix, whose $(j,1)$-element is $\sqrt{d}\Bar{s}_i$.
    One can further move this element to $(1,1)$ by switching the first row with the $j$-th row.
    By tensor product with the identity $\mathbb{I}$ of $\log d$ qubits, one can construct a block encoding of $\sqrt{d}\Bar{s}_i\cdot \mathbb{I}$.

    With $U_j:\ket{0}\rightarrow \ket{j}$, one can prepare the state 
    \begin{align}
        \Ures^\dagger (\mathbb{I}\otimes U_j)\ket{0}\ket{0}=\frac{1}{\alpha_g+\alpha_s}\ket{0}\sum_{k=1}^d \psi_k\ket{k}+\sqrt{1-\frac{\sum_k \psi_k^2}{(\alpha_g+\alpha_s)^2}}\ket{1}\ket{\mathrm{bad}}.
    \end{align}
    By the diagonal block encoding of amplitudes mentioned as \cref{block encoding.amplitudes}, this can be converted to a block encoding of the diagonal matrix $\diag(G_{j1}+S_{j1},\dots,G_{jd}+S_{jd})$.
    
    By taking the linear combination as \cref{LCU.blockencoding} with state preparation pair $(P_1,P_2)$, where
    \begin{equation}
        P_1\ket{0}=\frac{1}{\sqrt{1+1/\sqrt{d}}}(\ket{0}+\frac{1}{\sqrt{d}}\ket{1})
    \end{equation}
    and 
    \begin{equation}
        P_2\ket{0}=\frac{1}{\sqrt{1+1/\sqrt{d}}}(\ket{0}-\frac{1}{\sqrt{d}}\ket{1}),
    \end{equation}
    one can construct a block encoding of $\mathrm{diag}(G_{j1}+S_{j1}-\Bar{s}_j,\dots,G_{jd}+S_{jd}-\Bar{s}_j)$, and we call it $\Uln$.
    Then the unitary $\Uln(\mathbb{I}\otimes H^{\log d})$ is an state-encoding of the state 
    \[\frac{1}{\varsigma}\sum_{k=1}^d (G^{\mathrm{soft}}_{jk}+S_{jk}-\Bar{s}_j)\ket{k},\]
    where $\varsigma\coloneqq \sqrt{\sum_{k=1}^d (G^{\mathrm{soft}}_{jk}+S_{jk}-\Bar{s}_j)^2}$.
\end{proof}

\subsection{Quantum feedforward neural network}\label{chapter5:subsec:QFFN}

We turn our attention to the third main building block of the transformer architecture, the feed-forward neural network. This block often is a relatively shallow neural network with linear transformations and ReLU activation functions~\citep{vaswani2017attention}. More recently, activation functions such as the GELU have become popular, being continuously differentiable. 
We highlight that they are ideal for quantum Transformers, since the QSVT framework requires functions that are well approximated by polynomial functions.
Functions like $\relu(x)=\max (0,x)$ can not be efficiently approximated.
The GELU is constructed from the error function, which is efficiently approximated as follows.

\begin{fact}[Polynomial approximation of error function \citep{low2017quantum}\label{polyappro.error}]
    Let $\epsilon>0$.
    For every $k>0$, the error function $\mathrm{erf}(kx)\coloneqq \frac{2}{\sqrt{\pi}}\int_{0}^{kx} e^{-t^2} \,dt$ can be approximated with error up to $\epsilon$ by a polynomial function with degree $\mathcal{O}(k\log(\frac{1}{\epsilon}))$.
\end{fact}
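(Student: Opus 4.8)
The final statement to prove is Fact~\ref{polyappro.error}, the polynomial approximation of the error function $\mathrm{erf}(kx)$. Since this is stated as a ``Fact'' with a citation to \citep{low2017quantum}, the expectation is a proof sketch rather than a full derivation; nonetheless here is how I would organize the argument.

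\textbf{Plan of attack.} The error function is, up to normalization, the antiderivative of the Gaussian $e^{-t^2}$. The standard route is to first obtain a good polynomial approximation of the Gaussian $e^{-k^2 x^2}$ on $[-1,1]$, and then integrate term by term, since integration of a polynomial stays a polynomial and, being a contraction on $[-1,1]$ in the relevant norm, does not blow up the error. So the first step is to write $e^{-t^2}$ via its Taylor series $\sum_{n=0}^{\infty} \frac{(-1)^n t^{2n}}{n!}$ and control the truncation tail. On the interval of interest the argument of the Gaussian ranges over $[-k,k]$, so to get a uniformly $\epsilon$-good truncation one needs roughly $\mathcal{O}(k^2 + \log(1/\epsilon))$ terms by a crude factorial bound; a sharper analysis using the rapid decay of the Gaussian and Chebyshev-type approximation theory (as in \citep{low2017quantum}) improves this to degree $\mathcal{O}(k\log(1/\epsilon))$ — this improvement is the crux.

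\textbf{Key steps in order.} First, I would fix the target: a polynomial $P(x)$ with $\sup_{x\in[-1,1]} |\,\mathrm{erf}(kx) - P(x)\,| \le \epsilon$. Second, reduce to approximating the Gaussian: if $R(x)$ is a polynomial with $\sup_{|x|\le 1}|e^{-k^2 x^2} - R(x)| \le \epsilon'$, then $P(x) := \frac{2k}{\sqrt{\pi}} \int_0^x R(u)\,du$ satisfies $|\mathrm{erf}(kx) - P(x)| \le \frac{2k}{\sqrt{\pi}} \int_0^{|x|} |e^{-k^2 u^2} - R(u)|\,du \le \frac{2k}{\sqrt{\pi}}\epsilon'$, so choosing $\epsilon' = \frac{\sqrt{\pi}}{2k}\epsilon$ suffices, and $\deg P = \deg R + 1$. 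Third, construct $R$: the naive choice is the degree-$2m$ Taylor truncation of $e^{-k^2 x^2}$, whose remainder on $[-1,1]$ is bounded by the first omitted term $\frac{k^{2(m+1)}}{(m+1)!}$; using $n! \ge (n/e)^n$ this is below $\epsilon'$ once $m = \Theta(k^2 + \log(1/\epsilon))$. Fourth — and this is where the sharper bound comes from — instead of Taylor truncation, use the Jacobi–Anger / Chebyshev expansion of the Gaussian (or the construction in \citep[Lemma~29 and Corollary~66 area]{low2017quantum}), exploiting that $e^{-k^2x^2}$ is analytic with controlled growth in a Bernstein ellipse, to get the degree down to $\mathcal{O}(k\log(1/\epsilon))$. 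Finally, assemble: $\deg P = \deg R + 1 = \mathcal{O}(k\log(1/\epsilon))$, and plug $P$ into $\mathrm{GELU}(x) = x\cdot\frac12(1+\mathrm{erf}(x/\sqrt{2}))$ — multiplication by $x$ adds one to the degree, and composition of error bounds on $[-1,1]$ is benign since $|x|\le 1$.

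\textbf{Main obstacle.} The routine part — Taylor truncation giving degree $\mathcal{O}(k^2 + \log(1/\epsilon))$ — is easy, but it is quadratically worse in $k$ than claimed. The genuine difficulty is the optimal $k$-dependence: showing that degree $\mathcal{O}(k\log(1/\epsilon))$ suffices requires the Chebyshev/Bernstein-ellipse machinery for entire functions of order two, i.e., estimating how well $e^{-k^2x^2}$ is approximated by its truncated Chebyshev series using bounds on $|e^{-k^2 z^2}|$ for $z$ on an ellipse of semi-axis sum $\rho$, then optimizing $\rho$ against the truncation degree. Since the excerpt permits assuming earlier results and this is flagged as a Fact with citation, I would present the reduction to the Gaussian approximation in full and then invoke \citep{low2017quantum} (which itself builds on the Jacobi–Anger expansion used for Hamiltonian simulation) for the sharp Gaussian bound, rather than reproving that estimate from scratch.
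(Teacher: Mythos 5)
The paper itself gives no proof of Fact~\ref{polyappro.error}: consistent with its stated convention, it is presented as a \textbf{Fact} with only the citation to \citep{low2017quantum}, so there is no in-text argument to compare yours against. Your sketch is a sound outline of how the cited result is actually obtained, and it mirrors the route in the source: reduce $\mathrm{erf}(kx)$ to a polynomial approximation of the Gaussian $e^{-k^2x^2}$ on $[-1,1]$, integrate term by term (integration preserves polynomiality and only costs one degree and a factor $2k/\sqrt{\pi}$ in the error), and obtain the sharp Gaussian bound from Chebyshev/Bernstein-ellipse estimates rather than Taylor truncation, whose $\mathcal{O}(k^2+\log(1/\epsilon))$ degree you correctly identify as quadratically too weak in $k$. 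One fine point worth noting: your reduction demands Gaussian precision $\epsilon'=\Theta(\epsilon/k)$, so the plain Bernstein-ellipse bound gives degree $\mathcal{O}\bigl(k\sqrt{\log(k/\epsilon)}\bigr)$, which carries a stray $\sqrt{\log k}$ relative to the stated $\mathcal{O}(k\log(1/\epsilon))$ when $\epsilon$ is held fixed; \citet{low2017quantum} avoids this by integrating the Chebyshev expansion of the Gaussian directly, where the $1/(2n)$ factors from $\int T_n$ absorb the $1/k$ precision loss and yield degree $\mathcal{O}\bigl(\sqrt{(k^2+\log(1/\epsilon))\log(1/\epsilon)}\bigr)$, of which the Fact's $\mathcal{O}(k\log(1/\epsilon))$ is itself a loosening. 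Since you explicitly delegate that sharp estimate to the citation — exactly as the paper does — your proposal is acceptable as a proof sketch at the level of rigor the tutorial adopts for Facts, and your closing remark on feeding the polynomial into the GELU matches how the paper uses this Fact in its Corollary on the GELU approximation.
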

This lemma implies the following efficient approximation of the GELU function with polynomials.
\begin{corollary}[Polynomial approximation of GELU function\label{thm.gelu}]
Let $\epsilon>0$ and $\lambda \in \mathcal{O}(1)$.
For every $k>0$ and $x\in [-\lambda, \lambda]$, the $\mathrm{GELU}$ function $\mathrm{GELU}(kx)\coloneqq kx\cdot \frac{1}{2}(1+\mathrm{erf}(\frac{kx}{\sqrt{2}}))$ can be approximated with error up to $\epsilon$ by a polynomial function with degree $\mathcal{O}(k\log(\frac{k\lambda}{\epsilon}))$.
\end{corollary}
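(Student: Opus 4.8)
The plan is to derive Corollary~\ref{thm.gelu} directly from Fact~\ref{polyappro.error} by a short composition argument, treating the polynomial degree as the quantity of interest and tracking how errors propagate through the multiplication by the linear factor $kx$. First I would recall that on the bounded interval $x \in [-\lambda, \lambda]$ with $\lambda \in \mathcal{O}(1)$, the relevant argument of the error function is $kx/\sqrt{2}$, whose magnitude is at most $k\lambda/\sqrt{2} \in \mathcal{O}(k\lambda)$. To apply Fact~\ref{polyappro.error}, which approximates $\mathrm{erf}(k'x)$ on $[-1,1]$, I would rescale: writing $y = x/\lambda \in [-1,1]$, the function becomes $\mathrm{erf}\bigl((k\lambda/\sqrt{2})\, y\bigr)$, so Fact~\ref{polyappro.error} with effective rate $k' = k\lambda/\sqrt{2}$ yields a polynomial $p(y)$ of degree $\mathcal{O}(k' \log(1/\epsilon')) = \mathcal{O}(k\lambda \log(1/\epsilon'))$ approximating $\mathrm{erf}$ to accuracy $\epsilon'$ on the whole interval.

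Next I would assemble the GELU approximation. Define $q(x) := kx \cdot \tfrac12\bigl(1 + p(x/\lambda)\bigr)$, which is a polynomial whose degree exceeds that of $p$ by exactly one, hence still $\mathcal{O}(k\lambda \log(1/\epsilon'))$. The approximation error is
\begin{align}
\bigl| \mathrm{GELU}(kx) - q(x) \bigr|
&= \Bigl| \tfrac{kx}{2}\bigl(\mathrm{erf}(\tfrac{kx}{\sqrt 2}) - p(x/\lambda)\bigr) \Bigr| \notag \\
&\leq \tfrac{k\lambda}{2}\, \epsilon',
\end{align}
using $|x| \le \lambda$ and the uniform bound on the polynomial approximation error of $\mathrm{erf}$. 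To make this at most $\epsilon$, I would set $\epsilon' = 2\epsilon/(k\lambda)$, so that the degree becomes $\mathcal{O}\bigl(k\lambda \log(k\lambda/\epsilon)\bigr)$. Absorbing the $\mathcal{O}(1)$ factor $\lambda$ into the big-$\mathcal{O}$ for the leading coefficient (but keeping it visible inside the logarithm, where constants genuinely matter for the stated bound $\mathcal{O}(k\log(k\lambda/\epsilon))$) recovers exactly the degree claimed in the corollary. I would double-check that the stated form $\mathcal{O}(k\log(k\lambda/\epsilon))$ is consistent with $\lambda \in \mathcal{O}(1)$: since $\lambda$ is a constant, $k\lambda \in \mathcal{O}(k)$ and the bound matches.

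The main obstacle — really the only subtle point — is bookkeeping the interaction between the domain rescaling and the error-rate parameter $k$ in Fact~\ref{polyappro.error}. One must be careful that Fact~\ref{polyappro.error} is stated for $\mathrm{erf}(kx)$ on $[-1,1]$, so the ``$k$'' there plays the role of the slope at the origin relative to a unit interval; after our rescaling the effective slope parameter is $k\lambda/\sqrt{2}$ rather than $k$, and one must confirm that the constants $\sqrt{2}$ and $\lambda$ only affect the hidden constant and the logarithmic term, not the functional form of the degree. A secondary minor point is justifying that multiplying the $\mathrm{erf}$-approximant by the unbounded-looking factor $kx$ does not blow up the error: this is immediate because we have restricted to the compact interval $[-\lambda,\lambda]$, so $|kx| \le k\lambda$ is a finite multiplicative factor, which is precisely why the hypothesis $\lambda \in \mathcal{O}(1)$ (equivalently, a bounded input domain) is needed in the statement. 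No deeper technical machinery is required beyond Fact~\ref{polyappro.error}.
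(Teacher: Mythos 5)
Your proposal is correct and follows essentially the same route as the paper: the paper's proof is the one-line observation that it suffices to invoke Fact~\ref{polyappro.error} with precision $\epsilon/(k\lambda)$, since multiplying the $\mathrm{erf}$-approximant by the factor $kx$ (bounded by $k\lambda$ on the domain) amplifies the error by at most $k\lambda$. Your additional bookkeeping of the domain rescaling and the $\sqrt{2}$ constant is just an explicit filling-in of that same argument.
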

\begin{proof}[Proof of Cororllary~\ref{thm.gelu}]
    It suffices to approximate the error function with precision $\frac{\epsilon}{k\lambda}$ by \cref{polyappro.error}.
\end{proof}

In the following theorem, we consider how to implement the two-layer feedforward network on quantum computers. As mentioned, the GELU function is widely used in transformer-based models and we explicitly consider it as the activation function in the theorem.
Cases for other activation functions like sigmoid follow the same analysis.
An example is the $\mathrm{tanh}(x)$ function, which can be well approximated by a polynomial for $x\in [-\pi/2,\pi/2]$ \citep{guononlinear2021}.

\begin{theorem}[Two-layer feedforward network with GELU function, informal\label{theorem.ffn}]
Assume we have access to $(\alpha,a)$-state encoding of an $N$-qubit state $\ket{\psi}=\sum_{k=1}^{2^N} \psi_k \ket{k}$, where $\{\psi_k\}$ are real and $\norm{\psi}_2=1$.
Further, assume access to $(\alpha_{m},a_{m})$-encodings $U_{M_1}$ and $U_{M_2}$ of weight matrices $M_1\in \mathbb R^{d' \times d}$ and $M_2 \in \mathbb R^{d \times d'}$.  
Let the activation function be $\mathrm{GELU}(x)\coloneqq x\cdot \frac{1}{2}(1+\mathrm{erf}(\frac{x}{\sqrt{2}}))$.
One can prepare a state-encoding of the state
\begin{align}
\ket \phi = \frac{1}{C} \sum_{k=1}^{d}  \Big(M_2\cdot\mathrm{GELU}(M_1\cdot\psi) \Big)_k\ket{k},
\end{align}
where $C$ is the normalization factor. 
\end{theorem}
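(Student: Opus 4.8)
\textbf{Proof proposal for Theorem~\ref{theorem.ffn}.}

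The plan is to implement the two layers of the feed-forward network sequentially, following the same philosophy as the attention block: convert amplitudes into a diagonal block encoding, apply matrix multiplication via Fact~\ref{blockencoding.product}, apply the nonlinear activation element-wise via QSVT, and then reconvert to a state encoding. First, I would apply the diagonal block encoding of amplitudes (Fact~\ref{block encoding.amplitudes}) to the given $(\alpha,a)$-state encoding of $\ket{\psi}$, producing a block encoding of $\diag(\psi_1,\dots,\psi_d)$. Multiplying this with the block encoding $U_{M_1}$ of $M_1$ (via Fact~\ref{blockencoding.product}), and then acting on the uniform superposition $H^{\otimes \log d}\ket{0}$, yields a state encoding proportional to $\sum_k (M_1\psi)_k \ket{k}$ (up to the appropriate normalization factor absorbing $\alpha$, $\alpha_m$, and $\sqrt{d}$). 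This is the pre-activation of the hidden layer.

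Second, I would again pass this state encoding through the diagonal block encoding of amplitudes to obtain a block encoding of $\diag((M_1\psi)_1,\dots,(M_1\psi)_{d'})$, and then apply QSVT (Fact~\ref{QSVT}) with the polynomial approximation of the GELU function guaranteed by Corollary~\ref{thm.gelu}. Since the argument is bounded (the amplitudes lie in $[-1,1]$ up to the normalization scale $\lambda\in\mathcal{O}(1)$), the polynomial degree is $\mathcal{O}(k\log(k\lambda/\epsilon))$, so this step is efficient. This gives a block encoding of $\diag(\mathrm{GELU}(M_1\psi)_1,\dots)$, which I convert back to a state encoding proportional to $\sum_k \mathrm{GELU}(M_1\psi)_k \ket{k}$ by acting on $H^{\otimes \log d'}\ket{0}$. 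Finally, I repeat the amplitude-to-diagonal conversion once more, multiply with $U_{M_2}$ via Fact~\ref{blockencoding.product}, and act on the uniform superposition to obtain a state encoding proportional to $\sum_{k=1}^d (M_2\cdot\mathrm{GELU}(M_1\psi))_k\ket{k}$, which is the claimed state $\ket\phi$ with $C$ the overall normalization factor collecting all the encoding subnormalizations.

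The main obstacle I expect is bookkeeping the normalization factors and error propagation carefully through the chain of operations. Each diagonal-block-encoding step (Fact~\ref{block encoding.amplitudes}) introduces a small error $\epsilon$ and an ancilla count of $N+2$; each matrix product multiplies subnormalization factors and adds errors weighted by the other factor (Fact~\ref{blockencoding.product}); and the QSVT step introduces both the polynomial approximation error and the block-encoding error amplified by $4d\sqrt{\varepsilon/\alpha}$. Tracking that all these errors remain controllable — in particular that the argument of GELU stays within the interval where the polynomial approximation is valid, which requires an a priori bound on $\|M_1\psi\|_\infty$ — and verifying that the final subnormalization $C$ does not blow up exponentially, is the delicate part. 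A secondary subtlety is that the activation is applied element-wise to a \emph{vector} (the hidden pre-activations) rather than to matrix entries, so one must ensure the diagonal encoding correctly isolates those components before QSVT; this is why the amplitude-encoding detour (rather than an element-wise matrix function as in Lemma~\ref{elementfunction.blockencoding}) is the right tool here. I would also need amplitude amplification after the final step to boost the success probability of the post-selection onto the $\ket{0}$ flag, contributing an $\mathcal{O}(C)$ overhead to the query count.
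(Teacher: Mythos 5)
Your construction is workable in outline and shares the paper's basic toolkit (diagonal block encoding of amplitudes plus block-encoding products), but it takes a genuinely different route at the activation step, and the difference matters. The paper never implements $\mathrm{GELU}$ itself as a QSVT polynomial: it first forms the state encoding of $M_1\psi$ simply by composing $U_{M_1}$ with $U_\psi$ (no detour through $\diag(\psi)$ and a uniform superposition), converts those amplitudes into a diagonal block encoding, and then implements only the bounded factor $f(x)=\tfrac{1}{2}\bigl(1+\mathrm{erf}(x/\sqrt{2})\bigr)$ (suitably rescaled by $\alpha\alpha_m$) as a diagonal block encoding $U_{f}$. The linear factor $x$ in $\mathrm{GELU}(x)=x\,f(x)$ is then obtained for free by applying $U_f$ to the state encoding of $M_1\psi$ itself rather than to a uniform superposition — this is the ``importance-weighted'' amplitude transformation the proof cites — after which $U_{M_2}$ is applied directly to the resulting state encoding. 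Your route instead approximates $\mathrm{GELU}$ directly via \cref{QSVT} on the diagonal encoding and reconstitutes a state by hitting $H^{\otimes \log d'}\ket{0}$ (and similarly uses uniform-superposition detours before $M_1$ and $M_2$). This is mathematically coherent at the paper's level of rigor, but it costs you: each uniform-superposition step injects an extra $1/\sqrt{d}$ or $1/\sqrt{d'}$ into the subnormalization, which the paper's version avoids (its final subnormalization is $\tfrac{C}{2\alpha\alpha_m^2}$), and those factors feed directly into the amplitude-amplification overhead and hence into the runtime claimed in the quantum-Transformer theorem. Two further cautions: because the amplitudes you block-encode are $(M_1\psi)_k/(\alpha\alpha_m)$ and $\mathrm{GELU}$ is not homogeneous, you must implement the rescaled function $x\mapsto \mathrm{GELU}(\alpha\alpha_m x)/\mathcal{O}(\alpha\alpha_m)$ (this is exactly what \cref{thm.gelu} with $k=\alpha\alpha_m$ is for — you gesture at the degree bound but should make the rescaling explicit); and a polynomial approximating $\mathrm{GELU}$ has no definite parity, so applying it through a singular-value transformation to a diagonal matrix with negative entries requires either an even/odd split or the nonlinear-amplitude-transformation machinery — the importance-weighted construction sidesteps this because only the bounded, erf-based factor is implemented as a function of the amplitudes.
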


\begin{proof}[Proof of sketch of Theorem~\ref{theorem.ffn}]
We have 
\begin{align}
(\mathbb{I}_{2^a}\otimes U_{M_1})(\mathbb{I}_{2^{a_{m}}} \otimes U_{\psi}) \ket{0^{a+a_{m}+N}}=\frac{1}{\alpha \alpha_{m}}\ket{0^{a+a_{m}}}M_1\ket{\psi}+\ket{\widetilde{\perp}},
\end{align}
where $\ket{\widetilde{\perp}}$ is an unnormalized orthogonal state.
For the case $d' \geq \ell$, this can be achieved by padding ancilla qubits to the initial state.
By \cref{block encoding.amplitudes}, one can construct a block encoding of the diagonal matrix $\diag((M_1\psi)_1,\dots, (M_1\psi)_{d'})$.
Note that the $\mathrm{GELU}$ function does not have a constant term, and is suitable to use the importance-weighted amplitude transformation as in \citet{rattewnonlinear2023}.
Instead of directly implementing the GELU function, we first implement the function $f(x)=\frac{1}{2}(1+\mathrm{erf}(\frac{x}{\sqrt{2}}))$.
Note that the value of $|\mathrm{erf}(x)|$ is upper bounded by $1$.
By \cref{block encoding.amplitudes} with function $\frac{1}{4}(1+\mathrm{erf}(\alpha \alpha_{m} \frac{x}{\sqrt{2}}))$, one can construct a block encoding of matrix $\diag(f(M_1\psi)_1,\dots, f(M_1\psi)_{d'})$.

Let the previously constructed block-encoding unitary be $U_{f(x)}$.
We have
\begin{align}
U_{f(x)}(\mathbb{I}\otimes U_{M_1})(\mathbb{I}\otimes U_{\psi})\ket{0}\ket{0}=\frac{1}{2\alpha\alpha_{m}}\ket{0}\sum_k \mathrm{GELU}(M_1\psi)_k\ket{k}+\ket{\widetilde{\perp'}},
\end{align}
where $\ket{\widetilde{\perp'}}$ is an unnormalized orthogonal state. Finally, by implementing the block-encoding unitary $U_{M_2}$, we have 
\begin{align}
&(\mathbb{I}\otimes U_{M_2})(\mathbb{I}\otimes U_{f(x)})(\mathbb{I}\otimes U_{M_1})(\mathbb{I}\otimes U_{\psi})\ket{0}\ket{0}\notag\\
=&\frac{C}{2\alpha\alpha^2_{m}}\ket{0}\sum_{k=1}^{d}  \Big(M_2\cdot\mathrm{GELU}(M_1\cdot\psi) \Big)_k\ket{k}+\ket{\widetilde{\perp''}},
\end{align}
where $C$ is the normalization factor,
and $\ket{\widetilde{\perp}''}$ is an unnormalized orthogonal state.
\end{proof}

\begingroup
\allowdisplaybreaks
\begin{tcolorbox}[enhanced,breakable,colback=gray!5!white,colframe=gray!75!black,title=Remark]
The quantum feedforward network discussed in this subsection is a quantum implementation of the classical feedforward network under the input assumption of block encoding, which is essentially different from the quantum analog of neural networks introduced in \cref{cha5:qnn}.
\end{tcolorbox}
\endgroup

\section{Runtime Analysis with Quadratic Speedups}\label{chapt6:sec:runtime_qtransformer}

In this section, we provide a combined analytical and numerical analysis to explore the potential of a quantum speedup in time complexity.

\subsection{Overview}

Having the quantum implementation of self-attention, residual connection, layer normalization, and feed-forward networks, we are able to construct the quantum transformer by combining these building blocks as in \cref{thmTransformer.informal_main}. 

We obtain this final complexity on the basis of the following considerations: the single-head and single-block transformer architecture includes one self-attention, one feed-forward network, and two residual connections with layer normalization, as shown in Figure~\ref{fig:qtransformer_arch}.

Starting from the input assumption as Assumption~\ref{chapt5:assmpt-input}, for the index $j\in [\ell]$, we first construct the block encoding of self-attention matrix, as described in Section~\ref{chapter5:sec:self-attention}.
This output can be directly the input of the quantum residual connection and layer normalization, as Section~\ref{chapter5:subsec:Q-resi-ln}, which output is a state encoding. Remind the definition of state encoding as \cref{def.stateencoding}.
The state encoding can directly be used as the input of the feed-forward network, as \cref{chapter5:subsec:QFFN}.
Finally, we put the output of the feed-forward network, which is a state encoding, into the residual connection block. This is possible by noticing that state encoding is a specific kind of block encoding.
Multiplying the query complexity of these computational blocks, one can achieve final result.
The detailed analysis of runtime is referred to \citet{guo2024quantumlinear2024}

As \cref{thmTransformer.informal_main} shows,
the quantum transformer uses ${\mathcal{\widetilde{O}}}(d N^2 \alpha_s\alpha_w)$ times the input block encodings, where $\alpha_s$ and $\alpha_w$ are encoding factors.
By analyzing naive matrix multiplication, the runtime of classical single-head and single-block Transformer during the inference stage is $\cO(\ell d + d^2)$, where $d$ is the embedding dimension and $\ell=2^N$ is the input sequence length. 
From the comparison, we can see that $\alpha_s$ and $\alpha_w$ are the dominant factors that affect the potential quantum speedup.
We will explore the properties of these two factors via numerical studies.

\subsection{Numerical evidence}

The encoding factors $\alpha_s$ and $\alpha_w$ appear in the block encodings of matrices $S$ and $W_q,W_k,W_v$.
Recall that the encoding factor $\alpha$ is lower bounded by the spectral norm of a block-encoded matrix $A$, i.e., $\alpha \geq \norm{A}$. Given access to quantum Random Access Memory (QRAM) and a quantum data structure \citep{lloyd2014quantum, kerenidisQuantumRecommendationSystems2016}, there are well-known implementations that enable the construction of a block encoding for an arbitrary matrix $A$ where the encoding factor is upper bounded by the Frobenius norm $\norm{A}_F$.
Based on these considerations, we numerically study
these two norms of the input matrices of several open-source large language models\footnote{Parameters are obtained from the \href{https://huggingface.co/}{Hugging Face} website, which is an open-source platform for machine learning models.} to provide upper and lower bound of $\alpha_s$ and $\alpha_w$.

We first investigate the input sequence matrix $S$, which introduces the dependency on $\ell$.
We consider input data in real-world applications sampled from the widely-used Massive Multitask Language Understanding (MMLU) dataset~\citep{hendry2021measuring} covering 57 subjects across science, technology, engineering, mathematics, the humanities, the social sciences, and more. The scaling of the spectral norm and Frobenius norm of $S$ on the MMLU dataset is demonstrated in Figure~\ref{fig:MMLU_dataset_fro_spe_main}. We can find that the matrix norms of the input matrix of all LLMs scale at most as $\mathcal{O}(\sqrt{\ell})$.

As additional interest, this analysis of the matrix norm provides new insights for the classical tokenization and embedding design. We also observe that comparatively more advanced LLM models like {\it Llama2-7b} and {\it Mistral-7b} have large variances of matrix norms. This phenomenon is arguably the consequence of the training in those models; the embeddings that frequently appear in the real-world dataset are actively updated at the pre-training stage, and therefore, are more broadly distributed.
 
\begin{figure}[h]
\centering
\includegraphics[width=0.9\textwidth]{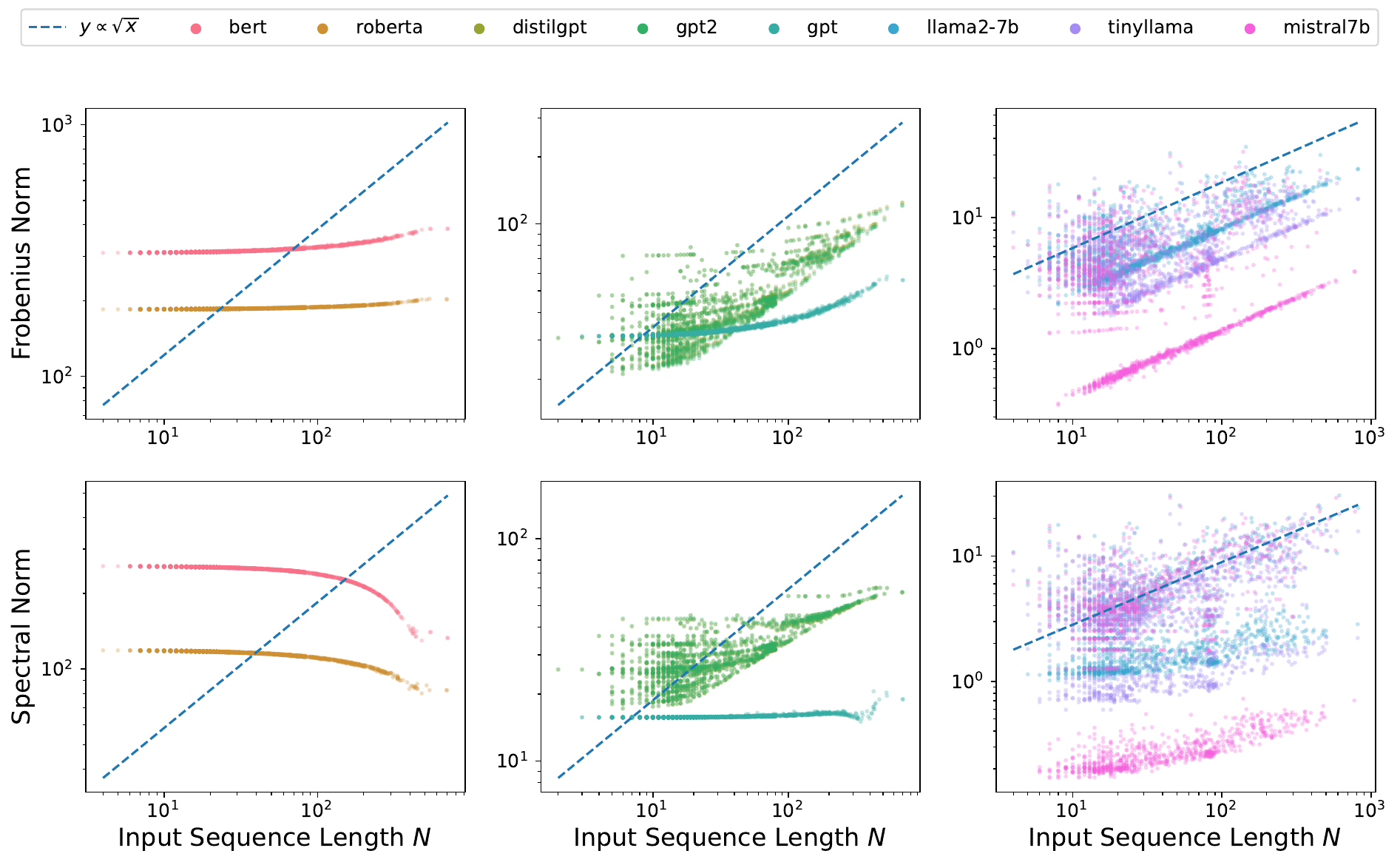}
  \caption{Scaling of the spectral norm $\|S\|$ and the Frobenius norm $\|S\|_{F}$ with $\ell$ for each model, displayed on logarithmic scales for both axes. For reference, the line $y \propto \sqrt{x}$ is also shown. We use tokens in MMLU dataset and convert them to $S$.}
  \label{fig:MMLU_dataset_fro_spe_main}
\end{figure}

We then compute the spectral and Frobenius norms of weight matrices ($W_q,W_k,W_v$) for the large language models.
The result can be seen in Figure~\ref{Fig.LLMs_main}. Many of the LLMs below a dimension $d$ of $10^3$ that we have checked have substantially different norms.
We observe that for larger models such as \textit{Llama2-7b} and
\textit{Mistral-7b}, which are close to the current state-of-the-art open-source models, the norms do not change dramatically.
Therefore, it is reasonable to assume that the spectral norm and the Frobenius norm of the weight matrices are at most $\mathcal{O}(\sqrt{d})$ for advanced LLMs.

\begin{figure}[h]
\centering
\includegraphics[width=\textwidth]{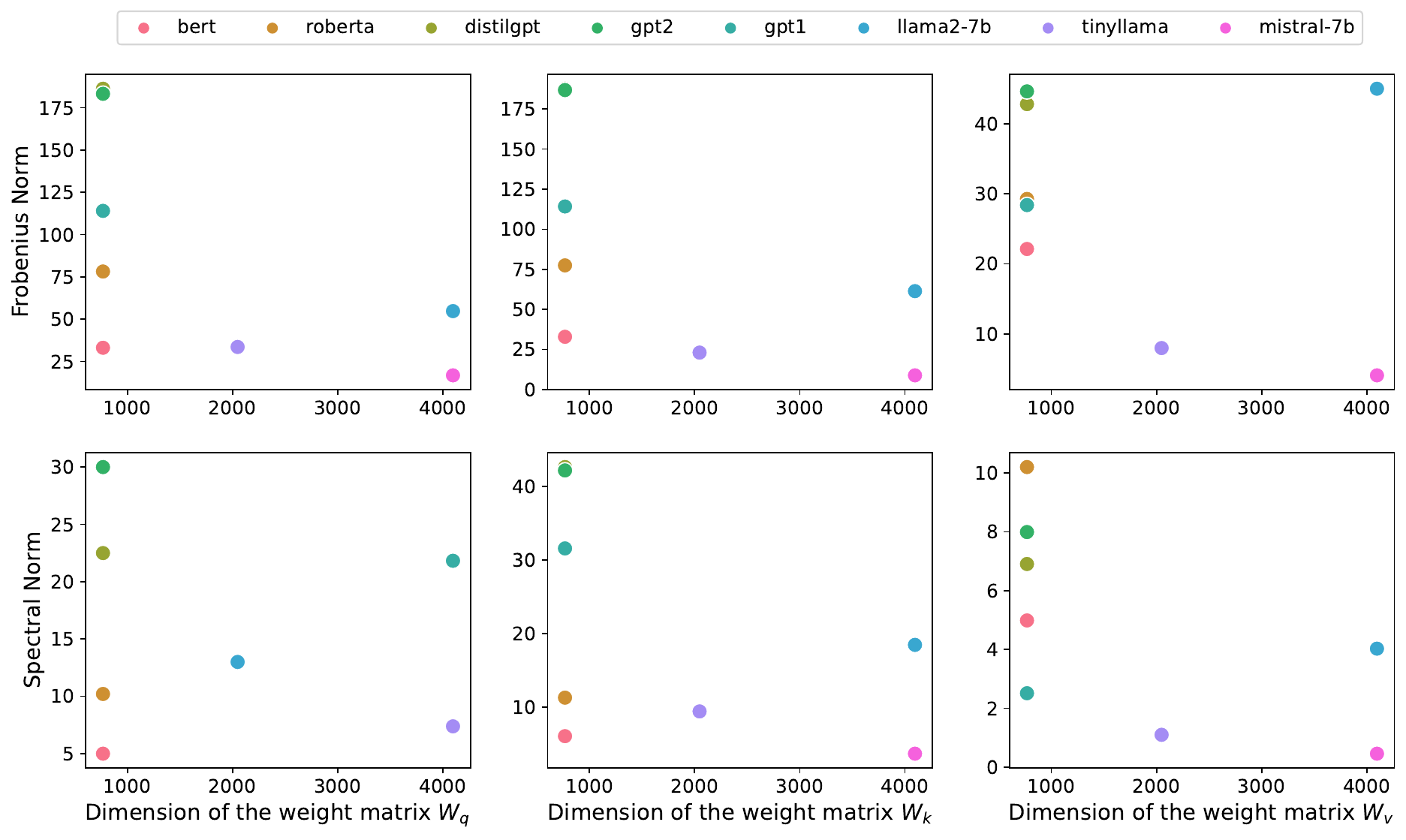}
\caption{\textbf{Norms of weight matrices across open-source LLMs.} The figure shows the maximum Frobenius norm and spectral norm values among weight matrices $W_q, W_k$, and $W_v$ across the eight LLM models.}\label{Fig.LLMs_main}
\end{figure}

Given these numerical experiments it is reasonable to assume that $\alpha_s = \cO(\sqrt{\ell})$ and $\alpha_w = \cO(\sqrt{d})$, and we obtain a query complexity of the quantum transformer in $\widetilde{\mathcal{O}}(d^{\frac{3}{2}} \sqrt{\ell})$.
We continue with a discussion of the possible time complexity. With the QRAM assumption, the input block encodings can be implemented in a polynomially logarithmic time of $\ell$.  Even without a QRAM assumption, there can be cases when the input sequence is generated efficiently, for example when the sequence is generated from a differential equation, see additional discussions in the supplementary material.
In these cases, we demonstrate that a quadratic speedup of the runtime of a single-head and single-block Transformer can be expected.

\section{Code Demonstration}\label{chapt5:section:code-implementation}

We explain how self-attention works with a simple concrete example. Consider a short sequence of three words: ``The cat sleeps''.

First, we convert each word into an embedding vector. For this toy example, we use very small 4-dimensional embeddings:

\begin{lstlisting}[language=Python]
The    = [1, 0, 1, 0]
cat    = [0, 1, 1, 1]
sleeps = [1, 1, 0, 1]
\end{lstlisting}

In self-attention, each word needs to ``attend'' to all other words in the sequence. This happens through three key steps using learned weight matrices ($W_q$, $W_k$, $W_v$) to transform the embeddings into queries, keys, and values. When we multiply our word embeddings by these matrices, we get 

\begin{lstlisting}[language=Python]
import numpy as np

# Input tokens (3 tokens, each with 3 features)
S = np.array([
    [1, 0, 1, 0],  # for "The"
    [0, 1, 1, 1],  # for "cat"
    [1, 1, 0, 1]   # for "sleeps"
])

# Initialize weights for Query, Key, and Value (4x3 matrices)
W_q = np.array(
    [
        [0.2, 0.4, 0.6, 0.8],
        [0.1, 0.3, 0.5, 0.7],
        [0.9, 0.8, 0.7, 0.6],
        [0.5, 0.4, 0.3, 0.2],
    ]
)

W_k = np.array(
    [
        [0.1, 0.3, 0.5, 0.7],
        [0.6, 0.4, 0.2, 0.1],
        [0.8, 0.9, 0.7, 0.6],
        [0.2, 0.1, 0.3, 0.4],
    ]
)

W_v = np.array(
    [
        [0.3, 0.5, 0.7, 0.9],
        [0.6, 0.4, 0.2, 0.1],
        [0.8, 0.9, 0.7, 0.6],
        [0.5, 0.4, 0.3, 0.2],
    ]
)

# Compute Query, Key, and Value matrices
Q = S @ W_q
K = S @ W_k
V = S @ W_v
\end{lstlisting}

Next, we compute attention scores by multiplying $Q$ and $K^{\top}$, then applying softmax.

\begin{lstlisting}[language=Python]
# Compute scaled dot-product attention
d = Q.shape[1]  # Feature dimension
attention_scores = Q @ K.T / np.sqrt(d)

def softmax(x):
    """Compute softmax values for each set of scores in x."""
    return np.exp(x) / np.sum(np.exp(x), axis=1, keepdims=True)

attention_weights = softmax(attention_scores)


\end{lstlisting}
The attention weight would be:
$$
\text{softmax}\left(\frac{Q \cdot K^{\top}}{\sqrt{4}}\right) = \begin{bmatrix}
    0.324 & 0.467 & 0.209\\
    0.305 & 0.515 & 0.180\\
    0.346 & 0.432 & 0.222
\end{bmatrix}.
$$
The final output captures how each word relates to every other word in the sentence. In this case, ``sleeps'' pays most attention to ``cat'' (0.432), some attention to ``The'' (0.346), and less attention to itself (0.222). Finally, we use these scores to create a weighted sum of the values:
\begin{lstlisting}
output = attention_weights @ V
\end{lstlisting}
The final output is
$$
\text{output} = \begin{bmatrix}
1.536 & 1.519 & 1.265 & 1.157\\
1.566 & 1.536 & 1.261 & 1.137\\
1.512 & 1.507 & 1.269 & 1.174
\end{bmatrix}.
$$

\section{Bibliographic Remarks}

Transformer architecture has profoundly revolutionized AI and has broad impacts. 
As classical computing approaches its physical limitations, it is important to ask how we can leverage quantum computers to advance Transformers with better performance and energy efficiency. Besides the fault-tolerant quantum Transformers introduced in Chapter~\ref{chapt6:sec:quantum_transformer}, multiple works are advancing this frontier from various perspectives.

One aspect is to design novel quantum neural network architectures introduced in Chapter~\ref{cha5:qnn} with the intuition from the Transformer, especially the design of the self-attention block. In particular, \citet{li2023quantumselfattentionneuralnetworks} proposes the quantum self-attention neural networks and verifies their effectiveness with the synthetic data sets of quantum natural language processing.
There are several follow-up works along this direction \citep{Cherrat_2024, evans2024learningsasquatchnovelvariational,widdows2024quantumnaturallanguageprocessing}.

Another research direction is exploring how to utilize quantum processors to advance certain parts of the transformer architecture. Specifically, 
\citet{gao2023fastquantumalgorithmattention} considers how to compute the self-attention matrix under sparse assumption and shows a quadratic quantum speedup.
\citet{liu2024quantumcircuitbasedcompressionperspective} harnesses quantum neural networks to generate weight parameters for the classical model. In addition,
\citet{liu2024towards} devises a quantum algorithm for the training process of large-scale neural networks, implying an exponential speedup under certain conditions.
There are several other works that consider machine learning related optimization problems \citep{Yang2023quantumalphatron, zhang2024comparisonsneedoptimizingsmooth, wang2024nearoptimalquantumalgorithmminimizing, rebentrost2018quantumgradientdescentnewtons}.

Despite the progress made, several important questions remain unresolved. Among them, one key challenge is devising efficient methods to encode classical data or parameters onto quantum computers. Currently, most quantum algorithms can only implement one or at most constant layers of Transformers  \citep{guo2024quantumlinear2024,liao2024gptquantumcomputer, khatri2024quixerquantumtransformermodel} without quantum tomography. Are there effective methods that can be generalized to multiple layers, or is achieving this even necessary? Moreover, given the numerous variants of the classical Transformer architecture, can these variants also benefit from the capabilities of quantum computers? Lastly, if one considers training a model directly on a quantum computer, is it possible to do so in a ``quantum-native'' manner—avoiding excessive data read-in and read-out operations?

\chapter{Conclusion}

In this tutorial, we systematically explored the landscape of quantum machine learning, covering foundational principles, the adaptation of classical models to quantum frameworks, and the theoretical underpinnings of quantum algorithms. By providing practical implementations and discussing emerging research directions, we aimed to bridge the gap between classical AI and quantum computing for researchers and practitioners.

The insights gained from this tutorial highlight how quantum machine learning has the potential to revolutionize various domains, from fundamental scientific research to practical applications in industry. As quantum hardware continues to evolve, quantum machine learning will likely play a central role in harnessing the computational advantages of quantum systems. In the meantime, the field of quantum machine learning faces challenges, as discussed at the end of each chapter. Overcoming these barriers will be critical for unlocking the full potential of quantum machine learning.

Moving forward, interdisciplinary collaboration between quantum computing and AI researchers will be essential for addressing these challenges and realizing the transformative potential of QML. All in all, we hope this tutorial serves as a valuable resource for those eager to contribute to this rapidly evolving discipline.


\appendix
\chapter{Notations Summary}\label{App:journalcodes}

\begin{longtable}{  p{.20\textwidth}   p{.80\textwidth} }
\hline
 \textbf{Notation} & \textbf{Concept} \\ \hline
 $a,b, \bm{a}_j, \bm{b}_j, \alpha,\beta$ &  Scalars \\
        $\bm{x}, \bm{y}$ & Vectors \\
        $W, A$ & Matrices \\
  $\mathbb{R}$ & Real Euclidean space \\
  $\mathbb{C}$ & Complex Euclidean space \\
        $\mathbb{N}$ & The set of natural numbers \\
        $[a]$ & The set of integers $\{1,2,\cdots,a\}$ \\
     $\mathbb{E}[\cdot]$ & Expectation value of a random variable \\
        ${\rm Var}[\cdot]$ & Variance of a random variable \\
        $\mathcal{O}$ & Asymptotic upper bound notation \\
        $\Omega$ & Asymptotic lower bound notation \\
        $\top$     & Transpose operation\\
        $a*$       & complex conjugate of the number $a$ \\
        $\dagger$  & Conjugate Transpose operation \\
        $\ket{\cdot}\bra{\cdot}$  & Outer product operation\\
        $A \odot B$ & element-wise multiplication (Hadamard product) of matrices $A$ and $B$ \\
        $f \circ g$ & composition of functions $f$ and $g$ \\
        \hline
$\ket{0}, \ket{1}, \ket{\psi}$ & Pure quantum state in Dirac notation\\
$N$ & Number of qubits \\
        $\ket{0^N}, \ket{0}^{\otimes N}$ & Zero state with $N$-qubits\\
        $\mathbb{I}_{d}$ & Identity matrix with the size $d\times d$\\
        $\rho$ & Quantum state in density matrix representation\\
        $H$ & Hamiltonian \\
        $U, V$ & Unitary operator \\
        $X, Y, Z$ & Pauli operators \\
    $\RX, \RY, \RZ$ & Rotation gates along the $x$, $y$ and $z$ axes, respectively \\
        $CX, CZ$ & Controlled-X gate and controlled-Z gate \\
        $\mathcal{E}$, $\mathcal{N}$ & Quantum channel \\
        $O$ & Observable \\
    $\braket{O}$ & Expectation of observable $O$ \\ 
        $\left\|\cdot\right\|_{op}$ & Operator norm \\ \hline
        $n$ & Number of training examples \\
        $\mathcal{D}$ & Dataset \\
        ${\rm Tr}(O\rho)$ & Expectation value of an observable $O$ \\
        $U(\bm{\theta})$ & Parameterized quantum circuit \\
        $\mathcal{L}(\bm{\theta})$ & Loss function \\
        $\nabla_{\bm{\theta}} \mathcal{L}(\bm{\theta})$ & Gradient of loss function $\mathcal{L}$ w.r.t parameters $\bm{\theta}$ \\
    \hline
    \caption{Notations used in this work.}
    \label{tab:summary_notations_all_contents}
\end{longtable}

\chapter{Concentration Inequality}
In this section, we introduce some of the most common concentration inequalities in statistical learning theory. These inequalities are widely used in deriving generalization error bounds for learning models. In practical scenarios, one often needs to infer properties of an unknown distribution based on finite data samples drawn from that distribution. Concentration inequalities address the deviations of functions of independent random variables from their expectations. They provide tools to analyze the difference between the empirical mean (or some estimate) and the true expectation of random variables that follow a probability distribution.

We begin by recalling some basic tools that will be used throughout this section. For any nonnegative random variable 
$X$ following the probability distribution $p(x)$, its expectation can be written as
\begin{equation}
    \mathbb{E}X=\int_{0}^{\infty} xp(x) \mathrm{d} x.
\end{equation}
This leads directly to a fundamental building block for concentration inequalities, namely Markov's inequality.
\begin{lemma}
    [Markov's inequality] 
    For any nonnegative random variable $X$, and a positive constant $t >0$, we have
    \begin{equation}
        \mathbb{P}\{X\ge t\}  \le \frac{\mathbb{E}X}{t}.
    \end{equation}
\end{lemma}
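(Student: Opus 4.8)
The plan is to use the integral representation of the expectation of a nonnegative random variable that was just recalled, namely $\mathbb{E}X = \int_0^\infty x\, p(x)\, \mathrm{d}x$, and to bound it from below by restricting the domain of integration to the region $\{x \ge t\}$.

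First I would split the integral as $\mathbb{E}X = \int_0^t x\, p(x)\, \mathrm{d}x + \int_t^\infty x\, p(x)\, \mathrm{d}x$. Since $X$ is nonnegative and $p$ is a probability density, the first term is nonnegative, so $\mathbb{E}X \ge \int_t^\infty x\, p(x)\, \mathrm{d}x$. Next, on the domain of the remaining integral we have $x \ge t$, hence $x\, p(x) \ge t\, p(x)$ pointwise, which yields $\int_t^\infty x\, p(x)\, \mathrm{d}x \ge t \int_t^\infty p(x)\, \mathrm{d}x = t\, \mathbb{P}\{X \ge t\}$. Chaining these two estimates gives $\mathbb{E}X \ge t\, \mathbb{P}\{X \ge t\}$, and dividing through by $t > 0$ proves the claim.

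As an equivalent and slightly more robust route, which I would include as a remark, one can avoid assuming that $X$ admits a density: note the pointwise inequality $t\, \mathbf{1}[X \ge t] \le X$, which holds with value $t \le X$ on the event $\{X \ge t\}$ and with value $0 \le X$ on its complement (using nonnegativity of $X$); taking expectations and applying monotonicity and linearity of $\mathbb{E}$ then gives $t\, \mathbb{P}\{X \ge t\} = \mathbb{E}[\, t\, \mathbf{1}[X \ge t]\,] \le \mathbb{E}X$.

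There is no genuine obstacle here, as the result is elementary; the only point requiring a moment's care is the direction of the inequality when replacing $x\, p(x)$ by $t\, p(x)$, which is legitimate precisely because the portion of the integral over $[0,t)$ — where the reverse bound would hold — has already been discarded. I would also remark in passing that the bound is informative only when $t > \mathbb{E}X$, since otherwise the right-hand side can exceed $1$ and the statement is vacuous.
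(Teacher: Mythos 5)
Your proof is correct and follows essentially the same route as the paper's: both work with the density representation and exploit the bound $x/t \ge 1$ on $[t,\infty)$ together with nonnegativity of the discarded portion of the integral, merely chaining the inequalities in the opposite direction (you bound $\mathbb{E}X$ from below, the paper bounds $\mathbb{P}\{X\ge t\}$ from above). Your additional indicator-function remark, $t\,\mathbf{1}[X\ge t]\le X$, is a nice density-free variant but does not change the substance of the argument.
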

\begin{proof}
    Employing the definition of cumulative distribution function, we have
    \begin{align}
        \mathbb{P}\{X\ge t\} = & \int_{t}^{\infty} p(x) \mathrm{d} t
        \nonumber \\
        \le & \int_{t}^{\infty} p(x)  \frac{x}{t}\mathrm{d} t
        \nonumber \\
        \le & \frac{\int_{0}^{\infty} p(x)x \mathrm{d}}{t} 
        \nonumber \\
        \le & \frac{\mathbb{E}X}{t}, 
    \end{align}
    where the first inequality follows that $x/t>1$ in the interval $x\in [t,\infty]$, and the second inequality employs the positiveness of integral term $p(x)x/t$.
\end{proof}
Using Markov's inequality, it follows that if $\phi$ is a strictly monotonically increasing, nonnegative function, then for any random variable $X$ and real number $t>0$, we have
\begin{equation}\label{appB:eq:markov_ext}
    \mathbb{P}\{X\ge t\} = \mathbb{P}\{\phi(X) \ge \phi(t)\} \le \frac{\mathbb{E} \phi(X)}{\phi(t)}.
\end{equation}
An application of this result with $\phi(x)=x^2$ leads to the simplest concentration inequality, i.e., Chebyshev's inequality.
\begin{lemma}
    Let $X$ be an arbitrary random variable and the real number $t > 0$, then
    \begin{equation}
        \mathbb{P}\{ |X-\mathbb{E}X| \ge t\} 
        \le \frac{\Var(X)}{t^2}.
    \end{equation}
\end{lemma}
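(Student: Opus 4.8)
The plan is to apply the extended Markov inequality from Eqn.~\eqref{appB:eq:markov_ext} to the random variable $Y = |X - \mathbb{E}X|$, using the strictly monotonically increasing nonnegative function $\phi(x) = x^2$ on the domain $x \ge 0$. First I would set $Y = |X - \mathbb{E}X|$, which is a nonnegative random variable, and observe that the event $\{|X - \mathbb{E}X| \ge t\}$ is identical to the event $\{\phi(Y) \ge \phi(t)\} = \{Y^2 \ge t^2\}$, since squaring preserves order on the nonnegatives and $t > 0$. This is exactly the hypothesis needed to invoke Eqn.~\eqref{appB:eq:markov_ext}.

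Next I would write out the chain of inequalities:
\begin{equation}
    \mathbb{P}\{|X - \mathbb{E}X| \ge t\} = \mathbb{P}\{(X - \mathbb{E}X)^2 \ge t^2\} \le \frac{\mathbb{E}[(X - \mathbb{E}X)^2]}{t^2}.
\end{equation}
The final step is simply to recognize that $\mathbb{E}[(X - \mathbb{E}X)^2]$ is by definition the variance $\Var(X)$, which yields the claimed bound $\mathbb{P}\{|X - \mathbb{E}X| \ge t\} \le \Var(X)/t^2$.

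There is essentially no serious obstacle here: the only point requiring a moment of care is justifying that $\{|X-\mathbb{E}X| \ge t\}$ and $\{(X-\mathbb{E}X)^2 \ge t^2\}$ are the same event, which relies on $t$ being strictly positive so that the square root is well-defined and monotone on the relevant range. Everything else is a direct substitution into the already-established extended Markov inequality. I expect the entire proof to be three or four lines.
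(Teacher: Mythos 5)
Your proposal is correct and follows essentially the same route as the paper: apply the extended Markov inequality of Eqn.~\eqref{appB:eq:markov_ext} with $\phi(x)=x^2$ to $|X-\mathbb{E}X|$, rewrite the event as $\{(X-\mathbb{E}X)^2\ge t^2\}$, and identify $\mathbb{E}[(X-\mathbb{E}X)^2]$ with $\Var(X)$. Your extra remark justifying the equality of the two events is fine but adds nothing beyond the paper's argument.
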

\begin{proof}
    Utilizing the extention of Markov's inequality in Eqn.~\eqref{appB:eq:markov_ext} with setting $\phi(x)=x^2$ yields
    \begin{align}
        \mathbb{P}\{ |X-\mathbb{E}X| \ge t\} 
        = & \mathbb{P}\{ |X-\mathbb{E}X|^2 \ge t^2\}  
        \nonumber \\
        \le & \frac{\mathbb{E} (X-\mathbb{E}X)^2}{t^2}
        \nonumber \\
        = & \frac{\Var(X)}{t^2}.
    \end{align}
\end{proof}

More generally, by taking $\phi(x)=x^q$ ($x\ge 0$), then for any $q>0$, we obtain the following moment-based inequality
\begin{equation}
     \mathbb{P}\{ |X-\mathbb{E}X| \ge t\} 
    \le  \frac{\mathbb{E} (X-\mathbb{E}X)^q}{t^q}.
\end{equation}
Here, the parameter $q$ can be chosen to optimize the upper bound in specific examples. Such moment bounds often provide sharp estimates for tail probabilities. A related idea forms the basis of Chernoff's bounding method. In particular, by setting $\phi(x)=e^{sx}$ for some $s>0$, we can derive a useful upper bound for any random variable $X$ and $t >0$, 
\begin{equation}
    \mathbb{P}\{X\ge t\} = \mathbb{P}\{e^{sX} \ge e^{st}\} \le \frac{\mathbb{E} e^{sX}}{e^{st}}.
\end{equation}
In Chernoff's method, the goal is to choose an appropriate $s > 0$ to minimize the upper bound or make it as small as possible. 

Now, we turn to concentration inequalities for sums of independent random variables. Specifically, we aim to bound probabilities of deviations from the mean, i.e., $\mathbb{P}\{|S_n-\mathbb{E}S_n| \ge t\}$, where $S_n=\sum_{i=1}^n X_i$, and $X_1, \cdots, X_n$ are independent real-valued random variables.

By applying Chebyshev's inequality to $S_n$, we obtain
\begin{equation}
    \mathbb{P}\{|S_n-\mathbb{E}S_n| \ge t\} \le \frac{\Var(S_n)}{t^2} = \frac{\sum_{i=1}^n \Var(X_i)}{t^2}.
\end{equation}
In terms of the sample mean, this can be rewritten as, 
\begin{equation}
    \mathbb{P}\left\{ \left| \frac{1}{n}\sum_{i=1}^n X_i -\mathbb{E} X_i \right| \ge \varepsilon \right\} \le \frac{\sigma^2}{n \varepsilon^2},
\end{equation}
where $\sigma^2 = \frac{1}{n} \sum_{i=1}^n \Var(X_i)$.
Chernoff's bounding method is particularly useful for bounding tail probabilities of sums of independent random variables.  By exploiting the independence property (i.e., the expected value of a product of independent random variables equals the product of their expected values), Chernoff's bound can be expressed as
\begin{align}\label{appB:eq:chernoff_bound}
    \mathbb{P}\left\{  S_n - \mathbb{E}S_n  \ge t \right\} \le &
    e^{-st} \mathbb{E}\left[\exp \left(s\sum_{i=1}^n (X_i-\mathbb{E}X_i)\right) \right]
    \nonumber \\
    = & e^{-st} \prod_{i=1}^n  \mathbb{E}\left[\exp \left(s(X_i-\mathbb{E}X_i)\right) \right] \quad \mbox{(by independence)}.
\end{align}

Now, the challenge then becomes finding a good upper bound for the moment generating function of the random variables $X_i - \mathbb{E}X_i$.
For bounded random variables, one of the most elegant results is Hoeffding's inequality \citet{hoeffding1994probability}.

\begin{lemma}
    [Hoeffding's inequality]
    Let $X$ be a random variable with $\mathbb{E}X=0, a \le X \le b$. Then for $s>0$,
    \begin{equation}
        \mathbb{E}[e^{sx}] \le \exp\left(\frac{s^2(b-a)^2}{8}\right)
    \end{equation}
\end{lemma}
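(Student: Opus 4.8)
The plan is to prove the bound $\mathbb{E}[e^{sX}] \le \exp(s^2(b-a)^2/8)$ by analyzing the cumulant generating function $\psi(s) := \log \mathbb{E}[e^{sX}]$ directly via a Taylor expansion with remainder. The key observation is that $\psi(0) = 0$ (since $e^0 = 1$), $\psi'(0) = \mathbb{E}[X] = 0$ (by the hypothesis that $X$ has zero mean), and therefore by Taylor's theorem $\psi(s) = \tfrac{1}{2}s^2 \psi''(\xi)$ for some $\xi$ between $0$ and $s$. If I can show that $\psi''(\xi) \le (b-a)^2/4$ for every $\xi$, the result follows immediately.

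First I would compute $\psi''$ explicitly. Writing $\psi'(s) = \mathbb{E}[X e^{sX}]/\mathbb{E}[e^{sX}]$, differentiation gives
\begin{equation}
\psi''(s) = \frac{\mathbb{E}[X^2 e^{sX}]}{\mathbb{E}[e^{sX}]} - \left(\frac{\mathbb{E}[X e^{sX}]}{\mathbb{E}[e^{sX}]}\right)^2.
\end{equation}
The crucial step is to recognize this as the variance of $X$ under the tilted probability measure $\mathrm{d}\tilde{\mathbb{P}} = e^{sX}\,\mathrm{d}\mathbb{P}/\mathbb{E}[e^{sX}]$. Since $X$ takes values in $[a,b]$ $\mathbb{P}$-almost surely, it also does so $\tilde{\mathbb{P}}$-almost surely, and a bounded random variable supported on an interval of length $b-a$ has variance at most $(b-a)^2/4$ — this is the Popoviciu-type inequality, which follows from the fact that $\mathrm{Var}(X) \le \mathbb{E}[(X - (a+b)/2)^2] \le (b-a)^2/4$. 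Hence $\psi''(s) \le (b-a)^2/4$ uniformly in $s$.

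Assembling the pieces: by Taylor's theorem applied to $\psi$ around $0$,
\begin{equation}
\psi(s) = \psi(0) + s\psi'(0) + \frac{s^2}{2}\psi''(\xi) \le 0 + 0 + \frac{s^2}{2}\cdot\frac{(b-a)^2}{4} = \frac{s^2(b-a)^2}{8},
\end{equation}
and exponentiating gives the claim. The main obstacle — or rather the only point requiring genuine care — is justifying the variance bound step cleanly: one must confirm that differentiation under the expectation sign is valid (straightforward here since $X$ is bounded, so all moment generating functions are finite and smooth in $s$) and correctly identify $\psi''$ with the tilted-measure variance. An alternative route that avoids the measure-change language is to bound $e^{sx}$ on $[a,b]$ by the chord (convexity) $e^{sx} \le \frac{b-x}{b-a}e^{sa} + \frac{x-a}{b-a}e^{sb}$, take expectations to kill the linear term, and then show the resulting function of $s$ is itself bounded by $\exp(s^2(b-a)^2/8)$ via a one-variable calculus estimate on its logarithm; I would mention this as the more elementary but slightly more computational option.
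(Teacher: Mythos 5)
Your proposal is correct. Note, however, that the paper does not actually prove this lemma: it states it as a known result with a citation to Hoeffding and uses it only to deduce the tail inequality via the Chernoff bound, so there is no in-paper proof to compare against. Your argument is the standard and rigorous one: since $X$ is bounded, $s\mapsto\mathbb{E}[e^{sX}]$ is finite and smooth, differentiation under the expectation is justified, $\psi(0)=0$, $\psi'(0)=\mathbb{E}X=0$, and $\psi''(s)$ is exactly the variance of $X$ under the tilted measure $e^{sX}\,\mathrm{d}\mathbb{P}/\mathbb{E}[e^{sX}]$, which is supported on $[a,b]$ and hence bounded by $(b-a)^2/4$ via $\mathrm{Var}(X)\le\mathbb{E}[(X-(a+b)/2)^2]\le(b-a)^2/4$; Taylor's theorem with Lagrange remainder then gives $\psi(s)\le s^2(b-a)^2/8$. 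The only points requiring care — validity of the interchange and the identification of $\psi''$ with the tilted variance — are exactly the ones you flag and they hold here. The alternative chord argument you sketch (bounding $e^{sx}$ by the secant line on $[a,b]$, taking expectations so the linear term vanishes by $\mathbb{E}X=0$, and then a single-variable calculus bound on the logarithm of the resulting expression) is Hoeffding's original proof and would also be acceptable; the tilted-measure route is cleaner conceptually, the chord route is more elementary but requires the somewhat fiddly estimate $\log(pe^{-sq(b-a)}+qe^{sp(b-a)})\le s^2(b-a)^2/8$ with $p=b/(b-a)$, $q=-a/(b-a)$.
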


This lemma, combined with Eqn.~\eqref{appB:eq:chernoff_bound} immediately implies Hoeffding's tail inequality \citep{hoeffding1994probability}.

\begin{theorem}
    Let $X_1,\cdots, X_n$ be independent bounded random variables such that $X_i$ falls in the interval $[a_i, b_i]$ with probability one. Then for any real number $t > 0$, we have
    \begin{equation}
        \mathbb{P}\left\{  S_n - \mathbb{E}S_n  \ge t \right\} \le \exp\left(\frac{-2t^2}{\sum_{i=1}^n(b_i-a_i)^2}\right),
    \end{equation}
    and 
    \begin{equation}
        \mathbb{P}\left\{  S_n - \mathbb{E}S_n  \le -t \right\} \le \exp\left(\frac{-2t^2}{\sum_{i=1}^n(b_i-a_i)^2}\right),
    \end{equation}
\end{theorem}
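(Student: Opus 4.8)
The plan is to apply the Chernoff bounding method established in Eqn.~\eqref{appB:eq:chernoff_bound} together with Hoeffding's moment-generating-function lemma stated immediately above, then optimize over the free parameter $s>0$. First I would center the random variables: set $Y_i := X_i - \mathbb{E}X_i$, so that $\mathbb{E}Y_i = 0$ and $Y_i$ falls in an interval of width $b_i - a_i$ (the centered interval $[a_i - \mathbb{E}X_i, b_i - \mathbb{E}X_i]$ has the same length). Then $S_n - \mathbb{E}S_n = \sum_{i=1}^n Y_i$, and Chernoff's method gives, for any $s>0$,
\begin{equation}
  \mathbb{P}\left\{ S_n - \mathbb{E}S_n \ge t \right\} \le e^{-st} \prod_{i=1}^n \mathbb{E}\left[ e^{sY_i} \right],
\end{equation}
using independence exactly as in Eqn.~\eqref{appB:eq:chernoff_bound}.

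Next I would bound each factor via Hoeffding's lemma: since $\mathbb{E}Y_i = 0$ and $Y_i$ lies in an interval of length $b_i - a_i$, the lemma yields $\mathbb{E}[e^{sY_i}] \le \exp\left( s^2 (b_i - a_i)^2 / 8 \right)$. Substituting into the product gives
\begin{equation}
  \mathbb{P}\left\{ S_n - \mathbb{E}S_n \ge t \right\} \le \exp\left( -st + \frac{s^2}{8} \sum_{i=1}^n (b_i - a_i)^2 \right).
\end{equation}
The right-hand side is a convex quadratic in $s$; I would minimize the exponent $-st + \frac{s^2}{8}\sum_i (b_i-a_i)^2$ by differentiating in $s$, obtaining the optimal value $s^\star = 4t / \sum_{i=1}^n (b_i - a_i)^2$, which is indeed positive. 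Plugging $s^\star$ back in, the exponent becomes $-2t^2 / \sum_{i=1}^n (b_i-a_i)^2$, which establishes the first claimed inequality.

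For the second inequality (the lower tail), I would apply the first inequality to the random variables $-X_1, \dots, -X_n$, which are independent and satisfy $-X_i \in [-b_i, -a_i]$, an interval of the same length $b_i - a_i$. Since $\sum_i (-X_i) - \mathbb{E}\sum_i(-X_i) = -(S_n - \mathbb{E}S_n)$, the event $\{S_n - \mathbb{E}S_n \le -t\}$ coincides with $\{\sum_i(-X_i) - \mathbb{E}\sum_i(-X_i) \ge t\}$, so the bound transfers verbatim. There is no serious obstacle here; the only point requiring a little care is the bookkeeping of intervals under centering and negation, and the verification that $s^\star > 0$ so that the Chernoff step is legitimate. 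The substantive content is entirely carried by Hoeffding's lemma, which is assumed as given.
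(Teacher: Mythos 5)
Your proposal is correct and follows exactly the route the paper intends: Chernoff's bound from Eqn.~\eqref{appB:eq:chernoff_bound} combined with Hoeffding's moment-generating-function lemma, followed by optimizing $s$ (your $s^\star = 4t/\sum_i(b_i-a_i)^2$ and the resulting exponent $-2t^2/\sum_i(b_i-a_i)^2$ check out), plus the standard negation trick for the lower tail. The paper merely states that the lemma and Chernoff's method "immediately imply" the theorem, so you have simply supplied the details it leaves implicit.
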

Hoeffding's inequality, first proven for binomial random variables by \citet{chernoff1952measure} and \citet{okamoto1959some}, provides a powerful tool for bounding tail probabilities. However, a limitation is that it does not take into account the variance of the 
$X_i$'s, which can sometimes yield loose bounds.

\chapter{Haar Measure and Unitary t-design}\label{App:Haar_design}

In this section, we introduce some basic knowledge of Haar measure~\citep{haar1933massbegriff} and unitary $t$-design~\citep{dankert2009exact}, which are extensively employed in group representation theory and quantum information~\citep{nagy1993haar,adam2013applications}, especially in the analysis of barren plateaus and the trainability of variational quantum algorithms~\citep{larocca2024review}.

We begin with the Haar measure. Roughly speaking, Haar measure is a unique probability measure that generates the uniform distribution over a compact group. In this chapter, we focus on the Haar measure on the unitary space $\mathcal{U}(d)$ for convenience. Mathematically, the Haar measure is uniform given by invariant properties in Definition~\ref{app_haar_design_def_haar}.

\begin{definition}[Haar measure on $\mathcal{U}(d)$]\label{app_haar_design_def_haar}
A measure $\mu$ is the Haar measure on the unitary space $\mathcal{U}(d)$ if and only if  $\mu$ is
\begin{enumerate}
\item left-invariant, \ie \ $\mu(U\mathcal{S})=\mu(\mathcal{S})$ for any measurable set $\mathcal{S} \subseteq \mathcal{U}(d)$ and any unitary $U \in \mathcal{U}(d)$.
\item right-invariant, \ie \ $\mu(\mathcal{S}U)=\mu(\mathcal{S})$ for any measurable set $\mathcal{S} \subseteq \mathcal{U}(d)$ and any unitary $U \in \mathcal{U}(d)$.
\item a probability measure, \ie \ $\int d\mu (U) =1$.
\end{enumerate}
\end{definition}

As provided in Definition~\ref{app_haar_design_def_haar}, the Haar measure is continuous on the whole space due to left- and right-invariant properties. Therefore, researchers are interested in approximating the Haar measure with the uniform distribution over some finite sets. Specifically, the unitary set whose uniform distribution shares the same $t$-th moment with the Haar measure is defined as the {unitary $t$-design}. 

\begin{definition}[Unitary $2$-design]\label{app_haar_design_def_design}
Let $\mu$ be the Haar measure on the space $\mathcal{U}(d)$. Then, a finite set $\mathcal{S}$ forms a unitary $t$-design if and only if it fulfills one of the following equivalent conditions:
\begin{enumerate}
\item $$ \frac{1}{|\mathcal{S}|} \sum_{U \in \mathcal{S}} U^{\otimes t} \otimes (U^{\dag})^{\otimes t}  = \int_{\mathcal{U}(d)} U^{\otimes t} \otimes (U^{\dag})^{\otimes t} d \mu (U) . $$ 
\item Let $P_{t,t}(U)$ be the polynomial with at most $t$ degrees of elements from $U$ and at most $t$ degrees of elements from $U^\dag$. Then 
\begin{align*}
\frac{1}{|\mathcal{S}|} \sum_{U \in \mathcal{S}} P_{t,t} (U)  ={}& \int_{\mathcal{U}(d)} P_{t,t} (U) d \mu (U) .
\end{align*}
\end{enumerate}
\end{definition}

\begin{corollary}\label{app_haar_design_def_tt_design}
A unitary $t$-design is also a unitary $(t-1)$-design.
\end{corollary}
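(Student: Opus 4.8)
The statement to prove is Corollary~\ref{app_haar_design_def_tt_design}: a unitary $t$-design is also a unitary $(t-1)$-design. The plan is to use the second characterization of unitary $t$-designs from Definition~\ref{app_haar_design_def_design}, namely the condition on polynomials $P_{t,t}(U)$ of degree at most $t$ in the entries of $U$ and at most $t$ in the entries of $U^\dagger$.

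First I would observe that any polynomial $P_{t-1,t-1}(U)$ of degree at most $t-1$ in the entries of $U$ and at most $t-1$ in the entries of $U^\dagger$ is in particular a polynomial of degree at most $t$ in each, i.e., $P_{t-1,t-1}$ is a special case of $P_{t,t}$. Therefore, if $\mathcal{S}$ is a unitary $t$-design, applying the defining identity
\begin{equation}
\frac{1}{|\mathcal{S}|} \sum_{U \in \mathcal{S}} P_{t,t}(U) = \int_{\mathcal{U}(d)} P_{t,t}(U)\, d\mu(U)
\end{equation}
to the particular choice $P_{t,t} = P_{t-1,t-1}$ immediately yields
\begin{equation}
\frac{1}{|\mathcal{S}|} \sum_{U \in \mathcal{S}} P_{t-1,t-1}(U) = \int_{\mathcal{U}(d)} P_{t-1,t-1}(U)\, d\mu(U),
\end{equation}
which is exactly the condition for $\mathcal{S}$ to be a unitary $(t-1)$-design.

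One subtlety I would want to address carefully: the cleanest argument really does hinge on the polynomial characterization rather than the tensor-power characterization, because the relation between $U^{\otimes t} \otimes (U^\dagger)^{\otimes t}$ and $U^{\otimes (t-1)} \otimes (U^\dagger)^{\otimes (t-1)}$ is not literally a restriction — one would have to contract two tensor legs (e.g., by noting $U^\dagger U = \mathbb{I}$ on a pair of factors) to descend from the $t$-th moment to the $(t-1)$-th, which is doable but less transparent. So I expect the main (minor) obstacle is simply making the reduction rigorous: spelling out that every entry-polynomial of bidegree $(t-1,t-1)$ is admissible as a test polynomial in the bidegree-$(t,t)$ family, and hence the $t$-design averaging identity specializes correctly. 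Since this is essentially a containment-of-function-classes argument, the proof is short; I would present it in two or three sentences built around the two displayed equations above.
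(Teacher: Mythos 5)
Your proof is correct: the paper states Corollary~\ref{app_haar_design_def_tt_design} without giving any proof, so there is nothing to compare against, but your argument via the second condition of Definition~\ref{app_haar_design_def_design} is exactly the intended one — since $P_{t,t}$ is defined as having \emph{at most} $t$ degrees in the entries of $U$ and of $U^\dagger$, every $P_{t-1,t-1}$ is admissible as a test polynomial, and the $t$-design identity specializes immediately to the $(t-1)$-design identity. Your side remark is also accurate: descending via the tensor-power characterization would instead require contracting a pair of factors using unitarity (e.g.\ $\sum_k U_{jk}U^{*}_{jk}=1$), which works but is less direct, so the polynomial route you chose is the cleaner one.
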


We remark that invariant properties of the Haar measure lead to several useful formulations of unitary $t$-designs provided in Facts~\ref{app_haar_design_def_ave_1_design} and \ref{app_haar_design_def_ave_2_design}.

\begin{fact}[Average over unitary $1$-design~\citep{puchala2017symbolic}]\label{app_haar_design_def_ave_1_design}
Let $\mathcal{S}$ be a set of unitary $1$-design on $\mathcal{U}(d)$ and $\mu$ be the corresponding Haar measure. Then
\begin{align*}
\frac{1}{|\mathcal{S}|} \sum_{U \in \mathcal{S}} U_{ij} U^{*}_{i' j'} ={}& \int_{\mathcal{U}(d)} U_{ij} U^{*}_{i' j'} d \mu (U) = \frac{1}{d} \delta_{ii'} \delta_{jj'} .
\end{align*}
\end{fact}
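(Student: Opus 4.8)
The plan is to reduce the statement to a short computation with the Haar integral. The first equality, $\frac{1}{|\mathcal{S}|}\sum_{U\in\mathcal{S}}U_{ij}U^{*}_{i'j'} = \int_{\mathcal{U}(d)}U_{ij}U^{*}_{i'j'}\,d\mu(U)$, is immediate from the definition of a unitary $1$-design (Definition~\ref{app_haar_design_def_design} with $t=1$): the integrand $U_{ij}U^{*}_{i'j'}$ is a polynomial $P_{1,1}(U)$ of degree at most one in the entries of $U$ and at most one in the entries of $U^{\dagger}$, so the two averages coincide. It then remains to establish the closed form $I_{ii',jj'} := \int_{\mathcal{U}(d)} U_{ij}U^{*}_{i'j'}\,d\mu(U) = \frac{1}{d}\delta_{ii'}\delta_{jj'}$, using only left- and right-invariance and the normalization of $\mu$ from Definition~\ref{app_haar_design_def_haar}.

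First I would show that $I_{ii',jj'}$ vanishes unless $i=i'$ and $j=j'$. For this, consider a diagonal unitary $D$ whose only nontrivial entry is a phase, say $D_{\ell\ell}=\mathrm{i}$ and $D_{mm}=1$ for $m\neq\ell$. By right-invariance, $\int U_{ij}U^{*}_{i'j'}\,d\mu(U) = \int (UD)_{ij}(UD)^{*}_{i'j'}\,d\mu(U)$; choosing $\ell=j$ in the case $j\neq j'$ yields $(UD)_{ij}=\mathrm{i}\,U_{ij}$ while $(UD)_{i'j'}=U_{i'j'}$, hence $I_{ii',jj'}=\mathrm{i}\,I_{ii',jj'}$ and so $I_{ii',jj'}=0$. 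The mirror argument using left-invariance, $U\mapsto DU$ with $\ell=i$ in the case $i\neq i'$, eliminates $i\neq i'$.

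Next I would treat the surviving diagonal case $I_{ii,jj}=\int |U_{ij}|^{2}\,d\mu(U)$. Left-multiplying $U$ by a permutation matrix swapping rows $i$ and $i''$ shows, via left-invariance, that $\int|U_{ij}|^{2}\,d\mu(U)$ is independent of $i$; right-multiplying by a permutation matrix shows it is independent of $j$. Call the common value $c$. Summing over $j$ and using that every row of a unitary is a unit vector, $\sum_{j=1}^{d}|U_{ij}|^{2}=1$, together with $\int d\mu(U)=1$, gives $dc=1$, i.e. $c=1/d$. Combining the two cases yields $I_{ii',jj'}=\frac{1}{d}\delta_{ii'}\delta_{jj'}$, which is the claim.

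The computation is elementary; the only delicate points are index bookkeeping (keeping straight which invariance acts on row versus column indices) and verifying that the chosen diagonal or permutation unitaries really leave the unaffected factor unchanged. An alternative, more compact route packages the same input: the averaged operator $T=\int_{\mathcal{U}(d)} U\otimes\bar U\,d\mu(U)$ satisfies $T=(V\otimes\bar V)T=T(V\otimes\bar V)$ for every $V\in\mathcal{U}(d)$, which forces $T$ to be a multiple of $|\Omega\rangle\langle\Omega|$ for the maximally entangled state $|\Omega\rangle=\tfrac{1}{\sqrt d}\sum_{k}|k,k\rangle$, and the normalization $\langle\Omega|T|\Omega\rangle=1$ fixes the multiple; reading off the matrix entries $\langle i,i'|T|j,j'\rangle$ reproduces the same formula. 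I would present the direct invariance argument as the primary proof, since it needs no representation theory, and mention the $|\Omega\rangle$-based derivation only as a remark.
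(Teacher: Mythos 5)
Your proof is correct. Note, however, that the paper does not prove this statement at all: it is presented as a \emph{Fact} with a citation to \citet{puchala2017symbolic}, consistent with the tutorial's convention of deferring proofs of non-central results to the literature. The cited reference derives such moments via symbolic/Weingarten-type integration over $\mathcal{U}(d)$, which handles arbitrary higher moments systematically; your argument instead supplies a self-contained, elementary derivation of the first moment using only the defining properties in Definition~\ref{app_haar_design_def_haar} and Definition~\ref{app_haar_design_def_design}. The individual steps all check out: the reduction of the design average to the Haar integral is exactly condition 2 of Definition~\ref{app_haar_design_def_design} with $t=1$ (since $U_{ij}U^{*}_{i'j'}$ is degree one in the entries of $U$ and of $U^{\dagger}$); right-invariance with a diagonal phase kills $j\neq j'$ and left-invariance kills $i\neq i'$; permutation invariance plus row normalization $\sum_{j}|U_{ij}|^{2}=1$ and $\int d\mu=1$ fixes the diagonal value at $1/d$. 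Your alternative remark via $T=\int U\otimes\bar{U}\,d\mu$ being the projector onto the maximally entangled state is also sound and is essentially the representation-theoretic version of the same computation (it is the standard route one would use to obtain the $2$-design formula of Fact~\ref{app_haar_design_def_ave_2_design} as well). In short: where the paper only cites, you prove, and the elementary invariance argument is a perfectly valid and arguably more transparent substitute for the general Weingarten machinery at $t=1$.
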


\begin{fact}[Average over unitary $2$-design~\citep{puchala2017symbolic}]\label{app_haar_design_def_ave_2_design}
Let $\mathcal{S}$ be a set of unitary $2$-design on $\mathcal{U}(d)$ and $\mu$ be the corresponding Haar measure. Then
\begin{align*}
{}& \frac{1}{|\mathcal{S}|} \sum_{U \in \mathcal{S}} U_{i_1 j_1} U_{i_2 j_2} U^{*}_{i_1' j_1'} U^{*}_{i_2' j_2'} ={} \int_{\mathcal{U}(d)} U_{i_1 j_1} U_{i_2 j_2} U^{*}_{i_1' j_1'} U^{*}_{i_2' j_2'} d \mu (U) \\
={}& \frac{1}{d^2-1} \left( \delta_{i_1 i_1'} \delta_{i_2 i_2'} \delta_{j_1 j_1'} \delta_{j_2 j_2'} + \delta_{i_1 i_2'} \delta_{i_2 i_1'} \delta_{j_1 j_2'} \delta_{j_2 j_1'} \right) \\
-{}& \frac{1}{d(d^2-1)} \left( \delta_{i_1 i_1'} \delta_{i_2 i_2'} \delta_{j_1 j_2'} \delta_{j_2 j_1'} + \delta_{i_1 i_2'} \delta_{i_2 i_1'} \delta_{j_1 j_1'} \delta_{j_2 j_2'} \right) .
\end{align*}
\end{fact}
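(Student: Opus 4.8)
The plan is to prove the slightly stronger operator identity
$\Phi(X)\coloneqq\int_{\mathcal{U}(d)} U^{\otimes 2}\,X\,(U^\dagger)^{\otimes 2}\,d\mu(U)
= a(X)\,\mathbb{I}+b(X)\,\mathbb{F}$ for every $X\in\mathbb{C}^{d^2\times d^2}$, with explicit scalars $a(X),b(X)$, where $\mathbb{F}$ is the swap operator on $(\mathbb{C}^d)^{\otimes 2}$ defined by $\mathbb{F}\,(\ket{a}\otimes\ket{b})=\ket{b}\otimes\ket{a}$, and then to read off the componentwise formula by choosing $X=\ket{j_1 j_2}\bra{j_1' j_2'}$. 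The passage from the Haar integral to a finite unitary $2$-design set $\mathcal{S}$ is then automatic: the quantity $U_{i_1 j_1}U_{i_2 j_2}U^*_{i_1' j_1'}U^*_{i_2' j_2'}$ is exactly a polynomial $P_{2,2}(U)$ of degree $2$ in the entries of $U$ and degree $2$ in the entries of $U^\dagger$, so the second characterization in Definition~\ref{app_haar_design_def_design} forces its $\mathcal{S}$-average to equal its Haar average. Thus the whole content is the Haar computation of $\Phi$.

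First I would use the left- and right-invariance of the Haar measure (Definition~\ref{app_haar_design_def_haar}) to show $V^{\otimes 2}\Phi(X)(V^\dagger)^{\otimes 2}=\Phi(X)$ for all $V\in\mathcal{U}(d)$; that is, $\Phi(X)$ lies in the commutant of the diagonal representation $U\mapsto U^{\otimes 2}$. The structural input is Schur--Weyl duality for $t=2$: this commutant is spanned by the image of the symmetric group $S_2$, namely $\{\mathbb{I},\mathbb{F}\}$, which is two-dimensional. Hence $\Phi(X)=a(X)\,\mathbb{I}+b(X)\,\mathbb{F}$ with $a,b$ linear in $X$. To pin down $a(X)$ and $b(X)$ I would take traces against $\mathbb{I}$ and $\mathbb{F}$. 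Using cyclicity together with the fact that $U^{\otimes 2}$ is unitary and $\mathbb{F}$ commutes with $U^{\otimes 2}$, one gets $\mathrm{Tr}(\Phi(X))=\mathrm{Tr}(X)$ and $\mathrm{Tr}(\mathbb{F}\,\Phi(X))=\mathrm{Tr}(\mathbb{F}X)$; combined with $\mathrm{Tr}(\mathbb{I})=d^2$, $\mathrm{Tr}(\mathbb{F})=d$, $\mathrm{Tr}(\mathbb{F}^2)=d^2$ this yields the linear system $\mathrm{Tr}(X)=d^2 a+d\,b$ and $\mathrm{Tr}(\mathbb{F}X)=d\,a+d^2 b$, whose solution is $a(X)=\tfrac{d\,\mathrm{Tr}(X)-\mathrm{Tr}(\mathbb{F}X)}{d(d^2-1)}$ and $b(X)=\tfrac{d\,\mathrm{Tr}(\mathbb{F}X)-\mathrm{Tr}(X)}{d(d^2-1)}$.

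It remains to extract indices. Writing $U_{i_1 j_1}U_{i_2 j_2}=\langle i_1 i_2|U^{\otimes 2}|j_1 j_2\rangle$ and $U^*_{i_1' j_1'}U^*_{i_2' j_2'}=\langle j_1' j_2'|(U^\dagger)^{\otimes 2}|i_1' i_2'\rangle$, the left-hand side of Fact~\ref{app_haar_design_def_ave_2_design} equals $\langle i_1 i_2|\,\Phi\!\left(\ket{j_1 j_2}\bra{j_1' j_2'}\right)|i_1' i_2'\rangle$. Substituting $X=\ket{j_1 j_2}\bra{j_1' j_2'}$ gives $\mathrm{Tr}(X)=\delta_{j_1 j_1'}\delta_{j_2 j_2'}$ and $\mathrm{Tr}(\mathbb{F}X)=\delta_{j_1 j_2'}\delta_{j_2 j_1'}$, while $\langle i_1 i_2|\mathbb{I}|i_1' i_2'\rangle=\delta_{i_1 i_1'}\delta_{i_2 i_2'}$ and $\langle i_1 i_2|\mathbb{F}|i_1' i_2'\rangle=\delta_{i_1 i_2'}\delta_{i_2 i_1'}$; collecting the four terms reproduces the stated expression verbatim. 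Finally I would invoke Definition~\ref{app_haar_design_def_design} to replace the Haar integral by the $\mathcal{S}$-average, completing the proof.

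The main obstacle is the Schur--Weyl step, i.e. justifying that the commutant of $\{U^{\otimes 2}:U\in\mathcal{U}(d)\}$ is exactly $\mathrm{span}\{\mathbb{I},\mathbb{F}\}$ and not larger. If a self-contained argument is wanted, one can get it by hand: an operator $M$ commuting with every $U\otimes U$ in particular commutes with all $U\otimes\mathbb{I}$ and with all diagonal unitaries, which already forces $M$ to be supported on a two-parameter family of index patterns; testing invariance under one Givens rotation acting in a single $2\times 2$ block then fixes $M$ to the span of $\mathbb{I}$ and $\mathbb{F}$. Everything else is routine linear algebra. As a consistency check, tracing out one tensor factor in the proven formula (and using Corollary~\ref{app_haar_design_def_tt_design}) recovers the unitary $1$-design identity of Fact~\ref{app_haar_design_def_ave_1_design}.
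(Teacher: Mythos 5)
Your proof is correct. The paper itself gives no proof of this statement: it is presented as a \textbf{Fact} with a citation to Puchała and Miszczak, consistent with the tutorial's convention that Facts are quoted from the literature rather than derived (the cited reference obtains the formula by entry-wise Weingarten-type symbolic integration over $\mathcal{U}(d)$). Your route is the standard alternative: characterize the twirling channel $\Phi(X)=\int U^{\otimes 2}X(U^\dagger)^{\otimes 2}d\mu(U)$ as an element of the commutant of $\{U^{\otimes 2}\}$, invoke Schur--Weyl duality at $t=2$ to write $\Phi(X)=a(X)\mathbb{I}+b(X)\mathbb{F}$, solve the $2\times 2$ trace system using $\Tr(\mathbb{I})=d^2$, $\Tr(\mathbb{F})=d$, $\Tr(\mathbb{F}^2)=d^2$, and then specialize to $X=\ket{j_1 j_2}\bra{j_1' j_2'}$ to recover the index formula; the reduction from the $\mathcal{S}$-average to the Haar average via the polynomial characterization in Definition~\ref{app_haar_design_def_design} is handled correctly. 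Your trace identities $\Tr(\Phi(X))=\Tr(X)$ and $\Tr(\mathbb{F}\Phi(X))=\Tr(\mathbb{F}X)$, the resulting coefficients $a(X),b(X)$, and the final Kronecker-delta bookkeeping all check out against the stated expression. What your approach buys over the cited entry-wise computation is conceptual transparency and reusability (the same twirl decomposition is exactly what is exploited later in Fact~\ref{vqacr_qntk_tdesign_tr_wawb} and the barren-plateau proof of Theorem~\ref{ch5_train_bp_original}); what it costs is the Schur--Weyl input, which you correctly flag as the one non-elementary ingredient and for which your sketched hand argument (reduction via diagonal unitaries and a single Givens rotation) is plausible but would need to be written out carefully if full self-containment were required.
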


How big is a unitary $t$-design? \citet{roy2009unitary} have proved that, for instance, the size of unitary $1$-design and unitary $2$-design scale polynomially to the dimension of the unitary space.

\begin{fact}[The size of a unitary $2$-design~\citep{roy2009unitary}]\label{app_haar_design_def_design_size}
A unitary $1$-design on $\mathcal{U}(d)$ has no fewer than $d^2$ elements. A unitary $2$-design on $\mathcal{U}(d)$ has no fewer than $d^4 - 2d^2 + 2$ elements.
\end{fact}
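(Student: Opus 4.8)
The plan is to prove both bounds by a rank (Gram–matrix) argument, in the spirit of the classical ``absolute bound'' for designs: one exhibits a family of $m$ functions on $\mathcal{U}(d)$ whose pairwise averages over $\mathcal{S}$ are \emph{forced} by the design property to form an invertible $m\times m$ matrix; since that matrix factors as $M^{\dagger}M$ for an $|\mathcal{S}|\times m$ matrix $M$, we get $m=\operatorname{rank}(M^{\dagger}M)\le\operatorname{rank}(M)\le|\mathcal{S}|$. All the design hypothesis is used for is to pin down entries of $M^{\dagger}M$ to their Haar values, for which Facts~\ref{app_haar_design_def_ave_1_design} and \ref{app_haar_design_def_ave_2_design} (or, more invariantly, Schur orthogonality) are exactly the tools available.

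For the $1$-design bound I would take the $d^{2}$ coordinate functions $U\mapsto U_{ij}$, $i,j\in[d]$, and set $M\in\mathbb{C}^{|\mathcal{S}|\times d^{2}}$ with $M_{U,(i,j)}=U_{ij}$. By Fact~\ref{app_haar_design_def_ave_1_design}, which any $1$-design reproduces exactly, $(M^{\dagger}M)_{(i',j'),(i,j)}=\sum_{U\in\mathcal{S}}\overline{U_{i'j'}}\,U_{ij}=\tfrac{|\mathcal{S}|}{d}\,\delta_{ii'}\delta_{jj'}$, i.e.\ $M^{\dagger}M=\tfrac{|\mathcal{S}|}{d}\,\mathbb{I}_{d^{2}}$, which is invertible. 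Hence $d^{2}=\operatorname{rank}(M)\le|\mathcal{S}|$. (One cannot cheaply append the constant function here, since $\sum_U U_{ij}$ is bidegree $(1,0)$ and is not controlled by the degree-$(1,1)$ information a $1$-design carries — which is precisely why the clean bound is $d^{2}$ rather than $d^{2}+1$.)

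For the $2$-design bound I would first reshape the defining identity $\tfrac{1}{|\mathcal{S}|}\sum_{U}U\otimes U^{\dagger}=\int U\otimes U^{\dagger}\,d\mu$ into the conjugation action $X\mapsto UXU^{\dagger}$ on $\operatorname{End}(\mathbb{C}^{d})$, which decomposes as the trivial representation (scalar matrices) plus the irreducible, self-dual adjoint representation $\pi_{\mathrm{ad}}$ on traceless matrices, of dimension $d^{2}-1$. Take as test functions the constant $1$ together with the $(d^{2}-1)^{2}$ matrix coefficients $U\mapsto\pi_{\mathrm{ad}}(U)_{nm}$. Each $\pi_{\mathrm{ad}}(U)_{nm}$ is bidegree $(1,1)$ in the entries of $U$, so every product $\overline{\pi_{\mathrm{ad}}(U)_{nm}}\,\pi_{\mathrm{ad}}(U)_{n'm'}$ is bidegree $(2,2)$ and is integrated exactly by any $2$-design; Schur orthogonality (equivalently, expanding in the $U_{ij}$'s and applying Fact~\ref{app_haar_design_def_ave_2_design}) then forces the Gram matrix to equal $\diag\!\big(|\mathcal{S}|,\ \tfrac{|\mathcal{S}|}{d^{2}-1}\,\mathbb{I}_{(d^{2}-1)^{2}}\big)$, the cross terms $\sum_{U}\pi_{\mathrm{ad}}(U)_{nm}$ vanishing because a $2$-design is in particular a $1$-design (Corollary~\ref{app_haar_design_def_tt_design}). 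This matrix is invertible of size $1+(d^{2}-1)^{2}=d^{4}-2d^{2}+2$, giving $|\mathcal{S}|\ge d^{4}-2d^{2}+2$.

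The main obstacle is the $2$-design step: choosing the right family of test functions and verifying rigorously that a $2$-design pins down their \emph{entire} Gram matrix to an invertible matrix. Concretely this requires (i) the decomposition $\operatorname{End}(\mathbb{C}^{d})=\mathrm{triv}\oplus\pi_{\mathrm{ad}}$ with $\pi_{\mathrm{ad}}$ irreducible and self-dual, (ii) checking that each $\overline{\pi_{\mathrm{ad}}(U)_{nm}}\,\pi_{\mathrm{ad}}(U)_{n'm'}$ lies in the span of the entries of $U^{\otimes 2}\otimes(U^{\dagger})^{\otimes 2}$, and (iii) the Schur-orthogonality evaluation of its Haar integral — bookkeeping that, if avoided abstractly, must be pushed through by brute force with Fact~\ref{app_haar_design_def_ave_2_design}. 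Once this is in place the rank inequality is immediate, and I would close by noting that both estimates are valid lower bounds but need not be tight, since enlarging the test family (e.g.\ appending further matrix coefficients that a $2$-design still controls) can only improve the count.
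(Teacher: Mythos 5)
The paper does not actually prove this statement: it is presented as a \textbf{Fact} and delegated to the cited reference (Roy and Scott, 2009), so there is no in-text proof to compare against. Your Gram/rank argument is correct and self-contained, and it is essentially the standard ``absolute bound'' route that underlies the cited result. For the $1$-design bound, the matrix $M$ with entries $U_{ij}$ has $M^{\dagger}M=\tfrac{|\mathcal{S}|}{d}\,\mathbb{I}_{d^{2}}$ by Fact~\ref{app_haar_design_def_ave_1_design}, so $d^{2}=\operatorname{rank}(M^{\dagger}M)\leq|\mathcal{S}|$; and your parenthetical is the right caution, since a $1$-design in the sense of Definition~\ref{app_haar_design_def_design} only pins down balanced (bidegree-$(1,1)$) averages, not $\sum_{U}U_{ij}$, which is why the constant function cannot be adjoined at this level. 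For the $2$-design bound, your decomposition of the conjugation action into the trivial representation plus the adjoint representation on traceless matrices (irreducible for $d\geq 2$, trivial statement for $d=1$) is sound, and the Gram matrix of $\{1\}\cup\{\pi_{\mathrm{ad}}(U)_{nm}\}$ is indeed forced to $\diag\bigl(|\mathcal{S}|,\tfrac{|\mathcal{S}|}{d^{2}-1}\mathbb{I}_{(d^{2}-1)^{2}}\bigr)$: the cross terms are bidegree-$(1,1)$ and vanish by Corollary~\ref{app_haar_design_def_tt_design} together with tracelessness, while the remaining entries are homogeneous bidegree-$(2,2)$ and can be evaluated either by Schur orthogonality or, if you prefer to stay elementary, directly from Fact~\ref{app_haar_design_def_ave_2_design} (taking an orthonormal traceless basis $\{A_n\}$ under the Hilbert--Schmidt inner product, one finds $\int\overline{\Tr(A_n^{\dagger}UA_mU^{\dagger})}\,\Tr(A_{n'}^{\dagger}UA_{m'}U^{\dagger})\,d\mu=\tfrac{1}{d^{2}-1}\delta_{nn'}\delta_{mm'}$). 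Since $1+(d^{2}-1)^{2}=d^{4}-2d^{2}+2$, the rank inequality gives the claimed bound. The only soft spot is your closing remark that enlarging the test family ``can only improve the count'': this is not automatic, because appended functions must have their entire Gram with the existing family pinned by the design and the enlarged matrix must remain invertible, and within bidegree $(1,1)$ your family is already maximal by Peter--Weyl; but this is an aside and does not affect the proof.
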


For a system with $N$ qubits, the dimension of the unitary space is $d=2^N$. Therefore, an exact $t$-design could involve exponential numbers of ensembles with increased qubits. Could we obtain an approximation to the unitary $t$-design, which can be generated in polynomial times with less degree of freedom? \citet{haferkamp2022random} has proved that random quantum circuits with linear depths could form an approximate unitary $t$-design. 

\begin{definition}[Approximate unitary designs]\label{app_haar_design_def_approx_design}
Let $\mu$ be the Haar measure on the space $\mathcal{U}(d)$. We denote the moment superoperator $\Phi_{\nu}^{(t)}(A):=\int_{\mathcal{U}(d)} U^{\otimes t} A (U^{\dag})^{\otimes t}$.  
Denote by $M_n(\mathcal{C})$ the $n\times n$ complex matrices.
Denote by $\| \Phi \|_{\diamond}:=\max_{X;\|X\|_1 \leq 1} \| (\Phi \otimes \mathcal{I}_{n} ) X \|_1$ the diamond norm for the linear transformation $\Phi: M_n(\mathcal{C}) \rightarrow M_m(\mathcal{C})$ and $X \in M_{n^2}(\mathcal{C})$. Then, a probability distribution $\nu$ on $\mathcal{U}(d)$ is an $\epsilon$-approximate unitary $t$-design if
\begin{align*}
\left\| \Phi_{\nu}^{(t)} - \Phi_{\mu}^{(t)} \right\|_{\diamond} \leq{}& \frac{\epsilon}{d^t}.
\end{align*}
\end{definition}

\begin{fact}[Random quantum circuits form approximate unitary designs, informal version from \citet{haferkamp2022random}]\label{app_haar_design_def_rqc_approx_design}
For the number of qubits $N \geq \mathcal{O}(\log t)$, alternative layered random quantum circuits with Haar-random unitary gates sampled from $\mathcal{U}(4)$ lead to an $\epsilon$-approximate unitary $t$-design when the circuit depth
\begin{align*}
k \geq{}& \mathcal{O} \left( t^{4+o(1)} \left( Nt + \log \frac{1}{\epsilon} \right) \right) ,
\end{align*}
where the term $o(1) \rightarrow 0$ when $t \rightarrow \infty$.
\end{fact}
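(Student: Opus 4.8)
The plan is to bound the convergence of the random-circuit ensemble to the Haar measure by controlling the \emph{spectral gap} of its moment superoperator, following the strategy of Brand\~ao--Harrow--Horodecki as sharpened in \citet{haferkamp2022random}. Write $\nu$ for the one-layer distribution of the alternating (brickwork) circuit and $\nu^{*k}$ for the distribution induced by $k$ independent layers. Since distinct layers are drawn independently, the $t$-th moment operator factorizes, $\Phi_{\nu^{*k}}^{(t)} = \bigl(\Phi_{\nu}^{(t)}\bigr)^{k}$, while $\Phi_{\mu}^{(t)}$ is the orthogonal projector onto the commutant of $\{U^{\otimes t}\}$ and hence satisfies $\Phi_{\nu}^{(t)}\Phi_{\mu}^{(t)} = \Phi_{\mu}^{(t)}\Phi_{\nu}^{(t)} = \Phi_{\mu}^{(t)}$. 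Therefore $\bigl\|\Phi_{\nu^{*k}}^{(t)} - \Phi_{\mu}^{(t)}\bigr\| = \bigl\|\bigl(\Phi_{\nu}^{(t)} - \Phi_{\mu}^{(t)}\bigr)^{k}\bigr\| \le (1-\Delta)^{k}$, where $\Delta$ is the spectral gap of $\Phi_{\nu}^{(t)}$ (the distance from $1$ to the next-largest singular value). Converting this spectral-norm bound on the $2t$-fold space into a bound on the diamond norm of the moment channel costs only a $d^{2t} = 2^{2Nt}$ factor, so imposing $(1-\Delta)^{k} 2^{2Nt} \le \epsilon/d^{t}$ gives $k \gtrsim \Delta^{-1}\bigl(3Nt\ln 2 + \ln(1/\epsilon)\bigr)$. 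The stated depth then follows provided one proves $\Delta^{-1} = O\bigl(t^{4+o(1)}\bigr)$, \emph{uniformly in $N$} --- the point being that a single brickwork layer already acts on all $N$ qubits, so its gap carries no $1/N$.

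To reach such a gap bound I would reduce $\Phi_{\nu}^{(t)}$ to a local, frustration-free problem. Twirling one two-qubit Haar gate over $t$ copies produces a projector $P_{i,i+1}$ onto the commutant of $\mathcal{U}(4)^{\otimes t}$ supported on the pair of sites $(i,i+1)$; by Schur--Weyl duality this commutant is spanned by the permutation operators that permute the $t$ copies, and the condition $N \ge \Omega(\log t)$ enters here, ensuring $2^{N}\ge t$ so that the global $t$-copy commutant of $\{W^{\otimes t}\}$ over $\mathcal{U}(2^N)$ is the full symmetric-group algebra $\mathbb{C}[S_t]$. A brickwork layer is the product of the even-bond projectors times the odd-bond projectors, so $\mathbb{I}-\Phi_{\nu}^{(t)}$ is, up to even/odd bookkeeping, exactly the ``detectability-lemma'' operator of the frustration-free parent Hamiltonian $H = \sum_i (\mathbb{I}-P_{i,i+1})$ on a one-dimensional chain of $N$ super-sites. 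The detectability lemma (or Nachtergaele's martingale method) then lower-bounds $\Delta$ in terms of the \emph{global} gap of $H$, which for the 1D geometry is itself bounded below by a constant times the \emph{local} gap $g_{\mathrm{loc}}(t)$ of the two-site operator $\mathbb{I}-P$, independently of $N$. So the whole problem collapses to estimating $g_{\mathrm{loc}}(t)$.

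Estimating $g_{\mathrm{loc}}(t)$ is the technical core and, I expect, the main obstacle. The eigenvalues of the single-gate $t$-copy twirl on $(\mathbb{C}^{4})^{\otimes t}$ are indexed by Young diagrams with at most $t$ boxes and can be written explicitly in terms of characters and dimensions of irreducible representations of $\mathrm{U}(4)$ and of the symmetric group $S_t$; the gap $g_{\mathrm{loc}}(t)$ is one minus the second-largest such eigenvalue. A crude representation-theoretic bound (as in Brand\~ao--Harrow--Horodecki) gives only a large polynomial in $t$; obtaining the sharp $g_{\mathrm{loc}}(t) = \Omega\bigl(t^{-(4+o(1))}\bigr)$ requires the refined dimension and character estimates and the interpolation argument of \citet{haferkamp2022random} that is responsible for the $o(1)$ correction in the exponent. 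Granting this, I would assemble the three ingredients --- local gap $\Rightarrow$ global Hamiltonian gap via detectability/martingale $\Rightarrow$ $N$-independent layer gap $\Delta = \Omega\bigl(t^{-(4+o(1))}\bigr)$ --- and substitute into the convergence estimate of the first paragraph to recover $k = O\bigl(t^{4+o(1)}(Nt + \log(1/\epsilon))\bigr)$. The reduction steps (layer factorization, the Schur--Weyl identification of $P_{i,i+1}$, the detectability lemma, the diamond-norm conversion) are essentially routine once set up; it is the tight control of the local spectral gap --- beating a generic $\mathrm{poly}(t)$ bound down to $t^{4+o(1)}$ --- that constitutes the real work.
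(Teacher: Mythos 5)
First, note that the paper does not actually prove this statement: consistent with its stated convention, it is presented as a \emph{Fact} with a citation to \citet{haferkamp2022random}, so there is no in-paper proof to compare your sketch against. Judged on its own terms, your outline correctly reproduces the high-level architecture of the Brand\~ao--Harrow--Horodecki/Haferkamp argument (layer factorization of the moment operator, $\Phi_{\nu}^{(t)}\Phi_{\mu}^{(t)}=\Phi_{\mu}^{(t)}$, the $(1-\Delta)^k$ convergence estimate, the $d^{2t}$ diamond-norm conversion against the $\epsilon/d^t$ normalization in Definition~\ref{app_haar_design_def_approx_design}, and the role of $N\geq \Omega(\log t)$ in making the commutant the full $\mathbb{C}[S_t]$), but it is not a proof: the entire quantitative content of the Fact, namely $\Delta^{-1}=O(t^{4+o(1)})$, is explicitly ``granted'' and deferred back to the cited reference.

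Beyond that deferral, there is a concrete technical error in where you locate the difficulty. The single-gate $t$-copy twirl $\mathbb{E}_{U\sim\mathrm{Haar}(\mathcal{U}(4))}\,U^{\otimes t}\otimes \bar{U}^{\otimes t}$ is an exact orthogonal projector onto the local commutant, so its spectrum is $\{0,1\}$ and the ``local gap $g_{\mathrm{loc}}(t)$ of the two-site operator $\mathbb{I}-P$'' is trivially $1$; there is no second-largest eigenvalue indexed by Young diagrams to estimate at that level. The hard object is the spectral gap of the global frustration-free Hamiltonian $H=\sum_i(\mathbb{I}-P_{i,i+1})$ built from \emph{non-commuting, overlapping} projectors, and neither the detectability lemma nor Nachtergaele's martingale method reduces this to a constant times a two-site gap uniformly in $N$ and $t$. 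What they require are quantitative overlap estimates between ground spaces of neighboring blocks (equivalently, norms of products of projectors onto permutation subspaces, controlled via Gram matrices of permutation operators, Weingarten/character bounds, and Cayley-distance arguments); it is exactly these estimates, plus Haferkamp's interpolation refinement, that produce the $t^{-(4+o(1))}$ gap and hence the claimed depth. As written, your reduction chain ``local gap $\Rightarrow$ global gap $\Rightarrow$ $N$-independent layer gap'' would, if taken literally, yield a $t$-independent gap, which is not true; so the sketch both misidentifies the technical core and omits the steps that generate the $t^{4+o(1)}$ scaling the Fact asserts.
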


\backmatter

\theendnotes


\end{document}